\let\oldmarginpar\marginpar
\renewcommand\marginpar[1]{\-\oldmarginpar[\raggedleft\footnotesize #1]%
{\raggedright\footnotesize #1}}
\newtheorem{definition}{Definition}[section]
\newtheorem{theorem}[definition]{Theorem}
\newtheorem{conjecture}[definition]{Conjecture}
\newtheorem{corollary}[definition]{Corollary}
\newtheorem{proposition}[definition]{Proposition}
\newtheorem{lemma}[definition]{Lemma}
\newtheorem{remark}[definition]{Remark}
\title{Ultimately Schwarzschildean Spacetimes and the Black Hole Stability Problem}
\author{Gustav Holzegel\thanks{Princeton University,
Department of Mathematics, Fine Hall, Washington Road,
Princeton, NJ 08544 United States}}
\begin{document}

\maketitle

\begin{abstract}
In this paper, we introduce a class of spacetimes $\left(\mathcal{M},g\right)$ which satisfy the vacuum Einstein equations and dynamically approach a Schwarzschild solution of mass $M$, a class we shall call \emph{ultimately Schwarzschildean spacetimes}. The approach is captured in terms of boundedness and decay assumptions on appropriate spacetime-norms of the Ricci-coefficients and spacetime curvature. Given such assumptions at the level of $k$ derivatives of the Ricci-coefficients (and hence $k-1$ derivatives of curvature), we prove boundedness and decay estimates for $k$ derivatives of \emph{curvature}. The proof employs the framework of vectorfield multipliers and commutators for the Bel-Robinson tensor, pioneered by Christodoulou-Klainerman in the context of the stability of the Minkowski space. We provide multiplier analogues capturing the essential decay mechanisms (which have been identified previously for the scalar wave equation on black hole backgrounds) for the Bianchi equations. In particular, a formulation of the redshift-effect near the horizon is obtained. Morever, we identify a certain hierarchy in the Bianchi equations, which leads to the control of strongly $r$-weighted spacetime curvature-norms near infinity. This allows to avoid the use the classical conformal Morawetz multiplier $K$, therby generalizing recent work of Dafermos and Rodnianski in the context of the wave equation. Finally, the proof requires a detailed understanding of the structure of the error-terms in the interior. This is particularly intricate in view of both the phenomenon of trapped orbits and the fact that, unlike in the stability of Minkowski space, not all curvature components decay to zero.
\end{abstract}
\tableofcontents

\section{Introduction}
A major open problem in general relativity is to establish the non-linear stability of the Kerr family of solutions. Considerable mathematical progress towards this goal has been achieved in recent years, mainly by studying the linear wave equation,
\begin{align} \label{waveeq}
\Box_g \psi = 0 \, ,
\end{align}
on fixed Schwarzschild \cite{DafRod2, Sterbenz, Toha1} or Kerr-black hole backgrounds \cite{DafRodKerr, Toha2, AndBlue}.
 
 In particular, a precise and robust understanding of the role of the horizon for the decay mechanism of linear waves is now available within the framework of vectorfield multipliers and commutators, which is the common ground on which 
 almost all of the currently available results stand. In the same context, there is also a good understanding of the obstruction to decay associated with the phenomenon of trapped orbits and the associated loss of derivatives in the estimates, at least for Schwarzschild- and Kerr black holes. We refer the reader to \cite{Mihalisnotes} for a detailed discussion of these characteristic properties of black hole backgrounds.
 
 While the study of (\ref{waveeq}) provides important insights regarding the role of the geometry for the decay problem, its connection with the black hole stability problem is a-priori rather remote: (\ref{waveeq}) is certainly not the linearization of the Einstein equations with respect to a fixed black hole background $g$, but merely a  ``poor man's" linearization, which forgets entirely about the tensorial character of the original equations.

Encouraged by the rapid progress regarding (\ref{waveeq}), we initiate  in this paper a new approach to address the linear stability problem for black holes. This novel point of view will turn out to be much more intimately connected to the non-linear stability problem, as it is based entirely on a study of the Einstein and the Bianchi equations governing the metric evolution. 

The setting we suggest is the following. Consider a non-stationary black hole spacetime $\left(\mathcal{M}, g\right)$, which satisfies the vacuum Einstein equations 
\begin{equation} \label{vacE}
R_{\mu \nu} \left(g\right)=0 \, ,
\end{equation} 
and in addition settles down to a stationary solution for late times. In the familiar geometrical framework, the latter assumption of approach will manifest itself in appropriate decay assumptions on deformation tensors and Ricci rotation coefficients, once an appropriate coordinate system in which decay is measured, has been fixed. 

Clearly, in view of (\ref{vacE}) being satisfied, the Weyl-curvature tensor $W$ of such a spacetime is equal to the Riemann-tensor  and obeys the Bianchi equations
\begin{align} \label{BiI}
D^\alpha W_{\alpha \beta \gamma \delta} \left(g\right) = 0 \, .
\end{align}
Exploiting the rich structure of (\ref{BiI}) one can then try to prove estimates for the curvature tensor from appropriate assumptions on the rates of approach for the metric. 

Now by the very definition of curvature, decay assumptions on the future asymptotic behavior of the metric and its derivatives will imply certain decay of the curvature tensor.\footnote{When we talk about \emph{decay} of the curvature tensor here and in the future, we always understand this as decay to the curvature tensor of the spacetime we are approaching.}  Hence in order for the estimates arising from (\ref{BiI}) to be useful, one should be able to establish \emph{more} decay on the curvature than what immediately follows from the assumptions on the metric coefficients and deformation tensors. More precisely, in order to be valuable in a potential non-linear application, the decay proven for $W$ should be sufficiently strong to stand a chance to eventually \emph{improve}, via the null-structure equations, the assumptions initially made on the spacetime metric.
This type of argument forms an integral part in the context of the bootstrap setting in which non-linear stability results are typically proven. 

In the present paper, we address the simplest example, for which the ideas outlined above can be successfully carried out: A spacetime approaching a Schwarzschild solution for late times. While this example is non-generic in some sense, a large class of such spacetimes is expected to exist. Moreover, as we will see, the problem introduced here already exhibits many of the difficulties that the analogous problem for an ultimately Kerr spacetime would do. 

Roughly speaking, the statement we are going to prove in this context is the following. \emph{For sufficiently large $k$, boundedness of the $L^2$-energy and the $L^2$-spacetime-norm of $k$-derivatives of the Ricci rotation coefficients (and appropriate decay for lower order norms) implies boundedness and a degenerate version of integrated decay of the $L^2$-norm of $k$ derivatives of the curvature tensor.} While in this paper we do not concern ourselves with the problem of improving the bounds on the Ricci coefficients from the curvature bounds established (such estimates should follow closely the ones in \cite{ChristKlei}), we will in addition prove decay estimates for lower order curvature norms in terms of norms on the Ricci-coefficients. These estimates will reveal consistency of the approach in the sense that one does not have to assume stronger decay-rates on the Ricci-coefficients than the ones one expects to eventually derive from the curvature bounds obtained via multiplier estimates from the Bianchi equations.\footnote{As discussed below, the boundedness result for $k$ derivatives of curvature would be much easier to prove if one were to assume strong decay in $t$ for $k$ derivatives of the Ricci-coefficients, say $\frac{1}{t^{2+\epsilon}}$-decay of the $L^2$-norm. However, this would be inconsistent in the above sense: Because of the trapping, multiplier estimates from the Bianchi equations can only produce $\frac{1}{t}$-decay for $k-1$ derivatives of curvature, which fails to match the decay-rate assumed on $k$-derivatives of the Ricci-coefficients. Cf.~the discussion in section \ref{Tintro}.}
\newline

Before we turn to a more precise statement of the theorem and the main difficulties associated with its proof, let us briefly outline the basic technique of obtaining estimates for $W$ from (\ref{BiI}).

There exist natural energy currents for (\ref{BiI}) or, more generally, (\ref{BiI}) with an inhomogeneity $\mathcal{J}_{\beta \gamma \delta}$ on the right hand side. These currents arise from the so-called Bel-Robinson tensor
\begin{equation} 
 Q\left[\mathcal{W}\right]_{\alpha \beta \gamma \delta} = \mathcal{W}_{\alpha \rho \gamma \sigma} \mathcal{W}_{\beta\phantom{\rho}\delta\phantom{\sigma}}^{\phantom{\beta}\rho\phantom{\delta}\sigma} + \phantom{}^\star \mathcal{W}_{\alpha \rho \gamma \sigma} \phantom{}^\star \mathcal{W}_{\beta\phantom{\rho}\delta\phantom{\sigma}}^{\phantom{\beta}\rho\phantom{\delta}\sigma} \, .
\end{equation}
The tensor $Q$ is symmetric and traceless and satisfies the divergence identity
\begin{align} \label{bintro}
D^\alpha  \left(Q_{\alpha \beta \gamma \delta } \mathcal{X}^\beta \mathcal{Y}^\gamma \mathcal{Z}^\delta\right) &= K_1^{\mathcal{X}\mathcal{Y}\mathcal{Z}}  \left[\mathcal{W}\right] + K_2^{\mathcal{X}\mathcal{Y}\mathcal{Z}}  \left[\mathcal{W}\right] \, ,
\end{align}
\begin{align}
K_1^{\mathcal{X}\mathcal{Y}\mathcal{Z}} \left[\mathcal{W}\right] &= Q^{\alpha \beta \gamma \delta} \left(\phantom{}^{(\mathcal{X})}\pi_{\alpha \beta} \mathcal{Y}_\gamma \mathcal{Z}_{\delta} + \phantom{}^{(\mathcal{Y})}\pi_{\alpha \beta} \mathcal{Z}_\gamma \mathcal{X}_{\delta} + \phantom{}^{(\mathcal{Z})}\pi_{\alpha \beta} \mathcal{X}_\gamma \mathcal{Y}_{\delta} \right) \, , \nonumber \\
K_2^{\mathcal{X}\mathcal{Y}\mathcal{Z}} \left[\mathcal{W}\right] &= \left[\mathcal{W}_{\beta\phantom{\mu}\delta\phantom{\nu}}^{\phantom{\beta}\mu\phantom{\delta}\nu} \mathcal{J}_{\mu \gamma \nu} + \mathcal{W}_{\beta\phantom{\mu}\gamma\phantom{\nu}}^{\phantom{\beta}\mu\phantom{\gamma}\nu} \mathcal{J}_{\mu \delta \nu} + \phantom{}^\star \mathcal{W}_{\beta\phantom{\mu}\delta\phantom{\nu}}^{\phantom{\beta}\mu\phantom{\delta}\nu} \mathcal{J}^{\star}_{\mu \gamma \nu} + \phantom{}^\star \mathcal{W}_{\beta\phantom{\mu}\gamma\phantom{\nu}}^{\phantom{\beta}\mu\phantom{\gamma}\nu} \mathcal{J}^{\star}_{\mu \delta \nu}\right] \mathcal{X}^{\beta} \mathcal{Y}^\gamma \mathcal{Z}^\delta \, , \nonumber 
\end{align}
for any spacetime vectorfields $\mathcal{X},\mathcal{Y},\mathcal{Z}$. The term $K_2^{\mathcal{X}\mathcal{Y}\mathcal{Z}} \left[\mathcal{W}\right]$ vanishes in the case of the homogeneous Bianchi equations. 

Moreover, $Q\left(\mathcal{X},\mathcal{Y},\mathcal{Z},n\right) \geq 0$ holds for any future directed causal vectorfields $\mathcal{X},\mathcal{Y},\mathcal{Z},n$. If they are timelike, then $Q\left(\mathcal{X},\mathcal{Y},\mathcal{Z},n\right)$ in fact controls the sum of squares of all components of $\mathcal{W}$. Note also that both the right hand side and the left hand side of (\ref{bintro})depend only on $\mathcal{W}$ and not derivatives therof. 

The basic strategy to derive estimates for (\ref{BiI}) involves choosing appropriate vectorfields and integrating (\ref{bintro}) over certain spacetime regions. For instance, if the vectorfields $\mathcal{X},\mathcal{Y},\mathcal{Z}$ are Killing and $\mathcal{J}=0$ (the homogeneous Bianchi equations), the right hand side of (\ref{bintro}) vanishes, and we in fact obtain a conservation law relating currents on future spacelike slices to those in the past. 

For our spacetime however, there are no exact Killing vectorfields but only what we are going to call ``ultimately Killing fields". These are vectorfields, whose deformation tensor does not vanish but decays to zero in time. The non-zero deformation tensor requires a careful analysis of the non-vanishing spacetime terms on the right hand side of (\ref{bintro}).\footnote{We remark at this point that it would not make sense to study (\ref{BiI}) on a \emph{fixed} Schwarzschild background: Here the dynamics is trivial due to an algebraic constraint (known as the Buchdahl constraint in the literature \cite{Stewart}) on the Riemann-tensor.}

We also note, schematically at least, the formula for commuting the Bianchi equations with vectorfields:
\begin{align}   \label{commuteintro}                                                                          
D^\alpha \left(\widehat{\mathcal{L}}_\mathcal{X} W\right)_{\alpha \beta \gamma \delta}  = {}^{(\mathcal{X})}\pi \cdot     
DW + D {}^{(\mathcal{X})}\pi \cdot W                                                                             
\end{align}
Here $\widehat{\mathcal{L}}_\mathcal{X}$ is a modified Lie-derivative which gives $\widehat{\mathcal{L}}_\mathcal{X} W$ the algebraic properties of a Weyl-tensor (cf.~section \ref{commute}). Clearly, deriving energies for $\widehat{\mathcal{L}}_T W$ will necessitate an analysis of the non-vanishing term $K_2^{\mathcal{X}\mathcal{Y}\mathcal{Z}} \left[W\right] $.

We are now in a position to explain at a heuristic level, some of the difficulties and challenges one encounters, when one applies the vectorfield techniques outlined in the previous paragraph to the problem we wish to study:
\begin{enumerate}
\item Not all components of $W$ decay for the spacetime under consideration (otherwise we would approach the Minkowski space). This  causes difficulties both at the level of energies associated to (\ref{BiI}) (as they do not decay), and at the level of the error-terms arising from commuting (\ref{BiI}) with the approximate symmetries of the spacetime. As is already apparent from formula (\ref{commuteintro}), in sharp contrast to the stability of Minkowski space, not all error-terms will exhibit a quadratic structure in which both components decay.
\item What is the analogue of the redshift effect \cite{DafRod2} for (\ref{BiI})?
\item How is the obstruction for decay associated with trapped orbits captured for (\ref{BiI})?
\item How does one obtain decay (in the interior, at null infinity, ...) for $W$?
\item What are the appropriate (minimal?) assumptions on the Ricci rotation coefficients?
\end{enumerate}

As perhaps anticipated by the reader, the issues listed above are coupled to one another.

\subsection{Summary of the argument}
We proceed with a summary of the estimates carried out in the paper explaining how the above issues are addressed. In view of the frequently extensive and tedious formulae, we will regularly refer to the bulk of the paper. We hope that by this means, the introduction can also serve as a guide to the paper.

\subsubsection{Ultimately Schwarzschildean spacetimes}
Fix a regular coordinate system $\left(t^\star, r, \theta, \phi\right)$ on the black hole exterior of a Schwarzschild metric $g_M$ with mass $M$. Consider perturbations of $g_M$, such that the resulting metric $g$ on the black hole exterior is what we call \emph{ultimately Schwarzschildean to order $k+1$}  (cf.~Definition \ref{ultS}). At the core of this definition are assumptions at the level of $k$ derivatives of the Ricci-coefficients (i.e.~$k+1$-derivatives of the metric): In particular, both an energy on spacelike slices with null-ears ($\tilde{\Sigma}_{\tau}$, cf.~the figure in section \ref{picsec}) and a spacetime energy for the region enclosed by two such slices  are defined for the Ricci-rotation coefficients $\mathfrak{R}$: These energies measure the approach of the $\mathfrak{R}$ to their Schwarzschild values in the coordinate system $\left(t^\star, r, \theta, \phi\right)$. Roughly speaking, we assume boundedness of both of these energies for $k$ derivatives, $t^{-1}$ decay for $k-1$ derivatives, $t^{-2}$ for $k-2$ derivatives and $t^{-\frac{5}{2}}$ for $k-3$ or less derivatives (cf.~(\ref{iRdec}) and (\ref{Dnorm})). The reason for this hierarchy is related to what we expect to be able to show for the energies involving curvature and will become more clear later.

In addition, we assume the existence of a vectorfield $T$, satisfying $g\left(T,T\right)=0$ on the horizon and approaching the timelike (null on the horizon) Killing field of Schwarzschild in the sense that the deformation tensor associated with $T$ decays. We will refer to the vectorfield $T$ as being an \emph{ultimately Killing field}. See Definition \ref{ultimateKilling}.

\subsubsection{Null-decomposition of the curvature tensor}
We will also pick a null-frame $e_1$, $e_2$, $e_3$, $e_4$ for $g$ ($g\left(e_3,e_4\right)=-2)$), which is assumed to be appropriately close to a previously fixed Schwarzschild frame. We null decompose the curvature tensor with respect to that frame. In the standard notation of the subject, we obtain the components $\alpha, \beta, \rho, \sigma, \underline{\beta}, \underline{\alpha}$ (cf.~section \ref{abrsabbb}). The equations (\ref{BiI}) can then be written as equations for the null-components along null-directions (cf.~(\ref{Bianchi1})-(\ref{Bianchi9})). We note that in Schwarzschild all these components except $\rho=-\frac{2M}{r^3}$ are zero.

\subsubsection{The $T$-energy}        \label{Tintro}      
The first estimate that one may wish to obtain from (\ref{BiI}) is, of course, the analogue of the energy estimate associated with the ultimately Killing field $T$. As $\rho$ itself does not decay, and also in view of the fact that higher derivative estimates need to be derived, we are going to commute the equation sufficiently many times with the vectorfield $T$. However, as is apparent from formula (\ref{commuteintro}), irrespective of how many commutations are performed, there will          
always be a highest order error-term of the form $\rho \cdot \widehat{\mathcal{L}}_T^{k-1} D       
\pi$, which unlike most other terms, does not decay quadratically.\footnote{There are also    
lower order terms which do not decay quadratically, in view of the fact that some            
derivatives of $\rho$ do not decay. We will focus on the highest order term here.} As a      
consequence, in the energy estimate arising from (\ref{bintro}) we will have to control an error-term of the form                                                                       
\begin{align} \label{diffterms}                                                              
\int_{spacetime} \rho \cdot \left(\widehat{\mathcal{L}}^{k-1}_T D \pi \right) \cdot          
\widehat{\mathcal{L}}^k_T W  \, .     
\end{align}                                                                                  
In view of the trapping, we will not be able to control the spacetime-integral of the    
highest order derivative term globally, which means that we will have to put it ``$L^2$      
in space". It follows that to establish boundedness of the error-term at the highest level of derivatives (and hence boundedness of the $k^{th}$ order Weyl energy), we    
would need very strong decay (at least $\frac{1}{t^2}$) of the $L^2$-energy of $k$ derivatives of    
the deformation tensor. However, such a decay has only a chance of being obtained from      
the curvature components via elliptic estimates, if we can show the same decay for $k-1$-derivatives of             
curvature. Unfortunately, with current vectorfield techniques, we can only expect to     
prove $\frac{1}{t}$-decay for the $L^2$-energy of $k-1$ derivatives of curvature.                                
                                                                                             
The resolution of this difficulty relies strongly on the coupled            
character of the problem, i.e.~the fact that the background is not fixed but related to      
the curvature components via the null-structure equations (cf.~section \ref{nseq}). Working out the precise           
contractions of curvature components and components of the deformation tensor in             
(\ref{diffterms})  and inserting the structure equations relating $D\pi$ to the curvature    
components, one realizes that all terms in (\ref{diffterms}) have a special structure. To    
give an example,                                                                           
\begin{align}                                                      
\int_{st} \rho \Big(  \left(\widehat{\mathcal{L}}^{k-1}_T \left[\textrm{$D_3 \mathfrak{f}$ +    
$\alpha$ + l.o.t.}\right]\right) \widehat{\mathcal{L}}^k_T \alpha +                     
\left(\widehat{\mathcal{L}}^{k-1}_T \left[\textrm{$div \mathfrak{f}$ + $\beta$ +                        
l.o.t.}\right]\right) \widehat{\mathcal{L}}^k_T \beta   \Big)                          
\end{align}       
is such a term, where $\mathfrak{f}$ is an expression involving the Ricci rotation coefficients.                                                                           
Remarkably, the derivatives of the deformation tensor have brought in precisely the right           
curvature components from the structure equations, so as to allow an integration by parts in $T$. This lets the            
$T$-derivative fall on $\rho$ and makes this term a cubic error-term with all components     
decaying. The derivative term (which has $D_3f$ and $div f$ respectively) can be integrated by parts as    
well. Clearly, if the derivative falls on $\rho$, we obtain a lower order term. If it falls    
on the highest order derivative term, however, we can use the Bianchi equation               
\begin{align}                                                                                
\alpha_3 = -2\slashed{\mathcal{D}}^\star_2 \beta + \textrm{l.o.t.}                                         
\end{align}                                                                                  
and obtain also a lower order term.\footnote{Note that commutation of derivatives with $\widehat{\mathcal{L}}_T$ derivatives only introduces lower order terms and additional decay.}
At the end of the day, we can show that the worst possible error-terms are effectively    
of the form                                                                                  
\begin{align}                                                                                
\int_{spacetime} \rho \cdot \left(\widehat{\mathcal{L}}^{k-2}_T D \pi \right) \cdot          
\widehat{\mathcal{L}}^k_T W                                                                  
\end{align}     
The gain of a derivative compared to (\ref{diffterms}) is essential to close the estimates. The analysis of this structure is the topic of section \ref{errorterms}.

\subsubsection{The redshift}
As familiar from the wave equation, the vectorfield $T$ does not provide control over all curvature components near the horizon, even if the metric is exactly Schwarzschild (in our case, $T$ may not even be causal everywhere!). For the wave equation it is well-known how to stabilize the $T$ estimate using a redshift vectorfield \cite{DafRod2}. In complete analogy to this case, we construct a timelike vectorfield $N$ (agreeing with $T$ far away from the horizon), which generates boundary-terms controlling all components globally, and a spacetime term $K_1^{NNN}\left[\mathcal{W}\right]$ which has a good sign (and controls all curvature components) near the horizon. The error-terms introduced can be dealt with, once an integrated decay estimate in the interior (away from the horizon) is available. Cf.~Proposition \ref{rsW}.

Since until now we have only commuted with $T$, this technique will merely allow us to obtain non-degenerate control over arbitrary $T$-derivatives of the curvature components. This is sufficient away from the horizon in view of the elliptic estimates (section \ref{elliptic}) available. Close to the horizon we invoke a commutator version of the redshift estimate (developed in the context of the wave equation in \cite{DafRodKerr}), for which one commutes with the redshift vectorfield to obtain estimates for the transversal derivatives on the horizon. For the Bianchi equations, there are two (equivalent) possibilities to adapt this estimate: Either at the level of the tensorial Bianchi equations for the Weyl-tensor commuted with $N$,\footnote{provided one also commuted with $T$ first to eliminate the non-decaying $\rho$ component} or at the level of the null-Bianchi equations. In this paper, we follow the second approach, as it has the advantage that, at the lowest level, we can work with the renormalized null-Bianchi equations for $\rho$ and $\sigma$. Cf.~section \ref{redsection}.

\subsubsection{Decay at infinity} \label{infintro}
Very recently, Dafermos and Rodnianski introduced a novel approach to the decay of the wave equation $\Box \psi =0$ on black hole spacetimes \cite{DafRodnew}. Using a new multiplier, they were able to generate a certain hierarchy of estimates for $r$-weighted energies near infinity provided an integrated decay estimate in the interior is available. A version of the pigeonhole principle finally allowed them to exploit that hierarchy to translate boundedness of $r$-weighted energies into interior decay in $t$ of the natural $L^2$-energy.

This new method of obtaining decay at infinity and exporting it to the interior has an advantage over the traditional method of using the conformal Morawetz multiplier $K=\left(t^2+r^2\right) \partial_t + 4tr \partial_r$ (which is conformally Killing in Minkowski space). Namely, as the latter carries $t$-weights, the error-terms it generates in the interior are typically difficult to control. The multiplier of the new method, on the other hand, carries only $r$-weights and is only applied in a region close to null-infinity. This emphasizes the power of using multipliers locally, as one root of the difficulties with the $K$-vectorfield is precisely that it has to be applied globally.

It turns out that the approach of \cite{DafRodnew} has a very natural generalization to the Bianchi equations. The basic idea is easy to understand and can be explained at a heuristic level. Suppose we use a Minkowskian coordinate system $\left(u,v, \omega \right)$ near infinity, i.e.~such that in particular $\partial_u r \approx -\partial_v r \approx -\frac{1}{2}$ there. In fact, for the purpose of the following heuristic, the reader can in fact consider the Bianchi equations on exact Minkowski space. In this region, the Bianchi equations for the components $\alpha$ and $\beta$ take the form (cf.~section \ref{abrsabbb})
\begin{align}
2\partial_u \alpha - \frac{1}{r} \alpha = -2 \slashed{\mathcal{D}}_2^\star \beta + \textrm{l.o.t.} \textrm{ \ \ \ , \ \ \ } 2\partial_v \beta + \frac{4}{r} \beta = \slashed{div} \alpha + \textrm{l.o.t.}
\end{align}
Multiplying the first by $r^p \alpha$ and using that $\slashed{\mathcal{D}}_2^\star$ is the adjoint of $\slashed{div}$ on the $2$-spheres, we obtain 
\begin{align} \label{igh}
 \partial_u \left(r^p \|\alpha\|^2 \right) + 2\partial_v \left(r^p \|\beta \|^2\right) 
+ \|\alpha\|^2 r^{p-1} \left(-p \cdot  \partial_u r - 1 \right) \nonumber \\ + \|\beta\|^2 r^{p-1} \left(8 - 2 p \partial_v r \right) = \textrm{error} + \textrm{tot.~div on $S^2_{t,r}$} \, .
\end{align}
Upon integration in a characteristic region (the shaded region in the figure below) using the measure $du dv \sin \theta d\theta d\phi$
\[
\begin{picture}(0,0)%
\includegraphics{mintro.pstex}%
\end{picture}%
\setlength{\unitlength}{2289sp}%
\begingroup\makeatletter\ifx\SetFigFont\undefined%
\gdef\SetFigFont#1#2#3#4#5{%
  \reset@font\fontsize{#1}{#2pt}%
  \fontfamily{#3}\fontseries{#4}\fontshape{#5}%
  \selectfont}%
\fi\endgroup%
\begin{picture}(3024,1696)(889,-2045)
\put(2950,-2005){\makebox(0,0)[lb]{\smash{{\SetFigFont{7}{8.4}{\rmdefault}{\mddefault}{\updefault}{\color[rgb]{0,0,0}$r=R$}%
}}}}
\put(2134,-1411){\makebox(0,0)[lb]{\smash{{\SetFigFont{7}{8.4}{\rmdefault}{\mddefault}{\updefault}{\color[rgb]{0,0,0}$\Sigma_1$}%
}}}}
\put(2049,-1024){\makebox(0,0)[lb]{\smash{{\SetFigFont{7}{8.4}{\rmdefault}{\mddefault}{\updefault}{\color[rgb]{0,0,0}$\Sigma_2$}%
}}}}
\end{picture}%

\]
the first two terms in (\ref{igh}) will generate positive future boundary terms. The remaining (spacetime) terms on the left hand side will both be positive as long as $2<p<8$. In other words, provided we can control the error-term arising on the timelike boundary of the region (which can be done, provided an integrated decay estimate is available in the interior) and the error-terms on the right hand side (which are absent if the background was exactly Minkowskian and which cause a considerable amount of work in our case), we obtain an estimate for strongly $r$-weighted boundary \emph{and} spacetime terms for the components $\alpha$ and $\beta$. Considering the next pair of Bianchi equations, we will obtain a similar estimate for the pair $\left(\beta,\rho\right)$ with the condition $4<p<6$. In this fashion we can estimate all curvature components with appropriate $r$-weights. From the weighted spacetime-terms we can generate decay in $t$ in the interior using the pigeonhole principle as in \cite{DafRodnew}. Exploiting the hierarchy in the equation one easily obtains the well-known  pointwise decay rates for the null-components in Minkowski space  \cite{ChristKlei2}, for instance.

As mentioned above, a significant amount of work goes into estimating the error-terms on the right hand side. They impose additional non-linear constraints on the admissible $r^p$-weights. In spirit, these estimates are of course similar to the ones in \cite{ChristKlei}, since only the asymptotic region and weights in $r$ are concerned. We emphasize two important differences, however: On the one hand, we are not trying to minimize the number of derivatives. The cubic error-terms are always estimated by putting one term in $L^\infty$ and the others in $L^2$ (unlike $L^4$ estimates in \cite{ChristKlei}). Secondly, by using the null-structure equations directly on the error-terms, we exploit an additional cancellation of the terms which have the worst decay in $r$. This structure goes beyond the null form of the error-terms and simplifies considerably some of the estimates. 
The analysis at infinity is the content of section \ref{decinf1}.

\subsubsection{Integrated decay}
We have seen that both the redshift and the new method of capturing the decay at infinity require an integrated decay estimate in the interior to control their error-terms.\footnote{A boundedness statement can in fact be obtained without the construction of an integrated decay estimate. See section \ref{mtsec}.} How does one obtain such an estimate? 

From the experience with the wave equation, one may try to use a vectorfield of the form $X=f\left(r\right) \partial_r$ in the identity (\ref{bintro}). Carrying out the computation one observes that one cannot obtain a globally (positive, say) spacetime term, which controls all curvature components but that instead the quantities $\rho$ and $\sigma$ will always enter with the opposite sign. More precisely, the main (assuming the metric is exactly Schwarzschild) spacetime term reads
 \begin{align} \label{Xintro}
K_{1}^{XTT}  = \frac{\left(1-\mu\right)^2}{4}  f_{,r} \left[\frac{1}{2} \left(1-\mu\right)^2 |\underline{\alpha}|^2 + \frac{1}{2} \frac{|\alpha|^2}{\left(1-\mu\right)^2} - \frac{1-\mu}{2} \left(\rho^2+\sigma^2\right) \right] \nonumber \\
+f \left(\frac{1}{r} - \frac{3}{2}\frac{\mu}{r}\right)\left[\left(1-\mu\right)^2 |\underline{\beta}|^2 + |\beta|^2 + 2\left(1-\mu\right)\left(\rho^2+\sigma^2\right) \right] \, .
\end{align}
We immediately  recognize the familiar trapping factor of $(r-3M)$ in the second line. Choosing $f$ as a bounded function which changes sign at $r=3M$ we can control all components but $\rho$ and $\sigma$ in a region of bounded $r$ around the degenerating set $r=3M$.\footnote{Note that near infinity and near the horizon we obtain control over all curvature components with the given choice of $f$, since the terms in the second line eventually dominate the ones in the first.} The boundary-terms are in turn controlled by the $T$-energy. We remark that all these arguments need to be stabilized near the horizon by the redshift vectorfield (cf.~section \ref{XestWeyltensor}).

The above observation regarding the role of $\rho$ and $\sigma$ requires us to obtain an integrated decay estimate for $\rho$ and $\sigma$ via different means. The important insight here, which goes back to Price (see \cite{Price} and \cite{Chandrasekhar}), is that the components $\rho$ and $\sigma$ satisfy a wave equation (the so-called Regge-Wheeler equation), which in the case of exact Schwarzschild would be homogeneous and in our case involves inhomogeneous error-terms on the right hand side. Those terms exhibit a quadratic structure of (derivatives of) ``Ricci coefficients $\cdot$ curvature components".

The natural energies for these wave equations do not involve $\rho$ and $\sigma$ itself but rather the rescaled quantities $r^3 \rho$, $r^3\sigma$. On the one hand, this is good news because the quantity $\rho \cdot r^3$ approaches the mass (a constant) for late times and hence we can expect decay for the $L^2$-energy of its derivatives.\footnote{We remark that the $r$-weight which is being introduced in this way corresponds precisely to the maximal $r$-weight that the quantity $\rho$ gets associated via the $r$-weighted multiplier described in section \ref{infintro}}  On the other hand, this complicates the estimates for the error-terms appearing on the right hand side of the equation, as they now get multiplied with the weight of the rescaling factor, $r^3$. In other words, one cannot prove an integrated decay estimate in the interior without understanding the $r$-weighted energies for (all) the curvature components!

Here it turns out that the we can borrow an $\epsilon$ of the strongly $r$-weighted spacetime integrals that we have available at infinity. Coupling the estimate at infinity with the anticipated integrated decay decay estimate in the interior will allow us to absorb the $\epsilon$-contribution and close the estimate. 

Besides controlling the inhomogeneity, an additional problem in the derivation of energy estimates for the renormalized $\rho$ and $\sigma$ components arises from the Regge-Wheeler operator itself. As is well-known for the Regge-Wheeler equation on a fixed Schwarzschild background, an everywhere positive spacetime integral can only be derived if the angular momentum mode of the field satisfies $l \geq 2$. For us, however, the situation is more favorable because we are assuming that the metric approaches Schwarzschild and that we are estimating the curvature components of that particular spacetime. More precisely, the ultimately Schwarzschildean property implies that the zeroth order terms in the anticipated integrated decay estimate are a-priori bounded, while the derivative terms can easily be made non-negative with the help of an appropriate multiplier. In this way the integrated decay assumed in the ultimately Schwarzschildean assumption eliminates the $l=0$ and $l=1$ modes (see section \ref{rssection}).

With these difficulties resolved one finally obtains an integrated decay estimate for the renormalized $\rho$ and $\sigma$ components, which is coupled to the integrated decay estimate for the entire curvature tensor and then, finally, to the $r$-weighted estimates at infinity.

Having understood the basic mechanisms, we recommend the reader at this point to turn to section \ref{mtsec} immediately for a precise statement of the theorems. A brief glance at section \ref{norms} to familiarize with the energies which are being used may be helpful.

\subsection{Some remarks on the Kerr case}
It is natural to enquire which aspects of the mechanisms described stand a chance to (in principle) carry over to the case of a spacetime approaching a Kerr solution of, say, small angular momentum. In this case both the curvature components $\rho$ and $\sigma$ will be non-zero, at least if a null-frame based on the principal null-directions of Kerr is being used (otherwise, other null-components will be non-zero, too). We do not expect any further difficulty entering from the non-decaying $\sigma$-component. The analysis of the intricate error-terms proportional to $\rho$ in the commuted Bianchi equation carried out in section \ref{errorterms} will extend to the dual $\sigma$-component. The fact that $T$ is no-longer (not even asymptotically) timelike near the horizon, on the other hand, can -- at least for the case of small angular momentum -- be dealt with the redshift, cf.~\cite{DafRodKerr}. This leaves the trapping as the main difficulty. In view of the trapping for Kerr itself being well-understood, a generalized approach of Dafermos and Rodnianski \cite{Mihalisnotes} or Andersson-Blue \cite{AndBlue} (using commutation with the second order Carter-operator, i.e.~an ultimately Killing tensor in our language) is highly promising to yield similar results to the ones in the paper. We postpone the analysis of this case to a future paper.
\section{Ultimately Schwarzschildean spacetimes}
\subsection{The Schwarzschild geometry} \label{geoSchw}
In the perhaps most familiar $(t,r, \theta, \phi)$ coordinates, the Schwarzschild metric is given by
\begin{equation} \label{SSc}
 g = -\left(1-\frac{2M}{r}\right)dt^2 + \left(1-\frac{2M}{r}\right)^{-1} dr^2 +
r^2 \left(d\theta^2 + \sin^2 \theta d\phi^2\right) \, ,
\end{equation}
which defines a smooth metric on the manifold $\mathcal{M}_{I}=\left(-\infty, \infty\right) \times \left(2M,\infty\right) \times S^2$. 
Using the the transformations $u=t-r^\star$, $v=t+r^\star$, where $r^\star= r - 2M \log \left(r-2M\right) - M + \log 3M$ is the tortoise coordinate, we may express the metric on $\mathcal{M}_e$ in double null-coordinates $\left(u,v,\theta,\phi\right) \in \left(-\infty, \infty\right) \times \left(-\infty, \infty\right) \times S^2$ as 
\begin{equation}
 g = -\left(1-\mu\right) du dv + r^2  \left(d\theta^2 + \sin^2 \theta d\phi^2\right) \, ,
\end{equation}
where we set $\mu = \frac{2M}{r}$. It is well known that the singularity at $r=2M$ in (\ref{SSc}) is a coordinate singularity and that the metric can be extended through $r=2M$ to a larger manifold. Indeed, let $\chi\left(r\right)$ be an interpolating function which is equal to $1$ for $r\leq 6M$ and equal to $0$ for $r \geq 7M$. Setting
\begin{equation}
 t^\star = t + f\left(r\right) \textrm{ \ \ \ with \ \ $f^\prime\left(r\right) = \chi\left(r\right)\frac{2M}{r-2M}$} \nonumber  \textrm{ \ , \ \ we obtain}
\end{equation}
\begin{equation}
 g = -k_-\left(dt^\star\right)^2 + \frac{4M}{r} \chi dt^\star dr + \left[\chi^2 k_+ + \frac{1}{k_-}\left(1-\chi^2\right)\right] dr^2 
+ r^2 \left(d\theta^2 + \sin^2 \theta d\phi^2\right) \, ,\nonumber
\end{equation}
where we have introduced the shorthand notation $k_\pm = 1 \pm \mu$. Clearly, in the coordinates $(t^\star,r,\theta,\phi)$, $g$ can be defined on the larger manifold $\mathcal{M}_{II} = \left(-\infty, \infty\right) \times \left(0,\infty\right) \times S^2$. We easily see that $r=2M$ is a regular null-hypersurface, which we denote $\mathcal{H}^+$ and call the future event horizon.

Let us fix on $\mathcal{M}_{II} = \left(-\infty, \infty\right) \times \left(0,\infty\right) \times S^2$ a slice of constant $t^\star=\tau_0 \geq 1$ (note that this crosses $\mathcal{H}^+$) and define the submanifold 
\begin{align}
\mathcal{R} = \left[\tau_0, \infty \right) \times \left[2M, \infty\right) \times S^2 \, .
\end{align}
For convenience, let us also define 
\begin{equation}
k_\pm^\chi = 1 \pm \frac{2M}{r} \chi \textrm{ \ \ \ , \ \ \ }  k_{\chi} = g_{rr} = \left[\chi^2 k_+ + \frac{1}{k_-}\left(1-\chi^2\right)\right] = \frac{k_+^\chi k_-^\chi}{k_-} \, .
\end{equation}
\glossary{
name={$k_{\chi}$,$k_{\chi}^+$, $k_\chi^-$}, 
description={functions of $r$}
}

Note that $k_{\chi}$ is  bounded above and below by positive constants depending on $M$ only. 

Next we define various frames on $\mathcal{M}_{II}$. The vectors
\begin{equation}
 \hat{e}_{1,2} = \textrm{frame on $S^2_{t^\star,r}$} \, ,
\end{equation}
\begin{equation}
R = \frac{1}{\sqrt{k_{\chi}}} \partial_r \textrm{ \ \ \ and \ \ \ }  n = \sqrt{k_{\chi}}\partial_{t^\star} - \frac{2M\chi}{r\sqrt{k_{\chi}}} \partial_r 
\end{equation}
satisfy
\begin{equation}
 g\left(R , R\right) = -g\left(n , n \right) = + 1 \textrm{ \ \ \ as well as \ \ \ } g\left(R , n \right) = g\left(T,e_A\right) = g\left(R, e_A\right) = 0 \nonumber \, ,
\end{equation}
and hence constitute an orthonormal frame.
In addition, we may choose a null frame, i.e.~vectors $e_1, e_2, e_3, e_4$ such that
\begin{equation}
g\left(e_3, e_4\right)=-2 \textrm{ \ \ \ and \ \ \ } g\left(e_3,e_3\right) = g\left(e_4,e_4\right) = g\left(e_3, e_A\right) = g\left(e_4,e_A\right) = 0 \, , \nonumber
\end{equation}
as well as
\begin{equation}
 g\left(e_A,e_B\right) = \delta_{AB} \, ,
\end{equation}
for $A,B=1,2$. Such a frame may be defined via $\hat{e}_1, \hat{e}_2$,
\begin{equation} \label{sfchi}
 \hat{e}_3 = \frac{1}{\sqrt{k_{\chi}}} \left(n - R\right) = \partial_{t^\star} - \frac{1+\frac{2M}{r}\chi}{k_{\chi}} \partial_r \, ,
\end{equation}
\begin{equation}
 \hat{e}_4 = \frac{\sqrt{k_{\chi}}}{1} \left(n + R\right) = k_{\chi}\partial_{t^\star} + \left( 1-\frac{2M}{r}\chi\right) \partial_r \, .
\end{equation}

For Schwarzschild, the Ricci coefficients with respect to the null-frame $\left(\hat{e}_1,\hat{e}_2,\hat{e}_3,\hat{e}_4\right)$ are -- using the conventions of \cite{ChristKlei} summarized in  appendix \ref{UF}:
\begin{equation}
 \underline{H}_{AB} = -\frac{1}{r} \left(\frac{1+\frac{2M}{r}\chi}{k_{\chi}}\right) \delta_{AB} = -\frac{1}{r} \left(\frac{1-\frac{2M}{r}}{1-\frac{2M}{r}\chi}\right) \delta_{AB} \, ,
\end{equation}
\begin{equation}
 H_{AB} = \left(1-\frac{2M}{r}\chi\right)\frac{1}{r} \delta_{AB} \, ,
\end{equation}
\begin{equation}
 \underline{\Omega} = +\frac{M}{r^2} \cdot \frac{\left(1-\chi\right) + \chi^\prime r k_-}{\left(1-\frac{2M}{r}\chi\right)^2} \textrm{ \ \ \ , \ \ \ } \Omega = -\frac{M}{r^2} \left(\chi - r \chi^\prime\right) \, ,
\end{equation}
\begin{equation}
 Y_A = \underline{Y}_A = Z_A = \underline{Z}_A =V_A = 0 \, .
\end{equation}

Finally, we collect some useful formulae for further reference:
The timelike Killing field expressed in the chosen null-frame reads
\begin{equation}
 2T = 2\partial_{t^\star} = \left(1-\frac{2M}{r}\chi\right) \hat{e}_3 + \frac{1}{k_{\chi}}\left(1 + \frac{2M}{r}\chi\right)\hat{e}_4 \, .
\end{equation}
The normal of constant $t^\star$ slices $\Sigma_{t^\star}$ is computed to be
\begin{equation}
n = \sqrt{k_{\chi}}\partial_{t^\star} - \frac{2M\chi}{r\sqrt{k_{\chi}}} \partial_r = \frac{1}{2} \sqrt{k_\chi} \hat{e}_3 + \frac{1}{2} \frac{1}{\sqrt{k_\chi}} \hat{e}_4
\end{equation}
and hence the components
\begin{equation}
 n^{t^\star} = \sqrt{k_{\chi}} \textrm{ \ \ \ , \ \ \ } n^r = - \frac{2M\chi}{r\sqrt{k_{\chi}}}
\textrm{ \ \ \ , \ \ \ }
 n_{t^\star} = -\frac{1}{\sqrt{k_{\chi}}} \textrm{ \ \ \ ,  \ \ \ } n_r = 0 \, .
\end{equation}
\subsection{The class of ultimately Schwarzschildean spacetimes} \label{picsec}
Fix a Schwarzschild spacetime of mass $M$, $\left(\mathcal{R},g^M\right)$, equipped with the regular coordinate atlas $(t^\star, r, \theta_i, \phi_i)$ as defined in section \ref{geoSchw}. The Penrose diagram of $\left(\mathcal{R},g_M\right)$ is depicted below, where we use the standard notation $\mathcal{I}^+$ to denote future null-infinity.
\[
\begin{picture}(0,0)%
\includegraphics{spin2set.pstex}%
\end{picture}%
\setlength{\unitlength}{1776sp}%
\begingroup\makeatletter\ifx\SetFigFont\undefined%
\gdef\SetFigFont#1#2#3#4#5{%
  \reset@font\fontsize{#1}{#2pt}%
  \fontfamily{#3}\fontseries{#4}\fontshape{#5}%
  \selectfont}%
\fi\endgroup%
\begin{picture}(7835,4777)(589,-5089)
\put(6199,-5013){\makebox(0,0)[lb]{\smash{{\SetFigFont{9}{10.8}{\rmdefault}{\mddefault}{\updefault}{\color[rgb]{0,0,0}$r=R$}%
}}}}
\put(4153,-1999){\makebox(0,0)[lb]{\smash{{\SetFigFont{9}{10.8}{\rmdefault}{\mddefault}{\updefault}{\color[rgb]{0,0,0}$\Sigma_{\tau}$}%
}}}}
\put(3872,-3584){\makebox(0,0)[lb]{\smash{{\SetFigFont{9}{10.8}{\rmdefault}{\mddefault}{\updefault}{\color[rgb]{0,0,0}$\Sigma_0$}%
}}}}
\put(6592,-2238){\makebox(0,0)[lb]{\smash{{\SetFigFont{9}{10.8}{\rmdefault}{\mddefault}{\updefault}{\color[rgb]{0,0,0}$\mathcal{I}^+$}%
}}}}
\put(2232,-1874){\makebox(0,0)[lb]{\smash{{\SetFigFont{9}{10.8}{\rmdefault}{\mddefault}{\updefault}{\color[rgb]{0,0,0}$\mathcal{H}$}%
}}}}
\put(1699,-4789){\makebox(0,0)[lb]{\smash{{\SetFigFont{9}{10.8}{\rmdefault}{\mddefault}{\updefault}{\color[rgb]{0,0,0}$r=r_Y$}%
}}}}
\end{picture}%

\]
Let us write
\glossary{
name={$\Sigma_{t^\star}$},
description={slices of constant $t^\star$}
}
$\Sigma_{t^\star}$ for slices of constant $t^\star$. 
Slices with  ``null-ears'' are denoted $\tilde{\Sigma}_{t^\star}$: More precisely, if $S^2_{t^\star, R}$ denotes the sphere of intersection of $\Sigma_{t^\star}$ and a fixed\footnote{The exact $R$ will be fixed later in the paper, with $\frac{1}{R}$ figuring as a potential source of smallness in the argument.} $r=R$ hypersurface, 
\glossary{
name={$R$}, 
description={$r=R$ is a fixed timelike hypersurface close to null-infinity ($R$ is large) }
}
we define
\glossary{
name={$N_{out} \left(S^2_{\tau,r}\right)$}, 
description={outgoing null-hypersurface emanating from $S^2_{\tau,r}$}
}
\begin{equation} 
N_{out} \left(S^2_{\tau,r}\right) := \textrm{future outgoing part of the lightcone emanating from $S^2_{\tau,r}$} \nonumber
\end{equation}
\glossary{name={$\tilde{\Sigma}_{t^\star}$},
description={slices of constant $t^\star$ for $r\leq R$ with null hypersurfaces $N_{out} \left(S^2_{\tau,R}\right)$ attached}
}
\begin{equation}
\tilde{\Sigma}_{t^\star} := \Big(\Sigma_{t^\star} \cup \{r \leq R \} \Big) \cap N_{out} \left(S^2_{t^\star,R}\right) \, .
\end{equation}
The spacetime slab enclosed by $\Sigma_{\tau_1}, \Sigma_{\tau_2}$ and the horizon is denoted by $\mathcal{M}\left({\tau_1},{\tau_2}\right)$, while the spacetime slab enclosed by $\tilde{\Sigma}_{\tau_1}, \tilde{\Sigma}_{\tau_2}$ and the horizon is denoted $\widetilde{\mathcal{M}}\left({\tau_1},{\tau_2}\right)$. 
We also define the asymptotic region $\mathcal{D}\left(\tau_1,\tau_2\right) = \widetilde{\mathcal{M}}\left(\tau_1,\tau_2\right) \cap \{r \geq R \}$. Close to the horizon we also fix a hypersurface $r=r_Y$, whose precise location will be given in Lemma \ref{rscon}.
\glossary{
name={$r_Y$}, 
description={$r=r_Y$ is a fixed timelike hypersurface close to the horizon (see Lemma \ref{rscon}) }
}

We will now fix the differentiable structure of $\mathcal{R}$ and consider small perturbations of the metric on the black hole exterior to the future of $\Sigma_0$. As the differentiable structure is fixed, $S^2_{t^\star,r}$, $\Sigma_{\tau}$, $\tilde{\Sigma}_{\tau}$,  $\mathcal{M}\left(\tau_1,\tau_2\right)$, $\tilde{\mathcal{M}}\left(\tau_1,\tau_2\right), etc.$ will remain well-defined with respect to the perturbed metric, and we will hence use the same notation for them, as long as no confusion arises. We denote the covariant derivative of the perturbed metric $g$ by $D$.
\begin{definition} \label{ultS}
Fix the manifold $\mathcal{R}$ and its differentiable structure.
A metric $g$ on $\mathcal{R}$ is called {\bf ultimately Schwarzschildean (of mass M) to order $\mathbf{k+1}$} ($k\geq 2$) if it has the following properties:
\begin{enumerate}
\item The boundary $\mathcal{H}^+$ is null with respect to $g$. 
 \item The metric $g$ is $C^{k-1}$-close to the Schwarzschild metric in that
\begin{equation}
 |g_{ij} - \left(g^{M}\right)_{ij}| +  |g^{ij} - \left(g^{M}\right)^{ij}| \leq \frac{\epsilon}{r^2}
\end{equation}
\begin{equation}
 |\partial^n_m g_{ij} - \partial^n_m \left(g^{M}\right)_{ij}| + |\partial^n_m g^{ij} - \partial^n_m \left(g^{M}\right)^{ij}| \leq \frac{\epsilon}{r^2}
\end{equation}
holds for $n=1,...,k-1$ and $m \in \{t^\star, r, \theta, \phi\}$. 
\item There exists an admissible null-frame $e_1, ... , e_4$ for $g$, i.e.~a frame satisfying
\begin{enumerate}
\item closesness to the Schwarzschild frame\footnote{In view of the fixed differentiable structure, the $\hat{e}_i$ (defined in terms of coordinate vectorfields in (\ref{sfchi})) remain well-defined with respect to $g$.} $\hat{e}_i$:
\begin{align}
|g\left(e_3, \hat{e}_4\right) + 2| + |g \left(e_4, \hat{e}_3\right) + 2 | + | g\left(e_A, \hat{e}_B\right) - \delta_{AB}| < \epsilon
\end{align}
while all other combinations are $\epsilon$-small in the above sense.
\item The outgoing null-hypersurfaces generated by $e_4$ foliate $\mathcal{R}$. More precisely, there is an optical function $u$ whose level surfaces are the outgoing null-hypersurfaces generated by $e_4$.  This optical function is normalized such that $u=\infty$ is associated with the horizon $\mathcal{H}^+$. Moreover, the $e_1,e_2$ are tangent to the spheres $S^2_{t^\star,u}$ arising from the intersection of the timelike $t^\star=const$ slices and the outgoing null-hypersurfaces $u=const$.
\item the gauge condition $2Y=g\left(D_4 e_A, e_4\right)=0$ 
\item the Ricci coefficients are ultimately Schwarzschildean to order $k$ [see Definition \ref{RRCapproach}] in this frame
\end{enumerate}
\item There exist functions $p: \mathcal{R} \rightarrow \mathbb{R}$ and $q: \mathcal{R} \rightarrow \mathbb{R}$ such that $p$ vanishes on the horizon and that the associated vectorfield $2T=p e_3 + q e_4$ is ultimately Killing to order $k+1$  [see Definition \ref{ultimateKilling}]. Moreover, the deformation tensor satisfies $tr {}^{(T)}\pi=0$
\glossary{
name={$T$}, 
description={$2T=p e_3 + q e_4$ ultimately Killing field}}
\glossary{
name={$p$}, 
description={$3$-component of the ultimately Killing field $T$, vanishing on the horizon, equal to $k_{\chi}^-$ in Schwarzschild}}
\glossary{
name={$q$}, 
description={$4$-component of the ultimately Killing field $T$, equal to $1$ on the horizon, equal to $\frac{k_{\chi}^+}{k_\chi}$ in Schwarzschild}}
and 
\begin{align}
|T\left(p\right) |+ |e_A \left(p\right)| + |T\left(q\right) |+ |e_A \left(q\right)| 
+ | \slashed{D}_3 \left[p - k_{\chi}^- \right] | \nonumber \\ +  | \slashed{D}_4 \left[p - k_{\chi}^- \right] |  +  | \slashed{D}_3 \left[q - \frac{k_{\chi}^+}{k_{\chi}} \right] | + \slashed{D}_4 \left[q - \frac{k_{\chi}^+}{k_{\chi}} \right] |< \epsilon \left(t^\star \right)^{-\frac{5}{4}}
\end{align}
holds in the interior region $r < t^\star$.
\end{enumerate}
\end{definition}
{\bf Remarks. } 
\begin{itemize}
\item Our assumptions are certainly not minimal and leave room for improvement. In particular, the decay properties on the deformation tensor in assumption (4) may be derivable from (3d).
\item The assumption that the frame satisfies $Y=0$ (which corresponds to a partial choice of gauge) will be of importance in the analysis at infinity, cf.~\cite{ChristKlei}. It can easily be achieved by Fermi-propagating $e_1, e_2$ in the $e_4$ direction.

\item The assumption that $T$ can be chosen to be null on the horizon (i.e.~the condition $p |_{\mathcal{H}^+}=0$) is not essential for the results of the paper. However, it allows an elegant proof of uniform boundedness of the highest order curvature norm. We will show later how this assumption can be removed in the context of integrated decay estimates.

\item The assumption that $tr {}^{(T)}\pi=0$ is a technical assumption which allows one to reduce the number of error-terms in the proof at some point. The additional errors could be controlled without further difficulty but would make the paper longer (cf.~section \ref{structureRHS}, Remark \ref{trpremark}, where this is discussed).
\end{itemize}

We will write $\mathbb{P}=\mathbb{P}\left(t^\star,u\right)$ for the projection from the tangent space of $\left(\mathcal{R},g\right)$ to the tangent space of $S^2_{t^\star,u}$. If $D$ denotes the covariant derivative on $\left(\mathcal{R},g\right)$, then we denote by $\slashed{D}_3, \slashed{D}_4$ the projections to $S^2_{t^\star,u}$ of $D_3$ and $D_4$. The covariant derivative intrinsic to $S^2_{t^\star,u}$ will be abbreviated $\slashed{\nabla}$. We will write $\mathcal{D}$ shorthand for any derivative $\mathcal{D} \in \{ \slashed{D}_3, \slashed{D}_4, \slashed{\nabla} \}$.
\subsubsection{The Ricci coefficients} \label{ricosec}
\glossary{name={$\mathfrak{R}$, $\mathfrak{R}^{main}$ }, 
description={collective notation for Ricci-Rotation coefficients, cf.~also the appendix}}
We denote the Ricci coefficients in an admissible null-frame collectively by $\mathfrak{R}$. While $\mathfrak{R}$ contains all components, we generally distinguish
\begin{equation}
\mathfrak{R}^{main}=\{ \widehat{H}, \underline{Y}, Z, \underline{Z}, V, \Omega, \underline{\Omega}, tr H, tr \underline{H} \} \,
\end{equation}
and $\widehat{\underline{H}}$, which has to be treated separately, as it contains information from the radiation field in its $\frac{1}{r}$-term.
Also, $\mathfrak{R}^{(main)}_{SS}$ denotes the corresponding Schwarzschild values.\footnote{That is the values of the components of the Ricci coefficients of $\left(\mathcal{R},g^M\right)$ evaluated with respect to the frame $\hat{e}_i$.} Defining the norm of the null-decomposed Ricci coefficients in the obvious way, i.e.~
\begin{equation}
 || T_{AB} || = \sum_{A,B=0,1} |T\left(e_A, e_B\right)|  \textrm{ \ \ \ and \ \ \ \ }  || U_{A} || = \sum_{A=0,1} |U\left(e_A\right)|,
\end{equation}
we can state
\begin{definition} \label{RRCapproach}
 The Ricci coefficients of $g$ are ultimately Schwarzschildean to order $k$ with respect to the null-frame $\left(e_1, e_2, e_3, e_4\right)$  if the following estimates hold:
\begin{itemize}
\item globally on the black hole exterior
\begin{equation} \label{jo1}
 \Big|tr H - \frac{2}{r} \left(1-\frac{2M}{r}\chi\right) \Big| \ + \ \Big|tr \underline{H} + \frac{2}{r}\left(\frac{1-\frac{2M}{r}}{1-\frac{2M}{r}\chi}\right) \Big| \leq \frac{\epsilon}{r^2}
\end{equation}
\begin{equation} \label{jo2}
 ||\widehat{H}_{AB}|| \leq \frac{\epsilon}{r^2} \textrm{ \ \ \ and \ \ \ } ||\widehat{\underline{H}}_{AB}||  \leq \frac{\epsilon}{r \max\left(1, t^\star -r\right)}
\end{equation}
\begin{equation}  \label{jo3b}
 |\Omega - \Omega_{SS}| + |\underline{\Omega}-\underline{\Omega}_{SS}| \leq \frac{\epsilon}{r^2} 
\end{equation}
\begin{equation}  \label{jo5}
  ||\underline{Y}_A|| + ||Z_A|| +  ||\underline{Z}_A|| \leq \frac{\epsilon}{r^2} 
\end{equation}

\item in the interior region, $r \leq t^\star$, the pointwise estimate
\begin{equation}
\|\mathfrak{R} - \mathfrak{R}_{SS}\| \leq \epsilon \left(t^\star\right)^{-\frac{5}{4}}
\end{equation}
i.e.~(\ref{jo1})-(\ref{jo5}) hold in $r \leq t^\star$ with all right hand sides replaced by $\epsilon \left(t^\star\right)^{-\frac{5}{4}}$.

\item The estimate
\begin{align} \label{iRdec}
 \mathbb{D}^{n} \left[\mathfrak{R}\right] \left(\tau\right) :=  \mathbb{D}^{n} \left[\mathfrak{R}\right] \left(\tau, \infty \right) < \frac{\epsilon}{\tau^{l\left(n\right)}} 
\end{align}
holds for $n=1,...,k$ and 
\begin{equation} \label{Rldef}
 l\left(n\right) = \left\{
\begin{array}{rl}
0 & \text{if } n = k\\
1 & \text{if } n = k-1\\
2 & \text{if } n = k-2 \\
\frac{5}{2} &  \text{if } n < k-2 \, ,
\end{array} \right.
\end{equation}
where
\glossary{
name={$\mathbb{D}^{n} \left[\mathfrak{R}\right] \left(\tau_1,\tau_2\right)$}, 
description={norm on the Ricci-coefficients}}
\begin{align} \label{Dnorm}
\mathbb{D}^{n} \left[\mathfrak{R}\right] \left(\tau_1,\tau_2\right)  = \sup_{\tau \in \left(\tau_1,\tau_2\right)}\overline{\mathbb{E}}^{n} \left[\mathfrak{R}\right] \left( \tilde{\Sigma}_{\tau}\right) +  \overline{\mathbb{E}}^{n} \left[\mathfrak{R}\right] \left( \mathcal{H} \left(\tau_1,\tau_2\right)\right) \nonumber \\  + \overline{\mathbb{I}}^{n} \left[\mathfrak{R}\right] \left(\tilde{\mathcal{M}} \left(\tau_1,\tau_2\right)\right)
\end{align}
with the weighted energies for the Ricci rotation coefficients ($\delta = \frac{1}{100}$)
\begin{align} \label{IR}
\overline{\mathbb{I}}^{n} \left[\mathfrak{R}\right] \left(\tilde{\mathcal{M}} \left(\tau_1,\tau_2\right)\right) =  \sum_{i=0}^n \sum_{k_1+..+k_3= i}  \int_{\tilde{\mathcal{M}} \left(\tau_1,\tau_2\right)} dt^\star dr d\omega \,  r^{2-\delta}  r^{2k_2+2k_3} \nonumber \\  \left(\sum_{j=0}^{l(n)} r^{min\left(2k_1-j+1,0\right)} \tau_1^{j-{l(n)}} \right) \Bigg[ \|\slashed{D}_3^{k_1} \slashed{D}_4^{k_2} \slashed{\nabla}^{k_3} \left(\mathfrak{R}^{main} - \mathfrak{R}^{main}_{SS}\right) \|^2 \nonumber \\ + \frac{1}{r} \|\slashed{D}_3^{k_1} \slashed{D}_4^{k_2} \slashed{\nabla}^{k_3}\widehat{\underline{H}} \|^2 \Bigg]
\end{align}
and\footnote{We write $r^2 dv d\omega$ for the induced (with respect to $e_4$) volume form on $N_{out}\left(S^2_{\tau,R}\right)$.}
\begin{align}
\overline{\mathbb{E}}^{n} \left[\mathfrak{R}\right] \left( \tilde{\Sigma}_{\tau}\right) =  \sum_{i=0}^n\int_{\tilde{\Sigma}_{\tau} \cap \{ r \leq R\}} \|\mathcal{D}^i \left(\mathfrak{R} - \mathfrak{R}_{SS}\right) \|^2+  \sum_{i=0}^n\sum_{k_1+k_2+k_3=i}  \nonumber \\  \int_{N_{out}\left(S^2_{\tau,R}\right)} dv d\omega r^{3-\delta} r^{2k_2+2k_3} \|\slashed{D}_3^{k_1} \slashed{D}_4^{k_2} \slashed{\nabla}^{k_3} \left(\mathfrak{R}^{main} - \mathfrak{R}^{main}_{SS}\right) \|^2 + \nonumber \\  \int_{N_{out}\left(S^2_{\tau,R}\right)} dv d\omega r^{1-\delta} r^{2k_2+2k_3} \|\slashed{D}_3^{k_1} \slashed{D}_4^{k_2} \slashed{\nabla}^{k_3} \widehat{\underline{H}} \|^2 \, 
\end{align}
as well as
\begin{align}
\overline{\mathbb{E}}^{n} \left[\mathfrak{R}\right] \left( \mathcal{H} \left(\tau_1,\tau_2\right)\right) = \sum_{i=0}^n \int_{\mathcal{H} \left(\tau_1,\tau_2\right)} \|\mathcal{D}^i \left(\mathfrak{R} - \mathfrak{R}_{SS}\right) \|^2 \, .
\end{align}
\item In addition, for the boundary term of the highest derivative
\begin{align} \label{awpr}
\sup_{\tau \in \left(\tau_1,\tau_2\right)}\overline{\mathbb{E}}^{k} \left[\mathfrak{R}\right] \left( \tilde{\Sigma}_{\tau}\right) +  \overline{\mathbb{E}}^{k} \left[\mathfrak{R}\right] \left( \mathcal{H} \left(\tau_1,\tau_2\right)\right)  < \epsilon \cdot \left(\tau_1\right)^{-\frac{1}{2}}
\end{align}
and in the asymptotic region
\begin{align} \label{awps}
\overline{\mathbb{I}}^{k} \left[\mathfrak{R}\right] \left(\mathcal{D} \left(\tau_1,\tau_2\right)\right) < \epsilon \cdot \left(\tau_1\right)^{-\frac{1}{2}} \, . 
\end{align} 
\end{itemize}
\end{definition}

{\bf Remarks:} 
\begin{itemize}
\item Clearly, taking a $4$- or an angular derivative improves the $r$-weight in the $L^2$-energy by a power of $2$: this is the factor $r^{2k_2+2k_3}$. 

\item By convention, the $j$ in $\sum_{j=0}^{l(n)}$ runs $0,1,2,\frac{5}{2}$. This sum incorporates that in general, one loses an $r$-weight for every power in $\tau$ gained. However, if a three-derivative is applied ($k_1>0$), decay in $\tau$ is gained allowing one to extract decay in $\tau$ without losing powers of $r$. This is the reason for the factor $min\left(2k_1-j+1,0\right)$. The component $\underline{H}$ is special as it decays less in $r$ near infinity (cf. (\ref{jo2})).

\item The additional assumptions (\ref{awpr}) and (\ref{awps}) are ``justified" by the fact that we are going to show $t^{-1}$ decay for the boundary terms and the \emph{degenerate} spacetime energy norm squared of curvature at this level of derivatives, which ``improves" this assumption. We remark that assumption (\ref{awps}) is technical and can be eliminated with some further work. It simplifies the error-estimates near infinity.

\item the precise reason for the various weights will become much more clear in conjunction with the curvature-energies and our method to prove decay for the latter. Note however at this point that the assumptions on $k$ derivatives of the Ricci-coefficients immediately imply estimates on $k-1$ derivatives of curvature, which are easily inferred from the null-structure equations of section (\ref{nseq}).
\end{itemize}

Before we proceed, we introduce two more energies which will be useful. First an energy on the untilded slices \begin{align} \label{unwR}
\overline{\mathbb{E}}^{n} \left[\mathfrak{R}\right] \left({\Sigma}_{\tau}\right) =  \sum_{i=0}^n\int_{{\Sigma}_{\tau}} \|\mathcal{D}^i \left(\mathfrak{R} - \mathfrak{R}_{SS}\right) \|^2
\end{align}
and second, a $\lambda$-weighted energy on the Ricci-coefficients:
\glossary{
name={$\mathbb{D}^k_\lambda \left[\mathfrak{R}\right] \left(\tau_1, \tau_2\right)$}, 
description={$\lambda$-weighted energy on the Ricci-coefficients}
}
\begin{align} \label{lambdawn}
\mathbb{D}^{n}_\lambda \left[\mathfrak{R}\right] \left(\tau_1,\tau_2\right) = \sup_{\tau} \overline{\mathbb{E}}^{n} \left[\mathfrak{R}\right]\left(\tilde{\Sigma}_{\tau}, \mathcal{H} \right) + B_{\lambda} \cdot {\mathbb{D}}^{n-1} \left[\mathfrak{R}\right]\left(\tau_1,\tau_2\right) 
\nonumber \\ +
 \lambda \cdot \overline{\mathbb{I}}^{n} \left[\mathfrak{R}\right]\left(\tilde{\mathcal{M}}\left(\tau_1,\tau_2\right)\right) + \overline{\mathbb{I}}^{n} \left[\mathfrak{R}\right] \left(\mathcal{D} \left(\tau_1,\tau_2 \right)\right) 
\end{align}
and $\mathbb{D}^{n}_\lambda \left[\mathfrak{R}\right] \left(\tau\right) :=\mathbb{D}^{n}_\lambda \left[\mathfrak{R}\right] \left(\tau, \infty\right)$.
Note that this energy only requires a $\lambda$-small contribution of non-degenerate integrated decay for $n$ derivatives of the Ricci coefficients near the photon sphere, since the last term regards only the asymptotic region. In particular, we see that if $\left(\mathcal{M}, g\right)$ is ultimately Schwarzschildean to order $k+1$, then $\mathbb{D}^{k}_\lambda \left[\mathfrak{R}\right] \left(\tau\right)$ is $\epsilon \cdot \lambda$-small for sufficiently large $\tau$. This definition is again tailored to what we are going to show for the curvature, cf.~the remarks at the end of section \ref{mtsec}. 

\subsubsection{The timelike ultimately Killing field}
We recall the definition of the deformation tensor of a vectorfield $\mathcal{X}$. It is the  covariant two-tensor 
\begin{equation}
\phantom{}^{(\mathcal{X})}\pi \left(\mathcal{Y},\mathcal{Z}\right) = \frac{1}{2} \left(g\left(\mathcal{Z},D_\mathcal{Y} \mathcal{X}\right) + g\left(\mathcal{Y},D_\mathcal{Z} \mathcal{X}\right)\right)
\end{equation}
whose norm is computed in the orthonormal frame:
\begin{equation}
\| \pi \| := \sum_{i,j=1,..4} |\pi\left(e_i, e_j\right)| \, .
\end{equation}
We also introduce the null-decomposition of its traceless part
\begin{align}
\phantom{}^{(\mathcal{X})}\mathbf{i}_{AB} = \phantom{}^{(\mathcal{X})} \widehat{\pi}_{AB}  \textrm{ \ \ \ , \ \ \ } \phantom{}^{(\mathcal{X})}\mathbf{j} = \phantom{}^{(\mathcal{X})} \widehat{\pi}_{34} \, , \nonumber \\ 
\phantom{}^{(\mathcal{X})}\mathbf{m}_{A} = \phantom{}^{(\mathcal{X})} \widehat{\pi}_{4A}  \textrm{ \ \ \  , \ \ \ } \phantom{}^{(\mathcal{X})}\mathbf{\underline{m}}_A = \phantom{}^{(\mathcal{X})} \widehat{\pi}_{3B}  \, , \nonumber \\
\phantom{}^{(\mathcal{X})}\mathbf{n} = \phantom{}^{(\mathcal{X})} \widehat{\pi}_{44}  \textrm{ \ \ \ , \ \ \ } \phantom{}^{(\mathcal{X})}\mathbf{\underline{n}} = \phantom{}^{(\mathcal{X})} \widehat{\pi}_{33} \, .
\end{align}

\begin{definition} \label{ultimateKilling}
The vectorfield $T$ is called {\bf ultimately Killing to order $\mathbf{k+1}$} if its deformation tensor satisfies
\begin{itemize}
\item on the entire black hole exterior the pointwise estimate
\begin{equation}
 \|\phantom{}^{(T)}\pi \textrm{ \ \ except ${}^{(T)}i_{AB}$}\| \leq \frac{\epsilon}{r^2}  \textrm{\ \ \  as well as \ \ \  } \|{}^{(T)}i_{AB}\| \leq \frac{\epsilon}{r \cdot \max\left(1, t^\star-r\right)} \nonumber
\end{equation}
 \item in the interior region, $r \leq t^\star$, the pointwise estimate
\begin{equation} \label{decaydeft}
\|\phantom{}^{(T)}\pi\| \leq \epsilon \left(t^\star\right)^{-\frac{5}{4}}
\end{equation}
\item For any $\tau_0 \leq \tau_1 < \tau_2$
\begin{align}
\sum_{k_1+k_2+k_3=i} \int_{\tilde{\mathcal{M}}\left(\tau_1, \tau_2\right)} dt^\star dr d\omega \, r^{2-\delta} \left(\sum_{j=0}^{l(n)} r^{min\left(2k_1-j+2,1\right)} \tau_1^{j-{l(n)}} \right) r^{2k_2+2k_3} \Big( \nonumber \\ 
\|\slashed{D}_3^{k_1} \slashed{D}_4^{k_2} \slashed{\nabla}^{k_3}  \ \phantom{}^{(T)}\left(\mathbf{j},\mathbf{m},\underline{\mathbf{m}},\mathbf{n},\underline{\mathbf{n}}\right) \|^2 
+ \frac{1}{r}\|\slashed{D}_3^{k_1} \slashed{D}_4^{k_2} \slashed{\nabla}^{k_3}  \ \phantom{}^{(T)}\mathbf{i} \|^2 \Big)
< \frac{\epsilon}{\tau_1^{l(n)}} \nonumber
\end{align}
where $i=1,...,k$ and $l\left(n\right)$ as in (\ref{Rldef}).
\end{itemize}
\end{definition}
\section{The Bianchi equations and the Bel-Robinson identity}
In this section we collect the main energy identities and commutation formulae that we are going to need in the paper. We use the conventions of \cite{ChristKlei} and refer the reader to this reference for detailed derivations.
 
\subsection{The energy identity}
As we are going to work with higher order energies arising from commuting the Bianchi equation with various vectorfields, it is convenient to study directly the more general inhomogeneous Bianchi equations
\begin{equation} \label{h2}
 D^{\alpha} \mathcal{W}_{\alpha \beta \gamma \delta} = \mathcal{J}_{\beta \gamma \delta} \, ,
\end{equation}
for any tensor $\mathcal{W}$ which has the algebraic properties of a Weyl-tensor. Defining the left and right duals 
\begin{align}
{}^\star  \mathcal{W}_{\alpha \beta \gamma \delta} = \frac{1}{2} \in_{\alpha \beta \mu \nu}  \mathcal{W}^{\mu \nu}_{\phantom{\mu \nu}\gamma \delta} \textrm{ \ \ \ \ and \ \ \ \ }  \mathcal{W}^\star_{\phantom{\star} \alpha \beta \gamma \delta} = \frac{1}{2}  \mathcal{W}_{\alpha \beta}^{\phantom{\alpha \beta}\mu \nu} \in_{\mu \nu \gamma \delta} \nonumber \\   \mathcal{J}^\star_{\phantom{\star} \beta \gamma \delta} = \frac{1}{2}  \mathcal{J}_{\beta \mu \nu} \in^{\mu \nu}_{\phantom{\mu \nu}\gamma \delta} 
\end{align}
with $\in_{\alpha \beta \gamma \delta}$ denoting the volume form of $g$, we can also write (\ref{h2}) as
\begin{align} 
D^{\alpha} \phantom{}^\star \mathcal{W}_{\alpha \beta \gamma \delta} = \mathcal{J}^\star_{\beta \gamma \delta} \, .
\end{align}
We define the symmetric, traceless Bel-Robinson tensor
\begin{equation}
 Q\left[\mathcal{W}\right]_{\alpha \beta \gamma \delta} = \mathcal{W}_{\alpha \rho \gamma \sigma} \mathcal{W}_{\beta\phantom{\rho}\delta\phantom{\sigma}}^{\phantom{\beta}\rho\phantom{\delta}\sigma} + \phantom{}^\star \mathcal{W}_{\alpha \rho \gamma \sigma} \phantom{}^\star \mathcal{W}_{\beta\phantom{\rho}\delta\phantom{\sigma}}^{\phantom{\beta}\rho\phantom{\delta}\sigma} \, .
\end{equation}
It satisfies the identity (\ref{bintro}), from which we derive 
\begin{align} \label{mainid}
 \int_{\Sigma_{\tau_2}} Q\left[\mathcal{W}\right]\left(\mathcal{X},\mathcal{Y},\mathcal{Z},n_{\Sigma} \right) d\mu_{\Sigma} + \int_{\mathcal{H}(\tau_1,\tau_2)} Q\left[\mathcal{W}\right]\left(\mathcal{X},\mathcal{Y},\mathcal{Z},n_{\mathcal{H}} \right) d\mu_{\mathcal{H}} \nonumber \\ + \int_{\mathcal{M}\left(\tau_1,\tau_2\right)}  \left(K_1^{\mathcal{X}\mathcal{Y}\mathcal{Z}}\left[\mathcal{W}\right] + K_2^{\mathcal{X}\mathcal{Y}\mathcal{Z}}\left[\mathcal{W}\right]\right) d\mu   =  \int_{\Sigma_{\tau_1}} Q\left[\mathcal{W}\right]\left(\mathcal{X},\mathcal{Y},\mathcal{Z},n_{\Sigma}\right) d\mu_{\Sigma} \, ,
\end{align}
for vectorfields $\mathcal{\mathcal{X}},\mathcal{\mathcal{Y}},\mathcal{\mathcal{Z}}$, where
\glossary{
name={$K_1^{\mathcal{X}\mathcal{Y}\mathcal{Z}}\left[\mathcal{W}\right]$, $ K_2^{\mathcal{X}\mathcal{Y}\mathcal{Z}}\left[\mathcal{W}\right] $ }, 
description={spacetime terms in the main-energy identity (\ref{mainid})}}
\begin{equation}
 K_1^{\mathcal{X}\mathcal{Y}\mathcal{Z}}\left[\mathcal{W}\right] = Q^{\alpha \beta \gamma \delta} \left(\phantom{}^{(\mathcal{X})}\pi_{\alpha \beta} \mathcal{Y}_\gamma \mathcal{Z}_{\delta} + \phantom{}^{(\mathcal{Y})}\pi_{\alpha \beta} \mathcal{Z}_\gamma \mathcal{X}_{\delta} + \phantom{}^{(\mathcal{Z})}\pi_{\alpha \beta} \mathcal{X}_\gamma \mathcal{Y}_{\delta} \right)
\end{equation}
\begin{equation} \label{K2}
 K_2^{\mathcal{X}\mathcal{Y}\mathcal{Z}} \left[\mathcal{W}\right] = \left(div Q\right)_{\beta \gamma \delta}\mathcal{X}^\beta \mathcal{Y}^\gamma \mathcal{Z}^\delta
\end{equation}
where
\begin{equation} \label{divQ}
\left(div Q\right)_{\beta \gamma \delta} = \mathcal{W}_{\beta\phantom{\mu}\delta\phantom{\nu}}^{\phantom{\beta}\mu\phantom{\delta}\nu} \mathcal{J}_{\mu \gamma \nu} + \mathcal{W}_{\beta\phantom{\mu}\gamma\phantom{\nu}}^{\phantom{\beta}\mu\phantom{\gamma}\nu} \mathcal{J}_{\mu \delta \nu} + \phantom{}^\star \mathcal{W}_{\beta\phantom{\mu}\delta\phantom{\nu}}^{\phantom{\beta}\mu\phantom{\delta}\nu} \mathcal{J}^{\star}_{\mu \gamma \nu} + \phantom{}^\star \mathcal{W}_{\beta\phantom{\mu}\gamma\phantom{\nu}}^{\phantom{\beta}\mu\phantom{\gamma}\nu} \mathcal{J}^{\star}_{\mu \delta \nu} \, .
\end{equation}
An analogous identity holds for the tilded slices $\tilde{\Sigma}$ and $\tilde{\mathcal{M}}$. In that case, there is an additional boundary term at null-infinity.

Regarding the boundary term, we note \cite{ChristKlei} that $Q\left[\mathcal{W}\right] \left(\mathcal{X},\mathcal{Y},\mathcal{Z},n\right)$ is non-negative for any set of causal vectorfields $\mathcal{X},\mathcal{Y},\mathcal{Z},n$. It will sometimes be convenient to evaluate this boundary-term for the Schwarzschildean normal $\tilde{n}_{\Sigma}=\frac{1}{2} k_{\chi} e_3 + \frac{1}{2}  \frac{1}{k_{\chi}} e_4 $, which is $C^1$-close to $n_{\Sigma}$ by the ultimately Schwarzschildean assumption and use the estimate 
\begin{align} \label{bndyssrel}
|Q\left[\mathcal{W}\right] \left(\mathcal{X},\mathcal{Y},\mathcal{Z},n_{\Sigma} \right) - Q \left[\mathcal{W}\right]\left(\mathcal{X},\mathcal{Y},\mathcal{Z},\tilde{n}_{\Sigma}\right)| \leq \frac{\epsilon}{r^2} Q\left[\mathcal{W}\right] \left(N,N,N,N\right) \, ,
\end{align}
where $N$ is a timelike vectorfield with $b \leq g\left(N,N\right)\leq B$ for constants $b,B$ depending on $M$ only.

For future applications, we finally collect the null-components of the deformation tensor of a vectorfield of the form
\begin{align}
\mathcal{X} = \mathfrak{B} e_3 + \mathfrak{D} e_4 \textrm{ \ \ \ \ \ with $\mathfrak{B}$ and $\mathfrak{D}$ \ $C^1$-bounded functions} \, .
\end{align}
They read (writing the short-hand $\pi$ for ${}^{(\mathcal{X})}\pi$)
\begin{equation} \label{nucos}
 \pi^{34} = \frac{1}{4} \left[-e_3\left(\mathfrak{B}\right) - e_4 \left(\mathfrak{D}\right) + 2\mathfrak{B} \underline{\Omega} + 2\mathfrak{D} \Omega \right] \, ,
\end{equation}
\begin{equation}
 \pi^{33} = \frac{1}{4} \left[-2e_4\left(\mathfrak{B}\right) - 4\mathfrak{B} \Omega\right] \textrm{ \ \ \ , \ \ \ }  \pi^{44} = \frac{1}{4} \left[-2e_3\left(\mathfrak{D}\right) - 4\mathfrak{D} \underline{\Omega}\right] \, ,
\end{equation}
\begin{equation}
 \pi^{AB} = \mathfrak{B} \underline{H}^{AB} + \mathfrak{D} H^{AB} \, ,
\end{equation}
\begin{equation}
 \pi^{4A} = -\frac{1}{2} \left[-e_A\left(\mathfrak{D}\right) +\mathfrak{B} \underline{Y}^A + \mathfrak{D} Z^A + \mathfrak{D} V^A \right] \, ,
\end{equation}
\begin{equation} \label{nucoe}
 \pi^{3A} = -\frac{1}{2} \left[-e_A\left(\mathfrak{B}\right) -\mathfrak{B} V^A + \mathfrak{B} \underline{Z}^A + \mathfrak{D} Y^A \right] \, .
\end{equation}

\subsection{The (tensorial) Bianchi equations under commutation} \label{commute}
We now state the commutation formulae that arise from commuting the Bianchi equations with vectorfields.  For this we need the notion of a modified Lie-derivative (cf.~\cite{ChristKlei}), which ensures that $\widehat{\mathcal{L}}_X W_{\alpha \beta \gamma \delta}$ is again a Weyl field.
\begin{definition}
The modified Lie derivative of a Weyl-tensor $W$ is defined as
\glossary{
name={$\widehat{\mathcal{L}}_X$},
description={modified Lie-derivative to make $\widehat{\mathcal{L}}_X W$ traceless}
}
\begin{align}
\widehat{\mathcal{L}}_X W := \mathcal{L}_X W - \frac{1}{2} {}^{(X)}[W] + \frac{3}{8} tr {}^{(X)}\pi \, W
\end{align}
with
\begin{align}
{}^{(X)}[W]_{\alpha \beta \gamma \delta} = \pi^{\mu}_{\phantom{\mu} \alpha}W_{\mu \beta \gamma \delta} + \pi^{\mu}_{\phantom{\mu} \beta}W_{\alpha \mu \gamma \delta} + \pi^{\mu}_{\phantom{\mu} \gamma}W_{\alpha \beta \mu \delta} + \pi^{\mu}_{\phantom{\mu} \delta}W_{\alpha \beta \gamma \mu} \, .
\end{align}
\end{definition}
\begin{proposition}
Let
\begin{equation}
 D^\alpha W_{\alpha \beta \gamma \delta} = j_{\beta \gamma \delta}
\end{equation}
and $X$ an arbitrary vectorfield. Then
\begin{align} \label{Weylcommute1}
D^\alpha\left(\widehat{\mathcal{L}}_X W_{\alpha \beta \gamma \delta} \right) = \widehat{\mathcal{L}}_X j_{\beta \gamma \delta} + \frac{1}{2} \widehat{\pi}^{\mu \nu} D_{\nu} W_{\mu \beta \gamma \delta} + \frac{1}{2} D^\alpha \widehat{\pi}_{\alpha \lambda} W^\lambda_{\phantom{\lambda}\beta \gamma \delta} \nonumber \\ + \frac{1}{2} \Bigg( \left(D_{\beta} \widehat{\pi}_{\alpha \lambda} - D_{\lambda} \widehat{\pi}_{\alpha \beta} \right) W^{\alpha \lambda}_{\phantom{\alpha \lambda} \gamma \delta} + \left(D_{\gamma} \widehat{\pi}_{\alpha \lambda} - D_{\lambda} \widehat{\pi}_{\alpha \gamma} \right) W^{\alpha \phantom{\beta} \lambda}_{\phantom{\alpha}\beta \phantom{\lambda} \delta} \nonumber \\ + \left(D_{\delta} \widehat{\pi}_{\alpha \lambda} - D_{\lambda} \widehat{\pi}_{\alpha \delta} \right) W^{\alpha \phantom{\beta \gamma} \lambda}_{\phantom{\alpha}\beta \gamma \phantom{\lambda}}\Bigg) = \widehat{\mathcal{L}}_X j_{\beta \gamma \delta} + J_{\beta \gamma \delta}\left(X,W\right) \, .
\end{align}
\glossary{
name={$J \left(X,W\right)$}, 
description={error-term arising in the context of commutation with a vectorfield $X$}
}
Here the last equality is to be understood as a definition of $J_{\beta \gamma \delta}\left(X,W\right)$.
\end{proposition}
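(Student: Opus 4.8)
The plan is to obtain (\ref{Weylcommute1}) in two stages: first commute the \emph{ordinary} Lie derivative $\mathcal{L}_X$ through the contracted divergence $D^\alpha(\cdot)_{\alpha\beta\gamma\delta}$, and then account for the two algebraic corrections ($-\tfrac12\,{}^{(X)}[W]$ and $+\tfrac38\,\mathrm{tr}\,{}^{(X)}\pi\,W$) that distinguish $\widehat{\mathcal{L}}_X W$ from $\mathcal{L}_X W$. The basic input for the first stage is the variational formula for the Levi-Civita connection: since $\mathcal{L}_X g_{\mu\nu}=2\,{}^{(X)}\pi_{\mu\nu}$, one has
\begin{align}
\mathcal{L}_X \Gamma^\lambda_{\phantom{\lambda}\mu\nu} = D_\mu\, {}^{(X)}\pi^\lambda_{\phantom{\lambda}\nu} + D_\nu\, {}^{(X)}\pi^\lambda_{\phantom{\lambda}\mu} - D^\lambda\, {}^{(X)}\pi_{\mu\nu}\, .
\end{align}
Writing $D$ and $\mathcal{L}_X$ in a coordinate frame and applying this identity slot by slot expresses the commutator $[\mathcal{L}_X,D_\epsilon]W_{\alpha\beta\gamma\delta}$ as a sum of terms schematically of the form $D\,{}^{(X)}\pi\cdot W$, one for each of the five indices.

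Next I would raise the first index with $g^{\epsilon\alpha}$ and contract; because $\mathcal{L}_X g^{\alpha\mu}=-2\,{}^{(X)}\pi^{\alpha\mu}$, this also produces a term $-2\,{}^{(X)}\pi^{\alpha\mu}D_\mu W_{\alpha\beta\gamma\delta}$. Combining these and inserting the hypothesis $D^\alpha W_{\alpha\beta\gamma\delta}=j_{\beta\gamma\delta}$ gives $D^\alpha(\mathcal{L}_X W_{\alpha\beta\gamma\delta})=\mathcal{L}_X j_{\beta\gamma\delta}+(\text{explicit }{}^{(X)}\pi\cdot DW\text{ and }D\,{}^{(X)}\pi\cdot W\text{ terms})$. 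Then, using the Leibniz rule together with the trace-free, cyclic and pair-exchange symmetries of the Weyl field $W$ and the second Bianchi identity $D_{[\epsilon}W_{\alpha\beta]\gamma\delta}=0$, I would compute $D^\alpha\big({}^{(X)}[W]_{\alpha\beta\gamma\delta}\big)$ and $D^\alpha\big(\mathrm{tr}\,{}^{(X)}\pi\,W_{\alpha\beta\gamma\delta}\big)$. The point of the specific coefficients $-\tfrac12$ and $+\tfrac38$ in $\widehat{\mathcal{L}}_X$ is exactly that, after splitting ${}^{(X)}\pi=\widehat\pi+\tfrac14\,\mathrm{tr}\,{}^{(X)}\pi\,g$, every pure-trace contribution (both the ones generated in the first stage and the ones coming from ${}^{(X)}[W]$) cancels, leaving only terms built from the trace-free part $\widehat\pi$; this is the same cancellation that forces $\widehat{\mathcal{L}}_X W$ to be again a Weyl field.

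What remains is to sort the surviving $\widehat\pi$-terms into the form displayed in (\ref{Weylcommute1}). The term in which the derivative hits $W$ is $\tfrac12\widehat\pi^{\mu\nu}D_\nu W_{\mu\beta\gamma\delta}$; the term in which the contracted index sits on $\widehat\pi$ is $\tfrac12 D^\alpha\widehat\pi_{\alpha\lambda}W^{\lambda}_{\phantom{\lambda}\beta\gamma\delta}$; and the three remaining $D\widehat\pi\cdot W$ contributions, one for each free index $\beta,\gamma,\delta$, reorganize --- after using the antisymmetry of $W$ in each of its two index pairs and the algebraic Bianchi identity $W_{\alpha[\beta\gamma\delta]}=0$ to antisymmetrize the $\widehat\pi$-indices --- into the combinations $(D_\beta\widehat\pi_{\alpha\lambda}-D_\lambda\widehat\pi_{\alpha\beta})W^{\alpha\lambda}_{\phantom{\alpha\lambda}\gamma\delta}$ and its cyclic partners. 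Finally, the $j$-dependent pieces coming from $\mathcal{L}_X j$, $-\tfrac12\,{}^{(X)}[j]$ and $\tfrac38\,\mathrm{tr}\,{}^{(X)}\pi\,j$ assemble into $\widehat{\mathcal{L}}_X j_{\beta\gamma\delta}$ (the modified Lie derivative acting on the $3$-tensor $j$), and the rest is collected into the definition of $J_{\beta\gamma\delta}(X,W)$.

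The only genuinely delicate part is the bookkeeping in the middle stage: there are many inequivalent ways to contract $\widehat\pi$ and $\mathrm{tr}\,{}^{(X)}\pi$ against the four slots of $W$, and one must repeatedly and carefully invoke the full set of Weyl symmetries of $W$ (antisymmetry in $\alpha\beta$ and in $\gamma\delta$, symmetry under exchange of the two pairs, the algebraic Bianchi identity, and tracelessness $g^{\alpha\gamma}W_{\alpha\beta\gamma\delta}=0$) together with the differential Bianchi identity to check that all trace-type and ``wrong-symmetry'' contributions cancel exactly and that the remainder really collapses to the stated antisymmetrized form. This is a direct but lengthy tensor manipulation; it is carried out in full in \cite{ChristKlei}, and here I would only indicate the mechanism.
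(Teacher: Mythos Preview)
Your proposal is correct and outlines exactly the computation that underlies this identity; the paper itself does not give a proof but simply cites Proposition 7.1.2 of \cite{ChristKlei}, where the calculation you sketch is carried out in full. So your approach is the same as the referenced one, only spelled out in more detail than the paper provides.
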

\begin{proof}
This is Proposition 7.1.2 of \cite{ChristKlei}. 
\end{proof}
\begin{corollary}
\begin{equation} \label{nXWeylcommute}
D^\alpha \left(\widehat{\mathcal{L}}^n_T \widehat{\mathcal{L}}_T W\right)_{\alpha \beta \gamma \delta} = \widehat{\mathcal{L}}^n_T J_{\beta \gamma \delta} \left(T,W\right) + \sum_{k=0}^{n-1} \widehat{\mathcal{L}}^k_T J_{\beta \gamma \delta} \left(T, \widehat{\mathcal{L}}^{n-1-k}_T \widehat{\mathcal{L}}_T W\right) \, .
\end{equation}
\end{corollary}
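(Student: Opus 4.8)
The statement is a purely formal consequence of the single-step commutation identity (\ref{Weylcommute1}), obtained by induction on $n$; the only structural input beyond bookkeeping is that the modified Lie derivative $\widehat{\mathcal{L}}_T$ maps Weyl fields to Weyl fields, so that (\ref{Weylcommute1}) may be applied iteratively with $\widehat{\mathcal{L}}_T W, \widehat{\mathcal{L}}^2_T W, \dots$ successively playing the role of $W$. The plan is as follows.

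\emph{Base case $n=0$.} Since $W$ is the Weyl curvature of a vacuum spacetime it satisfies the homogeneous Bianchi equations (\ref{BiI}), i.e.\ $D^\alpha W_{\alpha\beta\gamma\delta}=0$. Applying (\ref{Weylcommute1}) with $X=T$ and $j=0$ gives immediately $D^\alpha(\widehat{\mathcal{L}}_T W)_{\alpha\beta\gamma\delta} = J_{\beta\gamma\delta}(T,W)$, which is the asserted formula for $n=0$ (the sum being empty).

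\emph{Inductive step.} Assume the identity holds with $n$ replaced by $n-1$, and set $\mathcal{V} := \widehat{\mathcal{L}}^{n-1}_T \widehat{\mathcal{L}}_T W = \widehat{\mathcal{L}}^{n}_T W$. By the defining property of $\widehat{\mathcal{L}}_T$ (cf.\ the definition above and Proposition 7.1.2 of \cite{ChristKlei}), $\mathcal{V}$ is again a Weyl field, and by the inductive hypothesis it solves $D^\alpha \mathcal{V}_{\alpha\beta\gamma\delta} = \mathfrak{j}_{\beta\gamma\delta}$ with $\mathfrak{j}_{\beta\gamma\delta} = \widehat{\mathcal{L}}^{n-1}_T J_{\beta\gamma\delta}(T,W) + \sum_{k=0}^{n-2}\widehat{\mathcal{L}}^{k}_T J_{\beta\gamma\delta}(T,\widehat{\mathcal{L}}^{n-2-k}_T\widehat{\mathcal{L}}_T W)$. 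Now apply (\ref{Weylcommute1}) once more, this time with $\mathcal{V}$ in the role of $W$, inhomogeneity $\mathfrak{j}$, and vectorfield $T$; this yields $D^\alpha(\widehat{\mathcal{L}}_T\mathcal{V})_{\alpha\beta\gamma\delta} = \widehat{\mathcal{L}}_T\mathfrak{j}_{\beta\gamma\delta} + J_{\beta\gamma\delta}(T,\mathcal{V})$. Since $\widehat{\mathcal{L}}_T\mathcal{V} = \widehat{\mathcal{L}}^{n}_T\widehat{\mathcal{L}}_T W$, it remains only to distribute $\widehat{\mathcal{L}}_T$ over $\mathfrak{j}$ and reindex: $\widehat{\mathcal{L}}_T\mathfrak{j}_{\beta\gamma\delta} = \widehat{\mathcal{L}}^{n}_T J_{\beta\gamma\delta}(T,W) + \sum_{k=1}^{n-1}\widehat{\mathcal{L}}^{k}_T J_{\beta\gamma\delta}(T,\widehat{\mathcal{L}}^{n-1-k}_T\widehat{\mathcal{L}}_T W)$, while $J_{\beta\gamma\delta}(T,\mathcal{V}) = J_{\beta\gamma\delta}(T,\widehat{\mathcal{L}}^{n-1}_T\widehat{\mathcal{L}}_T W)$ supplies exactly the missing $k=0$ term of the sum. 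Collecting terms gives precisely (\ref{nXWeylcommute}), completing the induction.

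\emph{Main obstacle.} There is no genuine analytic difficulty here: the content is entirely the telescoping of the accumulated inhomogeneity and the correct indexing of which power of $\widehat{\mathcal{L}}_T$ acts on each $J$-term. The one point requiring care — and the whole reason the \emph{modified} Lie derivative is used rather than the ordinary one — is that at each stage one must know that $\widehat{\mathcal{L}}^{m}_T W$ still enjoys the algebraic symmetries of a Weyl tensor (in particular tracelessness), since otherwise (\ref{Weylcommute1}) could not be re-applied with $\widehat{\mathcal{L}}^m_T W$ in the role of $W$; this is exactly the property guaranteed by the definition of $\widehat{\mathcal{L}}_X$ and the cited result of Christodoulou–Klainerman.
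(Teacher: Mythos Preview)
Your proof is correct and is exactly the intended argument: the paper states the corollary without proof, as it follows immediately by induction from the single-step commutation formula (\ref{Weylcommute1}) together with the fact that $\widehat{\mathcal{L}}_T$ preserves the Weyl-field property. Your bookkeeping of the indices in the inductive step is accurate.
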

%
%
%
%

\subsection{Bianchi equations for the null-curvature components} \label{abrsabbb}
For $\mathcal{W}$ a Weyl tensor we define its null-components
\begin{equation}
 \underline{\alpha}_{\mu \nu} = \mathbb{P}^\rho_\mu \mathbb{P}^\sigma_\nu \mathcal{W}_{\rho \gamma \sigma \delta} \left(e_3\right)^\gamma \left(e_3\right)^\delta \, ,
\end{equation}
\begin{equation}
 \alpha_{\mu \nu} =\Pi^\rho_\mu \mathbb{P}^\sigma_\nu \mathcal{W}_{\rho \gamma \sigma \delta} \left(e_4\right)^\gamma \left(e_4\right)^\delta  \, ,
\end{equation}
\begin{equation}
  \underline{\beta}_{\mu} = \frac{1}{2} \mathbb{P}^\rho_\mu \mathcal{W}_{\rho \gamma \sigma \delta} \left(e_3\right)^\sigma \left(e_3\right)^\gamma \left(e_4\right)^\delta  \, ,
\end{equation}
\begin{equation}
  \beta_{\mu} = \frac{1}{2} \mathbb{P}^\rho_\mu \mathcal{W}_{\rho \gamma \sigma \delta} \left(e_4\right)^\sigma \left(e_3\right)^\gamma \left(e_4\right)^\delta  \, ,
\end{equation}
\begin{equation}
  \rho = \frac{1}{4} \mathcal{W}_{\rho \gamma \sigma \delta} \left(e_3\right)^\rho \left(e_4\right)^\sigma \left(e_3\right)^\gamma \left(e_4\right)^\delta  \, ,
\end{equation}
\begin{equation}
  \sigma = \frac{1}{4} \epsilon^{\rho \gamma} \mathcal{W}_{\rho \gamma \sigma \delta}  \left(e_3\right)^\sigma \left(e_4\right)^\delta  \, ,
\end{equation}
with $\mathbb{P}^\rho_\mu$ the projection onto the $2$-spheres $S^2_{t^\star,u}$. A computation reveals

\begin{lemma}
For Schwarzschild and its null-frame $\hat{e}_0, \hat{e}_1, \hat{e}_2, \hat{e}_3$ all curvature components are zero except
\begin{equation}
\rho = -\frac{2M}{r^3} =: \overline{\rho} \, .
\end{equation}
\end{lemma}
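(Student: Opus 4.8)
The statement is a direct computation, so the plan is mostly to organise it efficiently. Since $(\mathcal{R},g^M)$ is a vacuum solution, $R_{\mu\nu}(g^M)=0$ and the Weyl tensor coincides with the Riemann tensor, so it suffices to null-decompose $\mathrm{Riem}(g^M)$ in the frame $\hat{e}_1,\hat{e}_2,\hat{e}_3,\hat{e}_4$ of section \ref{geoSchw}. The structural point that makes this short is that $\hat{e}_3$ and $\hat{e}_4$ are, up to positive rescalings, the two repeated principal null directions of Schwarzschild: by construction $\hat{e}_3=\frac{1}{\sqrt{k_{\chi}}}(n-R)$ and $\hat{e}_4=\sqrt{k_{\chi}}(n+R)$ are null and lie in the timelike $2$-plane $\mathrm{span}\{n,R\}=\mathrm{span}\{\partial_{t^\star},\partial_r\}$, and a Lorentzian $2$-plane contains exactly two null directions --- here the radial ones, which are precisely the (twice repeated) principal null directions of the Petrov type D metric $g^M$.

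Granting this, the fastest route to the vanishing of $\alpha,\beta,\underline{\beta},\underline{\alpha}$ is the null-structure equations of section \ref{nseq}: for Schwarzschild the Ricci coefficients recorded in section \ref{geoSchw} have $\widehat{H}_{AB}=\widehat{\underline{H}}_{AB}=Y_A=\underline{Y}_A=Z_A=\underline{Z}_A=V_A=0$, while $tr H$, $tr \underline{H}$, $\Omega$, $\underline{\Omega}$ are the only nonvanishing ones. Substituting this into the transport equations for $\widehat{H}$ and $\widehat{\underline{H}}$ (whose principal inhomogeneities are $\alpha$ and $\underline{\alpha}$) and into the Codazzi equations relating $\slashed{div}\,\widehat{H}$ and $\slashed{div}\,\widehat{\underline{H}}$ to $\beta$ and $\underline{\beta}$ forces $\alpha=\underline{\alpha}=\beta=\underline{\beta}=0$, since both sides of these equations --- including every quadratic ``lower order'' term --- are built from the vanishing coefficients. (Equivalently: in a frame adapted to the repeated principal null directions of a type D spacetime only the Coulombic components $\rho,\sigma$ survive.) Finally, $\sigma=0$ because $g^M$ is static and invariant under the orientation-reversing isometry $\phi\mapsto-\phi$, equivalently its Weyl tensor is purely electric, so the pseudoscalar component $\sigma$ --- the Hodge-dual analogue of $\rho$ --- must vanish.

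It remains to evaluate $\rho$, which I would read off from the Gauss equation of the spheres $S^2_{t^\star,r}$ (again from section \ref{nseq}): since $\widehat{H}=\widehat{\underline{H}}=0$ and the Gauss curvature equals $\frac{1}{r^2}$, this reduces to $\rho=-\frac{1}{r^2}-\frac{1}{4}\, tr H\, tr \underline{H}$; inserting $tr H=\frac{2}{r}k_-^\chi$ and $tr \underline{H}=-\frac{2}{r}\frac{k_-}{k_-^\chi}$ gives $-\frac{1}{4}\, tr H\, tr \underline{H}=\frac{k_-}{r^2}=\frac{1}{r^2}-\frac{2M}{r^3}$, hence $\rho=-\frac{2M}{r^3}=\overline{\rho}$ --- note that the interpolating cutoff $\chi$ cancels, as it must. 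Alternatively one can contract directly, using $\rho=W(n,R,n,R)$ (an elementary consequence of $\hat{e}_3=\frac{1}{\sqrt{k_\chi}}(n-R)$, $\hat{e}_4=\sqrt{k_\chi}(n+R)$ and the symmetries of $W$), which identifies $\rho$ with the radial ``tidal'' curvature component of Schwarzschild, which is $-\frac{2M}{r^3}$ in the conventions of \cite{ChristKlei}.

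There is no genuine obstacle here; the content is a verification, and the only points demanding care are bookkeeping ones: matching the curvature and null-frame sign conventions of \cite{ChristKlei}, checking that the quadratic terms in the structure equations really do vanish term by term for $g^M$ rather than merely cancelling, and verifying that the cutoff $\chi$ (and its derivative $\chi'$, which enters $\Omega$, $\underline{\Omega}$) drops out of the final answer, consistent with $\rho=-\frac{2M}{r^3}$ being a coordinate-invariant property of $g^M$.
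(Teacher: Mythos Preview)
Your argument is correct. The paper itself offers no proof beyond the phrase ``A computation reveals'', so your use of the null-structure equations of section~\ref{nseq} is a clean and legitimate way to carry out that computation: the vanishing of $\widehat{H},\widehat{\underline{H}},Y,\underline{Y},Z,\underline{Z},V$ in Schwarzschild indeed kills $\alpha,\underline{\alpha}$ via (\ref{H4}), (\ref{Hb3}), kills $\beta,\underline{\beta}$ via (\ref{divH}), (\ref{divHb}) (using that $trH$, $tr\underline{H}$ are functions of $r$ alone), and the Gauss equation (\ref{Gauss}) then gives $\rho=-\frac{2M}{r^3}$ exactly as you compute. One small remark: you need not invoke the parity argument for $\sigma$, since (\ref{curlZb}) already gives $\sigma=0$ directly from $\underline{Z}=\widehat{H}=\widehat{\underline{H}}=Y=\underline{Y}=0$, keeping the whole proof uniformly within the structure-equation framework.
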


We will consider certain derivatives of these null-components in the null directions. The following definitions are useful for a (differentiated) null-component $u$ of signature $s$ and rank $k$ (cf.~\cite{ChristKlei}):
\begin{align} \label{subnot}
u_3 &= \slashed{D}_3 u + \frac{3-s\left(u\right)+k\left(u\right)}{2} tr \underline{H} u \, , \nonumber \\
u_4 &= \slashed{D}_4 u + \frac{3+s\left(u\right)+k\left(u\right)}{2} tr H u \, .
\end{align}
Here, the signature is defined recursively as follows. For a $p$-tupel ($p\geq 0$) $n_p$ of $3$'s and $4$'s (in any order) 
\begin{align}
s\left(\alpha_{n_p} \right) &= 2 + \left(\textrm{number of $4$'s - number of $3$'s in $n_p$}\right)  \, ,\nonumber \\
s\left(\beta_{n_p} \right) &= 1 + \left(\textrm{number of $4$'s - number of $3$'s in $n_p$}\right)  \, ,
\nonumber \\
s\left(\rho_{n_p}, \sigma_{n_p} \right) &= 0 + \left(\textrm{number of $4$'s - number of $3$'s in $n_p$}\right)  \, ,
\nonumber \\
s\left(\underline{\beta}_{n_p} \right) &= -1 + \left(\textrm{number of $4$'s - number of $3$'s in $n_p$}\right)  \, ,\nonumber \\ 
s\left(\underline{\alpha}_{n_p} \right) &= -2 + \left(\textrm{number of $4$'s - number of $3$'s in $n_p$}\right) \nonumber \, .
\end{align}
The rank $k$ is simply the length of the tuple, e.g.~$k\left(\alpha_{n_p}\right) = p$.
Note that $s\left(\alpha_{4n_p}\right)= s\left(\alpha_{n_p}\right)+1$ and $s\left(\alpha_{3n_p}\right)= s\left(\alpha_{n_p}\right)-1$. We also define the short-hand notation
\begin{equation}
\vartheta^\pm \left(u_{n_p}\right) = \frac{3 \pm s\left(u_{n_p}\right) + p}{2} \, .
\end{equation}
Following \cite{ChristKlei}, we use the following notation for operators on the spheres $S^2_{t^\star,u}$:
\begin{itemize}
\item $\slashed{\mathcal{D}}_1$ takes any $S^2$-tangent $1$-form $\xi$ into the pair of functions $\left(\slashed{div} \xi, \slashed{curl} \xi\right)$
\item $\slashed{\mathcal{D}}_1^\star$, the $L^2$-adjoint of $\slashed{\mathcal{D}}_1$, takes any pair of scalars $\rho, \sigma$ into the $S^2$-tangent $1$-form $-\slashed{\nabla}_A \rho + \epsilon_{AB} \slashed{\nabla}^B \sigma$.
\item $\slashed{\mathcal{D}}_2$ takes any $2$-covariant symmetric traceless tensor $\xi$ into the $S^2$-tangent $1$-form $\slashed{div} \xi$.
\item $\slashed{\mathcal{D}}_2^\star$, the $L^2$ adjoint of $\slashed{\mathcal{D}}_2$ takes any $S^2$-tangent $1$-form $\xi$ into the $2$-form $-\frac{1}{2} \left(\slashed{\nabla}_B \xi_A + \slashed{\nabla}_A \xi_B - \left(\slashed{div} \xi\right) \slashed{g}_{AB}\right)$.
\end{itemize}
Finally, we decompose the Bianchi equations as equations for the null-curvature components (see Proposition \ref{GNDK} for the definition of $\Theta \left(J\right), \underline{\Theta}\left(J\right)$):
\begin{align} \label{Bianchi1}
\alpha_3 &= \slashed{D}_3 \alpha + \frac{1}{2}tr \underline{H} \alpha =  - 2\slashed{\mathcal{D}}_2^\star \beta+  E_3\left(\alpha\right) \\
E_3\left(\alpha\right)
&=  4\underline{\Omega} \alpha - 3 \left( \widehat{H} \rho + {}^\star \widehat{H} \sigma\right) + \left(V+4Z\right)  \widehat{\otimes} \beta  - 2 \Theta\left(J\right) \nonumber 
\end{align}
\begin{align} \label{Bianchi2}
\beta_4 &= \slashed{D}_4 \beta + 2 tr H \beta =  \slashed{div} \alpha + E_4\left(\beta\right)  \\
E_4\left(\beta\right)&=  -2\Omega \beta + \left(2V + \underline{Z} \right) \cdot \alpha + 3Y \rho +  3\left( {}^\star Y\sigma\right) -J_{4A4} \nonumber
\end{align}
\begin{align} \label{Bianchi3}
\underline{\alpha}_4 &= \slashed{D}_4 \underline{\alpha}+ \frac{1}{2}tr H \underline{\alpha} =+ 2\slashed{\mathcal{D}}_2^\star \underline{\beta}  +E_4\left(\underline{\alpha}\right)  \\ \nonumber
E_4\left(\underline{\alpha}\right) &= 4\Omega \underline{\alpha}  -3 \widehat{\underline{H}} \rho - 3 \left( - {}^\star \widehat{\underline{H}} \sigma\right) + \left(V-4\underline{Z}\right) \widehat{\otimes} \underline{\beta} - 2\underline{\Theta}\left(J\right) \nonumber
\end{align}
\begin{align} \label{Bianchi4}
\underline{\beta}_3 &= \slashed{D}_3 \underline{\beta} + 2 tr \underline{H} \underline{\beta}  = - \slashed{div} \underline{\alpha}+  E_3\left(\underline{\beta}\right)  \\
E_3\left(\underline{\beta}\right) &=  -2\underline{\Omega}\underline{\beta} + \left(2V - Z \right) \cdot \underline{\alpha}  -3\underline{Y} \rho  + 3\left({}^\star \underline{Y}\sigma\right) + J_{3A3} \nonumber 
\end{align}
\begin{align} \label{Bianchi5}
\beta_3 &= \slashed{D}_3 \beta + tr \underline{H} \beta = + \slashed{\mathcal{D}}_1^\star \left(-\rho, \sigma\right) + E_3\left(\beta\right)   \\ 
E_3\left(\beta\right) &= 2\underline{\Omega} \beta  + 2 \widehat{H} \cdot \underline{\beta} + \underline{Y} \cdot \alpha + 3Z \rho + 3 \left( {}^\star Z \sigma\right) \nonumber + J_{3A4}
\end{align}
\begin{align} \label{Bianchi10}
\underline{\beta}_4 &= \slashed{D}_4 \underline{\beta} + tr H \underline{\beta}=  + \slashed{\mathcal{D}}_1^\star \left(\rho, \sigma\right) +  E_4\left(\beta\right) \\
 E_4\left(\underline{\beta}\right) &=  + 2 \Omega \underline{\beta}  + 2 \widehat{\underline{H}} \cdot \beta - Y \cdot \underline{\alpha} - 3 \left(\underline{Z} \rho - {}^\star \underline{Z} \sigma\right) - J_{4A3} \nonumber
\end{align}
\begin{align} \label{Bianchi6}
\rho_4 &= \slashed{D}_4 \rho + \frac{3}{2} tr H \rho =\slashed{div} \beta + E_4\left(\rho\right)  \\
E_4\left(\rho\right) &= - \frac{1}{2} \widehat{\underline{H}} \cdot \alpha + V \cdot \beta + 2 \left(\underline{Z} \cdot \beta - Y \cdot \underline{\beta}\right) - \frac{1}{2}J_{434} \nonumber
\end{align}
\begin{align} \label{Bianchi7}
\sigma_4 &= \slashed{D}_4 \sigma +\frac{3}{2} tr  {H} \sigma =  -\slashed{curl} \beta + E_4\left(\sigma\right) \\
E_4\left(\sigma\right)&= + \frac{1}{2} \widehat{\underline{H}} \cdot {}^\star \alpha - V \cdot {}^\star \beta - 2 \left(\underline{Z} \cdot {}^\star \beta - Y \cdot {}^\star\underline{\beta}\right) - \frac{1}{2}J^\star_{434}\nonumber
\end{align}
\begin{align} \label{Bianchi8}
\rho_3 &= \slashed{D}_3 \rho+\frac{3}{2} tr \underline{H} \rho =  -\slashed{div} \underline{\beta} + E_3\left(\rho\right)  \\
E_3\left(\rho\right) &= - \frac{1}{2}\widehat{H} \cdot \underline{\alpha} + V \cdot \underline{\beta} + 2 \left(\underline{Y} \cdot \beta - Z \cdot \underline{\beta}\right) + \frac{1}{2}J_{334} \nonumber
\end{align}
\begin{align} \label{Bianchi9}
\sigma_3 &= \slashed{D}_3 \sigma + \frac{3}{2}tr \underline{H} \sigma =  - \slashed{curl} \underline{\beta} + E_3\left(\sigma\right) \\
E_3\left(\sigma\right)  &= - \frac{1}{2}\widehat{H} \cdot {}^\star \underline{\alpha} + V \cdot {}^\star \underline{\beta} + 2 \left(\underline{Y} \cdot {}^\star \beta + Z \cdot {}^\star  \underline{\beta}\right) + \frac{1}{2} J^\star_{334}
\nonumber 
\end{align}
Note that at the lowest order we are going to renormalize $\tilde{\rho}=r^3 \rho + 2M$ and $\tilde{\sigma} = r^3 \sigma$ or alternatively $\hat{\rho}=\frac{\tilde{\rho}}{r^3}$, for which we expect decay. Hence we also collect the equations
\begin{align} \label{renormrho}
\hat{\rho}_3 = \left[\rho + \frac{2M}{r^3}\right]_3 = -\slashed{div} \underline{\beta} + E_3 \left(\rho\right) - \frac{6M}{r^3}\left(\frac{\slashed{D}_3 r}{r} + \frac{1}{2} tr \underline{H}\right) = - \slashed{div} \underline{\beta} + \hat{E}_3 \left(\rho\right) \, ,
\nonumber \\ 
\hat{\rho}_4 = \left[\rho + \frac{2M}{r^3}\right]_4 = \slashed{div} \beta + E_4 \left(\rho\right) - \frac{6M}{r^3}\left(\frac{\slashed{D}_4 r}{r} + \frac{1}{2} tr H\right) =  \slashed{div} \beta + \hat{E}_4 \left(\rho\right) \, .
\end{align}

\subsection{Commuting the null-Bianchi equations}
Besides the important commutation formula of section \ref{commute} we are also going to commute the Bianchi equations for the null-curvature components directly. 

From the first order Bianchi equations of the previous section we derive the higher order Bianchi equations inductively using the following Lemma, which we quote from \cite{ChristKlei} (Lemma 7.3.3). 
\begin{lemma} \label{commutelemma}
Let $U_{A_1... A_k}$ be an $S^2$-tangent $k$-covariant tensor on $\left(\mathcal{R},g\right)$. Then
\begin{align}
\slashed{D}_4 \slashed{\nabla}_B U_{A_1... A_k} - \slashed{\nabla}_B  \slashed{D}_4 U_{A_1... A_k} + H_{BC} \slashed{\nabla}_C U_{A_1... A_k} = F_{4 B A_1 ... A_k} \, ,
\end{align}
\begin{align}
\slashed{D}_3 \slashed{\nabla}_B U_{A_1... A_k} - \slashed{\nabla}_B  \slashed{D}_3 U_{A_1... A_k} + \underline{H}_{BC} \slashed{\nabla}_C U_{A_1... A_k} = F_{3 B A_1... A_k} \, ,
\end{align}
\begin{align}
\slashed{D}_3 \slashed{D}_4 U_{A_1... A_k} - \slashed{D}_4 \slashed{D}_3 U_{A_1... A_k} = F_{3 4 A_1 ... A_k} \, ,
\end{align}
where
\begin{align}
F_{3 B A_1 ... A_k} = \underline{Y}_B\slashed{D}_4 U_{A_1... A_k} + \left(Z_B -V_B\right) \slashed{D}_3 U_{A_1... A_k}  + \sum_{i=1}^k \Big(H_{A_iB} \underline{Y}_C \nonumber \\- H_{BC} \underline{Y}_{A_i} + \underline{H}_{A_iB}{Z}_C -  \underline{H}_{BC}Z_{A_i} - \in_{A_i C} {}^\star \underline{\beta}\left(R\right)_B\Big) U_{A_1 ... A_{i-1} C A_{i+1} A_k} \, , \nonumber
\end{align}
\begin{align}
F_{4 B A_1 ... A_k} ={Y}_B\slashed{D}_3 U_{A_1... A_k} + \left(\underline{Z}_B +V_B\right) \slashed{D}_4 U_{A_1... A_k}  + \sum_{i=1}^k \Big(\underline{H}_{A_iB} Y_C \nonumber \\- \underline{H}_{BC} Y_{A_i} + H_{A_iB}\underline{Z}_C - H_{BC} \underline{Z}_{A_i} + \in_{A_i C} {}^\star \beta\left(R\right)_B\Big) U_{A_1 ... A_{i-1} C A_{i+1} A_k} \, , \nonumber 
\end{align}
\begin{align}
F_{3 4 A_1 ... A_k} = -2\Omega \slashed{D}_3 U_{A_1... A_k} + 2 \underline{\Omega} \slashed{D}_4 U_{A_1... A_k} + \left(Z_B - \underline{Z}_B\right) \slashed{\nabla}_B U_{A_1... A_k} + \nonumber \\ 2 \sum_{i=1}^k \Big(Y_{A_i} \underline{Y}_C - Y_C \underline{Y}_{A_i} + \underline{Z}_{A_i} Z_C - Z_{A_i} \underline{Z}_C + \epsilon_{A_iC} \sigma \left(R\right)\Big) U_{A_1 ... A_{i-1} C A_{i+1} A_k} \, . \nonumber
\end{align}
\end{lemma}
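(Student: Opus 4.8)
The plan is to reduce each of the three projected commutators to a genuine spacetime commutator of covariant derivatives --- which produces Riemann ($=$ Weyl) curvature together with a frame-bracket term --- plus correction terms arising because the projection $\mathbb{P}$ onto $S^2_{t^\star,u}$ does not commute with covariant differentiation. Concretely, writing $\slashed{\nabla}_B U_{A_1\ldots A_k} = \mathbb{P}^{C}_{B}\mathbb{P}^{C_1}_{A_1}\cdots\mathbb{P}^{C_k}_{A_k}(D_{e_C}U)_{C_1\ldots C_k}$ and $\slashed{D}_4 U = \mathbb{P}(D_{e_4}U)$, I would apply $D_{e_4}$ to the first expression and reproject, then do the same with the roles of $e_4$ and $e_B$ exchanged and subtract. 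The leading contributions assemble into $\mathbb{P}\big(\mathcal{R}(e_4,e_B)U + D_{[e_4,e_B]}U\big)$, and everything else is a projection correction; the two remaining identities are treated identically, with $e_4$ replaced by $e_3$, resp.\ with $\slashed{\nabla}_B$ replaced by $\slashed{D}_4$.

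For the curvature piece, since all the free indices $A_1,\ldots,A_k$ and the differentiation index $B$ are tangent to $S^2$, the only component of $\mathcal{R}(e_4,e_B)$ that survives reprojection is the one with a single $e_4$-slot and the remaining slots tangent to $S^2$; by the null decomposition of the Weyl tensor (and the one-dimensionality of $2$-forms on $S^2$) this is exactly ${}^\star\beta(R)$, entering as $\sum_{i}\in_{A_iC}\,{}^\star\beta(R)_B\,U_{A_1\ldots C\ldots A_k}$. The same reduction yields ${}^\star\underline{\beta}(R)$ for $[\slashed{D}_3,\slashed{\nabla}_B]$ and $\sigma(R)$ for $[\slashed{D}_3,\slashed{D}_4]$. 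The bracket $[e_4,e_B]$ is read off from the connection coefficients: its $S^2$-tangent part accounts for the $H_{BC}\slashed{\nabla}_C U$ term on the left-hand side of the first identity, while its $e_3$- and $e_4$-components give $Y_B\slashed{D}_3 U + (\underline{Z}_B+V_B)\slashed{D}_4 U$ --- and, for $[\slashed{D}_3,\slashed{\nabla}_B]$, the mirror term $\underline{Y}_B\slashed{D}_4 U + (Z_B-V_B)\slashed{D}_3 U$; for $[\slashed{D}_3,\slashed{D}_4]$ the bracket produces $(Z_B-\underline{Z}_B)\slashed{\nabla}_B U$ on $S^2$ together with $-2\Omega\slashed{D}_3 U + 2\underline{\Omega}\slashed{D}_4 U$ from its null components.

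It remains to extract the projection corrections by differentiating $\mathbb{P}$, which in the null frame is built from $\delta$ and the tensor products $e_3\otimes e_4$ and $e_4\otimes e_3$. Differentiation in the $e_4$ (resp.\ $e_3$, resp.\ $e_B$) direction brings in $D_{e_4}e_3$, $D_{e_4}e_4$, etc., which are expressed through $H$, $\underline{H}$, $Y$, $\underline{Y}$, $Z$, $\underline{Z}$, $V$, $\Omega$, $\underline{\Omega}$. Sorting the resulting terms according to whether the corrected slot is the differentiation index $B$ or one of the tensor indices $A_i$ produces, in the $3$-direction, precisely the structured sum $\sum_{i=1}^k\big(H_{A_iB}\underline{Y}_C - H_{BC}\underline{Y}_{A_i} + \underline{H}_{A_iB}Z_C - \underline{H}_{BC}Z_{A_i}\big)U_{A_1\ldots C\ldots A_k}$ plus the first-derivative terms listed above, and the analogous structure in the $4$-direction; for $[\slashed{D}_3,\slashed{D}_4]$ one likewise obtains $2\sum_i\big(Y_{A_i}\underline{Y}_C - Y_C\underline{Y}_{A_i} + \underline{Z}_{A_i}Z_C - Z_{A_i}\underline{Z}_C\big)U_{A_1\ldots C\ldots A_k}$, with the $\epsilon_{A_iC}\sigma(R)$ term supplied by the curvature step.

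The main obstacle I anticipate is bookkeeping and convention-matching rather than anything conceptual: one must be scrupulous about the sign and normalisation conventions relating the null-components $\beta(R),\underline{\beta}(R),\sigma(R)$ to the Riemann tensor in the chosen frame, about the (gauge-dependent) split of the torsion into $Z$, $\underline{Z}$, $V$, and about the fact that $\slashed{D}_4$ and $\slashed{\nabla}$ carry different weights, so the correction terms are genuinely asymmetric and the cancellations among projection terms have to be tracked slot by slot. A reliable way to keep the computation honest is to verify each identity first for scalars ($k=0$) and for $1$-forms ($k=1$), where all the structural terms are already visible, and then to obtain general $k$ from the Leibniz rule --- which is precisely the origin of the sum over $i$. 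This is the content of Lemma~7.3.3 of \cite{ChristKlei}.
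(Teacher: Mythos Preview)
Your sketch is correct and is precisely the computation one would carry out to verify these identities. Note, however, that the paper does not actually give a proof of this lemma: it is quoted verbatim as Lemma~7.3.3 of \cite{ChristKlei} (with a remark correcting a sign in the $F_{34A_1\ldots A_k}$ expression), so there is nothing to compare against beyond the fact that you have correctly identified the source and outlined the standard derivation.
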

\begin{remark} There is a typo in \cite{ChristKlei} regarding the sign of the first term of the expression for $F_{34A_1 ... A_k}$.
\end{remark}

We recall the definition (\ref{subnot}) and declare that for $\Omega_i$ a basis of angular momentum vectorfields in Schwarzschild, the signature of $\slashed{\mathcal{L}}_{\Omega_i}^m u_{n_p}$ is defined to be identical to the signature of $u_{n_p}$. 
\glossary{
name={$E^{4n_l}_{3}\left(\alpha\right)$},
description={inhomogeneities in the (commuted) Bianchi equations}
}
\begin{proposition}
Let $n_l$ be an $l$-tuple of $3$'s and $4$'s, $l \geq 0$. The Bianchi equations imply the following equations for higher derivatives:
For $\boxed{\alpha}$
\begin{align} \label{a3nk}
\alpha_{3n_l} & = -2 \slashed{\mathcal{D}}^\star_2 \beta_{n_l} + E_3^{n_l} \left(\alpha\right)
\end{align}
with the error defined recursively as
\begin{align}
E_3^{n_p3} \left(\alpha\right) &= \slashed{D}_3 E^{n_p}_3 \left(\alpha\right) + \vartheta^-\left(\alpha_{3n_p} \right) tr  \underline{H} E^{n_l}_3 \left(\alpha\right) \nonumber \\  &- \vartheta^-\left(\beta_{n_l}\right) \left(\slashed{\nabla} tr \underline{H} \right) \widehat{\otimes} \beta_{n_p} - \widehat{\underline{H}} \slashed{div} \beta_{n_p}  + {}^\star \widehat{\underline{H}} \slashed{curl} \beta_{n_p} + 2\widehat{s} \left(F_3\left(\beta_{n_p}\right) \right) \nonumber \\
E_3^{n_p4} \left(\alpha\right) &= \slashed{D}_4 E^{n_p}_3 \left(\alpha\right) + \vartheta^+\left(\alpha_{3n_p}\right) tr H E^{n_p}_3 \left(\alpha\right) \nonumber \\  &- \vartheta^+\left(\beta_{n_p}\right) \left(\slashed{\nabla} tr H \right) \widehat{\otimes} \beta_{n_p} - \widehat{H} \slashed{div} \beta_{n_p}  + {}^\star \widehat{H} \slashed{curl} \beta_{n_p} + 2\widehat{s} \left(F_4\left(\beta_{n_p}\right) \right) \nonumber
\end{align}
and also
\begin{align}
\slashed{\mathcal{L}}^m_{\Omega_i} \alpha_{3n_l} & = -2 \slashed{\mathcal{D}}^\star_2 \left(\slashed{\mathcal{L}}^m_{\Omega_i} \beta_{n_l}\right) + E_3^{n_l\Omega_i^m} \left(\alpha\right)  \\ 
E_3^{n_l\Omega_i^m} \left(\alpha\right) &= \slashed{\mathcal{L}}_{\Omega_i} \left( E_3^{n_l\Omega_i^{m-1}} \left(\alpha\right)\right) + 2 \left[\slashed{\mathcal{D}}^\star_2, \slashed{\mathcal{L}}_{\Omega_i}\right]\slashed{\mathcal{L}}^{m-1}_{\Omega_i} \beta_{n_l} \nonumber
\end{align}
For $\boxed{\beta}$
\begin{align} 
\beta_{3n_l} &= \slashed{\nabla} \rho_{n_l} + {}^\star \slashed{\nabla} \sigma_{n_l} + E_3^{n_l} \left(\beta\right) \label{b3nk} \\ 
\beta_{4n_l} &= \slashed{div} \alpha_{n_l} + E_4^{n_l} \left(\beta\right) \label{b4nk}
\end{align}
\begin{align} 
 E_3^{n_p3} \left(\beta\right) &= \slashed{D}_3 E_3^{n_p} \left(\beta\right) +  \vartheta^-\left(\beta_{3n_p}\right) tr  \underline{H} E^{n_p}_3  \left(\beta\right) + F_3\left(\rho_{n_p}\right)+ {}^\star F_3\left(\sigma_{n_p}\right) \nonumber \\  &- \vartheta^-\left(\rho_{n_p}\right) \rho_{n_p} \slashed{\nabla} tr \underline{H}  - \vartheta^-\left(\rho_{n_p}\right) \sigma_{n_p} {}^\star \slashed{\nabla} tr \underline{H} - \underline{\widehat{H}}\cdot \slashed{\nabla} \rho_{n_p} - {}^\star \underline{\widehat{H}}\cdot \slashed{\nabla} \sigma_{n_p} \nonumber \\
E_3^{n_p4} \left(\beta\right) &= \slashed{D}_4 E_3^{n_p} \left(\beta\right) + \vartheta^+\left(\beta_{3n_p}\right) tr  H E^{n_p}_3  \left(\beta\right) + F_4\left(\rho_{n_p}\right)+ {}^\star F_4\left(\sigma_{n_p}\right) \nonumber \\  &- \vartheta^+\left(\rho_{n_p}\right) \rho_{n_p} \slashed{\nabla} tr H   - \vartheta^+\left(\sigma_{n_p}\right) \sigma_{n_p} {}^\star \slashed{\nabla} tr H - \widehat{H}\cdot \slashed{\nabla} \rho_{n_p} - {}^\star {\widehat{H}}\cdot \slashed{\nabla} \sigma_{n_p} \nonumber
\\ 
E_4^{n_p3} \left(\beta\right) &=  \slashed{D}_3 E_4^{n_p} \left(\beta\right) + \vartheta^-\left(\beta_{4n_p}\right) tr  H E^{n_p}_4  \left(\beta\right)  \nonumber \\
&-\vartheta^-\left(\alpha_{n_p}\right) \slashed{\nabla} tr \underline{H} \cdot \alpha_{n_p} - \widehat{\underline{H}} \cdot \slashed{\nabla} \alpha_{n_p} + tr F_3 \left(\alpha_{n_p}\right) \nonumber
 \\
E_4^{n_p4} \left(\beta\right) &= \slashed{D}_4 E_4^{n_p} \left(\beta\right) + \vartheta^+\left(\beta_{4n_p}\right) tr  H E^{n_p}_4  \left(\beta\right) \nonumber \\
&-\vartheta^+\left(\alpha_{n_p}\right) \slashed{\nabla} tr H \cdot \alpha_{n_p} - \widehat{H} \cdot \slashed{\nabla} \alpha_{n_p} + tr F_4 \left(\alpha_{n_p}\right)
\end{align}
and in addition
\begin{align}
\slashed{\mathcal{L}}^m_{\Omega_i} \beta_{3n_l} &= \slashed{\nabla} \slashed{\mathcal{L}}^m_{\Omega_i}\rho_{n_l} + {}^\star \slashed{\nabla} \slashed{\mathcal{L}}^m_{\Omega_i} \sigma_{n_l} + E_3^{n_l\Omega_i^m} \left(\beta\right)  \nonumber \\ 
\slashed{\mathcal{L}}^m_{\Omega_i} \beta_{4n_l} &= \slashed{div} \slashed{\mathcal{L}}^m_{\Omega_i} \alpha_{n_l} + E_4^{n_l\Omega_i^m} \left(\beta\right)
\end{align}
\begin{align}
E_3^{n_l\Omega_i^m} \left(\beta\right) &= \slashed{\mathcal{L}}_{\Omega_i} \left( E_3^{n_l\Omega_i^{m-1}} \left(\beta\right)\right) - \left[\slashed{\mathcal{D}}^\star_1, \slashed{\mathcal{L}}_{\Omega_i}\right]\slashed{\mathcal{L}}^{m-1}_{\Omega_i} \left(\rho,\sigma\right)_{n_l} \nonumber \\
E_4^{n_l\Omega_i^m} \left(\beta\right) &= \slashed{\mathcal{L}}_{\Omega_i} \left( E_4^{n_l\Omega_i^{m-1}} \left(\beta\right)\right) - \left[\slashed{div}, \slashed{\mathcal{L}}_{\Omega_i}\right]\slashed{\mathcal{L}}^{m-1}_{\Omega_i} \alpha_{n_l} 
\end{align}
For $\boxed{\rho}$
\begin{align}
\rho_{3n_l} &= -\slashed{div} \underline{\beta}_{n_l} + E_3^{n_l} \left(\rho\right) \label{r3nk}  \\
\rho_{4n_l} &= \slashed{div} \beta_{n_l} + E_4^{n_l} \left(\rho\right) \label{r4nk}
\end{align}
\begin{align} 
 E_3^{n_p3} \left(\rho\right) &= \slashed{D}_3 E_3^{n_p} \left(\rho\right) +  \vartheta^-\left(\rho_{3n_p}\right) tr  \underline{H} E^{n_p}_3  \left(\rho\right) - tr F_3\left(\underline{\beta}_{n_p}\right) \nonumber \\  &+\vartheta^- \left(\underline{\beta}_{n_p}\right) \underline{\beta}_{n_p} \cdot \slashed{\nabla} tr \underline{H} + \underline{\widehat{H}} \cdot \slashed{\nabla} \underline{\beta}_{n_p} \nonumber \\
  E_3^{n_p4} \left(\rho\right) &=  \slashed{D}_4 E_3^{n_p} \left(\rho\right) +  \vartheta^+\left(\rho_{3n_p}\right) tr  \underline{H} E^{n_p}_3  \left(\rho\right) - tr F_4\left(\underline{\beta}_{n_p}\right) \nonumber \\  &+ \vartheta^+\left(\underline{\beta}_{n_p}\right) \underline{\beta}_{n_p} \cdot \slashed{\nabla} tr H + \widehat{H} \cdot \slashed{\nabla} \underline{\beta}_{n_p}
\nonumber \\
 E_4^{n_p3} \left(\rho\right) &= \slashed{D}_3 E_4^{n_p} \left(\rho\right) +  \vartheta^-\left(\rho_{4n_p}\right) tr  \underline{H} E^{n_p}_4  \left(\rho\right) + tr F_3\left({\beta}_{n_p}\right) \nonumber \\  &-\vartheta^-\left(\beta_{n_p}\right) {\beta}_{n_p} \cdot \slashed{\nabla} tr \underline{H} - \underline{\widehat{H}} \cdot \slashed{\nabla} {\beta}_{n_p}
 \nonumber \\
 E_4^{n_p4} \left(\rho\right) &= \slashed{D}_4 E_4^{n_4} \left(\rho\right) + \vartheta^+\left(\rho_{4n_p}\right) tr  H E^{n_p}_4  \left(\rho\right) - tr F_4\left({\beta}_{n_p}\right) \nonumber \\  &-\vartheta^+\left(\beta_{n_p}\right) {\beta}_{n_p} \cdot \slashed{\nabla} tr H - \widehat{H} \cdot \slashed{\nabla} {\beta}_{n_p}
\end{align}
and in addition
\begin{align}
\slashed{\mathcal{L}}^m_{\Omega_i} \rho_{3n_l} &= -\slashed{div} \slashed{\mathcal{L}}^m_{\Omega_i}\underline{\beta}_{n_l} + E_3^{n_l\Omega_i^m} \left(\rho\right)  \nonumber \\ 
\slashed{\mathcal{L}}^m_{\Omega_i} \rho_{4n_l} &= \slashed{div} \slashed{\mathcal{L}}^m_{\Omega_i} \beta_{n_l} + E_4^{n_l\Omega_i^m} \left(\rho\right)
\end{align}
\begin{align}
E_3^{n_l\Omega_i^m} \left(\rho\right) &= \slashed{\mathcal{L}}_{\Omega_i} \left( E_3^{n_l\Omega_i^{m-1}} \left(\rho\right)\right) - \left[-\slashed{div}, \slashed{\mathcal{L}}_{\Omega_i}\right]\slashed{\mathcal{L}}^{m-1}_{\Omega_i} \underline{\beta}_{n_l} \nonumber \\
E_4^{n_l\Omega_i^m} \left(\rho\right) &= \slashed{\mathcal{L}}_{\Omega_i} \left( E_4^{n_l\Omega_i^{m-1}} \left(\rho\right)\right) - \left[\slashed{div}, \slashed{\mathcal{L}}_{\Omega_i}\right]\slashed{\mathcal{L}}^{m-1}_{\Omega_i} \beta_{n_l} 
\end{align}
For $\boxed{\sigma}$
\begin{align}
\sigma_{3n_l} &= -\slashed{curl} \underline{\beta}_{n_l} + E_3^{n_l} \left(\sigma\right) \label{s3nk} \\ \sigma_{4n_l} &= -\slashed{curl} \beta_{n_l} + E_4^{n_l} \left(\sigma\right) \label{s4nk}
\end{align}
\begin{align}
 E_3^{n_p3} \left(\sigma\right) &= \slashed{D}_3 E_3^{n_p} \left(\sigma\right) + \vartheta^-\left(\sigma_{3n_p}\right) tr  \underline{H} E^{n_p}_3  \left(\sigma\right) - tr F_3\left({}^\star\underline{\beta}_{n_p}\right) \nonumber \\  &-\vartheta^-\left(\underline{\beta}_{n_p}\right) {}^\star \underline{\beta}_{n_p} \cdot \slashed{\nabla} tr \underline{H} - \underline{\widehat{H}} \cdot \slashed{\nabla} {}^\star \underline{\beta}_{n_p}
  \nonumber \\
 E_3^{n_p4} \left(\sigma\right) &= \slashed{D}_4 E_3^{n_p} \left(\sigma\right) + \vartheta^+\left(\sigma_{3n_p}\right) tr  {H} E^{n_p}_3  \left(\sigma\right) - tr F_4\left({}^\star\underline{\beta}_{n_p}\right) \nonumber \\  &-\vartheta^+\left(\underline{\beta}_{n_p}\right) {}^\star \underline{\beta}_{n_p} \cdot \slashed{\nabla} tr H - {\widehat{H}} \cdot \slashed{\nabla} {}^\star \underline{\beta}_{n_p}
 \nonumber \\
 E_4^{n_p3} \left(\sigma\right) &= \slashed{D}_3 E_4^{n_p} \left(\sigma\right) + \vartheta^-\left(\sigma_{4n_p}\right) tr  \underline{H} E^{n_p}_4  \left(\sigma\right) - tr F_3\left({}^\star{\beta}_{n_p}\right) \nonumber \\  &+\vartheta^+\left(\beta_{n_p}\right) {}^\star{\beta}_{n_p} \cdot \slashed{\nabla} tr \underline{H} +\underline{\widehat{H}} \cdot \slashed{\nabla} {}^\star {\beta}_{n_p}
 \nonumber \\
 E_4^{n_p4} \left(\sigma\right) &= \slashed{D}_4 E_4^{n_4} \left(\sigma\right) +  \vartheta^+\left(\sigma_{4n_p}\right) tr  H E^{n_p}_4  \left(\sigma\right) - tr F_4\left({}^\star {\beta}_{n_p}\right) \nonumber \\  &+\vartheta^+\left(\beta_{n_p}\right) {}^\star {\beta}_{n_p} \cdot \slashed{\nabla} tr H + \widehat{H} \cdot \slashed{\nabla}{}^\star {\beta}_{n_p}
\end{align}
\begin{align}
\slashed{\mathcal{L}}^m_{\Omega_i} \sigma_{3n_l} &= -\slashed{curl} \slashed{\mathcal{L}}^m_{\Omega_i}\underline{\beta}_{n_l} + E_3^{n_l\Omega_i^m} \left(\sigma\right)  \nonumber \\ 
\slashed{\mathcal{L}}^m_{\Omega_i} \sigma_{4n_l} &= -\slashed{curl} \slashed{\mathcal{L}}^m_{\Omega_i} \beta_{n_l} + E_4^{n_l\Omega_i^m} \left(\sigma\right)
\end{align}
\begin{align}
E_3^{n_l\Omega_i^m} \left(\sigma\right) &= \slashed{\mathcal{L}}_{\Omega_i} \left( E_3^{n_l\Omega_i^{m-1}} \left(\sigma\right)\right) - \left[-\slashed{curl}, \slashed{\mathcal{L}}_{\Omega_i}\right]\slashed{\mathcal{L}}^{m-1}_{\Omega_i} \underline{\beta}_{n_l} \nonumber \\
E_4^{n_l\Omega_i^m} \left(\sigma\right) &= \slashed{\mathcal{L}}_{\Omega_i} \left( E_4^{n_l\Omega_i^{m-1}} \left(\sigma\right)\right) - \left[-\slashed{curl}, \slashed{\mathcal{L}}_{\Omega_i}\right]\slashed{\mathcal{L}}^{m-1}_{\Omega_i} \beta_{n_l} 
\end{align}
For $\boxed{\underline{\beta}}$
\begin{align}
\underline{\beta}_{3n_l} &= -\slashed{div} \underline{\alpha}_{n_l} + E_3^{n_l} \left(\underline{\beta}\right) \label{bb3nk}  \\
\underline{\beta}_{4n_l} &= -\slashed{\nabla} \rho_{n_l} - {}^\star \slashed{\nabla} \sigma_{n_l} + E_4^{n_l} \left(\underline{\beta}\right)  \label{bb4nk}
\end{align}
\begin{align}
 E_3^{n_p3} \left(\underline{\beta}\right) &= \slashed{D}_3 E_3^{n_p} \left(\underline{\beta}\right) +  \vartheta^-\left(\underline{\beta}_{3n_p}\right) tr  \underline{H} E^{n_p}_3  \left(\underline{\beta}\right) - tr F_3\left(\underline{\alpha}_{n_p}\right) \nonumber \\  &+\vartheta^-\left(\underline{\alpha}_{n_p}\right) \slashed{\nabla} tr \underline{H} \cdot \underline{\alpha}_{n_p} + \widehat{\underline{H}} \cdot \slashed{\nabla} \underline{\alpha}_{n_p} \nonumber \\
E_3^{n_p4} \left(\underline{\beta}\right) &= \slashed{D}_4 E_3^{n_p} \left(\underline{\beta}\right) +  \vartheta^+\left(\underline{\beta}_{3n_p}\right) tr {H} E^{n_p}_3  \left(\underline{\beta}\right) - tr F_4\left(\underline{\alpha}_{n_p}\right) \nonumber \\  &+\vartheta^+\left(\underline{\alpha}_{n_p}\right) \slashed{\nabla} tr {H} \cdot \underline{\alpha}_{n_p} + \widehat{{H}} \cdot \slashed{\nabla} \underline{\alpha}_{n_p} \nonumber \\
E_4^{n_p3} \left(\underline{\beta}\right) &= \slashed{D}_3 E_4^{n_p} \left(\underline{\beta}\right) +  \vartheta^-\left(\underline{\beta}_{4n_p}\right)tr  H E^{n_p}_4  \left(\underline{\beta}\right) -F_3\left(\rho_{n_p}\right)- {}^\star F_3\left(\sigma_{n_p}\right) \nonumber \\  &+ \vartheta^-\left(\rho_{n_p}\right) \rho_{n_p} \slashed{\nabla} tr \underline{H}   -  \vartheta^-\left(\sigma_{n_p}\right)\sigma_{n_p} {}^\star \slashed{\nabla} tr \underline{H} - \widehat{\underline{H}}\cdot \slashed{\nabla} \rho_{n_p} - {}^\star {\widehat{\underline{H}}}\cdot \slashed{\nabla} \sigma_{n_p} \nonumber
\\
E_4^{n_p4} \left(\underline{\beta}\right) &= \slashed{D}_4 E_4^{n_p} \left(\underline{\beta}\right) +   \vartheta^+\left(\underline{\beta}_{4n_p}\right)  tr  H E^{n_p}_4  \left(\underline{\beta}\right) + \vartheta^+\left(\rho_{n_p}\right) \rho_{n_p} \slashed{\nabla} tr H  \nonumber \\
& - \vartheta^+\left(\sigma_{n_p}\right) \sigma_{n_p} {}^\star \slashed{\nabla} tr H - \widehat{H}\cdot \slashed{\nabla} \rho_{n_p} - {}^\star {\widehat{H}}\cdot \slashed{\nabla} \sigma_{n_p} \nonumber 
\end{align}
\begin{align}
\slashed{\mathcal{L}}^m_{\Omega_i} \underline{\beta}_{3n_l} &= -\slashed{div} \slashed{\mathcal{L}}^m_{\Omega_i} \underline{\alpha}_{n_l} + E_3^{n_l\Omega_i^m} \left(\underline{\beta}\right)   \\
\slashed{\mathcal{L}}^m_{\Omega_i} \underline{\beta}_{4n_l} &= -\slashed{\nabla} \slashed{\mathcal{L}}^m_{\Omega_i} \rho_{n_l} - {}^\star \slashed{\nabla} \slashed{\mathcal{L}}^m_{\Omega_i} \sigma_{n_l} + E_4^{n_l\Omega_i^m} \left(\underline{\beta}\right)  
\end{align}
\begin{align}
E_3^{n_l\Omega_i^m} \left(\underline{\beta}\right)  &= \slashed{\mathcal{L}}^m_{\Omega_i}\left( E_3^{n_l\Omega_i^{m-1}} \left(\underline{\beta}\right)\right)  + \left[\slashed{div}, \slashed{\mathcal{L}}_{\Omega_i}\right] \slashed{\mathcal{L}}^{m-1}_{\Omega_i} \underline{\alpha}_{n_l} \\
E_4^{n_l\Omega_i^m} \left(\underline{\beta}\right)  &=\slashed{\mathcal{L}}^m_{\Omega_i} \left(E_4^{n_l\Omega_i^{m-1}} \left(\underline{\beta}\right)  \right) + \left[\slashed{\mathcal{D}}_1^\star, \slashed{\mathcal{L}}_{\Omega_i}\right]\slashed{\mathcal{L}}^{m-1}_{\Omega_i} \left(\rho,\sigma\right)_{n_l}
\end{align}
For $\boxed{\underline{\alpha}}$
\begin{align} \label{ab4nk}
\underline{\alpha}_{4n_l} &= 2 \slashed{\mathcal{D}}^\star_2 \underline{\beta}_{n_l} + E_4^{n_l} \left(\underline{\alpha}\right) 
\end{align}
\begin{align}
E_4^{3n_l} \left(\underline{\alpha}\right) &= \slashed{D}_3 E^{n_l}_4 \left(\underline{\alpha}\right) +  \vartheta^-\left(\underline{\alpha}_{4n_l}\right) tr H E^{n_l}_4 \left(\underline{\alpha}\right) \nonumber \\  &+ \vartheta^-\left(\underline{\beta}_{n_l}\right) \left(\slashed{\nabla} tr \underline{H} \right) \widehat{\otimes} \underline{\beta}_{n_l} + \widehat{\underline{H}} \slashed{div} \underline{\beta}_{n_l}  - {}^\star \widehat{\underline{H}} \slashed{curl} \underline{\beta}_{n_l} - 2\widehat{s} \left(F_3\left(\underline{\beta}_{n_l}\right) \right) \nonumber
\\
E_4^{4n_l} \left(\underline{\alpha}\right) &= \slashed{D}_4 E^{n_4}_4 \left(\underline{\alpha}\right) +  \vartheta^+\left(\underline{\alpha}_{3n_l}\right) tr  \underline{H} E^{n_l}_4 \left(\underline{\alpha}\right) \nonumber \\  &+\vartheta^+\left(\underline{\beta}_{n_l}\right) \left(\slashed{\nabla} tr {H} \right) \widehat{\otimes} \underline{\beta}_{n_l} + \widehat{{H}} \slashed{div} \underline{\beta}_{n_l}  - {}^\star \widehat{{H}} \slashed{curl} \underline{\beta}_{n_l} + 2\widehat{s} \left(F_4\left(\underline{\beta}_{n_l}\right) \right) \nonumber
\end{align}
\begin{align}
\slashed{\mathcal{L}}^m_{\Omega_i} \underline{\alpha}_{4n_l} & = +2 \slashed{\mathcal{D}}^\star_2 \left(\slashed{\mathcal{L}}^m_{\Omega_i} \underline{\beta}_{n_l}\right) + E_4^{n_l\Omega_i^m} \left(\underline{\alpha}\right)  \\ 
E_4^{n_l\Omega_i^m} \left(\underline{\alpha}\right) &= \slashed{\mathcal{L}}_{\Omega_i} \left( E_4^{n_l\Omega_i^{m-1}} \left(\underline{\alpha}\right)\right) - 2 \left[\slashed{\mathcal{D}}^\star_2, \slashed{\mathcal{L}}_{\Omega_i}\right]\slashed{\mathcal{L}}^{m-1}_{\Omega_i} \underline{\beta}_{n_l} \nonumber
\end{align}
\end{proposition}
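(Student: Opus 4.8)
The plan is to argue by induction on the length $l$ of the tuple $n_l$ (and, within each $l$, on the number $m$ of angular-momentum commutations). The base case $l=0$, $m=0$ is nothing but the first-order null-Bianchi system (\ref{Bianchi1})--(\ref{Bianchi9}), which one obtains by projecting $D^\alpha \mathcal{W}_{\alpha\beta\gamma\delta}=\mathcal{J}_{\beta\gamma\delta}$ onto the null frame $(e_1,e_2,e_3,e_4)$; the inhomogeneous pieces $\Theta(J),\underline{\Theta}(J),J_{3A4},\dots$ are precisely the null components of $\mathcal{J}$ (cf.~Proposition \ref{GNDK}).

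For the inductive step I would, given the equation at level $n_p$, apply to both sides the subscript-$3$ operation $u\mapsto \slashed{D}_3 u+\vartheta^-(u)\,tr\underline{H}\,u$ of (\ref{subnot}) to produce the equation at level $n_p3$, the subscript-$4$ operation for $n_p4$, and $\slashed{\mathcal{L}}_{\Omega_i}$ for the $\Omega_i$-case. Taking $\alpha$ as the model: starting from $\alpha_{3n_p}=-2\slashed{\mathcal{D}}^\star_2\beta_{n_p}+E_3^{n_p}(\alpha)$, one (i) commutes $\slashed{D}_3$ through $\slashed{\mathcal{D}}^\star_2$ by the first identity of Lemma \ref{commutelemma}, which supplies $-2\slashed{\mathcal{D}}^\star_2\slashed{D}_3\beta_{n_p}$ together with $2\widehat{s}(F_3(\beta_{n_p}))$ and the contractions $-\widehat{\underline{H}}\,\slashed{div}\beta_{n_p}+{}^\star\widehat{\underline{H}}\,\slashed{curl}\beta_{n_p}$ of the \emph{traceless} part of $\underline{H}$; (ii) substitutes $\slashed{D}_3\beta_{n_p}=\beta_{n_p3}-\vartheta^-(\beta_{n_p})\,tr\underline{H}\,\beta_{n_p}$ and uses the Leibniz identity $\slashed{\mathcal{D}}^\star_2(f\xi)=f\slashed{\mathcal{D}}^\star_2\xi+\tfrac12(\slashed{\nabla}f)\widehat{\otimes}\xi$ with $f=tr\underline{H}$ to generate $-\vartheta^-(\beta_{n_p})(\slashed{\nabla}tr\underline{H})\widehat{\otimes}\beta_{n_p}$; (iii) collects the three $tr\underline{H}$-proportional contributions --- the one from the $\tfrac12 tr\underline{H}\,\slashed{g}$ part of $\underline{H}$ in step (i), the one from step (ii), and the $\vartheta^-(\alpha_{3n_p})\,tr\underline{H}$ term added by the subscript operation itself --- and checks that, via the signature shift rules $s(\alpha_{3n_p})=s(\alpha_{n_p})-1$, $s(\alpha_{4n_p})=s(\alpha_{n_p})+1$ and the definition of $\vartheta^\pm$, they combine exactly into the normalized principal term $-2\slashed{\mathcal{D}}^\star_2\beta_{n_p3}$, everything else being collected into $E_3^{n_p3}(\alpha)$ with the asserted recursion. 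The remaining five families $\beta,\rho,\sigma,\underline{\beta},\underline{\alpha}$ are treated identically, each time starting from its own first-order equation, commuting the relevant angular operator ($\slashed{\nabla},{}^\star\slashed{\nabla},\slashed{div},\slashed{curl},\slashed{\mathcal{D}}^\star_1,\slashed{\mathcal{D}}^\star_2$) past $\slashed{D}_3$ or $\slashed{D}_4$ by the appropriate identity of Lemma \ref{commutelemma}, and peeling off the trace part of $H,\underline{H}$. For the $\Omega_i$-commuted versions I would use that, by the convention assigning $\slashed{\mathcal{L}}^m_{\Omega_i}u_{n_p}$ the signature of $u_{n_p}$, the operator $\slashed{\mathcal{L}}_{\Omega_i}$ commutes with the weighted subscript operations modulo terms already accounted for, so that the only genuinely new contribution is the commutator $[\slashed{\mathcal{D}},\slashed{\mathcal{L}}_{\Omega_i}]$ of the angular operator with $\slashed{\mathcal{L}}_{\Omega_i}$, appearing explicitly in the stated recursions; note that since we only ever append a derivative on the right, no reordering of $\slashed{D}_3$ and $\slashed{D}_4$ (hence no use of the $F_{34\cdots}$ identity) is needed.

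I expect the main obstacle to be exactly step (iii): verifying, for every component and for every way of appending a $3$, a $4$, or an $\Omega_i$, that the various $tr H / tr\underline{H}$-proportional terms cancel into a single normalized principal term so that nothing spurious leaks into $E^{n_l}_3,E^{n_l}_4$. This is a tedious but entirely mechanical signature-and-rank computation driven by the definition of $\vartheta^\pm$ and the coefficients in Lemma \ref{commutelemma}; it is carried out in full in \cite{ChristKlei} (Chapter 7), and the only additional point here is that our inhomogeneities $\Theta(J),\underline{\Theta}(J),J_{\cdots}$ and their iterated $\slashed{D}_3,\slashed{D}_4$-derivatives are simply propagated through the recursions without altering the structure of the principal part.
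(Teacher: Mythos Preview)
Your proposal is correct and follows exactly the approach the paper indicates: the paper does not give an explicit proof of this Proposition at all, but simply prefaces it with the sentence ``From the first order Bianchi equations of the previous section we derive the higher order Bianchi equations inductively using the following Lemma, which we quote from \cite{ChristKlei} (Lemma 7.3.3),'' states Lemma~\ref{commutelemma}, introduces the $\vartheta^\pm$ notation, and then records the outcome. Your write-up supplies precisely the inductive mechanism behind that sentence---apply the subscript-$3$/$4$ operation of (\ref{subnot}), commute $\slashed{D}_3,\slashed{D}_4$ past the angular operator via Lemma~\ref{commutelemma}, split off the trace and traceless parts of $H,\underline{H}$, and check via the signature relation $\vartheta^-(\alpha_{3n_p})=\vartheta^-(\beta_{n_p})+\tfrac12$ (and its analogues) that the $tr H,tr\underline{H}$ contributions recombine into the new normalized principal term---and correctly observes that the $F_{34}$ part of Lemma~\ref{commutelemma} is not invoked here (it enters only in Section~3.5 via the commutators $\mathcal{C}_{ab}$).
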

\subsection{More commutation formulae}
The commuted Bianchi equations in the last section are not yet of the form most useful for the multiplier estimates. The point is that we would like to write the left hand side of (\ref{a3nk}) as $\left(\alpha_{n_l}\right)_3$, so that the equation becomes an honest inhomogeneous Bianchi equation for the components $\alpha_{n_l}$ and $\beta_{n_l}$. Of course, this simply means to pushing the $3$ trough the $n_l$.

Let $u$ be any curvature component, $n_l$ a tuple of $3$'s and $4$'s and $a,b \in \{3,4\}$. We define the commutator
\begin{equation}
\mathcal{C}_{ab}\left[u_{n_l}\right] = u_{abn_l} - u_{an_lb} \, .
\end{equation}
Note that we can write both
\begin{align} \nonumber
u_{3n_l4} = u_{34n_l} - \mathcal{C}_{34}\left[u_{n_l}\right] \textrm {\ \ and \ \ }
u_{3n_l4} = u_{43n_l} - \mathcal{C}_{34}\left[u_{n_l}\right] + \left[u_{34}-u_{43}\right]_{n_l} \, .
\end{align}
Let us agree on the following convention: If the Bianchi equation for $u_{3n_l4}$ is derived using $u_{34n_l}$ of the previous section, the resulting error-term will be denoted with a tilde. On the other hand, if the equation for $u_{3n_l4}$ is derived using the Bianchi equation for $u_{43n_l}$, the error-term will acquire a hat. For the Bianchi equations for $u_{3n_l3}$ and $u_{4n_l4}$ there is no ambiguity, as they are derived using $u_{33n_l}$ and $u_{44n_l}$ respectively. In this case, the error-term is also denoted with a tilde. For instance, for $\underline{\beta}$:
\begin{align} \label{bb33}
\underline{\beta}_{3n_l3} =  -\slashed{div} \underline{\alpha}_{3n_l} + \tilde{E}_3^{3n_l} \left(\underline{\beta}\right) \textrm{ \ \ \ , \ \ \ \ } 
 \tilde{E}_3^{3n_l} \left(\underline{\beta}\right) = E_3^{3n_l}\left(\underline{\beta}\right) - \mathcal{C}_{33} \left[\underline{\beta}_{n_l}\right] \, ,
\end{align}
\begin{align} \label{bb34}
\underline{\beta}_{3n_l4} =  -\slashed{div} \underline{\alpha}_{4n_l} + \tilde{E}_3^{4n_l} \left(\underline{\beta}\right)  \textrm{ \ \ , \ \ \ } 
 \tilde{E}_3^{4n_l} \left(\underline{\beta}\right) = E_3^{4n_l}\left(\underline{\beta}\right) - \mathcal{C}_{34} \left[\underline{\beta}_{n_l}\right] \, , 
\end{align}
\begin{align} \label{bb43}
\underline{\beta}_{4n_l3} =  -\slashed{\mathcal{D}}_1^\star \left(\rho, \sigma\right)_{3n_l} + \tilde{E}_4^{3n_l} \left(\underline{\beta}\right)  \textrm{ \  , \ } 
 \tilde{E}_4^{3n_l} \left(\underline{\beta}\right) = E_4^{3n_l}\left(\underline{\beta}\right) - \mathcal{C}_{43} \left[\underline{\beta}_{n_l}\right] \, ,
\end{align}
\begin{align} \label{bb44}
\underline{\beta}_{4n_l4} = \slashed{\mathcal{D}}_1^\star\left( \rho, \sigma\right)_{4n_l} + \tilde{E}_4^{4n_l} \left(\underline{\beta}\right) \textrm{ \ \ \ , \ \ \ } 
 \tilde{E}_4^{4n_l} \left(\underline{\beta}\right) = E_4^{4n_l}\left(\underline{\beta}\right) - \mathcal{C}_{44} \left[\underline{\beta}_{n_l}\right] \, .
\end{align}
On the other hand, we could write
\begin{align} \label{bb342}
\underline{\beta}_{3n_l4} &=  \slashed{\mathcal{D}}_1^\star \left(\rho,\sigma\right)_{3n_l} + \hat{E}_4^{3n_l} \left(\underline{\beta}\right) \nonumber \\
 \hat{E}_4^{3n_l} \left(\underline{\beta}\right) &= E_4^{3n_l}\left(\underline{\beta}\right) + \left[\underline{\beta}_{34}-\underline{\beta}_{43}\right]_{n_l} - \mathcal{C}_{34} \left[\underline{\beta}_{n_l}\right]
\end{align}
\begin{align} \label{bb432}
\underline{\beta}_{4n_l3} &=  -\slashed{div} \underline{\alpha}_{4n_l} + \hat{E}_3^{4n_l} \left(\underline{\beta}\right) \nonumber \\
 \hat{E}_3^{4n_l} \left(\underline{\beta}\right) &= E_3^{4n_l}\left(\underline{\beta}\right) + \left[\underline{\beta}_{43}-\underline{\beta}_{34}\right]_{n_l} - \mathcal{C}_{43} \left[\underline{\beta}_{n_l}\right]
\end{align}
The formulae for the other components are easily derived. We also collect the following formulae
\begin{align} \label{a43}
\alpha_{43n_l}  &= \alpha_{34n_l} + \Big[\alpha \left[\vartheta^-\left(\alpha\right) \slashed{D}_4 tr \underline{H} -\vartheta^+\left(\alpha\right) \slashed{D}_3 tr H\right]  + F_{34} \left(\alpha\right)\Big]_{n_l} 
\end{align}
\begin{align} \label{b4334}
\beta_{43n_l} &=  \beta_{34n_l} +\Big[ \beta \left[\vartheta^+\left(\beta\right) \slashed{D}_3 tr H -\vartheta^-\left(\beta\right) \slashed{D}_4 tr \underline{H} \right]  + F_{34} \left(\beta\right)\Big]_{n_l}
\end{align}
\begin{align} \label{r3443}
\rho_{43n_l} = \rho_{34n_l} + \Big[\rho  \left[\vartheta^+\left(\rho\right) \slashed{D}_3 tr H - \vartheta^-\left(\rho\right) \slashed{D}_4 tr \underline{H} \right]  + F_{34} \left(\rho\right)\Big]_{n_l}
\end{align}
\begin{align} \label{s3443}
\sigma_{43n_l} &=  \sigma_{34n_l} + \Big[\sigma  \left[\vartheta^+\left(\sigma\right) \slashed{D}_3 tr H - \vartheta^-\left(\sigma\right) \slashed{D}_4 tr \underline{H} \right]  + F_{34} \left(\sigma\right)\Big]_{n_l}
\end{align}
\begin{align} \label{bb3443}
\underline{\beta}_{34n_l} &= \underline{\beta}_{43n_l} + \Big[\underline{\beta}\left[-\vartheta^+\left(\underline{\beta}\right)\slashed{D}_3 tr H + \vartheta^-\left(\underline{\beta}\right) \slashed{D}_4 tr \underline{H} \right]  + F_{43} \left(\underline{\beta}\right)\Big]_{n_l} \, .
\end{align}

\subsection{Commutation with angular momentum}
We can commute the Bianchi equations for the quantities $u_{3n_l4}$ etc. derived in the previous section with the $\Omega_i$. In this way we obtain, for $m\geq 0$ (using the convention that $\hat{E}_{\cdot}^{\cdot n_l\Omega_i^0}\left(u\right)=\hat{E}_{\cdot}^{\cdot n_l}\left(u\right)$ and  $\tilde{E}_{\cdot}^{\cdot n_l\Omega_i^0}\left(u\right)=\tilde{E}_{\cdot}^{\cdot n_l}\left(u\right)$) the formulae
\begin{align}
\left(\slashed{\mathcal{L}}_{\Omega_i}^m \alpha_{4n_l}\right)_3 = -2\slashed{\mathcal{D}}_2^\star \left(\slashed{\mathcal{L}}_{\Omega_i}^m \beta_{4n_l}\right) + \hat{E}_3^{4n_l\Omega_i^m}\left(\alpha\right)
\end{align}
\begin{align}
\hat{E}_3^{4n_l\Omega_i^m}\left(\alpha\right) = \slashed{\mathcal{L}}_{\Omega_i} \left(\hat{E}_3^{4n_l\Omega_i^{m-1}}\left(\alpha\right)\right) + 2 \left[\slashed{\mathcal{D}}_2^\star, \slashed{\mathcal{L}}_{\Omega_i}\right] \left(\slashed{\mathcal{L}}_{\Omega_i}^{m-1} \beta_{4n_l}\right) \nonumber \\ 
-  \left[\slashed{D}_3, \slashed{\mathcal{L}}_{\Omega_i}\right] \left(\slashed{\mathcal{L}}_{\Omega_i}^{m-1} \alpha_{4n_l}\right) 
-\vartheta^-\left(\alpha_{4n_l}\right) \left(\Omega_i tr \underline{H}\right) \slashed{\mathcal{L}}_{\Omega_i}^{m-1} \alpha_{4n_l}
\end{align}
For $\beta$:
\begin{align}
\left(\slashed{\mathcal{L}}_{\Omega_i}^m \beta_{4n_l}\right)_4 = \slashed{div} \left(\slashed{\mathcal{L}}_{\Omega_i}^m \alpha_{4n_l}\right) + \tilde{E}_4^{4n_l\Omega_i^m} \left(\beta\right) \nonumber \\
\tilde{E}_4^{4n_l\Omega_i^m} \left(\beta\right) = \slashed{\mathcal{L}}_{\Omega_i} \left(\tilde{E}_4^{4n_l\Omega_i^{m-1}} \left(\beta\right)\right) + \left[\slashed{div}, \slashed{\mathcal{L}}_{\Omega_i}\right] \left(\slashed{\mathcal{L}}_{\Omega_i}^{m-1} \alpha_{4n_l}\right) \nonumber \\  - \left[\slashed{D}_4, \slashed{\mathcal{L}}_{\Omega_i}\right] \left(\slashed{\mathcal{L}}_{\Omega_i}^{m-1} \beta_{4n_l}\right) 
-\vartheta^+\left(\beta_{4n_l}\right) \left(\Omega_i tr H\right) \slashed{\mathcal{L}}_{\Omega_i}^{m-1} \beta_{4n_l}
\end{align}
\begin{align}
\left(\slashed{\mathcal{L}}_{\Omega_i}^m \beta_{4n_l} \right)_3 &=  \slashed{\mathcal{D}}_1^\star \left(-\slashed{\mathcal{L}}_{\Omega_i}^m\rho_{4n_l}, \slashed{\mathcal{L}}_{\Omega_i}^m\sigma_{4n_l}\right) + \tilde{E}_3^{4n_l\Omega_i^m} \left(\beta\right) 
\end{align}
\begin{align}
\tilde{E}_3^{4n_l\Omega_i^m}\left(\beta\right) = \slashed{\mathcal{L}}_{\Omega_i} \left(\tilde{E}_3^{4n_l \Omega_i^{m-1}}\left(\beta\right)\right) + 2 \left[\slashed{\mathcal{D}}_1^\star, \slashed{\mathcal{L}}_{\Omega_i}\right]  \left(-\slashed{\mathcal{L}}_{\Omega_i}^{m-1}\rho_{4n_l}, \slashed{\mathcal{L}}_{\Omega_i}^{m-1}\sigma_{4n_l}\right) \nonumber \\ 
-  \left[\slashed{D}_3, \slashed{\mathcal{L}}_{\Omega_i}\right] \left(\slashed{\mathcal{L}}_{\Omega_i}^{m-1} \beta_{4n_l}\right) 
-\vartheta^-\left(\beta_{4n_l}\right) \left(\Omega_i tr \underline{H}\right) \slashed{\mathcal{L}}_{\Omega_i}^{m-1} \beta_{4n_l} \nonumber
\end{align}
For $\rho$ and $\sigma$
\begin{align}
\left(\slashed{\mathcal{L}}_{\Omega_i}^m \rho_{4n_l}\right)_4 = \slashed{div} \left(\slashed{\mathcal{L}}_{\Omega_i}^m \beta_{4n_l}\right) + \tilde{E}_4^{4n_l\Omega_i^m} \left(\rho\right) \nonumber \\
\tilde{E}_4^{4n_l\Omega_i^m} \left(\rho\right) =\slashed{\mathcal{L}}_{\Omega_i} \left(\tilde{E}_4^{4n_l \Omega_i^{m-1}} \left(\rho\right)\right) + \left[\slashed{div}, \slashed{\mathcal{L}}_{\Omega_i}\right] \left(\slashed{\mathcal{L}}_{\Omega_i}^{m-1} \beta_{4n_l}\right) \nonumber \\  - \left[\slashed{D}_4, \slashed{\mathcal{L}}_{\Omega_i}\right] \left(\slashed{\mathcal{L}}_{\Omega_i}^{m-1} \rho_{n_l}\right) 
-\vartheta^+\left(\rho_{4n_l}\right) \left(\Omega_i tr H\right) \slashed{\mathcal{L}}_{\Omega_i}^{m-1} \rho_{4n_l}
\end{align}
\begin{align}
\left(\slashed{\mathcal{L}}_{\Omega_i}^m \sigma_{4n_l}\right)_4 = -\slashed{curl} \left(\slashed{\mathcal{L}}_{\Omega_i}^m \beta_{4n_l}\right) + \tilde{E}_4^{4n_l\Omega_i^m} \left(\sigma\right) \nonumber \\
\tilde{E}_4^{4n_l\Omega_i^m} \left(\sigma\right) = \slashed{\mathcal{L}}_{\Omega_i} \left(\tilde{E}_4^{4n_l \Omega_i^{m-1}} \left(\sigma\right)\right) + \left[-\slashed{curl}, \slashed{\mathcal{L}}_{\Omega_i}\right] \left(\slashed{\mathcal{L}}_{\Omega_i}^{m-1} \beta_{4n_l}\right) \nonumber \\  -  \left[\slashed{D}_4, \slashed{\mathcal{L}}_{\Omega_i}\right] \left(\slashed{\mathcal{L}}_{\Omega_i}^{m-1} \sigma_{n_l}\right) 
-\vartheta^+\left(\sigma_{4n_l}\right) \left(\Omega_i tr H\right) \slashed{\mathcal{L}}_{\Omega_i}^{m-1} \sigma_{4n_l}
\end{align}
\begin{align}
\left(\slashed{\mathcal{L}}_{\Omega_i}^m \rho_{4n_l} \right)_3 &=  -\slashed{div} \left(\slashed{\mathcal{L}}_{\Omega_i}^m \underline{\beta}_{4n_l}\right) + \hat{E}_3^{4n_l\Omega_i^m} \left(\rho\right) 
\end{align}
\begin{align}
\hat{E}_3^{4n_l\Omega_i^m}\left(\rho\right) = \slashed{\mathcal{L}}_{\Omega_i} \left(\hat{E}_3^{4n_l \Omega_i^{m-1}}\left(\rho\right)\right) + \left[-\slashed{div}, \slashed{\mathcal{L}}_{\Omega_i}\right] \left( \slashed{\mathcal{L}}_{\Omega_i}^{m-1}\underline{\beta}_{4n_l}\right) \nonumber \\ 
- \left[\slashed{D}_3, \slashed{\mathcal{L}}_{\Omega_i}\right] \left(\slashed{\mathcal{L}}_{\Omega_i}^{m-1} \rho_{4n_l}\right) 
-\vartheta^-\left(\rho_{4n_l}\right) \left(\Omega_i tr \underline{H}\right) \slashed{\mathcal{L}}_{\Omega_i}^{m-1} \rho_{4n_l}
\end{align}
\begin{align}
\left(\slashed{\mathcal{L}}_{\Omega_i}^m \sigma_{4n_l} \right)_3 &=  -\slashed{curl} \left(\slashed{\mathcal{L}}_{\Omega_i}^m\underline{\beta}_{4n_l} \right) + \hat{E}_3^{4n_l\Omega_i^m} \left(\sigma\right) 
\end{align}
\begin{align}
\hat{E}_3^{4n_l\Omega_i^m}\left(\sigma\right) = \slashed{\mathcal{L}}_{\Omega_i} \left(\hat{E}_3^{4n_l \Omega_i^{m-1}}\left(\sigma\right)\right) + \left[-\slashed{curl}, \slashed{\mathcal{L}}_{\Omega_i}\right] \left( \slashed{\mathcal{L}}_{\Omega_i}^{m-1}\underline{\beta}_{4n_l}\right) \nonumber \\ 
- \left[\slashed{D}_3, \Omega_i\right] \left(\slashed{\mathcal{L}}_{\Omega_i}^{m-1} \sigma_{4n_l}\right) 
-\vartheta^-\left(\sigma_{4n_l}\right) \left(\Omega_i tr \underline{H}\right) \slashed{\mathcal{L}}_{\Omega_i}^{m-1} \sigma_{4n_l}
\end{align}
For $\underline{\beta}$
\begin{align}
\left(\slashed{\mathcal{L}}_{\Omega_i}^m \underline{\beta}_{4n_l}\right)_4 = \slashed{\mathcal{D}}_1^\star\left( \slashed{\mathcal{L}}_{\Omega_i}^m \rho_{4n_l}, \slashed{\mathcal{L}}_{\Omega_i}^m \sigma_{4n_l}\right)  + \tilde{E}_4^{4n_l\Omega_i^m} \left(\underline{\beta}\right) \nonumber \\
\tilde{E}_4^{4n_l\Omega_i^m} \left(\beta\right) = \slashed{\mathcal{L}}_{\Omega_i} \left(\tilde{E}_4^{4n_l\slashed{\mathcal{L}}_{\Omega_i}^{m-1}} \left(\underline{\beta}\right)\right) + \left[\slashed{div}, \slashed{\mathcal{L}}_{\Omega_i}\right] \left(\slashed{\mathcal{L}}_{\Omega_i}^{m-1} \left(\rho_{4n_l}, \sigma_{4n_l}\right) \right) \nonumber \\  - \left[\slashed{D}_4, \slashed{\mathcal{L}}_{\Omega_i}\right] \left(\slashed{\mathcal{L}}_{\Omega_i}^{m-1} \underline{\beta}_{4n_l}\right) 
-\vartheta^+\left(\underline{\beta}_{4n_l}\right) \left(\Omega_i tr H\right) \slashed{\mathcal{L}}_{\Omega_i}^{m-1} \underline{\beta}_{4n_l}
\end{align}
\begin{align}
\left(\slashed{\mathcal{L}}_{\Omega_i}^m \underline{\beta}_{4n_l} \right)_3 &=  -\slashed{div} \left(\slashed{\mathcal{L}}_{\Omega_i}^m \underline{\alpha}_{4n_l}\right) + \hat{E}_3^{4n_l\Omega_i^m} \left(\underline{\beta}\right) 
\end{align}
\begin{align}
\hat{E}_3^{4n_l\Omega_i^m}\left(\underline{\beta}\right) = \slashed{\mathcal{L}}_{\Omega_i} \left(\hat{E}_3^{4n_l \Omega_i^{m-1}}\left(\underline{\beta}\right)\right) + \left[-\slashed{div}, \slashed{\mathcal{L}}_{\Omega_i}\right] \left( \slashed{\mathcal{L}}_{\Omega_i}^{m-1}\underline{\alpha}_{4n_l}\right) \nonumber \\ 
- \left[\slashed{D}_3, \slashed{\mathcal{L}}_{\Omega_i}\right] \left(\slashed{\mathcal{L}}_{\Omega_i}^{m-1} \underline{\beta}_{4n_l}\right) 
-\vartheta^-\left(\underline{\beta}_{4n_l}\right) \left(\Omega_i tr \underline{H}\right) \slashed{\mathcal{L}}_{\Omega_i}^{m-1} \underline{\beta}_{4n_l}
\end{align}
\begin{align}
\left(\slashed{\mathcal{L}}_{\Omega_i}^m \underline{\beta}_{3n_l} \right)_4 &= - \slashed{div} \left( \slashed{\mathcal{L}}_{\Omega_i}^m \underline{\alpha}_{4n_l}\right) + \tilde{E}_3^{4n_l\Omega_i^m} \left(\underline{\beta}\right) 
\end{align}
\begin{align}
\tilde{E}_3^{4n_l\Omega_i^m}\left(\underline{\beta}\right) = \slashed{\mathcal{L}}_{\Omega_i} \left(\tilde{E}_3^{4n_l{\Omega_i}^{m-1}}\left(\underline{\beta}\right)\right) + \left[-\slashed{div}, \slashed{\mathcal{L}}_{\Omega_i}\right] \left( \slashed{\mathcal{L}}_{\Omega_i}^{m-1}\underline{\alpha}_{4n_l}\right) \nonumber \\ 
- \left[\slashed{D}_4, \slashed{\mathcal{L}}_{\Omega_i}\right] \left(\slashed{\mathcal{L}}_{\Omega_i}^{m-1} \underline{\beta}_{3n_l}\right) 
-\vartheta^+\left(\underline{\beta}_{3n_l}\right) \left(\slashed{\mathcal{L}}_{\Omega_i} tr {H}\right) \slashed{\mathcal{L}}_{\Omega_i}^{m-1} \underline{\beta}_{4n_l}
\end{align}
\begin{align}
\left(\slashed{\mathcal{L}}_{\Omega_i}^m \underline{\beta}_{3n_l} \right)_3 &= - \slashed{div} \left( \slashed{\mathcal{L}}_{\Omega_i}^m \underline{\alpha}_{3n_l}\right) + \tilde{E}_3^{3n_l\Omega_i^m} \left(\underline{\beta}\right) 
\end{align}
\begin{align}
\tilde{E}_3^{3n_l\Omega_i^m}\left(\underline{\beta}\right) = \slashed{\mathcal{L}}_{\Omega_i} \left(\tilde{E}_3^{3n_l{\Omega_i}^{m-1}}\left(\underline{\beta}\right)\right) + \left[-\slashed{div}, \slashed{\mathcal{L}}_{\Omega_i}\right] \left( \slashed{\mathcal{L}}_{\Omega_i}^{m-1}\underline{\alpha}_{3n_l}\right) \nonumber \\ 
- \left[\slashed{D}_3, \slashed{\mathcal{L}}_{\Omega_i}\right] \left(\slashed{\mathcal{L}}_{\Omega_i}^{m-1} \underline{\beta}_{3n_l}\right) 
-\vartheta^-\left(\underline{\beta}_{3n_l}\right) \left(\slashed{\mathcal{L}}_{\Omega_i} tr \underline{H}\right) \slashed{\mathcal{L}}_{\Omega_i}^{m-1} \underline{\beta}_{3n_l}
\end{align}
For $\underline{\alpha}$
\begin{align}
\left(\slashed{\mathcal{L}}_{\Omega_i}^m \underline{\alpha}_{4n_l}\right)_4 = 2\slashed{\mathcal{D}}_2^\star \left(\slashed{\mathcal{L}}_{\Omega_i}^m \underline{\beta}_{4n_l} \right) + \tilde{E}_4^{4n_l\Omega_i^m} \left(\underline{\alpha}\right) \nonumber \\
\tilde{E}_4^{4n_l\Omega_i^m} \left(\alpha\right) = \slashed{\mathcal{L}}_{\Omega_i} \left(\tilde{E}_4^{4n_l \Omega_i^{m-1}} \left(\underline{\alpha}\right)\right) + 2 \left[\slashed{\mathcal{D}}_2^\star, \slashed{\mathcal{L}}_{\Omega_i}\right] \left(\slashed{\mathcal{L}}_{\Omega_i}^{m-1} \underline{\beta}_{4n_l} \right) \nonumber \\  - \left[\slashed{D}_4, \slashed{\mathcal{L}}_{\Omega_i}\right] \left(\slashed{\mathcal{L}}_{\Omega_i}^{m-1} \underline{\alpha}_{4n_l}\right) 
-\vartheta^+\left(\underline{\alpha}_{4n_l}\right) \left(\Omega_i tr H\right) \slashed{\mathcal{L}}_{\Omega_i}^{m-1} \underline{\alpha}_{4n_l}
\end{align}
\begin{align}
\left(\slashed{\mathcal{L}}_{\Omega_i}^m \underline{\alpha}_{4n_l} \right)_3 &= 2\slashed{\mathcal{D}}_2^\star \left(\slashed{\mathcal{L}}_{\Omega_i}^m \underline{\beta}_{3n_l} \right) + \tilde{E}_4^{3n_l\Omega_i^m} \left(\underline{\alpha}\right) 
\end{align}
\begin{align}
 \tilde{E}_4^{3n_l\Omega_i^m} \left(\underline{\alpha}\right)  =\slashed{\mathcal{L}}_{\Omega_i} \left( \tilde{E}_4^{3n_l{\Omega_i}^{m-1}} \left(\underline{\alpha}\right) \right) + 2\left[\slashed{\mathcal{D}}_2^\star, \Omega_i\right] \left( \slashed{\mathcal{L}}_{\Omega_i}^{m-1}\underline{\beta}_{3n_l}\right) \nonumber \\ 
- \left[\slashed{D}_3,\slashed{\mathcal{L}}_{\Omega_i}\right] \left(\slashed{\mathcal{L}}_{\Omega_i}^{m-1} \underline{\alpha}_{4n_l}\right) 
-\vartheta^-\left(\underline{\alpha}_{4n_l}\right) \left(\Omega_i tr \underline{H}\right) \slashed{\mathcal{L}}_{\Omega_i}^{m-1} \underline{\alpha}_{4n_l}
\end{align}
\begin{align}
\left(\slashed{\mathcal{L}}_{\Omega_i}^m \underline{\alpha}_{3n_l} \right)_4 &= 2\slashed{\mathcal{D}}_2^\star \left(\slashed{\mathcal{L}}_{\Omega_i}^m \underline{\beta}_{3n_l} \right) + \hat{E}_4^{3n_l\Omega_i^m} \left(\underline{\alpha}\right) 
\end{align}
\begin{align}
 \hat{E}_4^{3n_l\Omega_i^m} \left(\underline{\alpha}\right)  = \slashed{\mathcal{L}}_{\Omega_i} \left( \hat{E}_4^{3n_l\slashed{\mathcal{L}}_{\Omega_i}^{m-1}} \left(\underline{\alpha}\right) \right) + 2\left[\slashed{\mathcal{D}}_2^\star, \slashed{\mathcal{L}}_{\Omega_i}\right] \left( \slashed{\mathcal{L}}_{\Omega_i}^{m-1}\underline{\beta}_{3n_l}\right) \nonumber \\ 
- \left[\slashed{D}_4, \slashed{\mathcal{L}}_{\Omega_i}\right] \left(\left(\slashed{\mathcal{L}}_{\Omega_i}\right)^{m-1} \underline{\alpha}_{3n_l}\right) 
-\vartheta^+\left(\underline{\alpha}_{3n_l}\right) \left(\slashed{\mathcal{L}}_{\Omega_i} tr {H}\right) \slashed{\mathcal{L}}_{\Omega_i}^{m-1} \underline{\alpha}_{3n_l}
\end{align}
\subsection{Commutation properties for null decomposition and Lie-derivatives}
We also need to relate estimates for the null components of the Weyl-field
$\widehat{\slashed{\mathcal{L}}}_{T} W$ to ``slashed" derivatives of the null-curvature components. This relation is easily inferred from Proposition 7.3.1 of \cite{ChristKlei}:
\begin{lemma} \label{LTTL}
Let $u$ be any null-component of a Weyl-tensor $W$. Then
\begin{align}
u \left( \widehat{\mathcal{L}}_T W\right) \equiv \widehat{\slashed{\mathcal{L}}}_T u  \nonumber
\end{align}
where $\equiv$ denotes equality up to lower order terms of the form
\begin{equation}
\left(\alpha, \underline{\alpha}, \beta, \underline{\beta}, \rho, \sigma \right) \cdot \left(\textrm{decaying RRC, ${}^{(T)}\pi$} \right) \, . \nonumber
\end{equation}
Moreover, if $tr {}^{(T)}\pi = 0$ (as is assumed in the context of the ultimately Schwarzschildean assumption), the $\rho$-term is absent and the lower order terms are quadratically decaying.
\end{lemma}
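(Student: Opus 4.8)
The plan is to obtain the statement by specialising Proposition~7.3.1 of \cite{ChristKlei} to the vectorfield $X=T$. That proposition, applied to a Weyl field $W$ and an arbitrary vectorfield $X$, expresses each null component $u(\widehat{\mathcal{L}}_X W)$ as the corresponding \emph{projected} modified Lie derivative $\widehat{\slashed{\mathcal{L}}}_X u$ plus correction terms of two kinds: (i) contractions of the null components of the deformation tensor ${}^{(X)}\pi$ — in the decomposition $\bigl({}^{(X)}\mathbf{i},{}^{(X)}\mathbf{j},{}^{(X)}\mathbf{m},{}^{(X)}\underline{\mathbf{m}},{}^{(X)}\mathbf{n},{}^{(X)}\underline{\mathbf{n}}\bigr)$ together with $tr\,{}^{(X)}\pi$, and of certain first derivatives thereof — with null components of $W$; and (ii) terms in which the discrepancies between $D_3,D_4$ applied to the $e_3,e_4$-components of $X$ (and the Ricci coefficients of the frame) and their Schwarzschild values multiply null components of $W$. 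First I would simply record this expansion with $X=T$, keeping precise track of which curvature component multiplies which factor.

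The first assertion then follows by a decay count, organised around the observation that for the \emph{exact} Schwarzschild metric $g^M$ of section~\ref{geoSchw} the field $T=\partial_{t^\star}$ is Killing: there ${}^{(T)}\pi\equiv 0$, hence $\widehat{\mathcal{L}}_T=\mathcal{L}_T$ and $\widehat{\slashed{\mathcal{L}}}_T=\slashed{\mathcal{L}}_T$, and the null decomposition of $\mathcal{L}_T W$ reduces identically to $\slashed{\mathcal{L}}_T$ of the null components; moreover $p=k_\chi^-$, $q=k_\chi^+/k_\chi$ are then functions of $r$ alone with $T(p)=T(q)=e_A(p)=e_A(q)=0$. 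Consequently every correction term in the expansion vanishes in exact Schwarzschild, so each is built entirely out of quantities measuring the deviation from Schwarzschild: the null components (and derivatives) of ${}^{(T)}\pi$, the differences $\mathfrak{R}-\mathfrak{R}_{SS}$ and $\widehat{\underline H}$, the frame mismatch of Definition~\ref{ultS}, and the quantities $T(p),T(q),e_A(p),e_A(q),\slashed{D}_{3,4}[p-k_\chi^-],\slashed{D}_{3,4}[q-k_\chi^+/k_\chi]$. By Definitions~\ref{ultS}, \ref{RRCapproach} and \ref{ultimateKilling} each of these factors is $O(\epsilon/r^2)$ globally and $O(\epsilon (t^\star)^{-5/4})$ in the interior, so every correction term is of the claimed schematic form $(\alpha,\underline\alpha,\beta,\underline\beta,\rho,\sigma)\cdot(\text{decaying RRC},{}^{(T)}\pi)$.

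For the refinement, recall from section~\ref{abrsabbb} that in Schwarzschild the only non-vanishing null curvature component is $\rho=\overline\rho=-\tfrac{2M}{r^3}$; thus $\alpha,\underline\alpha,\beta,\underline\beta,\sigma$ and $\rho-\overline\rho$ all tend to zero, and among the curvature factors above only a bare $\rho$ fails to decay. It therefore suffices to check that $\rho$ (equivalently its non-decaying part $\overline\rho$) enters the correction terms \emph{only} through the trace $tr\,{}^{(T)}\pi$. This is visible from the definition of $\widehat{\mathcal{L}}_X W$ in section~\ref{commute}: the trace adjustment $\tfrac{3}{8}\,tr\,{}^{(X)}\pi\,W$ contributes precisely a term $\sim tr\,{}^{(T)}\pi\cdot\rho$ in the $\rho$-sector, while in $\mathcal{L}_X W$ and in $-\tfrac12\,{}^{(X)}[W]$ the only contractions that could otherwise pair $\overline\rho$ with a traceless component of ${}^{(T)}\widehat\pi$ are exactly the ones absorbed into the definition of $\widehat{\slashed{\mathcal{L}}}_T u$ (its own trace-type adjustment). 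Hence, under the standing hypothesis $tr\,{}^{(T)}\pi=0$ of Definition~\ref{ultS}, no $\rho$-term survives, and each remaining correction is a product of a decaying curvature component with a decaying Ricci coefficient or ${}^{(T)}\widehat\pi$-component, i.e.\ quadratically decaying.

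I expect the main obstacle to be precisely the bookkeeping in the last paragraph: one must write out the six null components of $\widehat{\mathcal{L}}_T W$ and of $\widehat{\slashed{\mathcal{L}}}_T u$ in full (following Chapter~7 of \cite{ChristKlei}), substitute the null decomposition of ${}^{(T)}\pi$ and the explicit Schwarzschild Ricci coefficients of section~\ref{geoSchw}, and verify term by term that every occurrence of the non-decaying $\overline\rho$ carries a factor of $tr\,{}^{(T)}\pi$. The remaining ingredients — the reduction to the vanishing of the corrections in exact Schwarzschild and the pointwise decay estimates for the deviation quantities — are routine consequences of Definitions~\ref{ultS}, \ref{RRCapproach} and \ref{ultimateKilling}.
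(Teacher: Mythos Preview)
Your proposal is correct and follows precisely the paper's approach: the paper simply states that the lemma is ``easily inferred from Proposition 7.3.1 of \cite{ChristKlei}'' without further elaboration, and your sketch fills in exactly the expected details---specialising that proposition to $X=T$, observing that all correction terms vanish in exact Schwarzschild (so are built from decaying quantities), and tracking that the non-decaying $\overline\rho$ can only enter via $tr\,{}^{(T)}\pi$. Your honest flagging of the bookkeeping in the last step (verifying term by term that $\overline\rho$ always carries a $tr\,{}^{(T)}\pi$ factor) is appropriate; this is indeed the only place requiring genuine care, and the paper itself does not spell it out.
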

\subsection{Null components of the Bel-Robinson tensor} \label{ncbelrob}
We collect the components of the Bel-Robinson tensor in the null-frame:
\begin{equation}
 Q_{3333} = 2|\underline{\alpha}|^2 \textrm{ \ \ \ \ \  , \ \ \ \ \ }  Q_{4444} = 2|\alpha|^2 \, , \nonumber 
\end{equation}
\begin{equation}
 Q_{3334} = 4|\underline{\beta}|^2 \textrm{ \ \ \ \ \ \ , \ \ \ \ } Q_{3444} = 4|{\beta}|^2 \, , \nonumber
\end{equation}
\begin{equation}
 Q_{3344} = 4\left(\rho^2 + \sigma^2\right) \, , \nonumber
\end{equation}
\begin{equation}
 Q_{A444} = 4\alpha_{AB} \beta_B \textrm{ \ \ \ \ \ , \  \ \ \ \ } Q_{A333} = -4\underline{\alpha}_{AB} \underline{\beta}_B \, , \nonumber
\end{equation}
\begin{equation}
Q_{A344} = 4\rho \beta_A - 4 \sigma^\star \beta_A  \textrm{ \ \ \ \ \ , \  \ \ \ \ } Q_{A433} = -4\rho \underline{\beta}_A - 4\sigma^\star \underline{\beta}_A \, , \nonumber
\end{equation}
\begin{equation}
Q_{AB44} = 2 \delta_{AB} |\beta|^2 + 2\rho \alpha_{AB} - 2 \sigma^\star \alpha_{AB} \, , \nonumber
\end{equation}
\begin{equation}
Q_{AB33} = 2 \delta_{AB} |\underline{\beta}|^2 + 2\rho \underline{\alpha}_{AB} + 2 \sigma^\star \underline{\alpha}_{AB} \, , \nonumber
\end{equation}
\begin{equation}
Q_{AB34} = 2 \left(\delta_{AB} \beta \cdot \underline{\beta} - \beta_A \underline{\beta}_B - \beta_B \underline{\beta}_A \right) + 2 \delta_{AB} \left(\rho^2 + \sigma^2\right)  \, . \nonumber
\end{equation}

\subsection{The null-structure equations} \label{nseq}
We collect the null structure equations (Proposition 7.4.1 of \cite{ChristKlei}).
\begin{align} \label{Hb3}
\slashed{D}_3 \widehat{\underline{H}}= + tr \underline{H} \widehat{\underline{H}} = -2 \slashed{\mathcal{D}}_2^\star \underline{Y} - 2\underline{\Omega} \widehat{\underline{H}}+ \left(\left(Z+\underline{Z} - 2V\right) \widehat{\otimes} \underline{Y}\right) - \underline{\alpha}
\end{align}
\begin{align} \label{trHb3}
\slashed{D}_3 \left(tr \underline{H}\right) + \frac{1}{2}\left(tr \underline{H}\right)^2 = 2 \slashed{div} \underline{Y} - 2 \underline{\Omega}tr \underline{H} + 2 \underline{Y} \cdot \left(Z+\underline{Z} - 2V \right) - \widehat{\underline{H}} \cdot  \widehat{\underline{H}}
\end{align}
\begin{align} \label{curlYb}
\slashed{curl} \underline{Y} = \underline{Y} \wedge  \left(Z+\underline{Z} - 2V\right)
\end{align}
\begin{align} \label{Hb4}
\slashed{D}_4 \widehat{\underline{H}} + tr {H} \widehat{\underline{H}} = -2 \slashed{\mathcal{D}}_2^\star \underline{Z} + 2 \Omega \widehat{\underline{H}} - \frac{1}{2} tr \underline{H} \widehat{H} + \left(Y \widehat{\otimes} \underline{Y}\right) + \left(\underline{Z} \widehat{\otimes} \underline{Z}\right)
\end{align}
\begin{align} \label{trHb4}
\slashed{D}_4 \left(tr \underline{H}\right) + \frac{1}{2} tr H \left(tr \underline{H}\right) \nonumber \\ = 2 \slashed{div} \underline{Z} + 2 \Omega tr \underline{H} - \widehat{H} \cdot  \widehat{\underline{H}} + 2\left(Y \cdot \underline{Y} + \underline{Z} \cdot \underline{Z}\right) + 2\rho
\end{align}
\begin{align} \label{curlZb}
\slashed{curl} \underline{Z} = \frac{1}{2} \widehat{H} \wedge  \widehat{\underline{H}} - Y \wedge\underline{Y} - \sigma
\end{align}
\begin{align} \label{divHb}
\left(\slashed{div} \underline{H}\right)_A - V_B \underline{H}_{AB} = \slashed{\nabla}_A tr \underline{H} - V_A tr \underline{H} + \underline{\beta}_A
\end{align}
\begin{align} \label{divH}
\left(\slashed{div} H\right)_A + V_B {H}_{AB} = \slashed{\nabla}_A tr {H} + V_A tr {H} + {\beta}_A
\end{align}
\begin{align} \label{Gauss}
K = -\frac{1}{4} tr H tr \underline{H} + \frac{1}{2} \widehat{H} \cdot  \widehat{\underline{H}} - \rho
\end{align}
\begin{align} \label{V3}
\slashed{D}_3 V_A = -2 \slashed{\nabla}_A \underline{\Omega} - \underline{H}_{AB} \left(V_B + Z_B\right) + 2\underline{\Omega} \left(V_A - Z_A\right) + H_{AB} \underline{Y}_B + 2\Omega \underline{Y}_A - \underline{\beta}_A
\end{align}
\begin{align} \label{V4}
\slashed{D}_4 V_A = 2 \slashed{\nabla}_A {\Omega} - {H}_{AB} \left(-V_B + \underline{Z}_B\right) + 2{\Omega} \left(V_A + \underline{Z}_A\right) - \underline{H}_{AB} {Y}_B - 2\underline{\Omega} Y_A - \beta_A
\end{align}
\begin{align} \label{H3}
\slashed{D}_3 \widehat{{H}} + \frac{1}{2} tr \underline{H} \widehat{{H}} = -2 \slashed{\mathcal{D}}_2^\star Z + 2\underline{\Omega} \widehat{{H}}  - \frac{1}{2} tr {H} \underline{\widehat{H}}+ \left( \underline{Y} \widehat{\otimes} Y\right) + \left({Z} \widehat{\otimes} {Z}\right)
\end{align}
\begin{align} \label{trH3}
\slashed{D}_3 \left(tr {H}\right) + \frac{1}{2}\left(tr \underline{H}\right) tr H = \nonumber \\ 2 \slashed{div} \underline{Z} + 2 \underline{\Omega} \left(tr {H}\right) - \widehat{H} \cdot  \widehat{\underline{H}} + 2\left(Y \cdot \underline{Y} + Z \cdot {Z}\right) + 2\rho 
\end{align}
\begin{align} \label{H4}
\slashed{D}_4 \widehat{{H}} + tr {H} \widehat{{H}} = -2 \slashed{\mathcal{D}}_2^\star Y - 2 \Omega \widehat{{H}} + \left(\left(Z+\underline{Z} + 2V\right) \widehat{\otimes} {Y}\right) - {\alpha}
\end{align}
\begin{align} \label{trH4}
\slashed{D}_4 \left(tr {H}\right) + \frac{1}{2} \left(tr {H}\right)^2 = 2 \slashed{div} Y - 2 \Omega tr {H} + 2 {Y} \cdot \left(Z+\underline{Z} + 2V \right) - \widehat{{H}} \cdot  \widehat{H}
\end{align}
\begin{align} \label{Yb4}
\slashed{D}_4 \underline{Y}_A - \slashed{D}_3 \underline{Z}_A = 4 \Omega \underline{Y}_A + \underline{H}_{AB} \left(\underline{Z}_B - Z_B\right) - \underline{\beta}_A
\end{align}
\begin{align} \label{Y3}
\slashed{D}_3 {Y}_A - \slashed{D}_4 {Z}_A = 4 \underline{\Omega} {Y}_A + {H}_{AB} \left(Z_B -\underline{Z}_B\right) + {\beta}_A
\end{align}
\begin{align} \label{O43}
\slashed{D}_4 \underline{\Omega} + \slashed{D}_3 \Omega = Y \cdot \underline{Y} + V \cdot \left(Z - \underline{Z}\right) - Z \cdot \underline{Z} +4 \Omega \underline{\Omega} + \rho
\end{align}

\section{The energies} \label{norms}
We define the following energies for $W$:
\begin{align}
\|W\|^2 := \|\alpha\|^2 + \|\underline{\alpha}\|^2 + \|\beta\|^2 + \|\underline{\beta}\|^2 + |\rho|^2 + |\sigma |^2 \, .
\end{align}
\begin{align}
\|\tilde{W}\|^2 := \|\alpha\|^2 + \|\underline{\alpha}\|^2 + \|\beta\|^2 + \|\underline{\beta}\|^2 + |\rho -\overline{\rho}|^2 + |\sigma |^2 \, .
\end{align}
Recall $\mathcal{D} = \{\slashed{D}_3, \slashed{D}_4, \slashed{\nabla}\}$.
We define the higher order norms
\begin{align} \label{highernorm}
\|\mathcal{D}^k\tilde{W}\|^2 :=  \|\mathcal{D}^k\alpha\|^2 + \|\mathcal{D}^k\underline{\alpha}\|^2 + \|\mathcal{D}^k\beta\|^2 + \|\mathcal{D}^n\underline{\beta}\|^2 \nonumber \\ + |\mathcal{D}^k \left(\rho -\overline{\rho}\right)|^2 + |\mathcal{D}^k\left(\sigma \right)|^2 \, ,
\end{align}
where $\mathcal{D}^k$ denotes any $k$-tupel of operators from $\mathcal{D}$. We avoid ambiguities regarding the ordering of derivatives by understanding the definition as including all permutations of the $k$ derivatives applied.

\subsection{The master energies} \label{master}
We define the zeroth order energies
\begin{align}
\mathbb{E}^{0}_{(0,p_2,p_3,p_4,p_5,0,0)} \left[W\right] \left( \tilde{\Sigma}_{\tau}\right) = \int_{\tilde{\Sigma}_{\tau} \cap \{ r \leq R\}} \| \tilde{W} \|^2 \nonumber \\ + \int_{N_{out}\left(S^2_{\tau,R}\right)} dv d\omega \ \Big[r^{p_2} \| \alpha\|^2 + r^{p_3}\|  \beta\|^2  + r^{p_4}  \| \widehat{\rho}, \sigma\|^2  + r^{p_5} \| \underline{\beta}\|^2 \Big] \, ,
\end{align}
\begin{align}
\mathbb{E}^{0}_{(0,0, p_3,p_4,p_5,p_6)} \left[W\right] \left(\mathcal{I}_{\tau_1}^{\tau_2}\right) =\nonumber \\ = \int_{\mathcal{I}_{\tau_1}^{\tau_2}} du d\omega \  \Big[ r^{p_3}\|  \beta\|^2 + r^{p_4}  \| \widehat{\rho}, \sigma\|^2 + r^{p_5} \| \underline{\beta}\|^2 + r^{p_6} \| \underline{\alpha}\|^2 \Big]
\end{align}
and weighted spacetime energy
\begin{align}
\mathbb{I}^{0}_{(0,p_2,p_3,p_4,p_5,p_6,0)}  \left[W\right] \left(\tilde{\mathcal{M}} \left(\tau_1,\tau_2\right)\right) =  \int_{\tilde{\mathcal{M}} \left(\tau_1,\tau_2\right)} dt^\star dr d\omega \ \Bigg[  r^{p_2} \| \alpha\|^2 \nonumber \\+ r^{p_3}\|  \beta\|^2 + r^{p_4}  \| \widehat{\rho}, \sigma \|^2 + r^{p_5} \| \underline{\beta}\|^2  + r^{p_6} \| \underline{\alpha}\|^2 \Bigg] \, .
\end{align}
These energies should be understood as capturing the characteristic decay in $r$ of the curvature components familiar from the stability of Minkowski space. Which tuples of $p$ are admissible will be discussed in section \ref{decmatrix}.

Recalling the notation (\ref{subnot}) we define the following $n^{th}$-order weighted energies on the slices $\tilde{\Sigma}_{\tau}$ ($n\geq 1$):
\begin{align} \label{ennorm}
\mathbb{E}^{n}_{(p_1,p_2,p_3,p_4,p_5,p_6,0)} \left[W\right] \left( \tilde{\Sigma}_{\tau}\right) = \int_{\tilde{\Sigma}_{\tau} \cap \{ r \leq R\}} \|\mathcal{D}^n \tilde{W} \|^2 \nonumber \\ +
 \sum_{k_1+k_2+k_3=n-1}  \int_{N_{out}\left(S^2_{\tau,R}\right)} dv d\omega \nonumber \\  r^{2k_2+2k_3} \Bigg[ r^{p_1} \| \slashed{D}_3^{k_1} \slashed{D}_4^{k_2} \slashed{\nabla}^{k_3} \alpha_4\|^2 + r^{p_2} \| \slashed{D}_3^{k_1} \slashed{D}_4^{k_2} \slashed{\nabla}^{k_3} \left(\slashed{\nabla}\alpha, \beta_4\right)\|^2  \nonumber \\
+ r^{p_3} \|\left(\alpha_3, \slashed{\nabla} \beta, \rho_4,\sigma_4\right)\|^2  + r^{p_4} \| \slashed{D}_3^{k_1} \slashed{D}_4^{k_2} \slashed{\nabla}^{k_3} \left(\beta_3,\underline{\beta}_4, \slashed{\nabla}\rho, \slashed{\nabla}\sigma\right)\|^2  \nonumber \\ + r^{p_5} \|\slashed{D}_3^{k_1} \slashed{D}_4^{k_2} \slashed{\nabla}^{k_3} \left(\rho_3,\sigma_3,\slashed{\nabla}\underline{\beta}, \underline{\alpha}_4\right)\|^2 +  r^{p_6} \|\slashed{D}_3^{k_1} \slashed{D}_4^{k_2} \slashed{\nabla}^{k_3} \left(\underline{\beta}_3, \slashed{\nabla}\underline{\alpha}\right)\|^2\Bigg] \, ,
\end{align}
\begin{align}
\mathbb{E}^{n}_{(0, p_1,p_2,p_3,p_4,p_5,p_6)} \left[W\right] \left( \mathcal{I}_{\tau_1}^{\tau_2}\right) =  \sum_{k_1+k_2+k_3=n-1} \int_{\mathcal{I}_{\tau_1}^{\tau_2}} du d\omega \, r^{2k_2+2k_3} \Bigg[ \nonumber \\ r^{p_2} \| \slashed{D}_3^{k_1} \slashed{D}_4^{k_2} \slashed{\nabla}^{k_3}  \left(\slashed{\nabla}\alpha, \beta_4\right)\|^2  + r^{p_3} \| \slashed{D}_3^{k_1} \slashed{D}_4^{k_2} \slashed{\nabla}^{k_3}  \left(\alpha_3, \slashed{\nabla} \beta, \rho_4,\sigma_4\right)\|^2 \nonumber \\ + r^{p_4} \| \slashed{D}_3^{k_1} \slashed{D}_4^{k_2} \slashed{\nabla}^{k_3}  \left(\beta_3, \underline{\beta}_4, \slashed{\nabla}\rho, \slashed{\nabla}\sigma\right)\|^2 \nonumber \\ + r^{p_5} \|\slashed{D}_3^{k_1} \slashed{D}_4^{k_2} \slashed{\nabla}^{k_3} \left(\rho_3,\sigma_3,\slashed{\nabla}\underline{\beta}, \underline{\alpha}_4\right)\|^2  \nonumber \\ + r^{p_6} \|\slashed{D}_3^{k_1} \slashed{D}_4^{k_2} \slashed{\nabla}^{k_3}  \left(\underline{\beta}_3, \slashed{\nabla}\underline{\alpha}\right)\|^2 +  r^{p_7} \|\slashed{D}_3^k \slashed{D}_4^l \slashed{\nabla}^m \left(\underline{\alpha}_3\right)\|^2\Bigg]
\end{align}
and the $n^{th}$ order weighted spacetime energies
\begin{align}
\mathbb{I}^{n,(non)deg}_{(p_1,p_2,p_3,p_4,p_5,p_6, p_7)} \left[W\right] \left(\tilde{\mathcal{M}} \left(\tau_1,\tau_2\right)\right) =   \sum_{k_1+k_2+k_3=n-1} \int_{\tilde{\mathcal{M}} \left(\tau_1,\tau_2\right)} dt^\star dr d\omega  \nonumber \\
 w_{(non)deg}\left(r\right) \cdot r^{2k_2+2k_3} \Bigg[ r^{p_1} \| \slashed{D}_3^{k_1} \slashed{D}_4^{k_2} \slashed{\nabla}^{k_3}  \left(\alpha_4\right)\|^2 \nonumber \\ + r^{p_2} \| \slashed{D}_3^{k_1} \slashed{D}_4^{k_2} \slashed{\nabla}^{k_3}  \left(\slashed{\nabla}\alpha, \beta_4\right)\|^2  + r^{p_3} \| \slashed{D}_3^{k_1} \slashed{D}_4^{k_2} \slashed{\nabla}^{k_3}  \left(\alpha_3, \slashed{\nabla} \beta, \rho_4,\sigma_4\right)\|^2 \nonumber \\ + r^{p_4} \| \slashed{D}_3^{k_1} \slashed{D}_4^{k_2} \slashed{\nabla}^{k_3}  \left(\beta_3, \underline{\beta}_4, \slashed{\nabla}\rho, \slashed{\nabla}\sigma\right)\|^2 \nonumber \\ + r^{p_5} \| \slashed{D}_3^{k_1} \slashed{D}_4^{k_2} \slashed{\nabla}^{k_3}  \left( \rho_3,\sigma_3, \slashed{\nabla}\underline{\beta}, \underline{\alpha}_4\right)\|^2  \nonumber \\ + r^{p_6} \|\slashed{D}_3^{k_1} \slashed{D}_4^{k_2} \slashed{\nabla}^{k_3}  \left(\underline{\beta}_3,\slashed{\nabla}\underline{\alpha}\right)\|^2 +  r^{p_7} \|\slashed{D}_3^{k_1} \slashed{D}_4^{k_2} \slashed{\nabla}^{k_3}  \left(\underline{\alpha}_3\right)\|^2\Bigg] \, ,
\end{align}
with the weight $w_{nondeg}\left(r\right)=1$ in the non-degenerate case and $w_{deg} = \frac{\left(r-3M\right)^2}{r^2}$ in the degenerate case.
\begin{remark}
It is apparent from these energies that in the asymptotic region differentiating by $\slashed{D}_4$ or $\slashed{\nabla}$ gains two powers of $r$. To avoid ambiguities in the ordering of the derivatives, the definition is understood as including all possible permutations of the derivatives applied. Such permutations only differ by lower order terms, as we shall see in detail later.
\end{remark}
\begin{remark}
Note that because we are dealing with the characteristic energies, $\alpha_4$ is missing on $\mathcal{I}$, while $\underline{\alpha}_3$ is missing on $N_{out}\left(S^2_{\tau,R}\right)$.
\end{remark}
In addition, we define the summed energies
\glossary{
name={$\overline{\mathbb{E}}^{n}_P \left[W\right] \left( \tilde{\Sigma}_{\tau}\right)$}, 
description={weighted curvature energy, weight encoded in decay matrix $P$}
}
\glossary{
name={$\overline{\mathbb{I}}^{n}_{\tilde{P}} \left[W\right] \left(\mathcal{U} \right)$}, 
description={weighted spacetime curvature energy, weight encoded in decay matrix $\tilde{P}$}
}
\begin{align} \label{summednorms}
\overline{\mathbb{E}}^{n}_P \left[W\right] \left( \tilde{\Sigma}_{\tau}\right) &= \sum_{i=0}^n \mathbb{E}^{i}_{(p^i_1,p^i_2,p^i_3,p^i_4,p^i_5,p_6^i,0)} \left[W\right] \left( \tilde{\Sigma}_{\tau}\right)
 \\
\overline{\mathbb{I}}^{n, (non)deg}_{P} \left[W\right] \left(\tilde{\mathcal{M}} \left(\tau_1,\tau_2\right)\right) &= \sum_{i=0}^n \mathbb{I}^{i, (non)deg}_{(p^i_1,p^i_2,p^i_3,p^i_4,p^i_5,p^i_6,p^i_7)} \left[W\right] \left(\tilde{\mathcal{M}} \left(\tau_1,\tau_2\right)\right) \nonumber 
\end{align}
where $P=(p^i_1,p^i_2,p^i_3,p^i_4,p^i_5,p^i_6, p_7^i)$ denotes an $\left(n+1\right) \times 7$ matrix encoding the decay at each order. These matrices are defined in the following section. For convenience, we also define the unweighted energies
\begin{align} \label{unweightedEn}
\overline{\mathbb{E}}^{n}  \left[W\right] \left(\tilde{\Sigma}_{\tau} \right)= \sum_{i=0}^n \int_{\tilde{\Sigma}_{\tau}} \|\mathcal{D}^i \tilde{W} \|^2   \textrm{ \ \ \  EXCEPT \ \  $\|\slashed{D}_3^n \underline{\alpha}\|^2$ on $N_{out}$}
\end{align}
\begin{align} \label{unweightedIn}
\overline{\mathbb{I}}^{n,(non)deg} \left[W\right] \left(\tilde{\mathcal{M}} \left(\tau_1,\tau_2\right)\right)= \sum_{i=0}^n \int_{\tilde{\mathcal{M}} \left(\tau_1,\tau_2\right)}  \frac{1}{r^{1+\delta}} w_{(non)deg} \|\mathcal{D}^i \tilde{W} \|^2 
\end{align}
Finally, for the horizon we define
\begin{align}
{\mathbb{E}}^{n} \left[W\right] \left( \mathcal{H}\left(\tau_1,\tau_2\right)\right) &= \ \int_{\mathcal{H}\left(\tau_1,\tau_2\right)} dt^\star d\omega \| \mathcal{D}^n \tilde{W} \|^2 \textrm{ \ \ \  EXCEPT \ \  $\|\slashed{D}_3^n \underline{\alpha}\|^2$}
\end{align}
\begin{align}
\overline{\mathbb{E}}^{n} \left[W\right] \left( \mathcal{H}\left(\tau_1,\tau_2\right)\right) &= \sum_{i=0}^n \int_{\mathcal{H}\left(\tau_1,\tau_2\right)} dt^\star d\omega \| \mathcal{D}^i \tilde{W} \|^2 \textrm{ \ \ \  EXCEPT \ \  $\|\slashed{D}_3^n \underline{\alpha}\|^2$}
\end{align}

\subsection{The decay matrices} \label{decmatrix}
We need some definitions:
\begin{definition}
We call a tuple $\left(0,p_2,p_3,p_4,p_5,0,0\right)$ \underline{boundary $0$-admissible} if all $p_i \geq 2$ and in addition
\begin{align}
2<p_2<7-\delta \textrm{ \ \ \ , \ \ \ } p_3 \leq \min\left(6,p_2\right) \textrm{ \ \ \ , \ \ \ } p_4 \leq \min\left(4,p_3\right)\textrm{ \ \ \ , \ \ \ } p_5\leq 2 \, . \nonumber
\end{align}
To a boundary $0$-admissible tuple $p$ we associate the \underline{bulk $0$-admissible} tuple
\begin{align}
\tilde{p} = \left(0,p_2-1,p_2-1,p_3-1,p_4-1,p_5-1,0\right)  
\end{align}
in case that $p_3<6$, $p_4<4$ and $p_2<2$. In case that $p_3=6$ we replace $p_3-1=5$ by $5-\delta$. In case that $p_4=4$ we replace $p_4-1=3$ by $3-\delta$ and in case $p_5=2$ we replace $p_5-1=1$ by $1-\delta$. 
\end{definition}
\begin{definition}
We call a tuple $\left(p_1,p_2,p_3,p_4,p_5,p_6,0\right)$ \underline{boundary admissible} if 
\begin{align}
2&<p_1 \leq 10-\delta \textrm{ \  ,\ \ } p_2 \leq \min\left(9-\delta,p_1\right) \textrm{  \ , \ } p_3\leq \min\left(8-\delta,p_2\right) \nonumber \\ p_4 &\leq \min\left(6,p_3\right) \textrm{ \  , \ } p_5\leq \min\left(4,p_4\right) \textrm{ \  ,\ \ } p_6\leq2 \, , \nonumber
\end{align}
or if it is equal to the tuple $\left(10-\delta,9-\delta,8,6,4,2,0\right)$.
To a  boundary admissible tuple we associate the \underline{bulk-admissible} tuple 
\begin{align}
\tilde{p}=\left(p_1-1,p_1-1,p_2-1,p_3-1,p_4-1,p_5-1,p_6-1\right) \nonumber
\end{align}
in case that $p_3<8$, $p_4<6$, $p_5<4$ and $p_6<2$ hold. In case that $p_3=8$ we replace $p_3-1=7$ by $7-\delta$. In case that $p_4=6$ we replace $p_4-1=5$ by $5-\delta$. In case that $p_5=4$ we replace $p_5-1=3$ by $3-\delta$ and in case $p_6=2$ we replace $p_6-1=1$ by $1-\delta$.

Finally, an $\left(n+1 \right)\times 7$ matrix $P$ is called boundary admissible, if its first line is a boundary $0$-admissible tuple, and the remaining $n$ lines consist of identical boundary admissible tuples. A boundary admissible matrix $P$ has an associated bulk-admissible matrix $\tilde{P}$: The $n^{th}$ line of $\tilde{P}$ is defined to be the bulk admissible tuple associated with the $n^{th}$ line of $P$.
\end{definition}

We define the following $\left(n+1\right) \times 7$ boundary admissible decay-matrices:
\glossary{
name={$P_i$}, 
description={boundary-admissible decay matrix}
}
\glossary{
name={$\tilde{P}_i$}, 
description={bulk-associated decay matrix}
}
\begin{equation} \label{P0}
\mathbf{P}_0 = \left( \begin{array}{ccccccc}
0 & 7-\delta & 6 & 4 & 2 & 0 & 0\\ 
10-\delta & 9-\delta & 8 & 6 & 4 & 2 & 0 \\
\vdots & \vdots & \vdots & \vdots & \vdots & \vdots & \vdots
\end{array} \right)
 \end{equation}
 \begin{equation} \label{P1}
\mathbf{P}_1 = \left( \begin{array}{ccccccc}
0 & 6-\delta & 6-\delta & 4 & 2 & 0 & 0\\ 
9-\delta & 9-\delta & 8-\delta & 6 & 4 & 2 & 0 \\
\vdots & \vdots & \vdots & \vdots & \vdots & \vdots & \vdots
\end{array} \right)
 \end{equation}
\begin{equation} \label{P2}
\mathbf{P}_2 = \left( \begin{array}{ccccccc}
0 & 5-\delta & 5-\delta & 4 & 2 & 0 & 0\\ 
8-\delta & 8-\delta & 8-\delta & 6 & 4 & 2 & 0 \\
\vdots & \vdots & \vdots & \vdots & \vdots & \vdots & \vdots
\end{array} \right)
 \end{equation}
 \begin{equation} \label{P3}
\mathbf{P}_3 = \left( \begin{array}{ccccccc}
0 & 4-\delta & 4-\delta & 4-\delta & 2 & 0 & 0\\ 
7-\delta & 7-\delta & 7-\delta & 6 & 4 & 2 & 0 \\
\vdots & \vdots & \vdots & \vdots & \vdots & \vdots & \vdots
\end{array} \right) \, .
 \end{equation}
 We will also need the auxiliary decay matrix
 \begin{equation} \label{Prho}
 \mathbf{P}_{\rho} = \left( \begin{array}{ccccccc}
0 & 4 & 4 & 4 & 2 & 0 & 0\\ 
6 & 6 & 6 & 6 & 4 & 2 & 0 \\
\vdots & \vdots & \vdots & \vdots & \vdots & \vdots & \vdots
\end{array} \right) \, .
 \end{equation}
Note that the natural $L^2$-based energy for the curvature components would correspond to a matrix $P$ in (\ref{summednorms}) with all non-zero entries equal to $2$ and for which the weight $r^{2k_2+2k_3}$ is dropped in (\ref{ennorm}).

We also define an ordering relation on these matrices. We will say $P\leq Q$ if all entries of $P$ are smaller or equal than that of $Q$. 

\begin{remark}
Note that the energy induced by a boundary admissible decay matrix with second line $\left(p_1,...,p_6,0\right)$ controls in particular the lowest order energy induced by a boundary 0-admissible tuple with first line $(0,p_2-2, ..., p_6-2,0)$ in view of the Poincare inequality $\int_{S^2} \left(f - \overline{f}\right)^2 r^2 dA_{S^2} \leq c r^2 \int_{S^2} |\slashed{\nabla} f|^2  r^2 dA_{S^2}$. Cf.~Lemma \ref{pcc}.
\end{remark}

\begin{remark}
Applying Sobolev inequalities one can obtain the familiar characteristic pointwise decay in $r$ of the curvature components from the boundedness of the $\overline{\mathbb{E}}^2_{P_0}\left[W\right]$ and $\overline{\mathbb{I}}^2_{P_0}\left[W\right]$ energies.
\end{remark}

\subsection{Auxiliary norms}
Recall that we fixed $\delta=\frac{1}{100}$.
For $\mathcal{W}$ a Weyl tensor and $\mathcal{U}$ a spacetime region
\begin{align} \label{degXall}
I^{(non)deg} \left[\mathcal{W}\right]\left( \mathcal{U}\right) = \int_{ \mathcal{U}} dt^\star dr r^2 d\omega \,  w_{(non)deg} \left(r\right) \Bigg(\frac{1}{r^{1+\delta}}| \alpha|^2  \nonumber \\ + \frac{1}{r^{1+\delta}} |\underline{\alpha}|^2 + \frac{r^2}{r^{1+\delta}} |\beta|^2 + \frac{r^2}{r^{1+\delta}} |\underline{\beta}|^2  + \frac{r^2}{r}\left(|\rho|^2 +  | \sigma|^2 \right) \Bigg) 
\end{align}
\begin{align} 
E \left[\mathcal{W}\right] \left(\tilde{\Sigma}_{\tau} \right) = \int_{N_{out}\left(\tau,R\right)} r^2 dv  d\omega \,   \left(\| \alpha\|^2 + \|\beta\|^2 +  \|\underline{\beta}\|^2  + |\rho|^2 +  | \sigma|^2 \right) \nonumber \\
\int_{ \Sigma_{\tau} \cap \{r\leq R\}} r^2 dr  d\omega \,   \left(\| \alpha\|^2 + \|\underline{\alpha}\|^2 + \|\beta\|^2 +  \|\underline{\beta}\|^2  + |\rho|^2 +  | \sigma|^2 \right)
\end{align}
\begin{align} 
E \left[\mathcal{W}\right] \left( \mathcal{H}\left(\tau_1,\tau_2\right) \right) = \int_{ \mathcal{H}\left(\tau_1,\tau_2\right)} dt^\star r^2  d\omega \,   \left(\| \alpha\|^2 + \|\beta\|^2 +  \|\underline{\beta}\|^2  + |\rho|^2 +  | \sigma|^2 \right) \nonumber
\end{align}
Note that these energies are only expected to decay if $\mathcal{W}$ arose from commutation with an approximate Killing field.

In the context of the spin-reduction, it is convenient to introduce also certain weighted energies:
\begin{align} 
E_{spin1} \left[\mathcal{W}\right] \left( \tilde{\Sigma}_{\tau} \right) = \int_{N_{out}\left(\tau,R\right)} r^2 dv  d\omega \,   \left(\| \alpha\|^2 +  \|\underline{\beta}\|^2 + r^2 \left(\|\beta\|^2   + |\rho|^2 +  | \sigma|^2 \right)\right) \nonumber \\
\int_{ \Sigma_{\tau} \cap \{r\leq R\}} r^2 dr  d\omega \,   \left(\| \alpha\|^2 + \|\underline{\alpha}\|^2 +   \|\underline{\beta}\|^2  +  \|\beta\|^2 +  |\rho|^2 +  | \sigma|^2 \right) \nonumber
\end{align}
which gives the components $\beta$, $\rho$, $\sigma$ an additional weight.\footnote{The reason that $\underline{\beta}$ does not also have it is that $\underline{\beta}$ will not appear weighted on the characteristic hypersurface $N_{out}$. Cf.~the spin reduction performed in section \ref{spired}.} This additional weight has already been incorporated into the spacetime energy (\ref{degXall}).
For the spin $0$-curvature components $\rho$ and $\sigma$, more precisely the renormalized quantities $\phi_n=\tilde{\rho}\left(\mathcal{L}_T^n W\right)=r^3 \rho\left(\mathcal{L}_T^n W\right) - 2M \delta^n_0$, $\psi_n=\tilde{\sigma}\left(\mathcal{L}_T^n W\right)=r^3\sigma\left(\mathcal{L}_T^n W\right) + 2M \delta^n_0$ we define the energy
\begin{align} \label{strhoe}
E \left[\mathcal{D}\phi_n\right]\left( \tilde{\Sigma}_{\tau}\right) =  \int_{\tilde{\Sigma}_{\tau} \cap \{ r \leq R\}} dr d\omega |\mathcal{D}\phi_n |^2 + \int_{N_{out}\left(S^2_{\tau,R}\right)} dv d\omega \left[|\slashed{D}_4\phi_n |^2 + |\slashed{\nabla} \phi_n |^2 \right] 
\end{align}
and
\begin{align} \label{strho}
I_{deg}  \left[\mathcal{D}\phi_n \right] \left({\mathcal{U}}\right) =\int_{\mathcal{U}} dt^\star dr d\omega \Bigg[ \frac{\left(r-3M\right)^2}{r^{3+\delta}} \left(\left(T \phi_n \right)^2 + \left(\slashed{\nabla}\phi_n \right)^2 \right) \nonumber \\ + \chi_{\{r \leq r_0\}} \frac{1}{r^{1+\delta}} \left(\partial_r \phi_n\right)^2 + \frac{1}{r^{1+\delta}} \left(T^\perp \phi_n \right)^2 + \frac{\phi_n^2}{r^{3+\delta}}  \Bigg] 
\end{align}
where $T^\perp = -p e_3 + q e_4$ 
\glossary{
name={$T^\perp$}, 
description={$2T^\perp = -p e_3 + q e_4$, vectorfield (non-degenerate derivative in integrated decay estimate for $\rho$ and $\sigma$)}
}
is perpendicular to $T=pe_3 + q e_4$. Note that there is globally non-degenerate control of the $T^\perp$-derivative only.  The analogous definitions are made for $\psi_n$ (replace $\phi_n$ by $\psi_n$ everywhere in the above). 

\subsection{Constants and Conventions}
Recall that $\delta = \frac{1}{100}$ has already been fixed. Unless  noted otherwise, $B$ ($b$) will denote a large (small) constant depending only on the mass $M$. If the constant depends on other values, it will typically be indexed by them, e.g.~$B_h$ denotes a constant which also depends on $h$. The quantity $\epsilon$ is the ultimate source of smallness and arises from the closeness assumptions in the definition of ultimately Schwarzschildean. We use the usual algebra of constants $B \epsilon = \epsilon$, $B \cdot B = B$ etc.
\section{The main theorems} \label{mtsec}
Recall that the definition of an ultimately Schwarzschildean spacetime (Definition \ref{ultS}) depends on the choice of an $\epsilon>0$. Recall also the energies of the Ricci-coefficients defined in section \ref{ricosec}.
We have the following statement of uniform boundedness:
\begin{theorem} \label{theo1}
[Boundedness] There is an $\epsilon>0$ such that the following statement is true: If $\left(\mathcal{R}, g\right)$ is a spacetime which is ultimately Schwarzschildean to order $k+1$ (for some $k>7$) in the sense of Definition \ref{ultS}, then the estimate
\begin{align} \label{theo1est}
 \sup_{\tau>\tau_0} \overline{\mathbb{E}}^{k} \left[W\right] \left(\Sigma_{\tau} \right) \leq B \left( \mathbb{C}^k \left[\mathfrak{R}\right] \left(\tau_0\right)+ \overline{\mathbb{E}}^{k} \left[W\right] \left(\Sigma_{0} \right) \right) \, 
\end{align}
holds, where
\begin{align} \label{Dnorm}
\mathbb{C}^{k} \left[\mathfrak{R}\right] \left(\tau_1,\tau_2\right)  = \sup_{\tau \in \left(\tau_1,\tau_2\right)}\overline{\mathbb{E}}^{k} \left[\mathfrak{R}\right] \left( {\Sigma}_{\tau}\right) +  \overline{\mathbb{E}}^{k} \left[\mathfrak{R}\right] \left( \mathcal{H} \left(\tau_1,\tau_2\right)\right)  + \overline{\mathbb{I}}^{k} \left[\mathfrak{R}\right] \left({\mathcal{M}} \left(\tau_1,\tau_2\right)\right) \nonumber
\end{align}
and $\mathbb{C}^{k} \left[\mathfrak{R}\right] \left(\tau_0\right) = \mathbb{C}^{k} \left[\mathfrak{R}\right] \left(\tau_0,\infty\right)$, which is bounded by the ultimately Schwarzschildean property.
\end{theorem}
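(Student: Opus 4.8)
The plan is to establish \eqref{theo1est} by a hierarchical energy estimate: commute the Bianchi equations $k$ times with the ultimately Killing field $T$, apply the Bel--Robinson identity \eqref{mainid} with the redshift-modified multiplier triple, and control all error terms by the assumed Ricci-coefficient norms. First I would set up the commuted equations: by the Corollary to \eqref{Weylcommute1}, $\widehat{\mathcal{L}}_T^{n}\widehat{\mathcal{L}}_T W$ satisfies an inhomogeneous Bianchi equation with source $\sum_k \widehat{\mathcal{L}}_T^k J(T,\widehat{\mathcal{L}}_T^{n-1-k}\widehat{\mathcal{L}}_T W)$. Since $\rho$ does not decay, one \emph{must} commute at least once before hoping for decay of the energies; for the boundedness statement, $\tau$-decay of the top-order Weyl energy is not claimed, only uniform boundedness, so the strategy is to run a \emph{Gr\"onwall-in-$\tau$} argument on $\overline{\mathbb{E}}^{k}[W](\Sigma_\tau)$ and absorb error terms either into the left-hand side or into $\int^\tau \mathbb{C}^k[\mathfrak{R}]$-type quantities that are finite by hypothesis.

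Next I would perform the multiplier analysis. For each order $n\le k$ one feeds into \eqref{mainid} the triple $(\mathcal{X},\mathcal{Y},\mathcal{Z})$ built from $N$ (the redshift vectorfield of Proposition \ref{rsW}, which agrees with $T$ away from $\mathcal{H}^+$ and is timelike near it). The boundary terms $Q[\widehat{\mathcal{L}}_T^n W](N,N,N,n_\Sigma)$ are nonnegative and, using \eqref{bndyssrel} together with the renormalization $\tilde{W}$ absorbing the Schwarzschild value $\overline\rho=-2M/r^3$, control $\overline{\mathbb{E}}^{n}[W](\Sigma_\tau)$; the bulk term $K_1^{NNN}[\widehat{\mathcal{L}}_T^n W]$ has a favourable sign near $\mathcal{H}^+$ (the redshift) and is a controllable error elsewhere since $N$ is ultimately Killing there (its deformation tensor decays, being that of $T$ plus a fixed-$r$ bump). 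Away from the horizon one does not need the full integrated-decay estimate for boundedness: the spacetime error $\int K_1$ is bounded using the pointwise decay of ${}^{(T)}\pi$ from Definition \ref{ultimateKilling} and Cauchy--Schwarz against $\overline{\mathbb{E}}^n[W](\Sigma_\tau)$, giving $\int^\tau \tau'^{-5/4}\,\overline{\mathbb{E}}^n[W](\Sigma_{\tau'})\,d\tau'$, which is Gr\"onwall-absorbable. For $n<k$ this is routine; the subtlety is entirely at $n=k$.

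The main obstacle is the top-order error term of the schematic form \eqref{diffterms}, $\int_{\rm st}\rho\cdot(\widehat{\mathcal{L}}_T^{k-1}D\pi)\cdot\widehat{\mathcal{L}}_T^k W$, which does not decay quadratically because $\rho$ stays $O(r^{-3})$. Following the strategy flagged in section \ref{Tintro}, I would not estimate this term by brute force; instead, using the elliptic/Bianchi null-structure relations of section \ref{nseq} to rewrite $D\pi$ (via \eqref{nucos}--\eqref{nucoe}) in terms of curvature components plus Ricci coefficients, I would discover that the curvature components produced are precisely the partners allowing an integration by parts in $T$ on the spacetime slab $\mathcal{M}(\tau_1,\tau_2)$. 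That integration by parts moves the $T$-derivative either onto $\rho$ (producing a genuinely cubic, fully decaying error, since $T\rho$ decays, handled by the pointwise bounds \eqref{jo1}--\eqref{jo5} and \eqref{decaydeft}) or onto the top curvature term, where one then invokes the Bianchi equation $\alpha_3=-2\slashed{\mathcal{D}}_2^\star\beta+\mathrm{l.o.t.}$ to trade it for a term with one fewer derivative, i.e.\ effectively $\int_{\rm st}\rho\cdot(\widehat{\mathcal{L}}_T^{k-2}D\pi)\cdot\widehat{\mathcal{L}}_T^k W$; the gained derivative lets the Ricci-coefficient factor sit in $L^2$ with its assumed $\tau^{-1}$-decay at order $k-1$ (or $\tau^{-2}$ at order $k-2$), closing against the highest Weyl energy by Cauchy--Schwarz and Gr\"onwall. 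The use of the hypothesis $p|_{\mathcal{H}^+}=0$ (so $T$ is null on the horizon) enters here to make the horizon boundary term manifestly well-signed without needing the commutator redshift; and the technical hypothesis $\mathrm{tr}\,{}^{(T)}\pi=0$ removes the would-be non-decaying $\rho$-term in Lemma \ref{LTTL}, ensuring $u(\widehat{\mathcal{L}}_T^n W)\equiv\widehat{\slashed{\mathcal{L}}}_T^n u$ up to quadratically decaying corrections so that the commuted-frame and null-decomposition bookkeeping is consistent. Assembling the estimates for $n=0,1,\dots,k$, summing, and running Gr\"onwall in $\tau$ over $(\tau_0,\tau)$ with the right-hand-side data $\mathbb{C}^k[\mathfrak{R}](\tau_0)+\overline{\mathbb{E}}^k[W](\Sigma_0)$ (finite by the ultimately Schwarzschildean assumption and the null-structure equations bounding $k-1$ curvature derivatives by $k$ Ricci derivatives) yields \eqref{theo1est}.
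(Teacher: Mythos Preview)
Your proposal correctly identifies the main structural ingredients (commutation with $T$, the redshift vectorfield $N$, the integration-by-parts trick of section~\ref{Tintro} for the top-order error proportional to $\rho$, the role of $p|_{\mathcal{H}^+}=0$), and your treatment of the $K_2$-error is essentially the content of Proposition~\ref{K12summary}. However, the closing mechanism you propose---a Gr\"onwall argument in $\tau$---does not work, and this is a genuine gap.

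The problem is the deformation tensor of $N$ in the transition region $r_Y\le r\le r_Y+\frac{r_Y-2M}{2}$. You write that ${}^{(N)}\pi$ is ``that of $T$ plus a fixed-$r$ bump'', but that fixed bump is $O(1)$ and \emph{does not decay in $\tau$}. By Proposition~\ref{rsW}, $|K_1^{NNN}[\mathcal{W}]|\le B\cdot Q(T,T,T,n_\Sigma)$ in that region, so the resulting spacetime error is of size $B\int_{\tau_1}^{\tau_2}\overline{\mathbb{E}}^n[W](\Sigma_{t^\star})\,dt^\star$ with a \emph{large} constant, not $\int\tau'^{-5/4}\overline{\mathbb{E}}^n$. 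Gr\"onwall applied to $f'(\tau)\lesssim Bf(\tau)$ yields only exponential growth, not uniform boundedness. Equivalently, you cannot absorb this term into the left-hand side, and no amount of smallness in $\epsilon$ helps because the $Y$-contribution to $N$ is independent of the perturbation.

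The paper's resolution is different: it applies the $(T,T,T)$ identity (which, crucially, has no horizon boundary error because $T$ is null there) and the redshift estimate (Proposition~\ref{redshiftcomplete}) \emph{separately}, then adds $\int dt^\star\,\overline{\mathbb{E}}^n[W](\Sigma_{t^\star})$ to both sides and controls the transition-region error by time-integrating the $(T,T,T)$ estimate \eqref{hole}. This produces the functional inequality \eqref{bndest}, which contains a term $\lambda\,\max[1,\tau_2-\tau_1]\sup_{(\tau_1,\tau_2)}f$ that is \emph{not} of Gr\"onwall type. The actual closing is done by the bootstrap/pigeonhole argument of Lemma~\ref{bndboot}: one fixes a step-size $L$ with $\lambda L<1$, uses the integral term to find a ``good slice'' in each interval of length $L$, and propagates the bound from good slice to good slice. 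This argument (adapted from the wave-equation setting of Dafermos--Rodnianski) is the missing piece in your proposal.
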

We have formulated the statement in terms of the untilded $\mathcal{M}$,$\Sigma$ as it is easier to prove in this form. Of course, the estimate (\ref{theo1est}) then also holds for the tilded slices $\tilde{\Sigma}$ and the $\mathbb{D}^{n} \left[\mathfrak{R}\right]$-energy.

In other words, the energy involving $k$-derivatives of curvature is uniformly bounded. Note that the assumptions regard only $k+1$ derivatives of the metric, while the assertion provides bounds for $k$ derivatives of curvature, i.e.~$k+2$ derivatives of the metric. To prove Theorem \ref{theo1} it will be sufficient to exploit multiplier estimates based on the ultimately Killing vectorfield $T$ and a version of the redshift vectorfield. \newline

To formulate any type of decay statement one has to be aware of two important issues, which can be heuristically understood as follows. 
\begin{enumerate}
\item As discussed in the introduction, the components $\rho$ and $\sigma$ (and Lie-T-derivatives thereof) have to be renormalized and treated separately, if one wants to prove integrated decay estimates. These renormalized components, $\phi$, $\psi$, may be thought of as satisfying an inhomogeneous Regge-Wheeler equation on a Schwarzschild background of the type
\begin{equation} 
\mathcal{P}_{Regge-Wheeler} \left(\phi\right) = r^3 D \left( W \cdot \mathfrak{R} \right) \label{RWhe} \, ,
\end{equation}
the $r^3$-weight arising from the renormalization. Clearly, to prove energy estimates for $\phi$ one has to understand $r$-weighted curvature energies to estimate the inhomogeneity. This means that one cannot understand local integrated decay without at the same time understanding the characteristic decay in $r$ of the null-curvature components!

\item It is well known that for the homogeneous Regge-Wheeler equation on Schwarzschild only the angular modes $l\geq 2$ of $\phi$ decay, while the $l=0$ and $l=1$ modes describe perturbations to nearby Kerr solutions. Consequently, any decay statement will need to eliminate these modes.
\end{enumerate}

The first problem will be resolved by coupling the local integrated decay estimate -- which will require an $\epsilon$ of an $r$-weighted spacetime integral of curvature at infinity by the above heuristics -- with the $r$-weighted multiplier estimates near infinity, as described in the introduction. 

The second problem will be resolved by the ultimately Schwarzschildean assumption, which will allow us to a-priori estimate the troublesome lowest order terms in an integrated decay estimate (these lowest order terms are the manifestation of the $l=0$ and $l=1$ modes). We emphasize that in the case of the Regge-Wheeler equation on a fixed Schwarzschild background, these terms can only be controlled from the angular term by applying a Poincar\'e inequality which excludes the $l=0$ and $l=1$ modes. Here the situation is more favorable as we are assuming approach to Schwarzschild at a certain order, which takes care of the zeroth order terms.
\begin{theorem}  \label{maintheointdec1}
[Integrated weighted energy decay] With the assumptions of Theorem \ref{theo1} we have for any boundary-admissible decay matrix with $P>P_{\rho}$ and for any $n=0,...,k-1$ the estimate
\begin{align} \label{mestI}
\overline{\mathbb{E}}^{n+1}_P \left[W\right] \left(\tilde{\Sigma}_ {\tau_2}\right) + \overline{\mathbb{I}}^{n+1,deg}_{\tilde{P}} \left[W\right] \left(\tilde{\mathcal{M}}\left(\tau_1,\tau_2\right)\right)  \leq
B \cdot \overline{\mathbb{E}}^{n+1}_{P} \left[W\right] \left(\tilde{\Sigma}_{\tau_1}\right) \nonumber \\ +B \cdot \mathbb{D}^{n+1} \left[\mathfrak{R}\right] \left(\tau_1,\tau_2\right)  \, 
\end{align}
and, for any $\lambda>0$,
\begin{align} \label{mestI2}
\overline{\mathbb{E}}^{n}_P \left[W\right] \left(\tilde{\Sigma}_ {\tau_2}\right) + \overline{\mathbb{I}}^{n,nondeg}_{\tilde{P}} \left[W\right] \left(\tilde{\mathcal{M}}\left(\tau_1,\tau_2\right)\right)  \leq
B_{\lambda} \cdot \overline{\mathbb{E}}^{n+1}_{P} \left[W\right] \left(\tilde{\Sigma}_{\tau_1}\right)  \nonumber \\
+B \cdot \mathbb{D}^{n+1}_{\lambda} \left[\mathfrak{R}\right] \left(\tau_1,\tau_2\right)  \, ,
\end{align}
where $\mathbb{D}^k_\lambda \left[\mathfrak{R}\right]\left(\tau_1,\tau_2\right)$ was defined in (\ref{lambdawn}). 
Here the constants $B$ also depend on lower order energies of curvature and Ricci-coefficients, which are bounded by the ultimately Schwarzschildean assumption.
\end{theorem}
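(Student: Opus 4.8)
The plan is to prove both estimates simultaneously by induction on $n$, coupling at each level four families of estimates whose error-terms either involve curvature at strictly fewer derivatives (hence controlled by the inductive hypothesis together with Theorem \ref{theo1}) or Ricci coefficients at $\le n+1$ derivatives (controlled by $\mathbb{D}^{n+1}\left[\mathfrak{R}\right]$, respectively its $\lambda$-weighted version $\mathbb{D}^{n+1}_{\lambda}\left[\mathfrak{R}\right]$). The base case $n=0$ is the zeroth-order weighted energy, where the analysis below simplifies since one works directly with the renormalized $\widehat{\rho},\sigma$. For the inductive step one first commutes the Bianchi equations $n+1$ times, using $T$ to remove the non-decaying $\rho$-component from the principal error (Corollary after (\ref{nXWeylcommute}) and the equations of sections \ref{commute}, \ref{abrsabbb}), using the angular momentum operators $\Omega_i$ for the missing angular derivatives away from the horizon (where elliptic estimates, section \ref{elliptic}, convert $T$-derivatives into full derivatives), and using the redshift vectorfield $N$ near $\{r=r_Y\}$ for the missing transversal derivatives. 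Feeding $\widehat{\mathcal{L}}_T^{n}\widehat{\mathcal{L}}_T W$ and its $N$- and $\Omega_i$-commuted versions into the divergence identity (\ref{mainid}) with $\mathcal{X}=\mathcal{Y}=\mathcal{Z}=T$ produces positive future boundary terms plus the error $\int K_2$, whose top-order piece is the term $\int \rho\,(\widehat{\mathcal{L}}_T^{n-1}D\pi)\,\widehat{\mathcal{L}}_T^{n+1}W$.

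The key reduction, carried out in section \ref{errorterms}, is that after substituting the null-structure equations of section \ref{nseq} for $D\pi$, the structure equations bring in precisely the curvature components that pair with the corresponding Bianchi equations, so that an integration by parts in $T$ (letting the $T$-derivative fall on $\rho$, and on the remaining derivative term using equations such as $\alpha_3=-2\slashed{\mathcal{D}}^\star_2\beta+\textrm{l.o.t.}$) turns this term into a genuinely cubic, quadratically decaying error plus a term with one fewer derivative, of the form $\int\rho\,(\widehat{\mathcal{L}}^{n-1}_T D\pi)\,\widehat{\mathcal{L}}^{n+1}_T W$ with $D\pi$ hit by $n-1$ rather than $n$ derivatives; this gain of a derivative is what makes the estimate closable. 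Adding the redshift multiplier of Proposition \ref{rsW} (and its commuted version, section \ref{redsection}) supplies a positive bulk near $\mathcal{H}^+$ controlling all components there, at the cost of an interior zeroth-order error absorbed once the Morawetz estimate below is available.

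Next one runs the $X=f(r)\partial_r$ estimate of section \ref{XestWeyltensor} on the commuted system: $f$ bounded and changing sign at $r=3M$, stabilised near the horizon by $N$, so that (\ref{mainid}) with $\mathcal{X}=X$, $\mathcal{Y}=\mathcal{Z}=T$ gives $\overline{\mathbb{I}}^{\,n+1,deg}_{\tilde P}$-control of every component except $\rho,\sigma$ (the trapping factor $(r-3M)$ in (\ref{Xintro}) forcing the weight $w_{deg}$), with boundary terms absorbed by the $T$- and redshift-energies. To close $\rho,\sigma$ one passes to $\phi_n,\psi_n$, which satisfy an inhomogeneous Regge--Wheeler equation of the form (\ref{RWhe}); a multiplier estimate for the Regge--Wheeler operator gives $I_{deg}\left[\mathcal{D}\phi_n\right]$, the $l=0,1$ obstruction being removed not by a Poincar\'e inequality but by the a-priori boundedness of the zeroth-order terms, which is exactly what the ultimately Schwarzschildean assumption (in particular the $\mathbb{D}$-norm) supplies (section \ref{rssection}). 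Since the source carries the renormalization weight $r^3$, estimating it requires strongly $r$-weighted curvature energies, and here one \emph{borrows} an $\epsilon$ (respectively $\lambda$) of the $r$-weighted spacetime integrals produced at infinity in the next step, so that all estimates must be added together before these small contributions can be absorbed; this borrowing and the associated derivative loss are precisely what accounts for the appearance of $\mathbb{D}^{n+1}_\lambda$ and the difference between (\ref{mestI}) and (\ref{mestI2}).

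Finally, near infinity one establishes the $r^p$-weighted hierarchy of section \ref{decinf1}: multiplying the commuted equation for $\alpha_{n_l}$ by $r^p\alpha$, using that $\slashed{\mathcal{D}}^\star_2$ is the adjoint of $\slashed{div}$, and integrating over $\mathcal{D}(\tau_1,\tau_2)$ as in (\ref{igh}) yields positive boundary and spacetime terms for $(\alpha,\beta)$ for $p$ in the range fixed by the sign conditions and by the non-linear error constraints; iterating through the pairs $(\beta,\rho,\sigma)$, $(\rho,\sigma,\underline\beta)$, $(\underline\beta,\underline\alpha)$ produces a boundary-admissible matrix $P$ and its bulk-associated $\tilde P$, and the timelike boundary term at $r=R$ generated here is exactly what is controlled by the interior Morawetz estimate, closing the loop. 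One then adds the $T$/redshift, Morawetz, Regge--Wheeler and infinity estimates, absorbs the $\epsilon$- and $\lambda$-coupled contributions, and obtains (\ref{mestI}) and (\ref{mestI2}). I expect the main obstacles to be, first, the sharp error-term analysis near infinity — getting $r$-weights good enough to reach the matrices $P_0,\dots,P_3$ and $P_\rho$ requires exploiting the additional cancellation in the null-structure equations beyond their null form — and, second, the non-circular organisation of the simultaneous absorption of the $\epsilon$/$\lambda$-coupling between the Regge--Wheeler estimate (which needs $r$-weighted curvature at infinity) and the estimate at infinity (which needs interior integrated decay), the interior $\rho$-error-structure of section \ref{errorterms} being the remaining delicate point.
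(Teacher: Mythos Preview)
Your proposal is correct and follows essentially the same route as the paper: coupling the $T$-energy with the error-structure of section~\ref{errorterms}, the redshift of section~\ref{redsection}, the $X$-Morawetz estimate of section~\ref{XestWeyltensor} (which includes the spin-reduction to the Maxwell pseudo-tensor needed to avoid the additional $(r-3M)^2$ degeneration of $\beta,\underline{\beta}$), the Regge--Wheeler analysis of section~\ref{rssection} for $\rho,\sigma$, and the $r^p$-hierarchy of section~\ref{decinf1}, with elliptic estimates converting $T$-derivatives to full derivatives in the interior. The paper does not frame the argument as an induction on $n$ but rather sums the estimates for $\mathcal{W}_i=\widehat{\mathcal{L}}_T^i W$ over $1\le i\le n$ (with the lowest order handled by Corollary~\ref{coloo}); the iteration in $n$ is already absorbed into Corollary~\ref{rhosigmaestcor}, and your minor index slip (the top-order term has $\widehat{\mathcal{L}}_T^{n}D\pi$, reduced to $\widehat{\mathcal{L}}_T^{n-1}D\pi$) is harmless.
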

\begin{remark} \label{decremark}
Applying the estimate (\ref{mestI}) for $n=k-1$ from the data up to any slice in the future with decay matrix $P_0$ provides a stronger (than Theorem \ref{theo1}) statement of boundedness, which moreover, as we will see, does not hinge on $T$ being null on the horizon. It also yields boundedness of a degenerate (at the photon sphere) $r$-weighted spacetime integral (integrated decay). 
\end{remark}
\begin{remark}
In fact, one can write $B_{\lambda} \cdot \overline{\mathbb{E}}^{n}_{P} \left[W\right] \left(\tilde{\Sigma}_{\tau_1}\right)  + B \cdot \overline{\mathbb{E}}^{n+1}_{P} \left[W\right] \left(\tilde{\Sigma}_{\tau_1}\right)$ for the first term on the right hand side of (\ref{mestI2}). 
\end{remark}
Note that the non-degenerate estimate (\ref{mestI2}) needs only the $\mathbb{D}_\lambda^{n+1} \left[\mathfrak{R}\right]$-energy on the right hand side, thus requiring only a small amount of the non-degenerate integrated decay estimate for $n$-derivatives of the Ricci-coefficients, cf.~(\ref{lambdawn}). \newline

Iterating Theorem \ref{maintheointdec1} one can obtain
\begin{theorem} \label{maintheorem}
[Decay] With the assumptions as in Theorem \ref{maintheointdec1}, we also have, for $n=1, ... ,k-1$ and any $\lambda>0$ the estimates ($\tau \geq \tau_0$)
\begin{align} \label{prop2}
 \overline{\mathbb{E}}^{n}_{P_l} \left[W\right] \left(\tilde{\Sigma}_{\tau}\right) + \overline{\mathbb{E}}^{n} \left[W\right] \left( \mathcal{H}\left(\tau,\infty\right)\right) \nonumber 
 \\ \leq  \frac{B_{\lambda} }{\tau^{l}} \cdot \overline{\mathbb{E}}^{k}_{P_0} \left[W\right] \left(\tilde{\Sigma}_{0}\right) + B \sum_{i=1}^l \frac{1}{\tau^i}  \mathbb{D}^{n+i}_\lambda \left[\mathfrak{R}\right]\left(\tau\right)
\end{align}
\begin{align} \label{prop1}
\overline{\mathbb{I}}^{n,deg}_{\tilde{P}_l} \left[W\right] \left(\tilde{\mathcal{M}} \left(\tau,\infty\right)\right)  \nonumber \\
\leq  \frac{B_{\lambda} }{\tau^{l}} \cdot \overline{\mathbb{E}}^{k}_{P_0} \left[W\right] \left(\tilde{\Sigma}_{0}\right)  + B \sum_{i=1}^l \frac{1}{\tau^i}  \mathbb{D}^{n+i}_\lambda \left[\mathfrak{R}\right]\left(\tau\right) + B \cdot \mathbb{D}^n \left[\mathfrak{R}\right] \left(\tau\right)
\end{align}
\begin{align} \label{prop1b}
 \overline{\mathbb{I}}^{n-1,nondeg}_{\tilde{P}_l} \left[W\right]\left(\tilde{\mathcal{M}} \left(\tau, \infty\right)\right) \nonumber \\ \leq \frac{B_{\lambda} }{\tau^{l}} \cdot \overline{\mathbb{E}}^{k}_{P_0} \left[W\right] \left(\tilde{\Sigma}_{0}\right)  + B \sum_{i=0}^l \frac{1}{\tau^i}  \mathbb{D}^{n+i}_\lambda \left[\mathfrak{R}\right]\left(\tau\right) 
\end{align}
with
\begin{equation} 
l = l\left(n\right) = \left\{
\begin{array}{rl}
1 & \text{if } n= k-1\\
2 & \text{if } n = k-2 \\
3 & \text{if } n < k-2 \\
\end{array} \right.
\end{equation}
The constants $B$, $B_{\lambda}$ depend only on the mass $M$ and lower order energies of the Ricci coefficients (i.e.~$ \mathbb{D}^k \left[\mathfrak{R}\right] \left(\tau_0\right)$) and curvature which are $\epsilon$-small by the ultimately Schwarzschildean assumption.
\end{theorem}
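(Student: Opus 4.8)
The plan is to obtain Theorem \ref{maintheorem} from Theorem \ref{maintheointdec1} by a standard pigeonhole/dyadic argument, exploiting the hierarchy in the decay matrices $P_0 > P_1 > P_2 > P_3 > P_\rho$. The key structural observation is that the bulk-admissible matrix $\tilde P$ associated to a boundary-admissible $P$ carries one more power of $r$ than would be needed to produce, by averaging over a dyadic time interval, a slice-energy with a \emph{weaker} (by one unit of $r$-weight) decay matrix. Concretely, for $P \in \{P_0,P_1,P_2\}$ one has $\tilde P \geq P'$ where $P'$ is the next matrix down in the list (e.g.\ $\tilde P_0 \geq P_1$ after the Poincar\'e-improvement remarked upon in section \ref{decmatrix}), so that $\overline{\mathbb{I}}^{n+1,\mathrm{deg}}_{\tilde P} \geq c \int_{\tau}^{\infty} \overline{\mathbb{E}}^{n+1}_{P'}[W](\tilde\Sigma_{\tau'})\,d\tau'$ modulo the degeneracy at $r=3M$, which is handled by the non-degenerate estimate (\ref{mestI2}) at one order lower.

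First I would set up the dyadic argument for the base case $l=1$, $n=k-1$. Apply (\ref{mestI}) with matrix $P_0$ on $\tilde{\mathcal M}(\tau_1,\tau_2)$: the spacetime term $\overline{\mathbb{I}}^{k,\mathrm{deg}}_{\tilde P_0}[W](\tilde{\mathcal M}(\tau_1,\tau_2))$ controls $\int_{\tau_1}^{\tau_2}\overline{\mathbb{E}}^{k}_{P_1}[W](\tilde\Sigma_{\tau'})\,d\tau'$ (again up to the photon-sphere degeneracy, which is addressed by (\ref{mestI2}) and the renormalized $\rho,\sigma$ estimates that supply non-degenerate control of $\rho,\sigma$ themselves). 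Hence on any dyadic interval $[\tau, 2\tau]$ there is a good slice $\tau^\star$ on which $\overline{\mathbb{E}}^{k}_{P_1}[W](\tilde\Sigma_{\tau^\star}) \lesssim \tau^{-1}\big(\overline{\mathbb{E}}^{k}_{P_0}[W](\tilde\Sigma_{\tau}) + \mathbb{D}^{k}[\mathfrak R](\tau)\big)$. Feeding this back into the monotone-up-to-Ricci-errors estimate (\ref{mestI}) propagates the $\tau^{-1}$ decay of the $P_1$-energy to all later slices, and telescoping over dyadic blocks converts the additive $\mathbb{D}^k[\mathfrak R]$ terms into the sum $\sum_{i=1}^{l}\tau^{-i}\mathbb{D}^{n+i}_\lambda[\mathfrak R](\tau)$ appearing in (\ref{prop2}); one uses here that $\mathbb{D}^{k}[\mathfrak R]$ itself decays in $\tau$ at the rates encoded in $l(n)$ in Definition \ref{RRCapproach}, so the geometric-in-$\tau$ losses from the $B_\lambda$ constants are absorbed. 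To reach $l=2,3$ one iterates: having established $\tau^{-1}$-decay of the $P_1$-energy at order $n+1$, reapply (\ref{mestI})/(\ref{mestI2}) with matrix $P_1$ (whose bulk matrix $\tilde P_1$ dominates $P_2$) and integrate the extra $\tau^{-1}$ factor now available on the right-hand side against $d\tau'$, gaining a second power; a third iteration with $P_2 \to P_3$ gives $\tau^{-3}$. The loss of derivatives at each iteration (each use of the hierarchy trades an order of differentiability of curvature for a power of $r$-weight, hence the shift $n \to n+1$) is exactly why the statement runs only up to $n = k-1$ and why $l(n)$ caps at $3$: after three iterations one has exhausted the margin between $P_0$ and $P_\rho$, and $\mathbb{D}^{n}[\mathfrak R]$ (the genuinely non-decaying-at-top-order assumption) enters as the residual term in (\ref{prop1}), (\ref{prop1b}).

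The horizon terms $\overline{\mathbb{E}}^{n}[W](\mathcal H(\tau,\infty))$ and the non-degenerate bulk quantities are handled in parallel: the redshift estimate (Proposition \ref{rsW} and its commuted version, section \ref{redsection}) bounds these by the corresponding slice-energy plus an $\epsilon$ of interior integrated decay and the Ricci errors, so once the slice decay (\ref{prop2}) is in hand the horizon decay and (\ref{prop1b}) follow by the same summation. The $\lambda$-dependence is threaded through consistently: every step that needs non-degenerate control near $r=3M$ invokes (\ref{mestI2}), which costs a factor $B_\lambda$ but only a $\lambda$-small amount of $\overline{\mathbb{I}}^{n}$ on the Ricci side, matching the definition (\ref{lambdawn}) of $\mathbb{D}^{n}_\lambda[\mathfrak R]$; since $\lambda$ is fixed once and for all at the end, $B_\lambda$ is a legitimate constant.

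I expect the main obstacle to be the bookkeeping at the photon sphere, i.e.\ reconciling the \emph{degenerate} spacetime norms that come out of the $X$-multiplier estimate (\ref{Xintro}) with the \emph{non-degenerate} norms one must average to extract slice decay. The resolution is genuinely coupled: the degenerate estimate controls all components except $\rho,\sigma$ with the full weight, $\rho,\sigma$ are recovered non-degenerately from the renormalized Regge--Wheeler estimates of section \ref{rssection} (at the cost of importing an $\epsilon$ of the $r$-weighted energy at infinity, which the infinity estimate of section \ref{decinf1} then returns), and the remaining one-derivative loss is absorbed by running the pigeonhole at order $n+1$ while concluding at order $n$. Making the constants in this three-way coupling close — ensuring the $\epsilon$'s and $\lambda$'s borrowed between the interior, the photon sphere, the horizon, and infinity can all be absorbed simultaneously without circularity — is the delicate point, but it is precisely the content already assembled in Theorem \ref{maintheointdec1}, so here it reduces to a careful but routine iteration.
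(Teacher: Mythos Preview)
Your proposal is correct and follows essentially the same route as the paper. The paper packages the iteration step as Proposition \ref{auxdecprop}, applying (\ref{mestI2}) directly (rather than (\ref{mestI}) plus a separate de-degeneration) and using precisely your key inequality $P_{j+1} \leq \tilde P_j$ to pass from the non-degenerate bulk to a time-integral of the next slice energy; the pigeonhole extraction of a good slice and the threefold iteration down the chain $P_0 \to P_1 \to P_2 \to P_3$ are exactly as you describe.
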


In view of the powerful techniques of \cite{ChristKlei}, where estimates for the deformation tensor and the Ricci-coefficients are obtained from appropriate estimates on the Weyl tensor, it is reasonable to state
\begin{conjecture}
The spacetime of Theorem \ref{maintheorem} is weakly ultimately Schwarzschildean to order $k+2$.
\end{conjecture}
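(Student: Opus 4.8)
The plan is to run the ``second half'' of a Christodoulou--Klainerman type argument (cf.~\cite{ChristKlei}) in the direction opposite to the estimates of this paper: one feeds the curvature bounds of Theorem \ref{maintheorem} into the null-structure equations of section \ref{nseq}, viewed as transport and Hodge systems for the Ricci coefficients, and integrates to recover boundedness and decay for \emph{one more} derivative of $\mathfrak{R}$ and of ${}^{(T)}\pi$. Since these equations are first order, $k$ derivatives of curvature control $k+1$ derivatives of $\mathfrak{R}$ up to quadratic lower order terms, hence $k+2$ derivatives of the metric, which is exactly the claimed order. Before anything else one must pin down the notion ``weakly ultimately Schwarzschildean to order $k+2$'': this should be Definition \ref{ultS} read at order $k+2$ with $\epsilon$ replaced by a fixed multiple of itself, the auxiliary $\delta$ in the weighted norms replaced by a slightly smaller one, the technical closure hypotheses (\ref{awpr}), (\ref{awps}), (\ref{decaydeft}) dropped (they are, as already remarked, removable), and the sharp interior pointwise rate $(t^\star)^{-5/4}$ possibly relaxed to $(t^\star)^{-5/4+\eta}$. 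The word ``weakly'' is precisely what accommodates these unavoidable losses.

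For the transport scheme one selects, for each component of $\mathfrak{R}^{main}$, the favourable evolution equation among (\ref{Hb3})--(\ref{O43}): for instance $\widehat{H}$ from (\ref{H4}) propagated along $e_4$ with source $\alpha$, $\widehat{\underline{H}}$ from (\ref{Hb3}) propagated along $e_3$ with source $\underline{\alpha}$, the null expansions $tr H$, $tr \underline{H}$ from the Raychaudhuri equations (\ref{trH4}), (\ref{trHb3}) with source $\rho$ and shear-quadratic terms, the torsion quantities $Z,\underline{Z},V,\Omega,\underline{\Omega}$ from the coupled first-order system (\ref{V3})--(\ref{O43}) together with (\ref{Yb4})--(\ref{Y3}), and the angular derivatives of the shears from the Codazzi relations (\ref{divH}), (\ref{divHb}) (sourced by $\beta$, $\underline{\beta}$) via the $2$-sphere elliptic estimates of section \ref{elliptic}. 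At each step an $r$-weighted, $1+\delta$-renormalised transport estimate along the relevant null hypersurface of $\tilde{\Sigma}_\tau$ converts the curvature fluxes $\overline{\mathbb{E}}^n_{P_l}[W]$ and the degenerate spacetime norms $\overline{\mathbb{I}}^{n,deg}_{\tilde{P}_l}[W]$ of Theorem \ref{maintheorem} into the Ricci-coefficient energies $\overline{\mathbb{E}}^n[\mathfrak{R}]$, $\overline{\mathbb{I}}^n[\mathfrak{R}]$; the $r$-weights encoded in the matrices $P_l$ have been engineered (cf.~the remarks of section \ref{decmatrix}) so as to match those of Definition \ref{RRCapproach}, with $\widehat{\underline{H}}$ --- which carries the news in its $\frac{1}{r}$-coefficient --- receiving its separate, weaker weight.

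One then commutes the equations of the previous step with $\widehat{\mathcal{L}}_T$ (or with $T$), with the angular operators $\slashed{\mathcal{L}}_{\Omega_i}$, and with $\slashed{D}_3,\slashed{D}_4$, using the commutator identities of section \ref{commute} and Lemma \ref{commutelemma}; the sources become schematically $\mathcal{D}^j W + \mathcal{D}^j(\mathfrak{R}\cdot\mathfrak{R})$, the quadratic part being strictly lower order in the count of top derivatives and hence closed by induction on the order. Carrying the induction to $j=k$ on the curvature side yields the $(k+1)$-st derivative of $\mathfrak{R}$; the decay hierarchy $l(n)$ of Theorem \ref{maintheorem} reproduces the hierarchy (\ref{Rldef}) with the expected one-derivative shift, so that (\ref{iRdec}) holds at order $k+1$, and Sobolev embedding on $\tilde{\Sigma}_\tau$ (costing two derivatives, harmless since $k>7$) gives the pointwise interior statement up to the $\eta$-loss. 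The deformation-tensor bounds of Definition \ref{ultimateKilling} at order $k+1$ then follow from the formulae (\ref{nucos})--(\ref{nucoe}) expressing ${}^{(T)}\pi$ through $\mathfrak{R}$, $p$, $q$ and their derivatives, combined with the hypotheses on $p,q$ in assumption (4) of Definition \ref{ultS}; the normalisations $p|_{\mathcal{H}^+}=0$ and $tr\,{}^{(T)}\pi=0$, being gauge choices, are not recovered but are simply omitted from the weak notion (or restored by a small Fermi-type adjustment of the frame, as in the remark following Definition \ref{ultS}).

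The main obstacle is the interplay of trapping with the non-decay of $\rho$ at the very top order. At $n=k$, Theorem \ref{maintheorem} delivers only \emph{bounded} (not decaying) curvature, and the spacetime norm $\overline{\mathbb{I}}^{k,deg}_{\tilde{P}_0}[W]$ vanishes to second order at $r=3M$, so the transport estimates for $k+1$ derivatives of $\mathfrak{R}$ near the photon sphere can at best be \emph{consistent} with the assumed strength (\ref{awpr}) --- this is exactly the consistency phenomenon stressed in the introduction, and the reason one obtains the self-consistent conjecture rather than a closed bootstrap improving every original assumption. At the same time, the non-decaying $\rho$ entering the commuted transport equations forces one to reproduce on the Ricci side the integration-by-parts structure isolated in section \ref{errorterms}, lest a derivative be lost; and the $2$-sphere elliptic estimates needed in the first step rely on control of the Gauss curvature (\ref{Gauss}), which again involves the non-decaying $\rho$, so their uniformity up to $k+1$ derivatives must be verified with care. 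Reconciling, at \emph{every} order, the $r$-weights of the shear components --- in particular matching the $\widehat{\underline{H}}$-weight to the $\widehat{\mathcal{L}}_T$-commuted Bianchi hierarchy --- is the other genuinely technical point.
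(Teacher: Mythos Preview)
The statement is a \emph{conjecture}, and the paper does not prove it; there is no ``paper's own proof'' to compare against. The paper merely motivates the conjecture by pointing to the techniques of \cite{ChristKlei} for recovering Ricci-coefficient estimates from curvature, which is precisely the programme you sketch. In that sense your outline is consonant with the paper's intent.

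However, you have misidentified what ``weakly'' means here. The paper defines it explicitly in the sentence immediately following the conjecture: the spacetime is ultimately Schwarzschildean in the sense of Definition~\ref{ultS} at order $k+2$, \emph{except that the highest-order integrated decay bound in (\ref{iRdec}) degenerates at the photon sphere}. That is the single modification, and it is forced by the fact that Theorem~\ref{maintheointdec1} only yields the \emph{degenerate} spacetime norm $\overline{\mathbb{I}}^{k,deg}_{\tilde{P}}[W]$ at the top order. Your proposed weakenings --- replacing $\epsilon$ by a multiple, shrinking $\delta$, dropping (\ref{awpr})--(\ref{awps}), and relaxing the interior pointwise rate to $(t^\star)^{-5/4+\eta}$ --- are not what the paper intends, and several of them (the $\eta$-loss in particular) would break the consistency of the hierarchy the paper is trying to set up. Your last paragraph in fact locates the real obstruction correctly (trapping at the top order), but the resolution the paper envisions is to \emph{accept} a degenerate weight at $r=3M$ in the top-order Ricci norm, not to degrade the other constants.
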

Here by \emph{weakly ultimately Schwarzschildean} we understand that the spacetime is ultimately Schwarzschildean in the sense of Definition \ref{ultS}, except that the highest integrated decay bound on the Ricci components in (\ref{iRdec}) degenerates at the photon sphere. This is of course a consequence of the fact that we can only prove a degenerate spacetime estimate for $k$ derivatives of curvature. \newline

We close the section with a brief discussion of the relevance of these results to the fully non-linear problem, i.e.~the problem of \emph{improving} the assumptions made on the Ricci coefficients. For this it is essential to prove in addition that the spacetime under consideration is ultimately Schwarzschildean to order $k-1$ \emph{with all constants and decay rates improved}. As remarked after the definition (\ref{lambdawn}), choosing $\tau=\tau_f$ sufficiently large, the energy $\mathbb{D}^k_\lambda \left[\mathfrak{R}\right]\left(\tau_f\right)$ is $\epsilon \cdot \lambda$-small. Hence the estimate (\ref{mestI2}) implies that $\overline{\mathbb{I}}^{k-1,nondeg}_{\tilde{P}_0} \left[W\right]\left(\tilde{\mathcal{M}} \left(\tau_f, \infty\right)\right) \leq \epsilon \cdot \lambda $, as the first term on the right hand side of (\ref{mestI2}) is arbitrarily small up to time $\tau_f$ by Cauchy stability. Moreover, the estimate (\ref{prop1b}) tells us that $\overline{\mathbb{I}}^{k-1,deg}_{\tilde{P}_1} \left[W\right]\left(\tilde{\mathcal{M}} \left(\tau, \infty\right)\right) \leq \epsilon \tau^{-1}$. At least  in principle -- i.e.~modulo the estimates for the Ricci-coefficients from curvature -- this is sufficient to improve the assumptions at that order, in particular assumption (\ref{awps}) and the assumption $\overline{\mathbb{I}}^{k} \left[\mathfrak{R}\right] \left(\tilde{\mathcal{M}} \left(\tau,\infty\right)\right) \leq \epsilon$ on the Ricci-coefficients.

There is one problem, however: We will not be able to improve the decay rate for the lowest order energy. In fact, due to the presence of the  $\mathbb{D}^n_\lambda \left[\mathfrak{R}\right]$ term in (\ref{prop1}), the decay \emph{rate} of the curvature can never be better than what we are assuming on the the lowest order energy of the Ricci-coefficients. The existence of this term originates from an error-term in the energy estimates, which is not cubic. As discussed in the introduction, we will show that this term is lower order in differentiability (allowing one to prove the Theorems above) but we will not be able to eliminate its non-cubic nature entirely. 

As a consequence, the results of the paper do not quite allow one to move immediately to the fully non-linear problem: There is still a missing ingredient regarding the nature of this error-term. This problem can in fact be studied in the context of a linearization of the equations. This is work in progress by the author and collaborators \cite{DafHoRodip}. 
\section{The $T$-energy}
In this section, we derive the energy of the $n$-commuted Weyl field $\widehat{\mathcal{L}}^n_{T} W$ as arising from the ultimately Killing field $T$. In Schwarzschild, this energy involves all components of curvature with good signs but with a degeneration of some components at the horizon. To stabilize the estimate under perturbations, we will have to invoke the redshift. Those estimates are carried out in section \ref{anondege}. In the following section, we simply allow for an $\epsilon$-small negative contribution near the horizon in our estimates (cf.~Lemma \ref{Tboundaryterm}). 

The uncommuted Weyl field ($n=0$ in the expression above) requires special attention. This is because the natural Bel-Robinson energy for $W$ will involve the term $\int |\rho|^2$, which does not decay. To remedy this we derive a renormalized energy directly from the (renormalized) Bianchi equations in section \ref{normalizedenergy}.

\subsection{The boundary term}
Let $\mathcal{W}$ denote any Weyl-field\footnote{\label{apl}For prospective applications, the reader can think of $\mathcal{W}=\widehat{\mathcal{L}}^i_{T} W$ for $i\geq 1$.} with null-components $\alpha, \beta, \rho, \sigma, \underline{\beta}, \underline{\alpha}$. 

We choose $\mathcal{X}=\mathcal{Y}=\mathcal{Z}=T$ in (\ref{bintro}). A computation using the ``Schwarzschildean" normal 
$\tilde{n}=\frac{1}{2}\sqrt{k_\chi} e_3 + \frac{1}{2} \frac{1}{\sqrt{k_\chi}}e_4$ (which is $C^1$ close to $n_{\Sigma}$ by the ultimately Schwarzschildean property) shows
\begin{align} \label{Tenergy}
J_{\mu}^{TTT} \left[\mathcal{W}\right]  \tilde{n}^\mu_{\Sigma} := Q\left(T,T,T, \tilde{n}^\mu_{\Sigma} \right) = \frac{1}{16} \Bigg\{ 2|\underline{\alpha}|^2 \left[ p^3 \sqrt{k_+} \right] + 2|\alpha|^2 q^3\frac{1}{\sqrt{k_+}} \nonumber \\
+ 4|\underline{\beta}|^2 \left(p^2 \sqrt{k_\chi} + p^3  \frac{1}{\sqrt{k_\chi}}\right) 
+ 4|\beta|^2 \left( p q^2  \frac{1}{\sqrt{k_\chi}} + q^3  \sqrt{k_\chi} \right) \nonumber \\
+ 4\left(\rho^2+\sigma^2\right) \left( p^2 q  \frac{1}{\sqrt{k_\chi}} + p q^2  \sqrt{k_\chi}\right) \, . \nonumber
\end{align}
Note that in the Schwarzschild case $p$ vanishes on the horizon while $q$ equals $1$, leading to a degeneration of all components except $\alpha$ and $\beta$. Since $|p-k_{\chi}^-|\leq \epsilon$ and $|q-\frac{k_\chi^+}{k_{\chi}}| \leq \epsilon$ by the ultimately Schwarzschildean assumption we conclude:
\begin{lemma} \label{Tboundaryterm}
\begin{align}
 \int_{\Sigma_{\tau}} Q \left(T,T,T,n_{\Sigma}\right) d\mu_{\Sigma_{\tau}} \geq b \int_{\Sigma_{\tau} \cap \{r \geq r_Y-\frac{3}{4} \left(r_Y-2M\right)\}} |\mathcal{W}|^2 r^2 dr d\omega \nonumber \\
-\epsilon \int_{\Sigma_{\tau} \cap \{r \leq r_Y\}} |\mathcal{W}|^2 r^2 dr d\omega
\end{align}
and the analogous identity for the slices $\tilde{\Sigma}_{\tau}$.\footnote{Note that the $\underline{\alpha}$ component will not appear on $N_{out}\left(S^2_{\tau,R}\right)$.}
\end{lemma}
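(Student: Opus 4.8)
The plan is to read the estimate off the pointwise formula (\ref{Tenergy}) for $Q\left(T,T,T,\tilde{n}_{\Sigma}\right)$ by splitting $\Sigma_{\tau}$ into a region where $T$ stays uniformly timelike and a near-horizon region where the coefficient $p$ degenerates, and to transfer the bound from $\tilde n_\Sigma$ to the true normal $n_\Sigma$ via (\ref{bndyssrel}).

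\textbf{Step 1: replacing the normal.} First I would invoke (\ref{bndyssrel}): pointwise $Q\left(T,T,T,n_\Sigma\right) \geq Q\left(T,T,T,\tilde n_\Sigma\right) - \tfrac{\epsilon}{r^2}Q\left(N,N,N,N\right)$, and since $N$ is uniformly timelike with $g\left(N,N\right)$ bounded, $Q\left(N,N,N,N\right)$ controls $\|\alpha\|^2 + \dots + |\sigma|^2 = |\mathcal{W}|^2$; as $r\geq 2M$ on $\mathcal{R}$ the factor $r^{-2}$ is bounded, so the error is $\leq B\epsilon\, |\mathcal{W}|^2$. Hence it suffices to bound $Q\left(T,T,T,\tilde n_\Sigma\right)$ from below.

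\textbf{Step 2: away from the horizon.} Put $r_1 = r_Y - \tfrac34\left(r_Y-2M\right) = \tfrac14 r_Y + \tfrac32 M > 2M$. For $r\geq r_1$ the Schwarzschild value $k_\chi^- = 1-\tfrac{2M}{r}\chi$ is bounded below by $b\left(M\right)>0$; since $|p-k_\chi^-|\leq\epsilon$, $|q-\tfrac{k_\chi^+}{k_\chi}|\leq\epsilon$ and $k_\chi, k_\chi^\pm$ are pinched between positive constants, every coefficient in (\ref{Tenergy}) multiplying $|\underline\alpha|^2, |\alpha|^2, |\underline\beta|^2, |\beta|^2, \rho^2+\sigma^2$ is $\geq b\left(M\right)>0$ once $\epsilon$ is small. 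Thus $Q\left(T,T,T,\tilde n_\Sigma\right)\geq b\,|\mathcal{W}|^2$ on $\{r\geq r_1\}$. On $\{r\leq r_Y\}$ instead, I only claim non-negativity up to $B\epsilon$: here $k_\chi^-\geq 0$ gives $p\geq -\epsilon$, while $\tfrac{k_\chi^+}{k_\chi}=1$ for $r\leq 6M$ so $q\geq\tfrac12$; the potentially degenerate coefficients are, up to bounded factors, $p^3$, $p^2+p^3$ and $p^2q+pq^2$ (with $q\geq\tfrac12$), each $\geq -B\epsilon$, whereas the $|\alpha|^2$ and $|\beta|^2$ coefficients are $\sim q^3\geq b$. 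Therefore $Q\left(T,T,T,\tilde n_\Sigma\right)\geq -B\epsilon\,|\mathcal{W}|^2$ on $\{r\leq r_Y\}$.

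\textbf{Step 3: assembly.} Writing $\Sigma_\tau = \{r\geq r_1\}\cup\{r<r_1\}$, combining Steps 1--2, and using $\{r<r_1\}\subset\{r\leq r_Y\}$, I get
\[
\int_{\Sigma_\tau} Q\left(T,T,T,n_\Sigma\right) d\mu_{\Sigma_\tau} \geq b\int_{\Sigma_\tau\cap\{r\geq r_1\}} |\mathcal{W}|^2 r^2 dr d\omega - B\epsilon\int_{\Sigma_\tau\cap\{r\leq r_Y\}} |\mathcal{W}|^2 r^2 dr d\omega ,
\]
and relabelling $B\epsilon\rightsquigarrow\epsilon$ gives the stated inequality. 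For $\tilde\Sigma_\tau$ one repeats the argument on the portion $r\leq R$, noting that the flux of $Q\left(T,T,T,\cdot\right)$ through the outgoing cone $N_{out}\left(S^2_{\tau,R}\right)$ is non-negative (there $T$ is asymptotically causal and the generator is null, and $\underline\alpha$ does not appear), so it only improves the lower bound.

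\textbf{Main obstacle.} The only point requiring genuine care is the near-horizon analysis in Step 2: one must check that the coefficients vanishing on $\mathcal{H}^+$ never become more than $B\epsilon$-negative. This works because $p$ --- which vanishes on the horizon and could a priori go slightly negative under perturbation --- enters those coefficients only through even powers or multiplied by $q$, which stays uniformly positive, while $k_\chi^-\geq 0$ for $r\geq 2M$ neutralises the odd-power contributions. Everything else is bookkeeping with the closeness bounds $|p-k_\chi^-|\leq\epsilon$, $|q-\tfrac{k_\chi^+}{k_\chi}|\leq\epsilon$.
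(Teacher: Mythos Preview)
Your proof is correct and follows the same approach as the paper: read off the lower bound from the explicit null-decomposed formula (\ref{Tenergy}) using the closeness bounds $|p-k_\chi^-|\leq\epsilon$, $|q-\tfrac{k_\chi^+}{k_\chi}|\leq\epsilon$. The paper's own argument is in fact just the single sentence preceding the lemma, so you have simply supplied the details (the passage through (\ref{bndyssrel}), the region splitting, and the sign analysis near the horizon) that the paper leaves implicit. One minor remark: in your ``Main obstacle'' paragraph the reasoning is slightly misphrased --- the relevant point is not that $p$ appears ``only through even powers or multiplied by $q$'', but simply that $p\geq -\epsilon$ (from $k_\chi^-\geq 0$), so odd powers satisfy $p^{2j+1}\geq -\epsilon^{2j+1}$ while even powers are non-negative; the positivity of $q$ plays no role in bounding $pq^2$ from below.
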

The spacetime term $K_1^{TTT}\left[\mathcal{W}\right]$ is discussed in section \ref{K1TTT}, while we postpone the analysis of the much more intricate error term $K_2^{TTT}\left[\mathcal{W}\right]$ to section \ref{errorterms}.
\subsection{The error $K_1^{TTT}\left[\mathcal{W}\right]$} \label{K1TTT}
We have to understand 
\begin{equation} \label{contract}
 K_1^{TTT}\left[\mathcal{W}\right] = 3 Q^{\alpha \beta \gamma \delta} \left[ \mathcal{W}\right] \left( \phantom{}^{(T)}\pi_{\alpha \beta} \right) T_\gamma T_{\delta} \, 
\end{equation}
for a Weyl-tensor $\mathcal{W}$ (cf.~footnote \ref{apl}).
\begin{proposition} \label{K1est}
The error term $K_1^{TTT}\left[\mathcal{W}\right]$ satisfies
\begin{align} \label{wts}
\int_{\tilde{\mathcal{M}}\left(\tau_1, \tau_2\right)} K_1^{TTT}\left[\mathcal{W}\right] \leq \epsilon \left(\tau_1\right)^{-\frac{1}{4}} \  \sup_{\tau \in \left(\tau_1,\tau_2\right)} E \left[\mathcal{W}\right]\left( \tilde{\Sigma}_{\tau}\right) \nonumber \\+ \frac{\epsilon}{\sqrt{\tau_1}}\int_{\tilde{\mathcal{M}}\left(\tau_1, \tau_2\right)\cap \{r\geq R\}} r^{1-\delta} \|\underline{\alpha}\|^2 dt^\star dr d\omega \, .
\end{align}
Moreover, if $\tilde{\mathcal{M}}\left(\tau_1, \tau_2\right)$ is replaced by $\mathcal{M}\left(\tau_1, \tau_2\right)$ (and $\tilde{\Sigma}_\tau$ by $\Sigma_\tau$), the last term can be dropped.
\end{proposition}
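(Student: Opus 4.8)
The plan is to decompose $K_1^{TTT}[\mathcal{W}]$ into its null components, bound each algebraically by a product of a component of the deformation tensor ${}^{(T)}\pi$ and a quadratic curvature expression, and then integrate, splitting the spacetime region $\tilde{\mathcal{M}}(\tau_1,\tau_2)$ into the interior $\{r \le t^\star\}$ and the asymptotic region $\{r \ge R\}$ (and the intermediate bounded-$r$ region). First I would write out (\ref{contract}) using the null components of $Q[\mathcal{W}]$ listed in section \ref{ncbelrob} and the null decomposition of ${}^{(T)}\widehat{\pi}$ from section \ref{geoSchw}/Definition \ref{ultimateKilling} (recall $tr\,{}^{(T)}\pi = 0$, so only the traceless part ${}^{(T)}\mathbf{i},{}^{(T)}\mathbf{j},{}^{(T)}\mathbf{m},{}^{(T)}\underline{\mathbf{m}},{}^{(T)}\mathbf{n},{}^{(T)}\underline{\mathbf{n}}$ enters). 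Since $T = \tfrac12(p e_3 + q e_4)$ with $p,q$ $C^1$-bounded, each term of $K_1^{TTT}$ is schematically $\pi \cdot \mathcal{W} \cdot \mathcal{W}$ with fixed bounded coefficients, so $|K_1^{TTT}[\mathcal{W}]| \lesssim \|{}^{(T)}\pi\| \cdot |\mathcal{W}|^2$ pointwise, except one must be careful that the $\underline{\alpha}^2$ term can only be paired with ${}^{(T)}\mathbf{n}$ (the $\widehat\pi_{44}$ component), never with ${}^{(T)}\underline{\mathbf{n}}$, because $Q_{3333}=2|\underline\alpha|^2$ contracts against $\pi^{44}$ only once the two free $T$-indices are saturated — this is exactly the book-keeping that keeps the dangerous $r$-weighted $\underline\alpha$-term under control near infinity.

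Next I would estimate the interior contribution $\{r \le t^\star\}$: there the pointwise bound (\ref{decaydeft}), $\|{}^{(T)}\pi\| \le \epsilon (t^\star)^{-5/4}$, gives $\int_{\{r \le t^\star\}} |K_1^{TTT}| \lesssim \epsilon \int \int (t^\star)^{-5/4} |\mathcal{W}|^2 r^2\, dr\, d\omega\, dt^\star$. Using $|\mathcal{W}|^2$ on a $t^\star$-slice of bounded $r$-extent is controlled by $E[\mathcal{W}](\tilde\Sigma_{t^\star})$ (the $\{r \le R\}$ part), and $\int_{\tau_1}^\infty (t^\star)^{-5/4} dt^\star \le 4 (\tau_1)^{-1/4}$, which produces the first term on the right of (\ref{wts}). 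For the bounded-$r$ region $\{R_0 \le r \le R\}$ outside the interior — where $t^\star \ge r$ may fail — I would use instead the global pointwise bound $\|{}^{(T)}\pi\| \le \epsilon/r^2 \le \epsilon/R_0^2$, absorbing this into the same $E[\mathcal{W}](\tilde\Sigma_\tau)$ term after integrating in $t^\star$; here one needs $t^\star \ge \tau_1$ and the fact that on the null-ears region the relevant slices carry weighted $L^2$-control. In the asymptotic region $\{r \ge R\}$ I would use the global bounds from Definition \ref{ultimateKilling}: all components of ${}^{(T)}\pi$ except ${}^{(T)}\mathbf{i}$ decay like $\epsilon/r^2$, while $\|{}^{(T)}\mathbf{i}\| \le \epsilon/(r\max(1,t^\star - r))$. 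For the terms paired with the fast-decaying $\pi$-components, the extra $r^{-2}$ together with the $r^{3-\delta}$ weight in $E[\mathcal{W}](\tilde\Sigma_\tau)$ on $N_{out}$ (and the $r^{-1-\delta}$-weighted spacetime bound) closes after integrating $dr$; the $\max(1,t^\star-r)^{-1}$ factor on ${}^{(T)}\mathbf{i}$, integrated over the characteristic region, supplies a $(\tau_1)^{-1/4}$ (or better) gain — this is where one uses that $\int du$ of $\max(1,u)^{-1}$-type weights against $\tau_1$-shifted integrands produces the claimed decay. The one term that cannot be fully absorbed this way is the $\underline\alpha$-contribution near infinity: $\underline\alpha$ carries only the weakest $r$-weight (entry $p_6 = 0$, or $\le 2$, in the decay matrices), so its weighted spacetime norm is not part of $E[\mathcal{W}]$; hence that single term must be left on the right-hand side, giving precisely the $\tfrac{\epsilon}{\sqrt{\tau_1}} \int r^{1-\delta}\|\underline\alpha\|^2$ term of (\ref{wts}). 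Finally, when $\tilde{\mathcal{M}}$ is replaced by $\mathcal{M}$ there is no asymptotic region $\{r \ge R\}$ in the slab (the slices $\Sigma_\tau$ extend to infinity but the domain is cut off), so no such leftover term appears.

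\textbf{Main obstacle.} The delicate point is the asymptotic region: one must track exactly which null component of ${}^{(T)}\pi$ multiplies which curvature-squared term in $K_1^{TTT}$, verify that $\underline\alpha^2$ only ever appears against the $r^{-2}$-decaying components (so it is at worst linearly weighted, matching $r^{1-\delta}$), and extract the $(\tau_1)^{-1/4}$ decay from the $\max(1,t^\star-r)^{-1}$ weight on ${}^{(T)}\mathbf{i}$ by an honest integration over the characteristic cone rather than a crude pointwise bound. Getting the $r$-weight bookkeeping to match the definitions of $E[\mathcal{W}]$ and the weighted spacetime norm — in particular seeing that everything except the isolated $\underline\alpha$-term genuinely closes — is the crux; the interior estimate, by contrast, is a routine consequence of (\ref{decaydeft}).
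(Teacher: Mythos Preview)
Your overall strategy matches the paper's: decompose $K_1^{TTT}$ into null components, use the pointwise bound $\|{}^{(T)}\pi\| \le \epsilon (t^\star)^{-5/4}$ in the interior, and treat the $r\ge R$ region separately. However, two points of your reasoning are off and would lead you astray if carried out literally.

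\textbf{The reason the $\underline{\alpha}$ term is singled out.} It is not primarily a matter of the weight $p_6$ in the decay matrices. The auxiliary energy $E[\mathcal{W}](\tilde\Sigma_\tau)$ in the statement simply \emph{omits} $\underline{\alpha}$ on the null part $N_{out}$ (see its definition below (\ref{degXall})); $\underline{\alpha}$ is the component that never appears in the flux through an outgoing null hypersurface. Consequently, any $\underline{\alpha}$ contribution in the asymptotic region cannot be put into the $\sup_\tau E[\mathcal{W}](\tilde\Sigma_\tau)$ box and must remain as the explicit spacetime integral. Moreover, the most delicate term in the paper is not the pure $\|\underline{\alpha}\|^2 \cdot {}^{(T)}\mathbf{n}$ term you emphasise (for which the $\epsilon r^{-2}$ decay of $\mathbf{n}$, together with the interior bound, yields $|\mathbf{n}| \lesssim u^{-1/2} r^{-3/2}$ and the result follows directly), but the \emph{cross term} $\rho(\mathcal{W})\,\underline{\alpha}(\mathcal{W})\cdot {}^{(T)}\mathbf{i}$ coming from $Q_{AB33}$, where the weakly decaying $\mathbf{i}$ is involved. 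There one splits via Cauchy--Schwarz: the $\rho^2$ piece, paired with $\epsilon u^{-3/2}$, goes into the flux on $N_{out}$; the $\underline{\alpha}^2$ piece, paired with $\epsilon (t^\star)^{-1/2} r^{-2}$, produces the stated spacetime term.

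\textbf{The $\mathcal{M}$ case.} Your final sentence is incorrect: $\mathcal{M}(\tau_1,\tau_2)$ is the slab between the \emph{spacelike} slices $\Sigma_{\tau_1}$ and $\Sigma_{\tau_2}$, and it \emph{does} extend to $r=\infty$. The correct reason the last term may be dropped is that on the spacelike $\Sigma_\tau$ the energy $E[\mathcal{W}](\Sigma_\tau)$ contains \emph{all} components including $\underline{\alpha}$, and one has the uniform bound $\|{}^{(T)}\pi\| \le \epsilon (t^\star)^{-5/4}$ everywhere on $\Sigma_{t^\star}$ (for $r> t^\star$ use $\epsilon r^{-2}< \epsilon (t^\star)^{-2}$). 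Then simply $\int_{\tau_1}^{\tau_2} (t^\star)^{-5/4}\, dt^\star \le 4(\tau_1)^{-1/4}$ closes the estimate with no leftover term.
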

\begin{proof}
In the region ${\tilde{\mathcal{M}}\left(\tau_1, \tau_2\right)\cap \{r\leq R\}}$ and for $\mathcal{M}\left(\tau_1,\tau_2\right)$ the estimate is easily obtained without the last term in view of the uniform $\left(t^\star\right)^{-\frac{5}{4}}$-decay of all components of the deformation tensor, cf.~(\ref{decaydeft}). In the region $r\geq R$ we have have to take into account that $\underline{\alpha}$ does not appear on the characteristic slices, which gives rise to the last term. In particular, using the formulae of section (\ref{ncbelrob}) and the null-components of the deformation tensor we can estimate the most difficult terms which arise from the contraction (\ref{contract}).
\begin{align}
\int \rho\left(\mathcal{W}\right) \underline{\alpha} \left(\mathcal{W}\right)\cdot {}^{(T)}\mathbf{i} \, \sqrt{g} dt^\star dr d\omega \leq \nonumber \\ 
\epsilon \cdot \int du \frac{1}{u^\frac{3}{2}} \int dv d\omega \sqrt{g} \, \rho\left(\mathcal{W}\right)^2 + \epsilon \int \|\underline{\alpha}\left(\mathcal{W}\right)\|^2 \left(t^\star\right)^{-\frac{1}{2}} \frac{\sqrt{g}}{r^2}dt^\star dr d\omega \nonumber \\ \leq \frac{\epsilon}{\sqrt{\tau_1}} \sup_{\tau \in \left(\tau_1,\tau_2\right)} E \left(\mathcal{W},  \tilde{\Sigma}_{\tau}\right) +  \frac{\epsilon}{\sqrt{\tau_1}}\int_{\tilde{\mathcal{M}}\left(\tau_1, \tau_2\right)\cap \{r\geq R\}} r^{1-\delta} \|\underline{\alpha}\|^2 dt^\star dr d\omega
\end{align}
and using that the component $n$ of the deformation tensor decays like $\frac{1}{u^\frac{1}{2} r^\frac{3}{3}}$,
\begin{align}
\int \| \underline{\alpha} \left(\mathcal{W}\right)\|^2 \, | {}^{(T)}\mathbf{n}| \, \sqrt{g} dt^\star dr d\omega \leq  \frac{\epsilon}{\sqrt{\tau_1}}\int_{\tilde{\mathcal{M}}\left(\tau_1, \tau_2\right)\cap \{r\geq R\}} r^{1-\delta} \|\underline{\alpha}\|^2 dt^\star dr d\omega \nonumber \, .
\end{align}
\end{proof}
\subsection{A non-degenerate energy} \label{anondege}
Note that in Schwarzschild the Ricci coefficients take the following values close to the horizon:
\begin{equation}
\frac{1}{4}tr \underline{H} \approx \Omega \approx -\frac{M}{r^2} \approx -\frac{1}{4M} \textrm{ \ \ \ \ \ and \ \ \ \ } tr H \approx \underline{\Omega} \approx 0
\end{equation}
Define
\begin{equation} \label{gammdefo}
 \gamma \left(r\right) = \xi\left(r\right) \left(1 + \frac{1}{c_{red}}k_-\right) \textrm{ \ \ \ and  \ \ \ } \eta \left(r\right) = \xi \left(r\right) \frac{1}{c_{red}}{k_-} \, ,
\end{equation}
where $\xi$ an interpolating function equal to $1$ in $r\leq r_Y$ and equal to zero for $r \geq r_Y+ \frac{r_Y-2M}{2}$ and $c_{red}=\frac{1}{20k}$ a small redshift constant ($k$ being the number of derivatives in the ultimately Schwarzschildean assumption). We will see the reason for that later.)
For $r \leq r_Y$ sufficiently close to the horizon we have in Schwarzschild
\begin{align}
-\slashed{D}_3 \gamma &= -\left[\partial_t - \partial_r \right] \gamma = \frac{1}{c_{red}} \frac{2M}{r^2} > \frac{1}{4M c_{red}} \, ,  \nonumber \\ 
|\slashed{D}_4 \gamma| &= |\left[k_+\partial_t +k_- \partial_r \right] \gamma| < \frac{1}{10M}  \, .\nonumber
\end{align}
These inequalities are stable under perturbation as we are working in a regular coordinate system. We summarize this as 
\begin{lemma} \label{rscon}
Choose $c_{red}=\frac{1}{20k}$. We can find $r_Y>2M$ such that in the region $r \leq r_Y$ the following inequalities hold
\begin{align}
|\slashed{D}_4 \gamma| + |\gamma \underline{\Omega}| + | tr H | \leq \frac{1}{20M}
\textrm{ \ \ \ , \ \ \ } \frac{1}{2} < - M \cdot tr \underline{H}  < 2 \, , \nonumber \\
-\slashed{D}_3 \gamma \geq \frac{1}{4M c_{red}} \textrm{ \ \ \ , \ \ \ } \frac{1}{2} < - 4M \Omega < 2  \textrm{ \ \ \ , \ \ \ } 0 \leq \frac{1-\mu}{c_{red}} \leq \frac{1}{10} \nonumber \, .
\end{align}
\end{lemma}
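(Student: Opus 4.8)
The statement to prove (Lemma \ref{rscon}) is essentially a continuity/stability assertion: the listed (in)equalities hold in Schwarzschild on a small enough collar $r \le r_Y$ of the horizon $r = 2M$, and — since we work in the regular $(t^\star, r, \theta, \phi)$ coordinate system of Section \ref{geoSchw} — persist under a $C^1$-small (indeed $C^{k-1}$-small by Definition \ref{ultS}) perturbation of the metric. So the proof is really two reductions: (i) verify each inequality \emph{strictly} in exact Schwarzschild at $r = 2M$, with slack; (ii) invoke continuity in $r$ and in the perturbation parameter $\epsilon$.

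\medskip

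\textbf{Plan.} First I would fix the redshift constant $c_{red} = \frac{1}{20k}$ as in the statement and recall the Schwarzschild null-frame Ricci coefficients from Section \ref{geoSchw}: in the region $r \le 6M$ the interpolating function $\chi \equiv 1$, so $k_\chi^- = k_- = 1-\mu$, $k_\chi = 1$, and one reads off $tr\,\underline{H} = -\frac{2}{r}(1-\mu) \to -\frac{1}{M}$, $tr\,H = \frac{1}{r}(1-\mu) \to 0$, $\Omega = -\frac{M}{r^2}(\chi - r\chi') \to -\frac{1}{4M}$, and $\underline{\Omega} \to 0$ as $r \to 2M$ (here $\chi' = 0$ near the horizon). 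Next I would compute $\slashed{D}_3 \gamma$ and $\slashed{D}_4 \gamma$ for $\gamma = \xi(r)(1 + \frac{1}{c_{red}} k_-)$ on $r \le r_Y$, where $\xi \equiv 1$; using $\hat{e}_3 = \partial_{t^\star} - \partial_r$ and $\hat{e}_4 = \partial_{t^\star} + k_- \partial_r$ there (from (\ref{sfchi}) with $\chi = 1$, $k_\chi = 1$) together with $\slashed{D}_3, \slashed{D}_4$ projecting these and $\gamma$ being a scalar, one gets $-\slashed{D}_3 \gamma = -\hat{e}_3 \gamma = \frac{1}{c_{red}}\partial_r k_- = \frac{1}{c_{red}} \frac{2M}{r^2} > \frac{1}{4Mc_{red}}$ for $r$ near $2M$, and $|\slashed{D}_4 \gamma| = |k_- \partial_r (\frac{1}{c_{red}} k_-)| = \frac{k_-}{c_{red}} \cdot \frac{2M}{r^2} \le \frac{1}{10M}$ once $r$ is close enough to $2M$ that $k_- = 1-\mu$ is small (note the extra factor $k_-$ is what gives the smallness here, and the last displayed inequality $0 \le \frac{1-\mu}{c_{red}} \le \frac{1}{10}$ is the precise quantitative form of ``close enough''). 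Similarly $|\gamma\underline{\Omega}|$ is controlled since $\gamma$ is bounded on the collar and $\underline{\Omega} = O(k_-) \to 0$.

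\medskip

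\textbf{Choice of $r_Y$.} Each of the five Schwarzschild inequalities
\[
|\slashed{D}_4 \gamma| + |\gamma \underline{\Omega}| + |tr\,H| < \tfrac{1}{20M}, \quad \tfrac{1}{2} < -M\,tr\,\underline{H} < 2, \quad -\slashed{D}_3 \gamma > \tfrac{1}{4Mc_{red}}, \quad \tfrac{1}{2} < -4M\Omega < 2, \quad \tfrac{1-\mu}{c_{red}} \le \tfrac{1}{10}
\]
holds \emph{strictly} at $r = 2M$ (the first, third and fifth limits being $0$, $+\infty$ and $0$, the middle two being exactly $1$), hence by continuity of all these functions of $r$ on a neighborhood of $2M$ there is $\rho_0 > 2M$ such that all hold on $2M \le r \le \rho_0$; set $r_Y$ to be any value in $(2M, \rho_0)$ (I would also record that $r_Y$ should be taken below the onset of trapping and below $6M$, so $\chi \equiv 1$ there, which is automatic for $r_Y$ close to $2M$). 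Having fixed $r_Y$, the interpolating function $\xi$ is then fixed ($\equiv 1$ on $r \le r_Y$, $\equiv 0$ for $r \ge r_Y + \frac{r_Y - 2M}{2}$), and on the collar $r \le r_Y$ the formulas above are exact.

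\medskip

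\textbf{Stability under perturbation.} Finally I would pass from Schwarzschild to a general ultimately Schwarzschildean $g$. By Definition \ref{ultS}(2) and Definition \ref{RRCapproach}, on the bounded region $r \le r_Y$ the metric components, their first derivatives, and the null-frame Ricci coefficients $tr\,H$, $tr\,\underline{H}$, $\Omega$, $\underline{\Omega}$ differ from their Schwarzschild values by $O(\epsilon)$ (pointwise, uniformly on this compact-in-$r$ collar), and $\slashed{D}_3, \slashed{D}_4$ differ from $\hat{e}_3, \hat{e}_4$ acting on scalars by $O(\epsilon)$ as well — so $\slashed{D}_3 \gamma$, $\slashed{D}_4 \gamma$ (with $\gamma$ the \emph{fixed} function of $r$ above) are $O(\epsilon)$-close to their Schwarzschild values. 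Since each of the five inequalities was verified with a fixed margin on $r \le r_Y$, choosing $\epsilon$ small enough (depending only on $M$ and $k$, since $c_{red} = \frac{1}{20k}$) preserves all of them. This is the ``these inequalities are stable under perturbation as we are working in a regular coordinate system'' remark made just before the Lemma, made precise. The only mild subtlety — and the one point I would be careful about — is that $r_Y$ and $c_{red}$ must be fixed \emph{first}, purely from Schwarzschild data, with definite slack, and only \emph{then} is $\epsilon$ shrunk; there is no circularity because the margins in step (ii) do not depend on $\epsilon$. There is no genuine analytic obstacle here; the ``hard part'' is merely bookkeeping — making sure the same $r_Y$ simultaneously works for all five inequalities and is compatible with the later requirements on $r_Y$ (e.g. the redshift region lying strictly inside $\{r < 3M\}$ and inside $\{\chi \equiv 1\}$), which is arranged by taking $r_Y$ close enough to $2M$.
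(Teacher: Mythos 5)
Your proposal is correct and takes essentially the same route as the paper: verify each inequality with definite slack in exact Schwarzschild near $r=2M$ (using the explicit frame formulas to get $-\slashed{D}_3\gamma = \tfrac{k_+}{c_{red}}\tfrac{2M}{r^2}$ and $\slashed{D}_4\gamma = \tfrac{k_-}{c_{red}}\tfrac{2M}{r^2}$), then invoke $C^1$-stability of the inequalities in the regular $(t^\star,r)$-chart after fixing $r_Y$ and $c_{red}$ before shrinking $\epsilon$. One small bookkeeping slip worth noting: with $\chi\equiv 1$ one has $k_\chi = k_+$, not $1$, so $\hat e_4 = k_+\partial_{t^\star} + k_-\partial_r$; this does not affect your estimate of $\slashed{D}_4\gamma$ since $\gamma$ depends only on $r$, and you did land on the correct $\hat e_3 = \partial_{t^\star} - \partial_r$.
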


Defining now the vectorfield
\glossary{
name={$Y$}, 
description={redshift vectorfield (and, unfortunately, a null component of the Ricci-rotation-coefficients, cf.~the appendix. Since we work in a gauge where $Y=0$, no confusion can arise.)}
}
\begin{equation} \label{Ydef}
Y = \gamma e_3 + \eta e_4 \, ,
\end{equation}
we observe using the formulae (\ref{nucos})-(\ref{nucoe}) in conjunction with the bounds of Lemma \ref{rscon} that in $r\leq r_Y$ the bounds
\begin{align} \label{defsum}
{}^{(Y)}\pi^{33} \geq \frac{1}{8M} \textrm{ \ , } {}^{(Y)}\pi^{34} \geq \frac{1}{18Mc_{red}}\textrm{ \ , } {}^{(Y)}\pi^{44} \geq \frac{1}{18Mc_{red}} \, , \nonumber \\
 |{}^{(Y)}\pi^{AB} \delta_{AB}| \leq \frac{2}{M}  \textrm{ \ \ , \ \ }  \| {}^{(Y)}\pi^{A3}\| + \| {}^{(Y)}\pi^{A4}\| \leq \epsilon 
\end{align}
hold. Finally, we set
\glossary{
name={$N$}, 
description={$N=T+Y$, everywhere timelike vectorfield, coincides with $T$ away from the horizon}
}
\begin{equation}
 N = T + Y \, ,
\end{equation}
which is future directed timelike everywhere, and state
\begin{proposition} \label{rsW}
The $N$-boundary-term satisfies
\begin{align} \label{fiid1}
 Q\left[\mathcal{W}\right] \left(N,N,N, n^\mu_{\Sigma}\right) \geq b \Bigg(|\underline{\alpha}|^2 + | \alpha|^2 + |\underline{\beta}|^2 + |\beta|^2 + \left(\rho^2 + \sigma^2 \right) \Bigg) \, 
\end{align}
everywhere on the black hole exterior. The quantity 
\begin{equation}
K_1^{NNN} \left[\mathcal{W}\right] = 3Q\left[\mathcal{W}\right] \left( \phantom{}^{(T+Y)}\pi, T+Y, Y+Y\right)  
\end{equation}
satisfies
\begin{align}
K_1^{NNN} \left[\mathcal{W}\right] &\geq b \cdot  Q\left[\mathcal{W}\right]\left(N,N,N, n^\mu_{\Sigma}\right)    \textrm{ \ \ \ \ for $r \leq r_Y$}\nonumber
\end{align}  
\begin{equation}
 |K_1^{NNN}\left[\mathcal{W}\right]| \leq B \cdot Q\left[\mathcal{W}\right]\left(T,T,T, n_{\Sigma}\right) \textrm{ \ \ \ \ for $r_Y \leq r \leq r_Y + \frac{r_Y-2M}{2}$} \nonumber
\end{equation}
\begin{equation}
K_1^{NNN}\left[\mathcal{W}\right] = K_1^{TTT}\left[\mathcal{W}\right] \textrm{ \ \ \ \ for $r\geq r_Y+ \frac{r_Y-2M}{2}$}\nonumber
\end{equation}
\end{proposition}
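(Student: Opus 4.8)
The plan is to prove the three claims of Proposition \ref{rsW} by carefully splitting the black hole exterior into the three regions $\{r \leq r_Y\}$, $\{r_Y \leq r \leq r_Y + \frac{r_Y - 2M}{2}\}$ (the transition region where $\xi$ is supported but not $\equiv 1$) and $\{r \geq r_Y + \frac{r_Y-2M}{2}\}$, and in each region using the positivity/dominant-energy properties of $Q$ recorded after (\ref{mainid}) together with the explicit bounds (\ref{defsum}) on the null-components of ${}^{(Y)}\pi$.

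First I would establish (\ref{fiid1}). Since $N = T + Y$ is future-directed timelike everywhere (as noted: $T$ is causal, $Y = \gamma e_3 + \eta e_4$ with $\gamma, \eta \geq 0$ and $\gamma$ strictly positive where $p$ degenerates, so the sum is strictly timelike even on $\mathcal{H}^+$), the general fact (quoted from \cite{ChristKlei} right after (\ref{mainid})) that $Q[\mathcal{W}](\mathcal{X},\mathcal{Y},\mathcal{Z},n)$ controls the sum of squares of all null-components of $\mathcal{W}$ whenever $\mathcal{X},\mathcal{Y},\mathcal{Z},n$ are timelike gives exactly (\ref{fiid1}), with the constant $b$ depending on a uniform lower bound for $-g(N,N)$ and $-g(n_\Sigma, n_\Sigma)$ on the exterior; these bounds follow from Lemma \ref{rscon} together with the ultimately Schwarzschildean closeness estimates (so $n_\Sigma$ is $C^1$-close to $\tilde n_\Sigma$). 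I would write the null-expansion of $Q(N,N,N,n_\Sigma)$ explicitly (using $2N = (p + 2\gamma) e_3 + (q + 2\eta) e_4$ plus the $e_A$-components of $n_\Sigma$, which are $\epsilon$-small, and the component list of section \ref{ncbelrob}) to make the coefficient of each $|\cdot|^2$ manifestly bounded below.

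Next, the region $\{r \leq r_Y\}$. Here $\xi \equiv 1$, so $\gamma = 1 + \frac{1}{c_{red}} k_-$, $\eta = \frac{1}{c_{red}} k_-$, and the bounds (\ref{defsum}) hold: ${}^{(Y)}\pi^{33}, {}^{(Y)}\pi^{34}, {}^{(Y)}\pi^{44}$ are bounded strictly below by positive constants (of order $\frac{1}{M}$ resp. $\frac{1}{Mc_{red}}$), the angular trace part is $O(\frac1M)$ and the mixed $\pi^{3A}, \pi^{4A}$ are $\epsilon$-small. I would substitute these, together with the component list of $Q$ from section \ref{ncbelrob} and $2N$ from above, into $K_1^{NNN}[\mathcal{W}] = 3\,Q^{\alpha\beta\gamma\delta}[\mathcal{W}]\,{}^{(N)}\pi_{\alpha\beta} N_\gamma N_\delta$. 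One should also include the contribution of ${}^{(T)}\pi$ to ${}^{(N)}\pi = {}^{(T)}\pi + {}^{(Y)}\pi$, but by the ultimately Killing assumption ${}^{(T)}\pi$ is $\epsilon$-small (pointwise, via (\ref{decaydeft}) and the $\frac{\epsilon}{r^2}$ bounds), hence harmless near the horizon where $r$ is bounded. The point is that the dominant contribution comes from $\pi^{44} Q_{44\gamma\delta} N^\gamma N^\delta$-type terms and their $33$, $34$ analogues, each of which feeds the $\alpha, \underline\alpha, \rho, \sigma, \beta, \underline\beta$ squares with a definite positive sign; the few indefinite-sign terms (from $\pi^{AB}\delta_{AB}$ and the $\epsilon$-small $\pi^{3A},\pi^{4A}$, and from ${}^{(T)}\pi$) are bounded by $\epsilon$ or $O(\frac1M)$ times $Q(N,N,N,n_\Sigma)$ and absorbed, provided $c_{red}$ is chosen small enough (this is precisely the role of $c_{red} = \frac{1}{20k}$ being small: it makes the $\pi^{34}, \pi^{44}$ coefficients large relative to the indefinite ones). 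This yields $K_1^{NNN}[\mathcal{W}] \geq b\, Q[\mathcal{W}](N,N,N,n_\Sigma)$ in $\{r \leq r_Y\}$, and then (\ref{fiid1}) converts the right side to the stated sum of squares.

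Finally the two outer regions. For $r \geq r_Y + \frac{r_Y - 2M}{2}$ we have $\xi \equiv 0$, so $\gamma = \eta = 0$, hence $Y = 0$, $N = T$, and trivially $K_1^{NNN}[\mathcal{W}] = K_1^{TTT}[\mathcal{W}]$. In the transition region $r_Y \leq r \leq r_Y + \frac{r_Y - 2M}{2}$, the derivatives of $\xi$ are bounded by a constant depending on $r_Y - 2M$, so all components of ${}^{(Y)}\pi$ are merely bounded (no smallness, no sign), and $Y$ itself is bounded; since this region is at a definite positive distance from the horizon, $T$ is uniformly timelike there, $Q(T,T,T,n_\Sigma)$ controls all null-components, and one simply estimates $|K_1^{NNN}[\mathcal{W}]| \leq B\, \|{}^{(N)}\pi\|\, |N|^2\, |\mathcal{W}|^2 \leq B\, Q[\mathcal{W}](T,T,T,n_\Sigma)$ by Cauchy--Schwarz in the orthonormal frame. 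I expect the main obstacle to be the bookkeeping in the region $\{r\leq r_Y\}$: one must track \emph{which} contractions of the $Q$-components against $\pi^{34},\pi^{33},\pi^{44},\pi^{AB}$ produce good-signed terms and verify that the bad terms are genuinely lower order (controlled by $\epsilon$, by the smallness of $c_{red}$, or by the $\epsilon$-smallness of ${}^{(T)}\pi$), so that the coercive structure survives. This is the Bel--Robinson analogue of the redshift computation of \cite{DafRod2}, and — as in that reference — the delicate point is that $\pi^{44}$ (the ``$e_3 e_3$-derivative of $\gamma$'') has a definite sign from $-\slashed{D}_3\gamma \geq \frac{1}{4Mc_{red}}$, i.e. from the positivity of the surface gravity, and this is what forces all components, not just $\alpha$ and $\beta$, into the good term near $\mathcal{H}^+$.
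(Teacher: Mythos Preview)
Your proposal is correct and follows essentially the same approach as the paper: the boundary bound (\ref{fiid1}) comes from $N$ being strictly timelike, the outer two regions are treated exactly as you describe, and in $\{r\leq r_Y\}$ the paper likewise expands $Q_{abcd}[\mathcal{W}]\,{}^{(Y)}\pi^{ab}N^cN^d$ in the null frame, uses (\ref{defsum}) to identify the positive $\pi^{33},\pi^{34},\pi^{44}$ contributions, and absorbs the wrong-signed $\pi^{AB}$ term and the $\epsilon$-small $\pi^{3A},\pi^{4A}$ terms into them. Your treatment is slightly more complete in that you explicitly address the ${}^{(T)}\pi$ contribution to ${}^{(N)}\pi$ (which the paper silently drops as $\epsilon$-small), and you correctly isolate the smallness of $c_{red}$ as the mechanism that makes the absorption work.
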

\begin{proof}
Since $N=\left(\frac{p}{2} + \gamma\right) e_3 + \left(\frac{q}{2} + \eta \right) e_4 = x e_3 + y e_4$ is timelike, the bound (\ref{fiid1}) is immediate. 
For the bulk term, only the statement in $r\leq r_Y$ is not obvious. For this we need to explicitly compute the contraction $Q_{abcd} \left[\mathcal{W}\right] \phantom{}^{(Y)}\pi^{ab} N^c N^d$.
\begin{align}
Q_{abcd} \left[\mathcal{W}\right] \phantom{}^{(Y)}\pi^{ab} N^c N^d = Q_{3333} \phantom{}^{(Y)}\pi^{33} x^2 + Q_{3334} \left(2\phantom{}^{(Y)}\pi^{34} x^2 + 2xy \phantom{}^{(Y)}\pi^{33}\right) \nonumber \\
+ Q_{3434} \left(\phantom{}^{(Y)}\pi^{33} y^2 + 4\phantom{}^{(Y)}\pi^{34} xy + \phantom{}^{(Y)}\pi^{44} x^2\right) + Q_{3444} \left(2\phantom{}^{(Y)}\pi^{34} y^2 + 2\phantom{}^{(Y)}\pi^{44} xy \right) \nonumber \\
+ Q_{4444} \phantom{}^{(Y)}\pi^{44} y^2 
+ \underline{ \phantom{}^{(Y)}\pi^{AB} \left(Q_{AB33} x^2 + 2 Q_{AB34} xy + Q_{AB44} y^2\right)} \nonumber \\
Q_{3Acd} \left[\mathcal{W}\right] \phantom{}^{(Y)} \pi^{3A} N^c N^c + Q_{4Acd}\left[\mathcal{W}\right] \phantom{}^{(Y)} \pi^{4A} N^c N^c \, . \nonumber
\end{align}
Note that the terms in the last line are $\epsilon$-small in view of (\ref{defsum}). As $x$ and $y$ are uniformly bounded above and below and in view of the properties of ${}^{(Y)}\pi$ summarized in (\ref{defsum}), we find (using the formulae for the components of the Bel-Robinson tensor collected in section \ref{ncbelrob}) that the wrong-signed underlined term can be absorbed by the other, positive, terms.  This yields the desired bound in $r\leq r_Y$.
\end{proof}
We remark that the above proposition should be thought of as the analogue of the redshift for the wave equation, cf.~\cite{DafRod2, Mihalisnotes}.
\subsection{A (decaying) $T$-energy at the lowest order} \label{normalizedenergy}
As mentioned at the beginning of this section, the $T$-energy at the lowest order involves the $L^2$ energy of $\rho$ itself, which does not decay. We will now derive a renormalized $T$-energy at the lowest order, using the renormalized null Bianchi equations directly:
\begin{proposition}
At the lowest order, we have the estimate
\begin{align}
\int_{\Sigma_{\tau_2} \cap \{ r \geq r_Y - \frac{3}{4} \left(r_Y-2M\right) \}} \|\tilde{W}\|^2 \, r^2 dr \, d\omega \leq \int_{\Sigma_{\tau_1} } \|\tilde{W}\|^2 \, r^2 dr \, d\omega  \nonumber \\ + \epsilon \sup_{\tau_1,\tau_2} \int_{\Sigma_{\tau} \cap \{r \geq r_Y\}} |\tilde{W}|^2 r^2 dr d\omega
+ B  \|r^{1+\delta} \left(\mathfrak{R}-\mathfrak{R}_{SS}\right)^2 \|_{L^\infty}  \int_{\mathcal{M}\left(\tau_1,\tau_2\right)} \frac{1}{r^{1+\delta}} \|\tilde{W}\|^2 \nonumber \\ + B \|\rho^2\|_{L^\infty} \int_{\mathcal{M}\left(\tau_1,\tau_2\right)} \|\mathfrak{R}-\mathfrak{R}_{SS}\|^2 \, , \nonumber 
\end{align}
where we recall that $\tilde{W}$ contains the renormalized component $\hat{\rho} = \rho + \frac{2M}{r^3}$. The same identity holds for the slices $\tilde{\Sigma}_{\tau}$ and regions $\tilde{\mathcal{M}}\left(\tau_1,\tau_2\right)$.
\end{proposition}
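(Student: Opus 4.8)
The plan is to derive this renormalized energy estimate not from the Bel-Robinson tensor of $W$ (whose $T$-energy carries the non-decaying $\int|\rho|^2$) but directly from the null-Bianchi equations in renormalized form, treating the pair $(\hat\rho,\sigma)$ via (\ref{renormrho}) and the remaining components $\alpha,\beta,\underline\beta,\underline\alpha$ via (\ref{Bianchi1})--(\ref{Bianchi10}). First I would set up the standard current construction: for each null-Bianchi pair, multiply the $e_3$-equation for one component and the $e_4$-equation for its conjugate by the appropriate power-weighted component (mimicking the contraction that produces $Q(T,T,T,n)$), add, and use that $\slashed{\mathcal{D}}_1^\star$, $\slashed{\mathcal{D}}_2^\star$ are the formal $L^2$-adjoints of $\slashed{div}$, $\slashed{curl}$ on the spheres $S^2_{t^\star,u}$, so that the angular-derivative terms combine into a total divergence on $S^2$ plus lower-order terms involving $\mathfrak{R}-\mathfrak{R}_{SS}$ (the deviation of the connection coefficients from their Schwarzschild values, which is what enters because the frame derivatives $\slashed{D}_3,\slashed{D}_4$ of $\slashed{g}$ differ from Schwarzschild only by such terms). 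Integrating the resulting divergence identity over $\mathcal{M}(\tau_1,\tau_2)$ with the volume form $r^2\, dt^\star dr\, d\omega$ produces: boundary terms on $\Sigma_{\tau_1}$, $\Sigma_{\tau_2}$ and $\mathcal{H}(\tau_1,\tau_2)$ of the schematic form $\int \|\tilde W\|^2 r^2$; spacetime terms from the weight functions and from $tr H$, $tr\underline H$ differentiation which, in exact Schwarzschild, reproduce the positive structure of $K_1^{TTT}$ (degenerate at the horizon, hence the restriction to $r\geq r_Y-\tfrac34(r_Y-2M)$ and the need for the $\epsilon$-term near $r_Y$, exactly as in Lemma \ref{Tboundaryterm}); and error terms.

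The key point is the bookkeeping of the two types of error. The first comes from the deviation of the perturbed metric from Schwarzschild in the principal (divergence/transport) part of the equations: these are of the form $\int (\mathfrak{R}-\mathfrak{R}_{SS})\cdot \tilde W\cdot \tilde W$, which by Cauchy-Schwarz and the weighted Hardy/bulk bound are controlled by $B\,\|r^{1+\delta}(\mathfrak{R}-\mathfrak{R}_{SS})^2\|_{L^\infty}\int_{\mathcal{M}}\frac{1}{r^{1+\delta}}\|\tilde W\|^2$. The second, and genuinely new, error comes from the renormalization: the inhomogeneous terms $-\frac{6M}{r^3}\big(\frac{\slashed{D}_3 r}{r}+\frac12 tr\underline H\big)$ and its $e_4$-analogue in (\ref{renormrho}) vanish identically in Schwarzschild (where $\slashed{D}_3 r = -\frac12 r\,tr\underline H$, $\slashed{D}_4 r=-\frac12 r\,tr H$ up to the $\chi$-corrections), so they are themselves of the form $\rho\cdot(\mathfrak{R}-\mathfrak{R}_{SS})$, i.e.\ $\overline\rho$ times a decaying Ricci deviation, and when paired with $\hat\rho$ in the energy identity give a term bounded by $B\|\rho^2\|_{L^\infty}\int_{\mathcal{M}}\|\mathfrak{R}-\mathfrak{R}_{SS}\|^2$ after one application of Cauchy-Schwarz that throws the $\hat\rho^2$ factor into the $\tilde W$-bulk (absorbed on the left, or harmlessly into the $\frac{1}{r^{1+\delta}}$-term). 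I would also need to absorb the $E_3(\cdot),E_4(\cdot)$ error terms from the Bianchi system itself, but these all have the null-form structure $\mathfrak{R}\cdot W$ (again with at least one decaying factor, since every Ricci coefficient except $tr H,tr\underline H$ deviates from Schwarzschild by a decaying amount and $\widehat{\underline H}$ decays in $r$), so they fall into the same two buckets.

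The main obstacle I anticipate is the horizon: the renormalized $T$-current is degenerate there (the $p^2$ and $p^3$ weights in the $\underline\alpha,\underline\beta,\rho,\sigma$ boundary terms vanish on $\mathcal{H}^+$ since $p|_{\mathcal{H}^+}=0$), so one cannot directly get a non-degenerate $\|\tilde W\|^2$ control on a full slice $\Sigma_{\tau_2}$; this is precisely why the left-hand side is restricted to $\{r\geq r_Y-\tfrac34(r_Y-2M)\}$ and an $\epsilon$-loss $\epsilon\sup_\tau\int_{\{r\geq r_Y\}}|\tilde W|^2 r^2$ is tolerated — the degenerate region is handled separately by the redshift vectorfield $N=T+Y$ of Proposition \ref{rsW}, whose current $K_1^{NNN}$ is positive near $r_Y$. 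Concretely I would run the current with $N$ in place of $T$ in the region $r\le r_Y$, glue the two currents across $r_Y\le r\le r_Y+\tfrac12(r_Y-2M)$ using the middle estimate of Proposition \ref{rsW} to absorb the (bounded, wrong-sign) $K_1^{NNN}$ contribution there into the positive $K_1^{TTT}$ away from the horizon, and collect the leftover as the stated $\epsilon$-error. A secondary but routine nuisance is the careful tracking of the $\chi$-interpolation in the frame (\ref{sfchi}) so that the Schwarzschild-background parts of the error terms genuinely cancel; since $\chi\equiv 1$ near the horizon and the analysis of the degenerate region happens there, this causes no real difficulty, and away from the horizon all the relevant weights are comparable to their $\mu$-based expressions up to constants depending only on $M$.
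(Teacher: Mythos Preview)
Your core approach is correct and essentially the same as the paper's: derive the renormalized $T$-energy directly from the null Bianchi equations (with $\rho$ replaced by $\hat\rho$ via (\ref{renormrho})), combine the $e_3$ and $e_4$ equations of each pair into $\tilde T$-transport form, multiply by $r$-dependent weights chosen to reproduce the coefficients in the Bel-Robinson expression (\ref{Tenergy}), and integrate by parts so that the angular terms cancel and the remaining spacetime terms are of the form $(\mathfrak{R}-\mathfrak{R}_{SS})\cdot\tilde W\cdot\tilde W$ or $\rho\cdot(\mathfrak{R}-\mathfrak{R}_{SS})\cdot\tilde W$. The paper is simply more explicit about the weights: it multiplies the $\beta$-equation by $r^2(k_+^\chi/k_\chi)^2$, the $\hat\rho$- and $\sigma$-equations by $r^2(k_+^\chi/k_\chi)k_-^\chi$, and the $\underline\beta$-equation by $r^2(k_-^\chi)^2$, which are exactly the coefficients read off from (\ref{Tenergy}). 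Your identification of the two error buckets is accurate.

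Your last paragraph over-engineers the argument. The proposition does \emph{not} attempt to cure the horizon degeneration: it simply accepts it, restricting the left-hand side to $\{r\ge r_Y-\tfrac34(r_Y-2M)\}$ and recording the small wrong-signed boundary contribution near the horizon as the $\epsilon\sup_\tau\int|\tilde W|^2$ term, exactly as in Lemma~\ref{Tboundaryterm}. No redshift current, no gluing across $r_Y$, and no appeal to Proposition~\ref{rsW} is needed here; the redshift enters only later (Section~\ref{redsection}) when one wants a genuinely non-degenerate estimate. So you should drop the $N$-current construction entirely from this proof.
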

\begin{proof}
As in the previous section we will use the vectorfield $\tilde{T}$ for computations and then argue by stability of the estimate in view of the $C^1$ closeness of $\tilde{T}$ to $T$. Write the Bianchi equations as
\begin{align} \label{TB1}
\tilde{T}\left(\|\beta\|^2\right) + \left[k_-^\chi tr \underline{H} + 2\frac{k_+^\chi}{k_\chi} tr H\right] \|\beta\|^2 = k_-^\chi\left( \slashed{\mathcal{D}}^\star_1 \left(-\hat{\rho},\sigma\right) + E_3 \left(\beta\right) \right) \beta \nonumber \\ + \frac{k_+^\chi}{k_\chi}\left[\slashed{div} \alpha + E_4\left(\beta\right) \right] \beta
\end{align}
\begin{align} \label{TB2}
\tilde{T}\left(\hat{\rho}^2\right) + \frac{3}{2}\hat{\rho}^2 \left[k_-^\chi tr \underline{H} + \frac{k_+^\chi}{k_\chi} tr H\right] = -k_-^\chi \left[\slashed{div} \underline{\beta} + \hat{E}_3\left(\rho\right) \right] \hat{\rho} \nonumber \\ + \frac{k_+^\chi}{k_\chi} \left[\slashed{div} \beta + \hat{E}_4\left(\rho\right) \right] \hat{\rho} \end{align}
\begin{align} \label{TB3}
\tilde{T}\left(\sigma^2\right) + \frac{3}{2}{\sigma}^2 \left[k_-^\chi tr \underline{H} + \frac{k_+^\chi}{k_\chi} tr H\right] = - k_-^\chi \left[ \slashed{curl} \underline{\beta} + \hat{E}_3\left(\sigma\right) \right]\hat{\sigma} \nonumber \\ +  \frac{k_+^\chi}{k_\chi} \left[\slashed{curl} \beta + \hat{E}_4\left(\sigma\right)\right] \sigma  
\end{align}
\begin{align} \label{TB4}
\tilde{T}\left(\|\underline{\beta}\|^2\right) + \left[2 k_-^\chi tr \underline{H} + \frac{k_+^\chi}{k_\chi} tr H\right] \|\underline{\beta}\|^2  =  +k_-^\chi \left[-\slashed{div} \underline{\alpha} + E_3\left(\underline{\beta}\right) \right] \underline{\beta} \nonumber \\  + \frac{k_+^\chi}{k_\chi} \left( \slashed{\mathcal{D}}^\star_1 \left(\hat{\rho},\sigma\right) + E_4 \left(\underline{\beta}\right) \right) \underline{\beta}
\end{align}
Note that the terms proportional to $\hat{\rho}^2$ and $\sigma^2$ already decay in view of the assumptions on the Ricci coefficients. 

We now multiply the four equations by appropriate weights (depending only on $r$), add them up and integrate over the spacetime region $\mathcal{M}\left(\tau_1,\tau_2\right)$. The weights can be read of from expression (\ref{Tenergy}) and ensure that after integration by parts all spacetime terms which appear decay: Multiplying (\ref{TB1}) by $w_1=r^2 \left(\frac{k_+^\chi}{k_\chi}\right)^2$, (\ref{TB2}) and (\ref{TB3}) by $w_2 = r^2\left(\frac{k_+^\chi}{k_\chi}\right)k_-^\chi$ and finally (\ref{TB4}) by $w_3 = r^2\left(k_-^\chi\right)^2$ and integrating with respect to $dt^\star \, dr \, d\omega$ yields the Proposition.

As an example we look at the $\alpha$ spacetime term arising from equation (\ref{TB1}) after integration by parts:
\begin{align}
\int dt^\star dr d\omega \Bigg( \frac{1}{2} \partial_t \left( r^2 \left(\frac{k_+^\chi}{k_\chi}\right)^2\| \alpha\|^2\right) -  \frac{1}{2} \partial_r \left(r^2 \left(\frac{k_+^\chi}{k_\chi}\right)^3 \| \alpha\|^2\right) + \nonumber \\ \left[\frac{1}{2} \partial_r \left(\left( \frac{k_+^\chi}{k_{\chi}}\right)^3 \right) + \left(\frac{1}{2} tr \underline{H} - 4\underline{\Omega}\right)\left( \frac{k_+^\chi}{k_{\chi}}\right)^3 \right] \| \alpha\|^2 - \left[E_3\left(\alpha\right) - 4\underline{\Omega}\alpha\right] \left( \frac{k_+^\chi}{k_{\chi}}\right)^3 \alpha \Bigg) \nonumber
\end{align}
The first term in the second line is zero in Schwarzschild and decays in the perturbed case, in view of the assumptions on the Ricci-coefficients. The second term is easily estimated by the terms appearing on the right hand side of the Proposition. 
\end{proof}
\section{The error $K_2^{\mathcal{X}\mathcal{Y}\mathcal{Z}}\left[\widehat{\mathcal{L}}^n_T W\right]$} \label{errorterms}
In order for the identity (\ref{mainid}) to be useful for us, we need to understand the terms $K_1^{\mathcal{X}\mathcal{Y}\mathcal{Z}}\left[\mathcal{W}\right]$ and $K_2^{\mathcal{X}\mathcal{Y}\mathcal{Z}}\left[\mathcal{W}\right]$ on the right hand side. The two terms are of a different nature. If we commute with the ultimately Killing field $T$, $K_2^{\mathcal{X}\mathcal{Y}\mathcal{Z}}\left[\widehat{\mathcal{L}}^n_T W\right]$ is the error arising from the commutation and hence expected to be small for any $\mathcal{X},\mathcal{Y},\mathcal{Z}$ with bounded coefficients. The term $K_1^{\mathcal{X}\mathcal{Y}\mathcal{Z}}\left[\widehat{\mathcal{L}}^n_T W\right]$, on the other hand, is only expected to be small if $\mathcal{X},\mathcal{Y},\mathcal{Z}$ are ultimately Killing, cf.~Proposition \ref{K1est}. This entire section is devoted to controlling the error $K_2^{\mathcal{X}\mathcal{Y}\mathcal{Z}}\left[\widehat{\mathcal{L}}^n_T W\right]$ as arising from commutation with $T$. Remark: As this part of the argument does not depend on the assumption that $tr {}^{(T)}\pi=0$ (nor the gauge $Y=0$), we will derive all the formulae including this term.

\subsection{General null decomposition of $J\left(T,W\right)$}
Let $W$ be any Weyl field which itself satisfies the inhomogeneous Bianchi equation $
D^\alpha \left( W\right)_{\alpha \beta \gamma \delta} =  \bar{J}_{\beta \gamma \delta}$. We consider the commuted Bianchi equation (cf.~(\ref{Weylcommute1}))
\begin{equation} 
D^\alpha \left(\widehat{\mathcal{L}}_T W\right)_{\alpha \beta \gamma \delta} =  J_{\beta \gamma \delta} \left(T,W\right) +\widehat{\mathcal{L}}_T \bar{J}_{\beta \gamma \delta} \, \, .
\end{equation} 
Assuming that $\bar{J}$ also arose from commutation with such a vectorfield, it becomes clear that we have to understand the structure of the term $J\left(T,W\right)$ and Lie-derivatives thereof. We define, following \cite{ChristKlei},
\begin{align}
\phantom{}^{(T)}\mathfrak{p}_\gamma &= D^{\alpha} \phantom{}^{(T)}\widehat{\pi}_{\alpha \gamma} \, , \\
 \phantom{}^{(T)}\mathfrak{q}_{\alpha \beta \gamma} &= D_{\beta} \phantom{}^{(T)}\widehat{\pi}_{\gamma \alpha} - D_{\gamma} \phantom{}^{(T)}\widehat{\pi}_{\beta \alpha} - \frac{1}{3} \left(\phantom{}^{(T)}\mathfrak{p}_\gamma g_{\alpha \beta} - \phantom{}^{(T)}\mathfrak{p}_\beta g_{\alpha \gamma} \right) \, ,
\end{align}
and decompose $J\left(T,W\right)$ as
\begin{equation}
 J\left(T,W\right) = \frac{1}{2} \left(J^1\left(T,W\right) + J^2\left(T,W\right)\right) \, ,
\end{equation}
where
\begin{equation} \label{J1}
 J^1\left(T,W\right)_{\beta \gamma \delta} = \phantom{}^{(T)}\widehat{\pi}^{\mu \nu} D_{\nu} W_{\mu \beta \gamma \delta} \, ,
\end{equation} 
\begin{align}  \label{J2}
J^2\left(T,W\right)_{\beta \gamma \delta} =  \phantom{}^{(T)}\mathfrak{p}_{\lambda} W^{\lambda}_{\phantom{\lambda} \beta \gamma \delta}  + \phantom{}^{(T)}\mathfrak{q}_{\alpha \beta \lambda} W^{\alpha \lambda}_{\phantom{\alpha \lambda} \gamma \delta} \nonumber \\ + \phantom{}^{(T)}\mathfrak{q}_{\alpha \gamma \lambda} W^{\alpha \phantom{\beta} \lambda}_{\phantom{\alpha} \beta \phantom{\lambda}\delta} + \phantom{}^{(T)}\mathfrak{q}_{\alpha \delta \lambda} W^{\alpha \phantom{\beta \gamma} \lambda}_{\phantom{\alpha} \beta \gamma}  \, .
\end{align}
Note that $\mathfrak{q}_{\beta \gamma \delta}$ is a Weyl current, i.e.~it satisfies 
\begin{equation}
 \mathfrak{q}_{\beta \gamma \delta} = - \mathfrak{q}_{\beta \delta \gamma} \textrm{ \ \ \ and \ \ \ } 0 = g^{\beta \gamma} \mathfrak{q}_{\beta \gamma \delta} = -\frac{1}{2} \mathfrak{q}_{34 \delta} - \frac{1}{2} \mathfrak{q}_{43\delta} + \delta^{AB} \mathfrak{q}_{AB \delta} \, .
\end{equation}
We recall that the error-term $K_2^{\mathcal{Y}\mathcal{Z}\mathcal{U}} \left[\mathcal{W}\right]$  in the energy identity for the Bel-Robinson tensor arises from $J^1\left(T,W\right) + J^2\left(T,W\right)$ via formula (\ref{divQ}). Hence we have to estimate the integral of 
\begin{equation}
 D\left(T,W\right)\left(\mathcal{Y},\mathcal{Z},\mathcal{U}\right)  =  K_2^{\mathcal{Y},\mathcal{Z},\mathcal{U}}\left[\widehat{\mathcal{L}}_T W\right] 
\end{equation}
for appropriate vectorfields $\mathcal{Y},\mathcal{Z},\mathcal{U}$. We proceed with a null decomposition of this term:

\begin{proposition} \label{GNDK}
We have 
\begin{equation}
 D(T,W)_{333} = 4 \underline{\alpha} \left(\widehat{\mathcal{L}}_T W\right) \cdot \underline{\Theta}\left(T,W\right) + 8 \underline{\beta} \left(\widehat{\mathcal{L}}_T W\right) \cdot \underline{\Xi} \left(T,W\right)
\end{equation}
\begin{align}
 D(T,W)_{334} = 8 \rho \left(\widehat{\mathcal{L}}_T W\right) \cdot \underline{\Lambda}\left(T,W\right) - 8 \sigma \left(\widehat{\mathcal{L}}_T W\right) \cdot \underline{K}\left(T,W\right) \nonumber \\  -  8 \underline{\beta} \left(\widehat{\mathcal{L}}_T W\right) \cdot \underline{I} \left(T,W\right)
\end{align}
\begin{align}
 D(T,W)_{443} = 8 \rho \left(\widehat{\mathcal{L}}_T W\right) \cdot {\Lambda}\left(T,W\right) + 8 \sigma \left(\widehat{\mathcal{L}}_T W\right) K\left(T,W\right) \nonumber \\  +  8 {\beta} \left(\widehat{\mathcal{L}}_T W\right) I \left(T,W\right)
\end{align}
\begin{equation}
 D(T,W)_{444} = 4 \alpha \left(\widehat{\mathcal{L}}_T W\right) \cdot \Theta\left(T,W\right) - 8 \beta \left(\widehat{\mathcal{L}}_T W\right) \cdot \Xi \left(T,W\right)
\end{equation}
where we have used the null decomposition of $J\left(T,W\right)$:
\begin{align}
\Lambda(J) &= \frac{1}{4} J_{434} \textrm{ \ \ \ \ \ \ \ \ \ \ \ \ \ \ \ \ \ \ \ \ \ \ \ \ \ } \underline{\Lambda}\left(J\right) = \frac{1}{4} J_{343} \nonumber \\
K(J) &= \frac{1}{4} \epsilon^{AB} J_{4AB} \textrm{ \ \ \ \ \ \ \ \ \ \ \ \ \ \ \ \ \ \ \ } \underline{K}\left(J\right) =  \frac{1}{4} \epsilon^{AB} J_{3AB} \nonumber \\
\Xi(J)_A &= \frac{1}{2} J_{44A} \textrm{ \ \ \ \ \ \ \ \ \ \ \ \ \ \ \ \ \ \ \ \ \ \ \ \ \ }\underline{\Xi}\left(J\right) = \frac{1}{2} J_{33A} \nonumber \\
I(J)_A &= \frac{1}{2} J_{34A} \textrm{ \ \ \ \ \ \ \ \ \ \ \ \ \ \ \ \ \ \ \ \ \ \ \ \ \ }\underline{I}\left(J\right) = \frac{1}{2} J_{43A} \nonumber \\
\Theta_{AB} &= \frac{1}{2} \left( J_{A4B} + J_{B4A} - \left(\delta^{CD}J_{C4D}\right) \delta_{AB} \right)\nonumber \\
\underline{\Theta}_{AB} &= \frac{1}{2} \left( J_{A3B} + J_{B3A} - \left(\delta^{CD}J_{C3D}\right) \delta_{AB}\right)
\end{align}
\end{proposition}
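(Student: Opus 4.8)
The plan is to treat Proposition~\ref{GNDK} as what it is, a purely algebraic specialisation of the general divergence identity \eqref{divQ} to the null frame, with no analytic content. The starting point is to apply \eqref{divQ} with $\mathcal{W}=\widehat{\mathcal{L}}_T W$ and $\mathcal{J}=J\left(T,W\right)$, i.e.\
\[
D(T,W)_{\beta\gamma\delta} = \mathcal{W}_{\beta}{}^{\mu}{}_{\delta}{}^{\nu}\mathcal{J}_{\mu\gamma\nu} + \mathcal{W}_{\beta}{}^{\mu}{}_{\gamma}{}^{\nu}\mathcal{J}_{\mu\delta\nu} + {}^\star\mathcal{W}_{\beta}{}^{\mu}{}_{\delta}{}^{\nu}\mathcal{J}^\star_{\mu\gamma\nu} + {}^\star\mathcal{W}_{\beta}{}^{\mu}{}_{\gamma}{}^{\nu}\mathcal{J}^\star_{\mu\delta\nu},
\]
and then to evaluate this for the four external index configurations $(\beta,\gamma,\delta)\in\{(3,3,3),(3,3,4),(4,4,3),(4,4,4)\}$ that are needed later. (For the first commutation $\bar J=0$ since $W$ itself solves $D^\alpha W_{\alpha\beta\gamma\delta}=0$; for the iterated commutations the source also contains $\widehat{\mathcal{L}}_T\bar J$, which contributes terms of exactly the same shape and is handled recursively via \eqref{nXWeylcommute}.) The first bookkeeping point is to fix the index-raising conventions in the null frame: a contracted index $\mu$ runs over $\{e_3,e_4,e_1,e_2\}$, with the $e_3\leftrightarrow e_4$ pairing carrying a factor $-\tfrac12$ (since $g(e_3,e_4)=-2$) and the angular directions carrying $\delta^{AB}$.

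For each configuration I would then expand both factors. On the Weyl side one uses the null decomposition of section~\ref{abrsabbb}: when $(\beta,\delta)=(3,3)$ only $\underline\alpha(\mathcal{W})$ (both $\mu,\nu$ angular) and $\underline\beta(\mathcal{W})$ (exactly one of $\mu,\nu$ equal to $e_4$) survive, since $\alpha,\rho,\sigma$ would require an $e_4$ in an external slot; the $(\beta,\delta)=(4,4)$ case mirrors this with $\alpha(\mathcal{W}),\beta(\mathcal{W})$; and $(\beta,\delta)=(3,4)$ or $(4,3)$ bring in $\rho(\mathcal{W})$, together with $\sigma(\mathcal{W})$ through the dual $^\star\mathcal{W}$, as well as $\underline\beta(\mathcal{W})$ or $\beta(\mathcal{W})$. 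On the current side one decomposes $\mathcal{J}_{\mu\gamma\nu}$ and $\mathcal{J}^\star_{\mu\gamma\nu}$ exactly into the scalars and $S^2$-tensors $\Lambda,\underline\Lambda,K,\underline K,\Xi,\underline\Xi,I,\underline I,\Theta,\underline\Theta$ of the statement, using that dualising in the last two indices acts as $\epsilon^{AB}$ on an angular slot, so that e.g.\ $\tfrac14\epsilon^{AB}\mathcal{J}_{3AB}=\underline K$.

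Collecting the surviving contractions is then governed by two algebraic facts. First, $\mathcal{W}=\widehat{\mathcal{L}}_T W$ is a Weyl tensor (trace-free, with the first Bianchi symmetry — this is the whole point of the modified Lie derivative), so the pure-trace part of its angular block $\mathcal{W}_{3A3B}$ drops out against the trace-free $\underline\Theta$, and likewise for $\Theta$. Second, $\mathcal{J}=J(T,W)$ is a Weyl current in the sense recorded just after \eqref{J2} — antisymmetric in its last two indices and satisfying $-\tfrac12\mathfrak{q}_{34\delta}-\tfrac12\mathfrak{q}_{43\delta}+\delta^{AB}\mathfrak{q}_{AB\delta}=0$ — which is exactly what makes the $e_3$–$e_4$ and trace contributions organise into the combinations $\underline\Lambda,\underline K,\underline I$ appearing in $D(T,W)_{334}$ (and $\Lambda,K,I$ in $D(T,W)_{443}$). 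Matching the symmetrised angular contractions yields the terms $\underline\alpha(\mathcal{W})\cdot\underline\Theta$ and $\alpha(\mathcal{W})\cdot\Theta$; the one-angular-index pieces yield $\underline\beta\cdot\underline\Xi$, $\beta\cdot\Xi$, $\underline\beta\cdot\underline I$, $\beta\cdot I$; the purely $e_3,e_4$ pieces yield $\rho\cdot\Lambda$, $\rho\cdot\underline\Lambda$, and from the dual factors $\sigma\cdot K$, $\sigma\cdot\underline K$. Carrying the $-\tfrac12$ from index raising and the factors of $2$ from the two symmetric terms in \eqref{divQ} through gives the coefficients $4$ and $8$ and the stated signs.

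I expect the only real obstacle to be the combinatorial one: not losing any contraction, getting every metric factor in the non-orthonormal null frame correct, and checking that the apparent extra terms — mixed $\alpha$–$\underline\alpha$ products, the trace parts of $\Theta,\underline\Theta$, and the $-\tfrac13(\mathfrak{p}_\gamma g_{\alpha\beta}-\mathfrak{p}_\beta g_{\alpha\gamma})$ correction built into $\mathfrak{q}$ — genuinely cancel by the Weyl-field and Weyl-current identities above. This is the same computation performed in \cite{ChristKlei} for their error currents; here it is only needed for the four configurations listed. A convenient cross-check of the coefficients is consistency with the symmetry and trace-freeness of $Q[\widehat{\mathcal{L}}_T W]$, in particular with the null components of $Q$ collected in section~\ref{ncbelrob}.
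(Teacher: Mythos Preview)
Your proposal is correct and is precisely the computation one performs to obtain these identities; the paper itself does not give an independent proof but simply refers to 8.1.7c in \cite{ChristKlei}, where exactly the null-frame expansion of \eqref{divQ} that you outline is carried out. In that sense you have supplied more detail than the paper, but along the same (and essentially only) route.
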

\begin{proof}
See 8.1.7c in \cite{ChristKlei}.
\end{proof} 
Clearly, from the point of view of decay in the interior, the worst terms of $J\left(T,W\right)$ are the terms in $J^2\left(T,W\right)$ which are proportional to the non-decaying component $\rho$.\footnote{Note that $J^1\left(T,W\right)$ only has derivatives of $\rho$ and is hence ``a derivative better" than the terms we mentioned.} We collect the null decomposition of both terms in the following Lemma.
\begin{lemma} \label{J1I}
We have the following formulae for the null-decomposition of $J^1\left(T,W\right)$
\begin{equation}
\Lambda \left(J^1\right) \equiv \rho \left[ -\frac{3}{2} tr H  \ \phantom{}^{(T)}\widehat{\pi}^{34} - \frac{3}{2} tr \underline{H} \ \phantom{}^{(T)}\widehat{\pi}^{33} - \frac{3}{8} tr H \left(\delta^{AB} \  \phantom{}^{(T)}\widehat{\pi}_{AB}\right)\right]
\end{equation}
\begin{equation}
\Xi\left(J^1\right) \equiv \rho \left[\frac{3}{4} tr H \  \phantom{}^{(T)}\widehat{\pi}^{3A}\right]
\end{equation}
\begin{equation}
I \left(J^1\right) \equiv \rho \left[ -\frac{3}{4} tr \underline{H} \  \phantom{}^{(T)}\widehat{\pi}^{3A} - \frac{3}{2} tr H \ \phantom{}^{(T)}\widehat{\pi}^{4A} \right]
\end{equation}
\begin{equation}
K\left(J^1\right) \equiv 0 \equiv \Theta\left(J^1\right)
\end{equation}
where $\equiv$ denotes equality up to terms of the form $ \phantom{}^{(T)}\pi \cdot  \slashed{D} \left(W \textrm{ \ but not $\rho$}\right)$, i.e.~decaying quadratically as $T$ is ultimately Killing.
\end{lemma}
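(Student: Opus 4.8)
The plan is to null-decompose $J^1(T,W)_{\beta\gamma\delta} = {}^{(T)}\widehat{\pi}^{\mu\nu} D_\nu W_{\mu\beta\gamma\delta}$ directly, following the scheme of Chapter 8 of \cite{ChristKlei}. I would first expand the contraction over $\mu,\nu$ in the null frame $(e_3,e_4,e_A)$, using $g_{34}=-2$ to handle the raising of null indices; this writes $J^1(T,W)$ as a finite sum of products ${}^{(T)}\widehat{\pi}^{\cdot\cdot}\cdot D_\cdot W_{\cdot\cdot\cdot\cdot}$ of null components of the deformation tensor against null components of $DW$. For each of the relevant null components $\Lambda,\Xi,I,K,\Theta$ of $J^1(T,W)$ one then plugs the appropriate frame indices $\beta,\gamma,\delta$ into the definitions listed in Proposition \ref{GNDK} and substitutes.

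The second step is to rewrite the null components of $DW$ that appear in terms of ``slashed'' derivatives of null components of $W$. Writing schematically $D_\nu W_{\mu\beta\gamma\delta} = \slashed{D}_\nu(\text{null comp of }W) - W(D_\nu e_\mu,\cdot,\cdot,\cdot) - \cdots$, the connection terms $D_\nu e_\cdot$ bring in Ricci rotation coefficients; here the crucial point is that the expansions $H_{AB} = \tfrac12\, tr H\,\slashed{g}_{AB} + \widehat{H}_{AB}$, $\underline{H}_{AB} = \tfrac12\, tr\underline{H}\,\slashed{g}_{AB} + \widehat{\underline{H}}_{AB}$, together with $W(e_4,e_4,\cdot,\cdot)=0$ and the mixed component $W(e_A,e_3,e_B,e_4)$ (a combination of $\rho\,\delta_{AB}$ and $\sigma\,\epsilon_{AB}$), feed an \emph{undifferentiated} $\rho$ into the computation. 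Wherever a divergence of a null component of $W$ is produced, the null-Bianchi equations of section \ref{abrsabbb} must also be invoked, e.g.\ $\slashed{div}\,\beta = \rho_4 - E_4(\rho) = \slashed{D}_4\rho + \tfrac32\, tr H\,\rho - E_4(\rho)$, which similarly generates an undifferentiated $\rho$. Every other term is either of the form ${}^{(T)}\widehat{\pi}\cdot\slashed{D}(\text{null comp of }W)$ --- a genuine derivative of curvature, one order better --- or of the form ${}^{(T)}\widehat{\pi}\cdot\mathfrak{R}\cdot(\text{null comp of }W\text{ other than }\rho)$; since $T$ is ultimately Killing all of ${}^{(T)}\pi$ decays, so these are quadratically decaying and get collected into the symbol $\equiv$.

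The third step is bookkeeping: collect the surviving undifferentiated-$\rho$ terms and match them against the trace-type pieces of ${}^{(T)}\widehat{\pi}$, namely ${}^{(T)}\widehat{\pi}^{34}$, ${}^{(T)}\widehat{\pi}^{33}$, ${}^{(T)}\widehat{\pi}^{3A}$, ${}^{(T)}\widehat{\pi}^{4A}$ and $\delta^{AB}\,{}^{(T)}\widehat{\pi}_{AB}$, reading off the numerical coefficients $-\tfrac32\, tr H$, $-\tfrac32\, tr\underline{H}$, $-\tfrac38\, tr H$, $+\tfrac34\, tr H$ as in the statement. For $K(J^1)$ and $\Theta(J^1)$ the assertion is that \emph{no} undifferentiated-$\rho$ term survives: a $\rho$-contribution always enters carrying a factor $\delta_{AB}$ on a sphere index pair, and $K(J) = \tfrac14\epsilon^{AB}J_{4AB}$ antisymmetrizes that pair while $\Theta(J)$ extracts its symmetric trace-free part, so the $\rho$-terms are annihilated; the remaining $\sigma$-terms are themselves quadratically decaying because $\sigma$ decays, whence $K(J^1)\equiv 0\equiv\Theta(J^1)$.

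The main obstacle is not conceptual but combinatorial: keeping track of every $D_\nu e_\mu$ connection coefficient, the sign and factor conventions for raising null indices against $g_{34}=-2$, and correctly attributing each $\rho$-term to a specific Ricci coefficient or Bianchi equation --- and, conversely, verifying that \emph{no} undifferentiated-$\rho$ term has been overlooked, so that the remainder really is of the form ${}^{(T)}\pi\cdot\slashed{D}(\text{non-}\rho\text{ curvature})$. This last point uses that every Ricci coefficient except $tr H$ and $tr\underline{H}$ (i.e.\ $\widehat{H}$, $\widehat{\underline{H}}$, $Y$, $\underline{Y}$, $Z$, $\underline{Z}$, $V$, $\Omega$, $\underline{\Omega}$ and the differences $tr H - tr H_{SS}$, $tr\underline{H}-tr\underline{H}_{SS}$) decays, which is exactly the content of Definition \ref{RRCapproach}; the non-decay of $tr H$ and $tr\underline{H}$ is precisely why the displayed $\rho$-terms cannot be absorbed and must be kept explicitly.
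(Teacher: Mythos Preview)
Your proposal is correct and follows essentially the same approach as the paper: the paper's proof is simply labeled ``Direct computation,'' displays the intermediate null-frame expansion for the $\Lambda$ component (e.g.\ producing terms like $4\,\widehat{\pi}^{34} D_4\rho$ and $\widehat{\pi}^{AB}(\slashed{\nabla}_A\beta_B + \slashed{\nabla}_B\beta_A - 3H_{AB}\rho - 3\,{}^\star\widehat{H}_{AB}\sigma)$), and leaves the remaining components and the final extraction of the undifferentiated-$\rho$ terms via the Bianchi equations to the reader. Your three-step outline (null-frame expansion of $\widehat{\pi}^{\mu\nu}D_\nu W$, conversion of $DW$ to slashed derivatives plus connection terms, then use of the null Bianchi equations such as $\slashed{D}_4\rho = -\tfrac32\,tr H\,\rho + \slashed{div}\,\beta + E_4(\rho)$ to isolate the surviving $\rho$-pieces) and your symmetry argument for $K$ and $\Theta$ make this explicit and are exactly what the paper intends.
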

\begin{proof}
Direct computation. For instance, for the component $\Lambda$ we find
\begin{eqnarray}
4\Lambda = \widehat{\pi}^{34} D_4 W_{3434} + \widehat{\pi}^{33} D_4 W_{3434} + \widehat{\pi}^{A4} D_4 W_{A434} \nonumber \\+\widehat{\pi}^{3A} D_A W_{3434} + \widehat{\pi}^{3A} D_3 W_{A434} +\widehat{\pi}^{AB} D_A W_{B434} \nonumber \\ \equiv \widehat{\pi}^{4A} \widetilde{\bar{J}}_{A3434} + 4\widehat{\pi}^{34} D_4 \rho +  4\widehat{\pi}^{44} D_3 \rho - 6 Y_A  \widehat{\pi}^{3A} \rho - 6 {}^\star Y_A  \widehat{\pi}^{3A} \rho \nonumber \\ + 4 \widehat{\pi}^{3A} \slashed{\nabla}_A \rho +  \widehat{\pi}^{AB} \left(\slashed{\nabla}_A \beta_B + \slashed{\nabla}_B \beta_A -3H_{AB} \rho - 3{}^\star \widehat{H}_{AB} \sigma\right) 
\end{eqnarray}
The other components are computed similarly, using the formulae in the appendix.
\end{proof}

\begin{lemma} \label{J2l}
We have the following formulae for the null-decomposition of $J^2\left(T,W\right)$
\begin{equation}
 \Lambda \left(J^2\right) \equiv \frac{3}{4}\rho\left[-\frac{1}{2}D_4 \left(tr \pi\right) + D^A \psi_{4A} + 2p \rho \right] \nonumber
\end{equation}
\begin{equation}
 I\left(J^2\right)_A \equiv \frac{3}{4} \rho \left[2q \beta_A + D_3 \psi_{A4}  + \frac{1}{2} D_A tr \pi\right] \nonumber 
 \end{equation}
\begin{equation}
 \Xi\left(J^2\right)_A \equiv -\frac{3}{4} \rho \left[2p \beta_A + D_4 \psi_{A4} \right]
\end{equation}
\begin{equation}
 \Theta_{AB}\left(J^2\right) \equiv \frac{3}{4}\rho\left[ D_A \psi_{4B} + D_B \psi_{4A} - \delta_{AB} \left(D^A \psi_{4A}\right) -2q \alpha_{AB} \right]
\end{equation}
\begin{equation}
 K\left(J^2\right) \equiv -\frac{3}{4}\rho \cdot \epsilon_{AB} \left[ D_A \psi_{4B} - p\sigma \epsilon_{AB} \right]
\end{equation}
and
\begin{align}
\underline{\Lambda} \left(J^2\right) = \frac{3}{4}\rho\left[-\frac{1}{2}D_3 \left(tr \pi\right) + D^A \psi_{3A} + 2q \rho \right]
\end{align}
\begin{equation}
\underline{I}\left(J^2\right)_A \equiv \frac{3}{4} \rho \left[-2p \underline{\beta}_A + D_4 \psi_{A3}  + \frac{1}{2} D_A tr \pi\right] \nonumber 
 \end{equation}
\begin{equation}
\underline{\Xi}\left(J^2\right)_A \equiv -\frac{3}{4} \rho \left[-2q \underline{\beta}_A + D_4 \psi_{A3} \right]
\end{equation}
\begin{equation}
\underline{\Theta}_{AB}\left(J^2\right) \equiv \frac{3}{4}\rho\left[ D_A \psi_{3B} + D_B \psi_{3A} - \delta_{AB} \left(D^A \psi_{3A}\right) -2p \underline{\alpha}_{AB} \right]
\end{equation}
\begin{equation}
\underline{K} \left(J^2\right) \equiv -\frac{3}{4}\rho \cdot \epsilon_{AB} \left[ D_A \psi_{3B} + q\sigma \epsilon_{AB} \right]
\end{equation}
where $\equiv$ denotes equality up to terms which do not involve the curvature component $\rho$ and are hence of the form ``$\left(W \textrm{ \ but not \ } \rho \right)\cdot D \phantom{}^{(T)}\pi$", i.e.~decaying quadratically.
\end{lemma}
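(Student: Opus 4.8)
The statement to be proved is Lemma \ref{J2l}, the null-decomposition of the term $J^2(T,W)$ defined in (\ref{J2}). The strategy parallels the proof of Lemma \ref{J1I}: the only genuinely new content is bookkeeping, since $J^2$ is an algebraic (no-derivative-on-$W$) contraction of the Weyl-tensor $W$ against the ``Weyl-current'' $\mathfrak{q}_{\alpha\beta\gamma}$ and the vector $\mathfrak{p}_\lambda$ built from first derivatives of ${}^{(T)}\widehat\pi$. First I would fix notation: write $2T = p\,e_3 + q\,e_4$ as in Definition \ref{ultS}, set $\psi_{4A} := {}^{(T)}\widehat\pi_{4A} = {}^{(T)}\mathbf m_A$, $\psi_{3A} := {}^{(T)}\widehat\pi_{3A} = {}^{(T)}\underline{\mathbf m}_A$, and recall $\mathfrak{p}_\gamma = D^\alpha{}^{(T)}\widehat\pi_{\alpha\gamma}$ and the definition of $\mathfrak{q}$. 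The proof then consists of computing each of $\Lambda,\Xi,I,\Theta,K$ and their underlined analogues, which by Proposition \ref{GNDK} are just fixed index-contractions ($J^2_{434}$, $J^2_{44A}$, $J^2_{34A}$, the symmetric traceless part of $J^2_{A4B}$, and $\epsilon^{AB}J^2_{4AB}$) of the expression (\ref{J2}).

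Second, for each component I would carry out the contraction of (\ref{J2}) against the appropriate null vectors, inserting the null decomposition of $W$ from section \ref{abrsabbb}. The point of the computation is that we only care about the coefficient of the non-decaying curvature component $\rho$: every term in which $\mathfrak{q}$ or $\mathfrak{p}$ is contracted against a null-component of $W$ other than $\rho$ is, by the ultimately Killing hypothesis on $T$ and the assumed decay of ${}^{(T)}\pi$ (Definition \ref{ultimateKilling}), a quadratically decaying error and goes into the ``$\equiv$''. So the extraction reduces to: identify, in each fixed contraction, which slots of $W$ become $\rho = \frac14 W_{3434}$, and collect the corresponding pieces of $\mathfrak{p}$ and $\mathfrak{q}$. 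For $\Lambda(J^2) = \tfrac14 J^2_{434}$, the $\rho$-producing terms come from $\mathfrak{p}_\lambda W^\lambda_{\ 434}$ (which forces $\lambda\in\{3,4\}$ and gives $\rho$ times a linear combination of $\mathfrak{p}_3,\mathfrak{p}_4$) together with the $\mathfrak{q}_{\alpha 4\lambda} W^{\alpha\ \lambda}_{\ 3\ 4}$-type terms; using the Weyl-current trace identity $\tfrac12\mathfrak{q}_{34\delta} + \tfrac12\mathfrak{q}_{43\delta} = \delta^{AB}\mathfrak{q}_{AB\delta}$ and the definition of $\mathfrak{p}$ one rewrites $\mathfrak{p}_3,\mathfrak{p}_4$ in terms of $D_4(tr\,\pi)$, $D^A\psi_{4A}$ (the $\slashed{div}$ of the ${}^{(T)}\widehat\pi_{4A}$ component) and the curvature term $2p\rho$ coming from $D^\alpha{}^{(T)}\widehat\pi_{\alpha 4}$ itself producing a Weyl term via the commutation of derivatives (the background curvature $\rho$ enters here). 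This yields exactly $\Lambda(J^2) \equiv \tfrac34\rho[-\tfrac12 D_4(tr\,\pi) + D^A\psi_{4A} + 2p\rho]$. The remaining six non-underlined and seven underlined formulae are obtained by the same mechanical procedure, exploiting the $3\leftrightarrow4$, $p\leftrightarrow q$ symmetry to cut the work roughly in half, and paying attention to the sign conventions for $\epsilon_{AB}$ in the $K$ and $\underline K$ components.

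Third, I would carefully justify which terms are absorbed into ``$\equiv$''. Here the relevant facts are: (i) every null-component of ${}^{(T)}\widehat\pi$ except possibly ${}^{(T)}\mathbf i_{AB}$ decays like $\epsilon/r^2$, and ${}^{(T)}\mathbf i$ like $\epsilon/(r\max(1,t^\star-r))$, in particular all are quadratically-coupled with curvature in the interior; (ii) a term of schematic form $(W \text{ other than }\rho)\cdot D{}^{(T)}\pi$ is cubic in the ``small'' quantities once one remembers $W$ other than $\rho$ decays, so it is legitimately dropped; (iii) the terms where $D$ falls on a curvature slot other than $\rho$ are handled by $J^1$, not $J^2$, and were already treated in Lemma \ref{J1I}. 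The genuinely delicate bookkeeping step — and the one most prone to sign and combinatorial error — is tracking the $2p\rho$ and $2q\rho$ (resp.\ $-2p\underline\beta_A$, $2q\beta_A$, $-2q\alpha_{AB}$, $-2p\underline\alpha_{AB}$, $\mp p\sigma\epsilon_{AB}$, $\pm q\sigma\epsilon_{AB}$) curvature contributions: these arise because $\mathfrak{p}_\gamma$ and $\mathfrak{q}_{\alpha\beta\gamma}$ contain \emph{two} derivatives of $T$, i.e.\ one derivative of the deformation tensor, and when commuting the two covariant derivatives one picks up a Riemann-curvature term, which in Schwarzschild is precisely $\rho = -2M/r^3$ contracted with the components $p,q$ of $T$ and with a curvature slot of $W$. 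Getting the numerical coefficients and signs of these terms right is the main obstacle; I would verify them against the computation in 8.1.7 and the preceding sections of \cite{ChristKlei}, to which the statement is in any case closely analogous, citing that reference for the template as the proof of Proposition \ref{GNDK} already does. The result is then simply recorded as the displayed formulae.
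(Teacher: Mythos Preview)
Your overall plan --- null-decompose $J^2$ component by component, keep only the terms proportional to $\rho$, and obtain the curvature contributions ($2p\rho$, $-2q\alpha_{AB}$, etc.) from the commutator of two covariant derivatives of $T$ --- is exactly the route the paper takes. But there is one concrete error which, if carried through, would give wrong formulae: the object $\psi_{ab}$ appearing in the statement is \emph{not} the deformation tensor. You set $\psi_{4A}:={}^{(T)}\widehat\pi_{4A}={}^{(T)}\mathbf m_A$, i.e.\ the symmetric part of $DT$. In the paper's proof one introduces instead the \emph{antisymmetric} tensor
\[
\psi_{ab}=\tfrac12\bigl(D_aT_b-D_bT_a\bigr),
\]
and it is this $\psi$ that survives in the formulae of the Lemma (so in particular $\psi_{A4}=-\psi_{4A}$, which matters for the relative signs between e.g.\ the $\Lambda$ and $\Xi$ expressions). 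The mechanism is: write $\mathfrak q_{A4B}=D_4\widehat\pi_{BA}-D_B\widehat\pi_{4A}+\tfrac13\mathfrak p_4\delta_{AB}$, expand $\widehat\pi$ as $\tfrac12(DT+DT)^{\mathrm{sym}}$, and commute the second derivatives of $T$ to produce Riemann terms; what is left over is exactly $D_A\psi_{4B}$ with $\psi$ antisymmetric, plus the curvature pieces $-q\alpha_{AB}+p\rho\delta_{AB}-p\sigma\epsilon_{AB}$ contracted against the components of $2T=pe_3+qe_4$.

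A second minor correction: the $\mathfrak p_4$-contribution does not have to be ``rewritten in terms of $D_4(\mathrm{tr}\,\pi)$, $D^A\psi_{4A}$''. In the actual computation the $-\tfrac12\mathfrak p_4$ appearing in $\Lambda(J^2)$ is exactly cancelled by the $\tfrac13\mathfrak p_4\delta_{AB}$ piece of $\mathfrak q_{A4B}$ after the trace $\tfrac34\delta^{AB}\mathfrak q_{A4B}$ is taken, and similarly for $I$; the $\mathfrak p$'s drop out and one is left with $D_4(\mathrm{tr}\,\pi)$, $D\psi$, and curvature. Your paragraph about which terms are absorbed into ``$\equiv$'' is fine.
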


\begin{proof}
Collecting the terms proportional to $\rho$ for $J^2\left(T,W\right)$, we find\footnote{Note the typo regarding the $\Theta$-equation in \cite{ChristKlei}, where the factor of $\frac{1}{2}$ is missing.}
\begin{equation}
 \Lambda \left(J\right) = \frac{1}{4} J_{434} \equiv \rho \left(-\frac{1}{2} \mathfrak{p}_4 + \frac{3}{4} \delta_{BA} \mathfrak{q}_{A4B}\right) 
\end{equation}
\begin{equation}
 K\left(J\right) = \frac{1}{4} \epsilon^{AB} J_{4AB} \equiv -\frac{3}{4}\rho \mathfrak{q}_{A4B} \epsilon_{AB}
\end{equation}
\begin{equation}
 \Xi \left(J\right)_A = \frac{1}{2}J_{44A} \equiv -\frac{3}{4}\rho \mathfrak{q}_{4A4}
\end{equation}
\begin{equation}
 I\left(J\right)_A = \frac{1}{2} J_{34A} \equiv \frac{1}{2}\left(\rho \mathfrak{p}_A - \frac{3}{2} \mathfrak{q}_{34A} \rho\right)
\end{equation}
\begin{align}
 \Theta\left(J\right)_{AB} = \frac{1}{2} \left(J_{A4B} + J_{B4A} - \left(\delta^{CD} J_{C4D}\right) \delta_{AB}\right) \nonumber \\ \equiv \frac{3}{4} \rho \left(\mathfrak{q}_{A4B} + \mathfrak{q}_{B4A} - \left(\mathfrak{q}_{C4D} \delta^{CD}\right)\delta_{AB}\right) \, .
\end{align}
with the analogous formulae for the bared quantities obtained form interchanging $3$'s and $4$'s in these formulae.
We define the antisymmetric tensor
 \begin{equation}
 \psi_{a b} = \frac{1}{2} \left(D_a T_b - D_b T_a \right)
 \end{equation}
 and compute, using the definition of the deformation tensor and commuting covariant derivatives
\begin{align}
 \mathfrak{q}_{A4B} &= D_4 \widehat{\pi}_{BA} - D_B \widehat{\pi}_{4A} + \frac{1}{3}\mathfrak{p}_4 \delta_{AB} 
 \nonumber \\
 &= \frac{1}{2} \left(D_4 D_B T_A + D_4 D_A T_B - D_B D_4 T_A - D_B D_A T_4   \right) - \frac{1}{4} D_4 tr \pi \delta_{AB} + \frac{1}{3}\mathfrak{p}_4 \delta_{AB} 
  \nonumber \\
 &= \frac{1}{2} \left(R_{4BA3}T^3 + R_{4BA4}T^4 + R_{4AB(a)}T^{(a)} + D_A D_4 T_B - R_{BA43}T^3 - D_A D_B T_4   \right) \nonumber \\ &- \frac{1}{4} D_4 tr \pi \delta_{AB} + \frac{1}{3}\mathfrak{p}_4 \delta_{AB} 
 \nonumber \\
 &= -\frac{1}{4} \delta_{AB} D_4 \left(tr \pi\right) + D_A \psi_{4B} + \frac{1}{2} \left(-2q \alpha_{AB} + 2p \rho \delta_{AB} - 2p\sigma \epsilon_{AB} \right)   + \frac{1}{3}\mathfrak{p}_4 \delta_{AB} \nonumber
 \end{align}
 \begin{align}
 \mathfrak{\mathfrak{q}}_{4A4} &= D_A \widehat{\pi}_{44} - D_4 \widehat{\pi}_{A4} 
 \nonumber \\
 &= \frac{1}{2}\left[2 D_A D_4 T_4 - D_4 D_A T_4 - D_4 D_4 T_A\right]
  \nonumber \\
 &= \frac{1}{2}\left[D_4 D_A T_4 - D_4 D_4 T_A + 2 R_{A443}T^3\right] = 2p \beta_A + D_4 \psi_{A4} 
 \nonumber
 \end{align}
 \begin{align}
 \mathfrak{q}_{34A} &= D_4 \widehat{\pi}_{A3} - D_A \widehat{\pi}_{43} - \frac{1}{3} \mathfrak{p}_A g_{34} = -2q \beta_A - D_3 \psi_{A4} -\frac{1}{2} D_A tr \pi + \frac{2}{3}\mathfrak{p}_A\nonumber \, .
 \end{align}
as well as
\begin{align}
\mathfrak{q}_{A3B} = -\frac{1}{4}\delta_{AB} D_3 \left(tr \pi\right) + D_3 \psi_{3B} + \frac{1}{2} \left(-2p \underline{\alpha}_{AB} + 2q \rho \delta_{AB} + 2q \sigma \epsilon_{AB}\right) + \frac{1}{3} \mathfrak{p}_3 \delta_{AB} \nonumber
\end{align}
\begin{align}
 \mathfrak{\mathfrak{q}}_{3A3} = -2q \underline{\beta}_A + D_3 \psi_{A3} 
 \nonumber
 \end{align}
 \begin{align}
 \mathfrak{q}_{43A}  = 2p \underline{\beta}_A - D_4 \psi_{A3} -\frac{1}{2} D_A tr \pi + \frac{2}{3}\mathfrak{p}_A\nonumber \, .
 \end{align}
 from which the formulae of the lemma are easily derived.
\end{proof}

\subsection{The structure of $J\left(T,W\right)$}
We combine Lemmas \ref{J1I} and \ref{J2l} expressing the derivatives in terms of the slashed derivatives and the null component $\mathcal{P}_A := \psi_{A4}$ and $\mathcal{Q}_A := \psi_{A3}$.
\begin{lemma} \label{ncLT}
Let $\{\pi, \textrm{dec. RRC}\}$ denote any term which is of the form  ``constant times a component of the deformation tensor" or ``constant times a decaying Ricci rotation coefficient".
We have the following formulae for the null-decomposition of $J^2\left(T,W\right)$
\begin{equation}
 \Lambda \left(J\right) \equiv \frac{3}{4}\rho\left[-\frac{1}{2}\slashed{D}_4 \left(tr \pi\right) - \slashed{\nabla}^A \mathcal{P}_A +2 p \left(\rho + \frac{2M}{r^3}\right) + \{\pi, \textrm{dec. RRC}\} \right]
\end{equation}
\begin{equation}
 I\left(J\right)_A \equiv \frac{3}{4} \rho \left[ \frac{1}{2} \slashed{\nabla}_A tr \pi + \slashed{D}_3 \mathcal{P}_A + 2q \beta_A + \{\pi, \textrm{dec. RRC}\}  \right] 
 \end{equation}
\begin{equation}
 \Xi\left(J\right)_A \equiv -\frac{3}{4} \rho \left[\slashed{D}_4\mathcal{P}_A + 2p \beta_A + \{\pi, \textrm{dec. RRC}\} \right]
\end{equation}
\begin{equation}
 \Theta_{AB}\left(J\right) \equiv \frac{3}{4}\rho\left[ 2\slashed{\mathcal{D}}_2^\star \mathcal{P}_A  -2q \alpha_{AB} + \{\pi, \textrm{dec. RRC}\}  \right]
\end{equation}
\begin{equation}
 K\left(J\right) \equiv -\frac{3}{4}\rho \cdot \epsilon_{AB} \left[ -\slashed{\nabla}_A \mathcal{P}_B  - p\sigma \epsilon_{AB} + \{\pi, \textrm{dec. RRC}\} \right]
\end{equation}
and
\begin{equation}
\underline{\Lambda} \left(J\right) \equiv \frac{3}{4}\rho\left[-\frac{1}{2}\slashed{D}_3 \left(tr \pi\right) - \slashed{\nabla}^A \mathcal{Q}_A +2 q \left(\rho + \frac{2M}{r^3}\right) + \{\pi, \textrm{dec. RRC}\} \right]
\end{equation}
\begin{equation}
\underline{I}\left(J\right)_A \equiv \frac{3}{4} \rho \left[ \frac{1}{2} \slashed{\nabla}_A tr \pi + \slashed{D}_4 \mathcal{Q}_A - 2p \underline{\beta}_A + \{\pi, \textrm{dec. RRC}\}  \right] 
 \end{equation}
\begin{equation}
\underline{\Xi}\left(J\right)_A \equiv -\frac{3}{4} \rho \left[\slashed{D}_3\mathcal{Q}_A - 2q \underline{\beta}_A + \{\pi, \textrm{dec. RRC}\} \right]
\end{equation}
\begin{equation}
\underline{\Theta}_{AB}\left(J\right) \equiv \frac{3}{4}\rho\left[ 2\slashed{\mathcal{D}}_2^\star \mathcal{Q}_A  -2p \underline{\alpha}_{AB} + \{\pi, \textrm{dec. RRC}\}  \right]
\end{equation}
\begin{equation}
\underline{K}\left(J\right) \equiv -\frac{3}{4}\rho \cdot \epsilon_{AB} \left[ -\slashed{\nabla}_A \mathcal{Q}_B  + q\sigma \epsilon_{AB} + \{\pi, \textrm{dec. RRC}\} \right]
\end{equation}
\end{lemma}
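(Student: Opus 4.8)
The assertion is obtained by combining the two preceding lemmas through the decomposition $J\left(T,W\right) = \tfrac{1}{2}\left(J^1\left(T,W\right) + J^2\left(T,W\right)\right)$ and then rewriting the resulting formulae in projected form. Concretely, Lemma \ref{J1I} already identifies the $\rho$-proportional part of $J^1\left(T,W\right)$, and Lemma \ref{J2l} does the same for $J^2\left(T,W\right)$ in terms of ambient covariant derivatives of the antisymmetric two-tensor $\psi_{ab} = \tfrac{1}{2}\left(D_a T_b - D_b T_a\right)$; what remains is (i) to replace each ambient derivative $D_a \psi_{bc}$ by the corresponding slashed derivative $\slashed{D}_3$, $\slashed{D}_4$ or $\slashed{\nabla}$ acting on the $S^2_{t^\star,u}$-tangential one-forms $\mathcal{P}_A = \psi_{A4}$, $\mathcal{Q}_A = \psi_{A3}$, and (ii) to renormalise the undifferentiated $\rho$-terms occurring inside the brackets. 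The entire content is to check that the discrepancies introduced by (i) and (ii) all lie in the declared class $\{\pi, \textrm{dec. RRC}\}$.

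For step (i) one expands, configuration by configuration, a frame component $\left(D\psi\right)_{abc}$ as $\slashed{\mathcal{D}}_a$ of the $S^2$-tangential projection plus the terms coming from $D_a e_b$ and $D_a e_c$, using the connection coefficients of the admissible null frame collected in Appendix \ref{UF}. The frame-rotation corrections are of two kinds: (a) contractions of the angular components $\mathcal{P}$, $\mathcal{Q}$ with Ricci rotation coefficients, which are dealt with termwise; and (b) appearances of the non-tangential components of $\psi$, namely $\psi_{34}$ and the scalar antisymmetric tangential component. For (b) one uses that these non-angular components are, by the very definition of $\psi$ and of $T = \tfrac{1}{2}\left(p e_3 + q e_4\right)$, expressible through $T\left(p\right)$, $T\left(q\right)$, $e_A\left(p\right)$, $e_A\left(q\right)$, $\slashed{D}_3[p - k_\chi^-]$, $\slashed{D}_4[q - \tfrac{k_\chi^+}{k_\chi}]$ and Ricci rotation coefficients; the fixed Schwarzschild pieces cancel against the corresponding Schwarzschild connection terms (since in Schwarzschild $T$ is genuinely Killing, $\psi = DT$ is purely the Killing two-form there), and the remainders decay by assumption (4) of Definition \ref{ultS}. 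The divergence-of-deformation-tensor pieces $\phantom{}^{(T)}\mathfrak{p}$ that appear in the raw formulae of Lemma \ref{J2l} cancel identically, exactly as in the proof of that lemma, leaving precisely the displayed $\slashed{D}_a\left(tr\,\pi\right)$ and $\slashed{\nabla}_A\left(tr\,\pi\right)$ terms together with the $\slashed{\mathcal{D}}_2^\star$, $\slashed{div}$, $\slashed{curl}$ combinations of $\mathcal{P}$, $\mathcal{Q}$.

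For step (ii), wherever Lemma \ref{J2l} produces a bracket term $2p\rho$ or $2q\rho$ I write $\rho = \left(\rho + \tfrac{2M}{r^3}\right) - \tfrac{2M}{r^3}$; the first summand is the renormalised term recorded in Lemma \ref{ncLT}, while the remaining $-\tfrac{4Mp}{r^3}$ (respectively $-\tfrac{4Mq}{r^3}$) is grouped with the $J^1$-contribution. Indeed, by Lemma \ref{J1I} the $\rho$-proportional part of $J^1\left(T,W\right)$ consists only of terms of the form $tr\,H\cdot\widehat{\pi}^{\cdot\cdot}$ or $tr\,\underline{H}\cdot\widehat{\pi}^{\cdot\cdot}$, i.e.\ a bounded Ricci coefficient times a component of the deformation tensor, so that, once the overall factor $\tfrac{3}{4}\rho$ is extracted, these together with the leftover $r^{-3}$-pieces are absorbed into $\{\pi, \textrm{dec. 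RRC}\}$.

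The main obstacle is the bookkeeping in step (i): one must verify, component by component among the ten null pieces $\Lambda, I, \Xi, \Theta, K$ and their underlined counterparts, that no non-decaying term outside the class $\{\pi, \textrm{dec. RRC}\}$ survives the passage from ambient to projected derivatives, and in particular that the only undifferentiated curvature components left inside the brackets are the ones displayed ($2q\alpha$, $2p\beta$, $2q\beta$, $p\sigma$, the renormalised $\rho$-terms, and their underlined analogues). This is a finite but delicate computation; it is considerably streamlined by the fact that the overall factor $\rho$ has already been extracted, so that every surviving frame-rotation correction carries at least one additional Ricci rotation coefficient or one component of $\phantom{}^{(T)}\pi$.
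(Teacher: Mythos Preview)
Your overall strategy coincides with the paper's: combine Lemmas \ref{J1I} and \ref{J2l}, then convert the ambient derivatives $D_a\psi_{bc}$ to slashed derivatives of $\mathcal{P}_A=\psi_{A4}$, $\mathcal{Q}_A=\psi_{A3}$. Step (i) is correctly identified as the main computation, and your remark that the $J^1$-contributions are all of the form $(\text{bounded RRC})\cdot{}^{(T)}\widehat{\pi}$ and hence land in $\{\pi,\text{dec. RRC}\}$ is right.

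However, step (ii) as written contains an error. You propose $2p\rho = 2p(\rho+\tfrac{2M}{r^3}) - \tfrac{4Mp}{r^3}$ and then assert that the leftover $-\tfrac{4Mp}{r^3}$ is absorbed into $\{\pi,\text{dec. RRC}\}$. But $\tfrac{4Mp}{r^3}$ is a fixed Schwarzschild quantity, neither a component of the deformation tensor nor a \emph{decaying} Ricci coefficient; it does not belong to that class. The renormalisation is produced by a different mechanism already inside your step (i): when you convert $D^A\psi_{A4}$ to $\slashed{\nabla}^A\mathcal{P}_A$, the frame-rotation correction brings in the term $-\tfrac{1}{2}\,tr\,H\,\psi_{34}$, and writing $\psi_{34}$ out in terms of $\slashed{D}_3 p$, $\underline{\Omega}$, $\pi_{34}$ one finds that its Schwarzschild part equals exactly $+\tfrac{4Mp}{r^3}$. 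It is this contribution that combines with the raw $2p\rho$ from Lemma \ref{J2l} to give $2p(\rho+\tfrac{2M}{r^3})$ plus genuinely decaying remainders. In other words, the Schwarzschild cancellation you invoke vaguely in step (i) is precisely what renormalises $\rho$; it should not be separated from step (ii). The paper carries this out explicitly for $\Lambda$ (and records the analogous explicit form of $I$), which is the cleanest way to see the mechanism.
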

\begin{proof}
Note that
\begin{align}
D^A \psi_{A4} &= \slashed{\nabla}^A \psi_{A4} - \frac{1}{2} tr H \psi_{34} + V_A \psi_{A4}   \nonumber \\ &= \slashed{\nabla}^A \psi_{A4} + \frac{1}{2} tr H \left(\slashed{D}_3 p - \slashed{D}_4 q -2\underline{\Omega} p + 2\Omega q\right)\nonumber \\
&= \slashed{\nabla}^A \psi_{A4} + \frac{1}{2} tr H \left(2\slashed{D}_3 p - 4 \underline{\Omega} p + \pi_{34} \right)
\end{align}
and
\begin{equation}
D_3 \psi_{A4} = \slashed{D}_3 \psi_{A4} - Z_A \psi_{34} -  2Z_B \psi_{AB} - 2\underline{\Omega} \psi_{A4}
\end{equation}
\begin{equation}
D_4 \psi_{A4} = \slashed{D}_4 \psi_{A4} - Y_A \psi_{34} -  2Y_B \psi_{AB} + 2\Omega \psi_{A4}
\end{equation}
\begin{align}
D_A \psi_{4B} + D_B \psi_{4A} - \delta_{AB} \left(D^A \psi_{4A}\right) \equiv - 2\slashed{\mathcal{D}}_2^\star \psi_{4A} - \widehat{H}_{AB} \psi_{34} - 2 \widehat{H_{CB}\psi_{AC}} \nonumber \\
\equiv 2\slashed{\mathcal{D}}_2^\star \psi_{A4} - \widehat{H}_{AB} \psi_{34} 
\end{align}
and similarly for exchanging $3$ and $4$ indices. Hence for $\Lambda$ we find
\begin{align}
\Lambda \left(J^1\right) \equiv \frac{3}{4} \rho \left[-\frac{1}{2} tr H \pi_{34} -\frac{1}{4} tr H tr \pi - \frac{1}{2} tr \underline{H} \pi_{44} - \frac{1}{2} tr H \delta^{AB} \pi_{AB} + \frac{1}{4} tr H tr \pi \right] \nonumber
\end{align}
\begin{align}
\Lambda \left(J^2\right) \equiv \frac{3}{4} \rho \left[ -\frac{1}{2}\slashed{D}_4 \left(tr \pi\right) - \slashed{\nabla}^A \mathcal{P}_A +2 p \rho - \frac{1}{2} tr H \left(2\slashed{\nabla}_3 p - 4 \underline{\Omega} p + \pi_{34}\right) \right] \nonumber
\end{align}
which upon adding produces
\begin{align}
\Lambda \left(J\right) \equiv \frac{3}{4} \rho \Big[ -\frac{1}{2}\slashed{D}_4 \left(tr \pi\right) - \slashed{\nabla}^A \mathcal{P}_A +2 p \rho - \frac{1}{2} tr H \left(2\slashed{\nabla}_3 p - 4 \underline{\Omega} p\right) \nonumber \\ -tr \chi \pi_{34} - \frac{1}{2} tr \underline{H} \pi_{44} - \frac{1}{2} tr H \delta^{AB} \pi_{AB} \Big] \nonumber
\end{align}
Since up to decaying Ricci coefficients $2 p \rho - \frac{1}{2} tr H \left(2\slashed{\nabla}_3 p - 4 \underline{\Omega} p\right) = 2p \left(\rho + \frac{2M}{r^3}\right)$ the result follows. For later purposes we also collect the explicit form of $I$. It is
\begin{align}
\underline{I}\left(J\right)_A \equiv \frac{3}{4} \rho \left[ \frac{1}{2} \slashed{\nabla}_A tr \pi + \slashed{D}_3 \mathcal{P}_A + 2q {\beta}_A - Z_A \psi_{34} - 2\underline{\Omega} \mathcal{P}_A  - tr \underline{H} \widehat{\pi}^{3A} - 2 tr H \widehat{\pi}^{4A} \right] \, . \nonumber
\end{align} 
\end{proof}
\subsection{$K_2^{\mathcal{X}\mathcal{Y}\mathcal{Z}} \left[\widehat{\mathcal{L}}_T W\right]$}
Using the previous Lemma we can now look at the spacetime integrands $D(T,W)$. 

\begin{proposition} \label{K2T}
The terms $D(T,W)$ arising from $J^1 \left(T,W\right) + J^2\left(T,W\right)$ satisfy
\begin{align}
D\left(T,W\right)_{444} \equiv 6 \slashed{D}_4 \left(\rho_0 \beta  \left(\widehat{\mathcal{L}}_T W\right) \cdot \mathcal{P} \right)+ 6\slashed{div} \left(\rho_0\alpha \left(\widehat{\mathcal{L}}_T W\right) \cdot \mathcal{P} \right) \nonumber \\ - 3 \mathcal{L}_T \Big[q \rho |\alpha\left(W\right)|^2 - 2 p \rho |\beta\left(W\right)|^2\Big]
\end{align}
\begin{align}
D\left(T,W\right)_{443} \equiv  6 \slashed{D}_3 \left( \rho_0 \mathcal{P} \cdot {\beta} \left(\widehat{\mathcal{L}}_T W\right)  \right) - 3 \slashed{D}_4 \left( \rho_0 \cdot tr {}^{(T)}\pi \rho \left(\widehat{\mathcal{L}}_T W\right)  \right) \nonumber \\ - 6 \slashed{\mathcal{D}}_1 \left(- \mathcal{P}_A \ \rho _ 0 \  \rho \left(\widehat{\mathcal{L}}_T W\right),  \mathcal{P}_A \ \rho _ 0 \  \sigma \left(\widehat{\mathcal{L}}_T W\right)\right) \nonumber \\
+ 6 \mathcal{L}_T \Big(p\rho \left(\rho+ \frac{2M}{r^3}\right)^2 + p\rho_0 \sigma^2 + q \rho_0 |\beta|^2 \Big)
\end{align}
\begin{align}
D\left(T,W\right)_{333} \equiv -6 \slashed{D}_3 \left(\rho_0 \underline{\beta}  \left(\widehat{\mathcal{L}}_T W\right) \cdot \mathcal{Q} \right)+ 6\slashed{div} \left(\rho_0 \underline{\alpha} \left(\widehat{\mathcal{L}}_T W\right) \cdot \mathcal{P} \right) \nonumber \\ - 3 \mathcal{L}_T \Big[p \rho |\underline{\alpha}\left(W\right)|^2 - 2 q \rho |\underline{\beta}\left(W\right)|^2\Big]
\end{align}
\begin{align}
D\left(T,W\right)_{334} \equiv  -6 \slashed{D}_4 \left( \rho_0 \mathcal{Q} \cdot \underline{\beta} \left(\widehat{\mathcal{L}}_T W\right)  \right) - 3 \slashed{D}_3 \left( \rho_0 \cdot tr {}^{(T)}\pi \rho \left(\widehat{\mathcal{L}}_T W\right)  \right) \nonumber \\ + 6 \slashed{\mathcal{D}}_1 \left(- \mathcal{Q}_A \ \rho _ 0 \  \rho \left(\widehat{\mathcal{L}}_T W\right),  -\mathcal{Q}_A \ \rho _ 0 \  \sigma \left(\widehat{\mathcal{L}}_T W\right)\right) \nonumber \\
+ 6 \mathcal{L}_T \Big(q\rho \left(\rho+ \frac{2M}{r^3}\right)^2 + q\rho_0 \sigma^2 + p \rho_0 |\underline{\beta}|^2 \Big)
\end{align}
where $\equiv$ denotes equality up to terms of the following form
\begin{align} \label{ermod1}
\left(\widehat{\mathcal{L}}_T W  \right) \cdot \{\rho_0 \pi \ , \   \rho_ 0 \textrm{dec. RRC}  \}  
\end{align}
and lower order terms of the form
\begin{align} \label{ermod2}
\left( W \cdot \tilde{W}  \right) \cdot \{ \pi \ , \  \textrm{dec. RRC}  \}  
\end{align}
\end{proposition}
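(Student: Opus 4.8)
The plan is to take the null-decomposed expressions for $D(T,W)_{abc}$ in terms of $J^1(T,W)+J^2(T,W)$ provided by Proposition \ref{GNDK} and Lemma \ref{ncLT}, and to organize the $\rho$-proportional terms into \emph{exact divergences} plus \emph{exact Lie-$T$ derivatives} plus acceptable remainders of the schematic form (\ref{ermod1})--(\ref{ermod2}). I would write $\rho = \rho_0 + \hat\rho$ with $\rho_0 = -\frac{2M}{r^3}$ the Schwarzschild value and $\hat\rho = \rho + \frac{2M}{r^3}$ the (decaying) renormalized component, noting that every place $\hat\rho$ appears instead of $\rho_0$ contributes a quadratically decaying term of the type (\ref{ermod1}) and can be discarded into the error. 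So throughout one replaces the prefactor $\rho$ in Lemma \ref{ncLT} by $\rho_0$, which has the crucial property that it depends on $r$ only, so $\slashed{D}_4 \rho_0$, $\slashed{D}_3 \rho_0$, $\slashed{\nabla}\rho_0$ are all lower order (indeed $\slashed{\nabla}\rho_0 = 0$, and the $\slashed{D}_i \rho_0$ are proportional to $\rho_0 \cdot (\text{Ricci coeff})$), so $\rho_0$ can be freely moved in and out of derivatives at the cost of remainders of type (\ref{ermod2}).

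The computation then proceeds component by component. Take $D(T,W)_{444} = 4\alpha(\widehat{\mathcal{L}}_T W)\cdot\Theta(T,W) - 8\beta(\widehat{\mathcal{L}}_T W)\cdot\Xi(T,W)$. Insert $\Theta_{AB}(J)\equiv \tfrac34\rho_0[2\slashed{\mathcal{D}}_2^\star\mathcal{P}_A - 2q\alpha_{AB}]$ and $\Xi(J)_A \equiv -\tfrac34\rho_0[\slashed{D}_4\mathcal{P}_A + 2p\beta_A]$. The terms with $\mathcal{P}$-derivatives give $6\rho_0\,\alpha(\widehat{\mathcal{L}}_T W)\cdot\slashed{\mathcal{D}}_2^\star\mathcal{P} + 6\rho_0\,\beta(\widehat{\mathcal{L}}_T W)\cdot\slashed{D}_4\mathcal{P}$; now use that $\slashed{\mathcal{D}}_2^\star$ is (up to lower order and sphere-divergences) the adjoint of $\slashed{div}$ and that $\slashed{D}_4$ integrates by parts against the outgoing volume form, together with the Bianchi equations (\ref{Bianchi2}) and the relation $\slashed{div}\,\alpha \approx \beta_4$ (so that the $\slashed{D}_4$ on $\beta$ and the $\slashed{div}$ on $\alpha$ combine), to convert these into the total derivatives $6\slashed{D}_4(\rho_0\,\beta(\widehat{\mathcal{L}}_T W)\cdot\mathcal{P}) + 6\slashed{div}(\rho_0\,\alpha(\widehat{\mathcal{L}}_T W)\cdot\mathcal{P})$ modulo (\ref{ermod1})--(\ref{ermod2}). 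Meanwhile the terms with the curvature-prefactor $q$ and $p$ become $-6q\rho_0\,\alpha(\widehat{\mathcal{L}}_T W)\cdot\alpha(W) + 12p\rho_0\,\beta(\widehat{\mathcal{L}}_T W)\cdot\beta(W)$, which by Lemma \ref{LTTL} (identifying $\alpha(\widehat{\mathcal{L}}_T W) \equiv \widehat{\slashed{\mathcal{L}}}_T\alpha$, quadratically up to remainders since $tr\,{}^{(T)}\pi=0$) is, modulo (\ref{ermod2}), the same as $-3q\rho_0\,\mathcal{L}_T|\alpha(W)|^2 + 6p\rho_0\,\mathcal{L}_T|\beta(W)|^2 = -3\mathcal{L}_T[q\rho_0|\alpha|^2 - 2p\rho_0|\beta|^2]$ up to terms where $\mathcal{L}_T$ falls on $q\rho_0$ or $p\rho_0$, which are again of type (\ref{ermod2}) since ${}^{(T)}\pi$ and the Ricci coefficients of $T$ decay. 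Replacing $\rho_0$ by $\rho$ inside the $\mathcal{L}_T$ (the difference being $\hat\rho$, type (\ref{ermod1})/(\ref{ermod2})) gives the stated form. The three other components ($443$, $333$, $334$) are handled identically, using the barred formulae of Lemma \ref{ncLT}, the Bianchi equations (\ref{Bianchi4})--(\ref{Bianchi10}), (\ref{renormrho}) for the $\hat\rho$/$\sigma$ pair (this is where the $(\rho + \frac{2M}{r^3})^2$ and $\sigma^2$ terms originate, coming from the $2p(\rho + \frac{2M}{r^3})$ piece of $\Lambda(J)$ paired against $\rho(\widehat{\mathcal{L}}_T W)$), and the $\slashed{\mathcal{D}}_1$-structure of the $(\rho,\sigma)$ Bianchi system to produce the $\slashed{\mathcal{D}}_1(-\mathcal{P}_A\rho_0\rho(\widehat{\mathcal{L}}_T W), \mathcal{P}_A\rho_0\sigma(\widehat{\mathcal{L}}_T W))$ divergences; the $\slashed{D}_4(\rho_0\, tr\,{}^{(T)}\pi\,\rho(\widehat{\mathcal{L}}_T W))$ term appears only when one does \emph{not} impose $tr\,{}^{(T)}\pi=0$, consistent with the remark opening the section.

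The main obstacle I anticipate is the bookkeeping of which integrations by parts are "clean": one must verify that every time $\slashed{D}_4$, $\slashed{D}_3$, $\slashed{\nabla}$, or $\slashed{\mathcal{D}}_2^\star$ is moved off $\mathcal{P}$ (or $\mathcal{Q}$) and onto the curvature factor $\rho_0 \cdot u(\widehat{\mathcal{L}}_T W)$, the resulting derivative of $u(\widehat{\mathcal{L}}_T W)$ can be rewritten, \emph{via the commuted Bianchi equations of section \ref{abrsabbb} applied to the Weyl field $\widehat{\mathcal{L}}_T W$}, as (angular-derivative of a different null component) + l.o.t., and that this different component is precisely the one that the adjoint relation pairs it with — i.e. the $\slashed{D}_4$ acting on $\beta(\widehat{\mathcal{L}}_T W)$ is traded for $\slashed{div}$ on $\alpha(\widehat{\mathcal{L}}_T W)$ and recombines with the $\slashed{\mathcal{D}}_2^\star$-on-$\mathcal{P}$ term already present. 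This cancellation-of-highest-derivative structure, flagged in the introduction as the key point for gaining a derivative in (\ref{diffterms}), must be checked to hold \emph{exactly at the level of these formulae}, with all discrepancies genuinely falling into the two error classes (\ref{ermod1}) (quadratic, one factor $\widehat{\mathcal{L}}_T W$, one factor $\rho_0\pi$ or $\rho_0\cdot$dec.RRC) and (\ref{ermod2}) (products $W\cdot\tilde W$ times $\pi$ or dec.RRC, i.e. lower order in differentiability). A secondary, purely organizational difficulty is keeping track of numerical constants ($3/4$, factors of $2$, the signs in the barred versus unbarred components) through the chain $J \to \Lambda,\Xi,\Theta,K,I \to D(T,W)$, but this is routine given the explicit formulae quoted from \cite{ChristKlei} in Proposition \ref{GNDK}.
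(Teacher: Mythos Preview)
Your proposal is correct and follows essentially the same route as the paper: insert the formulae of Lemma \ref{ncLT} into Proposition \ref{GNDK}, integrate the $\slashed{D}_4$, $\slashed{D}_3$, $\slashed{\mathcal{D}}_2^\star$ derivatives off $\mathcal{P}$ (resp.~$\mathcal{Q}$) by parts onto the curvature factor, and then use the Bianchi equations for the Weyl field $\widehat{\mathcal{L}}_T W$ (e.g.~$\slashed{div}\,\alpha(\widehat{\mathcal{L}}_T W) - \slashed{D}_4\beta(\widehat{\mathcal{L}}_T W) = 2\,tr H\,\beta - E_4(\beta) + J_{4A4}$) so that the highest-order derivative cancels, while the curvature-times-curvature pieces are recognised via Lemma \ref{LTTL} as $\mathcal{L}_T$ of a quadratic expression.

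One notational point: in this paper $\rho_0 := \rho(W)$ is the (non-decaying) $\rho$-component of the \emph{actual} Weyl tensor, not the Schwarzschild value $\overline{\rho} = -\tfrac{2M}{r^3}$; you have conflated the two. Since $\rho(W) - \overline{\rho} = \hat\rho$ is decaying, the difference between your formulas and the stated ones lies in the error class (\ref{ermod1})--(\ref{ermod2}), so this does not affect correctness. Correspondingly, your claim $\slashed{\nabla}\rho_0 = 0$ holds only for $\overline{\rho}$; the paper instead keeps terms like $\alpha(\widehat{\mathcal{L}}_T W)\,\slashed{\nabla}\rho_0\cdot\mathcal{P}$ and $(\slashed{D}_4\rho_0)\,\beta(\widehat{\mathcal{L}}_T W)\cdot\mathcal{P}$ explicitly and absorbs them as lower-order remainders (note $\mathcal{P}$ itself vanishes in Schwarzschild, so these are of type (\ref{ermod1})).
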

\begin{proof}
First note that all terms arising from $J^1 \left(T,W\right)$ are of the form (\ref{ermod1}), (\ref{ermod2}) or even lower order.
Denote by $\rho_0= \rho\left(W\right)$ the non-decaying $\rho$ component of the uncommuted Weyl tensor we compute, modulo cubic terms in which all three components decay and terms of the form $\left(\widehat{\mathcal{L}}_T W\right) \cdot \{\pi, \textrm{dec. RRC}\}  \cdot \rho_0$:
\begin{align}
D\left(T,W\right)_{444} \equiv 4 \alpha \left(\widehat{\mathcal{L}}_T W\right) \cdot \frac{3}{2}\rho_0 \left(\slashed{\mathcal{D}}_2^\star \mathcal{P}_A - q \alpha_{AB}\right) \nonumber \\ - 8 \beta \left(\widehat{\mathcal{L}}_T W\right) \cdot \left(-\frac{3}{4} \rho_0 \left(2p \beta_A + \slashed{D}_4 \mathcal{P}_A \right)\right) \nonumber
\end{align}
Using Lemma \ref{LTTL} we can write
\begin{align}
D\left(T,W\right)_{444} \equiv 6 q \, \rho_0 \Bigg[ -\frac{1}{2}  \widehat{\mathcal{L}}_T \left(|\alpha|^2\right)  \Bigg] 
+ 12 p \, \rho_0 \Bigg[ \frac{1}{2} \widehat{\mathcal{L}}_T \left(|\beta|^2\right)  \Bigg]  \nonumber \\ 
6\rho_0 \left[\slashed{div} \alpha \left(\widehat{\mathcal{L}}_T W\right) - \slashed{D}_4 \beta  \left(\widehat{\mathcal{L}}_T W\right)\right] \mathcal{P}_A - \frac{6}{\rho_0} \left(\slashed{D}_4 \rho_0\right) \beta  \left(\widehat{\mathcal{L}}_T W\right) \cdot\mathcal{P}_A \nonumber \\ + 6 \slashed{D}_4 \left(\rho_0 \beta  \left(\widehat{\mathcal{L}}_T W\right) \cdot \mathcal{P}_A \right) + 6\slashed{div} \left(\rho_0\alpha \left(\widehat{\mathcal{L}}_T W\right) \cdot \mathcal{P} \right) - 6  \alpha \left(\widehat{\mathcal{L}}_T W\right) \slashed{\nabla} \rho_0 \mathcal{P} \nonumber
\end{align}
and hence
\begin{align}
D\left(T,W\right)_{444} \equiv  6\rho_0 \left[\left(2 tr H +  2\Omega + \frac{3}{2} tr H\right) \beta  \left(\widehat{\mathcal{L}}_T W\right)  - \alpha \left(\widehat{\mathcal{L}}_T W\right) \slashed{\nabla} \rho_0 - J_{4A4} \right] \mathcal{P}_A \nonumber \\ + 6 \slashed{D}_4 \left(\rho_0 \beta  \left(\widehat{\mathcal{L}}_T W\right) \cdot \mathcal{P} \right)+ 6\slashed{div} \left(\rho_0\alpha \left(\widehat{\mathcal{L}}_T W\right) \cdot \mathcal{P} \right) - 3 \mathcal{L}_T \left[q \rho_0 \alpha^2 - 2 p \rho \beta^2\right] \nonumber
\end{align}
Note that the terms in the last line are pure derivatives. Here we have used in particular the Bianchi equation (\ref{Bianchi2}) and the fact that the adjoint of $\slashed{\mathcal{D}}_2^\star$ is $\slashed{div}$. 

Similarly, again modulo cubic terms in which all three components decay and terms of the form $\left(\widehat{\mathcal{L}}_T W\right) \cdot  \{\pi, \textrm{dec. RRC}\} \cdot \rho_0$:
\begin{align}
 D\left(T,W\right)_{443} \equiv 8 \rho \left(\widehat{\mathcal{L}}_T W\right) \cdot \frac{3}{4} \rho_0 \left[2p \left(\rho + \frac{2M}{r^3}\right) - \slashed{div} \mathcal{P}_A - \frac{1}{2}  \slashed{D}_4 \left(tr \pi\right)\right] \nonumber \\ + 8 \sigma \left(\widehat{\mathcal{L}}_T W\right) \left(-\frac{3}{4} \rho_0 \left(- 2p \sigma - \slashed{curl} \ \mathcal{P}_A \right) \right) \nonumber \\  +  8 {\beta} \left(\widehat{\mathcal{L}}_T W\right) \cdot \left(\frac{3}{4} \rho_0 \left(2q \beta_A +  \slashed{D}_3 \mathcal{P}_A+ \frac{1}{2} \slashed{\nabla}_A tr \pi\right)\right) \nonumber
 \end{align}
 \begin{align}
 \equiv 12p \rho_0 \Bigg[\frac{1}{2} \widehat{\mathcal{L}}_T \left(\rho + \frac{2M}{r^3}\right)^2+  \left(\rho + \frac{2M}{r^3}\right)  \left(-\widehat{\mathcal{L}}_T \left(\frac{2M}{r^3}\right) - \frac{1}{8} tr {}^{(T)}\pi \rho \right) \Bigg] \nonumber \\ 
 +12p \rho_0 \Bigg[\frac{1}{2} \widehat{\mathcal{L}}_T \left(\sigma^2\right) - \frac{1}{8} tr {}^{(T)}\pi \sigma^2 \Bigg]
 +12 q \rho_0 \Bigg[ \frac{1}{2} \widehat{\mathcal{L}}_T \left(\beta^2\right)  \Bigg] \nonumber \\
 + 6 \rho_0 \mathcal{P}_A \Bigg[\slashed{\mathcal{D}}_1^\star \left(- \rho \left(\widehat{\mathcal{L}}_T W\right),  \sigma \left(\widehat{\mathcal{L}}_T W\right)\right) -\slashed{D}_3 {\beta} \left(\widehat{\mathcal{L}}_T W\right) \Bigg] \nonumber \\
 + 3 \rho_0 tr {}^{(T)}\pi \Bigg[\slashed{D}_4 \rho \left(\widehat{\mathcal{L}}_T W\right) - \slashed{div}  {\beta} \left(\widehat{\mathcal{L}}_T W\right) \Bigg] \nonumber 
 \\ 
 + 6 \slashed{D}_3 \left( \rho_0 \mathcal{P} \cdot {\beta} \left(\widehat{\mathcal{L}}_T W\right)  \right) - 3 \slashed{D}_4 \left( \rho_0 \cdot tr {}^{(T)}\pi \rho \left(\widehat{\mathcal{L}}_T W\right)  \right) \nonumber \\ - 6 \slashed{\mathcal{D}}_1 \left(- \mathcal{P}_A \ \rho _ 0 \  \rho \left(\widehat{\mathcal{L}}_T W\right),  + \mathcal{P}_A \ \rho _ 0 \  \sigma \left(\widehat{\mathcal{L}}_T W\right)\right) \, .
\end{align}
Using the Bianchi equations (\ref{Bianchi4}) and (\ref{Bianchi6}) yields the statement in the lemma. 
 The computation for $D\left(T, W\right)_{333}$ and $D\left(T, W\right)_{334}$ proceeds analogously.
 \end{proof}
Note that since these terms are going to be integrated, we have essentially gained a derivative: The worst error-terms involving $\rho_0$ are not of the form $\rho_0 \cdot u\left(\widehat{\mathcal{L}}_T W\right) \cdot D\pi$ but only $\rho_0 \cdot u\left(\widehat{\mathcal{L}}_T W\right) \cdot \pi$.
It is not hard to see that the structure we revealed at the first commutation with $T$, survives to higher orders.
\subsection{$K_2^{\mathcal{X}\mathcal{Y}\mathcal{Z}}\left[\widehat{\mathcal{L}}^{n+1}_T W\right]$}
Consider the $n+1$ times $T$-commuted Bianchi equation:
\begin{equation} 
D^\alpha \left(\widehat{\mathcal{L}}_T^{n+1} W\right)_{\alpha \beta \gamma \delta} = \widehat{\mathcal{L}}^n_T J_{\beta \gamma \delta} \left(T,W\right) + \sum_{i=0}^{n-1} \widehat{\mathcal{L}}^i_T J_{\beta \gamma \delta} \left(T, \widehat{\mathcal{L}}^{n-i}_T W\right) 
= J^{hard} + J^{easy} \nonumber \, .
\end{equation}
We use the index ``easy" for the second (summed) term, in view of the fact that a $T$-derivative has already fallen on $W$ in this expression. Hence this term already decays quadratically. On the other hand, the most difficult term in $J^{hard}$ is the one where all derivatives fall on the deformation tensor, in view of the non-decaying component $\rho$ of $W$.  
We set
\begin{equation}
K_2^{\mathcal{X}\mathcal{Y}\mathcal{Z}}\left[\widehat{\mathcal{L}}^{n+1}_T W\right] = \phantom{}^{easy}K_2^{\mathcal{X}\mathcal{Y}\mathcal{Z}}\left[\widehat{\mathcal{L}}^{n+1}_T W\right] + \phantom{}^{hard}K_2^{\mathcal{X}\mathcal{Y}\mathcal{Z}}\left[\widehat{\mathcal{L}}^{n+1}_T W \right] \, ,
\end{equation}
with the terms arising from the different terms on the right hand side, $J^{easy}$ and $J^{hard}$ via formula (\ref{divQ}). In view of the quadratically decaying structure of $J^{easy}$ we easily see
\begin{proposition} \label{easyerror}
Let $\left(\mathcal{R},g\right)$ be ultimately Schwarzschildean to order $k+1$ ($k>7$). For $1 \leq n \leq k-1$ and any $\lambda>0$,
\begin{align} \label{eeid}
 \Big| \int_{\tilde{\mathcal{M}}\left(\tau_1,\tau_2\right)} \sum_{a,b,c \in \{3,4\}} \phantom{}^{easy}K_2^{e_a e_b e_c}\left[\widehat{\mathcal{L}}^{n+1}_T W\right]\Big| \nonumber \\
 \leq  \left(\lambda + \frac{B_{\lambda} \epsilon}{\sqrt{\tau_1}}  \right) \cdot \sup_{\tau} \, \overline{\mathbb{E}}^{n+1} \left[W\right] \left(\tilde{\Sigma}_{\tau}\right) 
  + B_{\lambda} \epsilon \left(\tau_1\right)^{-\frac{1}{4}} \cdot \sup_{\tau}  \overline{\mathbb{E}}^{n} \left[\mathfrak{R}\right] \left(\tilde{\Sigma}_{\tau}\right) \nonumber \\
  +  \frac{\epsilon}{\sqrt{\tau_1}} \int_{\tilde{\mathcal{M}}\cap\{r\geq R\}} dt^\star dr d\omega r^{\frac{1}{2}} \left( \|\mathcal{D}^{n+1} \underline{\alpha}\|^2 \right) \, .
\end{align} 
Moreover, as in Proposition \ref{K1est}, if $\tilde{\mathcal{M}}\left(\tau_1, \tau_2\right)$ is replaced by $\mathcal{M}\left(\tau_1, \tau_2\right)$ (and $\tilde{\Sigma}_\tau$ by $\Sigma_\tau$), the last term can be dropped. For $n=0$, ${}^{easy}K_2^{e_a e_b e_c}\left[\widehat{\mathcal{L}}_T W\right] = 0$.
\end{proposition}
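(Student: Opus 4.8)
The plan is to exploit the quadratically decaying structure of $J^{easy}$ and reduce everything to the already-established estimates on lower-order energies of $W$ and of the deformation tensor. First I would expand, using formula (\ref{divQ}) and the null-decompositions of Proposition \ref{GNDK}, each $\phantom{}^{easy}K_2^{e_a e_b e_c}\left[\widehat{\mathcal{L}}^{n+1}_T W\right]$ as a sum of trilinear terms of the schematic form $u\left(\widehat{\mathcal{L}}^{n+1}_T W\right) \cdot J^{easy}$ where $J^{easy} = \sum_{i=0}^{n-1} \widehat{\mathcal{L}}^i_T J\left(T,\widehat{\mathcal{L}}^{n-i}_T W\right)$. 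The crucial observation — which is the content of ``easy'' — is that in each summand at least one full $T$-derivative has already fallen on the curvature factor $W$, so that applying Lemma \ref{LTTL} and Lemmas \ref{J1I}, \ref{J2l}, \ref{ncLT}, every such term is of the genuinely cubic form $\left(\text{component of } \widehat{\mathcal{L}}^{\leq n+1}_T W\right) \cdot \left(\text{component of } \widehat{\mathcal{L}}^{\leq n}_T W\right) \cdot \left(\text{decaying RRC or } {}^{(T)}\pi \text{ and its } \leq n \text{ derivatives}\right)$; the non-decaying $\rho$-contribution is absent precisely because it only appears in $J^{hard}$. In particular, for $n=0$ the sum $\sum_{i=0}^{-1}$ is empty, giving ${}^{easy}K_2=0$ as claimed.

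Next I would estimate these cubic integrals by the standard ``$L^\infty \times L^2 \times L^2$'' splitting: put the lowest-derivative curvature factor in $L^\infty$ (pointwise decay in the interior and the appropriate $r$-weighted decay near infinity, which follow from the ultimately Schwarzschildean assumption and Sobolev embedding applied to the bounded lower-order energies) and keep the two highest-derivative factors in $L^2$. One of the $L^2$ factors is then controlled by $\overline{\mathbb{E}}^{n+1}\left[W\right]$ (or, when the derivative count on $W$ is at most $n$, by $\overline{\mathbb{E}}^{n}\left[W\right]$, which is a lower-order bounded quantity), and the other by the derivatives of the deformation tensor, which by the null-structure equations (section \ref{nseq}) and Definition \ref{ultimateKilling} are themselves controlled by $\overline{\mathbb{E}}^{n}\left[\mathfrak{R}\right]$ together with the decay rates $l(n)$ of (\ref{Rldef}). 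The decaying RRC/${}^{(T)}\pi$ factor supplies, after integrating $dt^\star$, a gain of $\left(\tau_1\right)^{-\frac14}$ in the interior region $r\leq t^\star$ from the pointwise bound $\epsilon\left(t^\star\right)^{-5/4}$; away from the interior, i.e.\ for $r\geq t^\star$, one uses the $r$-weighted pointwise estimates $\frac{\epsilon}{r^2}$ on ${}^{(T)}\pi$ and the RRC to absorb the $r$-weights carried by the higher-order energies. A Cauchy--Schwarz and a small-$\lambda$ Young's inequality then convert the $L^2 \times L^2$ product into $\lambda \cdot \sup_\tau \overline{\mathbb{E}}^{n+1}\left[W\right]\left(\tilde{\Sigma}_\tau\right) + B_\lambda (\ldots)$.

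The one place requiring care — and the reason for the last term on the right-hand side of (\ref{eeid}) — is the asymptotic region $r\geq R$ on the tilded slices, where $\underline{\alpha}$ is not controlled on the characteristic null-hypersurface $N_{out}\left(S^2_{\tau,R}\right)$ appearing in $\overline{\mathbb{E}}^{n+1}\left[W\right]\left(\tilde{\Sigma}_\tau\right)$. Whenever a trilinear term contains the factor $\underline{\alpha}\left(\widehat{\mathcal{L}}^{n+1}_T W\right)$ (these arise, via Proposition \ref{GNDK}, from $D(T,W)_{333}$ and $D(T,W)_{334}$ contracted against the $e_3$-heavy pieces), one cannot absorb it into a slice energy and must instead leave it as a weighted spacetime integral $\int r^{1/2}\|\mathcal{D}^{n+1}\underline{\alpha}\|^2$, paying a factor $\frac{\epsilon}{\sqrt{\tau_1}}$ extracted from the decay of the accompanying ${}^{(T)}\pi$ or RRC factor. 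On the untilded slices this issue does not occur (the energy $\overline{\mathbb{E}}^n\left[W\right]\left(\Sigma_\tau\right)$ includes $\underline{\alpha}$ on all of $\Sigma_\tau$), so the last term drops, exactly as in Proposition \ref{K1est}. I expect this asymptotic bookkeeping — tracking which $r$-weight each factor carries so that the product lands inside an admissible decay matrix $P$ and the leftover $\underline{\alpha}$-integral has exactly weight $r^{1/2}$ — to be the main technical obstacle; everything else is a routine, if lengthy, application of Cauchy--Schwarz, Sobolev, and the decay hypotheses already assumed.
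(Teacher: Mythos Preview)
Your approach is essentially the same as the paper's: expand $J^{easy}$ schematically as a sum of products $(D\widehat{\mathcal{L}}^s_T\pi)\cdot(\widehat{\mathcal{L}}^{n-s}_T W)$ and $(\widehat{\mathcal{L}}^s_T\pi)\cdot(D\widehat{\mathcal{L}}^{n-s}_T W)$, split the resulting cubic integrals $L^\infty\times L^2\times L^2$ with the lower-derivative factor in $L^\infty$ (switching which factor is ``low'' at $s=\lfloor(n-1)/2\rfloor$), and treat the $\underline{\alpha}$-component separately in $r\geq R$ because it is absent from the characteristic energy on $N_{out}$.

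The one point you underemphasize is \emph{why} the asymptotic bookkeeping closes with exactly the weight $r^{1/2}$. The paper notes explicitly that a term of the form
\[
\int_{r\geq R} r^2\,\underline{\alpha}\bigl(\widehat{\mathcal{L}}^{n+1}_T W\bigr)\cdot\bigl(\slashed{D}_3\widehat{\mathcal{L}}^{n-1}_T({}^{(T)}\mathbf{i})\bigr)\cdot\widehat{\mathcal{L}}_T\underline{\alpha}
\]
--- in which all three factors decay only like $r^{-1}$ --- would not be controllable; the reason it does not occur is a \emph{signature} (null-structure) argument \`a la \cite{ChristKlei}: the contraction pattern in $K_2$ forces at least one factor in each cubic term to have improved $r$-decay. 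You should invoke this explicitly rather than leave it inside ``asymptotic bookkeeping'', since without it the $\underline{\alpha}$-integral would need a worse weight than $r^{1/2}$ and the estimate would fail.
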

\begin{proof}
Observe that schematically 
\begin{equation} 
J_{easy} \equiv \sum_{s=0}^{n-1} \left[\left(D\widehat{\mathcal{L}}^s_T \pi\right) \left( \widehat{\mathcal{L}}^{n-s}_T W \right) + \left(\widehat{\mathcal{L}}^{s}_T \pi\right) \left( D \widehat{\mathcal{L}}^{n-s}_T W \right) \right] \, ,
\end{equation}
where the $\equiv$ ignores all lower order terms which arise from the commutation of $D$ with the Lie derivative\footnote{these terms will not only be of lower order but also introduce additional decay} and all combinatorial (integer) factors for these terms. We then estimate, first in $r\leq R$:
\begin{align}
\Big| \int_{\tilde{\mathcal{M}}\left(\tau_1,\tau_2\right) \cap \{r \leq R\}} \sum_{a,b,c \in \{3,4\}} \phantom{}^{easy}K_2^{e_a e_b e_c}\left[\widehat{\mathcal{L}}^{n+1}_T W\right]\Big| \leq \int dt^\star  \|\widehat{\mathcal{L}}^{n+1}_T W  \|_{L^2} \Bigg[ \nonumber \\
 \sum_{s=0}^{\lfloor \frac{n-1}{2} \rfloor} \left(\|D\widehat{\mathcal{L}}^s_T \pi\|_{L^\infty} \|\widehat{\mathcal{L}}^{n-s}_T W  \|_{L^2} + \|\widehat{\mathcal{L}}^s_T \pi\|_{L^\infty} \|D\widehat{\mathcal{L}}^{n-s}_T W  \|_{L^2} \right)  \nonumber \\ + \sum_{\lfloor \frac{n-1}{2} \rfloor + 1}^{n-1} \| D\widehat{\mathcal{L}}^s_T \pi\|_{L^2} \|  \cdot \widehat{\mathcal{L}}^{n-s}_T W  \|_{L^\infty} +  \| \widehat{\mathcal{L}}^s_T \pi\|_{L^2} \|  \cdot D\widehat{\mathcal{L}}^{n-s}_T W  \|_{L^\infty}  \Bigg] \nonumber  \, ,
 \end{align}
 to which we apply Cauchy's inequality and Sobolev embedding:
 \begin{align}
 \leq \lambda \cdot \sup_{\tau} \, \overline{\mathbb{E}}^{n+1} \left[W\right] \left(\tilde{\Sigma}_{\tau}\right)  \nonumber \\ 
 + B_{\lambda}\sum_{i=0}^{\lfloor \frac{n-1}{2} \rfloor} \left(\tau_2-\tau_1\right)^2 \cdot \sup_{\tau} \, \mathbb{E}^{i+3} \left[\mathfrak{R}\right] \left(\tilde{\Sigma}_{\tau}\right) \cdot \sup_{\tau} \, \mathbb{E}^{n+1-i} \left[W\right] \left(\tilde{\Sigma}_{\tau}\right) \nonumber \\
  + B_{\lambda}\sum_{i=\lfloor \frac{n-1}{2} \rfloor+1}^{n-1} \left(\tau_2-\tau_1\right)^2 \cdot \sup_{\tau} \, \mathbb{E}^{n-i+3} \left[W\right] \left(\tilde{\Sigma}_{\tau}\right) \cdot \sup_{\tau} \, \mathbb{E}^{i+1} \left[\mathfrak{R}\right] \left(\tilde{\Sigma}_{\tau}\right) \nonumber \, .
 \end{align}
One easily checks that for $n=k-1=7$ (the worst case) the ultimately Schwarzschildean assumption ensures that one can absorb the $\left(\tau_2-\tau_1\right)^2$-term using the decay of one of the $sup$-terms.

We can do the same estimate in $r\geq R$, the only thing we have to be careful about is that not all null-components appear on characteristic slices. This is easily accounted for by adding the spacetime term appearing on the right hand side of (\ref{eeid}). Note in this context that potentially divergent terms like
\begin{align}
\int dt^\star dr d\omega \, r^2 \, \underline{\alpha} \left(\widehat{\mathcal{L}}^{n+1}_T W\right) \cdot \left(\slashed{D}_3 \widehat{\mathcal{L}}^{n-1}_T  \left({}^{T} \mathbf{i}\right) \right)\cdot \widehat{\mathcal{L}}_T \underline{\alpha}
\end{align}
(i.e.~where all terms of the cubic expression only decay like $\frac{1}{r}$)
cannot appear in $K_2 \left[\mathcal{W}\right]$ because of the null-structure of the error-terms (cf.~the signature considerations in \cite{ChristKlei}). At least one component in the cubic error-term has improved decay in $r$.
\end{proof}

For $\phantom{}^{hard}K_2^{\mathcal{X}\mathcal{Y}\mathcal{Z}}$ we have
\begin{proposition} \label{harderror}
Let $\left(\mathcal{R},g\right)$ be ultimately Schwarzschildean to order $k+1$ ($k>7$) and $\mathcal{X}=p_\mathcal{X} e_3 + q_\mathcal{X} e_4$, $\mathcal{Y}=p_\mathcal{Y} e_3 + q_\mathcal{Y} e_4$, $\mathcal{Z}=p_\mathcal{Z} e_3 + q_\mathcal{Z} e_4$ be vectorfields for bounded spacetime functions $p_i$ and $q_i$ satisfying $|T\left(p_i\right)| + |T\left(q_i\right)|\leq \epsilon \cdot \tau^{-\frac{5}{4}}$ in the interior $r < t^\star$. For $0 \leq n \leq k-1$
and any positive $\lambda_2$ we have
\begin{align} \label{harderroresta}
\Big| \int_{\tilde{\mathcal{M}}}  {}^{hard}K_2^{\mathcal{X}\mathcal{Y}\mathcal{Z}}\left[\widehat{\mathcal{L}}^{n+1}_T W\right]\Big| \leq  \left(\lambda_2 + \epsilon \left(\tau_1\right)^{-\frac{1}{4}}\right) \sup_{\tau \in \left(\tau_1, \tau_2\right)}  \overline{\mathbb{E}}^{n+1} \left[W\right]\left(\tilde{\Sigma}_{\tau}\right)  \nonumber \\
+  \overline{\mathbb{I}}^{n+1,nondeg} \left[\mathfrak{R}\right] \left(\tilde{\mathcal{M}}\left(\tau_1,\tau_2\right)\right) + B_{\lambda_2} \cdot \overline{\mathbb{E}}^{n} \left[\mathfrak{R}\right]\left(\mathcal{H}, \tilde{\Sigma}_{\tau_2}, \tilde{\Sigma}_{\tau_1}\right) \nonumber \\ 
+ B  \int_{\tilde{\mathcal{M}}\left(\tau_1,\tau_2\right),\Sigma_{\tau_1},\Sigma_{\tau_2}, \mathcal{H}} \frac{1}{r^2} \left(  \sum_{i=1}^n \| \widehat{\mathcal{L}}^{i}_T W \|^2 + \|\tilde{W}\|^2\right)  \nonumber \\ + \frac{\epsilon}{\sqrt{\tau_1}} \int_{\tilde{\mathcal{M}}\cap\{r\geq R\}} dt^\star dr d\omega r^{\frac{1}{2}} \left( \|\mathcal{D}^{n+1} \underline{\alpha}\|^2 \right) + \lambda_2  \sup_{\mathcal{H},i} |p_i| \int_{\mathcal{H}} \|\widehat{\mathcal{L}}^{n+1}_T W \|^2 \, .
\end{align}
As in Proposition \ref{easyerror}, if $\tilde{\mathcal{M}}\left(\tau_1, \tau_2\right)$ is replaced by $\mathcal{M}\left(\tau_1, \tau_2\right)$ (and $\tilde{\Sigma}_\tau$ by $\Sigma_\tau$), the penultimate term can be dropped.
\end{proposition}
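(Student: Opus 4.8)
The plan is to reduce the highest-order part of ${}^{hard}K_2^{\mathcal{X}\mathcal{Y}\mathcal{Z}}[\widehat{\mathcal{L}}^{n+1}_T W]$ to exactly the structure already exhibited in Proposition \ref{K2T}, but now propagated through $n$ further commutations with $T$, and then to exploit the resulting gain of one derivative. Recall from (\ref{nXWeylcommute}) that $J^{hard}=\widehat{\mathcal{L}}^n_T J(T,W)$, so ${}^{hard}K_2^{\mathcal{X}\mathcal{Y}\mathcal{Z}}[\widehat{\mathcal{L}}^{n+1}_T W]$ is, via (\ref{divQ}), a contraction of the null components of $\widehat{\mathcal{L}}^{n+1}_T W$ against those of $\widehat{\mathcal{L}}^n_T J(T,W)$. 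I would first insert the null decomposition of $J(T,W)$ from Lemma \ref{ncLT}: every term there is either proportional to the non-decaying component $\rho_0=\rho(W)$ together with one derivative of $\mathcal{P}$, $\mathcal{Q}$ or $tr\, {}^{(T)}\pi$ (or $\rho_0$ times $p,q$ times a curvature component), or of the quadratically decaying form $\rho_0\cdot\{{}^{(T)}\pi,\ \text{dec.\ RRC}\}$, or of strictly lower order. Applying $\widehat{\mathcal{L}}^n_T$ and Leibniz, the only genuinely dangerous piece is the one in which all $n$ derivatives fall on the $\mathcal{P},\mathcal{Q},tr\,{}^{(T)}\pi$ factor: whenever $j\geq 1$ of them hit $\rho_0$ one has $\widehat{\mathcal{L}}^j_T\rho_0\equiv\rho(\widehat{\mathcal{L}}^j_T W)$ up to decaying Ricci coefficients (by Lemma \ref{LTTL}), which decays because it is the curvature of the commuted field; all such terms, together with the quadratically decaying and lower-order pieces, are estimated by a Sobolev/Cauchy argument identical in spirit to Proposition \ref{easyerror} and feed into the term $B\int r^{-2}\big(\sum_{i=1}^{n}\|\widehat{\mathcal{L}}^i_T W\|^2+\|\tilde W\|^2\big)$.

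It then remains to treat the leading term, schematically $\rho_0\cdot(\widehat{\mathcal{L}}^n_T\slashed{D}\,{}^{(T)}\pi)\cdot u(\widehat{\mathcal{L}}^{n+1}_T W)$, which is the $(n+1)$-commutation analogue of (\ref{diffterms}). Here I would repeat the mechanism of Proposition \ref{K2T}: by Lemma \ref{ncLT} the derivative $\slashed{D}\mathcal{P}$ (resp.\ $\slashed{D}\mathcal{Q}$) enters each component of $D(T,\widehat{\mathcal{L}}^n_T W)$ contracted against precisely the curvature component needed so that, after an integration by parts in $T$ over $\tilde{\mathcal{M}}(\tau_1,\tau_2)$ together with Lemma \ref{LTTL}, the $T$-derivative can be made to fall on $\rho_0$ --- producing a cubic, fully decaying term --- plus spacetime divergences generating boundary contributions on $\tilde\Sigma_{\tau_1}$, $\tilde\Sigma_{\tau_2}$, $\mathcal{H}$ (and, in the tilded region, on $\mathcal{I}^+$). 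The remaining spatial-derivative terms are integrated by parts on the spheres and along the null directions: when the derivative lands on $\rho_0$ the term drops an order, and when it lands on the top-order factor one substitutes the commuted null-Bianchi equations of section \ref{abrsabbb} (e.g.\ $\alpha_3=-2\slashed{\mathcal{D}}^\star_2\beta+\text{l.o.t.}$ and its $T$- and $\Omega_i$-commuted versions), which again yields a term one order lower in total derivatives. The net outcome, exactly as in the $n=0$ case, is that the worst surviving contribution is one derivative better than the input, i.e.\ of the form $\rho_0\cdot(\widehat{\mathcal{L}}^{n-1}_T\slashed{D}\,{}^{(T)}\pi)\cdot u(\widehat{\mathcal{L}}^{n+1}_T W)$; the new errors generated along the way (from the inhomogeneities in the commuted Bianchi equations, from $\widehat{\mathcal{L}}_T$--$\slashed{D}$ commutators, and from the Ricci coefficients appearing through the null-structure equations for $\slashed{D}\mathcal{P},\slashed{D}\mathcal{Q}$) are again quadratically decaying or of lower order and are absorbed into the stated right-hand side.

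Finally, for the gained-derivative term I would apply Cauchy's inequality, keeping $u(\widehat{\mathcal{L}}^{n+1}_T W)$ in $L^2$ on slices --- controlled by $\sup_\tau\overline{\mathbb{E}}^{n+1}[W](\tilde\Sigma_\tau)$ with the free weight $\lambda_2$ --- and putting $\rho_0\cdot\widehat{\mathcal{L}}^{n-1}_T\slashed{D}\,{}^{(T)}\pi$ in $L^2$ over $\tilde{\mathcal{M}}(\tau_1,\tau_2)$; since $\rho_0$ is bounded and this factor carries at most $n+1$ derivatives of ${}^{(T)}\pi$ and the Ricci coefficients, it is controlled by the deformation-tensor bound of Definition \ref{ultimateKilling} together with $\overline{\mathbb{I}}^{n+1,nondeg}[\mathfrak{R}](\tilde{\mathcal{M}}(\tau_1,\tau_2))$, while the lower-order $\pi$-derivative pieces may instead be placed in $L^\infty$ using (\ref{decaydeft}) and Sobolev, producing the factor $\epsilon(\tau_1)^{-1/4}$ on $\sup_\tau\overline{\mathbb{E}}^{n+1}[W]$. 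The key point is that the \emph{top-order} curvature is only ever placed in $L^\infty_\tau L^2$, never in spacetime $L^2$, so no degeneration at the photon sphere is incurred. The boundary terms give $B_{\lambda_2}\overline{\mathbb{E}}^n[\mathfrak{R}](\mathcal{H},\tilde\Sigma_{\tau_2},\tilde\Sigma_{\tau_1})$; the horizon boundary term arising because $\mathcal{X},\mathcal{Y},\mathcal{Z}$ need not be null on $\mathcal{H}$ is proportional to $p_i|_{\mathcal{H}}$, giving the last term $\lambda_2\sup_{\mathcal{H},i}|p_i|\int_{\mathcal{H}}\|\widehat{\mathcal{L}}^{n+1}_T W\|^2$ (which vanishes for $T$, whose $p$ vanishes on $\mathcal{H}$); and the extra $r\geq R$ spacetime term in $\|\mathcal{D}^{n+1}\underline\alpha\|^2$ appears exactly as in Propositions \ref{K1est} and \ref{easyerror}, because $\underline\alpha$ is absent from the characteristic slices, and is dropped when $\tilde{\mathcal{M}}$ is replaced by $\mathcal{M}$. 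I expect the main obstacle to be precisely the second step: verifying that the ``integrate by parts in $T$, then substitute the Bianchi equation'' cancellation of Proposition \ref{K2T} genuinely survives $n$ further commutations, i.e.\ that after accounting for all inhomogeneities and commutator terms carried by the higher-order Bianchi equations, no uncancelled term of the original order $\rho_0\cdot\widehat{\mathcal{L}}^n_T\slashed{D}\,{}^{(T)}\pi\cdot\widehat{\mathcal{L}}^{n+1}_T W$ remains.
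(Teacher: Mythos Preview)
Your overall strategy matches the paper's proof: you correctly identify that the dangerous part of $J^{hard}$ is $\rho_0$ times the top-order derivative of the deformation tensor, and that the mechanism of Proposition~\ref{K2T} (integrate by parts in $T$, then use the Bianchi equations to cancel the highest-order pieces) persists under $n$ further commutations. The paper formalizes this as an internal Lemma generalising Proposition~\ref{K2T} to $D(T,\widehat{\mathcal{L}}^n_T W)$, after which the main terms are total derivatives producing exactly the boundary contributions you describe.

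There is, however, a genuine gap in your third paragraph. After the K2T mechanism, the residual spacetime term is schematically $\int_{\tilde{\mathcal{M}}}\rho_0\cdot\widehat{\mathcal{L}}^n_T\pi\cdot\widehat{\mathcal{L}}^{n+1}_T W$ (your ``gained-derivative term''). You propose to handle this by Cauchy, placing $\widehat{\mathcal{L}}^{n+1}_T W$ in $L^\infty_\tau L^2_x$ and $\rho_0\cdot\widehat{\mathcal{L}}^{n-1}_T\slashed{D}\pi$ in spacetime $L^2$. But these two norms do not pair: either you are forced to put the top-order curvature in spacetime $L^2$ (which you correctly note must be avoided), or the complementary factor must be placed in $L^1_\tau L^2_x$, and at $n=k-1$ the slice norm $\|\rho_0\,\widehat{\mathcal{L}}^{k-2}_T D\pi\|_{L^2(\tilde\Sigma_\tau)}$ decays only like $\tau^{-1/2}$, so its time integral diverges.

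The paper's resolution is one further integration by parts in $T$ on precisely this term (equation~(\ref{tyst})): one writes $\widehat{\mathcal{L}}^{n+1}_T W=\widehat{\mathcal{L}}_T(\widehat{\mathcal{L}}^n_T W)$ and moves the $T$ onto the other factors, obtaining (modulo boundary terms and terms where $T$ hits $\rho_0$, which are cubic)
\[
\int_{\tilde{\mathcal{M}}}\bigl(\widehat{\mathcal{L}}^{n+1}_T\pi\cdot\rho_0\bigr)\cdot\widehat{\mathcal{L}}^n_T W \, .
\]
Now the curvature factor is \emph{order $n$}, hence may legitimately be placed in spacetime $L^2$ --- it is exactly the term $B\int r^{-2}\sum_{i=1}^n\|\widehat{\mathcal{L}}^i_T W\|^2$ on the right-hand side of (\ref{harderroresta}) --- while $|\rho_0|^2|\widehat{\mathcal{L}}^{n+1}_T\pi|^2$ is controlled by $\overline{\mathbb{I}}^{n+1,nondeg}[\mathfrak{R}]$. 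This is what actually generates the lower-order curvature spacetime term in the Proposition; it does not come merely from the ``easy'' quadratically decaying pieces as you suggest in your first paragraph.
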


\begin{proof}
Observe that the null-decomposition of $J=J\left(T,W\right)$ almost commutes with $\widehat{\mathcal{L}}_T$:
\begin{equation}
\Lambda\left(\widehat{\mathcal{L}}_T J\right) = \widehat{\mathcal{L}}_T \left(\Lambda\left(J\right)\right) + \left(\textrm{decaying RRC}\right) \cdot \left(\Lambda\left(J\right), .... , \underline{\theta} \left(J\right)\right) \, ,
\end{equation}
where the error is of lower oder and introduces additional decay. Similarly for the other null-components. Now the $\widehat{\mathcal{L}}^n_T$-derivative of each null-component of $J$  is of one of the following forms
\begin{enumerate}
\item[(1)] $\widehat{\mathcal{L}}^{n}_T \left[\pi D\left(W \textrm{ \ but not $\rho$}\right) + D\pi \cdot \left(W \textrm{ \ but not $\rho$}\right)\right]$ or $\widehat{\mathcal{L}}^{n-1}_T \left[\pi \widehat{\mathcal{L}}_T D\rho + D\pi \cdot \widehat{\mathcal{L}}_T\rho \right]$, i.e.~a product of two (or more, arising from commutation) decaying components
\item[(2)] $\widehat{\mathcal{L}}^{n}_T \pi \ D\rho$ 
\item[(3)] $\widehat{\mathcal{L}}^{n}_T D\pi \ \rho$ (these terms were essentially computed in Lemma \ref{ncLT})
\end{enumerate}
In the expression for $K_2$ arising from these terms, terms of type (1) can be estimated as in ${}^{easy}K_2$.\footnote{Note again that, in view of signature considerations, terms involving twice the curvature component $\underline{\alpha}$ in conjunction with the (weakly $r$-decaying) null-component ${}^{(\mathcal{X})}\mathbf{i}$ cannot appear.} We only mention the term involving the highest derivative of the weakly decaying component ${}^{T}\mathbf{i}$. In view of signature considerations this term is of the form\footnote{Note that this term is easily estimated if $\tilde{\mathcal{M}}$ is replaced by $\mathcal{M}$, in view of the uniform decay of the deformation tensor (\ref{decaydeft}). The problem here is that $\underline{\alpha}$ does not appear on the characteristic hypersurfaces $N_{out}\left(S^2_{\tau,R}\right)$.}
\begin{align}
\int_{r\geq R} dt^\star dr d\omega \, r^2 \, \underline{\alpha} \left(\widehat{\mathcal{L}}^{n+1}_T W\right) \cdot \left(\slashed{D}_3 \widehat{\mathcal{L}}^{n}_T  \left({}^{(T} \mathbf{i}\right) \right)\cdot \left(W \textrm{ \ \ but not \ \ }  \underline{\alpha}, \rho \right) \nonumber \\
\| r^2 \left(W \textrm{ \ \ but not \ \ }  \underline{\alpha}, \rho \right)  \|_{L^\infty} \Big[ \int_{\tilde{\mathcal{M}}\cap\{r\geq R\}} dt^\star dr d\omega r^{\frac{1}{2}} \left( \|\mathcal{D}^{n+1} \underline{\alpha}\|^2 + \|\mathcal{D}^{n+1} \pi \|^2 \right)  \Big] \nonumber \\
\frac{\epsilon}{\tau_1} \Big[ \overline{\mathbb{I}}^{n+1,nondeg} \left[\mathfrak{R}\right] \left(\tilde{\mathcal{M}}\left(\tau_1,\tau_2\right)\right) + \int_{\tilde{\mathcal{M}}\cap\{r\geq R\}} dt^\star dr d\omega r^{\frac{1}{2}}  \|\mathcal{D}^{n+1} \underline{\alpha}\|^2  \Big] \, .
 \nonumber
\end{align}

For the terms of type (2), we first note that using the Bianchi equations for $\rho$, they can be transformed into terms of type (1) and terms of the form $\widehat{\mathcal{L}}^{n}_T \pi \ \left(tr H, tr \underline{H}\right) \rho$. Realizing that the decay in $r$ at infinity is not an issue for the latter term in view of $\rho$ decaying like $\frac{1}{r^3}$, we integrate it by parts in $T$ to obtain, modulo lower oder boundary terms which are already present on the right hand side of (\ref{harderroresta}):
\begin{align} \label{tyst}
\int_{\tilde{\mathcal{M}}\left(\tau_1,\tau_2\right)} \left(\widehat{\mathcal{L}}^{n}_T \pi \ \rho_0 \right)\left(\widehat{\mathcal{L}}^{n+1}_T W\right) \sim \int_{\tilde{\mathcal{M}}\left(\tau_1,\tau_2\right)} \left(\widehat{\mathcal{L}}^{n+1}_T \pi \ \rho_0 \right)\left(\widehat{\mathcal{L}}^{n}_T W\right)
\end{align}
for $n \geq 1$, while the right hand side equal to  $\int_{\tilde{\mathcal{M}}\left(\tau_1,\tau_2\right)} \left(\widehat{\mathcal{L}}_T \pi \ \rho_0 \right)\tilde{W}$, in case that $n=0$.
An application of Cauchy's inequality yields the terms found on the right hand side of Proposition \ref{harderror}.

Finally, for the terms of type (3), we can redo the computation of Proposition \ref{K2T} and transform this term into terms of type (1) and (2):
\begin{lemma}
We have the following generalization of Proposition \ref{K2T}
\begin{align}
D\left(T,\widehat{\mathcal{L}}^{n}_T W\right)_{444} \equiv 6 \slashed{D}_4 \left(\rho_0 \beta  \left(\widehat{\mathcal{L}}^{n+1}_T W\right) \cdot \widehat{\mathcal{L}}^{n}_T \mathcal{P} \right)+ 6\slashed{div} \left(\rho_0\alpha \left(\widehat{\mathcal{L}}^{n+1}_T W\right) \cdot \widehat{\mathcal{L}}^{n}_T \mathcal{P} \right) \nonumber \\ - 3 \mathcal{L}_T \Big[q \rho |\alpha\left(\widehat{\mathcal{L}}^{n}_T W\right)|^2 - 2 p \rho |\beta\left(\widehat{\mathcal{L}}^{n}_T W\right)|^2\Big] \nonumber
\end{align}
\begin{align}
D\left(T,W\right)_{443} \equiv  6 \slashed{D}_3 \left( \rho_0 \widehat{\mathcal{L}}^n_T \mathcal{P} \cdot {\beta} \left(\widehat{\mathcal{L}}^{n+1}_T W\right)  \right) - 3 \slashed{D}_4 \left( \rho_0 \cdot \widehat{\mathcal{L}}^{n}_T tr {}^{(T)}\pi \, \rho \left(\widehat{\mathcal{L}}_T W\right)  \right) \nonumber \\ - 6 \slashed{\mathcal{D}}_1 \left(- \widehat{\mathcal{L}}^n_T \mathcal{P}_A \ \rho _ 0 \  \rho \left(\widehat{\mathcal{L}}^{n+1}_T W\right),  \widehat{\mathcal{L}}^n_T\mathcal{P}_A \ \rho _ 0 \  \sigma \left(\widehat{\mathcal{L}}^{n+1}_T W\right)\right) \nonumber \\
+ 6 \mathcal{L}_T \Big(p\rho \left[\widehat{\mathcal{L}}^{n}_T \left(\rho+ \frac{2M}{r^3}\right)\right]^2 + p\rho_0 \sigma\left(\widehat{\mathcal{L}}^n_T W\right)^2 + q \rho_0 \|\widehat{\mathcal{L}}^n_T\beta\|^2 \Big) \nonumber
\end{align}
and similarly for the other two components. Here $\equiv$ denotes equality up to lower order terms which are of the form arising from (1) or (2).
\end{lemma}
\begin{proof}
We show the statement only for the first component, as the others are treated completely analogously. 
\begin{align}
D\left(T,\widehat{\mathcal{L}}^{n}_T W\right)_{444} = 4 \alpha \left(\widehat{\mathcal{L}}^{n+1}_T W\right) \cdot \Theta \left(T, \mathcal{L}^{n}_T W\right) - 8  \beta \left(\widehat{\mathcal{L}}^{n+1}_T W\right) \cdot \Xi \left(T, \mathcal{L}^{n}_T W\right) \nonumber \\
\equiv 4 \alpha \left(\widehat{\mathcal{L}}^{n+1}_T W\right) \cdot \widehat{\mathcal{L}}^{n}_T \Theta \left(T, W\right)
- 8  \beta \left(\widehat{\mathcal{L}}^{n+1}_T W\right) \cdot \widehat{\mathcal{L}}^{n}_T \Xi \left(T,  W\right) \nonumber
\end{align}
Now we can use Lemma \ref{ncLT}, where we computed the part of the null-components $\Theta$ and $\Xi$, which is proportional to $\rho$. Since commuting with the Lie-T-derivative only introduces lower order terms we obtain
\begin{align}
D\left(T,\widehat{\mathcal{L}}^{n}_T W\right)_{444} \equiv 4 \alpha \left(\widehat{\mathcal{L}}^{n+1}_T W\right) \cdot \frac{3}{2}\rho_0 \left(\slashed{\mathcal{D}}_2^\star \widehat{\mathcal{L}}^{n}_T \mathcal{P}_A - q \alpha \left(\widehat{\mathcal{L}}^{n}_T W\right) \right) \nonumber \\ 
- 8 \beta \left(\widehat{\mathcal{L}}^{n+1}_T W\right) \cdot \left(-\frac{3}{4} \rho_0 \left(2p \beta \left(\widehat{\mathcal{L}}^{n}_T W\right) + \slashed{D}_4 \widehat{\mathcal{L}}^{n}_T \mathcal{P}_A \right) \right) \, ,
\end{align}
from which point on we can follow the computations of Proposition \ref{K2T} to produce the result.
\end{proof}
We now integrate the main terms collected in the previous Lemma. Since they arise from
\begin{align}
D\left(T,\widehat{\mathcal{L}}^{n}_T W\right) \left(p_\mathcal{X} e_3 + q_\mathcal{X} e_4, p_\mathcal{Y} e_3 + q_\mathcal{Y} e_4, p_\mathcal{Z} e_3 + q_\mathcal{Z} e_4\right) \, ,
\end{align}
we observe that for vectorfields satisfying $p_i=0$ ($i=\mathcal{X}, \mathcal{Y}, \mathcal{Z}$) on the horizon, the boundary term on the horizon will vanish. In general, the strength of the horizon term will be of the size of the $p_i$'s on the horizon. Hence, for the terms appearing in the Lemma 
\begin{align}
\int_{\tilde{\mathcal{M}}\left(\tau_1,\tau_2\right)} D\left(T,\widehat{\mathcal{L}}^{n}_T W\right) \left(p_1 e_3 + q_1 e_4, p_2 e_3 + q_2 e_4, p_3 e_3 + q_3 e_4\right) \nonumber \\
\leq B \cdot \sup_{\mathcal{H},i} |p_i| \int_{\mathcal{H}} \left( \lambda_2 \|\widehat{\mathcal{L}}^{n+1}_T W \|^2 + \frac{1}{\lambda_2} \|\mathcal{D}^n \pi\|^2 \right) \nonumber \\
+ \lambda_2 \, \sup_{\tau \in \left(\tau_1, \tau_2\right)} \ \overline{\mathbb{E}}^{n+1} \left[W\right]\left(\Sigma_{\tau}\right) + B_{\lambda_2} \sup_{\tau \in \left(\tau_1, \tau_2\right)} \overline{\mathbb{E}}^{n} \left[\mathfrak{R}\right]\left(\Sigma_{\tau}\right) \nonumber \\
+ \textrm{spacetime terms of the form (\ref{tyst})}
\end{align}
\end{proof}

\subsection{Summary of the error estimates}
Combining Propositions \ref{easyerror} and \ref{harderror} we can summarize our estimate of the errorterm $K_2$. We note in passing that the curvature terms in Proposition \ref{harderror} can be controlled by the energies of the Ricci-coefficients one order higher, simply by the definition of curvature in terms of derivatives of the Ricci-coefficients (cf.~the null structure equations in section \ref{nseq}). Hence:

\begin{proposition} \label{K12summary}
Let $\left(\mathcal{R},g\right)$ be ultimately Schwarzschildean to order $k+1$ ($k>7$) and $\mathcal{X}=p_\mathcal{X} e_3 + q_\mathcal{X} e_4$, $\mathcal{Y}=p_\mathcal{Y} e_3 + q_\mathcal{Y} e_4$, $\mathcal{Z}=p_\mathcal{Z} e_3 + q_\mathcal{Z} e_4$ be vectorfields for bounded spacetime functions $p_i$ and $q_i$ satisfying $|T\left(p_i\right)| + |T\left(q_i\right)|\leq \epsilon \cdot \tau^{-\frac{5}{4}}$. For any positive $\lambda$ and $0\leq n \leq k-1$ we have
\begin{align} \label{harderrorest}
\Big| \int_{\tilde{\mathcal{M}}\left(\tau_1,\tau_2\right)}  K_{2}^{\mathcal{X}\mathcal{Y}\mathcal{Z}}\left[\widehat{\mathcal{L}}^{n+1}_T W\right]\Big| \leq  \left(\lambda +B_{\lambda} \epsilon \left(\tau_1\right)^{-\frac{1}{4}}\right) \cdot \sup_{\tau \in \left(\tau_1, \tau_2\right)} \ \overline{\mathbb{E}}^{n+1} \left[W\right]\left(\tilde{\Sigma}_{\tau}\right)  \nonumber \\
+ B \cdot \mathbb{D}^{n+1} \left[\mathfrak{R}\right] \left(\tau_1,\tau_2\right) + B_{\lambda} \cdot \mathbb{D}^{n} \left[\mathfrak{R}\right] \left(\tau_1,\tau_2\right) \nonumber \\ 
 + \epsilon \left(\tau_1\right)^{-\frac{1}{4}} \cdot \overline{\mathbb{I}}^{n+1,r\geq R} \left[W\right] \left(\tilde{\mathcal{M}}\left(\tau_1,\tau_2\right)\right)+ \lambda  \sup_{\mathcal{H},i} |p_i| \int_{\mathcal{H}} \|\widehat{\mathcal{L}}^{n+1}_T W \|^2 
\end{align}
For $\mathcal{X}=\mathcal{Y}=\mathcal{Z}=T$ the last term vanishes and no highest order curvature-term appears on the horizon.

Finally, if $\tilde{\mathcal{M}}\left(\tau_1, \tau_2\right)$ is replaced by $\mathcal{M}\left(\tau_1, \tau_2\right)$, $\tilde{\Sigma}_\tau$ by $\Sigma_\tau$ and the $\mathbb{D}$-energies by the $\mathbb{C}$-energies, the estimate holds without the penultimate term on the right hand side.
\end{proposition}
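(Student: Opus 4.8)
The plan is to reduce everything to the two propositions just proved. First I would split
$K_{2}^{\mathcal{X}\mathcal{Y}\mathcal{Z}}\left[\widehat{\mathcal{L}}^{n+1}_T W\right] = {}^{easy}K_{2}^{\mathcal{X}\mathcal{Y}\mathcal{Z}}\left[\widehat{\mathcal{L}}^{n+1}_T W\right] + {}^{hard}K_{2}^{\mathcal{X}\mathcal{Y}\mathcal{Z}}\left[\widehat{\mathcal{L}}^{n+1}_T W\right]$ according to the decomposition $J^{hard}+J^{easy}$ of the right hand side of the $(n+1)$-times $T$-commuted Bianchi equation (\ref{nXWeylcommute}), and then apply Proposition \ref{easyerror} to $\int_{\tilde{\mathcal{M}}}{}^{easy}K_{2}$ and Proposition \ref{harderror} to $\int_{\tilde{\mathcal{M}}}{}^{hard}K_{2}$. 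For $n=0$ the easy part vanishes and only Proposition \ref{harderror} is invoked. The hypotheses on $p_i,q_i$ are exactly those demanded by Proposition \ref{harderror}, and since $0\le n\le k-1$ one has $n+1\le k$, so the ultimately Schwarzschildean bounds on $k$ derivatives of the Ricci coefficients are available at every order occurring.

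What remains is bookkeeping: adding the two right hand sides and collecting like terms. I would rename the free small constants ($\lambda$ and $\lambda_2$) to a single $\lambda$, so that the $\sup_{\tau}\overline{\mathbb{E}}^{n+1}\left[W\right]\left(\tilde{\Sigma}_{\tau}\right)$ contributions combine into the coefficient $\lambda + B_{\lambda}\epsilon\left(\tau_1\right)^{-1/4}$. The Ricci-coefficient spacetime term $\overline{\mathbb{I}}^{n+1,nondeg}\left[\mathfrak{R}\right]\left(\tilde{\mathcal{M}}\right)$ of Proposition \ref{harderror} I would bound by $B\cdot\overline{\mathbb{I}}^{n+1}\left[\mathfrak{R}\right]\left(\tilde{\mathcal{M}}\right)$ — the one point to glance at is that the weights in (\ref{IR}) dominate the non-degenerate weight $r^{-1-\delta}$, which holds upon retaining the summand $j=l(n)$ — hence by $B\cdot\mathbb{D}^{n+1}\left[\mathfrak{R}\right]\left(\tau_1,\tau_2\right)$; the boundary Ricci terms $\overline{\mathbb{E}}^{n}\left[\mathfrak{R}\right]$ on $\mathcal{H}$, $\tilde{\Sigma}_{\tau_1}$, $\tilde{\Sigma}_{\tau_2}$ go into $B_{\lambda}\cdot\mathbb{D}^{n}\left[\mathfrak{R}\right]\left(\tau_1,\tau_2\right)$. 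The residual lower-order curvature integrals $B\int\frac{1}{r^2}\left(\sum_{i=1}^n\|\widehat{\mathcal{L}}^i_T W\|^2+\|\tilde{W}\|^2\right)$ over $\tilde{\mathcal{M}},\Sigma_{\tau_1},\Sigma_{\tau_2},\mathcal{H}$ I would rewrite, component by component, through the null structure equations of section \ref{nseq} (the Gauss equation (\ref{Gauss}) for $\rho$, (\ref{divHb})--(\ref{divH}) for $\beta,\underline{\beta}$, (\ref{Hb3}) and (\ref{Hb4}) for $\alpha,\underline{\alpha}$, together with Lemma \ref{LTTL} to pass between the null components of $\widehat{\mathcal{L}}^i_T W$ and $\slashed{D}$-derivatives of the null curvature components), which exhibits curvature at order $i$ as controlled by Ricci-coefficient quantities at order $i+1$; these contributions are therefore again $\le B\cdot\mathbb{D}^{n+1}\left[\mathfrak{R}\right]\left(\tau_1,\tau_2\right)$. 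Finally the asymptotic pieces $\frac{\epsilon}{\sqrt{\tau_1}}\int_{\tilde{\mathcal{M}}\cap\{r\ge R\}}r^{1/2}\|\mathcal{D}^{n+1}\underline{\alpha}\|^2$ occurring in both propositions are $\epsilon\left(\tau_1\right)^{-1/4}$ times a sub-piece of $\overline{\mathbb{I}}^{n+1,r\ge R}\left[W\right]\left(\tilde{\mathcal{M}}\right)$, and the single surviving horizon curvature term of Proposition \ref{harderror} is kept as the last term of (\ref{harderrorest}).

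For $\mathcal{X}=\mathcal{Y}=\mathcal{Z}=T$, Definition \ref{ultS} gives $p|_{\mathcal{H}^+}=0$, so $\sup_{\mathcal{H}}|p_i|=0$ and the horizon term drops; moreover, as recorded in Proposition \ref{harderror}, in this case the $T$-integration by parts producing the good boundary structure of Proposition \ref{K2T} leaves no highest-order curvature term on $\mathcal{H}$. For the untilded statement one runs the same argument on $\mathcal{M}\left(\tau_1,\tau_2\right)$ with the slices $\Sigma_{\tau}$: these carry all six null components, so there is no characteristic-slice deficit, and by the concluding assertions of Propositions \ref{easyerror} and \ref{harderror} the penultimate, $\{r\ge R\}$, term disappears and the $\mathbb{D}$-energies may be replaced throughout by the $\mathbb{C}$-energies of Theorem \ref{theo1}. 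The one step I expect to require genuine care, rather than routine estimation, is the conversion of the residual lower-order curvature integrals into Ricci-coefficient energies one derivative higher while matching the $r$- and $\tau$-weights built into the $\mathbb{D}^{n}$ and $\mathbb{D}^{n+1}$ norms; everything else is already encapsulated in Propositions \ref{easyerror} and \ref{harderror}.
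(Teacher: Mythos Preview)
Your proposal is correct and follows essentially the same route as the paper: the paper's proof consists precisely of combining Propositions \ref{easyerror} and \ref{harderror}, together with the remark that the residual lower-order curvature integrals in Proposition \ref{harderror} are controlled by the Ricci-coefficient energies one order higher via the null structure equations of section \ref{nseq}. Your bookkeeping (renaming $\lambda$, absorbing $\overline{\mathbb{I}}^{n+1,nondeg}[\mathfrak{R}]$ and the boundary $\overline{\mathbb{E}}^{n}[\mathfrak{R}]$ terms into the $\mathbb{D}$-norms, and reading off the $T$- and untilded special cases from the concluding clauses of the two propositions) is exactly what is required; the only implicit step you should make explicit is that $K_2^{\mathcal{X}\mathcal{Y}\mathcal{Z}}$ is a bounded linear combination of the $K_2^{e_a e_b e_c}$ appearing in Proposition \ref{easyerror}, since $\mathcal{X},\mathcal{Y},\mathcal{Z}$ have bounded $e_3,e_4$-coefficients.
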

\section{The redshift (from the null-Bianchi equations)} \label{redsection}
\subsection{The main result}
The main result of this section is 
\begin{proposition} [Estimates close to the horizon] \label{redshiftcomplete}
We have for $0\leq n\leq k-1$:
\begin{align} \label{mrst}
\overline{\mathbb{E}}^{n+1}_{r \leq r_Y} \left[W\right] \left(\Sigma_{\tau_2}\right) + \overline{\mathbb{E}}^{n+1} \left[W\right] \left( \mathcal{H}\left(\tau_1,\tau_2\right)\right) + \overline{\mathbb{I}}^{n+1, deg}_{r \leq r_Y} \left[W\right] \left(\mathcal{M}\left(\tau_1,\tau_2\right)\right)
\leq  \nonumber \\ 2 \cdot \overline{\mathbb{E}}^{n+1}_{r \leq r_Y} \left[W\right] \left(\Sigma_{\tau_1}\right)
  + B \cdot \overline{\mathbb{I}}^{n+1, deg}_{|r-r_Y| \leq \frac{r_Y-2M}{2}} \left[W\right] \left(\mathcal{M}\left(\tau_1,\tau_2\right)\right) + Err^{n+1}_{hoz}\left[\mathfrak{R}\right] \left(\tau_1,\tau_2\right)
\end{align}
with the error 
\begin{align} \label{hzee}
Err^{n+1}_{hoz}\left[\mathfrak{R}\right] \left(\tau_1,\tau_2\right) =  \epsilon \cdot \overline{\mathbb{I}}^{n+1, deg}_{r \leq r_Y+\frac{M}{2}} \left[\mathfrak{R}\right] \left(\mathcal{M}\left(\tau_1,\tau_2\right)\right) \nonumber \\  + B \cdot \overline{\mathbb{I}}^{n, deg}_{r \leq r_Y+\frac{M}{2}} \left[\mathfrak{R}\right] \left(\mathcal{M}\left(\tau_1,\tau_2\right)\right) + B \cdot \overline{\mathbb{E}}^{n}_{r \leq r_Y+\frac{M}{2}} \left[\mathfrak{R}\right] \left(\Sigma_{\tau_1},\Sigma_{\tau_2}, \mathcal{H} \right)
\end{align}
where $\epsilon$ arises from a pointwise estimate for $\tilde{W}$ and the constants $B$ depend only $M$ and, in (\ref{mrst}), the $\overline{\mathbb{I}}^{max\{2,n\}, deg}_{r \leq r_Y+\frac{M}{2}} \left[\mathfrak{R}\right] \left(\mathcal{M}\left(\tau_1,\tau_2\right)\right)$-energy.
\end{proposition}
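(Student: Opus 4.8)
The plan is to adapt the well-known commutator form of the redshift estimate for the scalar wave equation (as in \cite{DafRodKerr}) to the null-Bianchi equations, working close to the horizon in the region $r\leq r_Y+\frac{r_Y-2M}{2}$. First I would localise using the interpolating function $\xi$ from (\ref{gammdefo})-(\ref{Ydef}) so that all multipliers and commutators are supported in $r\leq r_Y+\frac{r_Y-2M}{2}$ and no boundary terms appear at $r=r_Y+\frac{r_Y-2M}{2}$; the price is a bulk term supported in $|r-r_Y|\leq\frac{r_Y-2M}{2}$, which is exactly the $\overline{\mathbb{I}}^{n+1,deg}_{|r-r_Y|\leq\frac{r_Y-2M}{2}}\left[W\right]$ term on the right-hand side of (\ref{mrst}). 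The lowest order case $n=0$ should use the renormalized components $\hat\rho,\hat\sigma$, i.e.\ the renormalized null-Bianchi equations (\ref{renormrho}), so that no non-decaying $\rho$ enters; for $n\geq 1$ one works with $\widehat{\mathcal{L}}^n_T W$, whose $\rho$-component already decays.

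Next I would carry out the energy estimate in two layers. Layer one: the ``$N$-estimate'' near the horizon, obtained by applying the Bel-Robinson identity (\ref{mainid}) with $\mathcal{X}=\mathcal{Y}=\mathcal{Z}=N=T+Y$ to the Weyl field $\widehat{\mathcal{L}}^n_T\widehat{\mathcal{L}}_T W$ (for $n=0$ this is replaced by the direct renormalized energy of section \ref{normalizedenergy} combined with the $Y$-multiplier). Proposition \ref{rsW} gives that the boundary term $Q\left[\mathcal{W}\right]\left(N,N,N,n_\Sigma\right)$ controls all components non-degenerately and that $K_1^{NNN}\left[\mathcal{W}\right]$ has a good sign (controlling all components) in $r\leq r_Y$; Proposition \ref{K12summary} with $\mathcal{X}=\mathcal{Y}=\mathcal{Z}=T$ first (to eliminate $\rho$), then with the $Y$-multiplier whose $p_i$ vanishes on $\mathcal{H}^+$, bounds $K_2$ by the error terms on the right of (\ref{harderrorest}) — these feed into $Err^{n+1}_{hoz}$. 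Layer two: since commuting with $N=T+Y$ only gives non-degenerate control of $T$-derivatives plus the $Y$-generated bulk, I would then commute the \emph{null}-Bianchi equations themselves with the transversal derivative $\slashed{D}_4$ (equivalently with $Y$) to recover the transversal derivatives on the horizon, exactly as in the commutator redshift of \cite{DafRodKerr}. The point, familiar from the wave equation, is that the redshift term $-\slashed{D}_3\gamma\geq\frac{1}{4Mc_{red}}$ from Lemma \ref{rscon} dominates the dangerous $\vartheta^\pm\, tr H\cdot(\text{lower order})$ coefficients in the commuted equations (\ref{a3nk})-(\ref{ab4nk}), provided $c_{red}=\frac{1}{20k}$ is small enough to absorb all the error-coefficients appearing up to $k$ derivatives — this is precisely why $c_{red}$ was chosen depending on $k$. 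One inducts on the number of $\slashed{D}_4$'s applied, each step controlled by the previous one plus the $N$-bulk term $\overline{\mathbb{I}}^{n+1,deg}_{r\leq r_Y}\left[W\right]$ and the inhomogeneities $E_a^{n_l}(u)$, which by the null-structure equations (\ref{Hb3})-(\ref{O43}) are products of decaying Ricci coefficients and curvature and hence absorbed into $Err^{n+1}_{hoz}\left[\mathfrak{R}\right]$ after an application of Cauchy's inequality, with the troublesome pointwise-in-$\tilde W$ factor producing the $\epsilon\cdot\overline{\mathbb{I}}^{n+1,deg}_{r\leq r_Y+\frac{M}{2}}\left[\mathfrak{R}\right]$ contribution.

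I would then assemble the full estimate: add the first-layer $N$-estimate and all second-layer transversal-commutator estimates with appropriately small relative weights (the small redshift constant $c_{red}$ ensures the cross-terms between different orders of transversal differentiation can be absorbed), choosing the constant $2$ in front of $\overline{\mathbb{E}}^{n+1}_{r\leq r_Y}\left[W\right]\left(\Sigma_{\tau_1}\right)$ to swallow the finite number of bootstrap multiplications. The curvature error-terms in Proposition \ref{harderror} get traded for one-higher-order Ricci energies via the null-structure equations, as noted in the paragraph preceding Proposition \ref{K12summary}, giving the $\overline{\mathbb{I}}^{n,deg}$ and $\overline{\mathbb{E}}^n$ terms in (\ref{hzee}). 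The lower-order spacetime term $\frac{1}{r^2}(\sum_{i=1}^n\|\widehat{\mathcal{L}}^i_T W\|^2+\|\tilde W\|^2)$ from Proposition \ref{harderror} lives in a region of bounded $r$ and is harmless, absorbed either into the left side inductively in $n$ or into $Err^{n+1}_{hoz}$.

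The main obstacle is the propagation of the redshift structure through \emph{all} $k$ orders of commutation simultaneously: at each transversal-commutation step the coefficient $\vartheta^\pm(u_{n_l})\, tr\underline{H}$ (or $tr H$) in the commuted null-Bianchi equations grows with the number of derivatives, and one must verify that a \emph{single} choice $c_{red}=\frac{1}{20k}$ makes the redshift term $-\slashed{D}_3\gamma\geq\frac{1}{4Mc_{red}}$ beat all of them at once while the ``bad'' term $|\slashed{D}_4\gamma|\leq\frac{1}{20M}$ stays controlled — this is the content of Lemma \ref{rscon}, but turning it into a closed hierarchy of estimates (rather than a single commutation) requires bookkeeping the signature-dependent coefficients $\vartheta^\pm$ carefully and checking that the off-diagonal coupling between, say, $\alpha_{4\cdots}$ and $\beta_{4\cdots}$ in (\ref{a3nk})-(\ref{b4nk}) does not destroy the positivity. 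A secondary subtlety is the $n=0$ case, where one cannot use $\widehat{\mathcal{L}}_T$ to kill $\rho$ and must instead work with $\hat\rho=\rho+\frac{2M}{r^3}$ throughout, ensuring the renormalization weight $r^3$ causes no trouble in the (bounded-$r$) near-horizon region — here it is harmless, but the algebra of the renormalized equations (\ref{renormrho}) must be matched against the Bel-Robinson energy of section \ref{normalizedenergy}.
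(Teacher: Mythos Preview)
Your overall strategy is in the right family, but there are two concrete problems.

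\textbf{The transversal direction is $e_3$, not $e_4$.} Near $\mathcal{H}^+$, $e_4$ is tangent (it is the null generator), and the redshift vectorfield $Y=\gamma e_3+\eta e_4$ satisfies $\gamma\to 1$, $\eta\to 0$ on the horizon. So ``commuting with the transversal derivative $\slashed D_4$ (equivalently with $Y$)'' is wrong on both counts: the transversal direction is $\slashed D_3$, and $Y$ is essentially $e_3$, not $e_4$. The paper indeed commutes the null-Bianchi equations with strings $3n_l$ (and two auxiliary $4n_l$'s) in Section~\ref{redsection}, with the redshift positivity coming from $-\slashed D_3\gamma\geq\frac{1}{4Mc_{red}}$ acting on the $\slashed D_3$-boundary terms. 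Your claim that $Y$'s $p_i$ vanishes on $\mathcal{H}^+$ is likewise false.

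\textbf{The $\rho$-proportional inhomogeneities are not absorbable by Cauchy alone.} You say the errors $E^{n_l}_a(u)$ are ``products of decaying Ricci coefficients and curvature and hence absorbed into $Err^{n+1}_{hoz}$''. But for example $E_3(\underline{\beta})$ contains $-3\underline{Y}\rho$, and after $l+1$ commutations the worst term is $\rho_0\cdot\slashed D^{n_l}\slashed D_3\underline{Y}\cdot\underline{\beta}_{3n_l}$ with $\rho_0\sim-\frac{2M}{r^3}$ \emph{not small}. Naive Cauchy gives $\lambda\int|\underline{\beta}_{3n_l}|^2+\frac{B}{\lambda}\overline{\mathbb{I}}^{n+1}[\mathfrak{R}]$, which is $B_\lambda\cdot\overline{\mathbb{I}}^{n+1}[\mathfrak{R}]$ rather than the $\epsilon\cdot\overline{\mathbb{I}}^{n+1}[\mathfrak{R}]$ claimed in (\ref{hzee}). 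The paper resolves this in Lemmas \ref{lemrs1}--\ref{lemrs3} by a two-step integration by parts: move a $3$- (or $4$-) derivative from the Ricci factor onto the curvature factor, insert the Bianchi equation to produce an angular derivative, then move that angular derivative back. The resulting top-order Ricci expression is exactly what appears in a null-structure equation (e.g.\ $\slashed D_3\widehat{\underline{H}}=-2\slashed{\mathcal{D}}_2^\star\underline{Y}-\underline{\alpha}+\ldots$, equation (\ref{Hb3})), producing a cancellation that drops the term to order $n$ in Ricci, or to $\epsilon\cdot\overline{\mathbb{I}}^{n+1}[\mathfrak{R}]$ via a pointwise-small $\tilde W$ factor. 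This cancellation is the non-trivial heart of the near-horizon argument and your proposal does not mention it.

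As a secondary remark, the paper does \emph{not} run the hybrid scheme you sketch (tensorial Bel-Robinson with $N$ on $\widehat{\mathcal{L}}_T^{n+1}W$ in Layer~1, then null-Bianchi commutation in Layer~2). It works entirely at the null-Bianchi level: multiply each pair of equations by $\gamma$, exploit the $\slashed{div}/\slashed{\mathcal{D}}^\star$ duality, and commute with $3n_l$. The tensorial route is mentioned as an equivalent alternative in the introduction, but it too would require the structure-equation cancellations once you commute with non-$T$ vectorfields.
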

\begin{proof}
This will follow from the sequence of estimates proven in the remainder of this section.
\end{proof}
Note that we are estimating $n+1$ derivatives of curvature, requiring only an $\epsilon$ of $n+1$-derivatives of the Ricci rotation coefficients.
\subsection{Estimates for the uncommuted equations}
Recall that we defined
\begin{equation}
\gamma = \xi \left(r\right) \left(1 +\frac{1}{c_{red}}\left(1-\mu\right)\right)
\end{equation}
in (\ref{gammdefo}).
Multiply (\ref{Bianchi3}) by $\gamma \alpha$ and integrate the resulting equation over the spacetime region $\mathcal{Y}=\mathcal{M}\left(\tau_1,\tau_2\right) \cap \{ r \leq r_Y\}$. Integrate the angular $\underline{\beta}$ term by parts (producing $div \underline{\alpha}$) yields upon  inserting (\ref{Bianchi4}) the identity
\begin{align} 
\int_{\mathcal{Y}} \sqrt{g} dt^\star dr d\omega \Bigg( \frac{1}{2} \slashed{D}_4 \left(\gamma \|\underline{\alpha}\|^2\right) 
+ \left[-\frac{1}{2} \slashed{D}_4 \gamma - \gamma\left(4\Omega - \frac{1}{2} tr H\right)\right] \|\underline{\alpha}\|^2   + \nonumber \\  \slashed{D}_3 \left(\gamma \|\underline{\beta}\|^2\right) + \left[-\slashed{D}_3 \gamma + \gamma\left(4\underline{\Omega} + 4 tr \underline{H}\right)\right] \|\underline{\beta}\|^2 \Big)]
=  \int_{\mathcal{Y}} \sqrt{g} dt^\star dr d\omega \, e_{\gamma}\left[\underline{\alpha},\underline{\beta}\right]  \nonumber
\end{align}
with the error 
\begin{equation}
e_{\gamma}\left[\underline{\alpha},\underline{\beta}\right] = \gamma \left[E_4\left(\underline{\alpha}\right) - 4\Omega\underline{\alpha}\right] \underline{\alpha} + 2\gamma \left[E_3\left(\underline{\beta}\right)+2\underline{\Omega}\underline{\beta} \right]\underline{\beta}  \, ,\nonumber
\end{equation}
for which we ignore the entirely harmless cubic term $2 \underline{\alpha} \underline{\beta} \slashed{\nabla} \left(\gamma \frac{\sqrt{g}}{\sqrt{g_{S^2}}} \right)$ arising from the integration by parts (this term can always be estimated as $\epsilon \left(\|\alpha\|^2 + \|\beta\|^2\right)$ using the ultimately Schwarzschildean assumption). Inserting the relations
\begin{align}
\slashed{D}_3 \left(\gamma \underline{\beta}^2\right) &=  D_3 \left(\gamma \underline{\beta}^2\right) = D_a \left(\left(e_3\right)^a \gamma \underline{\beta}^2\right) - \gamma \underline{\beta}^2 \left[-2\underline{\Omega} + tr \underline{H}\right] \nonumber \\
\slashed{D}_4 \left(\gamma \underline{\alpha}^2\right) &=  D_4 \left(\gamma \underline{\alpha}^2\right) = D_a \left(\left(e_4\right)^a \gamma \underline{\alpha}^2\right) - \gamma \underline{\alpha}^2 \left[-2\Omega + tr H\right]
\end{align}
back into the above identity we can write
\begin{align} 
\int_{\mathcal{Y}} \sqrt{g} dt^\star dr d\omega \Bigg( \frac{1}{2} {D}_a \left(\left(e_4\right)^a\gamma \|\underline{\alpha}\|^2\right) 
+ \left[-\frac{1}{2} \slashed{D}_4 \gamma - \gamma\left(3\Omega \right)\right] \|\underline{\alpha}\|^2   + \nonumber \\  {D}_a \left(\left(e_3\right)^a \gamma \|\underline{\beta}\|^2\right) + \left[-\slashed{D}_3 \gamma + \gamma\left(6\underline{\Omega} + 2 tr \underline{H}\right)\right] \|\underline{\beta}\|^2 \Big)]
=  \int_{\mathcal{Y}} \sqrt{g} dt^\star dr d\omega \, e_{\gamma}\left[\underline{\alpha},\underline{\beta}\right]  \nonumber \, .
\end{align}
By Lemma \ref{rscon}, in the region under consideration both
\begin{align}
-\frac{1}{2} \slashed{D}_4 \gamma - 3 \gamma \Omega \geq b
\textrm{ \ \ \ \ and \ \ \ \ }
-\slashed{D}_3 \gamma + \gamma\left(6\underline{\Omega} + 2 tr \underline{H}\right) \geq b
\end{align}
hold and  we obtain the estimate\footnote{We have not written the measure explicitly here as no confusion can arise from weights in $r$ in this region.}
\begin{align}
\int_{\Sigma_{\tau_2} \cap \{r \leq r_Y\}}\left(\|\underline{\alpha}\|^2 + \|\underline{\beta}\|^2\right)  + \int_{\mathcal{Y}}  \left(\|\underline{\alpha}\|^2 + \|\underline{\beta}\|^2\right) +\int_{\mathcal{H}\left(\tau_1,\tau_2\right)}  \|\underline{\beta}\|^2 \nonumber \\ \leq B \int_{\Sigma_{\tau_1} \cap \{r \leq r_Y\}}\left(\|\underline{\alpha}\|^2 + \|\underline{\beta}\|^2\right) +B \int_{\mathcal{M} \cap \{|r-r_Y|\leq \frac{r_Y-2M}{2}\}} \left(\|\underline{\alpha}\|^2 + \|\underline{\beta}\|^2\right) \nonumber \\
+ B \Big| \int_{\mathcal{Y}} e_{\gamma}\left[\underline{\alpha},\underline{\beta}\right] \Big|
\end{align}

The same procedure applied to the Bianchi equations (\ref{Bianchi10}), (\ref{Bianchi8}) and (\ref{Bianchi9}) produces the identity
\begin{align}
\int_{\mathcal{Y}} \sqrt{g} dt^\star dr d\omega\Bigg( \frac{1}{2}  \slashed{D}_4 \left(\gamma \underline{\beta}^2\right) 
+ \left[-\frac{1}{2} \slashed{D}_4 \gamma + \gamma\left( 2tr H - 2\Omega \right)\right] \underline{\beta}^2  \nonumber \\ + \frac{1}{2} \slashed{D}_3 \left(\gamma \hat{\rho}^2 + \gamma \hat{\sigma}^2 \right) + \left[-\frac{1}{2} \slashed{D}_3 \gamma +\frac{3}{2} \gamma tr \underline{H} \hat{\rho} \right]  \left( \hat{\rho}^2 + \hat{\sigma}^2 \right) \Bigg) = \nonumber \\ 
\int_{\mathcal{Y}} \sqrt{g} dt^\star dr d\omega \, e_{\gamma}\left[\underline{\beta},\left(\hat{\rho},\hat{\sigma}\right)\right] 
\end{align}
with the error
\begin{equation}
e_{\gamma}\left[\underline{\beta},\left(\hat{\rho},\hat{\sigma}\right)\right] = \gamma \hat{E}_3\left(\rho\right) \hat{\rho} + \gamma \hat{E}_3\left(\sigma\right) \hat{\sigma} + \gamma \left[E_4 \left(\underline{\beta}\right) -2\Omega \underline{\beta} \right] \underline{\beta}
\end{equation}
(again we ignore a harmless cubic error-term of the form $\underline{\beta} \left(\hat{\rho}, \hat{\sigma}\right) \slashed{\nabla} \left(\gamma \frac{\sqrt{g}}{\sqrt{g_{S^2}}} \right)$, which is readily estimated). Upon integration over $\mathcal{Y}$ this leads to the estimate
\begin{align} \label{che1}
\int_{\Sigma_{\tau_2} \cap \{r \leq r_Y\}}\left(\|\underline{\beta}\|^2 + \|\hat{\rho},\hat{\sigma}\|^2\right)  + \int_{\mathcal{Y}}\left(\|\underline{\beta}\|^2 + \|\hat{\rho},\hat{\sigma}\|^2\right) +\int_{\mathcal{H}\left(\tau_1,\tau_2\right)}  \|\hat{\rho},\hat{\sigma}\|^2 \nonumber \\ \leq B \int_{\Sigma_{\tau_1} \cap \{r \leq r_Y\}}\left(\|\underline{\beta}\|^2 + \|\hat{\rho},\hat{\sigma}\|^2\right) \nonumber \\+B \int_{\mathcal{M} \cap \{|r-r_Y|\leq \frac{r_Y-2M}{2}\}} \left(\|\underline{\beta}\|^2 + \|\hat{\rho},\hat{\sigma}\|^2\right) 
+ B \Big| \int_{\mathcal{Y}} e_{\gamma}\left[\underline{\beta},\left(\hat{\rho},\hat{\sigma}\right)\right] \Big| \, .
\end{align}
where we have taken into account that both inequalities
\begin{align}
-\frac{1}{2} \slashed{D}_4 \gamma + \gamma \left(-\Omega + \frac{3}{2}tr H\right) \geq b
\textrm{ \ \ and \ \ }
-\slashed{D}_3 \gamma + \gamma\left(+\underline{\Omega} + \frac{3}{2} tr \underline{H}\right) \geq b \nonumber 
\end{align}
hold in $r \leq r_Y$

The other two sets of Bianchi equations are a little more subtle. We can not achieve a good sign for both of the spacetime terms at the same time. For instance, using the renormalized Bianchi equation (\ref{renormrho}) and (\ref{Bianchi7}) we derive
\begin{align}
\int_{\mathcal{Y}} \sqrt{g} dt^\star dr d\omega \Bigg( \frac{1}{2} \slashed{D}_4 \left(\gamma \hat{\rho}^2 + \gamma \hat{\sigma}^2 \right) 
+ \left[-\frac{1}{2} \slashed{D}_4 \gamma + \frac{3}{2} \gamma \, tr H \hat{\rho} \right]  \left( \hat{\rho}^2 + \hat{\sigma}^2 \right)  \nonumber \\ + \frac{1}{2}  \slashed{D}_3 \left(\gamma \beta^2\right) + \left[-\frac{1}{2} \slashed{D}_3 \gamma - \gamma\left(2\underline{\Omega} -  tr \underline{H} \right)\right] \beta^2 \Bigg)
= \int_{\mathcal{Y}} \sqrt{g} dt^\star dr d\omega \, e_{\gamma}\left[\left(\hat{\rho},\hat{\sigma}\right),\beta\right]  \nonumber
\end{align}
with the error
\begin{equation}
e_{\gamma}\left[\left(\hat{\rho},\hat{\sigma}\right),\beta\right] =  \gamma \hat{E}_4\left(\rho\right) \hat{\rho} + \gamma \hat{E}_4\left(\sigma\right) \hat{\sigma} + \gamma \left[E_3\left(\beta\right) - 2 \underline{\Omega}\beta\right] \beta
\end{equation}
After integration, the $\beta$ spacetime term will admit a good sign, while the $(\hat{\rho},\hat{\sigma})$-term will have the wrong sign as
\begin{align}
-\frac{1}{2} \slashed{D}_4 \gamma + \gamma \left(tr H + \Omega \right)
\end{align} 
is negative close to the horizon.
However, since $(\hat{\rho},\hat{\sigma})$ is already controlled by the previous estimate, we can add a multiple of the estimate (\ref{che1}) to also control the spacetime term of $\beta$. Similarly, for the final set of equations:
\begin{align}
\int_{\mathcal{Y}} \sqrt{g} dt^\star dr d\omega \Bigg( \frac{1}{4} \slashed{D}_3 \left(\gamma \alpha^2\right) 
+ \left[-\frac{1}{4} \slashed{D}_3 \gamma + \gamma\left( \frac{1}{4} tr \underline{H} \alpha - 2 \underline{\Omega} \alpha \right)\right]\alpha^2  \nonumber \\  \frac{1}{2} \slashed{D}_4 \left(\gamma \beta^2\right) + \left[-\frac{1}{2}\slashed{D}_4 \gamma + \gamma\left(2tr H + 2\Omega\right)\right] \beta^2 \Bigg)
= \int_{\mathcal{Y}} \sqrt{g} dt^\star dr d\omega \, e_{\gamma}\left[{\beta},\alpha\right] 
\end{align}
with error
\begin{equation}
e_{\gamma}\left[{\beta},\alpha\right] = \frac{1}{2} \gamma \left[E_3\left(\alpha\right) - 4\underline{\Omega} \alpha\right] + \gamma \left[E_4\left(\beta\right) + 2\Omega \beta \right] \beta
\end{equation}
Hence while the $\alpha$-spacetime term is positive, the $\beta$-spacetime term has the wrong sign (which will only get strenghtened after the contribution from the integration by parts). However, the $\beta$-term has been controlled by the previous step and we can add a multiple of the latter to control this term.
Adding the estimates up we can summarize our findings in the following
\begin{proposition} \label{redshift1}
We have the estimate
\begin{align}
\int_{\Sigma_{\tau_2} \cap \{r \leq r_Y\}} \|\tilde{W}\|^2 + \int_{\mathcal{Y}} \|\tilde{W}\|^2 \leq B \int_{\Sigma_{\tau_1} \cap \{r \leq r_Y\}} \|\tilde{W}\|^2 \nonumber \\+B \int_{\mathcal{M} \cap \{|r-r_Y|\leq \frac{r_Y-2M}{2}\}} \|\tilde{W}\|^2 
 + B \Big| \int_{\mathcal{Y}} e_{\gamma}\left[\underline{\alpha},\underline{\beta}\right] \Big| \nonumber \\+ B \Big| \int_{\mathcal{Y}} e_{\gamma}\left[\underline{\beta},\left(\hat{\rho},\hat{\sigma}\right)\right] \Big|  + B \Big| \int_{\mathcal{Y}} e_{\gamma}\left[\left(\hat{\rho},\hat{\sigma}\right), \beta\right] \Big| + B \Big| \int_{\mathcal{Y}} e_{\gamma}\left[\beta, \alpha\right] \Big|   \, .
\end{align}
\end{proposition}
For the error-terms we have
\begin{lemma}
\begin{align}
 \Big| \int_{\mathcal{Y}} e_{\gamma}\left[\underline{\alpha},\underline{\beta}\right] \Big|  + \Big| \int_{\mathcal{Y}} e_{\gamma}\left[\underline{\beta},\left(\hat{\rho},\hat{\sigma}\right)\right] \Big|  + \Big| \int_{\mathcal{Y}} e_{\gamma}\left[\left(\hat{\rho},\hat{\sigma}\right), \beta\right] \Big| + \Big| \int_{\mathcal{Y}} e_{\gamma}\left[\beta, \alpha\right] \Big| \nonumber \\
 \leq \|\left(\mathfrak{R}-\mathfrak{R}_{SS}\right)^2 \|_{L^\infty} \cdot \int_\mathcal{Y} \|\tilde{W}\|^2 + B \|\rho^2\|_{L^\infty} \int_\mathcal{Y}\|\mathfrak{R}-\mathfrak{R}_{SS}\|^2 \, .
\end{align}
\end{lemma}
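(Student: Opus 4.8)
The plan is to expand each of the four currents $e_\gamma[\cdot,\cdot]$ into its constituent cubic monomials, using the explicit right-hand sides of the null Bianchi equations (\ref{Bianchi1})--(\ref{Bianchi9}) and of the renormalised equation (\ref{renormrho}), with the inhomogeneity $J$ set to zero since we work with the uncommuted Weyl tensor of the vacuum metric. The structural observation on which everything rests — and which one checks by direct inspection of those expressions — is that \emph{every} monomial occurring in $e_\gamma[\underline{\alpha},\underline{\beta}]$, $e_\gamma[\underline{\beta},(\hat{\rho},\hat{\sigma})]$, $e_\gamma[(\hat{\rho},\hat{\sigma}),\beta]$ and $e_\gamma[\beta,\alpha]$ carries at least one factor whose Schwarzschild value vanishes, i.e.\ a component of $\mathfrak{R}-\mathfrak{R}_{SS}$: this covers $\widehat{H},\widehat{\underline{H}},V,Y,\underline{Y},Z,\underline{Z}$ as well as the scalar combination $\tfrac{\slashed{D}_3 r}{r}+\tfrac12\operatorname{tr}\underline{H}$ (which is $O(\mathfrak{R}-\mathfrak{R}_{SS})$ because $\hat{E}_3(\rho)$ vanishes identically in Schwarzschild). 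The remaining two factors are curvature components, of which \emph{at most one} is the non-decaying $\rho$; all the others belong to $\tilde{W}$ (recall $\sigma$ and $\hat{\rho}=\rho+\tfrac{2M}{r^3}$ are among the components of $\tilde{W}$). I would record this classification once explicitly, say for $e_\gamma[\underline{\alpha},\underline{\beta}]$, and then remark that the other three currents are completely analogous. One also notes that on $\mathcal{Y}=\mathcal{M}(\tau_1,\tau_2)\cap\{r\le r_Y\}$ the weight $\gamma$ and $\overline{\rho}=-\tfrac{2M}{r^3}$ are bounded above and below by constants depending only on $M$ (and $k$), and that on $\mathcal{Y}$ all the Ricci factors above are pointwise $\epsilon$-small.

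Then I would estimate monomial by monomial. A monomial of the form $(\text{component of }\mathfrak{R}-\mathfrak{R}_{SS})\cdot\tilde{W}\cdot\tilde{W}$ is bounded by placing the Ricci factor in $L^\infty$ and applying Cauchy--Schwarz to the two curvature factors, yielding a contribution $\lesssim \|(\mathfrak{R}-\mathfrak{R}_{SS})^2\|_{L^\infty}\int_{\mathcal{Y}}\|\tilde{W}\|^2$. A monomial of the form $(\text{component of }\mathfrak{R}-\mathfrak{R}_{SS})\cdot\rho\cdot\tilde{W}$ is treated by keeping $\rho$ as a bounded coefficient and Cauchy--Schwarzing the small factors: this produces $B\|\rho^2\|_{L^\infty}\int_{\mathcal{Y}}\|\mathfrak{R}-\mathfrak{R}_{SS}\|^2$ plus a contribution to $\int_{\mathcal{Y}}\|\tilde{W}\|^2$ whose coefficient can be arranged to be $\le\|(\mathfrak{R}-\mathfrak{R}_{SS})^2\|_{L^\infty}$, because the accompanying curvature factor is (together with the ones appearing in the pure-curvature monomials) pointwise small by the ultimately Schwarzschildean assumption. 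The harmless cubic terms coming from the integrations by parts on the spheres $S^2_{t^\star,u}$ (the $2\underline{\alpha}\underline{\beta}\,\slashed{\nabla}(\gamma\sqrt{g}/\sqrt{g_{S^2}})$ and $\underline{\beta}(\hat{\rho},\hat{\sigma})\,\slashed{\nabla}(\gamma\sqrt{g}/\sqrt{g_{S^2}})$ type) are of the first kind and absorbed the same way, as already indicated in the text. Summing the finitely many monomials gives the asserted bound.

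The one point requiring genuine care is the non-decaying component $\overline{\rho}=-\tfrac{2M}{r^3}$: a careless estimate of a monomial $\mathrm{RRC}\cdot\rho\cdot\tilde{W}$ that groups $\rho\cdot\tilde{W}$ together would force an $\int_{\mathcal{Y}}|\rho|^2$ on the right, which grows linearly in $\tau_2-\tau_1$ over a long slab and is fatal for the iteration. The classification step is precisely what rules this out: in the Bianchi error terms $\rho$ is never multiplied by a non-decaying curvature component — it is always multiplied by a decaying Ricci coefficient — so one is free to spend the available smallness on the Ricci factor (producing the $\|\rho^2\|_{L^\infty}\int\|\mathfrak{R}-\mathfrak{R}_{SS}\|^2$ term) and treat $\rho$ itself as a mere bounded multiplier. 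The same splitting handles the extra renormalisation term $\tfrac{6M}{r^3}\bigl(\tfrac{\slashed{D}_3 r}{r}+\tfrac12\operatorname{tr}\underline{H}\bigr)\hat{\rho}$ from (\ref{renormrho}): $\tfrac{6M}{r^3}$ is the bounded multiplier, the bracket is a difference of Ricci quantities, and $\hat{\rho}\in\tilde{W}$ is small. With the lemma established, substituting it into Proposition \ref{redshift1} and choosing $\epsilon$ small enough that $B\,\|(\mathfrak{R}-\mathfrak{R}_{SS})^2\|_{L^\infty}<\tfrac12$ allows the resulting $\int_{\mathcal{Y}}\|\tilde{W}\|^2$ to be absorbed into the left-hand side, which completes the uncommuted redshift estimate and hence one step of Proposition \ref{redshiftcomplete}.
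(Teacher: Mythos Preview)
Your proposal is correct and follows essentially the same approach as the paper's proof, which is a terse three-sentence sketch (``inspecting the error-terms individually; most terms are of the form `decaying Ricci rotation coefficient $\cdot$ decaying curvature component'; the only other term arises when the curvature component is $\rho$, accounted for by the last term''). You simply make the classification and the Cauchy--Schwarz/Young splitting explicit, and you correctly flag the one genuine subtlety---that the non-decaying $\rho$ must be treated as a bounded multiplier so that the smallness is spent on the Ricci factor rather than producing an $\int|\rho|^2$---which the paper handles implicitly. The only cosmetic point is that the precise coefficient $\|(\mathfrak{R}-\mathfrak{R}_{SS})^2\|_{L^\infty}$ in front of $\int_{\mathcal{Y}}\|\tilde{W}\|^2$ is somewhat loose (a direct Young/Cauchy argument gives $\|\mathfrak{R}-\mathfrak{R}_{SS}\|_{L^\infty}$ or an $O(1)$ constant from the $\rho$-terms), but any $\epsilon$-small coefficient suffices for the absorption in Proposition~\ref{redshift1}, so this does not affect the argument.
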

\begin{proof}
This follows from inspecting the error-terms individually. Most terms are of the form ``decaying Ricci rotation coefficient" $\cdot$ ``decaying curvature component" and are estimated by the first term on the right hand side of the lemma. The only other term arises when the curvature component is $\rho$. This is accounted for by the last term.
\end{proof}
\subsection{The higher derivative redshift estimate}
The estimate of Proposition \ref{redshift1} can be derived for all higher derivatives as well, this being essentially the statement of Proposition \ref{redshiftcomplete}. We will now prove the latter.

\subsubsection*{Estimating $\underline{\alpha}_{3n_l}$ and $\underline{\beta}_{3n_l}$}
From the equations for $\underline{\beta}_{33n_l}$ and $\underline{\alpha}_{43n_l}$, derived from (\ref{bb3nk}) and (\ref{ab4nk}) we obtain, modulo terms which will vanish after integration and a cubic term involving the angular derivative of the measure
\begin{align} \label{hdrs1}
\frac{1}{2} \slashed{D}_4 \| \gamma \underline{\alpha}_{3n_l}\|^2 + \slashed{D}_3  \|\gamma \underline{\beta}_{3n_l}\|^2
+ 2 \|\underline{\beta}_{3n_l}\|^2  \Big[-\frac{ \slashed{D}_3}{2} \gamma + \vartheta^-\left(\underline{\beta}_{3n_l}\right) tr \left(\underline{H}\right) r\Big] \nonumber \\ 
+ \|\underline{\alpha}_{3n_l}\|^2 \Big[-\frac{1}{2}\slashed{D}_4 \gamma + \vartheta^+\left(\underline{\alpha}_{3n_l}\right) \gamma tr H   \Big]  = e_\gamma \left[\underline{\alpha}_{3n_l}, \underline{\beta}_{3n_l}\right]  
\end{align}
where
\begin{align}
e_\gamma\left[\underline{\alpha}_{3n_l}, \underline{\beta}_{3n_l}\right] =  \gamma \hat{E}_4^{3n_l}\left(\underline{\alpha}\right)\underline{\alpha}_{3n_l} + 2\gamma E_3^{3n_l}\left(\underline{\beta}\right) \underline{\beta}_{3n_l}  -2\gamma \mathcal{C}_{33}\left[\underline{\beta}_{n_l}\right]  \underline{\beta}_{3n_l}\nonumber 
\end{align}
For the latter expression we prove
\begin{lemma} \label{lemrs1}
For any $\lambda>0$,
\begin{align}
\int_{\mathcal{Y}} e_\gamma\left[\underline{\alpha}_{3n_l}, \underline{\beta}_{3n_l}\right] \leq \int_{\mathcal{Y}} \left[6+2\left(\textrm{number of $3$'s in $n_l$}\right)\right]\gamma \, \Omega \| \underline{\alpha}_{3n_l}\|^2 \nonumber \\ + \int_{\mathcal{Y}} \left[-2\left(\textrm{number of $4$'s in $n_l$}\right)\right]\gamma \, \Omega \| \underline{\beta}_{3n_l}\|^2 +
 \lambda \int_{\mathcal{Y}} \left(\|\underline{\alpha}_{3n_l} \|^2 +\|\underline{\beta}_{3n_l} \|^2 \right) \nonumber \\ + \lambda \int_{\mathcal{H}, \Sigma_{\tau_1}, \Sigma_{\tau_2}}  \|\underline{\beta}_{3n_l} \|^2
+B \|\left(\mathfrak{R}-\mathfrak{R}_{SS}\right)^2\|_{L^\infty} \cdot \overline{\mathbb{I}}^{l+1, deg}_{r \leq r_Y} \left[W\right] \left(\mathcal{M}\left(\tau_1,\tau_2\right)\right) \nonumber  \\
 + \lambda \cdot \overline{\mathbb{I}}^{l+1, deg}_{|r - r_Y| \leq \frac{r_Y-2M}{r}} \left[W\right] \left(\mathcal{M}\left(\tau_1,\tau_2\right)\right) \nonumber \nonumber \\ 
+ B \| \tilde{W} \|^2_{L^\infty} \cdot \overline{\mathbb{I}}^{l+1, deg}_{r \leq r_Y+\frac{M}{2}} \left[\mathfrak{R}\right] \left(\mathcal{M}\left(\tau_1,\tau_2\right)\right) 
\nonumber \\
+ B_\lambda \cdot \left[\overline{\mathbb{I}}^{l, deg}_{r \leq r_Y+\frac{M}{2}} \left[\mathfrak{R}, W\right] \left(\mathcal{M}\left(\tau_1,\tau_2\right)\right) + \overline{\mathbb{E}}^{l, deg}_{r \leq r_Y} \left[\mathfrak{R}, W\right] \left(\Sigma_{\tau_1}, \Sigma_{\tau_2}, \mathcal{H}\right)\right] \, .\nonumber 
\end{align}
\end{lemma}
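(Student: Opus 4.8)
The plan is to estimate the error term
\[
e_\gamma\left[\underline{\alpha}_{3n_l}, \underline{\beta}_{3n_l}\right] = \gamma \hat{E}_4^{3n_l}\left(\underline{\alpha}\right)\underline{\alpha}_{3n_l} + 2\gamma E_3^{3n_l}\left(\underline{\beta}\right) \underline{\beta}_{3n_l} - 2\gamma \mathcal{C}_{33}\left[\underline{\beta}_{n_l}\right]\underline{\beta}_{3n_l}
\]
by going through the three groups of terms separately and tracking, in each, (a) which contributions have to be kept with a precise coefficient because they will be absorbed into the left hand side of the redshift identity (\ref{hdrs1}) and hence change the threshold $r_Y$, and (b) which contributions are genuinely lower order in differentiability or quadratically decaying and can be dumped into the spacetime/boundary error norms. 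The two terms that must be isolated with exact coefficients are the ones proportional to $\Omega$ coming from the ``principal'' part of $\hat{E}_4^{3n_l}(\underline{\alpha})$ and $E_3^{3n_l}(\underline{\beta})$: inspecting the recursion formulae for $E_4^{3n_l}(\underline{\alpha})$ (the $\vartheta^+(\underline{\alpha}_{4n_l}) tr H E^{n_l}_4(\underline{\alpha})$-type term specialized near the horizon, plus the $\vartheta^-$ terms picked up by the leading $\widehat{\underline{H}}$-type contractions, which near the horizon are $\approx \Omega\cdot(\text{number of }3\text{'s})$) and similarly for $E_3^{3n_l}(\underline{\beta})$, one reads off exactly the factors $6 + 2(\#3\text{'s})$ for $\underline{\alpha}_{3n_l}$ and $-2(\#4\text{'s})$ for $\underline{\beta}_{3n_l}$. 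This is the bookkeeping that produces the first two integrals on the right hand side of the Lemma; the point of keeping them explicit is that when combined with the coefficients $\vartheta^\pm$ in (\ref{hdrs1}) and Lemma \ref{rscon}, one can still make the $\underline{\alpha}_{3n_l}$ and $\underline{\beta}_{3n_l}$ spacetime coefficients positive provided $c_{red}=\frac{1}{20k}$ is small enough — this is precisely where the choice of $c_{red}$ depending on the number of derivatives enters.

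Next I would treat all remaining contributions to $\hat{E}_4^{3n_l}(\underline{\alpha})$ and $E_3^{3n_l}(\underline{\beta})$. By the structure of the recursions these are schematically of two types: either $\slashed{D}$-derivatives of lower-order error terms and angular-derivative-of-$tr H, tr \underline{H}$ terms contracted with curvature (which are products of a decaying Ricci coefficient, or a derivative thereof, with a curvature component of the same or lower differentiability order), or the $F_3, F_4$-type commutator terms from Lemma \ref{commutelemma} (again Ricci coefficient times curvature). For each such term $(\mathfrak{R}-\mathfrak{R}_{SS})\cdot\mathcal{D}^{\leq l+1}\tilde W$ or $\mathcal{D}(\mathfrak{R}-\mathfrak{R}_{SS})\cdot\mathcal{D}^{\leq l}\tilde W$ I would apply Cauchy--Schwarz putting the lower-regularity factor in $L^\infty$ and the higher one in $L^2$ (this is the strategy announced in the introduction, section \ref{infintro}, of not minimizing derivatives). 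Terms with $l+1$ derivatives of curvature get the factor $\|\left(\mathfrak{R}-\mathfrak{R}_{SS}\right)^2\|_{L^\infty}$ times $\overline{\mathbb{I}}^{l+1,deg}_{r\leq r_Y}[W]$; terms with $l+1$ derivatives of Ricci coefficients get $\|\tilde W\|^2_{L^\infty}$ times $\overline{\mathbb{I}}^{l+1,deg}_{r\leq r_Y+\frac{M}{2}}[\mathfrak{R}]$; everything with $\leq l$ derivatives goes into the $B_\lambda\cdot(\overline{\mathbb{I}}^{l,deg}+\overline{\mathbb{E}}^{l,deg})$ bucket, where the $\lambda$-smallness is not needed and the slightly larger support $r\leq r_Y+\frac{M}{2}$ accounts for the cutoff $\xi$ in $\gamma$. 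The commutator term $-2\gamma \mathcal{C}_{33}[\underline{\beta}_{n_l}]\underline{\beta}_{3n_l}$ is handled the same way: by Lemma \ref{commutelemma}, $\mathcal{C}_{33}[\underline{\beta}_{n_l}]$ is a sum of Ricci coefficients (decaying, since $\underline{H}$ differs from its Schwarzschild value which is built into $tr\underline{H}$, and $Y,\underline{Y},Z,\underline{Z},V$ decay) times $\leq l$ derivatives of $\underline{\beta}$, so it is lower order and lands in the $B_\lambda$-bucket; the one mildly dangerous piece involves $\slashed{D}_3\slashed{D}_3$ hitting a decaying coefficient, but the regular coordinate system controls this.

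There are two further housekeeping points. First, the bulk terms near the horizon are genuinely dangerous only at top order, and the $\lambda\int_{\mathcal{Y}}(\|\underline{\alpha}_{3n_l}\|^2 + \|\underline{\beta}_{3n_l}\|^2) + \lambda\int_{\mathcal{H},\Sigma}\|\underline{\beta}_{3n_l}\|^2$ terms arise because, after Cauchy--Schwarz, the highest-derivative curvature factor $\underline{\alpha}_{3n_l},\underline{\beta}_{3n_l}$ itself appears on both sides — this is standard and harmless since these will be absorbed when the estimate is eventually closed together with the main redshift identity. The boundary $\mathcal{H},\Sigma_{\tau_1},\Sigma_{\tau_2}$ contribution for $\underline{\beta}_{3n_l}$ appears because integrating by parts the angular term (producing the harmless cubic $\slashed\nabla(\gamma\cdot\text{measure})$ term and also because $\underline{\beta}_{3n_l}$ is the component appearing on the boundary). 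Second, the $\slashed{D}_4\gamma$ term in $\hat{E}_4^{3n_l}(\underline\alpha)$ hidden in $\vartheta^+ trH$: near the horizon $|trH|$ and $|\slashed{D}_4\gamma|$ are bounded by $\frac{1}{20M}$ (Lemma \ref{rscon}), so these are controlled together with the $\Omega$-terms. I expect the main obstacle to be the careful extraction of the exact coefficients $6+2(\#3)$ and $-2(\#4)$ from the nested recursion defining $E_3^{n_p3},E_3^{n_p4}$ etc., because one must chase through both the $\vartheta^\pm$ factors and the leading contractions $\widehat{\underline{H}}\cdot\slashed\nabla(\cdots)$ (whose Schwarzschild value near $r=2M$ contributes an $\Omega$-like constant), and verify that \emph{all other} contractions of the non-decaying $\widehat{\underline H}$-Schwarzschild-value are either cancelled or of the claimed form — this is exactly the ``not all curvature components decay'' difficulty advertised in the introduction, localized here to the behavior of $tr\underline H$ and its derivatives at the horizon.
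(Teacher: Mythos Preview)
Your proposal has a genuine gap: you do not address the terms proportional to the non-decaying component $\rho_0 = \rho(W)$. In $E_3(\underline{\beta})$ there is the term $-3\underline{Y}\rho$, and in $E_4(\underline{\alpha})$ there is the term $-3\widehat{\underline{H}}\rho$. When the $l+1$ derivatives in $E_3^{3n_l}(\underline{\beta})$ and $\hat{E}_4^{3n_l}(\underline{\alpha})$ all fall on the Ricci coefficient, one is left with expressions of the form $\gamma\,\rho_0\cdot[\slashed{D}^{n_l}\slashed{D}_3\underline{Y}]\cdot\underline{\beta}_{3n_l}$ and $\gamma\,\rho_0\cdot[\slashed{D}^{n_l}\slashed{D}_3\widehat{\underline{H}}]\cdot\underline{\alpha}_{3n_l}$. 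Your Cauchy--Schwarz scheme would place $\rho_0$ in $L^\infty$, but $\|\rho_0\|_{L^\infty}$ is \emph{not} small (this is $\tilde{W}$ versus $W$), and both remaining factors are top-order, so nothing closes. The paper instead combines these two terms and integrates by parts: move the $\slashed{D}_3$ from $\underline{Y}$ onto $\underline{\beta}_{3n_l}$, insert the Bianchi equation $\underline{\beta}_{3n_l3} = -\slashed{div}\,\underline{\alpha}_{3n_l} + \text{l.o.t.}$, then move the angular derivative back onto $\underline{Y}$. The resulting highest-order expression $-2\slashed{\mathcal{D}}_2^\star\underline{Y}$ cancels against $\slashed{D}_3\widehat{\underline{H}}$ via the null structure equation (\ref{Hb3}), and the residual $\underline{\alpha}_{n_l}\underline{\alpha}_{3n_l}$ term is integrated by parts once more to become lower order. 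It is precisely this cancellation that produces the $\lambda$-small boundary terms on $\mathcal{H},\Sigma_{\tau_1},\Sigma_{\tau_2}$ in the Lemma.

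Your bookkeeping of the explicit $\Omega$-coefficients is also off. The factor $6$ arises as $4 + 2$: the $4\Omega\underline{\alpha}$ term in $E_4(\underline{\alpha})$ contributes $4\Omega\|\underline{\alpha}_{3n_l}\|^2$ when all derivatives hit $\underline{\alpha}$, and the additional $2\Omega\underline{\alpha}_{3n_l}$ comes from $-[\underline{\alpha}_{43}-\underline{\alpha}_{34}]_{n_l}$ inside $\hat{E}_4^{3n_l}(\underline{\alpha})$ via Lemma~\ref{commutelemma}. The $2(\#3\text{'s})$ arises from the commutator $-\mathcal{C}_{34}[\underline{\alpha}_{n_l}]$ (pushing the $4$ through each $3$ in $n_l$ picks up one $-2\Omega\slashed{D}_3$ from $F_{34}$). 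Similarly, $-2\gamma\mathcal{C}_{33}[\underline{\beta}_{n_l}]\underline{\beta}_{3n_l}$ is \emph{not} lower order as you claim: pushing the $3$ through each $4$ in $n_l$ contributes the $-2(\#4\text{'s})\gamma\Omega\|\underline{\beta}_{3n_l}\|^2$ term. None of this comes from $\widehat{\underline{H}}$-contractions or $\vartheta^\pm$-factors as you suggest.
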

\begin{remark}
The first term on the right hand side is the redshift term, which has a good sign when brought to the left hand side.\footnote{As is familiar from the wave equation, its strength improves, the more $3$-derivatives are taken.} The contribution for $\underline{\beta}_{3n_l}$ has the wrong sign. However, since we have the large term $-\slashed{D}_3 \gamma$ already available on the left hand side of (\ref{hdrs1}), we will absorb the negative contribution. The terms in lines two and three will eventually be absorbed by the left hand side (once we added the estimates for all the components). The remaining terms involving Ricci-coefficients are contained in the expression for (\ref{hzee}).
The remaining lower order curvature terms will eventually be controlled by reiterating the redshift estimate, while the term in the fourth line is present in (\ref{mrst}). \end{remark}
\begin{proof}
Recall that $\hat{E}_4^{3n_l}\left(\underline{\alpha}\right)=E_4^{3n_l}\left(\underline{\alpha}\right) -\mathcal{C}_{34}\left[\underline{\alpha}_{n_l}\right] - \left[\underline{\alpha}_{43} -\underline{\alpha}_{34}\right]_{n_l}$.
Starting with the commutation term
\begin{equation}
-\mathcal{C}_{34}\left[\underline{\alpha}_{n_l}\right]  \underline{\alpha}_{3n_l} = -\left(\underline{\alpha}_{34n_l} - \underline{\alpha}_{3n_l4}\right)\underline{\alpha}_{3n_l} \, \nonumber
\end{equation}
we see that we need to push through the $4$-derivative. Every time it hits a $3$, Lemma \ref{commutelemma} applies and introduces a lower oder term. Since terms which contain $l$ derivatives are lower order (and accounted for by the term in the penultimate line of Lemma \ref{lemrs1}), we only have to focus on the highest order term. Furthermore, as both $\underline{\Omega}$ and $Z, \underline{Z}$ decay, we only need to follow the term which is proportional to $\Omega$ in the expression $F_{34A_1 ... A_k}$ of Lemma \ref{commutelemma}. It is easy to see that this term will be 
\begin{equation}
-\mathcal{C}_{34}\left[\underline{\alpha}_{n_l}\right]  \underline{\alpha}_{3n_l} = -\left(\alpha_{34n_l} - \alpha_{3n_l4}\right)\underline{\alpha}_{3n_l} \sim \left(\textrm{number of $3$'s in $n_l$}\right) 2\Omega \underline{\alpha}_{3n_l}  \underline{\alpha}_{3n_l} \nonumber
\end{equation}
The $C_{33}\left[\underline{\beta}\right]$-term is dealt with analogously and gives a negative contribution.

Inspecting the term $E_3^{3n_l}\left(\underline{\beta}\right)$ and how it arises inductively from derivatives of $E_3 \left(\underline{\beta}\right)$ we observe that it consists of the following type of terms (note that $\underline{\Omega}$ decays)
\begin{itemize}
\item products of curvature components with Ricci coefficients with both components decaying
\item derivatives falling on the expression $-3\underline{Y} \rho$ in $E_3\left(\underline{\beta}\right)$, which are problematic in view of the fact that $\rho$ does not decay. 
\end{itemize}
The terms of the first type are unproblematic if sufficiently many derivatives are taken and are easily seen to be estimated by the terms of the right hand side in the lemma. For the terms of the second type we make the following observation: The worst term is clearly the one where all derivatives fall on $\underline{Y}$. This is because if a derivative falls on $\rho$ it can be replaced by $\rho$ and terms which decay using the Bianchi equation, hence gaining one derivative. It follows that except for the term where all derivatives fall on $\underline{Y}$, we can estimate $\rho$ pointwise and $k$-derivatives of $\underline{Y}$ using the spacetime energy for the Ricci coefficients appearing on the right hand side of the lemma. We still need to estimate the term
\begin{align}
\int_{\mathcal{Y}} \gamma \, \rho \left[ \slashed{D}^{n_l} \slashed{D}_3 \underline{Y} \right] \underline{\beta}_{3n_l}
\end{align}
For this we will be able to exploit a cancellation with the other $\underline{\alpha}$-terms. We first note from equation (\ref{a43}) and Lemma \ref{commutelemma} that
\begin{align} \label{hcm}
- \left[\underline{\alpha}_{43} -\underline{\alpha}_{34}\right]_{n_l} = + 2\Omega \underline{\alpha}_{3n_l} - 2\underline{\Omega}\underline{\alpha}_{4n_l} + \textrm{terms prop.~to $\underline{\alpha}_{n_l}$}
\end{align}
Furthermore, inspecting how $E_4^{3n_l}\left(\underline{\alpha}\right)$ arises inductively from derivatives of $E_4 \left(\underline{\alpha}\right)$, we observe that it consists of terms of the following type 
\begin{enumerate}
\item products of curvature components with Ricci coefficients with both components decaying
\item derivatives falling on the expression $4 \Omega \underline{\alpha}$ in $E_4\left(\underline{\alpha}\right)$, which in view of the fact that $\Omega$ itself does not decay are problematic. 
\item derivatives falling on the expression $-3\widehat{\underline{H}} \rho$ in $E_4\left(\underline{\alpha}\right)$, which in view of the fact that $\rho$ does not decay are problematic. 
\end{enumerate}
Again, the terms of the first type are easily estimated. For the {\bf terms of the second type} we observe

\emph{Derivatives all falling on the same factor.} If all derivatives fall on $\alpha$ we find a contribution of $4\Omega \alpha_{3n_l}$, which adds to the $2\Omega \underline{\alpha}_{3n_l}$ in (\ref{hcm}) to produce the factor of $6$ in the Lemma.\footnote{In rewriting these terms like this one picks up lower order terms, which involve $\Omega$ multiplying $l$-derivatives of $\alpha$. These are accounted for in the lower order energies on the right hand side of the Lemma.}  If all $l+1$ derivatives fall on $\Omega$ we can (since the $l+1$-derivatives have at least one $3$ derivative in it) repeatedly use the commutation formula
\begin{equation}
\slashed{D}_3 \slashed{D}_4 \Omega = \slashed{D}_4 \slashed{D}_3 \Omega - 2\Omega \slashed{D}_3 \Omega + 2\underline{\Omega} \slashed{D}_4 \Omega + \left(Z_B - \underline{Z}_B\right) \slashed{\nabla}_B \Omega
\end{equation}
and the structure equation (\ref{O43}) to create terms of the first type and a term where $l$ derivatives fall on $\rho$ (recall that $\underline{\Omega}$ decays). For the latter we can again use the Bianchi equation for $\rho_3$ or $\rho_4$ to gain a derivative or transform it into a cubic term.

\emph{Derivatives splitting up}. If the derivatives split, we can estimate one of the terms pointwise and the other in $L^2$. Writing $\Omega = \Omega + \frac{M}{r^2} - \frac{M}{r^2}$ one sees that
\begin{align}
\int_{\mathcal{Y}} \sum_{s=1}^{l} \left[\slashed{D}^{s} \Omega \cdot \slashed{D}^{{l-s+1}} \underline{\alpha} \right] \underline{\alpha}_{3n_l} \nonumber \\
\leq  \lambda \int_{\mathcal{Y}} \|\underline{\alpha}_{3n_l}\|^2 + \frac{B}{\lambda} \Big[\overline{\mathbb{I}}^{l,deg}\left[W\right] \left(\mathcal{Y}\right)+  \sup_{\tau} \overline{\mathbb{E}}^{l}_{r\leq r_Y} \left[W\right] \left({\Sigma}_{\tau}\right) \Big] \nonumber \, ,
\end{align}
where $B$ also depends on the $\mathbb{D}^l \left[\mathfrak{R}\right]$-energy which is bounded by the ultimately Schwarzschildean assumption.

For the {\bf terms of the third type} we argue similarly as for the $\underline{\beta}$-term, which leaves us to control the term where all derivatives fall on $\widehat{\underline{H}}$. Combining it with the $\underline{\beta}$ term we need to estimate
\begin{align}
I = \int_{\mathcal{Y}} \gamma \, \rho \Bigg( \left[ 2  \slashed{D}_3 \slashed{D}^{n_l} \underline{Y} \right] \underline{\beta}_{3n_l} + \left[\slashed{D}^{n_l} \slashed{D}_3 \widehat{\underline{H}}\right] \underline{\alpha}_{3n_l} \Bigg)
\end{align}
Note that we can choose the ordering of derivatives as commuting only introduces lower order terms which have already been estimated. The idea is to integrate the first term by parts moving a three derivative on $\underline{\beta}_{3n_l}$. Modulo a lower order term we will obtain $\underline{\beta}_{3n_l3}$ for which the Bianchi equation yields $-\slashed{div} \alpha_{3n_l}$. Integrating again by parts to move the angular derivative back to the $\underline{Y}$ term and commuting it through, we obtain cancellation of the highest order terms after plugging in the structure equation (\ref{Hb3}). This leaves, ignoring the boundary terms for the moment the highest order term
\begin{equation}
\int_{\mathcal{Y}} \gamma \, \rho \underline{\alpha}_{n_l} \underline{\alpha}_{3n_l}
\end{equation}
which after another integration by parts is seen to be of lower order. It is easy to check that in this process only boundary and lower order terms appear, which are collected on the right hand side of the Lemma.
\end{proof}
With the lemma at hand we can integrate the identity (\ref{hdrs1}) in $\mathcal{Y}$ and obtain favorable spacetime- and boundary terms if the first term on the right hand side of the Lemma \ref{lemrs1} is taken into account. 

\subsubsection*{Estimating $\underline{\beta}_{3n_l}$ and $\left(\rho,\sigma\right)_{3n_l}$}
Using the equations for $\rho_{33n_l}$ and $\sigma_{33n_l}$ (derived from  (\ref{r3nk}) and (\ref{s3nk})) and for $\underline{\beta}_{34n_l}$ (derived from (\ref{bb3nk})) yields, modulo a total divergence on $S^2_{t,r}$)  
\begin{align} \label{hdrs2}
\frac{1}{2} \slashed{D}_3 \left[\gamma \left( \rho_{3n_l}^2 +\sigma_{3n_l}^2\right)\right] + \frac{1}{2} \slashed{D}_4 \left[\gamma \|\underline{\beta}_{3n_l}\|^2 \right] \nonumber \\ 
+  \left( \rho_{3n_l}^2 +\sigma_{3n_l}^2\right)  \Big[-\frac{1}{2}\slashed{D}_3 \gamma + \vartheta^-\left(\rho_{3n_l}\right) \gamma \, tr \underline{H}\Big] \nonumber \\     
+ \left[-\frac{1}{2}\slashed{D}_4 \gamma + 2\vartheta^+\left(\underline{\beta}_{3n_l}\right) \gamma \,  tr H\right] \|\underline{\beta}_{3n_l}\|^2 = e_\gamma \left[\left(\rho,\sigma\right)_{3n_l}, \underline{\beta}_{3n_l}\right]  \end{align}
\begin{align}
e_\gamma \left[\left(\rho,\sigma\right)_{3n_l}, \underline{\beta}_{3n_l}\right] = E_3^{3n_l} \left(\rho\right) \cdot \rho_{3n_l} + E_3^{3n_l} \left(\sigma\right) \cdot \sigma_{3n_l} \nonumber \\ \underline{\beta}_{3n_l} \left(E_4^{3n_l}\left(\underline{\beta}\right) + \Big[\underline{\beta}\left[2\slashed{D}_4 tr \underline{H} - \slashed{D}_3 tr H \right] + F_{43}\left(\underline{\beta}\right)\Big]_{n_l}\right) \nonumber \\ -\gamma \mathcal{C}_{34}\left[\underline{\beta}_{n_l}\right]  \underline{\beta}_{3n_l} -\gamma \mathcal{C}_{33}\left[\left(\rho,\sigma\right)_{n_l}\right]  \left(\rho,\sigma\right)_{3n_l}
\end{align}
We have the analogue of Lemma \ref{lemrs1}:
\begin{lemma} \label{lemrs2}
\begin{align}
\int_{\mathcal{Y}} e_\gamma \left[\left(\rho,\sigma\right)_{3n_l}, \underline{\beta}_{3n_l}\right] \leq \int_{\mathcal{Y}} \gamma \Omega \left[4 + 2 \left(\textrm{number of $3$'s in $n_l$} \right)\right]  \| \underline{\beta}_{3n_l}\|^2 \nonumber \\ -2 \int_{\mathcal{Y}} \gamma \Omega \left(\textrm{number of $4$'s in $n_l$} \right) \| \left(\rho,\sigma\right)_{3n_l}\|^2
 + \lambda \int_{\mathcal{Y}}  \left( \|\underline{\beta}_{3n_l} \|^2 + \|\left(\rho,\sigma\right)_{3n_l} \|^2  \right) \nonumber \\ + \lambda \int_{\mathcal{H}, \Sigma_{\tau_1}, \Sigma_{\tau_2}} \left( \|\underline{\beta}_{3n_l} \|^2 + \|\left(\rho,\sigma\right)_{3n_l} \|^2 \right) \nonumber \\ + \textrm{error-terms as in Lemma \ref{lemrs1}} \nonumber
\end{align}
\end{lemma}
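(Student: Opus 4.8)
The plan is to prove Lemma \ref{lemrs2} in close parallel with Lemma \ref{lemrs1}, tracking only the features specific to the pair $\big((\rho,\sigma)_{3n_l},\underline{\beta}_{3n_l}\big)$. First I would split $e_\gamma\big[(\rho,\sigma)_{3n_l},\underline{\beta}_{3n_l}\big]$ into three groups: (a) the commutator terms $\mathcal{C}_{34}\big[\underline{\beta}_{n_l}\big]$ and $\mathcal{C}_{33}\big[(\rho,\sigma)_{n_l}\big]$; (b) the genuine inhomogeneities $E_3^{3n_l}(\rho)$, $E_3^{3n_l}(\sigma)$ and $E_4^{3n_l}(\underline{\beta})$, together with the correction $\big[\underline{\beta}[2\slashed{D}_4 tr\underline{H}-\slashed{D}_3 tr H]+F_{43}(\underline{\beta})\big]_{n_l}$; and (c) all manifestly lower-order remainders — those produced by commuting $\slashed{D}$ past the string $n_l$, by commuting $\slashed{D}$ with the angular measure, or carrying strictly fewer than $n$ derivatives — which are absorbed verbatim into the order-$l$ energies $\overline{\mathbb{I}}^{l,deg}[\mathfrak{R},W]$, $\overline{\mathbb{E}}^{l}[\mathfrak{R},W]$ on the right-hand side, exactly as in the proof of Lemma \ref{lemrs1}.

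For group (a) I would apply Lemma \ref{commutelemma} to push the stray $4$-, respectively $3$-derivative through $n_l$. Every crossing generates lower-order terms (group (c)) plus one leading term in which the only non-decaying coefficient $\Omega$ occurring in the relevant $F$-tensor is left undifferentiated; the decaying coefficients $\underline{\Omega}$, $Z$, $\underline{Z}$ may be dropped since they cost $r^{-2}$ or $\tau$-decay. Bookkeeping the crossings produces the combinatorial redshift weights $2(\textrm{number of }3\textrm{'s in }n_l)\,\gamma\,\Omega\,\|\underline{\beta}_{3n_l}\|^2$ and $2(\textrm{number of }4\textrm{'s in }n_l)\,\gamma\,\Omega\,\|(\rho,\sigma)_{3n_l}\|^2$; the extra constant $4$ multiplying $\|\underline{\beta}_{3n_l}\|^2$ comes from the undifferentiated $2\Omega\underline{\beta}$ inside $E_4(\underline{\beta})$ and from the boundary contribution of the $\big[\underline{\beta}_{43}-\underline{\beta}_{34}\big]_{n_l}$-type correction, handled precisely as the $4\Omega\underline{\alpha}/2\Omega\underline{\alpha}$ bookkeeping in Lemma \ref{lemrs1}.

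For group (b) I would unwind the recursions. The errors $E_3^{3n_l}(\rho)$ and $E_3^{3n_l}(\sigma)$ are inductively sums of products of a decaying Ricci coefficient with a curvature component, hence harmless: estimate one factor in $L^\infty$, the other in $L^2$, and absorb by Cauchy's inequality into the $\overline{\mathbb{I}}^{l+1,deg}[W]$ and $\overline{\mathbb{I}}^{l+1,deg}[\mathfrak{R}]$ terms (the smallness coming from $\|\tilde W\|_{L^\infty}$, resp. $\|\mathfrak{R}-\mathfrak{R}_{SS}\|_{L^\infty}$, or from $\lambda$) together with the order-$l$ terms; the correction in (b), which is quadratic in $\underline{\beta}$ times derivatives of $tr H,tr\underline{H}$ that vanish at the horizon, is handled the same way. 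The only genuinely dangerous contributions are inside $E_4^{3n_l}(\underline{\beta})$: the part descending from $2\Omega\underline{\beta}$, treated exactly as $4\Omega\underline{\alpha}$ in Lemma \ref{lemrs1} (derivatives all on $\underline{\beta}$ feed the redshift weight; derivatives all on $\Omega$ are removed using $\slashed{D}_3\slashed{D}_4\Omega=\slashed{D}_4\slashed{D}_3\Omega+\textrm{l.o.t.}$ and the null-structure equation (\ref{O43}); split derivatives are estimated pointwise-times-$L^2$ after writing $\Omega=\Omega+\frac{M}{r^2}-\frac{M}{r^2}$), and the part descending from $-3(\underline{Z}\rho-{}^\star\underline{Z}\sigma)$ proportional to the non-decaying $\rho$: if any derivative falls on $\rho$ it is traded for a derivative via the Bianchi equations (\ref{Bianchi6})–(\ref{Bianchi8}), so the surviving obstacle is the single top-order term in which all derivatives land on $\underline{Z}$.

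The main obstacle, as in Lemma \ref{lemrs1}, is exactly that top-order term $\gamma\,\rho\,[\slashed{D}^{n_l}\slashed{D}_3\underline{Z}]\cdot\underline{\beta}_{3n_l}$: no pointwise use of $\rho$ is permissible here, since estimating it directly would leave an $O(1)$ — rather than $\epsilon$- or $\lambda$-small — coefficient on the top-order Ricci energy, contrary to the whole point of Proposition \ref{redshiftcomplete}. The resolution is the cancellation argument: integrate by parts to move a $3$-derivative onto $\underline{\beta}_{3n_l}$, which modulo lower order produces $\underline{\beta}_{3n_l3}$ and hence $-\slashed{div}\underline{\alpha}_{3n_l}$ by the Bianchi equation (\ref{bb3nk}); integrate by parts again to return the angular derivative onto $\underline{Z}$ and commute it through, so that substituting the null-structure equations (\ref{Hb4}), (\ref{Yb4}), (\ref{divHb}) makes the highest-order pieces cancel, leaving a residue of the schematic form $\gamma\,\rho\,\underline{\alpha}_{n_l}\,\underline{\alpha}_{3n_l}$ together with boundary terms which one further integration by parts renders lower order — all already present on the right-hand side. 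Finally I would assemble (\ref{hdrs2}): by Lemma \ref{rscon} the coefficient of $\|(\rho,\sigma)_{3n_l}\|^2$ is governed by the large positive $-\tfrac12\slashed{D}_3\gamma\gtrsim (M c_{red})^{-1}$, which with $c_{red}=\tfrac1{20k}$ dominates the $O(k)$ negative redshift and subscript-weight contributions; the coefficient of $\|\underline{\beta}_{3n_l}\|^2$, after moving its redshift term (correctly signed since $\Omega<0$ near the horizon) to the left, is likewise positive; hence the $\lambda$-terms and lower-order energies can be absorbed and the stated estimate follows.
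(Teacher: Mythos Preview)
Your overall strategy is correct and tracks Lemma~\ref{lemrs1} well, but the handling of the $\rho_0$--proportional term has a genuine gap.

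\textbf{The missing partner term.} You identify only one dangerous top-order term, the one coming from $-3\underline{Z}\rho$ in $E_4(\underline{\beta})$. But the recursion for $E_4^{3n_l}(\underline{\beta})$ (see the formula for $E_4^{n_p3}(\underline{\beta})$ below (\ref{bb4nk})) also produces the term $\vartheta^-(\rho)\,\rho\,\slashed{\nabla}\,tr\underline{H}=\tfrac{3}{2}\rho_0\,\slashed{\nabla}\,tr\underline{H}$, which is equally dangerous: it is $\rho_0$ times a top-order derivative of a Ricci coefficient. The paper's ``most difficult term'' is the \emph{sum}
\[
I=\int_{\mathcal{Y}}\rho\,\gamma\Big[-3\,\slashed{D}^{n_l}\slashed{D}_3\underline{Z}+\tfrac{3}{2}\slashed{\nabla}\slashed{D}^{n_l}tr\underline{H}\Big]\underline{\beta}_{3n_l},
\]
and the cancellation is between these two pieces, not within the $\underline{Z}$--term alone.

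\textbf{Why your integration-by-parts chain does not close.} You move $\slashed{D}_3$ onto $\underline{\beta}_{3n_l}$ and insert $\underline{\beta}_{3n_l3}=-\slashed{div}\,\underline{\alpha}_{3n_l}+\ldots$, then return the angular derivative to obtain $\slashed{\mathcal{D}}_2^\star\slashed{D}^{n_l}\underline{Z}\cdot\underline{\alpha}_{3n_l}$. Substituting (\ref{Hb4}) replaces $\slashed{\mathcal{D}}_2^\star\underline{Z}$ by $-\tfrac{1}{2}\slashed{D}_4\widehat{\underline{H}}$ to leading order; but $\rho_0\,\slashed{D}^{n_l}\slashed{D}_4\widehat{\underline{H}}\cdot\underline{\alpha}_{3n_l}$ is still a top-order Ricci factor times a top-order curvature factor with an $O(1)$ coefficient --- it has not been reduced. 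There is no $\underline{\alpha}$ on the right of (\ref{Hb4}), so the claimed residue $\gamma\rho\,\underline{\alpha}_{n_l}\underline{\alpha}_{3n_l}$ never appears; you have transplanted the endgame of Lemma~\ref{lemrs1} (where (\ref{Hb3}) does contain $-\underline{\alpha}$) without checking that the structure equation here delivers the same curvature component.

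\textbf{What the paper does instead.} For the $\underline{Z}$-term one first uses (\ref{Yb4}), $\slashed{D}_3\underline{Z}=\slashed{D}_4\underline{Y}+\underline{\beta}+\text{l.o.t.}$, then moves the $\slashed{D}_4$ onto $\underline{\beta}_{3n_l}$ and inserts the \emph{other} Bianchi equation $\underline{\beta}_{3n_l4}\approx-\slashed{\nabla}\rho_{3n_l}-{}^\star\slashed{\nabla}\sigma_{3n_l}$; returning the angular derivative yields $3\,\slashed{D}^{n_l}\slashed{div}\,\underline{Y}\cdot\rho_{3n_l}$ (the $\sigma$-part is lower order by (\ref{curlYb})). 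For the $tr\underline{H}$-term one moves $\slashed{\nabla}$ onto $\underline{\beta}_{3n_l}$, uses $\rho_{3n_l3}=-\slashed{div}\,\underline{\beta}_{3n_l}+\ldots$, and moves a $\slashed{D}_3$ back to get $-\tfrac{3}{2}\slashed{D}^{n_l}\slashed{D}_3 tr\underline{H}\cdot\rho_{3n_l}$. The two surviving top-order pieces then cancel exactly via (\ref{trHb3}), since $\slashed{D}_3 tr\underline{H}=2\,\slashed{div}\,\underline{Y}+\text{l.o.t.}$
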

\begin{proof}
The proof is completely analogous to the previous one. Hence we will only reveal the structure regarding the term which is proportional to $\rho$. The most difficult term is 
\begin{align}
I = \int_{\mathcal{Y}} \rho \gamma \left[ -3 \slashed{D}^{n_l} \slashed{D}_3 \underline{Z} + \frac{3}{2}  \slashed{\nabla} \slashed{D}^{n_l} tr \underline{H} \right] \underline{\beta}_{3n_l}
\end{align}
First use the structure equation (\ref{Yb4}), which modulo terms of lower order produces a familiar $\underline{\beta}_{n_l}\underline{\beta}_{3n_l}$-term which can be integrated by parts as in the previous lemma and a term which is like $- \slashed{D}^{n_l} \slashed{D}_4 \underline{Y} \underline{\beta}_{3n_l}$. Moving the $4$-derivative onto the $\underline{\beta}_{3n_l}$-term, inserting the Bianchi equations $\underline{\beta}_{34n_l} = - \slashed{\nabla} \rho_{3n_l} - {}^\star \slashed{\nabla} \sigma_{3n_l}+ \textrm{l.o.t.}$ and moving the angular derive back onto the $Y$-term yields a term $+3\slashed{D}^{n_l} \slashed{div} Y \cdot \rho_{3n_l}$ (note that the $\sigma_{3n_l}$-term is of lower order in view of the structure equation for $\slashed{curl}Y$). For the $\slashed{\nabla} \slashed{D}^{n_l} tr \underline{H} \underline{\beta}_{3n_l}$-term we similarly move the angular derivative onto $\underline{\beta}_{3n_l}$, insert the Bianchi equation $\rho_{3n_l3} = - \slashed{div} \underline{\beta}_{3n_l} + \textrm{l.o.t.}$ and move one derivative from $\rho_{3n_l3}$ back to the $tr \underline{H}$-term. This leaves the highest order term $-\frac{3}{2} \slashed{D}^{n_l} \slashed{D}_3 tr \underline{H} \rho_{3n_l}$. The sum of the two highest order terms is however of lower order by virtue of the structure equation (\ref{trHb3}).
Again due to the good main-term in the first line of Lemma \ref{lemrs2} we obtain non-negative boundary and spacetime terms upon integration of (\ref{hdrs2}) in $\mathcal{Y}$.
\end{proof}

\subsubsection*{Estimating $\beta_{3n_l}$}
Using (\ref{b3nk}) in conjunction with (\ref{r4nk}) and (\ref{s4nk}) we obtain
\begin{align} \label{hdrs3}
\frac{1}{2} \slashed{D}_3  \left[\gamma \|\beta_{3n_l}\|^2 \right]  + \frac{1}{2} \slashed{D}_4  \left[\gamma \left( \rho_{3n_l}^2 +\sigma_{3n_l}^2\right)\right] \nonumber \\ 
+   \Big[-\frac{1}{2}\slashed{D}_3 \gamma + \vartheta^-\left(\beta_{3n_l}\right) \gamma \, tr \underline{H}\Big] \|\beta_{3n_l}\|^2 \nonumber \\     
+ \left[-\frac{1}{2}\slashed{D}_4 \gamma + 2\vartheta^+\left(\rho_{3n_l}\right) \gamma \,  tr H\right]\left( \rho_{3n_l}^2 +\sigma_{3n_l}^2\right)  = e_\gamma \left[\beta_{3n_l},\left(\rho,\sigma\right)_{3n_l}\right]  \, ,
\end{align}
\begin{align}
e_\gamma \left[\beta_{3n_l},\left(\rho,\sigma\right)_{3n_l}\right] = -\gamma \mathcal{C}_{34}\left[\left(\rho,\sigma\right)_{n_l}\right]  \left(\rho,\sigma\right)_{3n_l} -\gamma \mathcal{C}_{33}\left[\beta_{n_l}\right]  \beta_{3n_l} \nonumber \\ 
+E_3^{3n_l} \left(\beta\right) \cdot \beta_{3n_l} + \underline{\rho}_{3n_l} \left(E_4^{3n_l}\left(\rho\right) + \Big[\rho \left[\frac{3}{2} \slashed{D}_4 tr \underline{H} - \frac{3}{2}\slashed{D}_3 tr H \right] + F_{43}\left(\rho\right)\Big]_{n_l}\right) \nonumber \\
+\underline{\sigma}_{3n_l} \left(E_4^{3n_l}\left(\sigma\right) + \Big[\sigma \left[\frac{3}{2} \slashed{D}_4 tr \underline{H} - \frac{3}{2}\slashed{D}_3 tr H \right] + F_{43}\left(\sigma\right)\Big]_{n_l}\right) \nonumber \, .
\end{align}
As before we have
\begin{lemma} \label{lemrs3}
\begin{align}
e_\gamma \left[\beta_{3n_l},\left(\rho,\sigma\right)_{3n_l}\right] \leq \int_{\mathcal{Y}} \gamma \Omega \left[2+2 \left(\textrm{number of $3$'s in $n_l$} \right)\right]  \| \left(\rho,\sigma\right)_{3n_l}\|^2 \nonumber \\ -2 \int_{\mathcal{Y}} \gamma \Omega \left(\textrm{number of $4$'s in $n_l$} \right) \| \beta_{3n_l}\|^2
 + \lambda \int_{\mathcal{Y}}  \left( \|{\beta}_{3n_l} \|^2 + \|\left(\rho,\sigma\right)_{3n_l} \|^2  \right) \nonumber \\ + \lambda \int_{\mathcal{H}, \Sigma_{\tau_1}, \Sigma_{\tau_2}} \left( \|{\beta}_{3n_l} \|^2 + \|\left(\rho,\sigma\right)_{3n_l} \|^2 \right) \nonumber \\ + \textrm{error-terms as in Lemma \ref{lemrs1}} \nonumber \, .
\end{align}
\end{lemma}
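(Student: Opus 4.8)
The plan is to run, essentially verbatim, the argument already used for Lemma~\ref{lemrs1} and Lemma~\ref{lemrs2}: integrate the identity (\ref{hdrs3}) over $\mathcal{Y}=\mathcal{M}\left(\tau_1,\tau_2\right)\cap\{r\leq r_Y\}$ and estimate the error density $e_\gamma\left[\beta_{3n_l},\left(\rho,\sigma\right)_{3n_l}\right]$ term by term, sorting its constituents into three groups. The first group consists of the harmless products ``decaying Ricci rotation coefficient $\times$ decaying curvature component''; this accounts for the bulk of $E_3^{3n_l}\left(\beta\right)$, $E_4^{3n_l}\left(\rho\right)$, $E_4^{3n_l}\left(\sigma\right)$ and of $F_{43}\left(\rho\right),F_{43}\left(\sigma\right)$ once one removes the pieces proportional to the non-decaying component $\rho=\rho\left(W\right)$ and to the non-decaying coefficients $\Omega$, $tr\underline H$. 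For these one distributes the $l$ available derivatives, estimates the least-differentiated factor in $L^\infty$ by Sobolev and the rest in $L^2$, and absorbs the result into $\lambda\int_{\mathcal{Y}}\left(\|\beta_{3n_l}\|^2+\|\left(\rho,\sigma\right)_{3n_l}\|^2\right)$ together with the lower-order energies $\overline{\mathbb{I}}^{l,deg}_{r\leq r_Y+\frac{M}{2}}$, $\overline{\mathbb{E}}^{l,deg}_{r\leq r_Y}$ of $\mathfrak{R}$ and $W$, exactly as in Lemma~\ref{lemrs1}; the few pieces whose $r$-decay is delicate at infinity are a non-issue in the bounded region $r\leq r_Y$.

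The second group produces the redshift coefficients displayed in the first two lines of the statement. These come from the commutators $-\gamma\,\mathcal{C}_{33}\left[\beta_{n_l}\right]\beta_{3n_l}$ and $-\gamma\,\mathcal{C}_{34}\left[\left(\rho,\sigma\right)_{n_l}\right]\left(\rho,\sigma\right)_{3n_l}$, and from the pieces of $E_3^{3n_l}\left(\beta\right)$, $E_4^{3n_l}\left(\rho\right)$, $E_4^{3n_l}\left(\sigma\right)$, $F_{43}\left(\rho\right)$, $F_{43}\left(\sigma\right)$ which are proportional to $\Omega$. Pushing derivatives through the commutators by Lemma~\ref{commutelemma}, only the $\Omega$-proportional term in $F_{34}$ (respectively $F_{43}$) survives at top order, and a count of the number of $3$'s (respectively $4$'s) in $n_l$ yields the factors $\gamma\Omega\left[2+2(\text{number of }3\text{'s})\right]$ in front of $\|\left(\rho,\sigma\right)_{3n_l}\|^2$ and $-2\gamma\Omega(\text{number of }4\text{'s})$ in front of $\|\beta_{3n_l}\|^2$, the additive constant coming from the $\Omega$-pieces already present at base level. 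The first of these has the good sign when moved to the left-hand side (it strengthens with the number of $3$-derivatives, as for the wave equation); the second has the wrong sign but is absorbed by the large term $-\frac12\slashed{D}_3\gamma\,\|\beta_{3n_l}\|^2$ on the left of (\ref{hdrs3}) (and, once the estimates are added up, by the bound for $\left(\rho,\sigma\right)_{3n_l}$ already obtained in Lemma~\ref{lemrs2}); the purely geometric terms $\vartheta^\pm\gamma\,tr\underline H$, $\vartheta^\pm\gamma\,tr H$ sit on the left of (\ref{hdrs3}) from the outset.

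The third group, which is the \textbf{main obstacle}, is the single dangerous term that is proportional to $\rho$ with all derivatives falling on a \emph{decaying} Ricci coefficient: from the $3Z\rho$ piece of $E_3\left(\beta\right)$ in (\ref{Bianchi5}) and the associated $F_3\left(\rho_{n_p}\right)$ contributions one gets a term of the schematic form $\int_{\mathcal{Y}}\gamma\,\rho\,\big[\slashed{D}^{n_l}\slashed{D}_3 Z\big]\cdot\beta_{3n_l}$, together with its partners hidden in $\big[\rho\big(\frac32\slashed{D}_4 tr\underline H-\frac32\slashed{D}_3 tr H\big)\big]_{n_l}\cdot\left(\rho,\sigma\right)_{3n_l}$; whenever a derivative lands on $\rho$ instead one uses the Bianchi equations (\ref{r3nk}), (\ref{r4nk}), (\ref{s3nk}), (\ref{s4nk}) to trade it for an extra derivative, rendering that term lower order, so only the all-derivatives-on-the-Ricci-coefficient term must be dealt with. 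I would proceed as in Lemma~\ref{lemrs1}: integrate the $\beta$-term by parts in the $e_3$-direction so that one $\slashed{D}_3$ falls on $\beta_{3n_l}$, producing $\beta_{3n_l3}$; invoke the commuted Bianchi equation (\ref{b3nk}), which gives $\beta_{3n_l3}\equiv\slashed{\mathcal{D}}_1^\star\left(-\rho,\sigma\right)_{3n_l}$ modulo lower order; integrate by parts on the spheres $S^2_{t^\star,u}$ to move $\slashed{\mathcal{D}}_1^\star$ back onto the $Z$-factor, turning $\slashed{D}^{n_l}\slashed{D}_3 Z$ into $\slashed{D}^{n_l}\slashed{D}_3\left(\slashed{div}Z,\slashed{curl}Z\right)$; and finally insert the null-structure equations relating $\slashed{D}_3 Z$, $\slashed{div}Z$, $\slashed{curl}Z$, $\slashed{D}_3 tr H$ to curvature (the $Z$-analogues of (\ref{H3}), (\ref{trH3}), together with (\ref{Y3})) so that the surviving top-order expression cancels against the $\left(\rho,\sigma\right)_{3n_l}$-side of (\ref{hdrs3}). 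What remains after the cancellation is of the schematic form $\int_{\mathcal{Y}}\gamma\,\rho\,\left(\rho,\sigma\right)_{n_l}\cdot\left(\rho,\sigma\right)_{3n_l}$, which — since $\rho$ decays like $r^{-3}$ so the weight costs nothing, and after one further integration by parts in $e_3$ — is lower order and absorbed into the right-hand side; the boundary terms generated in the two integrations by parts are precisely those on $\Sigma_{\tau_1},\Sigma_{\tau_2},\mathcal{H}$ collected in Lemma~\ref{lemrs1}. The delicate point — and the one requiring care about the ordering of derivatives and the exact structure equations used — is verifying that this cancellation is exact, with no leftover top-order term still proportional to $\rho$; granting this, collecting the redshift term on the left and everything else on the right gives the claimed inequality, and, as in the previous two lemmas, the good main term then renders the corresponding boundary and spacetime contributions of (\ref{hdrs3}) non-negative upon integration over $\mathcal{Y}$, which is what is needed to close the chain of estimates for Proposition~\ref{redshiftcomplete}.
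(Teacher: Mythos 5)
Your overall strategy matches the paper's, but there is a genuine gap in the handling of the dangerous $\rho$-proportional term, and it is precisely the kind of imprecision you flag yourself as a ``delicate point.'' Your list of the top-order $\rho$-proportional pieces of $I$ is incomplete. You keep the piece $3\slashed{D}^{n_l}\slashed{D}_3 Z \cdot \beta_{3n_l}$ and the $(\rho,\sigma)_{3n_l}$-partners coming from $\big[\rho\big(\tfrac32\slashed{D}_4 tr\underline H - \tfrac32\slashed{D}_3 tr H\big) + F_{43}(\rho)\big]_{n_l}$, but you omit the equally dangerous piece $-\tfrac32\,\slashed{D}^{n_l}\slashed{\nabla}tr\underline H\cdot\beta_{3n_l}$, which comes from the term $-\vartheta^-(\rho_{n_p})\rho_{n_p}\slashed{\nabla}tr\underline H$ in the recursion for $E_3^{n_p3}(\beta)$ when all derivatives land on the Ricci coefficient and $\rho$ is left undifferentiated. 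Without this piece, the cancellation cannot close: the paper's mechanism is a two-stage internal cancellation within $I$, not a cancellation against the main terms of (\ref{hdrs3}). One first reduces the bracket multiplying $(\rho,\sigma)_{3n_l}$ via the structure equations to $\rho\big[3\slashed{div}(\underline Z - Z) + 2\Omega\rho_3 - 2\underline\Omega\rho_4\big] + \textrm{quadr.~dec.}$ so that the critical term reads
\begin{align}
I = \int_{\mathcal{Y}} \gamma\rho\Big[3\slashed{D}^{n_l}\slashed{div}(\underline Z - Z)\,\rho_{3n_l} + \Big(-\tfrac32\slashed{D}^{n_l}\slashed{\nabla}tr\underline H + 3\slashed{D}^{n_l}\slashed{D}_3 Z\Big)\beta_{3n_l}\Big] \, . \nonumber
\end{align}
Then the $\slashed{D}_3 Z \cdot\beta_{3n_l}$-piece, after moving the $3$-derivative onto $\beta_{3n_l}$, invoking $\beta_{3n_l3} = \slashed{\mathcal{D}}_1^\star(-\rho,\sigma)_{3n_l} + \textrm{l.o.t.}$, and moving the angular derivative back, cancels against the $-Z$-part of the first term (the $\sigma$-contribution being lower order by the algebraic $\slashed{curl}Z$-relation). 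What is left is the $\underline Z$-part of the first term, and it is exactly the omitted $\slashed{\nabla}tr\underline H\cdot\beta_{3n_l}$-piece that cancels it, after moving $\slashed{\nabla}$ onto $\beta_{3n_l}$, using $\rho_{43n_l} = \slashed{div}\beta_{3n_l} + \textrm{l.o.t.}$, moving the $4$-derivative back onto $tr\underline H$, and invoking the structure equation (\ref{trHb4}).

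This also means the set of structure equations you propose is not quite right. You cite ``the $Z$-analogues of (\ref{H3}), (\ref{trH3}), together with (\ref{Y3})''; the first cancellation needs only the commuted Bianchi equations and the algebraic relation for $\slashed{curl}Z$, and the second needs (\ref{trHb4}), which is the equation for $\slashed{D}_4(tr\underline H)$ in terms of $\slashed{div}\underline Z$ and $\rho$. Finally, your statement that the surviving top-order expression ``cancels against the $(\rho,\sigma)_{3n_l}$-side of (\ref{hdrs3})'' is misleading: (\ref{hdrs3}) is an identity and its main spacetime terms are what you are trying to keep positive, not consume; the cancellation is entirely internal to the inhomogeneity $I$. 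Your treatment of groups one (cubically decaying products) and two (the redshift $\Omega$-coefficients from the commutators and the $\Omega$-proportional pieces of the $E$'s and $F_{43}$'s) does match the paper and is fine.
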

\begin{proof}
We only present the control of the $\rho$-integral. The crucial term is, after using the structure equations to establish that
\begin{align}
\rho \left[\frac{3}{2} \slashed{D}_4 tr \underline{H} - \frac{3}{2}\slashed{D}_3 tr H \right] + F_{43}\left(\rho\right) = \nonumber \\ \rho\left[3 \slashed{div} \left(\underline{Z}-Z\right) + 2\Omega \rho_3 -2\underline{\Omega}\rho_4\right] + \textrm{quadr.~dec.~terms} \, ,
\end{align}
\begin{align}
I = \int_{\mathcal{Y}} \gamma \rho \Big[ 3 \slashed{D}^{n_l} \slashed{div} \left(\underline{Z}-Z\right) \rho_{3n_l} + \left(-\frac{3}{2} \slashed{D}^{n_l}\slashed{\nabla} tr \underline{H} + 3 \slashed{D}^{n_l} \slashed{D}_3 Z \right) \beta_{3n_l} \Big] \, .
\end{align}
Starting with the last term we move the $3$-derivative onto $\beta_{3n_l}$ to obtain $\slashed{\nabla}\rho_{3n_l} +{}^\star \sigma_{3n_l} + \textrm{l.o.t}$ using Bianchi. Moving the $\slashed{\nabla}$ derivative back to the left, the $\sigma$-term is seen to be of lower order in view of the structure equation for $\slashed{curl} Z$. The $\rho_{3n_l}$ term which arises cancels to highest order with the $Z$-part of the first term. Hence we need to establish a cancellation between the $\underline{Z}$-part of the first term and the third term. This is achieved by moving the $\slashed{\nabla}$ derivative onto $\beta_{3n_l}$, using the Bianchi equation $\rho_{43n_l}= \slashed{div} \beta_{3n_l} + \textrm{l.o.t}$, and moving back the $4$-derivative onto $tr \underline{H}$. The structure equation (\ref{trHb4}) finally establishes cancellation to the highest order.
\end{proof}

\subsubsection*{Estimating $\alpha_{3n_l}$}
Using (\ref{a3nk}) and (\ref{b4nk})
\begin{align} \label{hdrs4}
\frac{1}{2} \slashed{D}_3  \left[\gamma \|\alpha_{3n_l}\|^2 \right]  +  \slashed{D}_4  \left[\gamma \|\beta_{3n_l}\|^2\right] \nonumber \\ 
+   \Big[-\frac{1}{2}\slashed{D}_3 \gamma + \vartheta^-\left(\alpha_{3n_l}\right) \gamma \, tr \underline{H}\Big] \|\alpha_{3n_l}\|^2 \nonumber \\     
+ 2\left[-\frac{1}{2}\slashed{D}_4 \gamma + 2\vartheta^+\left(\beta_{3n_l}\right) \gamma \,  tr H\right]\|\beta_{3n_l}^2\|  = e_\gamma \left[\alpha_{3n_l},\beta_{3n_l}\right]  
\end{align}
\begin{align}
e_\gamma \left[\alpha_{3n_l},\beta_{3n_l}\right] = -2\gamma \mathcal{C}_{34}\left[\beta_{n_l}\right]  \beta_{3n_l} -\gamma \mathcal{C}_{33}\left[\alpha_{n_l}\right]  \alpha_{3n_l}
+\gamma E_3^{3n_l} \left(\alpha\right) \cdot \alpha_{3n_l}  \nonumber \\ + 2\gamma \beta_{3n_l} \left(E_4^{3n_l}\left(\beta\right) + \Big[\beta \left[\slashed{D}_4 tr \underline{H} - 2\slashed{D}_3 tr H \right] + F_{43}\left(\beta\right)\Big]_{n_l}\right) \nonumber 
\end{align}
As before we have
\begin{lemma} 
\begin{align}
e_\gamma \left[\alpha_{3n_l},\beta_{3n_l}\right] \leq \int_{\mathcal{Y}} 2 \gamma \Omega \left[-2+2+2 \left(\textrm{number of $3$'s in $n_l$} \right)\right]  \| \beta_{3n_l}\|^2 \nonumber \\ -2 \int_{\mathcal{Y}} \gamma \Omega \left(\textrm{number of $4$'s in $n_l$} \right) \| \alpha_{3n_l}\|^2
 + \lambda \int_{\mathcal{Y}}  \left( \|{\alpha}_{3n_l} \|^2 + \|\beta_{3n_l} \|^2  \right) \nonumber \\ + \lambda \int_{\mathcal{H}, \Sigma_{\tau_1}, \Sigma_{\tau_2}} \left( \|{\alpha}_{3n_l} \|^2 + \|\beta_{3n_l} \|^2\right) \nonumber \\ + \textrm{error-terms as in Lemma \ref{lemrs1}} \nonumber
\end{align}
\end{lemma}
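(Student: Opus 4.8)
The proof runs in complete parallel with those of Lemmas \ref{lemrs1}, \ref{lemrs2} and \ref{lemrs3}, so I would only highlight where the structure of $E_3\left(\alpha\right)$ and $E_4\left(\beta\right)$ enters. Integrating the identity (\ref{hdrs4}) over $\mathcal{Y}=\mathcal{M}\left(\tau_1,\tau_2\right)\cap\{r\leq r_Y\}$ reduces matters to estimating $\int_{\mathcal{Y}}e_\gamma\left[\alpha_{3n_l},\beta_{3n_l}\right]$. I would split the integrand into four groups: (i) the commutator contributions $-2\gamma\,\mathcal{C}_{34}\left[\beta_{n_l}\right]\beta_{3n_l}$ and $-\gamma\,\mathcal{C}_{33}\left[\alpha_{n_l}\right]\alpha_{3n_l}$; (ii) the genuinely decaying products arising from $E_3^{3n_l}\left(\alpha\right)$, $E_4^{3n_l}\left(\beta\right)$, from $F_{43}\left(\beta\right)$ and from $\left[\beta\left(\slashed{D}_4 tr\underline{H}-2\slashed{D}_3 trH\right)\right]_{n_l}$; (iii) the terms obtained by differentiating the non-decaying piece $-2\Omega\beta$ of $E_4\left(\beta\right)$; (iv) the terms obtained by differentiating the pieces $-3\widehat{H}\rho$ of $E_3\left(\alpha\right)$ and $3Y\rho$ of $E_4\left(\beta\right)$, the only ones involving the non-decaying component $\rho$.

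For group (i), pushing the extra derivative through $n_l$ by Lemma \ref{commutelemma}, each crossing of a $3$ or a $4$ produces a term with at most $l$ derivatives of curvature (absorbed into the lower-order energies on the right-hand side of the statement) plus a top-order term; since among the coefficients in the $F$-terms of Lemma \ref{commutelemma} only $\Omega$ fails to decay, these top-order contributions are proportional to $\Omega$, carrying the combinatorial weights $\left(\text{number of $3$'s in }n_l\right)$ and $\left(\text{number of $4$'s in }n_l\right)$ exactly as in the statement. Group (iii) is handled precisely as the ``terms of the second type'' in the proof of Lemma \ref{lemrs1}: all derivatives on $\beta$ yields a further $\Omega\|\beta_{3n_l}\|^2$ redshift contribution (producing the isolated constant in the coefficient of $\|\beta_{3n_l}\|^2$), all derivatives on $\Omega$ is reduced by the $\slashed{D}_3\slashed{D}_4$ commutator and the structure equation (\ref{O43}) to decaying terms plus a term with $l$ derivatives on $\rho$ which is in turn dealt with using the Bianchi equations (\ref{Bianchi6})--(\ref{Bianchi8}) for $\rho_3,\rho_4$, and the split case is estimated with one factor in $L^\infty$ and the rest in $L^2$. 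The decaying products of group (ii) are estimated in the same fashion, one factor pointwise and the rest in $L^2$ using the ultimately Schwarzschildean decay; if too few derivatives are available for the pointwise factor, the term is again of the $l$-th order type.

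The only delicate point is group (iv). As in Lemmas \ref{lemrs1}--\ref{lemrs3}, whenever a derivative falls on $\rho$ one may, via the Bianchi equations for $\rho_3,\rho_4$, replace $\slashed{D}\rho$ by $\rho$ itself plus quantities that decay, effectively gaining a derivative; such terms are controlled with $\rho$ in $L^\infty$ and the remaining factor in $L^2$. This leaves only the two terms in which \emph{all} derivatives fall on $\widehat{H}$, respectively on $Y$, i.e.\ a contribution of the schematic form
\begin{align}
I=\int_{\mathcal{Y}}\gamma\,\rho\,\Big(-3\big(\slashed{D}^{n_l}\slashed{D}_3\widehat{H}\big)\cdot\alpha_{3n_l}+6\big(\slashed{D}^{n_l}\slashed{D}_3 Y\big)\cdot\beta_{3n_l}\Big)\,. \nonumber
\end{align}
I would estimate $I$ by integrating by parts in the null directions so as to transfer one derivative between the two factors, substituting the Bianchi equations (\ref{a3nk}), (\ref{b4nk}) (and their commuted forms) for the resulting derivatives of $\alpha_{3n_l}$ and $\beta_{3n_l}$, and the null-structure equations (\ref{H3}), (\ref{H4}) for $\slashed{D}\widehat{H}$ together with (\ref{trH3}), (\ref{trH4}) expressing $\slashed{div}\,Y$ and $\slashed{div}\,Z$ through $\slashed{D}_4 trH$ and $\slashed{D}_3 trH$; the adjointness of $\slashed{\mathcal{D}}_2^\star$ and $\slashed{div}$ then forces cancellation of the top-order terms, leaving only lower-order expressions of the shapes already present on the right-hand side, together with harmless boundary terms on $\mathcal{H},\Sigma_{\tau_1},\Sigma_{\tau_2}$. \textbf{This cancellation is the main obstacle of the proof}: because neither $\rho$ nor $\Omega$ decays, a direct estimate of $I$ would cost a derivative and destroy the derivative hierarchy, so one must genuinely exhibit the algebraic cancellation among the structure and Bianchi equations, exactly as for the pair $\left(\underline{Y},\widehat{\underline{H}}\right)$ in Lemma \ref{lemrs1} and for the triple $\left(\underline{Z},Z,tr\underline{H}\right)$ in Lemmas \ref{lemrs2}--\ref{lemrs3}.

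Feeding these bounds back into the integrated form of (\ref{hdrs4}) and moving the $\Omega$-redshift counting terms to the left-hand side, Lemma \ref{rscon} ensures that the coefficient of $\|\alpha_{3n_l}\|^2$ is $\geq b>0$ on $r\leq r_Y$, the large term $-\frac{1}{2}\slashed{D}_3\gamma$ dominating; the coefficient of $\|\beta_{3n_l}\|^2$ may carry the wrong sign, but the corresponding spacetime and boundary terms are absorbed by adding a fixed multiple of the estimate for $\beta_{3n_l}$ already obtained in Lemma \ref{lemrs3}. Summing the four component estimates of this subsection (those following (\ref{hdrs1}), (\ref{hdrs2}), (\ref{hdrs3}), (\ref{hdrs4})) with suitable weights then yields Proposition \ref{redshiftcomplete}.
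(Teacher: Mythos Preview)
Your proposal is essentially correct and follows the same strategy as the paper. The one point where you are slightly imprecise is the treatment of the critical $\rho$-term (your group (iv)): the paper does not use (\ref{trH3}), (\ref{trH4}) here (those give $\slashed{div}\,Y$ and $\slashed{div}\,\underline{Z}$, not $\slashed{div}\,Z$, and are not needed), but rather first applies the structure equation (\ref{Y3}), $\slashed{D}_3 Y_A - \slashed{D}_4 Z_A = \beta_A + \textrm{l.o.t.}$, to replace $\slashed{D}_3 Y$ by $\slashed{D}_4 Z + \beta$; the $\beta$-part yields a lower-order $\beta_{n_l}\cdot\beta_{3n_l}$ term (integrated by parts in $T$), while for the $\slashed{D}_4 Z$-part one moves $\slashed{D}_4$ onto $\beta_{3n_l}$, inserts the Bianchi equation (\ref{b4nk}) to produce $\slashed{div}\,\alpha_{3n_l}$, moves the angular derivative back onto $Z$ via the $\slashed{\mathcal{D}}_2^\star$--$\slashed{div}$ adjointness you mention, and finally invokes (\ref{H3}), $\slashed{D}_3\widehat{H} = -2\slashed{\mathcal{D}}_2^\star Z + \textrm{l.o.t.}$, to cancel against the $\widehat{H}$-piece. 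This is exactly the mechanism you describe, just with the specific pair (\ref{Y3})--(\ref{H3}) in place of the four equations you cite.
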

Note that the tern $2\gamma \beta_{3n_l} \left(E_4^{3n_l}\left(\beta\right)\right)$ now makes a negative contribution because of the minus sign in $E_4\left(\beta\right) = - 2\Omega \beta + ...$, while both $2\gamma \beta_{3n_l} \slashed{D}^{n_l} F_{43}\left(\beta\right)$ and  $-2\gamma \mathcal{C}_{34}\left[\beta_{n_l}\right]  \beta_{3n_l}$ still contribute a positive $\Omega$-term.
The critical $\rho$-term
\begin{align}
\int_{\mathcal{Y}} \rho \, \gamma \Big[2 \cdot 3 \slashed{D}^{n_l} \slashed{D}_3 Y \cdot \beta_{3n_l} -3 \slashed{D}^{n_l} \slashed{D}_3 \widehat{H} \alpha_{3n_l} \Big]
\end{align}
is seen to be of lower order by integrating by first using the structure equation
\begin{equation}
\slashed{D}_3 Y - \slashed{D}_4 Z = \beta + \textrm{l.o.t.} \, ,
\end{equation}
and then integrating by parts twice (moving $\slashed{D}_4$ onto $\beta_{3n_l}$ and returning the angular derivative arising from the Bianchi equations back to $Z$) and finally using the structure equation
\begin{equation}
\slashed{D}_3 \widehat{H} = - 2\slashed{\mathcal{D}}^\star_2 Z + \textrm{l.o.t.} 
\end{equation}

The previous for redshift estimates will allow us to control all $3n_l$-derivatives. However, to control all derivatives via the Bianchi equation we need control over at least two $4n_l$-derivatives as well:

\subsubsection*{Estimating $\beta_{4n_l}$}
\begin{align} \label{hdrs5}
\frac{1}{2} \slashed{D}_3  \left[\gamma \|\beta_{4n_l}\|^2 \right]  +  \frac{1}{4} \slashed{D}_4  \left[\gamma \|\alpha_{3n_l}\|^2\right] 
+   \Big[-\frac{1}{2}\slashed{D}_3 \gamma + \vartheta^-\left(\beta_{4n_l}\right) \gamma \, tr \underline{H}\Big] \|\beta_{4n_l}\|^2 \nonumber \\     
+ \frac{1}{2} \left[-\frac{1}{2}\slashed{D}_4 \gamma + 2\vartheta^+\left(\alpha_{3n_l}\right) \gamma \,  tr H\right]\|\alpha_{3n_l}\|^2  = e_\gamma \left[\beta_{4n_l},\alpha_{3n_l}\right]  
\end{align}
\begin{align}
e_\gamma \left[\beta_{4n_l},\alpha_{3n_l}\right]  = -\gamma \mathcal{C}_{43}\left[\beta_{n_l}\right]  \beta_{4n_l} -\frac{1}{2}\gamma \mathcal{C}_{34}\left[\alpha_{n_l}\right]  \alpha_{3n_l}
\nonumber \\ + \frac{1}{2}\gamma E_3^{4n_l} \left(\alpha\right) \cdot \alpha_{3n_l} + \gamma \beta_{4n_l} \cdot E_4^{3n_l}\left(\beta\right)  \nonumber 
\end{align}
For the obligatory Lemma regarding the error-term we can be more naive, since we already have a large (proportional to $\frac{1}{c_{red}}$) spacetime term containing all three-derivatives available:
\begin{lemma} 
\begin{align} \label{esf}
e_\gamma \left[\beta_{4n_l},\alpha_{3n_l}\right]  \leq \left(2+2l\right) \|2\gamma \Omega\|_{L^\infty} \int_{\mathcal{Y}} \left(\|\beta_{3n_l}\|^2 + \|\beta_{4n_l}\|^2+ \| \alpha_{3n_l}\|^2\right) \nonumber \\
 + \lambda \int_{\mathcal{H}, \Sigma_{\tau_1}, \Sigma_{\tau_2}} \left( \|{\alpha}_{3n_l} \|^2 + \|\beta_{4n_l} \|^2\right) + \textrm{error-terms as in Lemma \ref{lemrs1}} \
\end{align}
\end{lemma}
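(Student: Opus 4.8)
The plan is to split $e_\gamma\left[\beta_{4n_l},\alpha_{3n_l}\right]$ into its four constituents — the commutator terms $-\gamma\,\mathcal{C}_{43}\left[\beta_{n_l}\right]\beta_{4n_l}$ and $-\frac{1}{2}\gamma\,\mathcal{C}_{34}\left[\alpha_{n_l}\right]\alpha_{3n_l}$, and the inhomogeneity terms $\frac{1}{2}\gamma\,E_3^{4n_l}(\alpha)\cdot\alpha_{3n_l}$ and $\gamma\,\beta_{4n_l}\cdot E_4^{3n_l}(\beta)$ — and to control each of them exactly as the $\Omega$- and $\rho$-proportional terms were controlled in the proof of Lemma \ref{lemrs1}, only more crudely, since we now have at our disposal the large spacetime term $\propto c_{red}^{-1}$ multiplying $\|\beta_{4n_l}\|^2$ in (\ref{hdrs5}) and, from the four preceding redshift estimates, large ($\propto c_{red}^{-1}$) spacetime control of every three-index quantity $u_{3n_l}$. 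First I would treat the commutators: by Lemma \ref{commutelemma}, pushing the displaced index through the tuple $n_l$ produces, at top order, only the term proportional to the non-decaying coefficient $\Omega$ (the coefficients $\underline{\Omega},Z,\underline{Z},\widehat{H}$ and the curvature terms $\beta(R),\sigma(R)$ all decay or carry improved $r$-weight), so after at most $l$ such steps one is left with a term bounded by $l\cdot\Omega$ times the top-order curvature plus a remainder strictly lower order in derivatives. These top-order $\Omega$-pieces, together with the $-2\Omega\beta$ piece of $E_4(\beta)$ in (\ref{Bianchi2}) and the (decaying) $4\underline{\Omega}\alpha$ piece of $E_3(\alpha)$ in (\ref{Bianchi1}), assemble into the first line of (\ref{esf}) with the explicit constant $(2+2l)$; the lower-order remainder and all genuinely quadratically decaying products go into "error-terms as in Lemma \ref{lemrs1}" by Cauchy's inequality and Sobolev embedding, verbatim as there.

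Next, for the terms in which one or more derivatives fall on the non-decaying coefficient $\Omega$ itself, I would, as in Lemma \ref{lemrs1}, repeatedly commute $\slashed{D}_3$ with $\slashed{D}_4$ and insert the structure equation (\ref{O43}); writing $\Omega=(\Omega-\Omega_{SS})+\Omega_{SS}$ and estimating $\slashed{D}^{\geq 1}\Omega_{SS}$ as a bounded function of $r$, such a term is converted either into a quadratically decaying product controlled by lower-order curvature energies, or into a term carrying a free $\rho$ on which at least one derivative acts — which then gains a derivative via the Bianchi equations (\ref{Bianchi6})--(\ref{Bianchi8}) and is again lower order. All of this lands in the error bucket of Lemma \ref{lemrs1}.

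The only genuinely delicate pieces are the $\rho$-proportional top-order terms of $E_3^{4n_l}(\alpha)$ and $E_4^{3n_l}(\beta)$ (coming from $3Y\rho$ in $E_4(\beta)$, from $-3(\widehat{H}\rho+{}^\star\widehat{H}\sigma)$ in $E_3(\alpha)$, and their derivatives), the worst case being all derivatives falling on the Ricci coefficient. Here — and this is where the \emph{naive} character of this lemma is used — instead of the integration-by-parts/cancellation argument of Lemmas \ref{lemrs1}--\ref{lemrs3}, I would simply insert the null-structure equations (\ref{Y3}), (\ref{H3}), (\ref{H4}), (\ref{Hb3}) to trade $\slashed{D}^{n_l}\slashed{D}_3$ of $Y,Z,\widehat{H}$ for either a curvature component of the form $u_{3n_l}$ with leading index $3$ — precisely one of the quantities already bounded with a $\propto c_{red}^{-1}$ spacetime coefficient — or a curvature component $u_{n_l}$ of strictly lower order, plus quadratically decaying remainders. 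An application of Cauchy's inequality then puts $\lambda\|\beta_{4n_l}\|^2+\lambda\|\alpha_{3n_l}\|^2$ on one side (absorbable by the large coefficients in (\ref{hdrs5}) and (\ref{hdrs4})) and a multiple of $\|\rho\|^2_{L^\infty}$ times an already-controlled (three-index or lower-order) spacetime energy on the other; any derivative that instead falls on $\rho$ gains a derivative through Bianchi and is lower order. Finally, collecting the boundary contributions generated along the way — all supported on $\mathcal{H}(\tau_1,\tau_2)$ and $\Sigma_{\tau_1},\Sigma_{\tau_2}$ and weighted by $\lambda$ after Cauchy — together with the residual lower-order spacetime and boundary terms in $\mathfrak{R}$ and $W$ yields exactly the right-hand side of (\ref{esf}).

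I expect the main obstacle to be the bookkeeping in the last step: one must check that for \emph{every} multi-index pattern the structure equations genuinely reduce the top-order $\rho$-terms to the already-controlled three-index curvature (or to strictly lower-order quantities), rather than to some new top-order four-index connection object, and one must verify quantitatively that the spacetime coefficient $\propto c_{red}^{-1}$ on $\|\beta_{4n_l}\|^2$ in (\ref{hdrs5}) does dominate the accumulated $(2+2l)\|2\gamma\Omega\|_{L^\infty}$ from (\ref{esf}) once the latter is moved to the left-hand side — which is exactly the reason $c_{red}=\frac{1}{20k}$ was chosen with the derivative count $k$ built in, and why Lemma \ref{rscon} records the precise bounds on $\slashed{D}_3\gamma$, $\slashed{D}_4\gamma$, $\Omega$, $trH$ and $tr\underline{H}$ near the horizon.
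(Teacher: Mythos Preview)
Your approach is correct and matches the paper's. The paper's own proof is much briefer: it simply says ``everything goes through as previously'' (i.e.\ as in Lemma~\ref{lemrs1}) and then isolates the single genuinely new feature, namely that the combination
\[
-\gamma\,\mathcal{C}_{43}\!\left[\beta_{n_l}\right]\beta_{4n_l}+\gamma\,\beta_{4n_l}\cdot E_4^{3n_l}(\beta)
\]
produces, at top order, a \emph{mixed} term $\gamma\bigl[-2-2\cdot(\text{number of $4$'s in }n_l)\bigr]\,\Omega\,\beta_{n_l3}\beta_{n_l4}$, on which one applies Cauchy's inequality to obtain the separate squares $\|\beta_{3n_l}\|^2+\|\beta_{4n_l}\|^2$ on the right of~(\ref{esf}). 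You gloss over this point slightly when you say the $\Omega$-pieces ``assemble into the first line'': the $-2\Omega\beta$ piece of $E_4(\beta)$, after $3n_l$ derivatives land on $\beta$, yields $-2\Omega\beta_{3n_l}$, which then multiplies $\beta_{4n_l}$ to give a cross term, not a square; the Cauchy step here is precisely the ``naive'' move that distinguishes this lemma from Lemmas~\ref{lemrs1}--\ref{lemrs3}, where the analogous $\Omega$-contributions were already diagonal.

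Your discussion of the $\rho$-proportional terms is considerably more explicit than the paper's, which simply says ``we leave the considerations for the term proportional to $\rho$ to the reader.'' Your plan there---inserting the structure equations (\ref{Y3}), (\ref{H3}), (\ref{H4}) to convert top-order $\rho\cdot\slashed{D}^{n_l}\slashed{D}_3(Y,Z,\widehat{H})$ into an already-controlled three-index curvature term plus lower order, then Cauchy---is a valid way to carry this out and is consistent with the cancellation mechanisms exhibited in the earlier lemmas.
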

\begin{proof}
Everything goes through as previously, except that the terms
\begin{align}
-\gamma \mathcal{C}_{43}\left[\beta_{n_l}\right]  \beta_{4n_l} + \gamma \beta_{4n_l} \cdot E_4^{3n_l}\left(\beta\right)
\end{align}
will introduce a highest order mixed term of the form 
\begin{align}
\gamma\left[-2 -2 \cdot \left(\textrm{number of $4$'s in $n_l$}\right)\right]\Omega \beta_{n_l3} \beta_{n_l4}
\end{align}
For this term we use Cauchy's inequality.  We leave the considerations for the term proportional to $\rho$ to the reader.
\end{proof}
\begin{remark}
The highest order terms on the right hand side of (\ref{esf}) will be absorbed by terms on the left once we add the estimates for all the quantities, since we control these derivatives already with a largeness factor of $\frac{1}{c_{red}}$ from the previous steps.
\end{remark}

\subsubsection*{Estimating $\alpha_{4n_l}$}
\begin{align} \label{hdrs6}
\frac{1}{2} \slashed{D}_3  \left[\gamma \|\alpha_{4n_l}\|^2 \right]  +  \slashed{D}_4  \left[\gamma \|\beta_{4n_l}\|^2\right] 
+   \Big[-\frac{1}{2}\slashed{D}_3 \gamma + \vartheta^-\left(\alpha_{4n_l}\right) \gamma \, tr \underline{H}\Big] \|\alpha_{4n_l}\|^2 \nonumber \\     
+ 2 \left[-\frac{1}{2}\slashed{D}_4 \gamma + 2\vartheta^+\left(\beta_{4n_l}\right) \gamma \,  tr H\right]\|\beta_{4n_l}^2\|  = e_\gamma \left[\alpha_{4n_l},\beta_{4n_l}\right]  
\end{align}
\begin{align}
e_\gamma \left[\alpha_{4n_l},\beta_{4n_l}\right]  = -2\gamma \mathcal{C}_{43}\left[\alpha_{n_l}\right]  \alpha_{4n_l} -\gamma \mathcal{C}_{44}\left[\beta_{n_l}\right]  \beta_{4n_l}
 + \gamma E_3^{4n_l} \left(\alpha\right) \cdot \alpha_{4n_l} \nonumber \\+2\gamma \beta_{4n_l} \cdot E_4^{4n_l}\left(\beta\right) -  \gamma \alpha_{4n_l}\Big[\alpha \left[\frac{1}{2}\slashed{D}_4 tr \underline{H} - \frac{5}{2} \slashed{D}_3 tr H \right] + F_{43}\left(\alpha\right)\Big]_{n_l} \nonumber 
\end{align}
which goes with
\begin{lemma}
\begin{align}
e_\gamma \left[\beta_{4n_l},\alpha_{3n_l}\right]  \leq \left(2+2l\right) \|2\gamma \Omega\|_{L^\infty} \int_{\mathcal{Y}} \left(\|\beta_{4n_l}\|^2 + \|\beta_{4n_l}\|^2+ \| \alpha_{4n_l}\|^2\right) \nonumber \\
 + \lambda \int_{\mathcal{H}, \Sigma_{\tau_1}, \Sigma_{\tau_2}} \left( \|{\alpha}_{4n_l} \|^2 + \|\beta_{4n_l} \|^2\right) + \textrm{error-terms as in Lemma \ref{lemrs1}} \nonumber
\end{align}
\end{lemma}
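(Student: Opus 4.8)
\textbf{Plan for the proof of the $\alpha_{4n_l}$-estimate.} The goal is to establish the analogue of the previous five redshift estimates, now for the pair $\left(\alpha_{4n_l}, \beta_{4n_l}\right)$, by multiplying the Bianchi equations (\ref{ab4nk}) for $\underline{\alpha}$... no, rather the equations for $\alpha_{44n_l}$ and $\beta_{43n_l}$ derived from (\ref{a4nk}) and (\ref{b4nk}) (in the notation (\ref{b4nk})) by the redshift weight $\gamma$ and integrating over $\mathcal{Y} = \mathcal{M}\left(\tau_1,\tau_2\right) \cap \{r \leq r_Y\}$. More precisely I would: (i) take the Bianchi equation $\left(\alpha_{4n_l}\right)_3 = -2\slashed{\mathcal{D}}_2^\star \left(\beta_{4n_l}\right) + E_3^{4n_l}\left(\alpha\right)$ and $\left(\beta_{4n_l}\right)_4 = \slashed{div}\left(\alpha_{4n_l}\right) + E_4^{4n_l}\left(\beta\right)$; (ii) multiply the first by $2\gamma\alpha_{4n_l}$ and the second by $4\gamma\beta_{4n_l}$; (iii) integrate the angular $\slashed{\mathcal{D}}_2^\star$ term by parts against the $\slashed{div}$ term (using that $\slashed{div}$ and $\slashed{\mathcal{D}}_2^\star$ are adjoint on $S^2_{t^\star,u}$) so that the principal cross-terms cancel; (iv) rewrite $\slashed{D}_3\left(\gamma \|\alpha_{4n_l}\|^2\right)$ and $\slashed{D}_4\left(\gamma \|\beta_{4n_l}\|^2\right)$ as coordinate divergences $D_a\left((e_3)^a \gamma \|\alpha_{4n_l}\|^2\right)$ etc.\ modulo Ricci coefficients, and collect the arising bulk coefficients; (v) invoke Lemma \ref{rscon} to check that both $-\tfrac{1}{2}\slashed{D}_3\gamma + \vartheta^-\left(\alpha_{4n_l}\right)\gamma\,tr\underline{H}$ and $-\tfrac{1}{2}\slashed{D}_4\gamma + 2\vartheta^+\left(\beta_{4n_l}\right)\gamma\,tr H$ have the right sign near the horizon once the redshift contributions from the error terms are moved to the left.

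\textbf{Handling the error term.} As in the previous steps, the heart of the matter is the Lemma bounding $e_\gamma\left[\alpha_{4n_l},\beta_{4n_l}\right]$. The term decomposes into: commutator terms $-2\gamma\mathcal{C}_{43}\left[\alpha_{n_l}\right]\alpha_{4n_l}$ and $-\gamma\mathcal{C}_{44}\left[\beta_{n_l}\right]\beta_{4n_l}$, which by Lemma \ref{commutelemma} produce, at highest order, only multiples of $\Omega$ times the same curvature component (giving the redshift-type terms $\propto \|2\gamma\Omega\|_{L^\infty}$ with coefficients counting the $3$'s and $4$'s in $n_l$) plus genuinely lower-order terms; the inhomogeneity terms $\gamma E_3^{4n_l}\left(\alpha\right)\alpha_{4n_l}$ and $2\gamma\beta_{4n_l}E_4^{4n_l}\left(\beta\right)$, which contain the non-decaying-$\Omega$ contributions from the $4\Omega\underline{\alpha}$-type and $-2\Omega\beta$-type structure in $E\left(\alpha\right), E\left(\beta\right)$ — but here we no longer need to be delicate, because the previous six estimates have already supplied us with spacetime control of all $3n_l$-derivatives and two $4n_l$-derivatives with a largeness factor $\tfrac{1}{c_{red}}$, so these highest-order mixed terms are simply absorbed after summing; and the $F_{43}\left(\alpha\right)$-term $\gamma\alpha_{4n_l}\left[\alpha\left(\tfrac{1}{2}\slashed{D}_4 tr\underline{H} - \tfrac{5}{2}\slashed{D}_3 tr H\right) + F_{43}\left(\alpha\right)\right]_{n_l}$. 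The only genuinely dangerous sub-term, as in all previous lemmas, is the one proportional to the non-decaying curvature component $\rho_0$ with all derivatives landing on a Ricci coefficient; following the pattern of Lemmas \ref{lemrs1}--\ref{lemrs3} I would use the structure equations (\ref{trH4}), (\ref{trHb4}), (\ref{H4}) together with integration by parts in the null directions and the Bianchi equations $\alpha_{44n_l} = -2\slashed{\mathcal{D}}_2^\star\beta_{4n_l} + \ldots$, $\beta_{44n_l} = \slashed{div}\alpha_{4n_l} + \ldots$ to exhibit a cancellation of the two worst terms, leaving only lower-order terms, boundary terms on $\mathcal{H}, \Sigma_{\tau_1}, \Sigma_{\tau_2}$, and terms already collected in (\ref{hzee}).

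\textbf{Main obstacle and conclusion.} I expect the main difficulty to be exactly the bookkeeping of the $\rho_0$-proportional highest-order term: here, unlike the $3n_l$-estimates, the weight $\gamma$ and the $4$-derivatives do not come with improved $r$-decay near the horizon, and one must be careful that the integration-by-parts manipulations — which shuffle a $\slashed{D}_4$ onto a curvature component, invoke a Bianchi equation, and return an angular derivative onto the Ricci coefficient — do not generate a term of the same order with the wrong sign that cannot be absorbed by the $\tfrac{1}{c_{red}}$-large spacetime term from the earlier estimates. Once the Lemma is in hand, the final step is purely mechanical: add the six identities (\ref{hdrs1})--(\ref{hdrs6}) plus this seventh one with appropriate positive constants (chosen so that every curvature component appears with a strictly positive bulk and boundary coefficient after the redshift terms and the $\tfrac{1}{c_{red}}$-large terms have been moved left), absorb the lower-order curvature terms on the right into the spacetime energy $\overline{\mathbb{I}}^{l+1,deg}_{r\leq r_Y}\left[W\right]$ by a standard iteration (reapplying the whole estimate a bounded number of times), and collect the Ricci-coefficient errors into $Err^{n+1}_{hoz}\left[\mathfrak{R}\right]$. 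This yields (\ref{mrst}) and hence completes the proof of Proposition \ref{redshiftcomplete}.
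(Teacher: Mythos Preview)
Your plan is correct and matches the paper's approach; in fact you supply more detail than the paper, which for this Lemma gives essentially no proof beyond the remark that the highest-order $\Omega$-terms on the right will be absorbed by the $\tfrac{1}{c_{red}}$-large spacetime control already obtained from the previous estimates. Two small corrections: there are only six identities (\ref{hdrs1})--(\ref{hdrs6}), and the Lemma in question is the error bound for the sixth, not a seventh; and there is no Bianchi equation ``$\alpha_{44n_l}=-2\slashed{\mathcal{D}}_2^\star\beta_{4n_l}+\ldots$'' since only $\alpha_3$, not $\alpha_4$, has a Bianchi equation---what you need for the $\rho_0$-proportional term is rather $\left(\alpha_{4n_l}\right)_3=-2\slashed{\mathcal{D}}_2^\star\beta_{4n_l}+\ldots$ together with the structure equation (\ref{H4}), which in the gauge $Y=0$ expresses $\slashed{D}_4\widehat{H}$ in terms of $\alpha$ and decaying Ricci coefficients, so that the worst term $\rho_0\cdot\slashed{D}_4\slashed{D}^{n_l}\widehat{H}\cdot\alpha_{4n_l}$ is immediately one order lower.
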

Once more, we remark that the main terms will be absorbed by the left hand side, once the derivatives for all quantities are added.
Using the Bianchi equations we now prove that the control over the derivatives we have considered is sufficient to control all derivatives:
\begin{lemma} \label{allfrom3}
We have the following estimates on the spheres $S^2_{t^\star,u}$ in the region $r<5M$:
\begin{align}
\sum_{l=0}^{K+1} \sum_{n_l} \int_{S^2} \|\mathcal{D}^{n_l} \tilde{W}\|^2 \leq B \sum_{l=0}^K \sum_{n_l} \int_{S^2} \Big[\|\underline{\alpha}_{3n_l}\|^2 + \|\underline{\beta}_{3n_l}\|^2 \nonumber \\ + \|\left(\rho,\sigma\right)_{3n_l}\|^2 + \|\beta_{3n_l}\|^2  + \|{\alpha}_{3n_l}\|^2 +  \|\beta_{4n_l}\|^2  + \|{\alpha}_{4n_l}\|^2 \Big] \nonumber \\ + B \Big[\|\underline{\alpha}\|^2 + \|\underline{\beta}\|^2 + \|\left(\hat{\rho},\sigma\right)|^2 + \|\beta\|^2  + \|{\alpha}\|^2 \Big] \nonumber \\ +  B \sum_{l=0}^{K} \sum_{n_l} \int_{S^2} \|\mathcal{D}^{n_l}\left( \mathfrak{R}-\mathfrak{R}_{SS}\right)\|^2
\end{align}
\end{lemma}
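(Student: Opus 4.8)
\textbf{Proof proposal for Lemma \ref{allfrom3}.}

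The plan is an elliptic-type recursion in the null directions. The key observation is that all the null-Bianchi equations involving a leading $\slashed{D}_3$ or $\slashed{D}_4$ derivative of a fixed curvature component express that derivative in terms of \emph{angular} derivatives of \emph{other} curvature components (plus error-terms built from the Ricci coefficients and products of curvature components, which decay or are lower order). Concretely, the equations (\ref{Bianchi1})--(\ref{Bianchi9}), (\ref{renormrho}) together with their commuted versions from section \ref{commute} and the ``pushing through'' formulae of the previous subsections, show: given a tuple $\mathcal{D}^{n_l}$ that contains at least one factor $\slashed{D}_3$ or $\slashed{D}_4$, one can rewrite $\mathcal{D}^{n_l}$ applied to any component $u$ either in the form $u_{3n'_l}$ or $u_{4n'_l}$ (losing only lower-order terms via the commutation lemmas), which is in turn controlled through the Bianchi system by $\slashed{\nabla}$-derivatives of other components at the same total order, together with the seven ``good'' families $\underline{\alpha}_{3n_l}, \underline{\beta}_{3n_l}, (\rho,\sigma)_{3n_l}, \beta_{3n_l}, \alpha_{3n_l}, \beta_{4n_l}, \alpha_{4n_l}$ and lower-order data. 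First I would make precise the statement that any $\mathcal{D}^{n_l}$-derivative of a component can, up to the listed error-terms, be traded for a combination of the seven families with the same number of derivatives plus purely angular derivatives of components.

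Second, I would run the induction on the number of derivatives $l$ (and within a fixed $l$, downward on the number of $\slashed{\nabla}$ factors). The base case $l=0$ is the statement $\|\tilde W\|^2$ appears on the right (the term $\|\underline{\alpha}\|^2 + \dots + \|\alpha\|^2$ in the displayed inequality) — actually here it is an identity, since the seven families at order zero plus $\tilde W$ itself span everything. For the inductive step: write a tuple $n_l$ of length $l$; if it contains a $\slashed{D}_3$ or $\slashed{D}_4$, reorder it (paying lower-order commutator terms, absorbed in the last line of the lemma and in the $\sum \|\mathcal{D}^{n_{l'}} \tilde W\|^2$ terms at order $< l$) so that a $3$ or $4$ sits innermost, recognize it as one of the seven families $u_{3n'_l}$ / $u_{4n'_l}$ applied to $u$, and for those components \emph{not} directly among the seven (namely $\rho_{4n_l}$, $\sigma_{4n_l}$, $\underline\beta_{4n_l}$, $\underline\alpha_{3n_l}$, $\underline\alpha_{4n_l}$, $\beta_{3n_l}$ is fine, etc.) invoke the corresponding Bianchi equation to express it via $\slashed{\nabla}$ of a component of the same order (which by the downward induction on the count of angular factors is already controlled) plus the error-terms. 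If $n_l$ consists entirely of $\slashed{\nabla}$'s, one uses instead the \emph{elliptic} pairs $\slashed{\mathcal{D}}_2^\star$, $\slashed{div}$, $\slashed{\mathcal{D}}_1^\star$ appearing on the right sides of the Bianchi equations together with the standard elliptic estimate on $S^2_{t^\star,u}$ (Lemma-type inequalities of the form $\int_{S^2}\|\slashed{\nabla}\xi\|^2 \lesssim \int_{S^2}\|\slashed{\mathcal{D}}\xi\|^2 + \dots$, available since $r<5M$ keeps the geometry of the spheres uniformly controlled), so that $\slashed{\nabla}^{l}$ of each component is bounded by $\slashed{\nabla}^{l-1}$ applied to a \emph{first-order Bianchi quantity}, i.e.\ to one of the quantities we already bound, at the cost of $\slashed{\nabla}^{l-1}$ of lower-order error-terms.

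Third, I would collect the error-terms. Every error-term produced is one of: a commutator term of strictly lower differential order (goes into $\sum_{l'<l}\sum_{n_{l'}}\|\mathcal{D}^{n_{l'}}\tilde W\|^2$, hence handled by induction, or into the $\sum\|\mathcal{D}^{n_l}(\mathfrak R-\mathfrak R_{SS})\|^2$ term); a product ``Ricci coefficient $\cdot$ curvature'' with derivatives distributed — estimated by putting the lower-order factor in $L^\infty$ (via Sobolev on the $r<5M$ region) and the higher one in $L^2$, then absorbed into the $\|\mathcal{D}^{n_l}(\mathfrak R-\mathfrak R_{SS})\|^2$ term (using the curvature bounds from the null-structure equations to trade curvature for one more Ricci derivative) and the lower-order curvature terms; or a term proportional to the non-decaying $\rho$ — but since we work with $\tilde W$ which contains $\hat\rho=\rho+\frac{2M}{r^3}$ rather than $\rho$, and since in the region $r<5M$ the pure $\rho$-contributions only enter multiplied by a Ricci coefficient (which decays or is $\epsilon$-small), these too land in the stated right-hand side. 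The main obstacle I anticipate is precisely the bookkeeping of the $\rho$-proportional terms under commutation: one must verify that no term of the form $\rho\cdot\mathcal{D}^{n_l}(\text{top-order curvature})$ survives — i.e.\ that in this \emph{local} (bounded-$r$) estimate, unlike the global energy estimates of section \ref{errorterms}, one does not need the subtle integration-by-parts/Bianchi-substitution mechanism, because here we do not integrate and we do not need decay, only boundedness by the right-hand side, so $\rho$ may simply be estimated pointwise by a constant and every such term is genuinely a product of a bounded factor with a term already on the right. Once that is checked, the recursion closes and summing over $l=0,\dots,K+1$ and over all tuples $n_l$ gives the claimed inequality.
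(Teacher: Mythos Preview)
Your proposal has the right ingredients --- the Bianchi equations, elliptic estimates on the spheres, and the commutation lemmas --- and the handling of the error terms (including the $\rho$-proportional ones, which here may simply be bounded pointwise since no decay is needed) is correct. However, the induction scheme you describe is circular. You induct \emph{downward} on the number of $\slashed{\nabla}$-factors, so the base case is the all-angular tuple $\slashed{\nabla}^{l+1}u$. You then reduce this via elliptic+Bianchi to $\slashed{\nabla}^{l}$ of a first-order Bianchi quantity, i.e.\ a tuple with $l$ angular and one $3/4$ factor --- but this has \emph{fewer} angular factors, hence is the \emph{next} case in the downward induction, not an already-handled one. Conversely, for the mixed tuples you propose using Bianchi to \emph{increase} the angular count, which sends you back toward the all-angular case. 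The two halves of the argument chase each other.

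The paper's proof resolves this in two clean steps. First (``cf.\ Lemma \ref{allfrom4}''), from the seven families --- which involve only $3/4$-tuples $n_l$ --- one recovers via Bianchi the missing $4$-derivatives ($\underline{\alpha}_{4n_l}$, $\underline{\beta}_{4n_l}$, $(\rho,\sigma)_{4n_l}$) \emph{and} one angular derivative $\slashed{\nabla} u_{n_l}$ of every component. Second, and this is the ingredient missing from your write-up, one invokes the \emph{second-order} relation $u_{34}=\slashed{\nabla}^2 u + \text{l.o.t.}$ (the wave character of the Bianchi system) to convert any pair of angular derivatives into a $34$-pair; iterating recovers all higher angular derivatives from the pure $3/4$-derivatives already in hand. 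Your single-step elliptic+Bianchi reduction is morally half of this wave relation, and iterating it \emph{does} close --- but you would need to spell that iteration out explicitly (each step replacing one $\slashed{\nabla}$ by one $3/4$, commuting outward, and repeating) rather than appealing to an ``already bounded'' case that, as written, is not yet bounded.
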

\begin{proof}
Use the commuted Bianchi equations to estimate $\|\slashed{\nabla} \underline{\alpha}_{n_l}\|^2$ from $\|\underline{\beta}_{n_l3}\|^2$, $\|\slashed{\nabla}\underline{\beta}_{n_l}\|^2$ and $\|\underline{\alpha}_{4n_l}\|^2$ from $\|\left(\rho,\sigma\right)_{3n_l}\|^2$, etc.~cf. Lemma \ref{allfrom4}. Finally, use the fact that $u_{34} = \slashed{\nabla}^2 u + \textrm{l.o.t.}$ holds for the curvature components to estimate the remaining angular derivatives from the $34$-derivatives. Note that $r$-weights are irrelevant in this region.
\end{proof}

We can now complete the Proof of Proposition \ref{redshiftcomplete}. For fixed $l$ add up and integrate the equations (\ref{hdrs1}), (\ref{hdrs2}), (\ref{hdrs3}), (\ref{hdrs4}), (\ref{hdrs5}), (\ref{hdrs6}) using the Lemmata for the error-terms. Do this for all $l$ and sum over all permutations $n_l$ to establishe Proposition \ref{redshiftcomplete}, except for additional lower order terms ($l$-derivatives of curvature) appearing on the right hand side (cf.~the last two lines of Lemma \ref{lemrs1}).  Iterate the estimate for these terms, until one finally needs an estimate for zero derivatives, at which point we insert the estimate of Proposition \ref{redshift1}. 
\section{Elliptic Estimates in the interior region} \label{elliptic}
In the interior region, we can estimate all derivatives provided we control $T$-derivatives:
\begin{proposition} \label{ellipticint}
Let $\left(\mathcal{R},g\right)$ be ultimately Schwarzschildean to order $k+1$ with $k >7$.
For $n\geq 1$ we have
\begin{align} 
\int_{\Sigma \cap \{r \geq r_Y\}} |\mathcal{D}^n \tilde{W}|^2 r^2 dr d\omega \leq B \sum_{i=1}^n \int_{\Sigma} | \widehat{\mathcal{L}}_T^i W|^2 r^2 dr d\omega + B \int_{\Sigma} |\tilde{W}|^2 r^2 dr d\omega \nonumber \\ + B \cdot \overline{\mathbb{E}}^{n-1}\left[\mathfrak{R}\right] \left(\Sigma_{\tau}\right)  \nonumber \, .
\end{align}
Moreover,
\begin{align}
\overline{\mathbb{I}}^{n, deg}_{r_Y - \frac{r_Y-2M}{2} \leq r \leq R+M} \left[W\right] \left(\tilde{\mathcal{M}}\left(\tau_1,\tau_2\right)\right) \leq B \cdot \mathbb{D}^{n-1} \left[\mathfrak{R}\right] \left(\tau_1, \tau_2\right)  + \nonumber \\ B  \int_{\tilde{\mathcal{M}}\left(\tau_1,\tau_2\right)}\frac{1}{r^2} \Big\{ \sum_{i=1}^n \left(1-\frac{3M}{r} \delta_i^n \right)^2 | \widehat{\mathcal{L}}_T^i W|^2 + |\tilde{W}|^2 \Big\} r^2 dt^\star dr d\omega  \nonumber
\end{align}
and
\begin{align}
\overline{\mathbb{I}}^{n, nondeg}_{r_Y - \frac{r_Y-2M}{2} \leq r \leq R+M} \left[W\right] \left(\tilde{\mathcal{M}}\left(\tau_1,\tau_2\right)\right) \nonumber \\ \leq B  \int_{\tilde{\mathcal{M}}\left(\tau_1,\tau_2\right)}\frac{1}{r^2} \Big\{ \sum_{i=1}^n | \widehat{\mathcal{L}}_T^i W|^2 + |\tilde{W}|^2 \Big\} r^2 dt^\star dr d\omega  + B \cdot \mathbb{D}^{n-1} \left[\mathfrak{R}\right] \left(\tau_1, \tau_2\right) \nonumber
\end{align}
Here the constants $B$ depend on the mass $M$ and lower order energies of the Ricci-coefficients and Weyl-curvature, which are bounded (or decaying) by the ultimately Schwarzschildean assumption.
\end{proposition}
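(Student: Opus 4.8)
\textbf{Proof proposal for Proposition \ref{ellipticint}.}

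The plan is to reduce all ``slashed'' derivatives to $T$-derivatives (plus genuinely lower order Ricci-coefficient errors) using the null-Bianchi equations as elliptic-type relations on the spheres $S^2_{t^\star,u}$, exactly in the spirit of Lemma \ref{allfrom3} but now in the region $r_Y \leq r \leq R+M$, where $r$-weights play no role and everything is uniformly comparable. First I would observe that any $3$- or $4$-derivative of a curvature component can be written, via the first-order Bianchi equations (\ref{Bianchi1})--(\ref{Bianchi9}), as an angular derivative of another curvature component plus $E_3, E_4$ error-terms, which by the ultimately Schwarzschildean assumption are products of decaying Ricci coefficients with curvature (hence controlled by $\overline{\mathbb{E}}^{n-1}\left[\mathfrak{R}\right]$ and lower-order curvature). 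The remaining pure angular derivatives $\slashed{\nabla}^2 u$ are recovered from the $34$-mixed derivatives using the elliptic identity $u_{34} = \slashed{\nabla}^2 u + \textrm{l.o.t.}$ valid for curvature components (this is the Gauss-type relation / second-order structure equation already invoked in Lemma \ref{allfrom3}), together with standard $L^2$ elliptic estimates for the operators $\slashed{\mathcal{D}}_1$, $\slashed{\mathcal{D}}_2$, $\slashed{\mathcal{D}}_2^\star$ on the $2$-spheres (Hodge theory on $S^2$, $\|\slashed{\nabla} \xi\|_{L^2}^2 \leq B \|\slashed{\mathcal{D}}\xi\|_{L^2}^2 + B\|\xi\|_{L^2}^2$). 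Iterating this in $n$, every $\mathcal{D}^n \tilde W$ is estimated by $\sum_{i=1}^n |\widehat{\mathcal{L}}^i_T W|$, by $|\tilde W|$, and by $\overline{\mathbb{E}}^{n-1}\left[\mathfrak{R}\right]$, which is the first asserted inequality; commuting $T$ past the slashed derivatives introduces only lower-order terms and extra decay, so the bookkeeping closes for $k>7$.

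For the spacetime estimates I would integrate the pointwise inequality just obtained over $\tilde{\mathcal{M}}\left(\tau_1,\tau_2\right)\cap\{r_Y-\tfrac{r_Y-2M}{2}\leq r\leq R+M\}$ against the measure $r^2 dt^\star dr d\omega$; since this region is compactly contained in $r\geq r_Y-\tfrac{r_Y-2M}{2}$ and bounded away from the horizon, $w_{deg}(r)=\tfrac{(r-3M)^2}{r^2}$ is comparable to a constant \emph{except} near the photon sphere $r=3M$. The one place where the degenerate weight genuinely degenerates is at $r=3M$, and there one must track that the highest-order term $\widehat{\mathcal{L}}^n_T W$ enters the $w_{deg}$-weighted spacetime energy only with the factor $(1-\tfrac{3M}{r})^2$ — i.e. the trapping degeneration. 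This is arranged by noting that when we solve for the top-order slashed derivatives in terms of $T$-derivatives, the coefficient multiplying the top $T$-derivative is precisely a non-vanishing multiple of $(r-3M)/r$ away from $r=3M$ (because $T=pe_3+qe_4$ is spacelike/null exactly where $r-3M$ changes character relative to the trapped null geodesics), so the worst term is $\tfrac{1}{r^2}(1-\tfrac{3M}{r}\delta^n_i)^2|\widehat{\mathcal{L}}^i_TW|^2$, matching the stated right-hand side; the $\overline{\mathbb{I}}^{n-1}$ and $\mathbb{D}^{n-1}$ Ricci terms absorb the error-terms after applying Cauchy's inequality. For the non-degenerate version one simply drops the $(1-\tfrac{3M}{r})$ factor, which costs nothing since, in the range $r_Y-\tfrac{r_Y-2M}{2}\leq r\leq R+M$ with the non-degenerate weight $w_{nondeg}=1$, there is no gain to be exploited and all $T$-derivatives appear ungainned; this is legitimate because we are only claiming control of this \emph{interior-away-from-horizon} region and the trapped behaviour has been separated out elsewhere (the degenerate estimate handles $r=3M$).

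The main obstacle I expect is the careful tracking of the trapping factor $(r-3M)$ through the inversion of the Bianchi system: one must verify that the elliptic reduction near $r=3M$ does not accidentally produce an un-degenerate top-order term, which would contradict the known impossibility of a non-degenerate integrated decay estimate in the presence of trapped geodesics. Concretely, this means being precise about which contraction of the Bianchi equations is used to solve for the top slashed derivative, and checking that the vectorfield $T$ ``sees'' the trapping through the vanishing of an appropriate combination of its coefficients $p,q$ at $r=3M$ — this is the analogue, for the Bel-Robinson/Bianchi framework, of the fact that $\partial_t$ fails to control $\partial_r$ of a scalar wave at the photon sphere. Everything else — the Hodge estimates on $S^2$, the commutator lower-order terms, the absorption of Ricci-coefficient errors by $\overline{\mathbb{E}}^{n-1}$, $\overline{\mathbb{I}}^{n-1}$, $\mathbb{D}^{n-1}$ — is routine given the structure already set up in Sections \ref{nseq} and the redshift section, in particular Lemma \ref{allfrom3} and the commuted Bianchi equations of Section \ref{commute}.
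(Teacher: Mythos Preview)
The paper takes a different and more direct route: it writes the Bianchi equations as a $3+1$ div--curl system for the electric and magnetic parts of $W$ relative to the timelike slice normal $n_{\Sigma}$ and invokes the standard $L^2$ elliptic estimate for symmetric traceless tensors from \cite{ChristKlei}. Your $2+1+1$ approach via the null-Bianchi equations and Hodge theory on $S^2$ is plausible in spirit (and is how Lemmas \ref{allfrom3} and \ref{allfrom4} operate elsewhere), but you have not checked the essential point that the null-Bianchi relations together with the \emph{single} timelike direction $T$ form an elliptic system for the spatial derivatives on $\Sigma$; note for instance that $\alpha_4$ and $\underline{\alpha}_3$ are not supplied by any Bianchi equation and must be recovered from $T\alpha$, $T\underline{\alpha}$ combined with the available $\alpha_3$, $\underline{\alpha}_4$ relations. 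The div--curl formulation makes this ellipticity manifest.

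Your explanation of the factor $(1-\tfrac{3M}{r})^2$ on the top-order $T$-derivative is wrong and rests on a misconception. The vectorfield $T$ is timelike throughout $r>2M$: in Schwarzschild $p=k_\chi^-=1-\tfrac{2M}{r}=\tfrac13>0$ and $q>0$ at $r=3M$, so there is no vanishing of its coefficients at the photon sphere and no ``$T$ sees the trapping'' mechanism in the algebra. The degenerate weight arises for a much simpler reason. Once the unweighted slice estimate is established, one inserts the weight $\chi=(r-3M)/r$ by applying the elliptic estimate to $\chi\cdot(\text{curvature})$; the commutator $[\nabla_\Sigma,\chi]$ is zeroth order and is therefore controlled by the $(n-1)$st-order energy \emph{without} the weight. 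Hence only the top-order source term, which is $\widehat{\mathcal{L}}_T^n W$, retains the factor $\chi^2=w_{deg}=(1-\tfrac{3M}{r})^2$, while the lower $i<n$ terms inherit no such factor. Had the degeneracy been built into the inversion of the elliptic system itself, as you suggest, it would already obstruct the unweighted first estimate in the proposition, which it manifestly does not.
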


\begin{proof}
(Sketch.) Write the Bianchi equation as a $div$-$curl$-system for the electric and magnetic part of the Weyl-tensor, the latter defined with respect to the timelike normal to $\Sigma$, $n_{\Sigma}$. Apply standard estimates from \cite{ChristKlei}. The (quadratic) inhomogeneity is easily estimated by lower order energies of the Ricci-coefficients and Weyl-curvature (using Sobolev embedding).
\end{proof}
\section{The region near infinity} \label{decinf1}
In section \ref{redsection} we revealed a remarkable hierarchy in the Bianchi equations, which allowed us to obtain estimates for all derivatives of curvature. This hierarchy is also present near infinity and somewhat even more remarkable. This is because close to the horizon all curvature components decay at the same rate in $\tau$ (there is merely a difference in the strength of the redshift factor), while at infinity the curvature components and their derivatives each have a characteristic decay in $r$, which the estimates need to reveal.  It turns out that using appropriate $r^p$-weighted multipliers (for some positive $p$) at the level of the null-Bianchi equations is sufficient to establish the correct asymptotics. In principle, just as near the horizon, it suffices to commute the Bianchi equations with $3-$ and $4-$derivatives (and obtain angular derivatives from the wave equations satisfied by the components). However, for the \emph{optimal} $r$-weights (as exhibited in the energies of section \ref{norms}), this commutation seems unavoidable.  Since every set of Bianchi equations will admit a different $r$-weighted estimates, depending on how many derivatives have been taken, we have introduced the concept of boundary admissible tuples and matrices in section \ref{decmatrix}, which should be understood as encoding the $r^p$ weighted decay for the individual components at each order.

\subsection{The main result}
\begin{theorem} \label{maintheoinf}
For any boundary admissible matrix $P$ we have for $0\leq n \leq k$
\begin{align}
\overline{\mathbb{E}}^n_P \left[W\right] \left(N_{out} \left(\tau_2,R\right)\right) + \overline{\mathbb{I}}^n_{\tilde{P}} \left[W\right] \left(\mathcal{D}\left(\tau_1,\tau_2\right)\right)  \leq
\overline{\mathbb{E}}^n_{P} \left[W\right] \left(N_{out} \left(\tau_1,R\right)\right) \nonumber \\ + B \cdot \overline{\mathbb{I}}^{n,deg}_{R-M<r<R+M}\left[W\right]\left(\tilde{\mathcal{M}}\left({\tau_1,\tau_2}\right)\right) \nonumber \\ + B \left[
\sup_{\tau \in \left(\tau_1,\tau_2\right)}\overline{\mathbb{E}}^{n} \left[\mathfrak{R}\right] \left(N_{out}\left(\tau,R\right)\right)  + \overline{\mathbb{I}}^{n} \left[\mathfrak{R}\right] \left(\mathcal{D} \left(\tau_1,\tau_2\right)\right) \right] \, .
\end{align}
The constant $B$ in the second line depends on $M$ and on lower order curvature energies which are bounded by the ultimately Schwarzschildean assumption.
\end{theorem}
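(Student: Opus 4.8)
The plan is to mirror the structure of the redshift estimate of Section~\ref{redsection}, but now with the vectorfield $X = r^p e_4$ (equivalently, to multiply each commuted null-Bianchi equation for a component $u_{n_l}$ by $r^p u_{n_l}$ with $p$ determined by the position of that pair in the hierarchy) and to integrate over the characteristic region $\mathcal{D}\left(\tau_1,\tau_2\right)$ bounded by $N_{out}\left(\tau_1,R\right)$, $N_{out}\left(\tau_2,R\right)$, the timelike hypersurface $\{r=R\}$ and future null infinity $\mathcal{I}^+$. The model computation is exactly the one sketched in Section~\ref{infintro}: for the pair $\left(\alpha,\beta\right)$ one multiplies the $\slashed{D}_3$-equation for $\alpha_{n_l}$ by $r^{p_1}\alpha_{n_l}$ and the $\slashed{D}_4$-equation for $\beta_{n_l}$ by (a constant multiple of) $r^{p_2}\beta_{n_l}$, uses that $\slashed{\mathcal{D}}_2^\star$ is the adjoint of $\slashed{div}$ on $S^2_{\tau,u}$, and integrates by parts; the condition that \emph{both} the boundary terms on $N_{out}\left(\tau_2,R\right)$ and the bulk terms in $\mathcal{D}$ have a good sign is precisely the numerology encoded in the definition of boundary admissible tuple and its associated bulk admissible tuple. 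Doing this for every pair of consecutive components in the chain $\left(\alpha,\beta\right)\to\left(\beta,\rho,\sigma\right)\to\left(\rho,\sigma,\underline{\beta}\right)\to\left(\underline{\beta},\underline{\alpha}\right)$, and separately at each derivative level $n_l$ with $0\le l\le n-1$ (using the commuted equations of Sections 3.4--3.6, with the renormalized $\hat\rho$ at the bottom level as in (\ref{renormrho})), then summing over all permutations and all $l$, produces the left-hand side of the asserted estimate together with the first term on the right (the data on $N_{out}\left(\tau_1,R\right)$).

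The remaining two source terms are the boundary contribution on $\{r=R\}$ and the genuine error terms. For the $\{r=R\}$ boundary: since $R$ is a fixed finite radius and the weights $r^p$ are comparable to constants there, these terms are controlled by a (degenerate, hence certainly non-degenerate) spacetime integral of curvature in the strip $R-M<r<R+M$, which is exactly the term $B\cdot\overline{\mathbb{I}}^{n,deg}_{R-M<r<R+M}\left[W\right]\left(\tilde{\mathcal{M}}\left(\tau_1,\tau_2\right)\right)$ in the statement; one absorbs it by a standard trick of integrating the identity also against a cutoff supported near $r=R$ and invoking a mean-value argument to choose a good slice, exactly as in the treatment of the $r=r_Y$ boundary in Section~\ref{redsection}. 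For the error terms: each inhomogeneity $E^{n_l}_{3/4}\left(u\right)$ and the commutators $F_{ab}$, $\mathcal{C}_{ab}$ consist of products of (derivatives of) Ricci coefficients with (derivatives of) curvature. I would estimate these always by putting one factor in $L^\infty$ and the rest in $L^2$ (never $L^4$), which after Cauchy--Schwarz produces, on the curvature side, a small multiple of $\overline{\mathbb{I}}^n_{\tilde P}\left[W\right]\left(\mathcal{D}\right)$ (absorbed into the left) plus lower-order curvature energies, and on the Ricci side exactly the terms $\sup_\tau \overline{\mathbb{E}}^n\left[\mathfrak{R}\right]\left(N_{out}\left(\tau,R\right)\right) + \overline{\mathbb{I}}^n\left[\mathfrak{R}\right]\left(\mathcal{D}\right)$; this is where the deformation-tensor decay in the definition of \emph{ultimately Killing} and the weighted Ricci norms (\ref{IR}) and their analogues for ${}^{(T)}\pi$ enter, and where the $r$-weights in those definitions were tuned to match the $\tilde P$-weights appearing here.

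The main obstacle, as the paper itself flags in Section~\ref{infintro}, is controlling the error terms with the \emph{worst decay in $r$}: naively, a term such as $\widehat{\underline{H}}\cdot\slashed{\nabla}\underline\alpha$ paired against $r^{p}\underline\alpha_{n_l}$ (or, after commutation, the highest-order $\slashed{D}_3$-derivative of ${}^{(T)}\mathbf{i}$ multiplying two copies of $\underline\alpha$) threatens to exceed the available $r$-budget, since $\widehat{\underline{H}}$ only decays like $r^{-1}$ and $\underline\alpha$ carries the largest $p$-weight $p_7$. The resolution is twofold and must be carried out term by term: first, signature considerations (as in Christodoulou--Klainerman) forbid the genuinely divergent combinations from ever appearing in $E^{n_l}$ at all; second, and more delicately, one substitutes the null-structure equations of Section~\ref{nseq} \emph{directly into the error term} before estimating, which exposes an extra cancellation among the worst-decaying pieces that goes beyond the mere null form --- this is the cancellation the introduction alludes to, and verifying it requires tracking which curvature component each derivative of $D\pi$ brings in from the structure equations and integrating by parts in the appropriate null direction (exactly the mechanism used in Section~\ref{errorterms} to trade $\widehat{\mathcal{L}}^{k-1}_T D\pi$ for $\widehat{\mathcal{L}}^{k-2}_T D\pi$). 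Once the admissible range of $p$'s is shown to be nonempty and consistent with the constraints from both the good-sign requirement and the error estimates --- which is the content of the boundary admissible / bulk admissible definitions --- the estimate closes by absorption, and summation over the hierarchy yields the theorem for all $0\le n\le k$.
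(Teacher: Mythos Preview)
Your overall architecture---multiply the pairs of null-Bianchi equations by $r^{p}$ and integrate over $\mathcal{D}\left(\tau_1,\tau_2\right)$, exploit the adjointness of $\slashed{\mathcal{D}}_2^\star$ and $\slashed{div}$, then sum through the hierarchy---is exactly what the paper does in Sections~\ref{infuncom}--\ref{caworst}, and the role of the boundary-admissible/bulk-admissible tuples is correctly identified. Two substantive points, however, are misidentified in your discussion of the ``main obstacle'', and one structural element at higher order is missing.

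\textbf{The hierarchy direction is backwards.} You write that $\underline{\alpha}$ ``carries the largest $p$-weight $p_7$''. In fact $p_1 > p_2 > \dots > p_7$: $\alpha$ carries the \emph{largest} weight (up to $r^{10-\delta}$) and $\underline{\alpha}$ the smallest. The dangerous error terms near infinity are therefore those multiplying $\alpha_{4n_l}$ with weight $r^{p_1}$, specifically $\hat{E}_3^{4n_l}\left(\alpha\right)$; the example you chose is not the relevant one.

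\textbf{The cancellation mechanism is different from Section~\ref{errorterms}.} You correctly say one substitutes the structure equations into the error, but then equate this with ``exactly the mechanism used in Section~\ref{errorterms} to trade $\widehat{\mathcal{L}}^{k-1}_T D\pi$ for $\widehat{\mathcal{L}}^{k-2}_T D\pi$''. These are distinct. In Section~\ref{errorterms} the difficulty is the non-decaying $\rho_0$ in the \emph{interior}, and the cure is integration by parts in $T$ followed by a Bianchi equation. Here near infinity the difficulty is insufficient $r$-decay, and the cure is a \emph{pointwise} cancellation: in the gauge $Y=0$ the structure equation (\ref{H4}) reads $\slashed{D}_4\widehat{H} + tr H\,\widehat{H} = -2\Omega\widehat{H} - \alpha$, and inserting it into $E_3^4\left(\alpha\right)$ kills the slowest term $-3\rho\,\slashed{D}_4\widehat{H}$ and leaves something decaying like $r^{-13/2}$ rather than the naive $r^{-6}$ (see the Lemma and Remark in Section~\ref{caworst}). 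This is what allows $p_1\le 10-\delta$; no integration by parts in $T$ is involved.

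\textbf{At higher orders the four pairs do not suffice.} To control \emph{all} derivatives with the claimed weights one also needs the two ``bridging'' estimates for $\left(\underline{\alpha}_{4n_l},\underline{\beta}_{3n_l}\right)$ and $\left(\underline{\beta}_{3n_l},\underline{\alpha}_{3n_l}\right)$ of Section~\ref{mixedder}, together with commutation by the angular momentum operators $\Omega_i$ (Section~\ref{angremark}). The wave relation $u_{34}\sim\slashed{\nabla}^2 u$ alone gains only one power of $r$ per angular derivative, not the two powers encoded in the factor $r^{2k_2+2k_3}$ of $\overline{\mathbb{E}}^n_P$. Finally, the $r=R$ boundary is handled more simply than you suggest: the terms there are bounded directly by the spacetime integral in the strip $R-M<r<R+M$, with no mean-value or good-slice argument.
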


\subsection{Estimates for the uncommuted equations} \label{infuncom}
Multiplying (\ref{Bianchi1}) by $r^{p_2-2} \left(1-\mu\right)^q  \alpha$ for positive integers $p_2,q$ produces
\begin{align}
\frac{1}{2}\slashed{D}_3 \left[r^{p_2-2}  \left(1-\mu\right)^q \| \alpha \|^2 \right] \nonumber \\ +
 \|\alpha\|^2 \left[ - \frac{p_2-2}{2} \slashed{D}_3 r - 4 \underline{\Omega} r + \frac{1}{2} tr \left(\underline{H}\right) r -\frac{1}{2} q r \frac{\slashed{D}_3 \left(1-\mu\right)}{1-\mu} \right]  \left(1-\mu\right)^q r^{p_2-3}  \nonumber \\ =
-2r^{p_2-2} \left(1-\mu\right)^q \slashed{\mathcal{D}}^\star_2 \beta \cdot \alpha + r^{p_2-2} \left(1-\mu\right)^q \left(E_3 \left(\alpha\right) - 4 \underline{\Omega} \alpha\right) \nonumber
\end{align}
Using that $div$ is the adjoint of $\slashed{\mathcal{D}}^\star_2$ and inserting the Bianchi equation (\ref{Bianchi2}), we can write this as
\begin{align}
\frac{1}{2}D_a \left(\left(e_3\right)^a r^{p_2-2} \left(1-\mu\right)^q \| \alpha \|^2 \right) + D_a  \left[\left(e_4\right)^a r^{p_2-2}\left(1-\mu\right)^q \|\beta\|^2 \right] \nonumber \\
+ \|\alpha\|^2r^{p_2-3}  \left(1-\mu\right)^q \left[ - \frac{p_2-2}{2}  \slashed{D}_3 r - 3 \underline{\Omega} r  -\frac{1}{2} r q \frac{\slashed{D}_3 \left(1-\mu\right)}{1-\mu} \right]  = \nonumber \\
+ \|\beta\|^2 r^{p_2-3}  \left(1-\mu\right)^q \left[3r \, tr H + 6\Omega r - \left(p_2-2\right)  \slashed{D}_4 r  - q r \frac{\slashed{D}_4 \left(1-\mu\right)}{1-\mu} \right]  \nonumber  \\
+ r^{p_2-2} \left(1-\mu\right)^q \Big( \left(E_3\left(\alpha\right) - 4\underline{\Omega}\alpha\right) \cdot \alpha + 2  \left(E_4\left(\beta\right) + 2\Omega \beta \right) \cdot \beta + 2 \left[\slashed{div} \left(\alpha \beta\right)\right]\Big) \nonumber 
\end{align}
We integrate over the region $\mathcal{D}^{\tau_2}_{\tau_1}$ taking into account the following observations:
 \begin{itemize}
\item the terms in the last line are error-terms. For the last term we integrate the angular derivative by parts to produce a cubic error-term in view of the fact that both $r$ and the measure approach something spherically symmetric. 
\item The first square bracket is equal to $\left[\left(p_2-2\right) + \mathcal{O}\left(\frac{1}{r}\right) \right]$
\item The second square bracket is equal to $\left[\left(8-p_2\right) + \mathcal{O}\left(\frac{1}{r}\right) \right]$. However, choosing $q$ to be a sufficiently large negative number ($q=-4$ is good enough), we can ensure that the $\mathcal{O}\left(\frac{1}{r}\right)$-term has a positive sign.
\end{itemize}
Using that $1-\mu \approx 1$ in this region and that the measure brings in another power of $2$ in $r$, it follows that for sufficiently large $R$ we have the estimate
\begin{align} \label{st1}
 \int_{N_{out}\left(S^2_{\tau_2,R}\right)} dv \, d\omega_2 \, r^{p_2} \|\alpha\|^2 + \int_{\mathcal{I}^{\tau_2-R}_{\tau_1-R}} du \, d\omega_2 \, r^{p_2} \|\beta\|^2 \nonumber \\
+ \int_{\mathcal{D}^{\tau_2}_{\tau_1}} \|\alpha\|^2 \, r^{{p_2}-1} \Big[\left(p_2-2\right) + \mathcal{O}\left(\frac{1}{r}\right) \Big] dt^\star \, dr \, d\omega \nonumber \\ 
+ \int_{\mathcal{D}^{\tau_2}_{\tau_1}} \|\beta\|^2 \, r^{{p_2}-1} \Big[\left(8-p_2\right) + \mu \Big] dt^\star \, dr \, d\omega \leq B \Big| \int_{\mathcal{D}^{\tau_2}_{\tau_1}}  \, e_{p_2}\left[\alpha,\beta\right]  \Big| \nonumber \\ + B \cdot \mathbb{I}^{0,deg}_{R-M<r<R+M}\left[W\right]\left(\mathcal{M}\left({\tau_1,\tau_2}\right)\right) + 
2 \int_{N_{out}\left(S^2_{\tau_1,R}\right)} dv \, d\omega_2 \, r^{p_2} \|\alpha\|^2 \, ,
\end{align}
with the error
\begin{align}
e_{p_2}\left[\alpha,\beta\right] =  r^{p_2-2} \left(1-\mu\right)^q \Big( \left(E_3\left(\alpha\right) - 4\underline{\Omega} \alpha \right) \cdot \alpha + 2  \left(E_4\left(\beta\right) + 2\Omega \beta \right) \cdot \beta + 2 \left[\slashed{div} \left(\alpha \beta\right)\right]\Big) \nonumber \, .
\end{align} 
It follows that for $2<p_2\leq8$ we obtain good terms on the left hand side, provided $R$ is chosen sufficiently large. However, there is a non-linear restriction from the error-terms in the penultimate line as we will see in the next section. 

Let us define the renormalized quantity $\widehat{\rho} = \rho + \frac{2M}{r^3}$.
Multiplying (\ref{Bianchi5}) by $r^{p_3} \left(1-\mu\right)^q \beta$ and integrating by parts as before to insert (\ref{Bianchi6}) and (\ref{Bianchi7}), yields for sufficiently large $R$ the estimate
\begin{align} \label{st3}
 \int_{N_{out}\left(S^2_{\tau_2,R}\right)} dv \, d\omega_2 \, r^{p_3} |\beta|^2 +  \int_{\mathcal{I}^{\tau_2-R}_{\tau_1-R}} du \, d\omega_2 \, r^{p_3} \left(\widehat{\rho}^2 +\sigma^2\right) \nonumber \\
 \int_{\mathcal{D}^{\tau_2}_{\tau_1}} dt^\star dr d\omega \, |\beta|^2 \, r^{{p_3}-1} \Big[\left(p_3 -4\right) + \mathcal{O}\left(\frac{1}{r}\right)  \Big] + \nonumber \\     
\int_{\mathcal{D}^{\tau_2}_{\tau_1}} dt^\star dr d\omega \,  \left(\widehat{\rho}^2 +\sigma^2\right) r^{p_3-1} \left[ \left(6- p_3\right) + \mu  \right]   \leq B \Big| \int_{\mathcal{D}^{\tau_2}_{\tau_1}}  e_{p_3}\left[\beta, \left(\rho,\sigma\right)\right] \Big| \nonumber \\ + B \cdot \mathbb{I}^{0,deg}_{R-M<r<R+M}\left[W\right]\left(\mathcal{M}\left({\tau_1,\tau_2}\right)\right) 
+4 \int_{N_{out}\left(S^2_{\tau_1,R}\right)} dv \, d\omega_2 \, r^{p_3} |\beta|^2
\end{align}
with the error
\begin{align}
e_{p_3}\left[\beta, \left(\rho,\sigma\right)\right] =  r^{p_3-2} \left(1-\mu\right)^q \left[ \left(E_3\left(\beta \right) - 2\underline{\Omega}\beta\right)  \cdot \beta +  \hat{E}_4\left(\rho\right)  \widehat{\rho} + E_4\left(\sigma\right) \sigma + \slashed{div} \left[ \left(-\widehat{\rho}, \sigma\right) \beta \right] \right] \nonumber
\end{align}
hence (\ref{st3}) produces a good estimate for $4<{p_3}\leq 6$. However, note that $p_3 \leq4$ is also admissible as long as $p_3<p_2$, as we can estimate the wrong signed $\beta$-term by adding a multiple of the previous estimate (\ref{st1}).

We now consider the underlined quantities. 
We start with equations (\ref{Bianchi8}), (\ref{Bianchi9}) and (\ref{Bianchi10}), from which we derive
\begin{align} \label{st4}
 \int_{N_{out}\left(S^2_{\tau_2,R}\right)} dv \, d\omega_2 \, r^{p_4} \left(\widehat{\rho}^2+{\sigma}^2\right) +  \int_{\mathcal{I}^{\tau_2-R}_{\tau_1-R}} du \, d\omega_2 \, r^{p_4} |\underline{\beta}|^2 + \nonumber \\
 \int_{\mathcal{D}^{\tau_2}_{\tau_1}}  dt^\star dr d\omega \,  \left(\widehat{\rho}^2+{\sigma}^2\right) \, r^{{p_4}-1} \Big[\left(p_4 -6\right) + \mathcal{O}\left(\frac{1}{r}\right)  \Big] 
\nonumber \\ + \int_{\mathcal{D}^{\tau_2}_{\tau_1}} dt^\star dr d\omega \,  |\underline{\beta}|^2 \,  r^{p_4-1} \left[ \left(4- p_4\right) + \mu  \right]   \leq \Big| \int_{\mathcal{D}^{\tau_2}_{\tau_1}} e_{p_4}\left[\left(\rho,\sigma\right), \underline{\beta} \right] \Big| + \nonumber \\  B \cdot \mathbb{I}^{0,deg}_{R-M<r<R+M}\left[W\right]\left(\mathcal{M}\left({\tau_1,\tau_2}\right)\right) + 4 \int_{N_{out}\left(S^2_{\tau_2,R}\right)} dv \, d\omega_2 \, r^{p_4} \left(\tilde{\rho}^2+\tilde{\sigma}^2\right) 
\end{align}
with the error
\begin{align}
e_{p_4}\left[\left(\rho,\sigma\right), \underline{\beta} \right] = r^{p_4-2} \left(1-\mu\right)^q \Big[ \left(E_4\left(\underline{\beta}\right)  - 2\Omega \underline{\beta}\right) \cdot \underline{\beta} + \hat{E}_3\left(\rho\right) \widehat{\rho} + E_4\left(\sigma\right) \sigma  -\slashed{div} \left[\beta \cdot \left(\widehat{\rho}, \sigma\right)\right] \Big] \nonumber
\end{align}
Obviously, the first spacetime term demands ${p_4}>6$ while the $\underline{\beta}$-term requires ${p_4}\leq 4$ in order to be positive. However, since we already control the spacetime integral of $\left(\widehat{\rho}^2+{\sigma}^2\right)$ from the previous estimate, the restriction is ${p_4} \leq \min\left(p_3,4\right)$ only.

Finally, we multiply (\ref{Bianchi4}) by $r^{p_5} \underline{\beta} \left(1-\mu\right)^a$ and integrate in the region $\mathcal{D}^{\tau_2}_{\tau_1}$ to arrive at the estimate
\begin{align} \label{st2}
 \int_{N_{out}\left(S^2_{\tau_2,R}\right)} dv \, d\omega_2 \, r^{p_5} |\underline{\beta}|^2 +  \int_{\mathcal{I}^{\tau_2-R}_{\tau_1-R}} du \, d\omega_2 \, r^{p_5} |\underline{\alpha}|^2 \nonumber \\
+ \int_{\mathcal{D}^{\tau_2}_{\tau_1}} dt^\star dr d\omega \, |\underline{\beta}|^2 \, r^{{p_5}-1}  \Big[\left(p_5 -8\right) + \mathcal{O}\left(\frac{1}{r}\right)  \Big]  \nonumber \\ 
+ \int_{\mathcal{D}^{\tau_2}_{\tau_1}}dt^\star dr d\omega \,  |\underline{\alpha}|^2 \, r^{{p_5}-1} \Big[\left(2-p_5\right) + \mu \Big]\leq \Big|\int_{\mathcal{D}^{\tau_2}_{\tau_1}} e_{p_5}\left[\underline{\beta}, \underline{\alpha}\right] \Big| \nonumber \\  + B \cdot \mathbb{I}^{0,deg}_{R-M<r<R+M}\left[W\right]\left(\mathcal{M}\left({\tau_1,\tau_2}\right)\right) 
+ 4 \int_{N_{out}\left(S^2_{\tau_1,R}\right)} dv \, d\omega_2 \, r^{p_5} |\underline{\beta}|^2
\end{align}
with the error
\begin{align}
e_{p_5}\left[\underline{\beta}, \underline{\alpha}\right] = \frac{1}{2} r^{p_5-2} \left(1-\mu\right)^q \Big( \left(E_3 \left(\underline{\beta}  \right)+ 2\underline{\Omega} \underline{\beta}\right) \cdot \underline{\beta} + \frac{1}{4}  \left(E_4 \left(\underline{\alpha} \right) - 4\Omega \underline{\alpha} \right) \cdot \underline{\alpha} - 2 \slashed{div} \left(\underline{\alpha} \cdot \underline{\beta}\right) \Big) \nonumber
\end{align}
We conclude that this estimate requires  $p\leq 2$ and $p>8$ to make both volume terms positive. However, by the usual argument of adding multiples of the previous estimates, only $p\leq \min\left(2,p_3\right)$ is relevant. We summarize this as
\begin{proposition} \label{loinf}
For  a boundary $0$-admissible tuple $\left(0,p_2,p_3,p_4,p_5,0,0\right)$ we have, for sufficiently large $R$,
\begin{align}
\overline{\mathbb{E}}^0_{\left(0,p_2,p_3,p_4,p_5, 0, 0\right)} \left[W\right] \left(N_{out} \left(\tau_2,R\right)\right) \nonumber \\ + \overline{\mathbb{I}}^0_{\left(0,p_2-1,p_2-1,p_3-1,p_4-1,p_5-1, 0\right)} \left[W\right] \left(\mathcal{D}\left(\tau_1,\tau_2\right)\right) \nonumber \\ \leq
\overline{\mathbb{E}}^0_{\left(0,p_2,p_3,p_4,p_5, 0,0\right)} \left[W\right] \left(N_{out} \left(\tau_1,R\right)\right) \nonumber \\ + B \cdot \mathbb{I}^{0,deg}_{R-M<r<R+M}\left[W\right]\left(\mathcal{M}\left({\tau_1,\tau_2}\right)\right) + B \cdot Err_{\left(0, p_2, p_3, p_4, p_5, 0,0\right)}
\end{align}
where
\begin{align}
Err_{\left(0, p_2, p_3, p_4, p_5, 0, 0\right)} =
\Big|  \int_{\mathcal{D}^{\tau_2}_{\tau_1}} dt^\star dr d\omega \, e_{p_2} \left[\alpha,\beta \right] \Big|  + \Big|  \int_{\mathcal{D}^{\tau_2}_{\tau_1}} dt^\star dr d\omega  \, e_{p_3} \left[\beta, \left(\rho,\sigma\right) \right] \Big| \nonumber \\  + \Big|  \int_{\mathcal{D}^{\tau_2}_{\tau_1}} dt^\star dr d\omega  \, e_{p_4} \left[\left(\rho,\sigma\right),\underline{\beta} \right] \Big| + \Big|  \int_{\mathcal{D}^{\tau_2}_{\tau_1}} dt^\star dr d\omega  \, e_{p_5} \left[\underline{\beta},\underline{\alpha}\right]  \Big| \nonumber
\end{align}
\end{proposition}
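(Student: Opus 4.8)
The plan is to prove Proposition \ref{loinf} by carrying out, in sequence, the four $r^p$-weighted multiplier estimates that precede the statement — one for each pair of adjacent null-Bianchi equations — and then observing that the restrictions on $p_2,\dots,p_5$ coming from requiring \emph{both} spacetime terms in each estimate to have a good sign can be relaxed to precisely the boundary $0$-admissibility conditions by absorbing wrong-signed terms into estimates already proven at the previous stage of the hierarchy. Concretely, I first multiply the $\alpha$-equation (\ref{Bianchi1}) by $r^{p_2-2}(1-\mu)^q\,\alpha$, integrate by parts so that the angular term pairs with $\beta$ via (\ref{Bianchi2}) and $\slashed{\mathcal{D}}_2^\star$ is traded for $\slashed{div}$, and integrate over $\mathcal{D}(\tau_1,\tau_2)$. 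Choosing $q$ a sufficiently large negative integer (e.g.\ $q=-4$) makes all the $\mathcal{O}(1/r)$ corrections in the two bracketed spacetime coefficients have the favourable sign, and for $R$ large this yields (\ref{st1}): positive boundary terms on $N_{out}(\tau_2,R)$ and $\mathcal{I}$, a positive $\|\alpha\|^2$ bulk term when $p_2>2$, and a positive $\|\beta\|^2$ bulk term when $p_2<8$, modulo $e_{p_2}[\alpha,\beta]$ and the degenerate interior term near $r=R$.

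Then I repeat the construction for the pair $(\beta,(\rho,\sigma))$ using (\ref{Bianchi5}) together with (\ref{Bianchi6})–(\ref{Bianchi7}), obtaining (\ref{st3}) with the good range $4<p_3\le 6$; for $(\rho,\sigma,\underline\beta)$ using (\ref{Bianchi8})–(\ref{Bianchi10}), obtaining (\ref{st4}) with the naive range $p_4>6$ and $p_4\le4$; and for $(\underline\beta,\underline\alpha)$ using (\ref{Bianchi4}), obtaining (\ref{st2}) with naive range $p_5>8$ and $p_5\le2$. In each of the last three estimates one of the two bulk coefficients has the wrong sign in the stated admissible range, but in every case that wrong-signed component ($\beta$ in (\ref{st3}), $(\rho,\sigma)$ in (\ref{st4}), $\underline\beta$ in (\ref{st2})) has \emph{already} been controlled with a positive bulk coefficient one step earlier in the chain. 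I therefore add suitable large multiples of the preceding estimates so that the wrong-signed contributions are absorbed; this is exactly what forces the monotonicity constraints $p_3\le\min(6,p_2)$, $p_4\le\min(4,p_3)$, $p_5\le\min(2,p_3)$ — i.e.\ $0$-admissibility — rather than the individual "both signs good" windows. Summing the four (adjusted) estimates and renaming the bulk weights $p_i\mapsto p_i-1$ (with the $\delta$-shifts at the endpoint cases $p_3=6$, $p_4=4$, $p_5=2$ absorbed into the definition of the bulk-$0$-admissible tuple) gives the displayed inequality with the collected error $Err_{(0,p_2,p_3,p_4,p_5,0,0)}=\sum|\int e_{p_i}|$ and the interior term $B\cdot\mathbb{I}^{0,deg}_{R-M<r<R+M}[W]$.

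Two routine points remain to be checked but present no real difficulty: first, each integration by parts of an angular divergence produces a cubic term where the angular derivative has fallen on $r$ or on the volume element; since both $r$ and the measure are spherically symmetric up to $\mathcal{O}(1/r)$ by the ultimately Schwarzschildean assumption, these terms are bounded by $\epsilon(\|\alpha\|^2+\|\beta\|^2+\dots)$ and are harmless — I simply fold them into the $e_{p_i}$'s as noted in the bullet points before (\ref{st1}). Second, the boundary contributions on the ingoing/outgoing cones must be identified with the norms appearing in $\overline{\mathbb{E}}^0$ and $\overline{\mathbb{I}}^0$; this is immediate from the definitions in section \ref{master} once one uses $1-\mu\approx1$ in $\mathcal{D}(\tau_1,\tau_2)$ and the extra factor $r^2$ from the volume form (which is why the exponents $p_2,\dots$ in the energies are shifted relative to the multiplier power).

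The one genuine subtlety — though at this stage it is deferred rather than resolved — is that the statement asserts nothing about smallness of $Err_{(0,p_2,p_3,p_4,p_5,0,0)}$: the error integrands $e_{p_i}$ involve the inhomogeneities $E_3(\alpha),\hat E_4(\rho),\dots$ of the Bianchi equations, which are quadratic expressions "Ricci coefficient $\cdot$ curvature" carrying $r^{p_i-2}$ weights, and controlling them will impose genuine \emph{nonlinear} upper restrictions on the admissible $p_i$ beyond the linear windows above. That analysis — exploiting the null structure of the $E$'s and an additional cancellation among the worst $r$-decaying terms, obtained by substituting the null-structure equations of section \ref{nseq} directly — is exactly the content of the following subsections and of Theorem \ref{maintheoinf}; here I only record the identity of Proposition \ref{loinf} with the error left abstract, so the proof above is complete as stated.
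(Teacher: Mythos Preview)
Your proposal is correct and follows essentially the same approach as the paper: the paper's proof is the one-line ``Add up the estimates (\ref{st1}), (\ref{st2}), (\ref{st3}) and (\ref{st4}),'' and the preceding text in section~\ref{infuncom} carries out exactly the four $r^p$-weighted multiplier identities you describe, together with the absorption of wrong-signed bulk terms by adding multiples of the earlier estimates in the hierarchy. Your remarks on the cubic angular error, the identification of boundary terms with the $\overline{\mathbb{E}}^0$ and $\overline{\mathbb{I}}^0$ norms, and the deferral of the $Err$ analysis to Proposition~\ref{erlosto} all match the paper's treatment.
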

\begin{proof}
Add up the estimates (\ref{st1}), (\ref{st2}),  (\ref{st3}) and  (\ref{st4}).
\end{proof}
For this proposition to be useful, we need to estimate the error-term in terms of the bulk-term on the left hand side. Clearly, this may impose further constraints on the admissible $p$'s. We now show it doesn't:

\begin{proposition} \label{erlosto}
For a boundary $0$-admissible tuple $\left(0,p_2,p_3,p_4,p_5,0,0\right)$ we have
\begin{align}
Err_{\left(0, p_2, p_3, p_4, p_5, 0, 0\right)} \leq \nonumber \\
\sqrt{\|\rho^2 r^6 \|_{L^\infty} \cdot I^{0} \left[\mathfrak{R}\right] \left(\mathcal{D}^{\tau_2}_{\tau_1}\right)} \sqrt{\overline{\mathbb{I}}^0_{\left(0,p_2-1,p_2-1,p_3-1,p_4-1,p_5-1, 0\right)} \left[W\right] \left(\mathcal{D}\left(\tau_1,\tau_2\right)\right)} \nonumber \\ + \Big\|r^2 \left(\mathfrak{R}^{main}, \widehat{\underline{H}} \frac{1}{r}\right)\Big\|_{L^\infty} \cdot  \overline{\mathbb{I}}^0_{\left(0,p_2-1,p_2-1,p_3-1,p_4-1,p_5-1, 0\right)} \left[W\right] \left(\mathcal{D}\left(\tau_1,\tau_2\right)\right) \nonumber \, .
\end{align}
\end{proposition}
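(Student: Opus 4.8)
\textbf{Proof plan for Proposition \ref{erlosto}.}

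The plan is to estimate each of the four error integrals $\int_{\mathcal{D}^{\tau_2}_{\tau_1}} e_{p_i}$ separately by the same mechanism: insert the explicit form of the inhomogeneities $E_3, E_4, \hat{E}_3, \hat{E}_4$ from the null-Bianchi equations (\ref{Bianchi1})--(\ref{Bianchi10}), (\ref{renormrho}), which are all of the schematic form ``(Ricci coefficient or deformation tensor) $\cdot$ (curvature component)'', and split the resulting cubic spacetime integrands into two types. The cubic terms of the form $\mathfrak{R}^{main} \cdot W \cdot W$ (two curvature factors coming from the multiplier weight $r^{p_i-2}\left(1-\mu\right)^q u \cdot E(u)$, one Ricci factor) get estimated by putting the Ricci coefficient in $L^\infty$, weighted by $r^2$ so that the remaining $r$-power matches exactly the weight of the bulk energy $\overline{\mathbb{I}}^0$; this gives the last term of the proposition with the $\|r^2(\mathfrak{R}^{main}, \widehat{\underline H}/r)\|_{L^\infty}$ factor (the $\widehat{\underline H}$ component is singled out because, by (\ref{jo2}), it decays only like $1/(r\max(1,t^\star-r))$, one power of $r$ worse than the other Ricci coefficients, which is why it appears divided by $r$). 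The only terms that are \emph{not} of this double-curvature form are the ones where the inhomogeneity contributes the non-decaying curvature component $\rho$ (i.e.\ $-3\widehat{H}\rho$, $-3\widehat{\underline H}\rho$, $3Y\rho$, $3Z\rho$, etc., and the $\rho$-terms in $\hat E_3(\rho), \hat E_4(\rho)$) — these are schematically $\rho \cdot W \cdot \mathfrak{R}$, so only one curvature factor decays.

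For these $\rho$-terms I would use Cauchy--Schwarz differently: write the integrand as $\big(r^3\rho\big)\cdot\big(r^{-3}\cdot(\text{weighted curvature})\big)\cdot(\text{weighted Ricci coefficient})$ and apply Cauchy--Schwarz in $L^2\left(\mathcal{D}^{\tau_2}_{\tau_1}, dt^\star dr d\omega\right)$, pairing the weighted curvature factor against the weighted Ricci-coefficient factor, with $r^3\rho$ pulled out in $L^\infty$. One must check the bookkeeping of $r$-powers: the curvature factor, with its share of the $r^{p_i-2}$ weight plus the measure's $r^2$, should land inside $\overline{\mathbb{I}}^0_{(0,p_2-1,p_2-1,p_3-1,p_4-1,p_5-1,0)}\left[W\right]$, while the Ricci factor, with the complementary $r$-power, should land inside $I^0\left[\mathfrak{R}\right]\left(\mathcal{D}^{\tau_2}_{\tau_1}\right)$ (note $I^0[\mathfrak{R}]$ is essentially $\overline{\mathbb{I}}^0[\mathfrak{R}]$ restricted to $n=0$; one verifies the weight $r^{2-\delta}$ there is compatible after absorbing the $r^{-\delta}$ slack). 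This produces the first term, $\sqrt{\|\rho^2 r^6\|_{L^\infty} \cdot I^0[\mathfrak{R}](\mathcal{D}^{\tau_2}_{\tau_1})}\cdot\sqrt{\overline{\mathbb{I}}^0_{\tilde P}[W](\mathcal{D})}$. A separate small point: the total-divergence terms $\slashed{div}(\alpha\beta)$ etc.\ that were produced in section \ref{infuncom} by integration by parts on the spheres must be integrated by parts once more in the angular variables; since $r$ and the induced measure are spherically symmetric up to $\mathcal{O}(1/r)$ corrections controlled by the ultimately Schwarzschildean assumption, the result is a genuine cubic term already of one of the two types above (this is the standard maneuver flagged in the bullets of section \ref{infuncom}).

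The main obstacle is the precise $r$-weight accounting in the second step — making sure that for \emph{every} boundary $0$-admissible tuple, and for \emph{every} individual $\rho$-term occurring in $e_{p_2},\dots,e_{p_5}$, the split of $r$-powers between the curvature bulk norm $\overline{\mathbb{I}}^0_{\tilde P}[W]$ and the Ricci bulk norm $I^0[\mathfrak{R}]$ actually closes without requiring any $r$-power stronger than what those norms carry, so that no \emph{new} constraint on $(p_2,p_3,p_4,p_5)$ beyond boundary $0$-admissibility is generated. Concretely one has to confirm that the worst case — the $\underline\beta\cdot(\rho,\sigma)$ pairing in $e_{p_4}$ and the $\underline\alpha$-terms in $e_{p_5}$, where the bulk curvature weight $p_i-1$ is smallest — still leaves enough room because $\rho\sim r^{-3}$ gives three extra powers to redistribute; this is exactly the ``additional cancellation of the terms with the worst decay in $r$'' alluded to in section \ref{infintro}, and it is what one should isolate and verify term by term. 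Everything else is routine once the inhomogeneities are written out.
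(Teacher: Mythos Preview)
Your plan is correct and is exactly the paper's approach: Cauchy--Schwarz on each cubic integrand, with the double-curvature terms handled by putting the Ricci coefficient in $L^\infty$ (weighted by $r^2$, or $r$ for $\widehat{\underline H}$), and the $\rho$-terms handled by pulling out $\|\rho^2 r^6\|_{L^\infty}$ and splitting the remaining $r$-weight between $\overline{\mathbb{I}}^0_{\tilde P}[W]$ and $I^0[\mathfrak{R}]$.

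Two corrections to your diagnosis of where the difficulty lies. First, the binding constraint does \emph{not} come from $e_{p_4}$ or $e_{p_5}$; those are the easiest, since their exponents $p_4\leq 4$, $p_5\leq 2$ are small. The sharp constraint comes from $e_{p_2}$: the term $\int r^{p_2}\,\widehat H\,\rho\,\alpha$ (from $E_3(\alpha)=-3\widehat H\rho+\cdots$) forces, after Cauchy--Schwarz, $\int r^{p_2-5}\|\widehat H\|^2$ to sit inside $I^0[\mathfrak{R}]$, i.e.\ $p_2<7-\delta$ --- which is exactly the upper bound in the definition of boundary $0$-admissibility. The other nontrivial checks are in $e_{p_3}$ (e.g.\ $\widehat{\underline H}\cdot\alpha\cdot\hat\rho$ gives $p_3<p_2$) and $e_{p_4}$ (e.g.\ $\widehat H\cdot\underline\alpha\cdot\hat\rho$ gives $2p_4-p_3<p_5+2$), all of which follow from admissibility. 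Second, the ``additional cancellation of the worst-decaying terms'' via the null-structure equations that you invoke from section~\ref{infintro} is \emph{not} used here at order~$0$; it enters only in the higher-order analogue (Proposition~\ref{errhigh}), where one claims stronger $r$-weights than naive commutation provides. At order~$0$ the $r$-weight bookkeeping closes by pure counting, with the $r^{-3}$ decay of $\rho$ doing all the work.
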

Note that in view of the smallness assumptions on the Ricci rotation coefficients given in  Definition \ref{RRCapproach}, the estimate of Proposition \ref{loinf} closes.
\begin{proof}
For the $\alpha$-part of the error of estimate (\ref{st1}) we have
\begin{align}
\int r^{p_2}  \left(E_3\left(\alpha\right) - 4\underline{\Omega}\alpha\right)\alpha \nonumber \\ 
\leq \sqrt{\int r^{p_2-1}|\alpha|^2 dt^\star dr d\omega }\sqrt{\int r^{p_2+1}| \left(E_3\left(\alpha\right) - 4\underline{\Omega}\right)|^2 dt^\star dr d\omega} \, .
\end{align}
We note that
\begin{align}
\int r^{p_2+1}| \left(E_3\left(\alpha\right) - 4\underline{\Omega}\right)|^2 dt^\star dr d\omega\nonumber \\ \leq C \int \left[\|\widehat{H}\|^2 \left(\rho^2+\sigma^2\right) + \left(\|V\|^2 +\|Z\|^2\right) |\beta|^2 \right] r^{p_2+1} dt^\star dr d\omega \nonumber \\
\leq C ||\rho^2 r^6||_{L^\infty} \int \| \widehat{H}\|^2 r^{p_2-5} dt^\star dr d\omega + C ||r^4 \left(\|V\|^2 +\|Z\|^2\right)||_{L^\infty}  \int r^{p_2-3}|\beta|^2 dt^\star dr d\omega \nonumber
\end{align}
from which we obtain the constraint $p_2<7-\delta$ (otherwise the spacetime-term involving $\widehat{H}^2$ is not small).

For the $\beta$-part, recalling that in our frame $Y=0$,
\begin{align}
\int r^{p_2+1}| \left(E_4\left(\beta\right) + 2{\Omega}\beta\right)|^2 dt^\star dr d\omega \leq C \int \left(\|V\|^2 +\|\underline{Z}\|^2\right) |\alpha|^2 r^{p_2+1} dt^\star dr d\omega \nonumber \\
\leq C ||r^4 \left(V^2 + \|\underline{Z}\|^2\right)||_{L^\infty}  \int r^{p_2-3}|\alpha|^2 dt^\star dr d\omega \nonumber
\end{align}
which does not impose further constraints on $p_2$.

Turning to $e_{p_3} \left[\beta, \left(\rho,\sigma\right)\right]$ we have
\begin{align}
\int r^{p_3} \widehat{H} \cdot \underline{\beta} \cdot \beta  \ dt^\star dr d\omega \leq \sqrt{\int r^{p_2-1} \|\beta\|^2 dt^\star dr d\omega} \sqrt{\|\widehat{H}^2 r^4\|_{L^\infty} \int r^{2p_3-p_2-3} \|\underline{\beta}\|^2 dt^\star dr d\omega} \nonumber
\end{align}
from which we read off the constraint $2p_3-p_2-3 \leq p_4 -1$,
\begin{align}
\int r^{p_3} \underline{Y} \cdot \alpha \cdot \beta  \,  dt^\star dr d\omega \leq \sqrt{\int r^{p_2-1} \|\beta\|^2 dt^\star dr d\omega} \sqrt{\|\underline{Y}^2 r^4\|_{L^\infty} \int r^{2p_3-p_2-3} \|\alpha \|^2 dt^\star dr d\omega} \nonumber
\end{align}
from which we read off the constraint $p_3 \leq p_2 + 2$, 
\begin{align}
\int r^{p_3} \rho Z  \cdot \beta  \,  dt^\star dr d\omega \leq \sqrt{\int r^{p_2-1} \|\beta\|^2 dt^\star dr d\omega} \sqrt{\|\rho^2 r^6\|_{L^\infty} \int r^{2p_3-p_2-5} \|Z \|^2 dt^\star dr d\omega} \nonumber
\end{align}
from which we read off the constraint $2p_3 \leq p_2 + 7-\delta$ in order for the $\|Z\|^2$ spacetime integral to produce a smallness factor. A similar estimate is obtained for the term ${}^\star Z \sigma$. Turning to $\hat{E}_4\left(\rho\right)$ we estimate
\begin{align}
\int r^{p_3}\underline{\widehat{H}}  \cdot \alpha \, \widehat{\rho}  \,  dt^\star dr d\omega \leq \sqrt{\int r^{p_3-1}  \|\hat{\rho} \|^2 dt^\star dr d\omega} \sqrt{\|\underline{\widehat{H}}^2 r^2\|_{L^\infty} \int r^{p_3-1}\|\alpha\|^2  dt^\star   dr d\omega} \nonumber
\end{align}
which imposes $p_3<p_2$. The remaining terms do not create new constraints and we can turn to $e_{p_4} \left[\left(\rho,\sigma\right),\underline{\beta} \right]$, for which the most difficult terms are
\begin{align}
\int r^{p_4}\widehat{H}  \cdot \underline{\alpha} \, \widehat{\rho}  \,  dt^\star dr d\omega \leq \sqrt{\int r^{p_3-1}  \|\hat{\rho} \|^2 dt^\star dr d\omega} \sqrt{\|\widehat{H}^2 r^4\|_{L^\infty} \int r^{2p_4-p_3-3}\|\underline{\alpha}\|^2  dt^\star   dr d\omega} \nonumber
\end{align}
imposing $2p_4-p_3 < p_5+2$ and
\begin{align}
\int r^{p_4} \rho \underline{Z}  \cdot \underline{\beta}  \,  dt^\star dr d\omega \leq \sqrt{\int r^{p_4-1} \|\underline{\beta}\|^2 dt^\star dr d\omega} \sqrt{\|\rho^2 r^6\|_{L^\infty} \int r^{p_4-5} \|\underline{Z} \|^2 dt^\star dr d\omega} \nonumber
\end{align}
is fine since $p_4<4$ anyway. Finally, for $e_{p_5} \left[\underline{\beta},\underline{\alpha}\right]$ we note
\begin{align}
\int r^{p_5} \left(V,Z\right)  \cdot \underline{\alpha} \, \underline{\beta}  \,  dt^\star dr d\omega \leq \sqrt{\int r^{p_4-1}  \|\underline{\beta} \|^2 dt^\star dr d\omega} \sqrt{\|\left(V,Z\right)^2 r^4\|_{L^\infty} \int r^{2p_5-p_4-3}\|\underline{\alpha}\|^2  dt^\star   dr d\omega} \nonumber
\end{align}
revealing $p_5<2+p_4$, which is already satisfied as $p_5<2$.
Inspecting all the constraints on the $p$'s obtained, we observe that they are all implied by the assumption that the $p$-tuple is boundary admissible.
\end{proof}
This establishes Theorem \ref{maintheoinf} for $n=0$.
\subsection{Estimates for the commuted equations}
To prove Theorem \ref{maintheoinf} we will first prove the higher derivative version of Proposition \ref{loinf}. The focus here is on the main-terms and their $r$-weights, while all errors are simply collected on the right hand side. The latter will be estimated separately in Proposition \ref{errhigh} in the next section. Combining the two Proposition will imply Theorem \ref{maintheoinf}.
\begin{proposition} \label{hiinf}
For $P$ a boundary admissible matrix with second row being the tuple $p=\left(p_1,p_2,..., p_6, 0\right)$ we have the estimate
\begin{align}
\overline{\mathbb{E}}^{m+1}_P \left[W\right] \left(N_{out} \left(\tau_2,R\right)\right) + \overline{\mathbb{I}}^{m+1}_{\tilde{P}} \left[W\right] \left(\mathcal{D}\left(\tau_1,\tau_2\right)\right)  \leq
\overline{\mathbb{E}}^{m+1}_{P} \left[W\right] \left(N_{out} \left(\tau_1,R\right)\right) \nonumber \\ + B \cdot \overline{\mathbb{I}}^{m+1,deg}_{R-M<r<R+M}\left[W\right]\left(\tilde{\mathcal{M}}\left({\tau_1,\tau_2}\right)\right) + B \cdot \overline{Err}^{m+1}_P + B \cdot Low^{m+1}_P \, .
\end{align}
Here the error is given by (all integrals are taken over $\mathcal{D}^{\tau_2}_{\tau_1}$)
\begin{align}
\overline{Err}^{m+1}_{P} = \sum_{j=0}^m {Err}^{j+1}_{P} = 
\sum_{j=0}^k \sum_{l=0}^j \sum_{n_l} \Bigg( \Big| \int e_{p_1+2l_2}\left[\Omega_i^{j-l} \alpha_{4n_l}, \Omega_i^{j-l} \beta_{4n_l}\right] \Big| + \nonumber \\ \Big| \int e_{p_2+2l_2}\left[\Omega_i^{j-l}\beta_{4n_l}, \Omega_i^{j-l}\left(\rho_{4n_l},\sigma_{4n_l}\right)\right]  \Big|  
+ \Big| \int  e_{p_2+2l_2}\left[\Omega_i^{j-l}\beta_{4n_l}, \Omega_i^{j-l}\alpha_{3n_l} \right] \Big| +\nonumber \\
 \Big|  \int  e_{p_3+2l_2}\left[ \Omega_i^{j-l}\left(\rho_{4n_l}, \sigma_{4n_l}\right), \Omega_i^{j-l}\beta_{3n_l} \right] \Big| +  \Big|\int e_{p_4+2l_2}\left[\Omega_i^{j-l}\underline{\beta}_{4n_l}, \Omega_i^{j-l}\underline{\alpha}_{4n_l}\right] \Big| \nonumber \\
+ \Big| \int e_{p_5+2l_2}\left[\Omega_i^{j-l}\underline{\alpha}_{4n_l}, \Omega_i^{j-l}\underline{\beta}_{3n_l}\right] \Big| + \Big|\int e_{p_6+2l_2}\left[\Omega_i^{j-l}\underline{\beta}_{3n_l}, \Omega_i^{j-l}\underline{\alpha}_{3n_l}\right] \Big| \Bigg) \nonumber
\end{align}
\begin{align}
\overline{Low}^{m+1}_P =  \sum_{j=0}^m {Low}^{j+1}_P = \sum_{j=0}^m \sum_{l=0}^j \sum_{n_l} \int_{N_{out}} dv d\omega \,  r^{2l_2} Low^{\Omega_i^{m-l}}_{n_l}\left(p_2, p_3, p_4, p_5, p_6\right) \nonumber \\
+  \sum_{j=0}^m \sum_{l=0}^j \sum_{n_l} \int_{\mathcal{D}^{\tau_2}_{\tau_1}} dt^\star dr d\omega \, r^{2l_2} Low^{\Omega_i^{m-l}}_{n_l}\left(\tilde{p}_1,\tilde{p}_2, \tilde{p}_3, \tilde{p}_4, \tilde{p}_5\right) \nonumber 
\end{align}
where $n_l$ denotes a $l$-tuple of $3$'s and $4$'s with $l_1$ being the number of $3$'s and $l_2$ being the number of $4$'s. The integral is short hand for $\int = \int \sqrt{g} dt^\star dr d\omega$ and the integration region is always  $\mathcal{D}^{\tau_2}_{\tau_1}$. The precise definitions of the $e_p \left[..., ...\right]$ are given in the following section, while $Low^{\Omega_i^{m-l}}_{n_l}\left(...\right)$ is defined in Lemma \ref{allfrom4}.
\end{proposition}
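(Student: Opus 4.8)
\textbf{Proof proposal for Proposition \ref{hiinf}.}

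The plan is to mimic, order by order in the number of commutations, the uncommuted argument of Propositions \ref{loinf} and \ref{erlosto}. I would proceed by induction on $m$, the base case $m=-1$ (i.e.\ the statement at the level of $\overline{\mathbb{E}}^0$) being precisely Theorem \ref{maintheoinf} for $n=0$ already established in Section \ref{infuncom}. For the inductive step, fix $j$ with $0\le j\le m$ and an $l$-tuple $n_l$ with $0\le l\le j$, and consider the commuted null-Bianchi equations of Section 3, namely the pairs
\[
\left(\slashed{\mathcal{L}}_{\Omega_i}^{j-l}\alpha_{4n_l}\right)_3,\ \left(\slashed{\mathcal{L}}_{\Omega_i}^{j-l}\beta_{4n_l}\right)_4;\quad
\left(\slashed{\mathcal{L}}_{\Omega_i}^{j-l}\beta_{4n_l}\right)_3,\ \left(\slashed{\mathcal{L}}_{\Omega_i}^{j-l}\rho_{4n_l}\right)_4;\quad\dots
\]
all the way down through $\left(\slashed{\mathcal{L}}_{\Omega_i}^{j-l}\underline{\beta}_{3n_l}\right)_3,\ \left(\slashed{\mathcal{L}}_{\Omega_i}^{j-l}\underline{\alpha}_{3n_l}\right)_4$. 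Each such pair is an honest inhomogeneous null-Bianchi system (after the ``pushing the $3$ or $4$ through $n_l$'' of Section 3.5 and the angular commutation of Section 3.6), with inhomogeneities $\tilde E$, $\hat E$ that are exactly the objects appearing in $\overline{Err}^{m+1}_P$ and $\overline{Low}^{m+1}_P$. I would multiply the $\alpha$-type equation of each pair by $r^{p-2}(1-\mu)^q\cdot(\text{component})$, integrate $\slashed{\mathcal{D}}_2^\star$ (resp.\ $\slashed{\mathcal{D}}_1^\star$, $\slashed{div}$) against its adjoint, substitute the second equation of the pair exactly as in \eqref{st1}--\eqref{st2}, and convert the $\slashed{D}_3,\slashed{D}_4$-derivatives into spacetime divergences using the structure equations. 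The key bookkeeping point is the weight: differentiating by $\slashed{D}_4$ or $\slashed{\nabla}$ improves the $r$-weight by two powers (this is the factor $r^{2k_2+2k_3}=r^{2l_2+2(\text{ang.})}$ in \eqref{ennorm}), so the tuple $p$ must be shifted to $p+2l_2$ in the $e_{p}[\cdot,\cdot]$ terms, which is precisely how $\overline{Err}^{m+1}_P$ is organized. Summing over all pairs, all $n_l$, all $j\le m$, and absorbing wrong-signed bulk terms by adding multiples of the estimate for the preceding component in the hierarchy (exactly as $p_3\le p_2$, $p_4\le\min(p_3,4)$, etc.\ allowed in the uncommuted case), produces the left-hand side $\overline{\mathbb{E}}^{m+1}_P+\overline{\mathbb{I}}^{m+1}_{\tilde P}$, with boundary term on $N_{out}(\tau_1,R)$, the near-$R$ degenerate bulk term $\overline{\mathbb{I}}^{m+1,deg}_{R-M<r<R+M}[W]$ coming from the timelike boundary $r=R$ (controlled by the interior integrated decay via Section \ref{elliptic}), and the collected right-hand sides $\overline{Err}^{m+1}_P$ and $\overline{Low}^{m+1}_P$.

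The only genuinely new ingredient relative to the uncommuted case is that commuting the Bianchi equations through $\slashed{D}_3,\slashed{D}_4,\slashed{\nabla},\slashed{\mathcal{L}}_{\Omega_i}$ generates, via Lemma \ref{commutelemma} and the commutator formulae of Sections 3.5--3.6, lower-order terms: products of derivatives of Ricci coefficients with lower-order curvature, and (because the $\Omega_i$ are only Killing in Schwarzschild, not in $g$) error-terms with decaying coefficients. These are exactly what Lemma \ref{allfrom4} packages into $Low^{\Omega_i^{m-l}}_{n_l}(\cdots)$; since they involve at most $l$ derivatives of curvature, hence sit strictly below the top order, they are harmless and simply get dumped into $\overline{Low}^{m+1}_P$ to be controlled by the induction hypothesis applied at order $\le m$. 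One must also be careful that the highest-order derivative of the weakly $r$-decaying coefficient $\widehat{\underline{H}}$ does not destroy the weights — but the signature considerations of \cite{ChristKlei} (the same remark as in the proofs of Propositions \ref{easyerror} and \ref{harderror}) guarantee that no cubic term carrying two factors of $\underline{\alpha}$ together with $\widehat{\underline{H}}$ (or $\slashed{D}_3^m\widehat{\underline{H}}$) can appear in these error expressions, so every term has at least one component with improved $r$-decay. Finally, one checks, exactly as in Proposition \ref{erlosto}, that the constraints on the shifted tuples $p+2l_2$ and $\tilde p+2l_2$ extracted from each Cauchy--Schwarz split are all implied by the definition of a boundary admissible matrix $P$; this is where the upper bounds $p_1\le 10-\delta$, $p_2\le 9-\delta$, $p_3\le 8$, etc.\ are forced (the $10-\delta$ bound coming, as in the uncommuted $\widehat{H}^2$ term, from the requirement that the spacetime integral of the square of the Ricci coefficient carry a smallness factor).

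The main obstacle I anticipate is purely organizational rather than conceptual: there are six (seven, counting $\underline{\alpha}$) interlocking pairs of Bianchi equations, and for each the choice of multiplier weight $r^{p-2}(1-\mu)^q$ — in particular the \emph{negative} integer $q$ needed to make the $\mathcal{O}(1/r)$ corrections to the principal bulk coefficients $(p-c)$ have the right sign near infinity — must be made consistently so that when the estimates are added the good bulk terms strictly dominate the wrong-signed ones coming from components lower in the hierarchy. Keeping the chain of ``add a large multiple of the previous estimate'' absorptions consistent at \emph{every} order of commutation, while simultaneously tracking the $r^{2l_2}$ weight shifts so that the resulting admissibility constraints close, is the delicate part. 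Everything else (the divergence-to-boundary conversions, the adjointness of $\slashed{\mathcal{D}}_i^\star$, the harmlessness of the $Low$ terms, the treatment of the $r=R$ boundary) is a direct transcription of the $n=0$ argument of Section \ref{infuncom} together with Section \ref{elliptic}.
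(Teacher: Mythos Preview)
Your overall strategy---run the uncommuted multiplier argument on each commuted Bianchi pair and sum---is the paper's strategy too, but there is a genuine gap in how you propose to absorb the wrong-signed bulk terms. You only invoke the within-tuple hierarchy (``exactly as $p_3\le p_2$, $p_4\le\min(p_3,4)$''), i.e.\ absorption of the first quantity of the $(i+1)^{\text{th}}$ pair by the second of the $i^{\text{th}}$. That is not enough once $n_l$ contains $3$'s. Look at the admissibility window \eqref{cond4ab}: for the pair $(\alpha_{4n_l},\beta_{4n_l})$ at weight $p=p_1+2l_2$ one needs
\[
2+4l_1<p_1+2l_2,
\]
which fails as soon as $l_1>l_2$ (e.g.\ $n_l=(3,3,3)$ forces $p_1>14$). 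For such tuples the $\|\alpha_{4n_l}\|^2$ bulk term has the wrong sign and there is nothing \emph{within that tuple's} chain of six estimates to absorb it.

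The paper closes this by a specific \emph{stepping-down order} across tuples (Proposition~\ref{zwischri}): first take $n_l$ with $l_1=0$ (all $4$'s), where \eqref{cond4ab}--\eqref{cond4rsbb} are honestly satisfied; then apply Lemma~\ref{allfrom4} to recover $\alpha_{3n_l}$, $\slashed{\nabla}\beta_{n_l}$, etc.\ with the same weights. Via Lemma~\ref{permcommute} this is, modulo lower order, control of $\alpha_{4\tilde n_l}$ for tuples $\tilde n_l$ with $l_1=1$. Now rerun the six estimates for such $\tilde n_l$: the lower bound in \eqref{cond4ab} is \emph{irrelevant} because the $\|\alpha_{4\tilde n_l}\|^2$ bulk term is already sitting on the left from the previous step. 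Iterate to $l_1=2,3,\dots$. So Lemma~\ref{allfrom4} is not merely a bookkeeping device for the $Low$ terms, as you present it---it is the bridge that makes each successive tuple's first estimate close. Your ``sum over all $n_l$'' skips this ordering and would not close.

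A smaller point: you spend your last paragraph on Cauchy--Schwarz constraints forcing $p_1\le 10-\delta$, etc. That analysis belongs to Proposition~\ref{errhigh}, not here; Proposition~\ref{hiinf} deliberately leaves $\overline{Err}^{m+1}_P$ and $\overline{Low}^{m+1}_P$ unestimated on the right-hand side.
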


\subsubsection{Derivatives in the $4$-direction} \label{4der}
We start with the Bianchi equations for $\alpha_{34n_l}$ (derived from (\ref{a3nk})) and $\beta_{44n_l}$ (derived from (\ref{b4nk})). First we  relate $\alpha_{34n_l}$ to $\alpha_{4n_l3}$ using the commutation formulae. Since the resulting equation is of the same form as the original Bianchi equation, we proceed as for the non-commuted equations, i.e.~we multiply the equation for $\alpha_{4n_l3}$ by $r^{p-2} \left(1-\mu\right)^q \alpha_{4n_l}$ and integrate by parts. This yields the estimate
\begin{align} \label{4ab}
 \int_{N_{out}\left(S^2_{\tau_2,R}\right)} dv \, d\omega_2 \, r^{p} \|\alpha_{4n_l}\|^2 + \int_{\mathcal{I}^{\tau_2-R}_{\tau_1-R}} du \, d\omega_2 \, r^{p} \|\beta_{4n_l}\|^2 \nonumber \\
+ \int_{\mathcal{D}^{\tau_2}_{\tau_1}} dt^\star dr d\omega \, \|\alpha_{4n_l}\|^2 \, r^{{p}-1} \Big[-\frac{{p}}{2}\slashed{D}_3 r + \vartheta^-\left(\alpha_{4n_l}\right) tr \left(\underline{H}\right) r + \mathcal{O} \left(\frac{1}{r}\right)\Big] \nonumber \\ 
+ \int_{\mathcal{D}^{\tau_2}_{\tau_1}} dt^\star dr d\omega \, \|\beta_{4n_l}\|^2  r^{p-1} \Big[2\vartheta^+\left(\beta_{4n_l}\right) \,r \cdot tr H -p \slashed{D}_4 r + \mu \Big]  \nonumber \\ \leq B \Big| \int_{\mathcal{D}^{\tau_2}_{\tau_1}} e_{p}\left[\alpha_{4n_l}, \beta_{4n_l}\right] \Big| + B \cdot \mathbb{I}^{l+1,deg}_{R-M<r<R+M}\left[W\right]\left(\mathcal{M}\left({\tau_1,\tau_2}\right)\right) \nonumber \\
+2 \int_{N_{out}\left(S^2_{\tau_1,R}\right)} dv \, d\omega_2 \, r^{p} \|\alpha_{4n_l}\|^2 
\end{align}
with the error-term
\begin{align}
e_{p}\left[\alpha_{4n_l}, \beta_{4n_l}\right] = r^{p-2} \left(1-\mu\right)^q \left(\hat{E}_3^{4n_l}\left(\alpha\right) \alpha_{4n_l} + 2 \tilde{E}_4^{4n_l} \left(\beta\right) \cdot \beta_{4n_l} + 2\slashed{div} \left[\alpha_{4n_l} \cdot \beta_{4n_l} \right] \right) \nonumber
\end{align}
One reads off that the left hand side is positive for $r>R$ sufficiently large if
\begin{equation}
2\left(4+l-s\left(\alpha_{4n_l}\right) \right)  < {p} < 2\left(4+l+s\left(\beta_{4n_l}\right)\right) 
\end{equation}
holds, which is equivalent to
\begin{equation}  \label{cond4ab}
2 + 4 \left(\textrm{number of $3$'s in $n_l$}\right) < {p} \leq 12 + 4 \left(\textrm{number of $4$'s in $n_l$}\right) \, .
\end{equation}
From the Bianchi equations for $\beta_{34n_l}$ (derived from (\ref{b3nk})) and $\left(\rho,\sigma\right)_{44n_l}$ (derived from (\ref{r4nk}), (\ref{s4nk})) we obtain, after relating $\beta_{34n_l}$ to $\beta_{4n_l3}$:
\begin{align} \label{4brs}
\frac{1}{2} \int_{N_{out}\left(S^2_{\tau_2,R}\right)} dv \, d\omega_2 \, r^{p} \|\beta_{4n_l}\|^2 + \frac{1}{2} \int_{\mathcal{I}^{\tau_2-R}_{\tau_1-R}} du \, d\omega_2 \, r^{{p}} \left( \rho_{4n_l}^2 +\sigma_{4n_l}^2\right) \nonumber \\
+ \int_{\mathcal{D}^{\tau_2}_{\tau_1}} dt^\star dr d\omega \,  r^{{p}-1} \|\beta_{4n_l}\|^2 \, \Big[-\frac{{p}}{2}\slashed{D}_3 r + \vartheta^-\left(\beta_{4n_l}\right) r \, tr \underline{H} + \mathcal{O} \left(\frac{1}{r}\right) \Big] \nonumber \\     
+ \frac{1}{2}\int_{\mathcal{D}^{\tau_2}_{\tau_1}}dt^\star dr d\omega \, r^{p-1}  \left( \rho_{4n_l}^2 +\sigma_{4n_l}^2\right) \left[-p \slashed{D}_4 r + 2\vartheta^+\left(\rho_{4n_l}\right) r \, tr H + \mu \right]    \nonumber \\ \leq \Bigg|\int_{\mathcal{D}^{\tau_2}_{\tau_1}} e_{p}\left[\beta_{4n_l}, \left(\rho_{4n_l}, \sigma_{4n_l}\right) \right] \Bigg|+ B \cdot \mathbb{I}^{l+1,deg}_{R-M<r<R+M}\left[W\right]\left(\mathcal{M}\left({\tau_1,\tau_2}\right)\right)  \nonumber \\ + 2 \int_{N_{out}\left(S^2_{\tau_1,R}\right)} dv \, d\omega_2 \, r^{p} \|\beta_{4n_l}\|^2
\end{align}
with the error-term
\begin{align}
e_{p}\left[\beta_{4n_l}, \left(\rho_{4n_l}, \sigma_{4n_l}\right) \right] = r^{p-2} \left(1-\mu\right)^q \Big( \hat{E}_3^{4n_l} \left(\beta\right) \cdot \beta_{4n_l} + \tilde{E}_4^{4n_l} \left(\rho\right) \cdot \rho_{4n_l} \nonumber \\ +\tilde{E}_4^{4n_l} \left(\sigma\right) \cdot \sigma_{4n_l} + \slashed{div} \left[ \left(-\rho_{4n_l} ,\sigma_{4n_l} \right) \cdot \beta_{4n_l}\right] \Big) \, . \nonumber
\end{align}
The left hand side of (\ref{4brs}) is positive for $r>R$ sufficiently large if
\begin{equation}  \label{cond4brs}
4 + 4 \left(\textrm{number of $3$'s in $n_l$}\right) < {p} \leq 10 + 4 \left(\textrm{number of $4$'s in $n_l$}\right) \, .
\end{equation}
Note that only the upper bound is relevant, as we already control the $\alpha_{4n_l}$ spacetime term from the previous step.

Thirdly, combining the Bianchi equations for $\left(\rho,\sigma\right)_{34n_l}$ and $\underline{\beta}_{44n_l}$
\begin{align}  \label{4rsbb}
\frac{1}{2}\int_{N_{out}\left(S^2_{\tau_2,R}\right)} dv \, d\omega_2  r^{p} \left( \rho_{4n_l}^2 +\sigma_{4n_l}^2\right) \, + \frac{1}{2}\int_{\mathcal{I}^{\tau_2-R}_{\tau_1-R}} du \, d\omega_2 \,  r^{p} \|\underline{\beta}_{4n_l}\|^2  \nonumber \\
+ \int_{\mathcal{D}^{\tau_2}_{\tau_1}}  dt^\star dr d\omega \,  r^{{p}-1} \left( \rho_{4n_l}^2 +\sigma_{4n_l}^2\right) \,  \Big[-\frac{{p}}{2}\slashed{D}_3 r + \vartheta^-\left(\rho_{4n_l}\right) r \, tr \underline{H}+ \mathcal{O} \left(\frac{1}{r}\right) \Big] \nonumber \\     
+ \frac{1}{2}\int_{\mathcal{D}^{\tau_2}_{\tau_1}} dt^\star dr d\omega \, r^{p-1} \|\underline{\beta}_{4n_l}\|^2  \left[p \slashed{D}_4 r + 2\vartheta^+\left(\underline{\beta}_{4n_l}\right) r \,  tr H + \mu\right]   \leq  \nonumber \\  \Bigg|\int_{\mathcal{D}^{\tau_2}_{\tau_1}} e_{p}\left[\left(\rho_{4n_l}, \sigma_{4n_l}, \underline{\beta}_{4n_l}\right) \right] \Bigg| + B \cdot \mathbb{I}^{l+1,deg}_{R-M<r<R+M}\left[W\right]\left(\mathcal{M}\left({\tau_1,\tau_2}\right)\right) \nonumber \\ + \frac{1}{2}\int_{N_{out}\left(S^2_{\tau_1,R}\right)} dv \, d\omega_2  \left( \rho_{4n_l}^2 +\sigma_{4n_l}^2\right)\, ,
\end{align}
with error
\begin{align}
e_{p}\left[\left(\rho_{4n_l}, \sigma_{4n_l}\right), \underline{\beta}_{4n_l} \right] = r^{p-2} \left(1-\mu\right)^q \Big( \hat{E}_3^{4n_l}\left(\rho\right) \cdot \rho_{4n_l}
+\hat{E}_3^{4n_l}\left(\sigma\right)\cdot \sigma_{4n_l} \nonumber \\
+ \tilde{E}_4^{4n_l}\left(\underline{\beta}\right) \cdot \underline{\beta}_{4n_l} + \slashed{div} \left[ \left(\rho_{4n_l} ,\sigma_{4n_l} \right) \cdot \underline{\beta}_{4n_l} \right] \Big)
\nonumber
\end{align}
and a good left hand side for 
\begin{align}  \label{cond4rsbb}
6 + 4 \left(\textrm{number of $3$'s in $n_l$}\right) < {p} \leq 8 + 4 \left(\textrm{number of $4$'s in $n_l$}\right) \, .
\end{align}
Finally, using $\underline{\beta}_{34n_l}$ and $\underline{\alpha}_{44n_l}$
\begin{align} \label{4abbb}
\frac{1}{2} \int_{N_{out}\left(S^2_{\tau_2,R}\right)} dv \, d\omega_2 \, r^{p} \|\underline{\beta}_{4n_l}\|^2 + \frac{1}{2} \int_{\mathcal{I}^{\tau_2-R}_{\tau_1-R}} du \, d\omega_2 \, r^{{p}} \| \underline{\alpha}_{4n_l}\|^2 \nonumber \\
+ \int_{\mathcal{D}^{\tau_2}_{\tau_1}} dt^\star dr d\omega \,  \|\underline{\beta}_{4n_l}\|^2 \, r^{{p}-1} \Big[-\frac{{p}}{2}\slashed{D}_3 r + \vartheta^-\left(\underline{\beta}_{4n_l}\right) r \, tr \underline{H}+ \mathcal{O} \left(\frac{1}{r}\right) \Big] \nonumber \\     
+ \frac{1}{2}\int_{\mathcal{D}^{\tau_2}_{\tau_1}} dt^\star dr d\omega \, r^{p-1} \| \underline{\alpha}_{4n_l}\|^2  \left[-p \slashed{D}_4 r + 2\vartheta^+\left(\underline{\alpha}_{4n_l}\right)r \,  tr H + \mu\right]  \leq \nonumber \\ \Bigg|\int_{\mathcal{D}^{\tau_2}_{\tau_1}}  e_{p}\left[\underline{\beta}_{4n_l}, \underline{\alpha}_{4n_l} \right] \Bigg| + B \cdot \mathbb{I}^{l+1,deg}_{R-M<r<R+M}\left[W\right]\left(\mathcal{M}\left({\tau_1,\tau_2}\right)\right) \nonumber \\ + 2 \int_{N_{out}\left(S^2_{\tau_1,R}\right)} dv \, d\omega_2 \, r^{p} \|\underline{\beta}_{4n_l}\|^2
\end{align}
with error
\begin{align}
e_{p}\left[\underline{\beta}_{4n_l}, \underline{\alpha}_{4n_l} \right] = r^{p-2} \left(1-\mu\right)^q \left( \hat{E}_3^{4n_l}\left(\underline{\beta}\right) \cdot \underline{\beta}_{4n_l} + \frac{1}{2} \tilde{E}_4^{4n_l}\left(\underline{\alpha}\right)\cdot \underline{\alpha}_{4n_l} -2 \slashed{div} \left[\underline{\alpha}_{4n_l} \cdot \underline{\beta}_{4n_l} \right] \right) \nonumber
\end{align}
and a good left hand side for 
\begin{equation}  \label{cond4abbb}
8 + 4 \left(\textrm{number of $3$'s in $n_l$}\right) < {p} \leq 6 + 4 \left(\textrm{number of $4$'s in $n_l$}\right) .
\end{equation}
\subsubsection{The missing derivatives in the $3$-direction} \label{mixedder}
Controlling all $4$-derivatives is not quite sufficient to control all derivatives from the Bianchi equations: We will also need $\underline{\beta}_{3n_l}$ and $\underline{\alpha}_{3n_l}$. The first can be obtained by 
starting from the Bianchi equation for $\underline{\alpha}_{43n_l}$ and then using $\underline{\beta}_{34n_l}$, which yields
\begin{align} \label{4ab3bb}
\frac{1}{2} \int_{N_{out}\left(S^2_{\tau_2,R}\right)} dv \, d\omega_2 \, r^{p} \|\underline{\alpha}_{4n_l}\|^2 + \int_{\mathcal{I}^{\tau_2-R}_{\tau_1-R}} du \, d\omega_2 \, r^{p}\|\underline{\beta}_{3n_l}\|^2 \nonumber \\
+ \int_{\mathcal{D}^{\tau_2}_{\tau_1}} dt^\star dr d\omega \,  \, r^{{p}-1}\|\underline{\alpha}_{4n_l}\|^2 \Big[-\frac{{p}}{2}\slashed{D}_3 r + \vartheta^-\left(\underline{\alpha}_{3n_l}\right) tr \left(\underline{H}\right) r+ \mathcal{O} \left(\frac{1}{r}\right) \Big] \nonumber \\ 
+ \int_{\mathcal{D}^{\tau_2}_{\tau_1}}dt^\star dr d\omega \, r^{p-1} \|\underline{\beta}_{3n_l}\|^2 \Big[2\vartheta^+\left(\underline{\beta}_{4n_l}\right)r \, tr H  -p \slashed{D}_4 r  + \mu \Big] \leq \nonumber \\ \Bigg|\int_{\mathcal{D}^{\tau_2}_{\tau_1}} e_{p}\left[\underline{\alpha}_{4n_l}, \underline{\beta}_{3n_l}\right] \Bigg| + B \cdot \mathbb{I}^{l+1,deg}_{R-M<r<R+M}\left[W\right]\left(\mathcal{M}\left({\tau_1,\tau_2}\right)\right) \nonumber \\ + \frac{1}{2} \int_{N_{out}\left(S^2_{\tau_1,R}\right)} dv \, d\omega_2 \, r^{p} \|\underline{\alpha}_{4n_l}\|^2 
\end{align}
with error
\begin{align}
e_{p}\left[\underline{\alpha}_{4n_l}, \underline{\beta}_{3n_l}\right] = r^{p-2} \left(1-\mu\right)^q \Big(\tilde{E}_4^{3n_l} \left(\underline{\alpha}\right) \cdot \underline{\alpha}_{4n_l} + 2   \, \tilde{E}_3^{4n_l} \cdot \underline{\beta}_{3n_l} -2 \slashed{div} \left[ \underline{\alpha}_{4n_l} \cdot  \underline{\beta}_{3n_l} \right] \Big) \nonumber 
\end{align}
and positive left hand side for
\begin{equation}  \label{cond4ab3bb}
10 + 4 \left(\textrm{number of $3$'s in $n_l$}\right) < {p} \leq 4 + 4 \left(\textrm{number of $4$'s in $n_l$}\right) \, .
\end{equation}
Note however that only the upper bound is going to be relevant in view of the previous estimate (\ref{4abbb}) for the spacetime term of $\|\underline{\alpha}_{4n_l}\|^2$.
Finally, from the Bianchi equations for $\underline{\beta}_{33n_l}$ and $\underline{\alpha}_{43n_l}$
\begin{align} \label{3abbb}
\frac{1}{2} \int_{N_{out}\left(S^2_{\tau_2,R}\right)} r^{p} \|\underline{\beta}_{3n_l}\|^2  dv \, d\omega_2 \, +\frac{1}{4} \int_{\mathcal{I}^{\tau_2-R}_{\tau_1-R}}  du \, d\omega_2 \,  r^{p} \|\underline{\alpha}_{3n_l}\|^2 \nonumber \\
+ \int_{\mathcal{D}^{\tau_2}_{\tau_1}} dt^\star dr d\omega \,  \|\underline{\beta}_{3n_l}\|^2 \, r^{{p}-1} \Big[-\frac{{p}}{2}\slashed{D}_3 r + \vartheta^-\left(\underline{\beta}_{3n_l}\right) tr \left(\underline{H}\right) r + \mathcal{O} \left(\frac{1}{r}\right) \Big] \nonumber \\ 
+ \frac{1}{2} \int_{\mathcal{D}^{\tau_2}_{\tau_1}} dt^\star dr d\omega \,  \|\underline{\alpha}_{3n_l}\|^2  r^{p-1} \Big[2\vartheta^+\left(\underline{\alpha}_{3n_l}\right) r \, tr H  -p \slashed{D}_4 r  + \mu \Big] \nonumber \\  \leq \Big| \int_{\mathcal{D}^{\tau_2}_{\tau_1}} e_p\left[\underline{\beta}_{3n_l}, \underline{\alpha}_{3n_l}\right] \Big| + B \cdot \mathbb{I}^{l+1,deg}_{R-M<r<R+M}\left[W\right]\left(\mathcal{M}\left({\tau_1,\tau_2}\right)\right) \nonumber \\  + \frac{1}{2} \int_{N_{out}\left(S^2_{\tau_1,R}\right)} r^{p} \|\underline{\beta}_{3n_l}\|^2  dv \, d\omega_2
\end{align}
\begin{align}
e_{p}\left[\underline{\beta}_{3n_l}, \underline{\alpha}_{3n_l}\right] = r^{p} \left(1-\mu\right)^q \left( \tilde{E}_3^{3n_l}\left(\underline{\beta}\right) \cdot \underline{\beta}_{3n_l} + \frac{1}{2} \hat{E}_4^{3n_l}\left(\underline{\alpha}\right)\cdot  \underline{\alpha}_{3n_l} - \slashed{div} \left[ \underline{\alpha}_{3n_l} \cdot  \underline{\beta}_{3n_l} \right] \right)
\nonumber 
\end{align}
with the left hand side being positive for $r>R$ sufficiently large if
\begin{equation}  \label{cond3abbb}
12 + 4 \left(\textrm{number of $3$'s in $n_l$}\right) < p \leq 2 + 4 \left(\textrm{number of $4$'s in $n_l$}\right) \, .
\end{equation}
As usual, in view of (\ref{4ab3bb}) only the upper bound is going to be relevant.
\subsubsection{Angular momentum operators} \label{angremark}
The derivatives obtained so far are in principle sufficient to control all derivatives using the Bianchi equations only, just as we saw in section \ref{redsection}. However, for the optimal $r$-weights, we need to commute with angular momentum operators as well.

We first note that we can derive the identical $4$+$2$ estimates of sections \ref{4der} and \ref{mixedder} for the $\Omega_i$-commuted quantities. We will see the same constraints on the admissible $p$'s (as commutation with the $\Omega_i$ doesn't change the signature of the curvature components, cf.~section \ref{abrsabbb}), while the errorterms are replaced by
\begin{align} 
e_{p}\left[\Omega_i^j \alpha_{4n_l}, \Omega_i^j \beta_{4n_l}\right] = r^{p} \Big(\hat{E}_3^{4n_l\Omega_i^j}\left(\alpha\right) \cdot \Omega_i^j \alpha_{4n_l} + 2\tilde{E}_4^{4n_l \Omega_i^j} \left(\beta\right) \cdot \Omega_i^j \beta_{4n_l} \nonumber \\ 
+ 2\slashed{div} \left[ \Omega_i^j \alpha_{4n_l} \cdot  \Omega_i^j \beta_{4n_l} \right] \Big) \nonumber
\end{align}
and analogously for the other components. We also note
\begin{lemma} \label{pcc}
Given a covariant tensor $f$ tangent to the two-spheres $S^2_{t^\star,u}$ on $\mathcal{R}$ there exists a constant $c_0$ close to $1$ such that
\begin{equation}
 \frac{1}{c_0} \int_{S^2} r^2 | \slashed{\nabla} f|^2 d\mu_{\gamma} \leq \int_{S^2} | \slashed{\mathcal{L}}_{\Omega} f|^2 d\mu_{\gamma} \leq c_ 0 \int_{S^2} \left( |f|^2 + r^2 | \slashed{\nabla} f|^2 \right) d\mu_{\gamma}
\end{equation}
where we denote $| \slashed{\mathcal{L}}_{\Omega} f|^2=\sum_i |  \slashed{\mathcal{L}}_{\Omega_i} f|^2$.
Moreover, for $f$ a scalar we have
\begin{equation}
 \frac{1}{c_0} \int_{S^2} r^2 | \slashed{\nabla} f|^2 d\mu_{\gamma} \leq \int_{S^2} | \slashed{\mathcal{L}}_{\Omega} f|^2 d\mu_{\gamma} \leq c_ 0 \int_{S^2} \left(r^2 | \slashed{\nabla} f|^2 \right) d\mu_{\gamma}
\end{equation}
while for $f$ a one-form or a symmetric two-covariant traceless tensor tangent to the surfaces $S^2$ 
\begin{equation}
 \frac{1}{c_0} \int_{S^2} |f|^2 d\mu_{\gamma} \leq \int_{S^2} | \mathcal{L}_{\Omega} f|^2 d\mu_{\gamma} \, .
\end{equation}
\end{lemma}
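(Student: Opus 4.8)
\textbf{Proof plan for Lemma \ref{pcc}.}

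The plan is to reduce everything to a spectral statement about the operators $\slashed{\mathcal{D}}_1$, $\slashed{\mathcal{D}}_2$, $\slashed{\mathcal{D}}_1^\star$, $\slashed{\mathcal{D}}_2^\star$ on the round sphere, using the fact that the $\Omega_i$ span the tangent space to $SO(3)$ acting on $S^2$ and that the $\slashed{\mathcal{L}}_{\Omega_i}$ for a \emph{nearly round} metric $\slashed{g}$ (recall $\slashed{g}$ is $\epsilon$-close to $r^2 \gamma_{S^2}$ by the ultimately Schwarzschildean assumption) differ from the exactly-round ones by terms controlled by $\epsilon$; this is where the constant $c_0$ ``close to $1$'' comes from. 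First I would establish the identity, valid for any $S^2$-tangent tensor $f$ on the \emph{exact} round metric,
\begin{equation}
\sum_i \int_{S^2} |\slashed{\mathcal{L}}_{\Omega_i} f|^2 \, d\mu_\gamma = \int_{S^2}\left( \lambda_k \, r^2 |\slashed{\nabla} f|^2 + \mu_k |f|^2 \right) d\mu_\gamma \, ,
\end{equation}
where $(\lambda_k,\mu_k)$ depend only on the rank $k$ of $f$, by expanding $f$ in (tensor) spherical harmonics and using that $\sum_i \slashed{\mathcal{L}}_{\Omega_i}^\star \slashed{\mathcal{L}}_{\Omega_i}$ commutes with the round Laplacian and is diagonalized by the same eigenbasis. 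For scalars this is the classical fact $\sum_i \Omega_i \Omega_i = r^2 \slashed{\Delta}$; for one-forms and symmetric traceless $2$-tensors the Casimir of $SO(3)$ in the relevant representation contributes the extra zeroth-order term $\mu_k|f|^2$, and crucially $\mu_k>0$ in these two cases (the $l=0$, and for one-forms also the obstruction, modes are absent among genuine tensors) — this gives the third displayed inequality with the lower bound $\tfrac{1}{c_0}\|f\|^2$ and, combined with the Poincaré/Bochner inequality $\int |f|^2 \leq c\, r^2\int |\slashed{\nabla} f|^2$ for such $f$ (Lemma-type statements in \cite{ChristKlei}), also the refined scalar-vs-tensor dichotomy in the second and first inequalities.

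The key steps, in order, are: (1) record the commutation/covariance properties — $\slashed{\mathcal{L}}_{\Omega_i}$ is a derivation, preserves the rank and trace-free condition, and the deformation tensors $\phantom{}^{(\Omega_i)}\pi$ relative to $\slashed{g}$ are $\epsilon$-small, so all error terms from replacing the perturbed $\slashed{g}$ by $r^2\gamma_{S^2}$ are absorbed into $c_0$; (2) prove the round-sphere identity above by harmonic decomposition, isolating the rank-dependent constants; (3) for the scalar case, note $\mu_0=0$ and $\lambda_0=1$, giving the exact two-sided bound $\tfrac{1}{c_0}r^2\|\slashed{\nabla}f\|^2 \le \sum_i\|\slashed{\mathcal{L}}_{\Omega_i}f\|^2 \le c_0 r^2\|\slashed{\nabla}f\|^2$; (4) for one-forms and symmetric traceless $2$-tensors, note $\mu_k>0$, which immediately yields $\tfrac{1}{c_0}\|f\|^2 \le \sum_i \|\slashed{\mathcal{L}}_{\Omega_i}f\|^2$; (5) for the general covariant tensor, combine (3)–(4) with the Poincaré inequality to get the upper bound $\sum_i\|\slashed{\mathcal{L}}_{\Omega_i}f\|^2 \le c_0(\|f\|^2 + r^2\|\slashed{\nabla}f\|^2)$ and the lower bound $\tfrac{1}{c_0}r^2\|\slashed{\nabla}f\|^2 \le \sum_i\|\slashed{\mathcal{L}}_{\Omega_i}f\|^2$, which is the first display. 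I would also use the standard identities $\slashed{\mathcal{D}}_1^\star\slashed{\mathcal{D}}_1$, $2\slashed{\mathcal{D}}_2^\star\slashed{\mathcal{D}}_2$ expressed in terms of $\slashed{\Delta}$ and the Gauss curvature $K = r^{-2} + O(\epsilon r^{-2})$ to make the zeroth-order constants explicit where needed.

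The main obstacle I anticipate is step (2) done with enough care that the rank-dependence of the constants is transparent \emph{and} the perturbation argument of step (1) genuinely gives a $c_0$ that can be taken arbitrarily close to $1$ by shrinking $\epsilon$: one must check that commuting $\slashed{\mathcal{L}}_{\Omega_i}$ through the (perturbed) covariant derivative $\slashed{\nabla}$ produces only terms of the schematic form $\phantom{}^{(\Omega_i)}\pi \cdot \slashed{\nabla} f$ and $(\slashed{\nabla}\,\phantom{}^{(\Omega_i)}\pi)\cdot f$, both $\epsilon$-controlled, and that the $r$-weights match (angular derivatives carry the weight $r^{-1}$, consistent with the $r^2|\slashed{\nabla}f|^2$ normalization used throughout section \ref{norms}). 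A secondary subtlety is that on the perturbed spheres the $\Omega_i$ are only \emph{approximately} Killing, so $\sum_i\slashed{\mathcal{L}}_{\Omega_i}^\star\slashed{\mathcal{L}}_{\Omega_i}$ is not exactly a multiple of $r^2\slashed{\Delta}$ plus a constant; however this discrepancy is again $O(\epsilon)$ and harmless. Beyond these points the argument is routine representation theory of $SO(3)$ combined with elliptic estimates already available in \cite{ChristKlei}.
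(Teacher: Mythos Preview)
The paper does not supply a proof of Lemma \ref{pcc}; it is stated without argument, presumably because the content is classical (these are the standard angular-momentum/Poincar\'e inequalities on nearly-round spheres, cf.~\cite{ChristKlei}, Chapter 2). Your proposal is therefore not in conflict with any argument in the paper, and the approach you outline --- establish the exact identities on the round sphere via the $SO(3)$ Casimir and tensor spherical harmonics, then perturb using the $\epsilon$-smallness of ${}^{(\Omega_i)}\pi$ with respect to the actual $\slashed{g}$ --- is correct and is essentially the standard route.

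One small caution on your step (2): the clean formula $\sum_i\int|\slashed{\mathcal{L}}_{\Omega_i}f|^2 = \int(\lambda_k r^2|\slashed{\nabla}f|^2 + \mu_k|f|^2)$ with \emph{numerical} constants $\lambda_k,\mu_k$ is exactly right for scalars ($\lambda_0=1$, $\mu_0=0$), but for one-forms and symmetric traceless $2$-tensors the identity is most transparently written as $\sum_i\slashed{\mathcal{L}}_{\Omega_i}^\star\slashed{\mathcal{L}}_{\Omega_i} = -r^2\slashed{\Delta} + c_k K r^2$ with $K$ the Gauss curvature, and one then reads off $\mu_k>0$ from the Bochner-type formulas relating $\slashed{\Delta}$ to $\slashed{\mathcal{D}}_1^\star\slashed{\mathcal{D}}_1$ and $2\slashed{\mathcal{D}}_2^\star\slashed{\mathcal{D}}_2$ (exactly as you indicate at the end of your plan). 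This is a cosmetic point; your argument goes through.
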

This lemma allows us to gain improved decay in $r$ for the $\slashed{\nabla}$-derivatives as exhibited by the $\mathbb{E}\left[W\right]$-energies (cf.~(\ref{ennorm})).
\subsubsection{Retrieving the remaining derivatives}
Controlling the derivatives appearing in the estimates of sections \ref{4der} and \ref{mixedder}  suffices to control all remaining derivatives from the Bianchi equation (Cf.~Lemma \ref{allfrom3}):
\begin{lemma} \label{allfrom4}
For $j\geq 0$
\begin{align}
\int_{S^2} \Big( r^{q_2}\|\slashed{\nabla} \slashed{\mathcal{L}}_{\Omega_i}^j \alpha_{n_l}\|^2 + r^{q_3}\|\slashed{\nabla}  \slashed{\mathcal{L}}_{\Omega_i}^j \beta_{n_l} \|^2 + r^{q_4}|\slashed{\nabla} \slashed{\mathcal{L}}_{\Omega_i}^j \left( \rho,\sigma\right)_{n_l}|^2 + r^{q_5}\|\slashed{\nabla}  \slashed{\mathcal{L}}_{\Omega_i}^j\underline{\beta}_{n_l} \|^2 \nonumber \\ + r^{q_6}\|\slashed{\nabla}  \slashed{\mathcal{L}}_{\Omega_i}^j \underline{\alpha}_{n_l}\|^2 + r^{q_3}\| \slashed{\mathcal{L}}_{\Omega_i}^j \alpha_{3n_l}\|^2 + r^{q_4} \| \slashed{\mathcal{L}}_{\Omega_i}^j \beta_{3n_l}\|^2 + r^{q_5} |\slashed{\mathcal{L}}_{\Omega_i}^j\left(  \rho, \sigma\right)_{3n_l}|^2\Big) \leq 
\nonumber \\ 
B  \int_{S^2} \Big(r^{q_1}\|\slashed{\mathcal{L}}_{\Omega_i}^j{\alpha}_{4n_l}\|^2 + r^{q_2}\|\slashed{\mathcal{L}}_{\Omega_i}^j{\beta}_{4n_l}\|^2  + r^{q_3}\|\slashed{\mathcal{L}}_{\Omega_i}^j\left(\rho,\sigma\right)_{4n_l}\|^2  +r^{q_4} \|\slashed{\mathcal{L}}_{\Omega_i}^j\underline{\beta}_{4n_l}\|^2  \nonumber \\  + r^{q_5}\|\slashed{\mathcal{L}}_{\Omega_i}^j\underline{\alpha}_{4n_l}\|^2  +r^{q_6}  \|\slashed{\mathcal{L}}_{\Omega_i}^j\underline{\beta}_{3n_l}\|^2  + r^{q_7}\|\slashed{\mathcal{L}}_{\Omega_i}^j\underline{\alpha}_{3n_l}\|^2   \Big)  + B \int_{S^2} Low^{\Omega_i^j}_{n_l} \left(q_2, ... , q_6\right) \nonumber 
\end{align}
with
\begin{align}
Low_{n_l}^{\Omega_i^j} \left(q_2, ... , q_6\right) = 
 \Big[r^{q_2}  \| E_4^{n_l\Omega_i^j}\left(\beta\right)\|^2   + r^{q_3} \left(\| E_3^{n_l\Omega_i^j}\left(\alpha\right)\|^2  + \| E_4^{n_l\Omega_i^j}\left(\rho\right)\|^2 \right) \nonumber \\
+ r^{q_4} \left(\| E_3^{n_l\Omega_i^j}\left(\beta\right)\|^2 + \| E_4^{n_l\Omega_i^j}\left(\underline{\beta}\right)\|^2\right) \nonumber \\+ r^{q_5} \left(\| E_3^{n_l\Omega_i^j}\left(\rho,\sigma\right)\|^2 + \| E_4^{n_l\Omega_i^j}\left(\underline{\alpha}\right)\|^2 \right) + r^{q_6} \|E_3^{n_l\Omega_i^j}\left(\underline{\beta}\right)\|^2 \Big] \nonumber
\end{align}
\end{lemma}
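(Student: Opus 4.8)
The statement to prove is Lemma \ref{allfrom4}: on each sphere $S^2_{t^\star,u}$ in the asymptotic region, the angular derivatives of the null-curvature components (plus the $3n_l$-derivatives of $\alpha,\beta,\rho,\sigma$) are controlled, with appropriate $r$-weights, by the $4n_l$-derivatives (and the two $3n_l$-derivatives of $\underline\beta,\underline\alpha$) together with lower-order inhomogeneities. This is the asymptotic analogue of Lemma \ref{allfrom3}, the difference being that one must keep precise track of the $r$-weights $q_1,\dots,q_7$ entering the boundary-admissible matrices.

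\textbf{Plan of proof.} The proof is a purely algebraic manipulation of the commuted null-Bianchi equations of Proposition \ref{a3nk}--\ref{ab4nk} and the angular-commutation formulae, carried out on a fixed two-sphere. First I would fix the tuple $n_l$ and the power $m=j$ of $\slashed{\mathcal{L}}_{\Omega_i}$, and proceed by induction on the total number of derivatives $l$. The base step $l=0$, $j=0$ is essentially the observation that $\slashed{div}\alpha$, $\slashed{div}\beta$, $\slashed{\nabla}\rho$, etc., appear on the right-hand sides of the first-order Bianchi equations (\ref{Bianchi2}), (\ref{Bianchi5}), (\ref{Bianchi6}), (\ref{Bianchi8}), so that $\slashed{\nabla}\alpha$ is controlled by $\beta_4$ (an available $4$-derivative) plus $E_4(\beta)$, $\slashed{\nabla}\beta$ is controlled by $\rho_4,\sigma_4$ plus errors, $\slashed{\nabla}(\rho,\sigma)$ by $\beta_3$ (or $\underline\beta_4$) plus errors, and $\slashed{\nabla}\underline\beta$ by $\underline\alpha_4$ plus errors; finally $\slashed{\nabla}\underline\alpha$ is retrieved from $\underline\beta_3$ via (\ref{bb3nk}). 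In each case one reads off the $r$-weight by noting that $\slashed{div}$ and $\slashed{\nabla}$ are first-order so the weight of $\slashed{\nabla}\alpha$ must match that of $\beta_4$ (which is $q_2$), the weight of $\slashed{\nabla}\beta$ must match $\rho_4$ (weight $q_3$), and so on down the hierarchy; these are exactly the matchings encoded in the decay matrices of section \ref{decmatrix}.

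For the inductive step I would apply $\slashed{D}_3^{k_1}\slashed{D}_4^{k_2}\slashed{\nabla}^{k_3}$ (with $k_1+k_2+k_3=l$) and then $\slashed{\mathcal{L}}_{\Omega_i}^j$ to the first-order Bianchi equations, using the commuted equations already recorded in the excerpt. The elliptic-type estimate on $S^2$ for the Hodge systems $\slashed{\mathcal{D}}_1$, $\slashed{\mathcal{D}}_2$ (and Lemma \ref{pcc}, which relates $\slashed{\mathcal{L}}_{\Omega_i}$ to $\slashed{\nabla}$ with the correct powers of $r$) lets one bound $\|\slashed{\nabla}\slashed{\mathcal{L}}_{\Omega_i}^j u_{n_l}\|$ by $\|\slashed{\mathcal{L}}_{\Omega_i}^{j+1}u_{n_l}\|$, hence by a $4n_l$-derivative of a higher component plus lower-order terms; the key point is that each use of $\slashed{\nabla}$ on the sphere carries two extra powers of $r$ in the $L^2$-norm — this is the $r^{2k_2+2k_3}$ weight of the energies (\ref{ennorm}) — so that the weight indeed steps up by $2$ exactly as the $q_i\mapsto q_{i-1}$ shift in a boundary-admissible tuple dictates. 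Commutators $[\slashed{D}_a,\slashed{\mathcal{L}}_{\Omega_i}]$ and $[\slashed{\nabla},\slashed{D}_a]$ are handled by Lemma \ref{commutelemma}; they produce only terms with fewer slashed derivatives, absorbed into $Low^{\Omega_i^j}_{n_l}$, and (as always in the $r\geq R$ region) one must check the signature/decay-in-$r$ bookkeeping so that no term with two factors of the slowly-decaying components (the pairing of $\underline\alpha$ with weakly $r$-decaying objects) ever appears at top weight — this is the same signature argument used in \cite{ChristKlei} and in Proposition \ref{harderror}.

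\textbf{Main obstacle.} The conceptual content is routine (it is the asymptotic mirror of Lemma \ref{allfrom3}); the real work is the combinatorial verification that the weight $q_i$ assigned to each retrieved derivative is consistent with all of (\ref{cond4ab})--(\ref{cond3abbb}) simultaneously, i.e.\ that the boundary-admissible structure of the matrices $P_0,\dots,P_3,P_\rho$ is genuinely preserved under the retrieval. In particular one must confirm that retrieving $\slashed{\nabla}\underline\alpha$ from $\underline\beta_3$ — the single place where one goes \emph{up} the hierarchy rather than down — costs exactly the weight difference $q_6\to q_7$ built into the matrices, so that the estimate never demands a weight outside the admissible window $2<p_1\le 10-\delta$ etc. I would organize this by going through the seven slots of the tuple in order $\alpha\to\beta\to(\rho,\sigma)\to\underline\beta\to\underline\alpha$, at each stage citing the already-proved estimate for the next-higher slot, and collecting the inductive hypothesis and all error contributions into the displayed $Low^{\Omega_i^j}_{n_l}$ term. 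The statement for $j=0$ then follows directly, and the general $j$ is obtained by one further induction on $j$ using the $\slashed{\mathcal{L}}_{\Omega_i}$-commuted Bianchi equations listed in the excerpt together with Lemma \ref{pcc}.
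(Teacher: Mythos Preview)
Your core approach is correct and matches the paper's: use the commuted null-Bianchi equations together with the elliptic Hodge identities on $S^2$ (for the operators $\slashed{\mathcal{D}}_1,\slashed{\mathcal{D}}_1^\star,\slashed{\mathcal{D}}_2,\slashed{\mathcal{D}}_2^\star$) to bound each angular derivative on the left by the corresponding $4n_l$- or $3n_l$-derivative on the right. The paper does exactly this, e.g.\ from $\rho_{4n_l}=\slashed{div}\beta_{n_l}+E_4^{n_l}(\rho)$ and $\sigma_{4n_l}=-\slashed{curl}\beta_{n_l}+E_4^{n_l}(\sigma)$ one gets $\int_{S^2}\|\slashed{\nabla}\beta_{n_l}\|^2\le\int_{S^2}\|\mathcal{D}_1\beta_{n_l}\|^2\le\int_{S^2}\|(\rho,\sigma)_{4n_l}\|^2+\|E_4^{n_l}(\rho,\sigma)\|^2$, and then $\alpha_{3n_l}=-2\slashed{\mathcal{D}}_2^\star\beta_{n_l}+E_3^{n_l}(\alpha)$ yields $\int_{S^2}\|\alpha_{3n_l}\|^2\le 4\int_{S^2}\|\slashed{\nabla}\beta_{n_l}\|^2+2\int_{S^2}\|E_3^{n_l}(\alpha)\|^2$; the remaining pairs are treated identically.

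However, you overcomplicate the argument in two ways. First, no induction on $l$ is needed: the lemma is stated for a \emph{fixed} tuple $n_l$, and the commuted Bianchi equations (\ref{a3nk})--(\ref{ab4nk}) already give the required algebraic relations at that level---one simply integrates them over $S^2$ and integrates by parts. Second, and more importantly, your ``main obstacle'' is not an obstacle at all: the lemma holds for \emph{arbitrary} weights $q_1,\dots,q_7$, with no admissibility constraint whatsoever. The weights are just carried along as inert multiplicative factors $r^{q_i}$ (constant on each sphere), and the matching on the two sides---e.g.\ $r^{q_3}\|\slashed{\nabla}\beta_{n_l}\|^2$ versus $r^{q_3}\|(\rho,\sigma)_{4n_l}\|^2$---is dictated purely by which Bianchi equation is used, not by any inequality among the $q_i$. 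The admissibility conditions (\ref{cond4ab})--(\ref{cond3abbb}) enter only later, in Proposition~\ref{zwischri}, when one has to make the \emph{spacetime} main terms of the multiplier estimates positive; they play no role in this sphere-by-sphere elliptic lemma. (In particular your remark about $q_6\to q_7$ is off: $r^{q_6}\|\slashed{\nabla}\underline\alpha_{n_l}\|^2$ is controlled by $r^{q_6}\|\underline\beta_{3n_l}\|^2$, same weight.) Strip out the induction and the admissibility discussion and your proof is the paper's proof.
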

\begin{proof}
This follows by integrating the commuted Bianchi equations over $S^2$ and integrating by parts. For instance, in the case $j=0$, from the bound on ${\rho}_{4n_l}$:
\begin{align}
\int_{S^2} \|\slashed{\nabla} \beta_{n_l}\|^2 = -\int_{S^2} \beta_{n_l} \slashed{\nabla}^2 \beta_{n_l} = \int_{S^2}  \beta_{n_l}  \mathcal{D}^\star_1 \mathcal{D}_1 \beta_{n_l} - K \| \beta_{n_l} \|^2  \nonumber \\ \leq \int_{S^2} \|\mathcal{D}_1 \beta_{n_l}\|^2 \leq \int_{S^2} \|\left(\rho,\sigma\right)_{4n_l}\|^2 +\int_{S^2} \|E_4^{n_l}\left(\rho\right)\|^2 + \|E_4^{n_l}\left(\sigma\right)\|^2
\end{align}
where $K$ is the Gauss curvature of the $S^2$. This in turn controls
\begin{align}
 \int_{S^2} \|\alpha_{3n_l}\|^2 &\leq 8 \int_{S^2} \|\mathcal{D}^\star_2 \beta_{n_l}\|^2 + 2 \int_{S^2} \|E_3^{n_l}\left(\alpha\right)\|^2 \nonumber \\ &\leq 8 \int_{S^2}  \beta_{n_l}\mathcal{D}_2 \mathcal{D}^\star_2 \beta_{n_l}+ 2 \int_{S^2} \|E_3^{n_l}\left(\alpha\right)\|^2 + 2 \int_{S^2} \|E_3^{n_l}\left(\alpha\right)\|^2 \nonumber \\
 &\leq -4 \int_{S^2} \beta_{n_l} \slashed{\Delta} \beta_{n_l} - 4 K \|\beta_{n_l}\|^2 + 2 \int_{S^2} \|E_3^{n_l}\left(\alpha\right)\|^2 \nonumber \\ & \leq  4 \int_{S^2} \|\slashed{\nabla} \beta_{n_l}\|^2 + 2 \int_{S^2} \|E_3^{n_l}\left(\alpha\right)\|^2 
\end{align}
In other words,
\begin{align}
 \int_{S^2} \|\alpha_{3n_l}\|^2 + \|\slashed{\nabla} \beta_{n_l}\|^2 \leq \nonumber \\
 8\Big(  \int_{S^2} \|\left(\rho,\sigma\right)_{4n_l}\|^2 + \|E_4^{n_l}\left(\rho\right)\|^2 + \|E_4^{n_l}\left(\sigma\right)\|^2 + \|E_3^{n_l}\left(\alpha\right)\|^2 \Big)
 \end{align}
 Similarly, we estimate the pair $\slashed{\nabla} \left(\rho_{n_l}, \sigma_{n_l}\right)$ and $\beta_{3n_l}$ from $\underline{\beta}_{4n_l}$, the pair $\slashed{\nabla} \underline{\beta}$ and $\left(\rho_{n_l3}, \sigma_{n_l3}\right)$ from $\underline{\alpha}_{4n_l}$, and finally
\begin{align}
\int_{S^2} \|\slashed{\nabla} \underline{\alpha}_{n_l}\|^2 = -\int_{S^2} \underline{\alpha}_{n_l} \slashed{\nabla}^2 \underline{\alpha}_{n_l} = \int_{S^2}  \underline{\alpha}_{n_l}  \mathcal{D}^\star_2 \mathcal{D}_2 \underline{\alpha}_{n_l} - 2K \| \underline{\alpha}_{n_l} \|^2  \nonumber \\ \leq \int_{S^2} \|\mathcal{D}_2 \underline{\alpha}_{n_l}\|^2 \leq \int_{S^2} \|\underline{\beta}_{3n_l}\|^2 +\int_{S^2} \|E_3^{n_l}\left(\underline{\beta}\right)\|^2
\end{align}
\end{proof}

\subsubsection{The summed asymptotic estimate}
The claim is that we can add up the asymptotic estimates to estimate all derivatives at order $k+1$ with the weights appearing in the $\mathbb{E}$-energy. Before we do that let us note the following lemma, which establishes that permutations of the $n_l$ are lower order.
\begin{lemma} \label{permcommute}
Let $u \in \{\alpha,\beta, \widehat{\rho}, \sigma, \underline{\beta}, \underline{\alpha}\}$ be a curvature component and $n_p$ a $p$-tuple of $3$'s and $4$'s. We have
\begin{align}
\|u_{n_p}\|^2 \leq \|u_{perm(n_p)}\|^2 + \frac{B}{r^4} \sum_{n_{p-2}} \left[\|\slashed{D}_3 u_{n_{p-2}}\|^2 + \|\slashed{D}_4 u_{n_{p-2}}\|^2  + \|\slashed{\nabla} u_{n_{p-2}}\|^2 + \|u_{n_{p-2}}\|^2\right] \nonumber
\end{align}
where the sum is over all $p-2$ tuples containing one less $3$ and one less $4$ then the original $n_p$. In addition
\begin{align}
\|u_{n_p4}\|^2 \leq \|\slashed{D}_4 u_{n_p}\|^2 + \frac{B}{r^2} \| u_{n_p}\|^2 \textrm{ \ \ \ and \ \ \ } \|u_{n_p3}\|^2 \leq \|\slashed{D}_3 u_{n_p}\|^2 + \frac{B}{r^2} \| u_{n_p}\|^2 \nonumber
\end{align}
\end{lemma}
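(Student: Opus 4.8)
The plan is to read both statements off the definition (\ref{subnot}) of the subscript notation together with the commutation identities already recorded in the previous sections. The second pair of inequalities is the elementary one: by (\ref{subnot}) one has $u_{n_p4}=\slashed{D}_4 u_{n_p}+\vartheta^+(u_{n_p})\,(tr H)\,u_{n_p}$ and $u_{n_p3}=\slashed{D}_3 u_{n_p}+\vartheta^-(u_{n_p})\,(tr \underline{H})\,u_{n_p}$, so the claimed bounds follow from $|tr H|+|tr \underline{H}|\le B/r$ (a consequence of (\ref{jo1})) together with Young's inequality, the cross term and the zeroth order quadratic term being absorbed into the $r^{-2}$-weighted term $\frac{B}{r^2}\|u_{n_p}\|^2$.

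For the first inequality I would write the target permutation $perm$ of the $p$-tuple $n_p$ as a composition of finitely many (a number depending only on $p\le k+1$) transpositions of adjacent entries. A transposition of two equal entries changes nothing, so it suffices to analyse a single swap $(\dots,3,4,\dots)\leftrightarrow(\dots,4,3,\dots)$. Acting on a subexpression of the form $(w_3)_4$, where $w$ is the $S^2$-tangent tensor obtained by applying the leading operators of the string to $u$, the identities (\ref{a43})--(\ref{bb3443}) together with Lemma \ref{commutelemma} give $(w_3)_4-(w_4)_3=\mathcal{C}(w)$, where $\mathcal{C}(w)$ collects the terms $w\big(\vartheta^-\slashed{D}_4 (tr \underline{H})-\vartheta^+\slashed{D}_3 (tr H)\big)$ and $\pm F_{34}(w)$. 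The key point is that every coefficient appearing in $\mathcal{C}$ is of pointwise size $O(1/r^2)$: the null structure equations (\ref{trHb4}), (\ref{trH3}) and (\ref{O43}) express $\slashed{D}_4(tr\underline{H})$ and $\slashed{D}_3(tr H)$ through $-\tfrac12\,(tr H)(tr\underline{H})$, $\rho$ and strictly lower order products, whence $|\slashed{D}_4(tr\underline{H})|+|\slashed{D}_3(tr H)|\le B/r^2$; and in $F_{34}$ one has $|\Omega|+|\underline{\Omega}|\le B/r^2$ by (\ref{jo3b}) together with the explicit values $\Omega_{SS}=0$, $\underline{\Omega}_{SS}=M/r^2$ of the Schwarzschild coefficients near infinity, $\|Z\|+\|\underline{Z}\|\le\epsilon/r^2$ by (\ref{jo5}), $Y=0$ by the gauge choice, and the curvature term $\epsilon_{AC}\sigma$ decays pointwise. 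Consequently $\|\mathcal{C}(w)\|\le\frac{B}{r^2}\big(\|\slashed{D}_3 w\|+\|\slashed{D}_4 w\|+\|\slashed{\nabla} w\|+\|w\|\big)$.

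Iterating over the transpositions then yields $u_{n_p}=u_{perm(n_p)}+\sum_\ell\mathcal{E}_\ell$, where each error $\mathcal{E}_\ell$ is obtained from a length-$p$ subscript string by deleting one entry "$3$" and one entry "$4$" and inserting the operator $\mathcal{C}$ in their place. Since the remaining outer operators are of $\slashed{D}$-type — and when they fall on the $O(1/r^2)$ coefficient they only improve its $r$-decay — each $\mathcal{E}_\ell$ is controlled pointwise by $\frac{B}{r^2}$ times a sum of terms carrying at most $p-1$ derivatives of $u$, precisely one of them being a bare $\slashed{D}_3$, $\slashed{D}_4$ or $\slashed{\nabla}$ and the rest of subscript type acting on a $(p-2)$-tuple $n_{p-2}$ obtained from $n_p$ by removing one $3$ and one $4$; here the second pair of inequalities (and the analogous ones converting between bare and subscript derivatives, up to $tr H$-corrections that are of still lower order) are used to pass freely between the two bookkeeping conventions. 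Squaring, summing over $\ell$ and over the admissible $n_{p-2}$, and using $\|u_{n_p}\|^2\le\|u_{perm(n_p)}\|^2+2\|u_{perm(n_p)}\|\,\big\|\sum_\ell\mathcal{E}_\ell\big\|+\big\|\sum_\ell\mathcal{E}_\ell\big\|^2$ — with the cross term absorbed because $\sum_\ell\mathcal{E}_\ell$ carries strictly fewer derivatives and comes with an $r^{-2}$ gain — gives the stated estimate.

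The hard part will be the combinatorial bookkeeping in this iteration: one has to verify that, when the outer subscript operators act on the inserted commutator term $\mathcal{C}$, no term with more than $p-1$ derivatives of $u$ is ever produced and no $r$-weight worse than $r^{-4}$ (after squaring) arises, and that the coefficients remain $O(1/r^2)$ after being differentiated by the outer operators — which in each case reduces to a further application of the null structure equations of section \ref{nseq} and the ultimately Schwarzschildean bounds (\ref{jo1})--(\ref{jo5}). Everything else is routine.
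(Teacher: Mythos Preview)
Your proposal is correct and follows precisely the route the paper indicates: the second pair of inequalities is read off directly from the definition (\ref{subnot}) and the pointwise bound on $tr H$, $tr\underline{H}$, while the first is obtained by writing the permutation as a product of adjacent transpositions and controlling each swap via Lemma~\ref{commutelemma} together with the $O(r^{-2})$ pointwise decay of the Ricci coefficients. The paper's own proof is just the one-line sketch ``follows by induction using Lemma~\ref{commutelemma} and the pointwise decay of the Ricci-coefficients''; you have simply unpacked that sketch in full.
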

\begin{proof}
The second statement is immediate from the definition.
The first statement follows by induction using Lemma \ref{commutelemma} and the pointwise decay of the Ricci-coefficients. 
\end{proof}
Proposition \ref{hiinf} now follows easily from the next Proposition, which reduces the problem at order $l+1$ to the problem at order $l$.

\begin{proposition} \label{zwischri}
Let $p= \left(p_1, ... , p_6, 0\right)$ be a boundary admissible tuple with associated bulk admissible tuple $\tilde{p} = \left(\tilde{p}_1, ... , \tilde{p}_7\right)$. For $l\geq 0$ we have
\begin{align}
\mathbb{E}^{l+1}_{p} \left[W\right] \left(N_{out}\left(S^2_{\tau_2,R}\right)\right) + \mathbb{I}^{l+1, deg}_{\tilde{p}} \left[W\right] \left(\mathcal{D}^{\tau_2}_{\tau_1}\right)  \leq B \cdot  \mathbb{E}^{l+1}_{p}  \left[W\right] \left(N_{out}\left(S^2_{\tau_1,R}\right)\right) \nonumber \\ + B \cdot \overline{\mathbb{I}}^{l+1,deg}_{R-M<r<R+M}\left[W\right]\left(\mathcal{M}\left({\tau_1,\tau_2}\right)\right) 
 + B \cdot Err^{l+1}_{P}  + B \cdot Low^{l+1}_P
\nonumber \\
+ B \sum_{i=1}^k  \left(\mathbb{E}^{i}_{p} \left[W\right] \left(N_{out}\left(S^2_{\tau_2,R}\right)\right) + \mathbb{I}^{i, deg}_{\tilde{p}} \left[W\right] \left(\mathcal{D}^{\tau_2}_{\tau_1}\right)\right) \nonumber \, .
\end{align}
\end{proposition}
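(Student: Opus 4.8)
\textbf{Proof plan for Proposition \ref{zwischri}.}

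The plan is to assemble the estimate by adding together, for a fixed order $l+1$, all the $r^p$-weighted multiplier estimates derived in sections \ref{4der} and \ref{mixedder}, commuted additionally with angular momentum operators $\Omega_i$ as in section \ref{angremark}, and then to use the div-curl estimates of Lemma \ref{allfrom4} (together with Lemma \ref{pcc}) to recover the remaining derivatives appearing in the $\mathbb{E}^{l+1}_p$ and $\mathbb{I}^{l+1,deg}_{\tilde{p}}$ energies. First I would fix an $(l+1)$-tuple, write it in the form $4 n_l$, $3 n_l$, etc., and apply the six families of estimates (\ref{4ab}), (\ref{4brs}), (\ref{4rsbb}), (\ref{4abbb}), (\ref{4ab3bb}), (\ref{3abbb}), each with its $\Omega_i^{j}$-commuted counterpart. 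The crucial bookkeeping point is that for each curvature component $u_{n_l}$ the signature $s(u_{n_l})$ determines, through the conditions (\ref{cond4ab}), (\ref{cond4brs}), (\ref{cond4rsbb}), (\ref{cond4abbb}), (\ref{cond4ab3bb}), (\ref{cond3abbb}), exactly the $r$-weight $p$ for which the boundary term and the spacetime bulk term have the right sign; the weight improves by $4$ for each extra $4$-derivative and is unaffected by the $\Omega_i$. One checks that the definition of boundary admissibility precisely guarantees that all these interval conditions are simultaneously satisfiable with the tuple $p$, and that taking an $\Omega_i$ or $\slashed{D}_4$ derivative gains the two powers of $r$ encoded by the factor $r^{2k_2+2k_3}$ in (\ref{ennorm}).

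Next I would handle the hierarchy: the estimates for the ``higher'' components (in the sense of the sign of $s$) produce wrong-signed bulk contributions for the ``lower'' components, but these lower components have already been controlled with a good sign (and, via the redshift steps, with a large constant), so one adds the estimates in the correct order --- starting from $\alpha_{4n_l}$ and descending to $\underline{\alpha}_{3n_l}$ --- absorbing each wrong-signed term into a previously gained good term. The characteristic boundary terms on $N_{out}(S^2_{\tau_1,R})$ that appear on the right combine into $\mathbb{E}^{l+1}_p[W](N_{out}(S^2_{\tau_1,R}))$; the timelike-boundary contributions near $r=R$ are exactly the $\overline{\mathbb{I}}^{l+1,deg}_{R-M<r<R+M}[W]$ term (controlled in the interior by the estimates of section \ref{elliptic} / the redshift, but here simply carried along). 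The all-derivative recovery is then Lemma \ref{allfrom4}: the $4$-derivatives and the two extra $3$-derivatives $\underline{\beta}_{3n_l}$, $\underline{\alpha}_{3n_l}$ control, via the div-curl system on $S^2_{t^\star,u}$, all the $\slashed{\nabla}$-differentiated components and the $\alpha_{3n_l}$, $\beta_{3n_l}$, $(\rho,\sigma)_{3n_l}$ with the correct weights; applying Lemma \ref{permcommute} shows that the ambiguity in the ordering of derivatives in the definition of the energies only costs lower-order terms, which are absorbed into $Low^{l+1}_P$ and the $\sum_{i=1}^k$ terms. All genuine error-terms $e_p[\cdot,\cdot]$ produced by the inhomogeneities $\hat E$, $\tilde E$ of the commuted Bianchi equations are collected, unestimated, into $Err^{l+1}_P$ as in the statement, and the inhomogeneities $E^{n_l\Omega_i^j}$ arising from the div-curl recovery go into $Low^{l+1}_P$ via the quantity $Low^{\Omega_i^j}_{n_l}$ of Lemma \ref{allfrom4}.

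The main obstacle I anticipate is not any single calculation but the consistency of the weight hierarchy: one must verify that the six sign-conditions, each shifted by the signature of the relevant component and by $4$ per extra $4$-derivative, can all be met by the single tuple encoded in a boundary admissible matrix, \emph{and} that the wrong-signed bulk terms are always strictly below (in $r$-weight) a good term already obtained one step earlier in the hierarchy, so that the absorption really closes. A secondary but delicate point, already flagged in the remarks after (\ref{ennorm}) and in section \ref{decmatrix}, is the treatment of the ``endpoint'' weights where a $-1$ in $\tilde p$ must be replaced by $-1-\delta$ (the cases $p_3=8$, $p_4=6$, $p_5=4$, $p_6=2$): here the naive integration by parts produces a logarithmically divergent bulk term and one must instead borrow the $\delta$-loss, which is exactly what the definition of the associated bulk-admissible tuple is designed to allow. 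Once these two structural points are in place, Proposition \ref{hiinf} follows by summing over $l=0,\dots,m$ and over all permutations $n_l$, and Theorem \ref{maintheoinf} then follows by combining Proposition \ref{hiinf} with the error estimate of Proposition \ref{errhigh}.
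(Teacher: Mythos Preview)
Your proposal has the right ingredients (the six $r^p$-weighted estimates, the $\Omega_i$-commutation, Lemma \ref{allfrom4} for recovery, Lemma \ref{permcommute} for permutations, and the endpoint $\delta$-loss), and the component-wise hierarchy $\alpha_{4n_l} \to \dots \to \underline{\alpha}_{3n_l}$ is correct for a \emph{fixed} tuple $n_l$. The genuine gap is that you claim ``boundary admissibility precisely guarantees that all these interval conditions are simultaneously satisfiable'': this is false for tuples $n_l$ containing $3$'s. Look at the lower bound in (\ref{cond4ab}): with $p = p_1 + 2l_2$ it reads $p_1 + 2l_2 > 2 + 4l_1$, which for $l_1$ large (e.g.\ $l_1 = l$, $l_2 = 0$) forces $p_1 > 2 + 4l$, violated already for $l \geq 2$ since $p_1 \leq 10-\delta$. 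So for such tuples the $\|\alpha_{4n_l}\|^2$-spacetime term in (\ref{4ab}) comes with the wrong sign, and there is nothing \emph{within} the six-estimate hierarchy at that tuple to absorb it.

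The paper's proof resolves this by an additional, \emph{outer} iteration across tuples ordered by the number of $3$'s: start with the all-$4$ tuple $n_l = (4,\dots,4)$ (for which the six estimates close as you describe), then apply Lemma \ref{allfrom4} to control in particular $\|\alpha_{3\,4\dots4}\|^2$ with weight $r^{p_2 + 2l - 1}$; by Lemma \ref{permcommute} this is, modulo lower-order terms, $\|\alpha_{4\tilde{n}_l}\|^2$ for a tuple $\tilde{n}_l$ with exactly one $3$. Now reapply (\ref{4ab}) for this new tuple: the lower bound of (\ref{cond4ab}) is irrelevant because the wrong-signed $\|\alpha_{4\tilde{n}_l}\|^2$ term is already controlled from the previous step (one checks $p_1 + 2l_2 - 1 \leq p_2 + 2(l_2+1) - 1$ for admissible tuples). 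One then steps down the remaining five estimates, applies Lemma \ref{allfrom4} again, and iterates until all tuples are exhausted. Without this tuple-wise stepping-down your absorption argument does not close. (Minor aside: there is no redshift in this asymptotic region; the ``large constant'' you allude to does not come from there.)
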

We remark that the last line estimates lower order terms arising from commutation of the ordering of derivatives. It is absent for $l=0$.
\begin{proof}
We start with the tuple of length $l$ (denoted $n_l$, with $l_1$ being the number of $3$'s, $l_2$ the number of $4$'s) consisting of all $4$'s (i.e.~$l_1=0$ and $l_2=l$) and apply the estimates as follows: 
\begin{itemize}
\item (\ref{4ab}) with $p=p_1+2l_2$
\item (\ref{4brs}) with $p=p_2+2l_2$
\item (\ref{4rsbb})with $p=p_3+2l_2$  and -- in case that $p_3=8$ and $l_2=0$ -- in addition with $p=8-\delta$
\item  (\ref{4abbb}) with $p=p_4+2l_2$ and -- in case that $p_4=6$ and $l_2=0$ -- in addition with $p=6-\delta$
\item(\ref{4ab3bb}) with $p=p_5+2l_2$ and -- in case that $p_5=4$ and $l_2=0$ -- in addition with $p=4-\delta$
\item (\ref{3abbb}) applied with $p=p_6+2l_2$ and -- in case that $p_6=2$ and $l_2=0$ -- in addition with $p=2-\delta$
\end{itemize}
{\bf Remark: } The reason for the additional application with a $\delta$-loss in the last four estimates is caused by the fact that, as we saw, in this ``extreme" case, the spacetime term loses an additional power compared to the boundary term, i.e.~instead of the usual weight $p_4-1$ for the spacetime term, it will only admit $p_4-2$. Applying the estimate also with a delta loss ensures that we get the spacetime term with an improved weight of $p_4-1-\delta$. (We also obtain weaker $r^{6-\delta}$-weighted positive boundary terms in this process, which we simply discard.)

We next add the estimates above, so that all spacetime and all boundary terms on the left hand side are positive: This is already automatic for the first three estimates in view of the constraints (\ref{cond4ab}), (\ref{cond4brs}), (\ref{cond4rsbb}) being satisfied. For estimate (\ref{4abbb}) we observe that the possibly negative (if $n_l$ is the zero tuple) signed spacetime term containing $\|\underline{\beta}_{4n_l}\|^2$ can be absorbed by adding enough of the previous estimate (\ref{4rsbb}) which controls this term. This uses that the tuple is admissible: The spacetime term of the second quantity of the $i^{th}$ estimate has weight $r^{p_i+2l_2-1}$ ($r^{p_i-1-\delta}$ in the extreme case) while the spacetime-term of the first quantity in the $\left(i+1\right)^{th}$ estimate has weight $r^{p_{i+1}+2 l_2-1}$. Since $p_i \geq p_{i+1}$ holds for an admissible tuple ($p_i \geq p_{i+1} +2$ in the extremal case), one can always control the first spacetime term of the $\left(i+1\right)^{th}$ estimate from the second of the $i^{th}$.

Next, an application of Lemma \ref{allfrom4} allows us to control $\|Du_{n_l}\|^2$ where $u$ is any curvature component, $D$ is either a $3$-, $4$- or an angular derivative and $n_l$ is the tuple consisting of only $4$-derivatives. In particular, the spacetime term of $\|\alpha_{34...4}\|^2$ is controlled with the same weight as $\|\rho_{44...4}\|^2$, namely $r^{p_2+2l-1}$. Modulo lower order terms, which are covered by Lemma \ref{permcommute}, this is equivalent to controlling $\|\alpha_{4\tilde{n}_l}\|^2$ for tuples $\tilde{n}_l$ which have $l_2=l-1$ and $l_1=1$. Hence we can iterate the procedure and start with the estimate (\ref{4ab}) again. Now only the upper bound in (\ref{cond4ab}) is relevant, as we already control the spacetime-term of $\|\alpha_{4\tilde{n}_l}\|^2$ from the previous step. Since for an admissible tuple $p_1+2l_2-1 \leq p_2 + 2\left(l_2+1\right)-1$, we can apply (\ref{4ab}) with $p=p_1+2l_2$ and then step down: (\ref{4brs}) is applied with $p=p_2+2l_2$, (\ref{4rsbb}) applied with $p=p_3+2l_2$, (\ref{4abbb}) applied with $p=p_4+2l_2$, (\ref{4ab3bb}) applied with $p=p_5+2l_2$ and (\ref{3abbb}) applied with $p=p_6+2l_2$. Moreover, we can add these estimates so that all spacetime and boundary terms are positive. Using once again Lemma \ref{allfrom4}, we now control all derivatives of the form $Du_{n_l}$, where now $n_l$ is a tuple consisting of either all $4$'s, or all but one entries being $4$'s. In particular, we control the spacetime term of $\|\alpha_{334...4}\|^2$, which is, modulo lower order terms, equivalent to controlling $\|\alpha_{4\hat{n}_l}\|^2$ for tuples $\hat{n}_l$ satisfying $l_2=l-2$, $l_1=2$. We hence reiterate our estimates....

Using this stepping down procedure we will eventually control all derivatives of the form $Du_{n_l}$ where $n_l$ is \emph{any} tuple of $3$'s and $4$'s. At this step one can bring in the wave character of the Bianchi equations ($u_{34} - \slashed{\nabla}^2 u = \textrm{l.o.t}$ holds for any curvature component $u$) to estimate \emph{all} derivatives. However, there is a drawback: This procedure will only improve the $r$-weight in the energy by a power of one per angular derivative, since a $34$ derivative pair gains only $r^{0+2}$ in terms of weights. To prove the full decay, one has to bring in angular momentum operators: We apply the estimates of section \ref{4der} and \ref{mixedder} to the $\Omega_i$ commuted equations (for which they are valid by the remarks of section \ref{angremark}). More precisely, to any arbitrary fixed tuple of length $j$ we apply $l-j$ $\Omega_i$-derivatives and redo the algorithm we outlined above. This produces precisely the error-terms collected in $Err_P^{l+1}$.
\end{proof}
\subsection{Controlling the error-terms} \label{caworst}
We now prove the analogue of the error-estimate of Proposition \ref{erlosto}:
\begin{proposition} \label{errhigh}
For a boundary-admissible matrix $P$ with second line being the tuple $p=\left(p_1,p_2, ... , p_6, 0\right)$ we have, for any $\lambda>0$ and $m\geq 0$
\begin{align}
\overline{Err}^{m+1}_{P} \leq 
 \lambda \cdot \overline{\mathbb{I}}^{m+1}_{\tilde{P}} \left[W\right] \left(\mathcal{D}\left(\tau_1,\tau_2\right)\right)
+ B \cdot \overline{\mathbb{I}}^{m,deg}_{R-M<r<R+M}\left[W\right]  \left(\tilde{\mathcal{M}} \left(\tau_1,\tau_2\right)\right)\nonumber \\
 + B_{\lambda} \left[
\sup_{\tau \in \left(\tau_1,\tau_2\right)}\overline{\mathbb{E}}^{m+1} \left[\mathfrak{R}\right] \left(N_{out}\left(\tau,R\right)\right)  + \overline{\mathbb{I}}^{m+1} \left[\mathfrak{R}\right] \left(\mathcal{D} \left(\tau_1,\tau_2\right)\right) \right] 
\end{align}
and
\begin{align} \label{sekt}
\overline{Low}^{m+1}_P \leq B_k \sum_{i=1}^{max(1,m)} \left(\mathbb{E}^{i}_{p} \left[W\right] \left(N_{out}\left(S^2_{\tau_2,R}\right)\right) + \mathbb{I}^{i, deg}_{\tilde{p}} \left[W\right] \left(\mathcal{D}^{\tau_2}_{\tau_1}\right)\right) \nonumber \\ + B  \cdot \left(\overline{\mathbb{E}}^{m} \left[\mathfrak{R}\right] \left(N_{out}\left(S^2_{\tau_2,R}\right)\right) + \overline{\mathbb{I}}^{m, deg} \left[\mathfrak{R}\right] \left(\mathcal{D}^{\tau_2}_{\tau_1}\right)\right) \, ,
\end{align}
where in case that $m=0$, $B_0=\epsilon_R$ is a small constant (which gets smaller as $R$ is chosen larger).
Generally, the constants $B$ depend on the mass and lower order energies of the Weyl-curvature, which are bounded by the ultimately Schwarzschildean property.
\end{proposition}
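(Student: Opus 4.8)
\textbf{Proof plan for Proposition \ref{errhigh}.}

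The plan is to estimate each of the seven families of error-integrals $e_{p+2l_2}[\cdot,\cdot]$ appearing in $\overline{Err}^{m+1}_P$ separately, following the pattern of Proposition \ref{erlosto} but now carrying along the additional $r^{2l_2}$ and $r^{2j}$ weights coming from the $\slashed{D}_4$- and $\Omega_i$-commutations. Each such integrand is, by the explicit formulae for the inhomogeneities $\hat E_a^{n_l\Omega_i^j}, \tilde E_a^{n_l\Omega_i^j}$ of section \ref{abrsabbb} (and the null-structure equations of section \ref{nseq}), a sum of terms of the form $r^{p-2}(1-\mu)^q \cdot (\text{curvature at order} \le m+1)\cdot(\text{Ricci coefficient})\cdot(\text{curvature at order} \le m+1)$, plus a total angular divergence which, after integration by parts on $S^2_{t^\star,r}$, contributes only a harmless cubic term (both $r$ and the measure are spherically symmetric up to $\epsilon$). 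First I would split: the top-order contributions, in which all $m+1$ derivatives land on a single curvature factor, go into the first two terms on the right-hand side; the genuinely lower-order contributions (where derivatives split, or a derivative lands on a Ricci coefficient) go into $B_\lambda$ times the Ricci-energies plus lower-order curvature energies already controlled — and these are exactly what will be reabsorbed by the $\overline{Low}$-estimate (\ref{sekt}) and the inductive structure of Proposition \ref{zwischri}.

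For the top-order terms I would apply Cauchy–Schwarz exactly as in the proof of Proposition \ref{erlosto}: write $\int r^{\tilde p_i + 2l_2}\,(\text{Ricci})\cdot(\mathcal D^{m+1}u)\cdot(\mathcal D^{m+1}v)$ as $\sqrt{\int r^{\tilde p_v + 2l_2}\|\mathcal D^{m+1}v\|^2}\cdot\sqrt{\int r^{2(\tilde p_i+2l_2)-\tilde p_v-2l_2}(\text{Ricci})^2\|\mathcal D^{m+1}u\|^2}$, pull the Ricci coefficient out in $L^\infty$ (using the pointwise decay $r^2(\mathfrak R^{main}-\mathfrak R^{main}_{SS}), r\,\widehat{\underline H}, r^3\rho = O(1)$ from Definition \ref{RRCapproach}, Lemma in section \ref{abrsabbb}), and check that the leftover $r$-power on the $\|\mathcal D^{m+1}u\|^2$ spacetime integral is $\le \tilde p_u + 2l_2$ — i.e. strictly inside the bulk-admissible range. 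This check is the bookkeeping heart of the argument: the constraints one reads off ($2p_3 \le p_2 + 7-\delta$ from the $\rho Z$-type terms, $2p_4 - p_3 \le p_5 + 2$ from the $\widehat H\,\underline\alpha\,\widehat\rho$ term, $p_3 < p_2$, etc.) are precisely the inequalities built into the definition of a boundary-admissible tuple, and the $r^{2l_2}$-weights shift all exponents uniformly, so admissibility at order $0$ propagates. The resulting $\sqrt{\cdot}\cdot\sqrt{\cdot}$ is then $\lambda \cdot \overline{\mathbb I}^{m+1}_{\tilde P}[W] + B_\lambda(\text{Ricci spacetime energy})$ by Young's inequality, with the Ricci part fitting inside $\overline{\mathbb I}^{m+1}[\mathfrak R](\mathcal D(\tau_1,\tau_2))$ after matching weights (this is where the factor $r^{\min(2k_1-j+1,0)}$ in (\ref{IR}) is used). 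The $r$-cutoff region $R-M<r<R+M$ is split off and estimated by $\overline{\mathbb I}^{m,deg}_{R-M<r<R+M}[W]$, since there $1-\mu\approx 1$ and no $r$-weights matter.

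For $\overline{Low}^{m+1}_P$ I would use Lemma \ref{allfrom4}: the quantities $Low^{\Omega_i^{m-l}}_{n_l}(\ldots)$ are built from $\|E_a^{n_l\Omega_i^j}(u)\|^2$ with appropriate $r$-weights, and expanding these inhomogeneities via section \ref{abrsabbb} and the null-structure equations shows every summand is $r$-weighted ``(Ricci, possibly differentiated up to order $m$)$\times$(curvature, up to order $m$)'' or, in the worst spots, ``(Ricci at order $m$)$\times$($\rho$ undifferentiated)''. The first kind is controlled by putting the lower-differentiated factor in $L^\infty$ via Sobolev and the other in $L^2$, landing in $\overline{\mathbb E}^m[\mathfrak R] + \overline{\mathbb I}^{m,deg}[\mathfrak R]$ and lower-order curvature energies $\mathbb E^i_p[W]+\mathbb I^{i,deg}_{\tilde p}[W]$ for $i\le\max(1,m)$; the $\rho$-term is handled exactly because $\rho$ decays like $r^{-3}$, so the $r$-weight is never an obstruction. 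For $m=0$ there are no curvature derivatives on the right, and the curvature prefactor gets an honest power of $r^{-1}$ relative to the energy, which for $R$ large is the claimed small constant $\epsilon_R = B_0$.

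The main obstacle is not any single estimate but the uniformity of the weight-counting across all seven equation-pairs and all commutation patterns simultaneously: one must verify that for every curvature component $u$, every way of distributing $m+1$ derivatives among $\slashed D_3, \slashed D_4, \slashed\nabla$, and every product of two such components times a Ricci coefficient arising in $E_a^{n_l\Omega_i^j}$, the Cauchy–Schwarz split produces exponents that stay within the bulk-admissible window $\tilde P$ — and that the borderline cases (where $p_3=8$, $p_4=6$, $p_5=4$, or $p_6=2$, which already cost an extra $\delta$ in the spacetime weight, cf. section \ref{4der}) still close. This is where the extra $\delta$-loss applications of the estimates in Proposition \ref{zwischri} and the slack $P > P_\rho$ in the hypotheses of the main theorems are essential; I would organize the verification by treating $\widehat{\underline H}$ (which decays one power worse in $r$, so needs its own weaker weight $r^{1-\delta}$ in (\ref{IR})) separately from $\mathfrak R^{main}$, and the undifferentiated $\rho$ separately from the decaying components, reducing everything to the finite table of constraints already recorded in the proof of Proposition \ref{erlosto}.
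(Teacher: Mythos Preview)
Your plan captures the overall architecture correctly, but it has a genuine gap at the most delicate point. You write that ``the $r^{2l_2}$-weights shift all exponents uniformly, so admissibility at order $0$ propagates'' and that everything reduces to ``the finite table of constraints already recorded in the proof of Proposition \ref{erlosto}.'' This is false as stated. At order $0$ the top $\alpha$-weight is $p_2 = 7-\delta$ (constrained by the $\rho\widehat{H}$ term in $E_3(\alpha)$), while at order $1$ the top $\alpha_4$-weight in the boundary-admissible tuple is $p_1 = 10-\delta$. A naive uniform shift by $2$ per $\slashed{D}_4$-commutation would only reach $9-\delta$; the extra power is not free.

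What the paper does, and what your proposal does not identify, is to compute the once-commuted inhomogeneities $E_3^4(\alpha)$, $\hat E_3^4(\alpha)$, $E_4^4(\beta)$, $E_3^4(\rho)$, etc.\ explicitly and observe that inserting the null-structure equation $\slashed{D}_4\widehat{H} + tr H\,\widehat{H} = -2\Omega\widehat{H} - \alpha$ (which takes this form precisely in the gauge $Y=0$) cancels the slowest-decaying term in each. For instance, $E_3(\rho)$ decays like $r^{-3}$ pointwise, but $E_3^4(\rho)$ decays like $r^{-5}$, not the naive $r^{-4}$. This cancellation is the mechanism that closes the weight-counting at $p_1 = 10-\delta$; without it your Cauchy--Schwarz bookkeeping would produce a constraint strictly tighter than boundary-admissibility and the estimate would not close for the decay matrix $P_0$. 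The paper also notes that this extra gain only occurs from order $0$ to order $1$ (all higher lines of the decay matrix are identical to the second), so once you have handled the $j=0$, $l=0$ case with the cancellation, your uniform-shift argument does take care of the rest. Your treatment of $\overline{Low}^{m+1}_P$ is fine and matches the paper.
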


Before we turn to the proof let us outline the main idea. Inspecting the structure of $Err_P^{m+1}$ we see that the error-terms consist of derivatives of the inhomogeneities $E_{3,4}\left(\alpha,...,\underline{\alpha}\right)$ in the Bianchi equations. As is manifest in our energies used, taking a $4$-derivative or an angular derivative improves the (pointwise) decay in $r$ by one, while a $3$-derivative does not change the $r$-weight. Hence the higher derivative estimates for the error would be a completely straightforward generalization of Proposition \ref{erlosto}, if it was not for the fact that we are actually claiming stronger decay in $r$ for some higher derivatives than the naive improvement above provides. For instance, we applied the estimate (\ref{st1}) with $p_2=7-\delta$, while we applied (\ref{4ab}) with $p=10-\delta$ instead of the naive $9-\delta$ which would follow from the above reasoning. In order for this to work, there has to be a cancellation of the slowest decaying terms, which we are going to unravel in the proof. Note that this special improvement only occurs from the $0^{th}$ to the $1^{st}$ order, as for the second order, we claim no additional improvement. This is immediate from the structure of the decay matrices: All tuples following the second line are all identical to the second.

The cancellations are apparent by doing the following computation:
\begin{lemma}
In the standard null-frame for which $\boxed{Y=0}$, we have the following expression for the error-terms
\begin{align} 
E_3^4 \left(\alpha\right) = 4\underline{\Omega} \alpha_4 + 4 \left(\slashed{D}_4 \underline{\Omega}\right) \alpha - 4 \widehat{H} \rho_4 - 4 {}^\star \widehat{H} \sigma_4 + \widehat{H} E_4\left(\rho\right) + {}^\star \widehat{H} E_4\left(\sigma\right) \nonumber \\
+ 2\widehat{s}\left(F_4 \left(\beta\right)\right) - 3 \rho \left(-2\Omega \widehat{H} - \alpha\right) + \left(V+4Z\right) \widehat{\otimes} \beta_4 \nonumber \\
\left(-2\slashed{\nabla} tr H + \frac{1}{2} tr H \left(V+4Z\right) + \slashed{D}_4 \left(V+4Z\right)\right) \widehat{\otimes} \beta 
\end{align}
\begin{align} \label{hate34}
\hat{E}_3^4\left(\alpha\right) = E_3^4 \left(\alpha\right) + F_{34}\left(\alpha\right) + \alpha \Big[tr H tr \underline{H} + \Omega tr H - 5\underline{\Omega} tr H - 4\rho \nonumber \\ + 2 \widehat{H}\cdot \widehat{\underline{H}} + \slashed{div}  \underline{Z} - 5 \slashed{div} Z + \underline{Z} \cdot \underline{Z} - 5 Z \cdot Z \Big]
\end{align}
\begin{align}
E_3^4 \left({\beta}\right) = 2\widehat{H} \cdot \underline{\beta}_4 + 2\underline{\Omega} \underline{\beta}_4 + \underline{Y} \alpha_4 + 3 Z \rho_4 + 3{}^\star Z \sigma_4 +\left[\slashed{D}_4 \underline{Y} - \frac{1}{2} tr H \underline{Y} \right] \alpha \nonumber \\
+ \underline{\beta} \left[-4\Omega \widehat{H} - 2\alpha   \right] + 2 \slashed{D}_4 \underline{\Omega} \beta
 + 3\left[\slashed{D}_4 {}^\star Z - \frac{1}{2} tr H {}^\star Z\right] \sigma + {}^\star F_3\left(\sigma\right)  \nonumber \\ + 3\left[\slashed{D}_4 Z - \frac{1}{2} tr H Z \right] \rho + F_3\left(\rho\right)  - \frac{3}{2} \slashed{\nabla} tr H \rho - \widehat{H} \slashed{\nabla} \rho - \frac{3}{2} {}^\star \slashed{\nabla} tr H \sigma - {}^\star\widehat{H}  \slashed{\nabla} \sigma \nonumber
 \end{align}
 \begin{align}
E_4^4 \left(\beta\right) = -2\Omega \beta_4 - \left(2\slashed{D}_4 \Omega +2\Omega tr  H\right) \beta + \alpha \slashed{D}_4 \left(2V+\underline{Z}\right) + \left(2V+\underline{Z}\right) \alpha_4  \nonumber \\ + \frac{1}{2} tr H \left(2V + \underline{Z}\right) \alpha  - \frac{5}{2} \slashed{\nabla} tr H \cdot \alpha - \widehat{H} \cdot \slashed{\nabla} \alpha + tr F_4\left(\alpha\right)
\end{align}
\begin{align}
E_3^4\left(\rho\right) = -\frac{1}{2}  \widehat{H} \underline{\alpha}_4 - \frac{1}{2} \left(-2\Omega \widehat{H} - \alpha \right) \underline{\alpha} + \underline{\beta} \cdot \slashed{D}_4 \left(V-2Z\right) + 2\beta \cdot \slashed{D}_4 \underline{Y} \nonumber \\ + \left(V-2Z\right) \cdot \underline{\beta}_4 + 2\underline{Y} \beta_4 + \frac{1}{2} tr H \left(V-2Z\right) \cdot \underline{\beta} - tr H \underline{Y} \beta \nonumber \\ + tr H \underline{\beta} \cdot \slashed{\nabla} tr H + \widehat{H} \slashed{\nabla} \underline{\beta} - tr F_4\left(\underline{\beta}\right)
\end{align}
 \begin{align}
E_4^4 \left(\rho\right) = -\frac{1}{2} \underline{H} \alpha_4 - \frac{1}{2} \slashed{D}_4 \widehat{\underline{H}} + \left(V+2\underline{Z}\right) \cdot \beta_4 \nonumber \\ 
+ \left[ \slashed{D}_4 \left(V+2\underline{Z}\right) + \frac{1}{2}\left(V+2\underline{Z}\right)\right] \beta - 2 \beta \slashed{\nabla} tr H - \widehat{H} \cdot \slashed{\nabla} \beta + tr F_4 \left(\beta\right)
\end{align}
\begin{align}
E_4^4\left(\underline{\beta}\right) = 2\Omega \underline{\beta}_4 + 2\widehat{\underline{H}}\beta_4 + 2\beta \slashed{D}_4 \widehat{\underline{H}}   + 2\underline{\beta}\left(\slashed{D}_4 \Omega + tr H \Omega\right) \nonumber \\ -3 \underline{Z} \rho_4 - 3 {}^\star \underline{Z} \sigma_4  -3\rho \slashed{D}_4 \underline{Z} - 3 \sigma \slashed{D}_4 {}^\star \underline{Z} + \frac{1}{2} tr H \left(-3\underline{Z} \rho - 3 {}^\star \underline{Z} \sigma \right) \nonumber \\ +\frac{3}{2} \rho \slashed{\nabla} \left(tr H\right) - \frac{3}{2} \sigma^\star \slashed{\nabla} \left(tr H\right) + \widehat{H} \cdot \slashed{\nabla}\rho -  {}^\star \widehat{H} \cdot \slashed{\nabla}\sigma - F_4\left(\rho\right) - {}^\star F_4\left(\sigma\right)
\end{align}
\begin{align}
E_4^4 \left(\underline{\alpha}\right) = 4\Omega \underline{\alpha}_4 - 3\widehat{\underline{H}}\rho_4 - 3{}^\star \widehat{\underline{H}} \sigma_4 + \left(V-4\underline{Z}\right) \widehat{\otimes} \underline{\beta}_4 -3\rho \slashed{D}_4 \widehat{\underline{H}} -3\sigma \slashed{D}_4 {}^\star\widehat{\underline{H}}\nonumber \\ + \underline{\alpha} \left(4\slashed{D}_4 \Omega + 4 tr H \Omega\right) + \left[\slashed{D}_4 \left(V-4Z\right) \right] \widehat{\otimes} \underline{\beta} - \frac{1}{2} tr H \left(V-4Z\right) \widehat{\otimes} \underline{\beta} \nonumber \\ \widehat{H} \slashed{div} \underline{\beta}  - {}^\star \widehat{H} \slashed{curl} \underline{\beta} + \left(\slashed{\nabla} tr H\right) \widehat{\otimes} \underline{\beta} - 2 \widehat{s}\left(F_4 \left(\underline{\beta}\right)\right)
\end{align}
\begin{align}
E_4^3 \left(\underline{\alpha}\right) = 4\Omega \underline{\alpha}_3 - 3\widehat{\underline{H}}\rho_3 - 3{}^\star \widehat{\underline{H}} \sigma_3 + \left(V-4\underline{Z}\right) \widehat{\otimes} \underline{\beta}_3  + 4\underline{\alpha} \slashed{D}_3 \Omega  - 2 \widehat{s}\left(F_3 \left(\underline{\beta}\right)\right)  \nonumber \\ -3 \left[\slashed{D}_3 {}^\star \widehat{\underline{H}} + tr \underline{H}{}^\star \widehat{\underline{H}} \right]  \sigma  + \left[\slashed{D}_3 \left(V-4\underline{Z}\right) + \frac{1}{2} tr \underline{H} \left(V-4\underline{Z}\right) \right] \widehat{\otimes} \underline{\beta} \nonumber \\ -3 \left[\slashed{D}_3 \widehat{\underline{H}} + tr \underline{H}\widehat{\underline{H}} \right] \rho + \underline{\widehat{H}} \slashed{div} \underline{\beta}  - {}^\star \underline{\widehat{H}} \slashed{curl} \underline{\beta} + 2\left(\slashed{\nabla} tr \underline{H}\right) \widehat{\otimes} \underline{\beta}
\end{align}
\end{lemma}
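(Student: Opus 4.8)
The statement to be established is a list of explicit formulae for the $4$-commuted error terms $E_3^4(\alpha)$, $\hat E_3^4(\alpha)$, $E_3^4(\beta)$, $E_4^4(\beta)$, $E_3^4(\rho)$, $E_4^4(\rho)$, $E_4^4(\underline\beta)$, $E_4^4(\underline\alpha)$, $E_4^3(\underline\alpha)$, in the special gauge $Y=0$. This is a purely computational lemma: each of these quantities has already been \emph{defined} recursively in the Proposition of Section 3.3 (the ``$E_3^{n_p4}$'' and ``$E_4^{n_p4}$'' formulae), and here we are simply unwinding the recursion one step from the first-order errors $E_3(\alpha), E_4(\beta), \dots$ of equations \eqref{Bianchi1}--\eqref{Bianchi9}. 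So the plan is not to invent anything but to organize the bookkeeping so that the cancellations that will be exploited in Section \ref{caworst} become manifest.

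\textbf{Key steps.} First I would take, say, $E_3^{n_p4}(\alpha)$ with $n_p$ the empty tuple, so that the recursion reads $E_3^4(\alpha) = \slashed D_4 E_3(\alpha) + \vartheta^+(\alpha_3)\, tr H\, E_3(\alpha) - \vartheta^+(\beta)(\slashed\nabla tr H)\widehat\otimes\beta - \widehat H\slashed{div}\beta + {}^\star\widehat H\slashed{curl}\beta + 2\widehat s(F_4(\beta))$, and substitute the explicit $E_3(\alpha) = 4\underline\Omega\alpha - 3(\widehat H\rho + {}^\star\widehat H\sigma) + (V+4Z)\widehat\otimes\beta - 2\Theta(J)$ from \eqref{Bianchi1}. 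Then I would push the $\slashed D_4$ inside: the key move is that whenever $\slashed D_4$ hits a curvature component one converts it, via the subscript-$4$ convention \eqref{subnot}, into $\rho_4$, $\sigma_4$, $\underline\beta_4$, etc., up to a $tr H$ correction, and then this $u_4$ is replaced using the \emph{Bianchi equation} for $u_4$ — introducing the first-order error $E_4(u)$ and a ``$\slashed{div}$ of a curvature component'' term. This is exactly what produces the structure displayed: e.g. the $-4\widehat H\rho_4 - 4{}^\star\widehat H\sigma_4 + \widehat H E_4(\rho) + {}^\star\widehat H E_4(\sigma)$ cluster in $E_3^4(\alpha)$. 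The Leibniz terms $\slashed D_4\underline\Omega\cdot\alpha$, $(\slashed D_4(V+4Z) - \tfrac12 tr H(V+4Z) - 2\slashed\nabla tr H)\widehat\otimes\beta$, and the commutator $F_4(\beta)$ terms of Lemma \ref{commutelemma} are then collected; the $-3\rho(-2\Omega\widehat H - \alpha)$ term arises from applying the structure equation \eqref{H4} (with $Y=0$) to $\slashed D_4\widehat H$. For $\hat E_3^4(\alpha)$ one additionally adds $F_{34}(\alpha)$ from \eqref{a43} and the commutator $-\mathcal C_{34}[\alpha] - [\alpha_{43}-\alpha_{34}]$ of Section 3.5, and then uses \eqref{trH4}, \eqref{divH}, the Gauss equation \eqref{Gauss} and the structure equations for $\slashed{div}Z$, $\slashed{div}\underline Z$ to rewrite everything in terms of $tr H\, tr\underline H$, $\Omega tr H$, $\underline\Omega tr H$, $\rho$, $\widehat H\cdot\widehat{\underline H}$ and quadratic Ricci terms, giving \eqref{hate34}. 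The remaining eight formulae are obtained by the identical procedure applied to \eqref{Bianchi2}--\eqref{Bianchi9}, with the relevant structure equations \eqref{Hb4}, \eqref{trHb4}, \eqref{H3}, \eqref{V4}, \eqref{Y3}, \eqref{O43} substituted wherever a $\slashed D_4$ or $\slashed D_3$ hits a Ricci coefficient.

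\textbf{Main obstacle.} There is no conceptual difficulty; the risk is entirely in sign and coefficient errors across nine long formulae, compounded by the two distinct conventions (the ``hat'' vs ``tilde'' error terms of Section 3.5 and the $\vartheta^\pm$ weights, which depend on the signature and rank of the component being differentiated). In particular one must be careful that the $Y=0$ gauge is used consistently — several terms that would otherwise appear (e.g. $Y$-contributions in $F_4$ and in \eqref{V4}, \eqref{Y3}, \eqref{H4}) drop out, and it is precisely their absence that makes the slowest-decaying pieces cancel in Section \ref{caworst}. The cleanest way to present the proof is therefore: (i) state the one-step recursion explicitly; (ii) substitute the first-order $E$'s; (iii) list which structure equation is used for each $\slashed D_4$(Ricci) and each $\slashed D_4$(curvature) occurrence; (iv) collect terms, noting that all genuinely new contributions beyond ``$\slashed D_4$ of the previous error'' are either (a) products of two decaying factors, (b) a single curvature component times $\slashed D_4$ of a Ricci coefficient, or (c) the isolated $\rho\cdot(\text{Ricci})$ terms, and that the $\rho_4$, $\sigma_4$, $\underline\beta_4$-clusters are what give the one-order-improved $r$-decay. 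Since this is declared to be a direct computation, I would relegate the full term-by-term verification to the reader (as the excerpt does for Lemmas \ref{J1I}, \ref{J2l}), checking in detail only the $E_3^4(\alpha)$ and $\hat E_3^4(\alpha)$ cases, which exhibit every mechanism that occurs in the others.
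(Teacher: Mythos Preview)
Your proposal is correct and is exactly what the paper does: the paper's proof reads in its entirety ``A lengthy but straightforward computation,'' and your outline of that computation --- unwind the one-step recursion of Section 3.3 with $n_p$ empty, push $\slashed D_4$ through via \eqref{subnot} and the Bianchi equations, substitute the structure equations (notably \eqref{H4} in the gauge $Y=0$) for derivatives of Ricci coefficients, and for $\hat E_3^4(\alpha)$ add the commutator pieces of Section 3.5 --- is precisely the intended route. If anything you have supplied more of the bookkeeping than the paper itself.
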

\begin{proof}
A lengthy but straightforward computation. 
\end{proof}

\begin{remark}
The remarkable thing is that inserting the structure equation for $\widehat{H}$, (\ref{H4}), which in our gauge ($Y=0$) reads
\begin{align}
\slashed{D}_4 \widehat{H}_{AB} + tr H \widehat{H}_{AB} = -2 \Omega \widehat{H}_{AB} - \alpha_{AB},
\end{align}
always cancels the terms with the worst decay. For instance, $E_3\left(\rho\right)$ only decays like $r^{-3}$ pointwise, while $E_3^4\left(\rho\right)$ decays like $r^{-5}$, while the naive improvement would only suggest $r^{-4}$. Similarly, while $E_3\left(\alpha\right)$ decays only like $r^{-5}$ (the $\rho \hat{H}$-term), $E^4_3\left(\alpha\right)$ decays like $r^{-\frac{13}{2}}$.
\end{remark}

\begin{proof}[Proof of Proposition \ref{errhigh}]
We turn to Proposition \ref{hiinf} to estimate the expression for $Err_P^{m+1}$.
For $j<m$ and some $0 \leq l \leq j$, the first term we have to estimate is (recall that $l_2$ denotes the number of $4$'s in a tuple of lenght $l$)
\begin{align} \label{hula}
e_{p_1+2l_2}\left[\Omega_i^{j-l} \alpha_{4n_l}, \Omega_i^{j-l}\beta_{4n_l}\right] \nonumber \\ 
= r^{p_1+2l_2} \left(\hat{E}_3^{4n_l\Omega_i^{j-l}}\left(\alpha\right) \cdot \slashed{\mathcal{L}}_{\Omega_i}^{j-l}\alpha_{4n_l} + 2 \tilde{E}_4^{4n_l\Omega_i^{j-l}} \left(\beta\right) \cdot \slashed{\mathcal{L}}_{\Omega_i}^{j-l} \beta_{4n_l}  \right) \, .
\end{align}
We conclude that it suffices to understand the case $j=0, l=0$. This is a consequence of the structure of the errorterms $E$, $\tilde{E}$, $\hat{E}$ and the fact that we only have to worry about the number of derivatives involved and about the amount of decay in $r$. Since the $E$ are \emph{quadratic} in Ricci-coefficients and curvature, the number of derivatives is not a problem. Moreover, since taking a $4$-derivative improves the $r$-decay of both Ricci coefficients and curvature components by a power of $2$ in the energy (i.e.~wherever the derivative falls on), while taking a three or an $\Omega_i$-derivative does not change the decay, the additional weight factor of $r^{l_2}$ in (\ref{hula}) for each additional $4$-derivative is naturally incorporated.

We hence turn to the case $j=0, l=0$, starting with the first of $7$ terms, $e_{p_1}\left[\alpha_{4n_k}, \beta_{4n_k}\right]$, which is in turn is written out explicitly below (\ref{4ab}). For a boundary admissible matrix $P$ with second row $p=(p_1,p_2, ...., p_6, 0)$ and associated bulk admissible matrix $\tilde{P}$ with second row $\tilde{p} =(p_1-1,p_1-1,p_2-1, ...., p_5-1, p_6-1)$:
\begin{align}
\int_\mathcal{D} \left(\|E_3^4 \left(\alpha\right)\|^2 + \| F_{34}\left(\alpha\right)\|^2\right) r^{p_1+1} \leq |r^4 \underline{\Omega}^2 |  \int_\mathcal{D} dt^\star dr d\omega \, \|\alpha_4\|^2 r^{p_1-3} \nonumber \\ +  \|r^4\left( \widehat{H}, V, ... \right)^2\|_{L^\infty} \int_\mathcal{D} dt^\star dr d\omega \Big[\left( \|\beta_4\|^2 + |\left(\rho,\sigma\right)_4|^2\right) r^{p_1-3}  \nonumber \\ + \left( E_4\left(\rho\right)^2 + E_4 \left(\sigma\right)^2 \right) r^{p_1-3} \Big] + \|\rho^2 \Omega^2 r^{10} \|_{L^\infty}  \int_{\mathcal{D}} dt^\star dr d\omega \| \widehat{H}\|^2 r^{p_1-9} \nonumber \\  + \|\rho^2 r^{6} \|_{L^\infty}  \int_\mathcal{D} dt^\star dr d\omega \, \|\alpha\|^2 r^{p_1-7} + \nonumber \\
+ \| r^7 | \beta|^2 \|_{L^\infty}  \int_\mathcal{D}dt^\star dr d\omega \, r^{p_1-6}  \left(\|\slashed{\nabla} tr H\|^2 + \|\slashed{D}_4 \left(V+4Z\right)\|^2 \right) \nonumber \\  \leq
\epsilon_R \cdot \overline{\mathbb{I}}^{k+1}_{\tilde{P}} \left[W\right] \left(\mathcal{D}\right) + 
\epsilon \cdot \overline{\mathbb{I}}^{1} \left[\mathfrak{R}\right] \left(\mathcal{D}\right) + B\cdot {\mathbb{I}}^{0} \left[\mathfrak{R}\right] \left(\mathcal{D}\right)  \nonumber
\end{align}
the last step following as long as $p_1 \leq 10-\delta$.
Remarkably, the last term in the expression
\begin{align}
\int_\mathcal{D} \hat{E}_3^4 \left(\alpha\right) \cdot \alpha_4 r^{p_1} = \int_{\mathcal{D}} \left(E_3^4 \left(\alpha\right) + F_{34}\left(\alpha\right) + \alpha\left[...\right] \right)  \cdot \alpha_4 r^{p_1} \, ,
\end{align}
where $[...]$ denotes the square bracket in expression (\ref{hate34}), actually generates a good lower order term. The point is that the leading order term in $[...]$ is $-\frac{4}{r^2}$ at infinity (arising from the $tr H tr \underline{H}$-term), while all the others decay like $\frac{1}{r^3}$ and are easily estimated as before. We write
\begin{align}
\alpha\left[...\right]  \cdot \alpha_4 \, r^{p_1} = \frac{1}{2} \slashed{D}_4 \left(r^{p_1} \|\alpha\|^2 \left[...\right]  \right) + \|\alpha\|^2 \left(-\frac{1}{2} \slashed{D}_4 \left(\left[ ... \right] r^{p_1}\right) + \frac{5}{2} tr H \left[ ... \right] r^{p_1} \right) \nonumber 
\end{align}
and find, using the structure equations and the pointwise estimates for the Ricci-coefficients that a negative boundary and spacetime-term is generated, as long as $p_1<12$ (which consequently have a good sign when brought to the left hand side). Besides an improved weight for $\alpha$ on $\mathcal{I}$, we gain a spacetime term $\int dt^\star dr d\omega r^{p_1-3} \|\alpha\|^2$, which in view of $p_1<10-\delta$ considerably improves the estimate from the uncommuted equations (where we obtained only control over $\int dt^\star dr d\omega r^{6-\delta} \|\alpha\|^2$).\footnote{An even stronger improvement follows by a different argument. Namely, since we control the component $|\beta_4|^2$ in the main term of this estimate (\ref{4ab}), we also control $|\slashed{\nabla} \alpha|^2$ with the same $r$-weight from the Bianchi equation (cf.~Lemma \ref{allfrom4}). The Poincare inequality $\int_{S^2} \|\alpha\|^2 \leq \int_{S^2} r^2 \| \slashed{\nabla} \alpha\|^2$ then generates the improved decay of $\alpha$.} We summarize
\begin{align}
\int_\mathcal{D} \hat{E}_3^4 \left(\alpha\right) \cdot \alpha_4 r^{p_1} \leq - b \int_\mathcal{D} dt^\star dr d\omega r^{p_1-3} \|\alpha\|^2 + B \cdot \overline{\mathbb{I}}^{0,deg}_{R-M<r<R+M}\left[W\right]\left(\mathcal{M}\left({\tau_1,\tau_2}\right)\right)  \nonumber \\ \epsilon_R \cdot {\mathbb{I}}^{1}_{\tilde{p}} \left[W\right] \left(\mathcal{D}\right) + 
\epsilon \cdot \overline{\mathbb{I}}^{1} \left[\mathfrak{R}\right] \left(\mathcal{D}\right) + B\cdot {\mathbb{I}}^{0} \left[\mathfrak{R}\right] \left(\mathcal{D}\right) \nonumber \, .
\end{align}
Similarly (note $\tilde{E}_4^4\left(\beta\right)=E_4^4\left(\beta\right)$, in view of $\mathcal{C}_{44}\left[\beta\right]=0$), 
\begin{align}
\int_\mathcal{D} \tilde{E}_4^4 \left(\beta\right) \cdot \beta_4 r^{p_1} \leq  \epsilon_R \cdot {\mathbb{I}}^{1}_{\tilde{p}} \left[W\right] \left(\mathcal{D}\right)  + 
\epsilon \cdot \overline{\mathbb{I}}^{1} \left[\mathfrak{R}\right] \left(\mathcal{D}\right) + B\cdot {\mathbb{I}}^{0} \left[\mathfrak{R}\right] \left(\mathcal{D}\right) \nonumber \, .
\end{align}
We remark that it is crucial here that $Y=0$ and hence $F_4$ does not introduce $3$-derivatives (cf. Lemma \ref{commutelemma}).
The other error-terms are dealt with analogously with no further difficulties involved. To give one more (interesting) example, we consider the worst part of the error-term
\begin{align} \label{rhoexmple}
\int dt^\star dr d\omega \ r^{p_3} \hat{E}^4_3 \left(\rho\right) \cdot \rho_4 + r^{p_3} \hat{E}^4_3 \left(\sigma\right) \cdot \sigma_4 + r^{p_3} \tilde{E}^4_4 \left(\underline{\beta}\right) \underline{\beta}_4 \, ,
\end{align}
which arises from the $\left(\left(\rho, \sigma\right)_4,\underline{\beta}_4\right)$-pair of estimates (cf.~(\ref{4rsbb})).
The leading order contribution from the first term is estimated (provided $p_4$ is not equal to $6$)
\begin{align}
\int dt^\star dr d\omega \ r^{p_3} \hat{H} \underline{\alpha}_4 \rho _4 \leq B_{\lambda} \int dt^\star dr d\omega \|\hat{H}\|^2 r^4 \|\underline{\alpha}_4\|^2 r^{p_4-1} \nonumber \\ + \lambda \int dt^\star dr d\omega |\rho_4|^2 r^{2p_3-p_4-3} \leq 
\left(B_{\lambda} \epsilon + \lambda\right) \cdot \mathbb{I}^1_{\tilde{p}} \left[W\right] \left(\mathcal{D}\right) \, ,
\nonumber
\end{align}
with the last step following provided $2p_3-p_4-3 \leq p_2-1$, which holds for a boundary admissible tuple. If $p_4=6$, a similar computation leads to the condition $\left(p_3-p_2\right) + p_3 \leq 8 - \delta$, which is also valid. For the $\beta$-term in (\ref{rhoexmple}), a typical term is estimated (say $p_3 \neq 8$)
\begin{align}
\int dt^\star dr d\omega \ r^{p_3} \underline{\hat{H}} {\beta}_4 \underline{\beta} _4 \leq B_{\lambda} \int dt^\star dr d\omega \|\hat{\underline{H}}\|^2 r^2 \|\underline{\beta}_4\|^2 r^{p_3-1} \nonumber \\ + \lambda \int dt^\star dr d\omega |\beta_4|^2 r^{p_3-1} \leq 
\left(B_{\lambda} \epsilon + \lambda\right) \cdot \mathbb{I}^1_{\tilde{p}} \left[W\right] \left(\mathcal{D}\right)
\nonumber
\end{align}
requiring $p_3 \leq p_1$, which is true for a boundary admissible tuple. For $p_3=8$ we obtain $8 + \delta \leq p_1$. 

The estimate for (\ref{sekt}) is straightforward. Again one can convince oneself that it suffices to understand $j=0$ and $l=0$. In this case, for the first term (cf.~Lemma \ref{allfrom4})
\begin{align}
\int_{N_{out}\left(S^2_{\tau_2,R}\right)} \left(E_4 \left(\beta\right)^2\right) r^{p_2} dv d\omega \leq | 4\Omega^2 r^4| \  \int_{N_{out}\left(S^2_{\tau_2,R}\right)} |\beta|^2 r^{p_2-4} dv d\omega \nonumber \\
+  | \left(V,\underline{Z}\right) r^4| \  \int_{N_{out}\left(S^2_{\tau_2,R}\right)} |\alpha|^2 r^{p_2-4} dv d\omega \leq \epsilon_R \cdot E^1_{p} \left[W\right]  \left(N_{out}\left(S^2_{\tau_2,R}\right)\right) \, ,
\end{align}
and similarly for the spacetime term. The other $7$ terms are dealt with similarly.
\end{proof}

\section{A spacetime estimate for the Weyl-tensor} \label{XestWeyltensor}
Recall the energy (\ref{degXall}). It turns out that we have control of this spacetime energy in the interior, provided we can control the $\rho$ and the $\sigma$ component of the Weyl-tensor under consideration. To establish this result (Proposition \ref{spinreduce}) it is convenient  to introduce the concept of spin-reduction.

\subsection{Spin reduction} \label{spired}
Consider the \emph{Maxwell pseudo-tensor}
\begin{equation}
 \mathcal{F}_{\alpha \beta} = r \cdot \mathcal{W}_{\alpha \beta 3 4}  \, .
\end{equation}
It is clearly antisymmetric. Its null components are
\begin{equation}
 \mathcal{F}_{A3} = 2r \underline{\beta}_A \textrm{ \ \ \ \ , \ \ \ \ \ } \mathcal{F}_{A4} = 2r \beta_A \, ,
\end{equation}
\begin{equation}
 \mathcal{F}_{34} = 4r \rho \textrm{ \ \ \ \ , \ \ \ \ \ } \mathcal{F}_{AB} = 2r \sigma \epsilon_{AB} \, .
\end{equation}
Note that in the Schwarzschild case, $\mathcal{F}$ corresponds to the electromagnetic field created by a point charge of strength proportional to $M$.

As is readily computed using the formulae in the appendix, if $\nabla^a \mathcal{W}_{abcd} = \mathcal{J}_{bcd}$, then $\mathcal{F}$ satisfies the inhomogeneous Maxwell's equations 
\begin{align} \label{ihoma}
\nabla^\alpha \mathcal{F}_{\alpha \beta}=\mathcal{J}^{spin1}_{\beta} := r \mathcal{J}_{\beta 34} + r {j}_{\beta}
\end{align}
with $\mathcal{J}_{\beta 34}$ a null-component of the right hand side of the Bianchi equation and\begin{align} \label{con}
{j}_3 = \rho \left(tr \underline{H} - 2 \slashed{D}_3 r\right) + \underline{\alpha} \cdot H - 2\underline{Y} \beta + 2 Z \underline{\beta} \, , \nonumber \\
{j}_4 = \rho \left(tr {H} - 2 \slashed{D}_4 r\right) + {\alpha} \cdot \underline{H} + 2Y \underline{\beta} - 2 \underline{Z} {\beta} \, , \nonumber \\
j_B =  \frac{1}{2} \underline{\beta}  \left(tr {H} - 2 \slashed{D}_4 r\right) + \frac{1}{2} {\beta}  \left(tr {\underline{H}} - 2 \slashed{D}_3 r\right) -\epsilon_{AB} {}^\star \underline{\beta}_F \hat{H}_{AF} \, ,\nonumber \\ 
-\epsilon_{AB} {}^\star {\beta}_F \underline{\hat{H}}_{AF} - \underline{Y} \alpha + Y \underline{\alpha} - \rho \left(Z - \underline{Z}\right) + \sigma  \left({}^\star Z + {}^\star \underline{Z}\right) \, .
\end{align}
If the background in exactly Schwarzschild, the right hand side of (\ref{ihoma}) vanishes, which is the origin of the name ``spin reduction". Note the ``linear" (non-quadratically decaying) error-terms that arise in $j$ in $j_3$ and $j_4$ that occur at the lowest order (when $\rho=\rho\left(W\right)$) but that all terms are decaying quadratically in case that $\mathcal{W} = \hat{\mathcal{L}}^i_T W$, $i\geq 1$.

The associated pseudo energy momentum tensor 
\begin{align}
\tilde{Q}_{\alpha \beta} = \mathcal{F}_{\alpha \gamma} F^{\phantom{\beta}\gamma}_{\beta}  + \left({}^\star \mathcal{F}\right)_{\alpha \gamma} \left({}^\star \mathcal{F}\right)^{\phantom{\beta}\gamma}_{\beta} = 2\mathcal{F}_{\alpha \gamma} \mathcal{F}^{\phantom{\beta}\gamma}_{\beta} - \frac{1}{2} g_{\alpha \beta} \mathcal{F}^2
\end{align}
has null components 
\begin{equation}
 \tilde{Q}_{33} = 8 r^2 |\underline{\beta}|^2 \textrm{ \ \ \ \ \ , \ \ \ \ }  \tilde{Q}_{44} = 8 r^2 |{\beta}|^2 \textrm{ \ \ \ \ , \ \ \ \ \ }  \tilde{Q}_{34} = 8 r^2 \left(\rho^2 +\sigma^2\right) 
\end{equation}
\begin{align}
 \tilde{Q}_{AB} = 4r^2 \left(\delta_{AB} \beta \cdot \underline{\beta} - \beta_A \underline{\beta}_B - \beta_B \underline{\beta}_A\right) + 4 \delta_{AB} r^2 \left(\rho^2+\sigma^2\right) \, ,\nonumber \\
 \tilde{Q}_{3A} = -8r^2 \left(\rho \underline{\beta} + \sigma {}^\star \underline{\beta} \right) \textrm{ \ \ \ \ \ \ \ , \ \ \ \ \ }  \tilde{Q}_{4A} =  8r^2 \left(\rho {\beta} - \sigma {}^\star {\beta} \right) \, .
\end{align}
Note that $\tilde{Q}$ is traceless, $\gamma^{AB}\tilde{Q}_{AB} = \tilde{Q}_{34}$. Finally, we have the spin1-energy identity arising from a vectorfield $X$,
\begin{equation} \label{Fmi}
\nabla^{\alpha} \left(J_\alpha^X\left[\mathcal{F}\right]\right) = \nabla^{\alpha} \left(\tilde{Q}_{\alpha \beta} X^\beta\right)  = K_1^X\left[\mathcal{F}\right] + K_2^X\left[\mathcal{F}\right] \, ,
\end{equation}
\begin{align}
K_1^X\left[\mathcal{F}\right] = \tilde{Q}_{\alpha \beta} {}^{(X)} \pi^{\alpha \beta} \textrm{ \ \ \ and \ \ \ } K_2^X\left[\mathcal{F}\right] = \left(\mathcal{J}_{\delta}^{spin1}\mathcal{F}^{\phantom{\gamma}\delta}_{\gamma} +  \left(\mathcal{J}^{spin1}{}^\star\right)_\delta \left({}^\star \mathcal{F}\right)^{\phantom{\gamma}\delta}_{\gamma} \right) X^\gamma \, . \nonumber
\end{align}
These formulae should be compared with the spin2-identity (\ref{mainid}). For the error one observes that 
\begin{align}
K_2^{e_3}\left[\mathcal{F}\right] \Big|_{old} = \left(\mathcal{J}_{\delta34}\mathcal{F}^{\phantom{c}\delta}_{\gamma} +  \left(\mathcal{J}{}^\star\right)_{\delta34} \left({}^\star \mathcal{F}\right)^{\phantom{c}\delta}_{\gamma} \right) \left(e_3\right)^\gamma =  r^2 K_2^{e_3 e_3 e_4} \left[\mathcal{W}\right]
\end{align}
\begin{align}
K_2^{e_4}\left[\mathcal{F}\right] \Big|_{old}   = \left(\mathcal{J}_{\delta34}\mathcal{F}^{\phantom{c}\delta}_{\gamma} +  \left(\mathcal{J}{}^\star\right)_{\delta34} \left({}^\star \mathcal{F}\right)^{\phantom{c}\delta}_{\gamma} \right) \left(e_4\right)^\gamma =  r^2 K_2^{e_3 e_4 e_4} \left[\mathcal{W}\right]
\end{align}
\begin{align}
K_2^{e_3}\left[\mathcal{F}\right] \Big|_{new} = \left(r j_{\delta}\mathcal{F}^{\phantom{c}\delta}_{\gamma} +  r \left(j{}^\star\right)_{\delta} \left({}^\star \mathcal{F}\right)^{\phantom{c}\delta}_{\gamma} \right) \left(e_3\right)^\gamma
\end{align}
\begin{align}
K_2^{e_4}\left[\mathcal{F}\right] \Big|_{new} = \left(r j_{\delta}\mathcal{F}^{\phantom{c}\delta}_{\gamma} +  r \left(j{}^\star\right)_{\delta} \left({}^\star \mathcal{F}\right)^{\phantom{c}\delta}_{\gamma} \right) \left(e_4\right)^\gamma
\end{align}
and recalls that the first two terms have been estimated before (however, without the additional $r^2$ weight) in section 7. The additional $r^2$-weight does not alter the treatment of the intricate ``linear" error-terms, in view of the strong decay in $r$ of the component $\rho\sim \frac{1}{r^3}$. However, the additional weight at infinity requires that an $\epsilon$ of an $r$-weighted spacetime term needs to be added to Proposition \ref{K12summary}. For this we recall (\ref{Prho}) and state
\begin{proposition} \label{K2Fsum}
The estimate of Proposition \ref{K12summary} holds for the expression
\begin{align}
\int_{\tilde{\mathcal{M}}} K_2^{\mathcal{X}} \left[\mathcal{F}=\left(\hat{\mathcal{L}}^{n+1}_T W\right)_{\alpha \beta 3 4} \right]
\end{align}
(with $\mathcal{X}=p_\mathcal{X} e_3 + q_\mathcal{X} e_4$ as in Proposition \ref{K12summary}) as well, provided the spacetime-term $\epsilon \cdot \overline{I}^{n+1}_{\tilde{P}_\rho}\left[\tilde{W}\right] \left(\tilde{\mathcal{M}}\right)$ is added to the right hand side.
\end{proposition}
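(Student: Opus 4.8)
\emph{Proof plan for Proposition \ref{K2Fsum}.} The strategy is to reduce the spin-$1$ error back to the spin-$2$ errors already controlled in section \ref{errorterms}, keeping careful track of the single extra factor $r^2$ that the pseudo energy-momentum tensor $\tilde{Q}$ carries relative to the Bel-Robinson tensor. Since $\mathcal{W}=\widehat{\mathcal{L}}^{n+1}_T W$ satisfies the inhomogeneous Bianchi equation $D^\alpha \mathcal{W}_{\alpha\beta\gamma\delta}=J^{hard}_{\beta\gamma\delta}+J^{easy}_{\beta\gamma\delta}$ of section \ref{errorterms}, the Maxwell pseudo-tensor $\mathcal{F}_{\alpha\beta}=r\,\mathcal{W}_{\alpha\beta34}$ obeys (\ref{ihoma}) with $\mathcal{J}^{spin1}_\beta=r\,(J^{hard}+J^{easy})_{\beta34}+r\,j_\beta$, and accordingly I would split $K_2^{\mathcal{X}}[\mathcal{F}]=K_2^{\mathcal{X}}[\mathcal{F}]\big|_{old}+K_2^{\mathcal{X}}[\mathcal{F}]\big|_{new}$ exactly as in the identities preceding the statement. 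For $\mathcal{X}=p_\mathcal{X} e_3+q_\mathcal{X} e_4$ the ``old'' part is, by those identities, the bounded combination $p_\mathcal{X}\,r^2 K_2^{e_3e_3e_4}[\mathcal{W}]+q_\mathcal{X}\,r^2 K_2^{e_3e_4e_4}[\mathcal{W}]$, i.e.\ precisely $r^2$ times the spin-$2$ integrands $K_2^{\mathcal{Y}\mathcal{Z}\mathcal{U}}[\widehat{\mathcal{L}}^{n+1}_T W]$ with $\mathcal{Y},\mathcal{Z},\mathcal{U}\in\{e_3,e_4\}$; the constant frame coefficients trivially satisfy the hypothesis $|T(p_i)|+|T(q_i)|\le\epsilon\tau^{-5/4}$, so these are covered by Propositions \ref{easyerror} and \ref{harderror}. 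The plan is therefore to re-run those two proofs verbatim with the uniform weight $r^2$ inserted.

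In the region $r\le R$ the factor $r^2$ is a harmless constant, so every step of section \ref{errorterms} applies unchanged: the uniform $(t^\star)^{-5/4}$ decay of ${}^{(T)}\pi$, the null decompositions of $J^1,J^2$ in Lemmas \ref{J1I}, \ref{J2l}, \ref{ncLT}, the integration by parts in $T$ of Proposition \ref{K2T}, and the Sobolev/Cauchy splittings of Propositions \ref{easyerror}, \ref{harderror}, producing exactly the right-hand side of Proposition \ref{K12summary}. In the asymptotic region $r\ge R$ the weight $r^2$ multiplies all curvature components of $\widehat{\mathcal{L}}^{n+1}_T W$. For the cubic ``easy''-type terms, and likewise for all terms of $K_2^{\mathcal{X}}[\mathcal{F}]\big|_{new}$ --- which, since $n+1\ge1$, are genuinely quadratic in the decaying curvature components of $\widehat{\mathcal{L}}^{n+1}_T W$ times a decaying Ricci coefficient or component of ${}^{(T)}\pi$, by the remark after (\ref{con}) --- the extra $r^2$ merely shifts the admissible $r$-weight up by two; this is absorbed by borrowing an $\epsilon$ of the strongly $r$-weighted spacetime curvature energy at infinity, namely $\epsilon\cdot\overline{\mathbb{I}}^{n+1}_{\tilde{P}_{\rho}}[\tilde{W}](\tilde{\mathcal{M}})$, together with the lower-order Ricci and curvature energies already present in Proposition \ref{K12summary}. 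The matrix $\mathbf{P}_\rho$ of (\ref{Prho}) is by construction two powers of $r$ above the natural $L^2$ energy at every order, which is exactly the budget one needs; the point is the one stressed in the text, that the intricate error-terms proportional to the non-decaying component $\rho_0=\rho(W)\sim r^{-3}$ are already overcompensated by the decay of $\rho_0$ and hence require no extra room, so the sole new demand is the $r^2$ on the genuinely cubic, decaying pieces.

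The one thing that has to be checked with care --- and which I expect to be the only real obstacle --- is that after the rescaling one never needs an $\epsilon$ of a spacetime term carrying a \emph{larger} $r$-weight than what $\mathbf{P}_\rho$ provides. As in sections \ref{errorterms} and \ref{caworst} this is a term-by-term verification: using the signature considerations of \cite{ChristKlei} one confirms that the rescaled integrands still cannot pair two $\underline{\alpha}$ factors with the weakly $r$-decaying null-component ${}^{(T)}\mathbf{i}$, so that the potentially divergent asymptotic spacetime terms remain absent exactly as before; and one checks that every Cauchy--Schwarz splitting used in Propositions \ref{easyerror}, \ref{harderror} and in Proposition \ref{errhigh} still closes once one factor is measured against the $\mathbf{P}_\rho$-weighted energy. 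I expect this to be bookkeeping rather than a new difficulty, precisely because the two cancellation mechanisms at the heart of section \ref{errorterms} --- the integration by parts in $T$ that converts $\rho_0\cdot(\widehat{\mathcal{L}}^{n}_T D\pi)\cdot\widehat{\mathcal{L}}^{n+1}_T W$ into a cubic term with all factors decaying, and the use of the null structure equations to feed the correct curvature components into $D\pi$ --- are entirely insensitive to the uniform $r^2$ factor.
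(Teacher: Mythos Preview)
Your overall strategy is correct and matches the paper's: split $K_2^{\mathcal{X}}[\mathcal{F}]$ into the ``old'' piece $r^2 K_2^{e_ae_be_c}[\mathcal{W}]$ (which is literally the spin-$2$ error of section~\ref{errorterms} with an extra $r^2$, harmless in $r\le R$ and absorbed by $\epsilon\cdot\overline{\mathbb{I}}^{n+1}_{\tilde{P}_\rho}$ in the asymptotic region) and the ``new'' piece coming from $j$ in (\ref{con}). The observation that the terms proportional to the non-decaying $\rho_0$ are unaffected by the extra $r^2$ because $\rho_0\sim r^{-3}$ is exactly what the paper says.

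There is, however, one concrete omission. Your blanket assertion that the ``new'' terms are ``quadratic in decaying curvature components times a \emph{decaying} Ricci coefficient'' is not quite accurate and hides the one genuine danger at infinity. Inspecting (\ref{con}), the entry $j_B$ contains the term $Y\cdot\underline{\alpha}$; since $\underline{\alpha}$ only decays like $r^{-1}$, the resulting contribution to $K_2^{\mathcal{X}}[\mathcal{F}]\big|_{new}$ would, if $Y$ were merely $O(r^{-2})$ like the other one-forms in (\ref{jo5}), produce a spacetime integrand of strength $r^{-\delta}$ after all the bookkeeping --- i.e.\ a logarithmic divergence that no amount of borrowing from $\overline{\mathbb{I}}^{n+1}_{\tilde{P}_\rho}$ can cure. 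The estimate closes \emph{only} because the admissible null frame of Definition~\ref{ultS} satisfies the gauge condition $Y=0$, which kills this term identically. The paper singles this out explicitly as the ``one important exception'' in its proof; you should too, since neither the signature argument you invoke for the ``old'' terms nor the remark after (\ref{con}) covers it.
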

\begin{proof}
As mentioned, this is clear for the terms proportional to the non-decaying component $\rho$. It remains to look at the behavior at infinity. For the terms in $\mathcal{J}^{spin1}_d$ arising from the original $\mathcal{J}_{d34}$, one performs a null-decomposition as in \cite{ChristKlei} and inspects each term individually. For instance,
\begin{align}
\int_{r\geq R} dt^\star dr d\omega \, r^4 tr \chi {}^{(T)} \mathbf{i} \cdot \underline{\alpha} \cdot \rho \left(\mathcal{L}_T W\right) \nonumber \\
\leq \epsilon \int_{r\geq R} dt^\star dr d\omega r^{0} |\underline{\alpha}|^2  + \epsilon \int_{r\geq R} dt^\star dr d\omega r^{4} |\rho \left(\mathcal{L}_T W\right)|^2 \leq \epsilon \cdot \overline{I}^{n+1}_{\tilde{P}_{\rho}}\left[\tilde{W}\right] \, . \nonumber
\end{align}
For the terms arising from $j$ we have to look at the formulae (\ref{con}). The estimates for these terms are straightforward and analogous to the above with one important exception:  It is crucial that in our gauge $Y=0$, as otherwise the term $r^4 Y \cdot \underline{\alpha}$ could give rise to an $r^{-\delta}$ divergence). 
\end{proof}

\subsection{The main result}
\begin{proposition} \label{spinreduce}
Let $\mathcal{W}$ be a Weyl-field satisfying the inhomogeneous Bianchi equations, $\mathcal{F}$ the associated pseudo Maxwell-tensor and recall (\ref{K2}).  We have the integrated decay estimates
 \begin{align} \label{alldeg}
\sup_{(\tau_1,\tau_2)} E_{spin1} \left[\mathcal{W}\right]\left(\tilde{\Sigma}_{\tau}\right) + E \left[\mathcal{W}\right]\left(\mathcal{H}\left(\tau_1,\tau_2\right)\right)  + I^{deg} \left[\mathcal{W}\right] \left(\tilde{\mathcal{M}}\left({\tau_1},{\tau_2}\right)\right) \leq  B \cdot E_{spin1} \left[\mathcal{W}\right] \left( \tilde{\Sigma}_{\tau_1}\right)  \nonumber \\
 + B \int _{\tau_1}^{\tau_2} \int_{r_0}^{R} \int dt^\star  dr \, r^2 \, d\omega \frac{\left(r-3M\right)^2}{r^3} \left[|\rho|^2+|\sigma|^2\right] + B \cdot Err_{Mora} \left[\mathcal{W}\right]
 \end{align}
  \begin{align} \label{allnondeg}
\sup_{(\tau_1,\tau_2)} E_{spin1} \left[\mathcal{W}\right]\left(\tilde{\Sigma}_{\tau}\right) + E \left[\mathcal{W}\right]\left(\mathcal{H}\left(\tau_1,\tau_2\right)\right) + I^{nondeg}  \left[\mathcal{W}\right] \left(\tilde{\mathcal{M}}\left({\tau_1},{\tau_2}\right)\right)     \leq \nonumber \\
B \cdot  E_{spin1} \left[\mathcal{W}\right] \left( \tilde{\Sigma}_{\tau_1}\right)
+ B \int _{\tau_1}^{\tau_2} \int_{r_0}^{R} \int dt^\star  dr \, r^2 \, d\omega  \left[|\rho|^2+|\sigma|^2\right]  
 + B \cdot  Err_{Mora} \left[\mathcal{W}\right]
 \end{align}
 with 
 \begin{align} Err_{Mora} \left[\mathcal{W}\right]=  \sum_{a,b,c=3,4} \Big| \int_{\tilde{\mathcal{M}}\left(\tau_1,\tau_2\right)} K_2^{e_a e_b e_c}\left[\mathcal{W}\right]  \Big| +\Big| \int_{\tilde{\mathcal{M}}\left(\tau_1,\tau_2\right)} r^2 K_2^{e_a}\left[\mathcal{F}\right]  \Big|
 \end{align}
\end{proposition}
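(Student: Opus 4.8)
\textbf{Proof plan for Proposition \ref{spinreduce}.}

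The strategy is to run a Morawetz-type multiplier argument with the vectorfield $X = f(r)\partial_r$ — simultaneously at the level of the spin-$2$ Bel-Robinson identity \eqref{mainid} (choosing $\mathcal{X}=\mathcal{Y}=\mathcal{Z}=e_a$ for $a\in\{3,4\}$, i.e.~essentially $X=f(r)\partial_r$ together with a copy of $T$) and at the level of the spin-$1$ Maxwell identity \eqref{Fmi} applied to $\mathcal{F}$. First I would recall the main spacetime term \eqref{Xintro}: inserting $X=f(r)\partial_r$ produces $K_1^{XTT}$ controlling all null components of $\mathcal{W}$ except $\rho,\sigma$ where $f_{,r}>0$, and the term $f(1/r-\tfrac32\mu/r)$ carries the familiar trapping factor $(r-3M)$ on $\rho^2+\sigma^2$ and $|\beta|^2,|\underline\beta|^2$. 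I would choose $f$ bounded, changing sign at $r=3M$ and equal to a fixed constant times $\pm1$ away from a bounded $r$-interval around the photon sphere (and vanishing, or matching the redshift construction, near the horizon), so that $K_1^{XTT}$ has the right sign and controls $\|\alpha\|^2+\|\underline\alpha\|^2$ non-degenerately, and $|\beta|^2+|\underline\beta|^2+\rho^2+\sigma^2$ degenerately at $r=3M$, with bounded-$r$ support for the problematic region. The wrong sign of $\rho,\sigma$ in the region $r\approx 3M$ (more precisely, the fact that the first line of \eqref{Xintro} has a $-\tfrac{1-\mu}{2}(\rho^2+\sigma^2)$ contribution and the second line degenerates there) is exactly why the right-hand sides of \eqref{alldeg}, \eqref{allnondeg} carry the explicit $\int_{r_0}^R (r-3M)^2 r^{-3}(\rho^2+\sigma^2)$ (resp.~$\int_{r_0}^R(\rho^2+\sigma^2)$) term: these cannot be absorbed by the multiplier and must be carried along, to be supplied later (in section \ref{rssection}) by the Regge--Wheeler analysis of the renormalized $\rho,\sigma$.

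The spin-$1$ identity for $\mathcal{F}=r\,\mathcal{W}_{\alpha\beta34}$ enters to handle precisely the components $\rho,\sigma$ (and $\beta,\underline\beta$) with the correct $r^3$-weighting that will be demanded by the Regge--Wheeler energies downstream. Here I would use $K_1^X[\mathcal{F}]=\tilde Q_{\alpha\beta}{}^{(X)}\pi^{\alpha\beta}$ with the same $f(r)\partial_r$; the components of $\tilde Q$ listed above show that this controls $r^2(\rho^2+\sigma^2)$, $r^2|\beta|^2$, $r^2|\underline\beta|^2$ with the trapping degeneration, which is what the energy $E_{spin1}[\mathcal{W}]$ and the weighted $I^{(non)deg}[\mathcal{W}]$ are designed to capture (the extra $r^2$ on $\beta,\rho,\sigma$ but not $\underline\beta$ on $N_{out}$, cf.~the footnote after the definition of $E_{spin1}$). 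The boundary terms: on $\tilde\Sigma_{\tau_2}$, $\mathcal{H}(\tau_1,\tau_2)$, $N_{out}$ and $\mathcal{I}^+$ one needs positivity (or controlled sign) of $Q[\mathcal{W}](X,T,T,n)$ and $J^X[\mathcal{F}](n)$; near the horizon this requires stabilization by the redshift vectorfield $Y$ of section \ref{anondege} (Proposition \ref{rsW}) so that the total multiplier $X+$(redshift) produces a non-negative horizon flux and a good-signed bulk term in $r\le r_Y$, while the $\mathcal{I}^+$-flux is non-negative by the causality of the multiplier. The lower-order boundary terms generated by $f_{,r}$ acting through the $\int \partial_r(\cdots)$ pieces are bounded by $E_{spin1}[\mathcal{W}](\tilde\Sigma_{\tau_1})$, which is why only that energy appears on the right.

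The remaining work is the error terms $K_2$. For the spin-$2$ part, $\sum_{a,b,c}\int K_2^{e_ae_be_c}[\mathcal{W}]$ over $\tilde{\mathcal M}(\tau_1,\tau_2)$ is exactly what was estimated in Propositions \ref{easyerror}, \ref{harderror}, \ref{K12summary} (for $\mathcal{W}=\widehat{\mathcal L}^{n+1}_TW$); for the spin-$1$ part, $\int r^2K_2^{e_a}[\mathcal{F}]$ splits as above into the ``old'' piece $r^2K_2^{e_ae_be_c}[\mathcal{W}]$ — already handled — plus the ``new'' piece built from the current $j$ in \eqref{con}; by Proposition \ref{K2Fsum} these are controlled by the same bounds augmented by an $\epsilon$ of the $r$-weighted spacetime integral $\overline{\mathbb I}^{n+1}_{\tilde P_\rho}[\tilde W]$. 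I would therefore simply collect all of these into $Err_{Mora}[\mathcal{W}]$ as stated. The main obstacle is the bookkeeping of signs and $r$-weights in the Morawetz bulk term: one must verify that with a single choice of $f$ the combination of $K_1^{XTT}[\mathcal{W}]$ and $K_1^X[\mathcal{F}]$ (plus redshift) yields a coercive $I^{deg}$ (resp.~$I^{nondeg}$, the latter only after borrowing control of $\rho,\sigma$ which forces the stronger $\int_{r_0}^R(\rho^2+\sigma^2)$ term on the right), while the $f_{,r}$-generated boundary terms and the angular integrations-by-parts needed to convert $\slashed{\mathcal D}_2^\star,\slashed{\mathcal D}_1^\star$ contractions into divergences do not destroy positivity near $r=3M$; this is delicate precisely because the photon sphere degeneration is unavoidable and the $\rho,\sigma$ components resist the multiplier. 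Everything else is a routine assembly of the cited propositions.
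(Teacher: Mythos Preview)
Your overall strategy coincides with the paper's: couple the spin-$2$ Bel--Robinson $X$-estimate with the spin-$1$ Maxwell $X$-estimate for $\mathcal{F}$, stabilize near the horizon by adding a small multiple of the $N$-current, and collect the $K_2$-errors into $Err_{Mora}$. There is, however, a concrete mismatch at $r=3M$. With your single choice of $f$ (vanishing linearly at $3M$, $f_{,r}>0$ everywhere), the first line of \eqref{Xintro} produces a \emph{non-degenerate} wrong-signed term $-\tfrac14(1-\mu)^3 f_{,r}(\rho^2+\sigma^2)$, so the amount of $\rho,\sigma$ you must place on the right-hand side is non-degenerate. This matches \eqref{allnondeg} but \emph{cannot} yield \eqref{alldeg}, whose right-hand side only tolerates $(r-3M)^2(\rho^2+\sigma^2)$. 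The paper accordingly uses two different $f$'s: for \eqref{alldeg}, $f(3M)=0$ with $f_{,r}$ vanishing \emph{quadratically} at $3M$; for \eqref{allnondeg}, $f(3M)=0$ with $f_{,r}>0$ everywhere.

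The first choice now creates a second problem you did not isolate: with $f_{,r}\sim(r-3M)^2$ one has $f\sim(r-3M)^3$, so the second line of \eqref{Xintro} controls $\beta,\underline\beta$ only with weight $(r-3M)^4$ --- too degenerate for $I^{deg}$. \emph{This} is the paper's stated reason for invoking spin-$1$: from $K_{1a}^X[\mathcal F]=2r^2(1-\mu)^2 f_{,r}(|\beta|^2+|\underline\beta|^2)$ one recovers $\beta,\underline\beta$ directly with the $(r-3M)^2$ weight coming from $f_{,r}$, repairing the excess degeneration. The extra $r^2$-weights in $E_{spin1}$ that you emphasize are a genuine by-product of the same construction, but the decisive mechanism at the photon sphere is this degeneration repair; note also that $K^X_{1b}[\mathcal F]$ carries the \emph{same} wrong sign on $\rho,\sigma$ as spin-$2$ does, so spin-$1$ does not help with $\rho,\sigma$ in the bulk --- those must still go on the right-hand side.
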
 
\begin{remark}
In applications $\mathcal{W}$ will be the $n$-times Lie-$T$ commuted Weyl-tensor, whose $K_2\left[\mathcal{W}\right]$ error-term was estimated in section \ref{errorterms}, while the error-term arising from $K_2\left[\mathcal{F}\right]$ was estimated in Proposition \ref{K2Fsum}. 
\end{remark}

\begin{proof}[Proof of Proposition \ref{spinreduce}]
We will apply the energy identity for the Bel-Robinson tensor  (\ref{mainid}) with the vectorfields 
\begin{equation}
X=2X_0 + 2 C \cdot T = f\left(r\right)\left(-p e_3 + q e_4\right) + 2C \cdot T \ \ \ \  \textrm{ and} \ \ \ \ \  2T= p e_3 + q e_4 \, .
\end{equation}
Here $f$ is a bounded function behaving like $f^\prime \sim r^{-1-h}$ near infinity for a small constant $h \geq 0$, while $C=2\sup |f|$ is a constant chosen to make $X$ timelike everywhere in $\mathcal{R}$ except for an $\epsilon$-small region close to the horizon. For the following computations, we will use the ``Schwarzschildean" normal $2\tilde{n}_{\Sigma} = \sqrt{k_\chi} e_3+\frac{1}{\sqrt{k_\chi}} e_4$, which is $C^1$ close to $n_{\Sigma}$, and then argue by stability of the estimates (cf.~(\ref{bndyssrel})).
We compute
\begin{align}
J_{\mu}^{X_0TT} \left[\mathcal{W}\right]  \tilde{n}^\mu_{\Sigma} := Q\left(X_0,T,T, \tilde{n}^\mu_{\Sigma} \right) = \frac{1}{16} \Bigg\{ 2|\underline{\alpha}|^2 \left[-f p^3 \sqrt{k_+} \right] + 2|\alpha|^2 q^3 \frac{f}{\sqrt{k_+}} \nonumber \\
+ 4|\underline{\beta}|^2 \left(-p^2 f \sqrt{k_\chi} - p^3 f  \frac{1}{\sqrt{k_\chi}}\right) 
+ 4|\beta|^2 \left( p q^2 f \frac{1}{\sqrt{k_\chi}} - q^3 f  \sqrt{k_\chi} \right) \nonumber \\
+ 4\left(\rho^2+\sigma^2\right) \left( -p^2 q f \frac{1}{\sqrt{k_\chi}} + p q^2 f  \sqrt{k_\chi}\right) \, . \nonumber
\end{align}
It is clear that by adding $C\cdot T$ to $X_0$ we obtain the bounds
\begin{equation} \label{obsb2}
 Q\left(X,T,T, {n}^\mu_{\Sigma} \right) \geq b \cdot Q\left(T,T,T, n^{\mu}_{\Sigma} \right) - \epsilon \cdot Q\left(N,N,N, n^{\mu}_{\Sigma} \right) \textrm{ \ \ \ and \ \ }
\end{equation}
\begin{equation}
 Q\left(X,T,T, {n}^\mu_{\Sigma} \right) \leq B \cdot Q\left(T,T,T, n^{\mu}_{\Sigma} \right) + \epsilon \cdot Q\left(N,N,N, n^{\mu}_{\Sigma} \right) \, .\nonumber
\end{equation}

For the bulk term $K_1\left[\mathcal{W}\right]$ we first compute the components of the deformation in case the metric is Schwarzschild 
\begin{align}
{}^{(2X_0)}\pi_{SS}^{34} = -\frac{1}{2} \left(1-\mu\right) f_{,r} - \frac{M}{r^2}f  \textrm{ \ \ , \ \ } {}^{(2X_0)}\pi_{SS}^{33} = \frac{1}{2}\left(k_{\chi}^-\right)^2 f_{,r} \, , \nonumber \\   {}^{(2X_0)}\pi_{SS}^{44} = \frac{1}{2} \left(\frac{k_{\chi}^+}{k_\chi}\right)^2 f_{,r} \textrm{ \ \ , \ \ }  {}^{(2X_0)}\pi_{SS}^{AB}  = \frac{f}{r} \left(1-\mu\right) \delta^{AB}
\end{align}
and then decompose
\begin{align} \label{bulkX2}
 K_1^{XTT} \left[\mathcal{W}\right] =K_{1a}^{XTT} \left[\mathcal{W}\right] + K_{1b}^{XTT} \left[\mathcal{W}\right] + K^{XTT}_{1,error}\left[\mathcal{W}\right] \, ,
\end{align}
where 
\begin{align}
 K_{1a}^{XTT} \left[\mathcal{W}\right] = \frac{1}{4}  f_{,r} \Bigg[\frac{1}{2} \left(k_{\chi}^-\right)^4 \|\underline{\alpha}\|^2 + \left(\frac{k_{\chi}^+}{k_\chi}\right)^4\frac{1}{2} \|\alpha\|^2 \Bigg] \nonumber \\ + f \left(\frac{1}{r} - \frac{3}{2}\frac{\mu}{r}\right)\Bigg[  \left(k_{\chi}^-\right)^2 \|\underline{\beta}\|^2 + \left(\frac{k_{\chi}^+}{k_\chi}\right)^2 \|\beta\|^2 \Bigg] \, ,
\end{align}
\begin{align}
 K_{1b}^{XTT} \left[\mathcal{W}\right] = \left(1-\mu\right)f \left(\frac{2}{r} - 3\frac{\mu}{r}\right)\left[\rho^2 + \sigma^2 \right] -\frac{1}{8} \left(1-\mu\right)^3 f_{,r} \left[\rho^2 + \sigma^2 \right] \nonumber
\end{align}
are the expressions one would obtain if the components of the deformation tensor of $X_0$ took on exactly their Schwarzschildean values (and if $T$ was replaced by the Schwarzschildean $\tilde{T} = \frac{1}{2} k_{\chi}^- e_3 + \frac{k_{\chi}^+}{2k_\chi} e_4$) and
\begin{align}
K^{XTT}_{1,error}  \left[\mathcal{W}\right] = \left({}^{(X)}\pi - {}^{(X)}\pi_{SS}\right) Q \left(T,T\right) +2 \left({}^{(T)}\pi \right) Q \left(X,T\right)  \nonumber \\
+ 2\left({}^{(X)} \pi \right) Q\left(T-\tilde{T},T\right)+ \left({}^{(X)} \pi \right) Q\left(T-\tilde{T},T-\tilde{T}\right) \, ,
\end{align}
is the error introduced. One easily sees that there is no way to make $K_{1a}$ and $K_{1b}$ globally positive at the same time, in view of the components $\rho$ and $\sigma$ having the opposite sign compared with the other curvature components.

To avoid degeneration of control near the horizon we will add to the energy identity of $XTT$ a little bit of $NNN$ (we drop the $ \left[\mathcal{W}\right] $-argument for the moment to simplify the notation) producing
\begin{align} \label{mit}
\int_{\tilde{\mathcal{M}}} \left[ K^{XTT}_{1a} + K^{XTT}_{1b} + K^{XTT}_{1,error} + K^{XTT}_2 + e K_1^{NNN} + e K_2^{NNN} \right]  \nonumber \\  + \int_{\Sigma} J^{XTT}_\mu n^\mu_{\Sigma_{\tau}} + e J^{NNN}_\mu n^\mu_{\Sigma_{\tau}} + \int_{\mathcal{H}} e J^{NNN}_\mu n^\mu_{\mathcal{H}} + J^{XTT}_\mu n^\mu_{\mathcal{H}}  \nonumber \\= \int_{\Sigma_0} J^{XTT}_\mu n^\mu_{\Sigma_0} + e J^{NNN}_\mu n^\mu_{\Sigma_{0}}
\end{align}
for $e>0$ a small constant. We choose $e$ so small that in particular
\begin{equation}
 e |K_1^{NNN}| \leq \frac{1}{2} K_{1a}^{XTT} + \frac{1}{16} \left(1-\mu\right)^3 f_{,r} \left[\rho^2 + \sigma^2\right] \textrm{ \ \ \ \ in $\left[r_Y,\frac{r_Y-2M}{2}\right]$} \nonumber
\end{equation}
holds. Observe that $K_{1b}^{XTT}$ has a positive sign close to the horizon and near infinity so that we only have to estimate this term between two fixed constant $r$ hypersurfaces. Note also that the boundary terms in the second line of (\ref{mit}) have a sign and control the non-degenerate energy.

To obtain the estimate (\ref{alldeg}), we would like to choose $f\left(3M\right)=0$ and $f_{,r}$  to be a smooth function which is always non-negative and vanishes quadratically at $r=3M$. This almost gives estimate (\ref{alldeg}), except that the $\beta$-term now degenerates like $\left(r-3M\right)^4$. Moreover, the same problem occurs when we try to prove (\ref{allnondeg}). In this case we would like to choose $f$ to change sign at $3M$ and $f_{,r}$ everywhere positive. Clearly, the $\beta$-term would still degenerate quadratically at $3M$. We resolve this problem using the $X$-estimate for the Maxwell-pseudo tensor: Applying (\ref{Fmi}) with the same vectorfield $X$, we find 
\begin{align} \label{mit2}
\int_{\tilde{\mathcal{M}}} K^{X}_{1} \left[\mathcal{F}\right] + K^{X}_2 \left[\mathcal{F}\right]    + \int_{\Sigma} J^{X}_\mu \left[\mathcal{F}\right] n^\mu_{\Sigma_{\tau}} + \int_{\mathcal{H}} J^{X}_\mu \left[\mathcal{F}\right] n^\mu_{\mathcal{H}} = \int_{\Sigma_0} J^{X}_\mu\left[\mathcal{F}\right]  n^\mu_{\Sigma_0}  \, .
\end{align}
For the boundary term  $J^{X}_\mu n^\mu_{\Sigma_{\tau}} = \tilde{Q}\left(X,n_{n_{\Sigma}}\right)$ we have the analogue of (\ref{obsb2}):
\begin{align}\label{obsb3}
 b \cdot \tilde{Q}\left(T,n_{n_{\Sigma}}\right) - \epsilon \cdot \tilde{Q}\left(N,n_{n_{\Sigma}}\right) \leq \tilde{Q}\left(X,n_{n_{\Sigma}}\right) \leq B \cdot \tilde{Q}\left(T,n_{n_{\Sigma}}\right) + \epsilon \cdot \tilde{Q}\left(N,n_{n_{\Sigma}}\right) \, .
\end{align}
As in the spin2 case, we decompose the bulk term as
\begin{align}
K_1^X \left[\mathcal{F}\right] = K^X_{1a} \left[\mathcal{F}\right] + K^X_{1b} \left[\mathcal{F}\right] + K^X_{1,error} \left[\mathcal{F}\right] \, ,
\end{align}
where now
\begin{align}
K^X_{1a} \left[\mathcal{F}\right] = r^2 \frac{\left(1-\mu\right)^2}{4} f_{,r} \left[8\|\underline{\beta}\|^2 + 8\|{\beta}\|^2\right] \, ,
\end{align}
\begin{align}
 K^X_{1b} \left[\mathcal{F}\right] = - r^2 \frac{\left(1-\mu\right)^2}{4} f_{,r} \left[\frac{1}{1-\mu} 16 \left(\rho^2+\sigma^2 \right)\right]  + r^2 f \left(\frac{1}{r} - \frac{3\mu}{2r}\right) \left[8\left(\rho^2 +\sigma^2\right)\right] \nonumber
\end{align}
and
\begin{equation}
 K^X_{1,error}  \left[\mathcal{F}\right] = \left({}^{(X)}\pi - {}^{(X)}\pi_{SS}\right) \cdot \tilde{Q} + C \cdot {}^{(T)}\pi \cdot \tilde{Q} \, .
\end{equation}
We remark that adding to $X$ a little bit of the $N$ vectorfield, one can eliminate the degenerating weights close to the horizon in the standard fashion (it is unnecessary at this point though, as we already dealt with this problem in the spin2 estimate). 

Finally, to obtain the estimate (\ref{alldeg}), we choose $f\left(3M\right)=0$ and $f_{,r}$  to be a smooth function which is always positive and vanishes quadratically at $3M$. In this case the desired spacetime integral is controlled by
\begin{align} \label{igy}
 I^{deg} \left[\mathcal{W}\right] \left( \tilde{\mathcal{M}}\left({\tau_1},{\tau_2}\right)\right)  \leq B \int dt^\star \int_{r_0}^{R} dr r^2 \int d\omega \frac{\left(r-3M\right)^2}{r^3} \left[|\rho|^2+ |\sigma|^2\right]  \nonumber \\ + B \int_{\tilde{\mathcal{M}}} \left[ K^{XTT}_{1a} \left[\mathcal{W}\right] + K^{XTT}_{1b} \left[\mathcal{W}\right] + e K_1^{NNN} \left[\mathcal{W}\right] + K^X_{1a}  \left[\mathcal{F}\right]+ K^X_{1b} \left[\mathcal{F}\right] \right]  \, .
\end{align}
The right hand side is now estimated by inserting the identities (\ref{mit}) and (\ref{mit2}): The resulting boundary terms have a good sign on the horizon and are otherwise estimated by (\ref{obsb2}) and (\ref{obsb3}). For the error-term we have
\begin{align} \label{eabs}
\Bigg| \int_{\tilde{\mathcal{M}}\left(\tau_1,\tau_2\right)} K^X_{1,error}  \left[\mathcal{F}\right]  + \int_{\tilde{\mathcal{M}}\left(\tau_1,\tau_2\right)} K^{XTT}_{1,error}  \left[\mathcal{W}\right] \Bigg| \leq \nonumber \\
\epsilon \cdot I^{deg} \left[\mathcal{W}\right] \left(\tilde{\mathcal{M}}\left({\tau_1},{\tau_2}\right)\right)  + \epsilon \cdot \sup_{\tau} \, E_{spin1} \left[\mathcal{W}\right] \left( \tilde{\Sigma}_{\tau}\right)  \, ,
\end{align}
as an consequence of the ultimately Schwarzschildean assumption.
Hence the estimate (\ref{alldeg}) follows after absorbing the terms on the right hand side of (\ref{eabs}) by the main terms on the left.

Similarly, to obtain (\ref{allnondeg}), we choose $f$ to vanish linearly at $3M$ and $f_{,r}$ everywhere positive. Note that since $f_{,r}$ no longer degenerates at $3M$, we now need non-degenerate control of $\rho$ and $\sigma$ on the right hand side. The estimate (\ref{igy}) now holds without the degenerating weight on the right hand side and the non-degenerate energy on the left. The second line is estimated as previously, using the energy identities (\ref{mit}) and (\ref{mit2}).
\end{proof}

\subsection{The estimate at the lowest order}
For $\mathcal{W}=W$ the original Weyl-field,  the integrated decay estimate of the previous section is not useful, as the right hand side is expected to grow in $t$ in view of the component $\rho$ appearing. However, using multipliers directly on the renormalized null-Bianchi equations, just as we did in section \ref{normalizedenergy}, we obtain
\begin{corollary} \label{coloo}
For the original Weyl field $W$ we have the following estimate for the renormalized components: For any $h > 0$
  \begin{align}
\sup_{(\tau_1,\tau_2)} \int_{\tilde{\Sigma}_{\tau}} \|\tilde{W}\|^2 +  \int_{\mathcal{H}\left(\tau_1,\tau_2\right)} \|\tilde{W}\|^2 + \int_{\tilde{\mathcal{M}}\left(\tau_1,\tau_2\right)} \frac{1}{r^{1+h}} \|\tilde{W}\|^2    \leq B_h  \int_{\tilde{\Sigma}_{\tau_1}} \|\tilde{W}\|^2  \nonumber \\
+ B \int dt^\star \int_{r_0}^{R} dr r^2 \int d\omega  \left[|\hat{\rho}|^2+|\sigma|^2\right] 
 + B_h \|\rho^2\|_{L^\infty} \int_{\tilde{\mathcal{M}}\left(\tau_1,\tau_2\right)} \|\mathfrak{R}-\mathfrak{R}_{SS}\|^2 \nonumber
 \end{align}
\end{corollary}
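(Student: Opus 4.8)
The plan is to re-run the proof of Proposition \ref{spinreduce} at the lowest order $\mathcal{W}=W$, but --- following section \ref{normalizedenergy} --- to carry out the multiplier computations directly on the \emph{renormalized} null-Bianchi equations, so that the energies that appear involve the renormalized component $\widehat{\rho}=\rho+\tfrac{2M}{r^3}=\rho-\overline{\rho}$ (which decays to zero) together with $\sigma$ (whose Schwarzschild value already vanishes), rather than $\rho$ itself. Concretely, I would rewrite (\ref{Bianchi1})--(\ref{Bianchi4}) together with the renormalized equations (\ref{renormrho}) and their $\sigma$-analogues as transport equations along $e_3$ and $e_4$ for $\|\alpha\|^2$, $\|\beta\|^2$, $\widehat{\rho}^2+\sigma^2$, $\|\underline{\beta}\|^2$, $\|\underline{\alpha}\|^2$ --- exactly as in (\ref{TB1})--(\ref{TB4}) --- multiply each by the $r$-weight produced by the Schwarzschildean Bel-Robinson currents of the vectorfields $X=f(r)(-pe_3+qe_4)+2C\cdot T$ and $N=T+Y$ of the proof of Proposition \ref{spinreduce} (i.e.\ the functions producing $K_{1a}^{XTT}$, $K_{1b}^{XTT}$, $eK_1^{NNN}$ there, with $\rho,\sigma\to\widehat{\rho},\sigma$), sum, and integrate over $\widetilde{\mathcal{M}}(\tau_1,\tau_2)$ against $\sqrt{g}\,dt^\star dr\,d\omega$, integrating by parts. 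The boundary contributions on $\widetilde{\Sigma}_{\tau_2}$, $\widetilde{\Sigma}_{\tau_1}$ and $\mathcal{H}(\tau_1,\tau_2)$ then control the $\|\widetilde{W}\|^2$-energies there; in the region $r\leq r_Y$ the required non-degenerate control is supplied, as before, by the renormalized redshift estimate of Proposition \ref{redshift1}, and the bulk contributions give the weighted integrated decay.

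The structural point special to this setting, and that forces the term $B\int dt^\star\int_{r_0}^R dr\,r^2\int d\omega\,[|\widehat{\rho}|^2+|\sigma|^2]$ on the right, is that $\widehat{\rho}$ and $\sigma$ enter the Morawetz bulk term with the opposite sign to the remaining components --- precisely as in (\ref{Xintro}). Choosing $f$ to vanish linearly at $r=3M$ with $f_{,r}$ everywhere positive and $f_{,r}\sim r^{-1-h}$ near infinity, the bulk coefficients of $\|\alpha\|^2$ and $\|\underline{\alpha}\|^2$ are non-negative with weight $\sim f_{,r}\sim r^{-1-h}$, but the coefficient of $\widehat{\rho}^2+\sigma^2$ (namely $f(\tfrac1r-\tfrac{3\mu}{2r})$ up to the $-\tfrac18(1-\mu)^3 f_{,r}$ correction) vanishes at $r=3M$ and is not of definite sign there; since this term is supported in a fixed compact $r$-range $r_0<r<R$ (near infinity the $r$-weights, near the horizon the factor $1-\mu$ together with the redshift make it favourable), it is simply moved to the right-hand side. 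The bulk coefficients of $\|\beta\|^2$ and $\|\underline{\beta}\|^2$ in turn degenerate quadratically at $r=3M$; as in the proof of Proposition \ref{spinreduce}, this is repaired by applying the same multiplier to the (renormalized) spin-reduced Maxwell system, which contributes a non-negative, non-degenerate bulk term $\sim r^2(1-\mu)^2 f_{,r}(\|\beta\|^2+\|\underline{\beta}\|^2)$ absorbing the degenerate piece. The $r^{-(1+h)}$-weight appearing on the left of the Corollary is exactly the one inherited from $f_{,r}$ near infinity.

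It then remains to estimate the errors produced by integration by parts and by the inhomogeneities. The genuinely cubic ones --- a curvature component times two decaying Ricci-coefficients, or times one decaying Ricci-coefficient and a curvature component --- are handled by Cauchy's inequality and the pointwise smallness of $\mathfrak{R}-\mathfrak{R}_{SS}$ from Definition \ref{RRCapproach}, together with $r$-weighted Hardy arguments near infinity, exactly as in section \ref{normalizedenergy}; the extra factor $r^2$ introduced by the Maxwell weighting is harmless because it always comes accompanied by $\rho_0\sim r^{-3}$, and here one uses that $Y=0$, so that no divergent $r^4\,Y\cdot\underline{\alpha}$ term is generated (cf.\ Proposition \ref{K2Fsum}). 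The only non-cubic (``linear'') errors are those of the form $\rho_0\cdot(\textrm{decaying Ricci-coefficient})$ coming from the terms $\rho\,(tr\,\underline{H}-2\slashed{D}_3 r)$ and $\rho\,(tr\,H-2\slashed{D}_4 r)$ of the Maxwell inhomogeneity $j$ in (\ref{con}) (the bracketed factors vanish in Schwarzschild), and these are bounded by $\lambda$ times the left-hand side plus $B_h\|\rho^2\|_{L^\infty}\int_{\widetilde{\mathcal{M}}(\tau_1,\tau_2)}\|\mathfrak{R}-\mathfrak{R}_{SS}\|^2$, which is the last term of the Corollary. I expect the main obstacle to be precisely this combined bookkeeping: verifying that the indefinite $\widehat{\rho},\sigma$-bulk is genuinely confined to $r_0<r<R$, that the Maxwell estimate repairs the $\beta,\underline{\beta}$-degeneration at the photon sphere without generating uncontrollable $r$-weights at infinity, and that the sole obstruction to a purely cubic (hence decaying) error is the linear $j$-term absorbed into $\|\rho^2\|_{L^\infty}\int\|\mathfrak{R}-\mathfrak{R}_{SS}\|^2$.
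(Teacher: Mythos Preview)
Your proposal is correct and follows essentially the same approach as the paper, which simply states that one should ``use multipliers directly on the renormalized null-Bianchi equations, just as we did in section \ref{normalizedenergy}''. You have supplied considerably more detail than the paper's one-line justification, correctly identifying the role of the spin-reduced Maxwell estimate in repairing the $\beta,\underline{\beta}$-degeneration at $r=3M$, the origin of the indefinite $\widehat{\rho},\sigma$-bulk confined to a compact $r$-range, and the linear error producing the $\|\rho^2\|_{L^\infty}\int\|\mathfrak{R}-\mathfrak{R}_{SS}\|^2$ term.
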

\section{The components $\rho$ and $\sigma$} \label{rssection}
As we have seen in the previous section, the components $\rho$ and $\sigma$ play a special role in the decay. In this section, we derive the wave equations for $\rho$ and $\sigma$ and their $T$-commuted higher order analogues, define appropriate energies for them and prove multiplier estimates for the latter. 

As mentioned previously, the natural variables to look at are the renormalized quantities
\begin{align} \label{renormdef}
\phi_n =  r^3 \left[\rho \left(\widehat{\mathcal{L}}^n_T W\right) + \frac{2M}{r^3} \delta^n_0\right] \textrm{ \ \ \ \ and \ \ \ \ } \psi_n = r^3 \sigma\left(\widehat{\mathcal{L}}^n_T W\right) \, ,
\end{align}
for which we defined the energies $E\left[\mathcal{D}\phi_n\right]\left( \tilde{\Sigma}_\tau\right)$ and $I_{deg}  \left[\mathcal{D}\phi_n\right] \left( {\mathcal{U}}\right)$ (and similarly with $\psi_n$) in (\ref{strhoe}) and (\ref{strho}). Note in this context that the relation $E\left[\mathcal{D}\phi_n\right]\left( \tilde{\Sigma}_\tau\right) \leq B\cdot\overline{\mathbb{E}}^{n+1}_{P_\rho}\left[W\right]\left(\tilde{\Sigma}_\tau\right)$ holds for the boundary admissible decay matrix defined in (\ref{Prho}). 

The quantities (\ref{renormdef}) will be seen to satisfy an inhomogeneous Regge-Wheeler-type equation (section \ref{scalrhoeq}). In section \ref{multestrho}, we derive an integrated decay estimate for the homogeneous part of this equation. Such type of estimates are known \cite{BlueSoffer, BlueSoffer2}, in the context of the Regge-Wheeler equation on an exact Schwarzschild background. The crucial point here is that the lower order terms, which are typically the source of difficulty in deriving such an estimate, are a-priori controlled from the ultimately Schwarzschildean assumption. The second crucial ingredient consists in controlling the inhomogeneity which is present in our case. This is particularly intricate in view of the additional $r^3$-weight introduced by the renormalization (\ref{renormdef}).  The analysis is the content of section \ref{structureRHS}. Finally, we remark that all error-estimates are carried out  for the $\rho$-equation only, since the $\sigma$-analogues are easily inferred (section \ref{scalsigmaeq}).

\subsection{The main result} 
We start by stating the main result of this section, an integrated decay estimate for the quantities $\phi_n$ and $\psi_n$, which follows from combining the decay estimates for the Regge-Wheeler operator with that for the inhomogeneity.

\begin{proposition} \label{rhosigmaest}
Recall the boundary admissible decay matrix $P_{\rho}$ (cf.~(\ref{Prho})) and its bulk associated matrix $\tilde{P}_{\rho}$, as well as  (\ref{renormdef}). For $1 \leq n \leq k-1$
\begin{align}
\sup_{(\tau_1,\tau_2)} E \left[\mathcal{D}\phi_{n}\right]\left(\tilde{\Sigma}_{\tau} \right) + E \left[\mathcal{D}\phi_{n}\right]\left(\mathcal{H}\left(\tau_1,\tau_2\right) \right)  + I_{deg}  \left[\mathcal{D}\phi_{n}\right] \left(\tilde{\mathcal{M}}\left(\tau_1,\tau_2\right)\right) \nonumber \\
\leq B \cdot E \left[\mathcal{D}\phi_{n}\right]\left(\tilde{\Sigma}_{\tau_1} \right)
+ \epsilon  \left[ \overline{\mathbb{I}}^{n+1,deg}_{\tilde{P}_{\rho}} \left[W\right] \left(\tilde{\mathcal{M}}\left(\tau_1,\tau_2\right)\right) + \sup_{\tau} \overline{\mathbb{E}}^{n+1} \left[W\right]  \left(\tilde{\Sigma}_{\tau}\right)\right] \nonumber \\ + \lambda \cdot \mathbb{D}^{n+1} \left[\mathfrak{R}\right] \left(\tau_1,\tau_2\right) + B_{\lambda} \cdot \mathbb{D}^{n} \left[\mathfrak{R}\right] \left(\tau_1,\tau_2\right) \nonumber \\
B_\lambda \cdot I_{deg} \left[\mathcal{D}\phi_{n-1}\right]\left( \tilde{\mathcal{M}}\left(\tau_1,\tau_2\right)\right) + B \cdot E\left[\mathcal{D}\phi_{n-1}\right]\left( \tilde{\Sigma}_{\tau_1,\tau_2}, \mathcal{H}^+\right) \nonumber
\end{align}
for any $\lambda>0$. For $n=0$, the same estimate holds with the last two lines replaced by the term $B \cdot {\mathbb{D}}^{1} \left[\mathfrak{R}\right]$. Moreover, rhe same estimates hold replacing $\phi_n$ by $\psi_n$.
\end{proposition}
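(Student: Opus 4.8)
\textbf{Proof proposal for Proposition \ref{rhosigmaest}.}

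The plan is to combine the integrated decay estimate for the Regge-Wheeler operator (to be derived in section \ref{multestrho}) with a careful estimate for the inhomogeneity on the right-hand side of the wave equation satisfied by $\phi_n$ (equation \eqref{RWhe}, whose precise structure is the content of sections \ref{scalrhoeq}--\ref{structureRHS}). First I would write down the inhomogeneous Regge-Wheeler equation $\mathcal{P}_{Regge-Wheeler}\left(\phi_n\right) = \mathcal{N}_n$, where the inhomogeneity $\mathcal{N}_n$ is schematically $r^3 \mathcal{D}\left(W \cdot \mathfrak{R}\right)$ together with the lower-order contributions arising from commuting $\widehat{\mathcal{L}}_T$ through the derivation of the equation. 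Applying a multiplier of the form $X = f(r) \partial_{r^\star}$ plus a timelike vectorfield $N$ near the horizon (for the redshift, as in Proposition \ref{rsW}) plus the $T$-multiplier for the boundary terms, one obtains a spacetime identity whose main terms control $I_{deg}\left[\mathcal{D}\phi_n\right]$; the key structural point, as emphasized in the introduction and at the start of this section, is that the zeroth-order term in this identity, which for the fixed-background Regge-Wheeler equation would require a Poincar\'e inequality excluding the $l=0,1$ modes, is here a-priori bounded by $\mathbb{D}^n\left[\mathfrak{R}\right]$ because $\phi_n$ itself measures the approach to Schwarzschild, so the $l=0,1$ obstruction is absorbed into the ultimately Schwarzschildean hypothesis. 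The angular and derivative parts of the spacetime term are made nonnegative directly by choosing $f$ appropriately (changing sign at $r = 3M$ in the degenerate case), so the boundary terms are controlled by $E\left[\mathcal{D}\phi_n\right]\left(\tilde\Sigma_{\tau_1}\right)$ and $\mathbb{D}^{n}\left[\mathfrak{R}\right]$.

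Next I would estimate the inhomogeneity $\mathcal{N}_n$ integrated against the multiplied field, i.e.~the error-term $\int_{\tilde{\mathcal{M}}} \mathcal{N}_n \cdot (Xf + \dots)\phi_n$ and its analogue against $\partial_t \phi_n$. Here the quadratic structure ``$\left(\text{derivatives of}\right)\mathfrak{R} \cdot W$" is exploited by putting the Ricci-coefficient factor in $L^\infty$ (using the pointwise decay in Definition \ref{RRCapproach}) and the curvature factor in $L^2$, weighted so as to match the $r^3$-rescaling. The crucial subtlety is that the $r^3$-weight forces us to control $r$-weighted curvature spacetime integrals: precisely the energies $\overline{\mathbb{I}}^{n+1,deg}_{\tilde{P}_\rho}\left[W\right]$ appearing on the right-hand side of the proposition. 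As explained in the introduction, one ``borrows an $\epsilon$" of these $r$-weighted integrals from the estimate at infinity (Theorem \ref{maintheoinf}, via the matrix $\tilde P_\rho$), and couples the resulting estimate with the anticipated interior integrated decay so that the $\epsilon$-contribution and $\epsilon \cdot \sup_\tau \overline{\mathbb{E}}^{n+1}\left[W\right]$ can later be absorbed. The terms in $\mathcal{N}_n$ that are ``linear" in the non-decaying $\rho$ component (as identified in section \ref{structureRHS}) are handled by the cancellation there: inserting the null-structure equations directly on these terms reveals that the worst-decaying contributions drop out, and what remains either decays quadratically or is lower order in differentiability, producing the $\mathbb{D}^{n}\left[\mathfrak{R}\right]$-terms with the large constant $B_\lambda$ and the $I_{deg}\left[\mathcal{D}\phi_{n-1}\right]$-terms that are then controlled by reiterating the estimate one order lower. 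The $n=0$ case is special only because $\phi_0$ carries the $2M$ renormalization constant, which feeds into $\mathfrak{R}$-error-terms rather than lower-order $\phi$-energies, hence the right-hand side collapses to $B\cdot\mathbb{D}^1\left[\mathfrak{R}\right]$; the $\psi_n$ equation is identical up to the interchange $\rho \leftrightarrow \sigma$ and $\mathrm{div}\leftrightarrow\mathrm{curl}$, so no separate argument is needed (section \ref{scalsigmaeq}).

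The main obstacle I expect is the control of the inhomogeneity with the correct $r$-weights: one must verify that the weighted $L^\infty$--$L^2$ splitting of every term in $\mathcal{N}_n$ is consistent with the weights encoded in $P_\rho$ and $\tilde P_\rho$, and in particular that no term produces an $r^{-\delta}$-divergence. This is exactly where the gauge choice $Y=0$ and the cancellation via the null-structure equation for $\widehat{H}$ (the remark following the lemma in section \ref{caworst}) become essential, just as in the analysis at infinity. A secondary difficulty is organizing the coupling: the estimate for $\phi_n$ references $\overline{\mathbb{I}}^{n+1,deg}_{\tilde P_\rho}\left[W\right]$, which in turn (through the integrated decay estimate for the full Weyl tensor, Proposition \ref{spinreduce} and the analysis of section \ref{XestWeyltensor}) references $I_{deg}\left[\mathcal{D}\phi_n\right]$; closing this loop requires that the constant multiplying the $\phi_n$-energy on the curvature side be $\epsilon$-small, which is why the $\epsilon$-prefactors are tracked carefully throughout and why one only borrows an $\epsilon$ of the $r$-weighted integral rather than the full one.
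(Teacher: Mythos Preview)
Your proposal is correct in outline and follows the same approach as the paper: derive the Regge--Wheeler equation for $\phi_n$, apply a combined $T$/Morawetz/redshift multiplier to produce $I_{deg}[\mathcal{D}\phi_n]$ and the boundary energies, then estimate the inhomogeneity by borrowing an $\epsilon$ of the $\tilde P_\rho$-weighted curvature spacetime integral (Lemma~\ref{infstudy} for the asymptotic region) and by exploiting the cancellation in the $\rho_0$-linear terms (Lemma~\ref{Tchoice}) in the interior. The $n=0$ and $\psi_n$ reductions are exactly as you describe.

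One point deserves correction. You say the zeroth-order $\phi_n^2$ spacetime term is ``a-priori bounded by $\mathbb{D}^n[\mathfrak{R}]$ because $\phi_n$ itself measures the approach to Schwarzschild.'' This is the right heuristic but not the mechanism the paper actually uses for $n\geq 1$: directly bounding $\int \phi_n^2$ by Ricci-coefficient norms would cost $\mathbb{D}^{n+1}[\mathfrak{R}]$ with a \emph{large} constant, which is inconsistent with the $\lambda$-small coefficient in the statement. Instead the paper invokes Lemma~\ref{borrowhigh}, which exploits the non-degenerate control of the $T^\perp$-derivative already present in $I_{deg}[\mathcal{D}\phi_n]$ to write
\[
\int \frac{\phi_n^2}{r^{3+\delta}} \;\leq\; \lambda_3\, I_{deg}[\mathcal{D}\phi_n] \;+\; \frac{B}{\lambda_3}\, I_{deg}[\mathcal{D}\phi_{n-1}] \;+\; \epsilon\sup_\tau E[\mathcal{D}\phi_n](\tilde\Sigma_\tau)\,.
\]
The first term is absorbed on the left; the second is the \emph{primary} source of the $B_\lambda\cdot I_{deg}[\mathcal{D}\phi_{n-1}]$ and $E[\mathcal{D}\phi_{n-1}]$ terms in the proposition (it also reappears when estimating the $\mathcal{F}^n_{2b}$ error, cf.\ the end of the proof of Proposition~\ref{errorestrho}). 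You attribute these lower-order terms solely to the linear-in-$\rho_0$ error analysis, which misses half the picture. For $n=0$ Lemma~\ref{borrowhigh} is unavailable and one \emph{does} bound $\int \phi_0^2$ directly by $\mathbb{D}^1[\mathfrak{R}]$ via the null-structure equations, which is why the last two lines collapse in that case.
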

\begin{proof}
Proposition \ref{rhosigmaest} will follow from applying Proposition \ref{finXest} and estimating the error-term via Propositions \ref{errorestrho} and \ref{errorestsigma}. See section \ref{mnrs}.
\end{proof}
\begin{remark}
Note that the energy $I_{deg} \left[\mathcal{D}\phi_{n}\right]$ provides at the same time a non-degenerate integrated decay estimate for $\phi_{n-1}$. 
\end{remark}

The above result should be understood as follows: It is possible to control a spacetime integral for the $T$-commuted renormalized $\rho$ and $\sigma$ components from the $L^2$-energy at the expense of 
\begin{itemize}
\item borrowing an $\epsilon$ from the spacetime integral of a weighted energy (decay matrix $\tilde{P}_{\rho}$)  of all the other curvature components (second line)
\item borrowing a small amount ($\lambda$) of the highest energy on the Ricci-coefficients (third line)
\item and, provided the analogous estimate is available one order lower (last line).
\end{itemize}

\begin{remark}
It is crucial to note that the analogous degenerate estimate for \underline{all} curvature components requires the full $\mathbb{D}^n \left[\mathfrak{R}\right]$ instead of $\mathbb{D}^n_\lambda \left[\mathfrak{R}\right]$. The point here is that in proving a \underline{non-degenerate} estimate for all curvature components one only loses derivatives for the quantities $\rho$ and $\sigma$, i.e.~one only needs a higher order estimate for $\rho$ and $\sigma$, not for the other curvature components (cf.~Proposition \ref{spinreduce}). For these scalar equations, however, one can exploit the fact that one gains non-degenerate integrated decay for a particular radial derivative, which allows the gain of a small $\lambda$ when controlling the error-terms.
\end{remark}
Clearly, one can eliminate the last line of the estimate of Proposition \ref{rhosigmaest} by iterating the estimate:
\begin{corollary} \label{rhosigmaestcor}
For $1 \leq n \leq k-1$
\begin{align}
\sup_{(\tau_1,\tau_2)} E \left[\mathcal{D}\phi_{n}\right]\left(\tilde{\Sigma}_{\tau} \right) + E \left[\mathcal{D}\phi_{n}\right]\left(\mathcal{H}\left(\tau_1,\tau_2\right) \right)  + I_{deg}  \left[\mathcal{D}\phi_{n}\right] \left(\tilde{\mathcal{M}}\left(\tau_1,\tau_2\right)\right) \nonumber \\
\leq B \sum_{i=0}^n E \left[\mathcal{D}\phi_{i}\right]\left(\tilde{\Sigma}_{\tau_1} \right)
+ \epsilon  \left[ \overline{\mathbb{I}}^{n+1,deg}_{\tilde{P}_{\rho}} \left[W\right] \left(\tilde{\mathcal{M}}\left(\tau_1,\tau_2\right)\right) + \sup_{\tau} \overline{\mathbb{E}}^{n+1} \left[W\right]  \left(\tilde{\Sigma}_{\tau}\right)\right] \nonumber \\ + \lambda \cdot \mathbb{D}^{n+1} \left[\mathfrak{R}\right] \left(\tau_1,\tau_2\right) + B_{\lambda} \cdot \mathbb{D}^{n} \left[\mathfrak{R}\right] \left(\tau_1,\tau_2\right) \nonumber
\end{align}
for any $\lambda>0$. For $n=0$ the same estimate holds with the last line replaced by the term $B \cdot {\mathbb{D}}^{1} \left[\mathfrak{R}\right]$. The same estimates hold replacing $\phi_n$ by $\psi_n$.
\end{corollary}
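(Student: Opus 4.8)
The plan is to derive Corollary \ref{rhosigmaestcor} from Proposition \ref{rhosigmaest} by a finite downward induction on the differentiation order $n$, running over the fixed finite range $1\le n\le k-1$. The base case $n=0$ is exactly the $n=0$ version of Proposition \ref{rhosigmaest}: there the last two lines are already replaced by $B\cdot\mathbb{D}^{1}[\mathfrak{R}]$, which is precisely the $n=0$ form asserted in the Corollary, so there is nothing to prove. For the inductive step at level $n$, I would apply Proposition \ref{rhosigmaest} and then control the two ``lower-order'' terms on its right-hand side, namely $B_{\lambda}\cdot I_{deg}[\mathcal{D}\phi_{n-1}]\big(\tilde{\mathcal{M}}(\tau_1,\tau_2)\big)$ and $B\cdot E[\mathcal{D}\phi_{n-1}]\big(\tilde{\Sigma}_{\tau_1,\tau_2},\mathcal{H}^{+}\big)$, by invoking the inductive hypothesis, i.e.\ the Corollary at level $n-1$.

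The key observation that makes this closure work is that every quantity appearing in those two terms is already dominated by the \emph{left-hand side} of the Corollary at level $n-1$: the spacetime integral $I_{deg}[\mathcal{D}\phi_{n-1}]\big(\tilde{\mathcal{M}}(\tau_1,\tau_2)\big)$ occurs there literally; the slice energies $E[\mathcal{D}\phi_{n-1}](\tilde{\Sigma}_{\tau_1})$ and $E[\mathcal{D}\phi_{n-1}](\tilde{\Sigma}_{\tau_2})$ are both controlled by $\sup_{(\tau_1,\tau_2)}E[\mathcal{D}\phi_{n-1}](\tilde{\Sigma}_{\tau})$; and the horizon energy $E[\mathcal{D}\phi_{n-1}]\big(\mathcal{H}(\tau_1,\tau_2)\big)$ appears there as well. (The past-slice piece $E[\mathcal{D}\phi_{n-1}](\tilde{\Sigma}_{\tau_1})$ is the one that is carried forward through the iteration and accumulates into the sum $B\sum_{i=0}^{n}E[\mathcal{D}\phi_i](\tilde{\Sigma}_{\tau_1})$ in the final statement.) After substituting, one is left with right-hand-side contributions at orders $n-1$, $n$ and $n+1$; by monotonicity of the relevant norms in the number of derivatives — $\overline{\mathbb{I}}^{n,deg}_{\tilde{P}_{\rho}}[W]\le\overline{\mathbb{I}}^{n+1,deg}_{\tilde{P}_{\rho}}[W]$, $\sup_{\tau}\overline{\mathbb{E}}^{n}[W]\le\sup_{\tau}\overline{\mathbb{E}}^{n+1}[W]$, and $\mathbb{D}^{n-1}[\mathfrak{R}]\le\mathbb{D}^{n}[\mathfrak{R}]\le\mathbb{D}^{n+1}[\mathfrak{R}]$ — all lower-order pieces collapse into the $\epsilon$-weighted curvature term and the combination $\lambda\,\mathbb{D}^{n+1}[\mathfrak{R}]+B_{\lambda}\,\mathbb{D}^{n}[\mathfrak{R}]$ already present in the statement. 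The argument for $\psi_n$ is word-for-word identical, and I would simply record that.

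Two bookkeeping points complete the proof, and neither is a genuine obstacle. First, the constant tracking: each descent step multiplies the compounded right-hand side by an extra factor of $B$ and $B_{\lambda}$, but since the induction has at most $k$ steps and $k$ is fixed throughout, the accumulated constants remain constants of the admissible type (depending only on $M$, on $k$, and on the lower-order curvature and Ricci energies that are $\epsilon$-small or bounded by the ultimately Schwarzschildean assumption). Second, one must keep $\lambda>0$ \emph{fixed} through the whole iteration so that the $\lambda$-term does not degenerate: fix $\lambda$ once at the outset, let all the $B_{\lambda}$-type constants depend on it, and observe that only the coefficient of $\mathbb{D}^{n+1}[\mathfrak{R}]$ retains the factor $\lambda$ at every step, exactly as claimed. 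The only place where care is needed — and it is care rather than difficulty — is to verify that the inductive hypothesis is applied with the same $\lambda$ and the same $(\tau_1,\tau_2)$ so that the quantities match up; once this is observed, the estimate follows.
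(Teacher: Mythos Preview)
Your proposal is correct and matches the paper's approach exactly: the paper merely states that the corollary follows from ``iterating the estimate'' of Proposition \ref{rhosigmaest}, and you have spelled out precisely this finite induction, including the bookkeeping on constants, the monotonicity of the summed norms, and the handling of the $\lambda$-dependence. (A minor quibble: what you describe is upward induction from the base case $n=0$, not ``downward'' induction, but the argument itself is clear and correct.)
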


\subsection{The scalar wave equation for $\rho$} \label{scalrhoeq}
We want to derive the equation for the $\rho$ and $\sigma$ components of the commuted Bianchi equation
\begin{equation}
D^\alpha \left(\widehat{\mathcal{L}}^n_T W\right)_{\alpha \beta \gamma \delta} = \mathfrak{J}^n_{\beta \gamma \delta} \, ,
\end{equation}
where for the inhomogeneous term we have by (\ref{nXWeylcommute})
\begin{equation} 
\mathfrak{J}^n_{\beta \gamma \delta} = \widehat{\mathcal{L}}^{n-1}_T J_{\beta \gamma \delta} \left(T,W\right) + \sum_{i=0}^{n-2} \widehat{\mathcal{L}}^i_T J_{\beta \gamma \delta} \left(T, \widehat{\mathcal{L}}^{n-1-i}_T \widehat{\mathcal{L}}_T W\right) \, .
\end{equation}
Recall the notation for the null-decomposition of the inhomogeneity introduced in Proposition \ref{GNDK}, $4\Lambda \left[\mathfrak{J}^n\right] = \mathfrak{J}^n_{434}$ etc. and also the notation for the inhomogeneities in the null-decomposed Bianchi equations, $E_4 \left(\rho\left(\widehat{\mathcal{L}}^n_T W\right) \right)$, cf.~(\ref{Bianchi6}).
\begin{proposition} \label{tre}
The quantity $\rho \left(\widehat{\mathcal{L}}^n_T W\right)$ for $n \geq 0$ satisfies
\begin{align}
\left(\slashed{D}_4 \slashed{D}_3 + \slashed{D}_3 \slashed{D}_4 \right) \rho\left(\widehat{\mathcal{L}}^n_T W\right) - 2\slashed{\nabla}^2 \rho \left(\widehat{\mathcal{L}}^n_T W\right)  \nonumber \\
+ \frac{3}{2} \left(2tr H - \frac{4}{3}\Omega \right) \slashed{D}_3 \rho\left(\widehat{\mathcal{L}}^n_T W\right) 
+ \frac{3}{2} \left(2tr \underline{H} - \frac{4}{3}\underline{\Omega} \right) \slashed{D}_4\rho\left(\widehat{\mathcal{L}}^n_T W\right)  \nonumber \\ 
+ \rho\left(\widehat{\mathcal{L}}^n_T W\right) \left(3 tr \underline{H} tr H +12\rho \left(W\right)+ \left(\textrm{dec. RRC}\right)^2 \right) = \sum_{i=1}^2 \mathcal{F}^n_i + \underline{\mathcal{F}}^n_i =:\mathcal{F}^n \, , \nonumber
\end{align}
where the right hand side has the following structure
\begin{align}
\mathcal{F}^n_1 = \slashed{D}_3 E_4\left( \rho\left(\widehat{\mathcal{L}}^n_T W\right) \right) + \frac{3}{2} tr \underline{H} E_4\left(\rho\left(\widehat{\mathcal{L}}^n_T W\right)\right) - \beta\left(\widehat{\mathcal{L}}^n_T W\right) \cdot \slashed{\nabla} tr \underline{H} \nonumber \\
- \widehat{\underline{H}} \cdot \slashed{\nabla}{\beta}\left(\widehat{\mathcal{L}}^n_T W\right) + tr F_3\left(\beta\left(\widehat{\mathcal{L}}^n_T W\right)\right)  - 2 \underline{\Omega} E_4\left(\rho\left(\widehat{\mathcal{L}}^n_T W\right)\right)  \nonumber \\ + 2 \slashed{\nabla}^A \underline{\Omega} \beta\left(\widehat{\mathcal{L}}^n_T W\right) 
+ \slashed{div} \left(2 \widehat{H} \cdot \underline{\beta}\left(\widehat{\mathcal{L}}^n_T W\right) + \underline{Y} \cdot \alpha\left(\widehat{\mathcal{L}}^n_T W\right)\right) \nonumber \\ + 3 \slashed{curl} Z \sigma\left(\widehat{\mathcal{L}}^n_T W\right) + 3 Z^A \slashed{\nabla}_A \rho\left(\widehat{\mathcal{L}}^n_T W\right) + 3 \left(Z^\star\right)^A \slashed{\nabla}_A \sigma\left(\widehat{\mathcal{L}}^n_T W\right)
\end{align}
\begin{align} \label{four}
\mathcal{F}^n_2 = -2\left(\Lambda\left[\mathfrak{J}^n\right]\right)_3 + 4\underline{\Omega}\left(\Lambda\left[\mathfrak{J}^n\right]\right) - 2 \slashed{div} \left(I\left[\mathfrak{J}^n\right]\right) +3   \rho_0 \cdot \rho\left(\widehat{\mathcal{L}}^n_T W\right)
\end{align}
and $\underline{\mathcal{F}}_i^n$ being the conjugate quantities. 
\end{proposition}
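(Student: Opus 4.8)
The plan is to derive the wave equation for $\rho(\widehat{\mathcal{L}}^n_T W)$ by combining the two null-Bianchi equations (\ref{Bianchi6}) and (\ref{Bianchi8}) that govern the $\slashed{D}_4$ and $\slashed{D}_3$ derivatives of $\rho$. Concretely, since $\widehat{\mathcal{L}}^n_T W$ is again a Weyl field satisfying the inhomogeneous Bianchi equation $D^\alpha(\widehat{\mathcal{L}}^n_T W)_{\alpha\beta\gamma\delta} = \mathfrak{J}^n_{\beta\gamma\delta}$, its null components obey (\ref{Bianchi6})--(\ref{Bianchi9}) with the inhomogeneity terms $E_4(\rho(\widehat{\mathcal{L}}^n_T W))$ etc.\ carrying the extra contributions from $\mathfrak{J}^n$ according to the null decomposition of Proposition \ref{GNDK}; in particular the $J_{434}$-type component contributes $-\tfrac12\mathfrak{J}^n_{434} = -2\Lambda[\mathfrak{J}^n]$. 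First I would write $\rho_4 = \slashed{D}_4\rho + \tfrac32 trH\,\rho = \slashed{div}\beta + E_4(\rho)$ and $\rho_3 = \slashed{D}_3\rho + \tfrac32 tr\underline H\,\rho = -\slashed{div}\underline\beta + E_3(\rho)$, then apply $\slashed{D}_3$ to the first (or symmetrically $\slashed{D}_4$ to the second) and symmetrize. The operator $\slashed{D}_3\slashed{div}\beta$ is then rewritten using the Bianchi equation for $\beta_3$ (\ref{Bianchi5}), $\beta_3 = \slashed{\mathcal{D}}_1^\star(-\rho,\sigma) + E_3(\beta)$, which after commuting $\slashed{D}_3$ through $\slashed{div}$ (using Lemma \ref{commutelemma}, which introduces the $F_3$ and $\slashed{\nabla} tr\underline H$ terms) produces $-\slashed{div}\slashed{\mathcal{D}}_1^\star(\rho,\sigma) = \slashed{\nabla}^2\rho + (\textrm{curvature})\cdot\rho$ via the identity $\slashed{div}\slashed{\mathcal{D}}_1^\star = -\slashed{\Delta} + $ l.o.t.\ together with the Gauss equation (\ref{Gauss}) to express $K$ in terms of $\rho$ and Ricci coefficients. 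This is where the $-2\slashed{\nabla}^2\rho$ and the $12\rho(W)$ coefficient (from $-4K$-type contributions plus the $3tr\underline H trH$ term appearing twice) enter.

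The bookkeeping of the first-order terms is the next step: commuting $\slashed{D}_3$ past $tr H$ and past $\slashed{div}$ generates $\slashed{D}_3 tr H\,\rho$ and $tr\underline H\,\slashed{D}_4\rho$-type terms, whose coefficients are fixed by the structure equations (\ref{trH3}), (\ref{trHb4}) and the definitions (\ref{subnot}) of $\vartheta^\pm$; collecting these yields the $\tfrac32(2trH - \tfrac43\Omega)\slashed{D}_3\rho$ and $\tfrac32(2tr\underline H - \tfrac43\underline\Omega)\slashed{D}_4\rho$ terms. Everything that is quadratic in decaying Ricci coefficients gets absorbed into the schematic $(\textrm{dec. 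RRC})^2$ zeroth-order coefficient, and everything that is an inhomogeneity---either a slashed derivative of one of the $E_{3,4}$ Bianchi-error terms, or one of the $\Lambda[\mathfrak{J}^n], I[\mathfrak{J}^n]$ null components of $\mathfrak{J}^n$---gets grouped into $\mathcal{F}^n_1$ (the ``structure-equation / Bianchi-error'' part, coming from $\slashed{D}_3 E_4(\rho) + \tfrac32 tr\underline H E_4(\rho)$ and the commutator terms) and $\mathcal{F}^n_2$ (the ``genuine commutation'' part coming from $\mathfrak{J}^n$, of the form displayed in (\ref{four})). The $\underline{\mathcal{F}}^n_i$ are obtained by running the same computation with the roles of $3$ and $4$ (equivalently $H\leftrightarrow\underline H$, $\Omega\leftrightarrow\underline\Omega$, $\beta\leftrightarrow\underline\beta$) interchanged and averaging, which is why the symmetrized second-order operator $\slashed{D}_4\slashed{D}_3 + \slashed{D}_3\slashed{D}_4$ appears.

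For the $n=0$ case one simply takes $\mathfrak{J}^0 = 0$ (or rather, $W$ itself solves the homogeneous Bianchi equations so the $E$-terms vanish identically), and the equation reduces to the classical inhomogeneous Regge-Wheeler equation with inhomogeneity coming purely from the fact that the background is not exactly Schwarzschild---i.e.\ from the nonvanishing of $\widehat{\underline H}$, $V$, $Z$, etc.\ in the error terms $\mathcal{F}^0_1$. The appearance of $12\rho(W) = 12\rho_0$ multiplying $\rho$ as a \emph{coefficient} (rather than as a right-hand side) is the manifestation of the fact that the Regge-Wheeler potential in our setting is built from the background $\rho$, which does not decay; this is consistent because on exact Schwarzschild $\rho(W) = -2M/r^3$ and $3tr\underline H trH \approx -12/r^2 + \mathcal{O}(M/r^3)$, so the combination $3tr\underline H trH + 12\rho(W)$ is precisely the Regge-Wheeler-type potential $\approx -12/r^2 + 24M/r^3$ that forces the $l\ge 2$ structure.

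The main obstacle I anticipate is not conceptual but organizational: correctly tracking the numerical coefficients through the repeated application of the commutation Lemma \ref{commutelemma} (which must be applied when pushing $\slashed{D}_3$ through $\slashed{div}$, and then again when pushing it through the $\slashed{\mathcal{D}}_1^\star$), and correctly identifying which of the many generated quadratic terms are ``$(\textrm{dec. RRC})^2 \cdot \rho$'' versus genuine inhomogeneity, versus terms that cancel. In particular one must be careful that the terms involving $\slashed{D}_3$ of the \emph{non-decaying} part of $E_4(\rho)$ (the $\widehat{\underline H}\cdot\alpha$ piece, say) are genuinely inhomogeneities and not coefficients---this is exactly the splitting $\mathcal{F}^n_1$ is designed to package. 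A secondary subtlety is verifying that the cross terms $\slashed{D}_3\slashed{D}_4\rho$ and $\slashed{D}_4\slashed{D}_3\rho$ are kept in symmetrized form (their difference is a lower-order $F_{34}$-term, absorbed into the coefficients or into $\mathcal{F}^n$), so that the principal part is the honest wave operator $\slashed{D}_4\slashed{D}_3 + \slashed{D}_3\slashed{D}_4 - 2\slashed{\nabla}^2$ needed for the multiplier estimates of section \ref{multestrho}. Since the statement is explicitly ``schematic'' in the zeroth-order coefficient and only asserts the \emph{structure} of $\mathcal{F}^n_1, \mathcal{F}^n_2$, the proof is a direct---if lengthy---computation, and I would present it as such, displaying the key intermediate identity $-\slashed{div}\slashed{\mathcal{D}}_1^\star(\rho,\sigma) = \slashed{\Delta}\rho + K\rho$ and the use of (\ref{Gauss}), and then simply asserting that collecting terms yields the stated equation.
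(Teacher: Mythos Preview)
Your approach is correct and essentially identical to the paper's: start from $\rho_4 = \slashed{div}\beta + E_4(\rho) - \tfrac12\mathfrak{J}^n_{434}$, apply $\slashed{D}_3$, substitute the Bianchi equation for $\beta_3$, and symmetrize by doing the conjugate computation.

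Two small inaccuracies to correct. First, the $12\rho(W)$ coefficient does \emph{not} arise from the Gauss equation or from a $K$-term in $\slashed{div}\slashed{\mathcal{D}}_1^\star$; for scalar pairs $(\rho,\sigma)$ one has exactly $\slashed{div}\slashed{\mathcal{D}}_1^\star(-\rho,\sigma) = \slashed{\Delta}\rho$ with no curvature correction. The $\rho_0$ contribution instead enters through the null structure equation (\ref{trH3}) for $\slashed{D}_3(tr H)$ (and its conjugate for $\slashed{D}_4(tr\underline{H})$), which carries an explicit $+2\rho_0$; the $\tfrac32\rho$ prefactor then produces $3\rho_0\rho$ from each side, and the further $3\rho_0\rho$ you place in $\mathcal{F}^n_2$ and $\underline{\mathcal{F}}^n_2$ accounts for the remaining $6$. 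Second, your parenthetical for $n=0$ that ``the $E$-terms vanish identically'' is wrong: only $\mathfrak{J}^0=0$, while the Bianchi inhomogeneities $E_3(\beta), E_4(\rho),\dots$ are nonzero (they are quadratic in Ricci coefficients and curvature even for the homogeneous Bianchi equation) and constitute precisely the $\mathcal{F}^0_1 + \underline{\mathcal{F}}^0_1$ right-hand side of Corollary~\ref{conz}.
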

\begin{remark}
The expression on the left hand side is essentially (see below) the Regge Wheeler operator of the metric $g$ acting on $ \rho\left(\widehat{\mathcal{L}}^n_T W\right)$. If the right hand side exhibited only quadratic error-terms, proving decay would be straightforward. However, this is only manifestly true for $\mathcal{F}_1^n$ and $\underline{\mathcal{F}}_1^n$, while for $\mathcal{F}_2^n$ and  $\underline{\mathcal{F}}_2^n$ there are in fact terms proportional to the non-decaying component $\rho$. The structure of the error-terms will be discussed in section \ref{structureRHS}.
\end{remark}

The last term in (\ref{four}) will be seen to cancel by contributions from the other terms in case $n\geq1$. For $n=0$, however, $\mathfrak{J}^0=0$ and $\mathcal{F}^0_2 + \underline{\mathcal{F}}^0_2=3\rho^2+3\rho^2$ and hence
\begin{corollary} \label{conz}
For $n=0$ we obtain
\begin{align}
\left(\slashed{D}_4 \slashed{D}_3 + \slashed{D}_3 \slashed{D}_4 \right) \rho - 2\slashed{\nabla}^2 \rho 
+ \frac{3}{2} \left(2tr H - \frac{4}{3}\Omega \right) \slashed{D}_3 \rho  \nonumber \\
+ \frac{3}{2} \left(2tr \underline{H} - \frac{4}{3}\underline{\Omega} \right) \slashed{D}_4\rho 
+ \rho \left(3 tr \underline{H} tr H +6\rho + \left(\textrm{dec. RRC}\right)^2 \right) =  \mathcal{F}^0_1 + \underline{\mathcal{F}}^0_1\nonumber \, .
\end{align}
Note that the right hand side decays quadratically.\footnote{In particular, the equation for $\rho$ itself decouples modulo quadratically decaying error-terms. However, $\rho$ does \emph{not} satisfy a Regge-Wheeler equation and only the renormalized quantity $\rho+\frac{2M}{r^3}$ can be seen to satisfy an equation with the Regge-Wheeler operator on the right hand side. Unfortunately, this renormalization also introduces ``linear" error-terms of the form ``$\rho \times \mathcal{D}\left(\mathfrak{R}-\mathfrak{R}_{SS}\right)$", in view of the fact that $\rho$ itself does not decay. A more careful choice of the renormalizing function may remedy this problem, cf.~the remarks at the end of this section.}
\end{corollary}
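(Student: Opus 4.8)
The plan is to read the statement off Proposition \ref{tre} with $n=0$, the only non-trivial input being that the inhomogeneity vanishes at this order. Since $\left(\mathcal{R},g\right)$ is a vacuum spacetime, $W$ obeys the homogeneous Bianchi equations $D^\alpha W_{\alpha\beta\gamma\delta}=0$ (cf.~(\ref{BiI})), so the inhomogeneity $\mathfrak{J}^0_{\beta\gamma\delta}$ of $D^\alpha\left(\widehat{\mathcal{L}}^0_T W\right)_{\alpha\beta\gamma\delta}$ is identically zero; consequently every null component of $\mathfrak{J}^0$ vanishes, in particular $\Lambda\left[\mathfrak{J}^0\right]=I\left[\mathfrak{J}^0\right]=\underline{\Lambda}\left[\mathfrak{J}^0\right]=\underline{I}\left[\mathfrak{J}^0\right]=0$. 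Moreover $\widehat{\mathcal{L}}^0_T W=W$, so that $\rho\left(\widehat{\mathcal{L}}^0_T W\right)=\rho$ and, in the notation of section \ref{errorterms}, $\rho_0=\rho\left(W\right)=\rho$.

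First I would substitute these facts into the expression (\ref{four}) for $\mathcal{F}^0_2$. All terms built out of $\mathfrak{J}^0$ drop out, leaving only the last one, so $\mathcal{F}^0_2=3\rho_0\cdot\rho=3\rho^2$, and likewise $\underline{\mathcal{F}}^0_2=3\rho^2$ for the conjugate quantity. Thus the right-hand side $\mathcal{F}^0$ of the equation of Proposition \ref{tre} equals $\mathcal{F}^0_1+\underline{\mathcal{F}}^0_1+6\rho^2$. Next I would move the $6\rho^2$ to the left-hand side, where it combines with the term $12\rho\left(W\right)\cdot\rho=12\rho^2$ already sitting inside the zeroth-order coefficient of $\rho$, producing $\left(12-6\right)\rho^2=6\rho^2$; that is, the coefficient $3\,tr\underline{H}\,trH+6\rho+\left(\textrm{dec. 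RRC}\right)^2$ asserted in the Corollary. This yields the displayed equation with right-hand side $\mathcal{F}^0_1+\underline{\mathcal{F}}^0_1$.

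It then remains to verify the last sentence, that $\mathcal{F}^0_1+\underline{\mathcal{F}}^0_1$ decays quadratically, which I would do by inspecting the explicit formula for $\mathcal{F}^0_1$ term by term. The quantity $E_4\left(\rho\right)$ (the inhomogeneity of (\ref{Bianchi6}) with $J=0$) is a sum of products of a decaying Ricci coefficient (or $\widehat{\underline{H}}$) with one of the curvature components $\alpha,\beta,\underline{\beta}$, all of which decay; hence $E_4\left(\rho\right)$ and its $\slashed{D}_3$- and $\slashed{\nabla}$-derivatives appearing in $\mathcal{F}^0_1$ are quadratically decaying (differentiation only redistributes the decay and generates further such products plus lower order terms), and multiplying by the bounded coefficients $\tfrac{3}{2}tr\underline{H}$, $\underline{\Omega}$, etc.~does not spoil this. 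Each remaining term of $\mathcal{F}^0_1$ is of the schematic form $\left(\textrm{decaying curvature}\right)\cdot\left(\textrm{decaying Ricci coefficient or its derivative}\right)$: the curvature factors are $\beta,\underline{\beta},\alpha,\sigma$ together with the slashed derivatives $\slashed{\nabla}\rho,\slashed{\nabla}\sigma$, and the Ricci factors include $\slashed{\nabla}tr\underline{H}$, $Z$, $\slashed{\nabla}\underline{\Omega}$, $\slashed{curl}Z$, $F_3$, all decaying by (\ref{jo5}), (\ref{commutelemma}) and the interior estimates of Definition \ref{RRCapproach}. The conjugate terms in $\underline{\mathcal{F}}^0_1$ are handled identically by interchanging $3\leftrightarrow4$.

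The argument is essentially bookkeeping once Proposition \ref{tre} is granted; the one point needing care — and hence the main (mild) obstacle — is the quadratic-decay claim for the terms $3Z^A\slashed{\nabla}_A\rho$, the $\slashed{\nabla}tr\underline{H}$-terms and their conjugates, where the angular gradient of a non-decaying quantity appears: one must observe that the Schwarzschildean parts of $tr\underline{H},\rho,\sigma$ are spherically symmetric, so $\slashed{\nabla}tr\underline{H}$ decays via the structure equation (\ref{divHb}) while $\slashed{\nabla}\rho$, $\slashed{\nabla}\sigma$ decay because $\slashed{\nabla}r$ is controlled by the closeness of the null-frame to the Schwarzschild frame (for which $\slashed{\nabla}r\equiv0$). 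As the footnote to the Corollary stresses, this is as far as the unrenormalised $\rho$ can be pushed --- $\rho$ does not itself satisfy a Regge--Wheeler equation, and the renormalisation $\tilde{\rho}=r^3\rho+2M$ needed to obtain one (the subject of the following subsections) reintroduces genuinely linear error-terms.
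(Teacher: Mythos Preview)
Your proof is correct and follows exactly the approach the paper indicates in the sentence immediately preceding the Corollary: since $\mathfrak{J}^0=0$, one has $\mathcal{F}^0_2+\underline{\mathcal{F}}^0_2=3\rho^2+3\rho^2$, and bringing these to the left reduces the coefficient $12\rho$ to $6\rho$. Your term-by-term verification that $\mathcal{F}^0_1+\underline{\mathcal{F}}^0_1$ decays quadratically (which the paper merely asserts) is a welcome elaboration; the only mild subtlety you flag---the terms $Z\cdot\slashed{\nabla}\rho$, $\beta\cdot\slashed{\nabla}tr\underline{H}$---is handled correctly, though note that for $\slashed{\nabla}\rho$ one can alternatively appeal directly to the Bianchi equation (\ref{Bianchi10}), which expresses $\slashed{\nabla}\rho$ in terms of the decaying quantity $\underline{\beta}_4$ plus quadratic terms.
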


For the following proof we will, to avoid overloading the notation, denote by $\alpha, \underline{\alpha}, ..., \rho, \sigma$ the null-components of the $n$-commuted Weyl tensor $\widehat{\mathcal{L}}^n_T W$, which satisfies the $n$ times $T$-commuted inhomogeneous Bianchi equations. Moreover, we denote by $\rho_0$ the (non-decaying) $\rho$-component of the original Weyl curvature field, i.e.~$\rho_0=\rho\left(W\right)$. 
\begin{proof}[Proof of Proposition \ref{tre}]
Start from the scalar Bianchi equation for $\rho = \rho\left(\widehat{\mathcal{L}}^n_T W\right)$:
\begin{equation}
 \rho_4 = \slashed{D}_4 \rho + \frac{3}{2} tr \left(H\right) \rho = \slashed{div} \beta + E_4\left(\rho\right) - \frac{1}{2} \mathfrak{J}_{434} \, .
\end{equation}
Taking a $3$-derivative we obtain
\begin{align} \label{stp1}
 \rho_{43} - \slashed{\nabla}^2 \rho  = \slashed{div} \left(E_3\left(\beta\right) + \mathfrak{J}_{3A4}\right) + \slashed{D}_3 E_4\left(\rho\right) + \frac{3}{2} tr \underline{H} E_4\left(\rho\right) \nonumber \\ - \beta \cdot \slashed{\nabla} tr \underline{H} - \widehat{\underline{H}} \cdot \slashed{\nabla}{\beta} + tr F_3\left(\beta\right) - \frac{1}{2} \slashed{D}_3 \mathfrak{J}_{434} - \frac{3}{4} tr \left(\underline{H}\right) \mathfrak{J}_{434}  \, ,
\end{align}
where
\begin{equation}
E_3\left(\beta\right) = 2\widehat{H} \cdot \underline{\beta} + 2\underline{\Omega} \beta + \underline{Y} \alpha + 3 \left(Z \rho + {}^\star Z \sigma\right) \, .
\end{equation}
We first deal with the left hand side. Using the null structure equation
\begin{align} \label{nstrH}
 \slashed{D}_3 tr H = - \frac{1}{2} tr \underline{H} tr H - 2 \slashed{div} Z + 2 \underline{\Omega} tr H - \widehat{H} \cdot \underline{\widehat{H}} \nonumber \\ + 2 \left(\underline{Y} \cdot Y + Z \cdot Z\right) + 2\rho_0
\end{align}
with $\rho_0$ denoting the $\rho$ component of the original Weyl-tensor, we derive
\begin{align} \label{rhoeq2}
LHS = \rho_{43} - \slashed{\nabla}^2 \rho  =  \slashed{D}_3 \slashed{D}_4 \rho + \frac{3}{2} \left(tr H\right) \slashed{D}_3 \rho + \frac{3}{2} \left(tr \underline{H} \right) \slashed{D}_4\rho + \slashed{\nabla}^2 \rho \nonumber \\ + \frac{3}{2}\rho \left(tr \underline{H} tr H + 2 \slashed{div} Z + 2 \underline{\Omega} tr H - \widehat{H} \cdot \underline{\widehat{H}} + 2 \left(\underline{Y} \cdot Y + Z \cdot Z\right) + 2\rho_0 \right) \, .
\end{align}
Turning to the right hand side of (\ref{stp1}) we decompose  
\begin{equation}
RHS_1 =   - \frac{1}{2} \left(\mathfrak{J}_{434}\right)_3 + \slashed{div} \left(\mathfrak{J}_{3A4}\right) = - \frac{1}{2} \slashed{D}_3 \mathfrak{J}_{434} - \frac{3}{4} tr \left(\underline{H}\right) \mathfrak{J}_{434} + \slashed{div} \left(\mathfrak{J}_{3A4}\right) \, , \nonumber
\end{equation}
\begin{equation}
RHS_2 =  \slashed{D}_3 E_4\left(\rho\right) + \slashed{div} E_3 \left(\beta\right) + \frac{3}{2} tr \underline{H} E_4\left(\rho\right) - \beta \cdot \slashed{\nabla} tr \underline{H} - \widehat{\underline{H}} \cdot \slashed{\nabla}{\beta} + tr F_3\left(\beta\right)  \, . \nonumber
\end{equation}
Using again the Bianchi equations we compute
\begin{align}
RHS_2 = \slashed{D}_3 E_4\left(\rho\right) + \frac{3}{2} tr \underline{H} E_4\left(\rho\right) - \beta \cdot \slashed{\nabla} tr \underline{H} - \widehat{\underline{H}} \cdot \slashed{\nabla}{\beta} + tr F_3\left(\beta\right)\nonumber \\  2 \underline{\Omega} \left(\slashed{D}_4 \rho + \frac{3}{2} tr H \, \rho - E_4\left(\rho\right) + \frac{1}{2} \mathfrak{J}_{434} \right) + 2 \slashed{\nabla}^A \underline{\Omega} \beta_A \nonumber \\ 
+ \slashed{div} \left(2 \widehat{H} \cdot \underline{\beta} + \underline{Y} \cdot \alpha\right) + 3 \slashed{div} Z \rho + 3 \slashed{curl} Z \sigma + 3 Z^A \slashed{\nabla}_A \rho +  3 \left(Z^\star\right)^A \slashed{\nabla}_A \sigma
\end{align}
and observe that all terms containing $\rho$ without a derivative are cancelled by terms in $LHS$. The term $\underline{\Omega}\slashed{D}_4 \rho$ is moved to the left hand side. Repeat the same computation starting from $\rho_{3}$ and applying a $4$-derivative. This yields the conjugate terms.
\end{proof}

To obtain the energy naturally associated with the equation, we will have to rescale the equations of Lemma \ref{tre} by $r^3$. Recall that $\phi_n = r^3 \rho \left(\hat{\mathcal{L}}^n_T W\right)$ for $n\geq 1$. 
Define he vectorfield
\begin{align}
L = -\frac{1}{2} \left(2-p\right) e_3 + \frac{1}{2}q e_4 \, ,
\end{align}
\glossary{
name={$L$},
description={$2L = - \left(2-p\right) e_3 + q e_4$ ($L=\partial_r$ in Schwarzschild in $(t^\star,r)$-coords)}
}
which equals $\partial_r$ in Schwarzschild with respect to the regular $\left(t^\star,r\right)$-coordinates. Recall $2T=pe_3 + q e_4$ and $2T^\perp = -p e_3 + q e_4$ and hence 
$\left(1-p\right)T + p L = T^\perp$.

\begin{corollary} \label{vecwrite}
For  $n\geq 1$, we can write
\begin{align}  \label{phicommute} 
 \left(2-p\right) T^2 \left(\phi_n\right) - L \left(pL\phi_n\right) + \left(L \left(p\right)\right) T \left(\phi_n\right) - 2 \left(1-p\right) T L \phi_n \nonumber \\ 
- q \slashed{\nabla}^2 \phi_n - q \frac{6M + \mathcal{E}}{r^3} \phi_n = \frac{q}{2} \ r^3 \mathcal{F}^n +  \left(\textrm{decaying RRC} \ , \ {}^{(T)}\pi \right) \cdot \mathcal{D} \phi_n
\end{align}
where $|\mathcal{E}| \leq \epsilon$ is a small error ($C^2$ in the metric, independent of $n$) arising from the fact that the metric is close to Schwarzschild, and the last term denotes a collection of terms of the form ``any derivative of $\phi_n$ multiplying any null-component of the deformation tensor of $T$, or an expression $\mathfrak{R}-\mathfrak{R}_{SS}$." \newline
The renormalized quantity $\phi_0$ also satisfies (\ref{phicommute}), except that terms of the form $\rho_0 \cdot \left(\textrm{decaying RRC} \ , \ {}^{(T)}\pi \right)$ need to be added to the right hand side. 
\end{corollary}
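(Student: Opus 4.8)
The plan is to obtain (\ref{phicommute}) by purely algebraic manipulation of the equation in Proposition \ref{tre} (for $n\geq 1$) or Corollary \ref{conz} (for $n=0$), rescaling the unknown $\rho(\widehat{\mathcal{L}}^n_T W)$ by $r^3$ and re-expressing all null derivatives in terms of the vectorfields $T$, $L$ and $T^\perp$. No PDE theory is needed; everything is bookkeeping with the frame relations $2T = pe_3+qe_4$, $2L = -(2-p)e_3 + qe_4$, $2T^\perp = -pe_3 + qe_4$, together with the ultimately Schwarzschildean closeness assumptions which guarantee that $p$, $q$ agree with $k_\chi^-$, $k_\chi^+/k_\chi$ up to $\epsilon$ and that the Ricci coefficients agree with their Schwarzschild values up to decaying (or, globally, $\frac{\epsilon}{r^2}$) quantities.

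First I would conjugate the operator on the left-hand side of Proposition \ref{tre} by $r^3$. Writing $\rho(\widehat{\mathcal{L}}^n_T W) = r^{-3}\phi_n$, each application of $\slashed{D}_3$, $\slashed{D}_4$, $\slashed{\nabla}$ produces, besides $r^{-3}$ times the same derivative of $\phi_n$, lower-order terms where the derivative hits the weight $r^3$; these bring in factors $\slashed{D}_3 r$, $\slashed{D}_4 r$, and the Schwarzschild identities $\slashed{D}_3 r = -\frac{1}{2}r\,tr\underline H$, $\slashed{D}_4 r = \frac{1}{2}r\,tr H$ (valid up to decaying corrections) are exactly what makes the first-order terms $\frac{3}{2}(2trH - \frac{4}{3}\Omega)\slashed{D}_3\rho$ etc.\ combine cleanly after rescaling — the $tr H$, $tr\underline H$ pieces of the first-order coefficients are precisely absorbed into the conjugation, leaving the $\Omega$, $\underline\Omega$ pieces (which decay) as part of the ``decaying RRC $\cdot\,\mathcal{D}\phi_n$'' error. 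The zeroth-order coefficient $3\,tr\underline H\,trH + 12\rho_0 + (\text{dec. RRC})^2$ becomes, after rescaling, $-q\frac{6M+\mathcal{E}}{r^3}$: in Schwarzschild $3\,tr\underline H\,trH = -\frac{12}{r^2}k_\chi$ roughly and $12\rho_0 = -\frac{24M}{r^3}$, and one checks these combine with the $r^3$-conjugation remainders to the stated Regge–Wheeler potential $-q\cdot 6M/r^3$ up to a $C^2$-small, $n$-independent error $\mathcal{E}$ coming from the $\frac{\epsilon}{r^2}$ closeness of the metric; I would carry out this computation in the exact Schwarzschild case first and then read off $\mathcal{E}$ from the perturbation bounds.

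Next I would convert the second-order null derivatives $\slashed{D}_4\slashed{D}_3 + \slashed{D}_3\slashed{D}_4$ into the $(T, L)$ frame. Inverting the relations gives $e_3 = -\frac{2}{q}(L - \frac{1-p}{?}\cdots)$ — more usefully, $e_4 = \frac{1}{q}(2L + 2(1-p)T)\cdot\frac{1}{?}$; the clean statement is that $\frac{1}{2}q e_4 = L + (1-p)T$ and $\frac{1}{2}p e_3 = T - \frac{1}{2}q e_4 = T - L - (1-p)T = pT - L$, so $\frac{1}{2}e_3 = T - \frac{1}{p}L$ when $p\neq 0$, and one uses instead the combination $\slashed{D}_4\slashed{D}_3 + \slashed{D}_3\slashed{D}_4$ directly. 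Writing $e_3$-, $e_4$-derivatives as combinations of $T$- and $L$-derivatives, the symmetrized product $\frac12(\slashed{D}_3\slashed{D}_4 + \slashed{D}_4\slashed{D}_3)$ expands into $(2-p)T^2 - L(pL\cdot) + (Lp)T - 2(1-p)TL$ plus commutator terms $[T,L]$ and terms where $T$ or $L$ hits the coefficients $p$, $q$; all of the latter are controlled by the interior decay $|T(p)|+|T(q)|\leq\epsilon\tau^{-5/4}$ and the closeness of $p,q$, hence fall into the error bucket. The angular term $-2\slashed{\nabla}^2\rho$ rescales to $-2r^{-3}\slashed{\nabla}^2\phi_n$ up to curvature-of-$S^2$ corrections, which after multiplying the whole equation by $\frac{q}{2}$ and absorbing the frame coefficient $q$ gives $-q\slashed{\nabla}^2\phi_n$ plus errors. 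Finally the right-hand side $\mathcal{F}^n$ gets multiplied by $\frac{q}{2}r^3$, giving the stated $\frac{q}{2}r^3\mathcal{F}^n$, and for $n=0$ the extra terms come from the $3\rho^2$ term in $\mathcal{F}^0_2 + \underline{\mathcal{F}}^0_2$ in Corollary \ref{conz}, which after rescaling is precisely of the form $\rho_0\cdot(\text{dec. RRC})$ plus the potential contribution already accounted for.

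I expect the main obstacle to be the careful accounting of the zeroth-order term: one must verify that the conjugation remainders, the Schwarzschild values of $3\,tr\underline H\,trH$, the $12\rho_0$ (resp.\ $6\rho$) term, and the $(2-p), q$ frame factors conspire to produce exactly $q\cdot 6M/r^3$ and not, say, a different multiple or a term that fails to decay — and, crucially, that the leftover discrepancy $\mathcal{E}$ is genuinely $C^2$-controlled and $n$-independent (the latter because the potential of the Regge–Wheeler operator does not see the commutation with $T$, only the conjugation and the metric do). A secondary but more tedious point is bookkeeping the mixed first-order terms $(Lp)T\phi_n$ and $-2(1-p)TL\phi_n$: these are genuine structural terms of the equation, not errors, so I must make sure the expansion of the symmetrized second-order operator reproduces them with the correct coefficients rather than sweeping them into the RRC-error term. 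Once these two points are pinned down, everything else — the angular term, the first-order $\slashed{D}_3\rho$, $\slashed{D}_4\rho$ terms, and the inhomogeneity — is routine rescaling.
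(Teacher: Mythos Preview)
Your overall strategy---rescale by $r^3$, then rewrite the null-derivative operator in the $(T,L)$ frame---is exactly the paper's, but there is a genuine gap in how you handle the first-order terms. You write that the $tr H$, $tr\underline H$ pieces of the coefficients $\frac{3}{2}(2\,trH-\tfrac{4}{3}\Omega)$, $\frac{3}{2}(2\,tr\underline H-\tfrac{4}{3}\underline\Omega)$ are absorbed by the $r^3$-conjugation, ``leaving the $\Omega$, $\underline\Omega$ pieces (which decay)'' as error. This is false: $\Omega$ and $\underline\Omega$ do \emph{not} decay---in Schwarzschild $\Omega_{SS}=-\frac{M}{r^2}(\chi-r\chi')$ and $\underline\Omega_{SS}=\frac{M}{r^2}\frac{(1-\chi)+\chi'rk_-}{(k_\chi^-)^2}$ are fixed nonzero functions (cf.\ section~\ref{geoSchw}), so $-2\Omega\slashed{D}_3\phi_n-2\underline\Omega\slashed{D}_4\phi_n$ is \emph{not} of the form $(\text{decaying RRC})\cdot\mathcal{D}\phi_n$ and cannot sit in the error bucket.

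Correspondingly, your second claim---that $\tfrac12(\slashed{D}_3\slashed{D}_4+\slashed{D}_4\slashed{D}_3)$ expands into the target operator plus terms controlled by $|T(p)|+|T(q)|$ and $[T,L]$---is also false. Using $e_3=T-L$ and $qe_4=(2-p)T+pL$, a direct expansion of $q(e_3e_4+e_4e_3)\phi$ leaves, beyond $2[(2-p)T^2-L(pL\cdot)+(Lp)T-2(1-p)TL]\phi$ and the genuine $Tp$, $Tq$, $[T,L]$ errors, a residual $-(Lp)\,e_3\phi+(Lq)\,e_4\phi$. This involves $L(p)$, $L(q)$, which are Schwarzschild-nonzero ($Lp=2M/r^2$ near the horizon, $Lq=k_-\cdot 2M/r^2$ near infinity). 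The point you are missing is that in Schwarzschild one has the exact relations $Lp=-2q\Omega$ and $Lq=2q\underline\Omega$, so this residual is precisely $q(2\Omega\,e_3+2\underline\Omega\,e_4)\phi$ and cancels against the first-order terms you wrongly discarded. In the ultimately Schwarzschildean case the discrepancy $Lp+2q\Omega$, $Lq-2q\underline\Omega$ \emph{is} a decaying RRC and goes into the error. This is why the paper keeps the $-2\Omega\slashed{D}_3\phi_n-2\underline\Omega\slashed{D}_4\phi_n$ terms with the second-order operator and establishes the identity
\[
(\slashed{D}_4\slashed{D}_3+\slashed{D}_3\slashed{D}_4)\phi_n-2\Omega\slashed{D}_3\phi_n-2\underline\Omega\slashed{D}_4\phi_n=\tfrac{2}{q}\big[(2-p)T^2-L(pL\cdot)+(Lp)T-2(1-p)TL\big]\phi_n
\]
modulo the stated error. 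Once you track this cancellation, the rest of your outline (the $r^3$-conjugation producing the $6M/r^3$ potential up to $\mathcal{E}$, the handling of $n=0$ via Corollary~\ref{conz}) is fine.
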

\begin{proof}
For $n\geq1$, we first write the equation of Proposition \ref{tre} as 
\begin{align}
\left(\slashed{D}_4 \slashed{D}_3 + \slashed{D}_3 \slashed{D}_4 \right) \phi_n  - 2\slashed{\nabla}^2  \phi_n 
- 2 \Omega  \slashed{D}_3  \phi_n 
-2 \underline{\Omega}  \slashed{D}_4  \phi_n
+  \phi_n \left(6\rho \left(W\right) \right) = r^3 \mathcal{F}^n \nonumber \, ,
\end{align}
which holds modulo the error specified in the Corollary. Secondly, we write
\begin{align}
\left(\slashed{D}_4 \slashed{D}_3 + \slashed{D}_3 \slashed{D}_4 \right) \phi_n - 2 \Omega  \slashed{D}_3  \phi_n 
-2 \underline{\Omega}  \slashed{D}_4  \phi_n = \nonumber \\ 
\frac{2}{q} \left[\left(2-p\right)T^2 \left(\phi_n\right)  - L \left(pL\phi_n\right) + \left(L \left(p\right)\right) T \left(\phi_n\right) - 2 \left(1-p\right) T L \phi_n \right] \, ,
\end{align}
which again holds modulo the error specified. For $n=0$, we start from Corollary \ref{conz} directly and do the same computation.
\end{proof}
Clearly, the exact form of the last term on the right hand side of (\ref{phicommute}) will not matter, since any such term will be easily estimated in view of the strong pointwise decay available for the expression in brackets (cf.~Definition \ref{RRCapproach}). 

We finally remark that, in case that the metric is exactly Schwarzschild and Edington-Finkelstein coordinates are used, the operator on the left of (\ref{phicommute}) reads 
\begin{align}
- \partial_u \partial_v \phi_n + \frac{1-\mu}{r^2}{\Delta_{S^2}} \phi_n + \left(1-\mu\right) \frac{6M}{r^3} \phi_n \, ,
\end{align}
which is the Regge-Wheeler operator in its perhaps most familiar form. 
\subsection{The scalar wave equation for $\sigma$} \label{scalsigmaeq}
The computations for $\sigma$ are similar. In particular, we have the analogue of Proposition \ref{tre} (this can either be computed directly or inferred from duality):
\begin{proposition} \label{tse}
The quantity $\sigma \left(\widehat{\mathcal{L}}^n_T W\right)$ satisfies, for $n \geq 0$
\begin{align}
\left(\slashed{D}_4 \slashed{D}_3 + \slashed{D}_3 \slashed{D}_4 \right) \sigma\left(\widehat{\mathcal{L}}^n_T W\right) - 2\slashed{\nabla}^2 \sigma \left(\widehat{\mathcal{L}}^n_T W\right)  \nonumber \\
+ \frac{3}{2} \left(2tr H - \frac{4}{3}\Omega \right) \slashed{D}_3 \sigma\left(\widehat{\mathcal{L}}^n_T W\right) 
+ \frac{3}{2} \left(2tr \underline{H} - \frac{4}{3}\underline{\Omega} \right) \slashed{D}_4\sigma\left(\widehat{\mathcal{L}}^n_T W\right)  \nonumber \\ 
+ \sigma\left(\widehat{\mathcal{L}}^n_T W\right) \left(3 tr \underline{H} tr H +12\rho \left(W\right)+ \left(\textrm{dec. RRC}\right)^2 \right) = \sum_{i=1}^2 \mathcal{G}^n_i + \underline{\mathcal{G}}^n_i\nonumber \, ,
\end{align}
where the right hand side has the following structure
\begin{align}
\mathcal{G}^n_1 = \slashed{D}_3 E_4\left(\sigma \left(\widehat{\mathcal{L}}^n_T W\right)\right) + \frac{3}{2} tr \underline{H} E_4\left(\sigma \left(\widehat{\mathcal{L}}^n_T W\right)\right) + {}^\star \beta \left(\widehat{\mathcal{L}}^n_T W\right)  \cdot \slashed{\nabla} tr \underline{H} \nonumber \\
+ \widehat{\underline{H}} \cdot \slashed{\nabla}{}^\star \beta \left(\widehat{\mathcal{L}}^n_T W\right)  - tr F_3 \left({}^\star \beta \left(\widehat{\mathcal{L}}^n_T W\right)\right) 
  - 2 \underline{\Omega} E_4\left(\sigma \left(\widehat{\mathcal{L}}^n_T W\right)\right)  \nonumber \\ + 2 \slashed{\nabla}^\star \underline{\Omega} \beta \left(\widehat{\mathcal{L}}^n_T W\right)
-\slashed{curl} \left(2 \widehat{H} \cdot \underline{\beta}  \left(\widehat{\mathcal{L}}^n_T W\right)+ \underline{Y} \cdot \alpha \left(\widehat{\mathcal{L}}^n_T W\right) \right) \nonumber \\  -3 \slashed{curl}Z \rho \left(\widehat{\mathcal{L}}^n_T W\right) + 3Z \slashed{\nabla}\sigma \left(\widehat{\mathcal{L}}^n_T W\right) - 3 {}^\star Z \slashed{\nabla} \rho  \left(\widehat{\mathcal{L}}^n_T W\right) \, ,
\end{align}
\begin{align}
\mathcal{G}^n_2 = -2\left(K\left[\mathfrak{J}^n\right]\right)_3 + 4\underline{\Omega}\left(K\left[\mathfrak{J}^n\right]\right) + 2 \slashed{curl} \left(I\left[\mathfrak{J}^n\right]\right) +3   \rho_0 \cdot \sigma\left(\widehat{\mathcal{L}}^n_T W\right) \, ,
\end{align}
and $\underline{\mathcal{G}}_i^n$ being the conjugate quantities. 
\end{proposition}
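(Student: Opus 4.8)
The plan is to regard the statement as the Hodge dual of Proposition \ref{tre} and to obtain it by the same mechanism, either by re-running that computation with $\sigma$ in place of $\rho$ or, more economically, by invoking duality. Recall that for a Weyl field $\mathcal{W}$ the left dual $\phantom{}^\star\mathcal{W}$ is again a Weyl field and satisfies $D^\alpha(\phantom{}^\star\mathcal{W})_{\alpha\beta\gamma\delta}=\mathcal{J}^\star_{\beta\gamma\delta}$, and that on the null frame $\star$ interchanges the $\rho$- and $\sigma$-components (up to a universal sign) while sending $\alpha,\beta,\underline{\beta},\underline{\alpha}$ to their $S^2$-Hodge duals. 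Moreover $\widehat{\mathcal{L}}_T$ commutes with $\star$ up to terms quadratic in $\phantom{}^{(T)}\pi$ and the Weyl field: these arise only from $\mathcal{L}_T$ acting on the volume form and on the index-raising metric, and are killed at leading order by the assumption $tr\,\phantom{}^{(T)}\pi=0$, the remainder being of exactly the ``decaying RRC, $\phantom{}^{(T)}\pi$'' type already collected as error. Hence $\rho\big(\phantom{}^\star(\widehat{\mathcal{L}}^n_T W)\big)$ equals $\pm\sigma(\widehat{\mathcal{L}}^n_T W)$ modulo such errors, and applying Proposition \ref{tre} to the Weyl field $\phantom{}^\star(\widehat{\mathcal{L}}^n_T W)$ — whose inhomogeneity is $(\mathfrak{J}^n)^\star$ and whose null components are the Hodge duals of those of $\widehat{\mathcal{L}}^n_T W$ — produces precisely the asserted equation for $\sigma$, the substitutions $\Lambda[\mathfrak{J}^n]\mapsto K[\mathfrak{J}^n]$, $\slashed{div}\mapsto-\slashed{curl}$, $E_4(\rho)\mapsto E_4(\sigma)$ and so on reading off from this dictionary.

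For readers preferring the direct route, I would instead start from the scalar null-Bianchi equation (\ref{Bianchi7}) for the $n$-commuted field, $\slashed{D}_4\sigma+\tfrac32 tr H\,\sigma=-\slashed{curl}\beta+E_4(\sigma)-\tfrac12(\mathfrak{J}^n)^\star_{434}$, apply $\slashed{D}_3$, and push $\slashed{D}_3$ through $\slashed{curl}$ and through $\slashed{D}_4$ using Lemma \ref{commutelemma}. Substituting (\ref{Bianchi5}) for $\beta_3$, the only principal contribution is $-\slashed{curl}\,\slashed{\mathcal{D}}_1^\star(-\rho,\sigma)=+\slashed{\nabla}^2\sigma$: the $\slashed{\nabla}\rho$ piece is annihilated because $\slashed{curl}$ kills gradients, exactly mirroring the annihilation of the $\slashed{\nabla}\sigma$ piece by $\slashed{div}$ in the proof of Proposition \ref{tre}. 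One is thus left with $\sigma_{43}-\slashed{\nabla}^2\sigma=(\text{collected lower-order terms})$. The left-hand side is then reorganised verbatim as in Proposition \ref{tre}: the null-structure equation (\ref{nstrH}) for $\slashed{D}_3 tr H$ is used to trade $\slashed{D}_3 tr H$ for $-\tfrac12 tr\underline{H}\,tr H+2\underline{\Omega}\,tr H+2\rho_0+(\text{dec. RRC})^2$ — the sole source of the non-decaying $12\rho(W)$ in the potential — the term $\underline{\Omega}\,\slashed{D}_4\sigma$ is moved to the left, the undifferentiated $\sigma$-terms cancel, and the conjugate computation (starting from (\ref{Bianchi9}) and applying $\slashed{D}_4$) supplies the barred terms $\underline{\mathcal{G}}^n_i$. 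The first-order coefficients and the potential come out identical to those in Proposition \ref{tre} because they are built only from $tr H$, $tr\underline{H}$, $\Omega$, $\underline{\Omega}$, $\rho_0$, which do not distinguish $\rho$ from $\sigma$; all remaining contributions from $E_3(\beta)$, the commutators $F_3$, $F_{34}$ and the inhomogeneity assemble into $\mathcal{G}^n_1$ and the $(K,I)$-terms of $\mathcal{G}^n_2$.

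Both routes are long but mechanical. The one genuinely delicate point, and the place I expect to spend the most care, is the sign and duality bookkeeping around the principal term and the commutator tensors: one must check that $\slashed{curl}\,\slashed{\mathcal{D}}_1^\star$, together with $\slashed{curl}F_3(\cdot)$ and $\slashed{curl}F_{34}(\cdot)$, really reproduces the operator $\slashed{D}_4\slashed{D}_3+\slashed{D}_3\slashed{D}_4-2\slashed{\nabla}^2$ with no sign flip, and that the lower-order pieces $-3\slashed{curl}Z\,\rho$, $3Z\cdot\slashed{\nabla}\sigma$, $-3\phantom{}^\star Z\cdot\slashed{\nabla}\rho$ in $\mathcal{G}^n_1$ emerge with exactly the displayed signs — precisely where the $\in_{AB}$- and $\epsilon^{AB}$-conventions are easy to mishandle. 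Everything downstream then transfers without change: the $n=0$ case simplifies as in Corollary \ref{conz} (now $\mathfrak{J}^0=0$ and $\mathcal{G}^0_2+\underline{\mathcal{G}}^0_2=6\rho_0\sigma$ gets absorbed into the potential); the rescaling by $r^3$ to $\psi_n$ — which, since $\sigma\equiv 0$ in Schwarzschild, needs no additive constant, cf.~(\ref{renormdef}) — and the vectorfield reformulation go through exactly as in Corollary \ref{vecwrite}; and the energies $E[\mathcal{D}\psi_n]$, $I_{deg}[\mathcal{D}\psi_n]$ of (\ref{strhoe})--(\ref{strho}) are then available for the multiplier analysis of the following subsections.
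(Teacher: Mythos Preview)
Your proposal is correct and follows exactly the approach the paper indicates: the paper itself only remarks that ``the computations for $\sigma$ are similar'' and that the result ``can either be computed directly or inferred from duality,'' and you have outlined both routes in appropriate detail, correctly identifying the principal identity $-\slashed{curl}\,\slashed{\mathcal{D}}_1^\star(-\rho,\sigma)=\slashed{\nabla}^2\sigma$ and the sign bookkeeping as the only delicate points.

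One small correction to your aside on the $n=0$ case: it does not simplify quite as in Corollary \ref{conz}. The paper (Corollary \ref{sigma0}) additionally invokes the structure equation $\slashed{curl}Z=-\tfrac12\hat{H}\wedge\underline{\hat{H}}+Y\wedge\underline{Y}+\sigma$ and its conjugate, which produces further $\rho_0\sigma$ contributions from the $-3\slashed{curl}Z\,\rho$ term in $\mathcal{G}^0_1$; the net effect is that the potential stays at $12\rho(W)$ and the right-hand side acquires the quadratically decaying remainder $\hat{H}\wedge\underline{\hat{H}}+2Y\wedge\underline{Y}$, rather than the potential dropping to $6\rho_0$ as in the $\rho$-case. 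This does not affect your proof of the Proposition itself.
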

For the $n=0$ case, one can use the null-structure equations $\slashed{curl} Z = - \frac{1}{2} \hat{H} \wedge \underline{\hat{H}} +  Y \wedge \underline{Y} + \sigma$ plus its conjugate to conclude
\begin{corollary} \label{sigma0}
For $n=0$, one obtains
\begin{align}
\left(\slashed{D}_4 \slashed{D}_3 + \slashed{D}_3 \slashed{D}_4 \right) \sigma - 2\slashed{\nabla}^2 \sigma  
+ \frac{3}{2} \left(2tr H - \frac{4}{3}\Omega \right) \slashed{D}_3 \sigma + \frac{3}{2} \left(2tr \underline{H} - \frac{4}{3}\underline{\Omega} \right) \slashed{D}_4\sigma
\nonumber \\ 
+ \sigma  \left(3 tr \underline{H} tr H +12\rho \left(W\right)+ \left(\textrm{dec. RRC}\right)^2 \right) = \mathcal{G}^0_1 + \underline{\mathcal{G}}^0_1 + \hat{H} \wedge \underline{\hat{H}} + 2 Y \wedge \underline{Y} \, . \nonumber
\end{align}
Note that the right hand side decays quadratically.
\end{corollary}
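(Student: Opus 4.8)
The plan is to deduce the statement directly from Proposition \ref{tse} specialized to $n=0$, after rearranging the superficially non-decaying terms on the right-hand side by means of the null structure equations. Here is how I would proceed.

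First I would set $n=0$ in Proposition \ref{tse}. Since the uncommuted Bianchi equations are homogeneous we have $\mathfrak{J}^0=0$, so every term in $\mathcal{G}^0_2$ and $\underline{\mathcal{G}}^0_2$ built out of $K[\mathfrak{J}^0]$ and $I[\mathfrak{J}^0]$ drops out, leaving only $\mathcal{G}^0_2 = 3\rho_0\,\sigma$ and its conjugate. In $\mathcal{G}^0_1$ and $\underline{\mathcal{G}}^0_1$ all the inhomogeneities $E_4(\sigma)$, $F_3(\cdot)$, $E_4(\rho)$ etc.\ are now genuine quadratic expressions: each summand is the product of a Ricci coefficient that vanishes in the Schwarzschild limit (one of $\widehat{\underline H},V,\underline Z,Y,\underline Y,\widehat H,\underline\Omega$) with a curvature component that vanishes in the Schwarzschild limit (one of $\alpha,\beta,\underline\beta,\underline\alpha,\sigma$), or a $\slashed{D}$-derivative of such a product, which is again of this form after inserting the Bianchi equations of section \ref{abrsabbb} and the null structure equations of section \ref{nseq}. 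Hence all of these terms decay quadratically.

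The only term that is not manifestly quadratically decaying is $-3\,\slashed{curl}Z\,\rho$ inside $\mathcal{G}^0_1$, together with its conjugate involving $\slashed{curl}\underline Z\,\rho$ in $\underline{\mathcal{G}}^0_1$, since $\rho=\rho(W)$ does not decay. To handle it I would invoke the null structure equation (\ref{curlZb}) for $\slashed{curl}\underline Z$ and its $3\leftrightarrow4$ conjugate for $\slashed{curl}Z$,
\begin{align}
\slashed{curl}Z &= -\tfrac12\,\widehat H\wedge\widehat{\underline H}+Y\wedge\underline Y+\sigma\,, \nonumber\\
\slashed{curl}\underline Z &= \tfrac12\,\widehat H\wedge\widehat{\underline H}-Y\wedge\underline Y-\sigma\,. \nonumber
\end{align}
Substituting these, the $\sigma$-contributions combine — using $\rho_0=\rho(W)=\rho$ at $n=0$ — with the $3\rho_0\sigma$-terms coming from $\mathcal{G}^0_2+\underline{\mathcal{G}}^0_2$ and cancel; what remains is exactly the pair of residual products $\widehat H\wedge\widehat{\underline H}$ and $2\,Y\wedge\underline Y$ displayed in the statement, each carrying the bounded factor $\rho$. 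Since $\widehat H$, $\widehat{\underline H}$, $Y$, $\underline Y$ all vanish in the Schwarzschild limit, these residual terms decay quadratically as well.

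Collecting everything, the right-hand side becomes $\mathcal{G}^0_1+\underline{\mathcal{G}}^0_1$ — now with the $\slashed{curl}Z\,\rho$ and $\slashed{curl}\underline Z\,\rho$ terms deleted — plus the two residual products, while the coefficient of $\sigma$ on the left is still $3\,tr\underline H\,tr H+12\rho(W)+(\textrm{dec. RRC})^2$; every remaining summand on the right is a product of two factors each of which vanishes in the Schwarzschild limit, or a Schwarzschild-vanishing factor times a bounded one, so the right-hand side decays quadratically, as claimed. The one place that demands care is the sign bookkeeping in the cancellation step: one must check that the $\sigma$-pieces generated by the structure equations for $\slashed{curl}Z$ and $\slashed{curl}\underline Z$ really cancel the $\rho_0\sigma$-terms with the correct signs, which hinges on the behaviour of $\slashed{curl}$ and of the wedge product under the $e_3\leftrightarrow e_4$ conjugation. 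This is precisely the $\sigma$-analogue of the phenomenon noted after Corollary \ref{conz}, that the renormalization of $\rho$ does not require the structure equations whereas that of $\sigma$ does.
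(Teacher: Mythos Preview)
Your proposal is correct and follows precisely the paper's own argument: specialize Proposition~\ref{tse} to $n=0$, observe that $\mathfrak{J}^0=0$ reduces $\mathcal{G}^0_2+\underline{\mathcal{G}}^0_2$ to $6\rho\sigma$, and then invoke the null-structure equations for $\slashed{curl}Z$ and $\slashed{curl}\underline Z$ (exactly as the paper indicates in the sentence preceding the Corollary) to absorb the non-decaying $\rho$-terms. Your caveat about the sign bookkeeping is well placed---in fact the conjugate of $-3\,\slashed{curl}Z\,\rho$ in $\underline{\mathcal{G}}^0_1$ carries the opposite sign, $+3\,\slashed{curl}\underline Z\,\rho$, since it arises from $-\slashed{curl}\,E_4(\underline\beta)$ and $E_4(\underline\beta)$ contains $-3\underline Z\rho$; once this is accounted for, the $\sigma$-pieces do cancel against $6\rho\sigma$ and the residual wedge products (with a bounded $\rho$-factor, suppressed in the statement) remain.
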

\subsection{The integrated decay estimate} \label{multestrho}
We apply general multiplier identities to equation (\ref{phicommute}) to derive energy estimates.  As previously mentioned, decay for the (homogeneous) Regge-Wheeler equation on Schwarzschild has been shown (\cite{BlueSoffer, BlueSoffer2}). In this section we present an alternative multiplier approach based on ideas of \cite{DafRod2}. The main result is Proposition \ref{finXest} which together with the error-estimate of Proposition \ref{errorestrho} will finally imply Proposition \ref{rhosigmaest}.

Multiplying (\ref{phicommute}) by the multiplier
\begin{equation}
X = \left[\mathfrak{a} \left(r\right) T + \mathfrak{b}\left(r\right) L + \mathfrak{c}\left(r\right)\right] \phi_k
\end{equation}
with $\mathfrak{a},\mathfrak{b}, \mathfrak{c}$ being $C^1$ bounded functions, one derives (let us momentarily drop the subscript $n$)
\begin{lemma} \label{mulemma}
We have the following basic multiplier identity
\begin{align}
e^X \left[\mathcal{D}\phi\right] + K^X \left[\mathcal{D} \phi\right] = Err\left[\mathcal{D} \phi\right] + \left[r^3 \mathcal{F}^k  \right] \left(X \phi_k\right) \, ,
\end{align}
where
\begin{align}
e^X \left[\mathcal{D}\phi\right] = \nabla_T \Bigg[ \frac{1}{2} \mathfrak{a} \left(2-p\right) |T\phi|^2 + \left(\frac{1}{2} \mathfrak{a} p - \mathfrak{b} \left(1-p\right)\right) | L \phi|^2 \nonumber \\
+ \frac{1}{2} \mathfrak{a} q  |\slashed{\nabla} \phi|^2 + \frac{1}{2} \left( -\frac{6M + \mathcal{E}}{r^3} \mathfrak{a} q + \mathfrak{c} \ L \left(p\right) \right) \phi^2 
+\mathfrak{b} \left(2-p\right)  T\left(\phi\right) L \left(\phi\right) \nonumber \\ + \mathfrak{c} \left(2-p\right)\  T\left(\phi\right) \cdot \phi -2\mathfrak{c} \left(1-p\right) L\left(\phi\right) \cdot \phi 
 \Bigg] \nonumber \\
+ \nabla_L \Bigg[ \left(-\mathfrak{a}\left(1-p\right) - \frac{1}{2} \mathfrak{b} \left(2-p\right) \right) |T\phi|^2 - \frac{1}{2}\mathfrak{b} p |L\phi|^2 + \frac{1}{2} \mathfrak{b} q | \slashed{\nabla} \phi|^2 \nonumber \\ + \frac{1}{2} \left( p \cdot L\left(\mathfrak{c}\right) - \mathfrak{b}q \frac{6M + \mathcal{E}}{r^3} \right) \phi^2 -\mathfrak{a}p L\left(\phi\right) T \left(\phi\right) - p \mathfrak{c} \left(L\phi\right) \phi \Bigg] \nonumber \\
+ \slashed{\nabla} \Bigg[ -\mathfrak{a} q \slashed{\nabla} \phi \cdot T \phi - \mathfrak{b} q  \slashed{\nabla} \phi \cdot L \phi - \mathfrak{c} q \left(\slashed{\nabla} \phi\right) \phi \Bigg] \, ,
\end{align}
\begin{align}
K^X  \left[\mathcal{D} \phi\right] = \left[ \left(1-p\right) L \left(\mathfrak{a}\right) + \frac{1}{2} L \left(\mathfrak{b} \left(2-p\right)\right) - \mathfrak{c} \left(2-p\right)  \right] |T\phi|^2 \nonumber \\
+  \left[ \frac{1}{2} p L \left(\mathfrak{b}\right) - \frac{1}{2} \mathfrak{b} L \left(p\right) + p \mathfrak{c} \right] |L\phi|^2 + \left[ -\frac{1}{2} L \left(\mathfrak{b} q\right) + \frac{\mathfrak{b}q}{r} + \mathfrak{c} q \right] |\slashed{\nabla} \phi|^2 \nonumber \\
+ T\left(\phi\right) \cdot L \left(\phi\right) \left[p L \left(\mathfrak{a}\right) + \mathfrak{b} L \left(p\right) + 2 \mathfrak{c} \left(1-p\right)  \right] \nonumber \\
+ \left[\frac{1}{2} L \left(\mathfrak{b}q \frac{6M + \mathcal{E}}{r^3} \right) - \frac{1}{2} L \left( p L\left(\mathfrak{c}\right)\right) - q \mathfrak{c} \frac{6M+\mathcal{E}}{r^3} \right] \phi^2  \nonumber 
\end{align}
and the error-term is cubically decaying, schematically of the form
\begin{align}
Err\left[\mathcal{D} \phi\right] = \left(\textrm{decaying RRC, ${}^{(T)}\pi$} \right) \cdot \mathcal{D}\phi \cdot \left(\mathcal{D} \phi \textrm{\ \ or \ } \phi\right) \, .
\end{align}
\end{lemma}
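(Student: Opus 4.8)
The statement to prove is Lemma \ref{mulemma}, the basic multiplier identity for equation (\ref{phicommute}). Here is how I would approach it.

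\textbf{Setup and strategy.} The plan is to multiply equation (\ref{phicommute}) by $X\phi_n = \left(\mathfrak{a}(r)T + \mathfrak{b}(r)L + \mathfrak{c}(r)\right)\phi_n$ and to manipulate each of the resulting terms into a sum of a spacetime divergence $\nabla_a(\cdot)$ (which will constitute $e^X$, the boundary-density term once contracted with a normal) plus a bulk term (which will constitute $K^X$) plus genuinely cubic error terms. The key algebraic facts to be used repeatedly are the Leibniz-type identities for the second-order operators appearing on the left of (\ref{phicommute}): for instance $2\,(Y\phi)\,(Y^2\phi) = Y(|Y\phi|^2)$, and $2\,(Y\phi)\,(Z^2\phi) = Y(2(Y\phi)(Z\phi)) - Z(|Y\phi|^2) + \text{(terms from } [Y,Z]\text{)}$ for two vector fields $Y,Z$, together with the angular identity $2\,(\slashed{\nabla}^2\phi)\cdot(v\,\phi) = \slashed{\nabla}\cdot\big(\dots\big) - (\text{bulk}) + \dots$ obtained by integrating $\slashed{\nabla}$ by parts on $S^2_{t^\star,u}$. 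Since $\mathfrak{a},\mathfrak{b},\mathfrak{c}$ depend only on $r$, all their derivatives reduce to $L$-derivatives plus lower-order terms, so every commutator or derivative-of-coefficient that is generated is either already of the required divergence/bulk form or is absorbed into $Err$.

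\textbf{Main steps in order.} First I would treat the four principal terms $(2-p)T^2\phi$, $-L(pL\phi)$, $(L(p))T\phi$, and $-2(1-p)TL\phi$ one at a time: multiply each by $X\phi$, expand using the product and commutator identities above, and collect. The $T^2$ term produces $\nabla_T$ of $\tfrac12\mathfrak{a}(2-p)|T\phi|^2$ plus cross terms; the $L(pL\phi)$ term is handled by an $L$-integration by parts producing $\nabla_L$-divergences and a bulk $\tfrac12 pL(\mathfrak{b})|L\phi|^2$-type term; the mixed $TL$ term splits symmetrically between a $\nabla_T$ and a $\nabla_L$ piece with a commutator $[T,L]$ that lands in $Err$ (using that $[T,L]$ has coefficients built from $\mathfrak{R}-\mathfrak{R}_{SS}$ and ${}^{(T)}\pi$, hence decaying). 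Second, I would handle the angular term $-q\slashed{\nabla}^2\phi$: multiplying by $X\phi$ and integrating $\slashed{\nabla}$ by parts over the spheres gives the $\slashed{\nabla}[\dots]$ divergences and the bulk terms $\tfrac12\mathfrak{b}q/r\,|\slashed{\nabla}\phi|^2 + \mathfrak{c}q|\slashed{\nabla}\phi|^2 - \tfrac12 L(\mathfrak{b}q)|\slashed{\nabla}\phi|^2$, where the factor $1/r$ comes from $L$ acting on the spherical volume element $r^2 d\omega$ (Schwarzschildean to leading order, with the $\mathcal{E}$-correction tucked into $Err$). Third, the zeroth-order term $-q\frac{6M+\mathcal{E}}{r^3}\phi$ multiplied by $X\phi$ is rewritten by the standard trick $2v\phi\cdot V\phi = V(v|\phi|^2) - (Vv)|\phi|^2$ for $v=-q\frac{6M+\mathcal{E}}{r^3}$ and $V\in\{T,L\}$, generating the $\phi^2$ contributions to both $e^X$ and $K^X$; the $\mathfrak{c}$-part contributes the $\mathfrak{c}L(p)\phi^2$ and $-L(pL(\mathfrak{c}))\phi^2$ terms. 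Finally, I would collect all the cubic remainders — the last term on the right of (\ref{phicommute}), the commutator corrections, the $\mathcal{E}$-corrections, the error from replacing $g$-volume weights by Schwarzschildean ones — into $Err[\mathcal{D}\phi]$, verifying that each is schematically $(\text{decaying RRC},\,{}^{(T)}\pi)\cdot\mathcal{D}\phi\cdot(\mathcal{D}\phi\text{ or }\phi)$, and the term $[r^3\mathcal{F}^n](X\phi_n)$ is left untouched on the right. Reinstating the subscript $n$ finishes the identity.

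\textbf{Expected main obstacle.} The computation itself is entirely routine but extremely bookkeeping-heavy; the real care is needed in two places. One is getting the coefficients of $K^X$ exactly right — in particular the combinatorial coefficient of $|T\phi|^2$, namely $(1-p)L(\mathfrak{a}) + \tfrac12 L(\mathfrak{b}(2-p)) - \mathfrak{c}(2-p)$, which receives contributions from the $T^2$ term (through the $\mathfrak{a}$ and $\mathfrak{c}$ multipliers after an integration by parts), from the $TL$ cross term (through $\mathfrak{b}$), and must be assembled consistently with the sign conventions for $L$ and $T$; an analogous care is needed for the $T\phi\cdot L\phi$ coefficient and the $\phi^2$ coefficient. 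The other is making sure that every term generated by a commutator or by differentiating a coefficient of $r$ genuinely lands either in $e^X$, in $K^X$, or in the cubically-decaying $Err$ — there is no fourth bucket, and the cleanest way to see this is to note that $[T,L]$, $[\slashed{D}_3,\slashed{D}_4]$ and the deviation of the volume element from its Schwarzschild value are all controlled by $\mathfrak{R}-\mathfrak{R}_{SS}$ and ${}^{(T)}\pi$, which decay, so any term carrying such a factor is by definition part of $Err$. Once that accounting principle is fixed, the identity follows by direct expansion.
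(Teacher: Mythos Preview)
Your proposal is correct and follows exactly the approach the paper takes: the paper simply states that the identity is obtained by ``multiplying (\ref{phicommute}) by the multiplier $X = [\mathfrak{a}(r)T + \mathfrak{b}(r)L + \mathfrak{c}(r)]\phi_k$'' and leaves the computation implicit, so your outline of the Leibniz/integration-by-parts bookkeeping is in fact more explicit than what appears in the paper itself.
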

Multiplying this identity by $\frac{\sqrt{\slashed{g}}}{r^2}$ and then integrating over $\tilde{\mathcal{M}}\left(\tau_1,\tau_2\right)$ with respect to the measure $dt^\star dr d\theta d\phi$, we realize that $e^X \left[\mathcal{D}\phi\right]$ is a boundary term modulo terms of the form $Err\left[\mathcal{D}\phi\right]$.

With the Lemma at hand, we can prove an integrated decay estimate for the Regge-Wheeler-type equation under consideration. In view of the trapping, the best one can hope for is to generate a spacetime-integral term which provides non-degenerate decay for a particuar radial derivative and degenerate (at the photon sphere) integrated decay for the remaining derivatives of $\phi$. In the context of the wave equation, obtaining positivity for the zeroth order terms in $\phi$ is intricate and typically involves the use of a Poincar\'e inequality exploiting the positivity of the angular-derivative term. In our case, the situation is more favorable, as we are assuming convergence to Schwarzschild. In particular, the assumption of being ultimately Schwarzschildean to order $k+1$ implies that the integrated, non-degenerate spacetime energy of $k-1$ derivatives of curvature is bounded. This means that in the putative degenerate integrated decay estimate for \emph{derivatives} of $\phi_{k-1}$, the lowest order terms $\phi_{k-1}$ are a-priori controlled from the ultimately Schwarzschildean assumption. Moreover, we can exploit the fact that the degenerate integrated decay estimate for  derivatives of $\phi_{k-1}$ will provide control over a particular radial derivative of $\phi_{k-1}$ near the photon sphere (cf.~the energy (\ref{strho})). Indeed, this allows to turn non-degenerate decay into degenerate decay plus a small amount of the non-degenerate derivative term, as is spelled out in the  following Lemma:

\begin{lemma} \label{borrowhigh}
For any $\lambda_3>0$ and $n\geq 1$
\begin{align} \label{lemst}
\int_{\tilde{\mathcal{M}}\left(\tau_1,\tau_2\right)} dt^\star dr d\omega_2 \frac{1}{r^{3+\delta}} \phi_n^2 \leq \lambda_3 \cdot I_{deg} \left[\mathcal{D}\phi_n\right] \left(\tilde{\mathcal{M}}\left(\tau_1,\tau_2\right)\right) \nonumber \\ +\frac{B}{\lambda_3} \cdot I_{deg} \left[\mathcal{D}\phi_{n-1}\right]\left( \tilde{\mathcal{M}}\left(\tau_1,\tau_2\right)\right) + \epsilon  \sup_{\tau} E \left[\mathcal{D}\phi_n\right]\left(\tilde{\Sigma}_\tau\right)
\end{align}
\end{lemma}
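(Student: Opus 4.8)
The plan is to control the zeroth-order term $\int_{\tilde{\mathcal{M}}}\frac{\phi_n^2}{r^{3+\delta}}$ by combining a Hardy-type inequality in $r$ with the fundamental theorem of calculus in $\tau$, exploiting that the ultimately Schwarzschildean assumption supplies a-priori control of $\phi_{n-1}$. Since $\phi_{n-1}=r^3\rho\left(\widehat{\mathcal{L}}^{n-1}_T W\right)-2M\delta^{n-1}_0$, the quantity $\rho\left(\widehat{\mathcal{L}}^n_T W\right)\approx\widehat{\slashed{\mathcal{L}}}_T\rho\left(\widehat{\mathcal{L}}^{n-1}_T W\right)$ up to lower order (by Lemma \ref{LTTL}), so $\phi_n\approx T\left(\phi_{n-1}\right)+(\textrm{l.o.t.})$ modulo terms with decaying Ricci coefficients and deformation tensor components. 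The strategy is to write $\phi_n$ in this form and estimate each piece separately.

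First I would establish a one-dimensional Hardy inequality: for fixed $\tau$ and angular variables, $\int \frac{1}{r^{3+\delta}}\phi_n^2\,r^2dr \leq B\int\frac{1}{r^{1+\delta}}(\partial_r\phi_n)^2\,r^2dr$ in the region $r\leq r_0$ where the non-degenerate $\partial_r$-term is available in $I_{deg}\left[\mathcal{D}\phi_n\right]$, using that $\phi_n$ is controlled on the null-ears / at $r=R$ by the boundary energy. In the far region $r\geq r_0$ one instead uses the weight $\frac{1}{r^{3+\delta}}$ directly: there the full $\frac{\phi_n^2}{r^{3+\delta}}$ term is already part of the definition of $I_{deg}\left[\mathcal{D}\phi_n\right]$ (cf.~(\ref{strho})), so nothing new is needed away from the photon sphere; the only genuine work is in a bounded-$r$ neighbourhood of $r=3M$. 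Near $r=3M$, the degenerate energy $I_{deg}\left[\mathcal{D}\phi_n\right]$ controls $\frac{(r-3M)^2}{r^{3+\delta}}\big((T\phi_n)^2+(\slashed{\nabla}\phi_n)^2\big)$ but not $\phi_n^2$ without a degenerating factor, so here I cannot close using $I_{deg}\left[\mathcal{D}\phi_n\right]$ alone — this is where $\phi_{n-1}$ enters.

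In the bounded region around $r=3M$, write $\phi_n = T\left(\phi_{n-1}\right) + \mathcal{R}_n$, where $\mathcal{R}_n$ collects the lower-order terms from Lemma \ref{LTTL} (of the schematic form $\rho\left(\widehat{\mathcal{L}}^{n-1}_T W\right)\cdot(\textrm{decaying RRC},{}^{(T)}\pi)$, hence pointwise small and quadratically decaying). Then $\int \frac{\phi_n^2}{r^{3+\delta}}\lesssim \int\frac{(T\phi_{n-1})^2}{r^{3+\delta}} + \int\frac{\mathcal{R}_n^2}{r^{3+\delta}}$. The first integral, in a region of bounded $r$, is controlled by $I_{deg}\left[\mathcal{D}\phi_{n-1}\right]$ after absorbing the degenerating factor $(r-3M)^2$ at the cost of a constant $B$ (since $r$ is bounded away from $0$ and $\infty$, only the trapping degeneration matters, and the $T^\perp$-derivative piece of $I_{deg}\left[\mathcal{D}\phi_{n-1}\right]$ together with the $T$-boundedness supplies $(T\phi_{n-1})^2$ up to the degenerate weight — more precisely, one splits $T\phi_{n-1}$ using that $I_{deg}\left[\mathcal{D}\phi_{n-1}\right]$ controls $\frac{(r-3M)^2}{r^{3+\delta}}(T\phi_{n-1})^2$ plus $\frac{1}{r^{1+\delta}}(T^\perp\phi_{n-1})^2$ and $\frac{\phi_{n-1}^2}{r^{3+\delta}}$, and near $r=3M$ one estimates the full $(T\phi_{n-1})^2$ by a $\lambda_3^{-1}$-weighted combination of these using a Poincaré/Hardy-type argument in $\tau$, which is how the $\lambda_3$ appears). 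The remainder term $\int\frac{\mathcal{R}_n^2}{r^{3+\delta}}$ is bounded by $\epsilon\sup_\tau E\left[\mathcal{D}\phi_n\right]\left(\tilde{\Sigma}_\tau\right)$ using the pointwise smallness of the Ricci coefficients and deformation tensor and Sobolev embedding, or alternatively absorbed into lower-order Ricci energies; in the statement as written it contributes the $\epsilon\sup_\tau E\left[\mathcal{D}\phi_n\right]$ term. The split between the $\lambda_3 I_{deg}\left[\mathcal{D}\phi_n\right]$ term and the $\frac{B}{\lambda_3}I_{deg}\left[\mathcal{D}\phi_{n-1}\right]$ term in (\ref{lemst}) comes from a Young's inequality applied when trading the non-degenerate radial derivative of $\phi_n$ (available in $I_{deg}\left[\mathcal{D}\phi_n\right]$ away from $r=3M$) against the $\phi_{n-1}$-controlled piece near $r=3M$.

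\textbf{Main obstacle.} The delicate point is the behaviour precisely at the photon sphere $r=3M$: there the energy $I_{deg}\left[\mathcal{D}\phi_n\right]$ degenerates in $(T\phi_n, \slashed{\nabla}\phi_n)$, so $\phi_n^2$ cannot be recovered from $\mathcal{D}\phi_n$ by a naive Poincaré inequality in $r$ without losing a factor that blows up at $r=3M$. The resolution is to use the $\tau$-direction: integrating $T(\phi_{n-1}^2)$ in $\tau$ relates $\int_{\tau_1}^{\tau_2}(T\phi_{n-1})^2$ to a telescoping boundary difference plus $\int\phi_{n-1}T^2\phi_{n-1}$, and $T^2\phi_{n-1}$ is again expressible, via the wave equation (\ref{phicommute}) for $\phi_{n-1}$, in terms of second-order spatial derivatives and lower order — which feed back into $I_{deg}\left[\mathcal{D}\phi_{n-1}\right]$ and the one-order-lower energies. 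Keeping track that this does not re-introduce a $\phi_n$ term at top order (it does not, because we only used $\phi_n\approx T\phi_{n-1}$ and never the equation for $\phi_n$ itself in the bounded region) is the bookkeeping heart of the argument, and it is exactly this trick — turning a non-degenerate estimate into a degenerate one at the cost of $\lambda_3$ of the higher non-degenerate term — that the Lemma is designed to encapsulate.
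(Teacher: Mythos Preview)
Your approach has a genuine gap at the photon sphere. You correctly identify that the only real work is near $r=3M$ and you correctly use $\phi_n \approx T\phi_{n-1}$, but your proposed mechanism there does not close. Controlling $\int (T\phi_{n-1})^2$ non-degenerately near $r=3M$ is \emph{the same problem} as controlling $\int \phi_n^2$, so you are going in a circle; your ``Main obstacle'' resolution via $T^2\phi_{n-1}$ and the wave equation only shifts the difficulty, since $T^2\phi_{n-1}\approx T\phi_n$ also has only \emph{degenerate} control in $I_{deg}[\mathcal D\phi_n]$. Likewise, your Hardy-in-$r$ argument is unavailable near $r=3M$: the $\partial_r\phi_n$ term in (\ref{strho}) carries the cutoff $\chi_{\{r\le r_0\}}$ with $r_0$ near the horizon, not near $3M$. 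And away from $3M$ you cannot simply say ``$\phi_n^2/r^{3+\delta}$ is part of $I_{deg}[\mathcal D\phi_n]$'', because that gives a fixed constant, not $\lambda_3$; the away region must also go through $\phi_n\approx T\phi_{n-1}$ and land in the $\frac{B}{\lambda_3}I_{deg}[\mathcal D\phi_{n-1}]$ term.

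The ingredient you are missing is to exploit the one derivative of $\phi_n$ that \emph{is} non-degenerate at $r=3M$, namely $T^\perp\phi_n$, together with the pointwise fact $T^\perp(r-3M)\approx 1-\mu\ge \frac16$ there. The paper's proof is a single integration by parts in the $T^\perp$-direction: with $\chi$ a cutoff localising to $[3M-\tfrac{M}{4},3M+\tfrac{M}{4}]$,
\[
\chi^2\phi_n^2 \le 6\,\chi^2\phi_n^2\, T^\perp(r-3M)
= 6\,T^\perp\!\bigl(\chi^2\phi_n^2(r-3M)\bigr) - 12\chi^2(r-3M)\,\phi_n\,T^\perp\phi_n - 12\chi\,T^\perp(\chi)\,(r-3M)\phi_n^2.
\]
The total $T^\perp$-derivative produces only an $\epsilon$-small boundary contribution on $\tilde\Sigma_\tau$, since $g(T^\perp,\partial_{t^\star})\sim\epsilon$ by the ultimately Schwarzschildean assumption; the last term lives where $|r-3M|\ge \frac{M}{8}$ and is absorbed into the away piece. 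Young's inequality on the cross term is where $\lambda_3$ enters:
\[
(r-3M)\,\phi_n\,T^\perp\phi_n \le \tfrac{B}{\lambda_3}(r-3M)^2\phi_n^2 + \lambda_3(T^\perp\phi_n)^2.
\]
Now $(r-3M)^2\phi_n^2 \approx (r-3M)^2(T\phi_{n-1})^2$ sits directly in $I_{deg}[\mathcal D\phi_{n-1}]$, while $(T^\perp\phi_n)^2$ sits in $I_{deg}[\mathcal D\phi_n]$ with the small prefactor $\lambda_3$. No $\tau$-direction integration or appeal to the wave equation is needed.
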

\begin{proof}
Decompose
\begin{align}
\phi_k = \phi_{k}^{photon} + \phi_{k}^{away} = \chi \left(r\right) \cdot \phi_k + \left(1-\chi \left(r\right)\right) \phi_k
\end{align}
where $\chi\left(r\right)$ is a function which is equal to $1$ in $\left[3M-\frac{1}{8}M, 3M+\frac{1}{8}M\right]$ and equal to zero outside of $\left[3M-\frac{1}{4}M, 3M+\frac{1}{4}M\right]$.
For $\phi_k^{away}$, the estimate (\ref{lemst}) holds trivially requiring only the second term on the right hand side. For $\phi_k^{photon}$ one has the estimate (recall that $T^\perp = \frac{2M}{r} \partial_{t^\star} + \left(1-\mu\right) \partial_r$ in Schwarzschild)
\begin{align}
\chi^2 \phi^2 \leq 6\chi^2 \phi^2 \left( T^\perp \left(r-3M\right)\right) \nonumber \\
= 6 T^\perp \left(\chi^2 \phi^2 \left(r-3M\right)\right) - 12 \chi^2 \left(r-3M\right) \phi \left(T^\perp \phi\right) -12 \chi T^\perp \left(\chi\right) \left(r-3M\right) \phi^2 \nonumber \\
\leq 6 T^\perp \left(\chi^2 \phi^2 \left(r-3M\right)\right) +  \frac{B}{\lambda_3} \left(r-3M\right)^2 \phi^2 + \lambda_3  \left(T^\perp \phi\right)^2 \, , \nonumber
\end{align}
since $T^\perp \left(\chi\right)=0$ in $\left[3M-\frac{1}{8}M, 3M+\frac{1}{8}M\right]$. Integrating the estimate with respect to $\int \frac{1}{1-\mu} dt^\star dr d\omega$ gives the estimate of the Lemma. The contribution from the boundary term on $\tilde{\Sigma}_{\tau}$ is $\epsilon$-small in view of the ultimately Schwarzschildean assumption ($g\left(T^\perp, \partial_{t^\star}\right) \sim \epsilon$).
\end{proof}

\begin{proposition} \label{finXest}
We have for $n\geq 1$
\begin{align} \label{finXesteq}
E\left[\mathcal{D}\phi_n\right]\left( \tilde{\Sigma}_{\tau_2}\right) +  E\left[\mathcal{D}\phi_n\right]\left( \mathcal{H}\left(\tau_1,\tau_2\right) \right) + I_{deg}  \left[\mathcal{D}\phi_n\right] \left( \tilde{\mathcal{M}}\left(\tau_1,\tau_2\right)\right) \leq \nonumber \\ 
 B \cdot E\left[\mathcal{D}\phi_n\right]\left( \tilde{\Sigma}_{\tau_1}\right) + B\Big|\int dt^\star dr d\omega r^3 \mathcal{F}^n \left(X\phi_n\right)  \Big| \nonumber \\
B \cdot I_{deg} \left[\mathcal{D}\phi_{n-1}\right]\left( \tilde{\mathcal{M}}\left(\tau_1,\tau_2\right)\right) + B \cdot E\left[\mathcal{D}\phi_{n-1}\right]\left( \tilde{\Sigma}_{\tau}, \mathcal{H}^+\right)  \, .
\end{align}
\end{proposition}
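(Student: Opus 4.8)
The plan is to run the multiplier method of Lemma \ref{mulemma} with a carefully chosen triple of $C^1$ radial functions $\left(\mathfrak{a},\mathfrak{b},\mathfrak{c}\right)$, supplemented by a small redshift correction near $\mathcal{H}^+$, and then to absorb the zeroth order obstruction using Lemma \ref{borrowhigh} together with the ultimately Schwarzschildean assumption. Concretely, I would integrate the identity of Lemma \ref{mulemma} against the measure $\frac{\sqrt{\slashed{g}}}{r^2}\,dt^\star dr d\omega$ over $\tilde{\mathcal{M}}\left(\tau_1,\tau_2\right)$, so that $e^X\left[\mathcal{D}\phi\right]$ becomes a flux through $\tilde{\Sigma}_{\tau_1}$, $\tilde{\Sigma}_{\tau_2}$ and $\mathcal{H}\left(\tau_1,\tau_2\right)$ modulo cubically small errors $Err\left[\mathcal{D}\phi\right]$, while $K^X\left[\mathcal{D}\phi\right]$ becomes the bulk term. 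The remaining task is then purely algebraic: choose $\mathfrak{a},\mathfrak{b},\mathfrak{c}$ so that (i) the flux terms control $E\left[\mathcal{D}\phi_n\right]$ on the future slices, (ii) the coefficients of $|T\phi|^2$, $|L\phi|^2$ and $|\slashed{\nabla}\phi|^2$ in $K^X$ are non-negative, with the $|T\phi|^2$ and $|\slashed{\nabla}\phi|^2$ coefficients vanishing (only) quadratically at the photon sphere $r=3M$ so as to reproduce the degenerate integrand of $I_{deg}\left[\mathcal{D}\phi_n\right]$, and (iii) the cross term $T(\phi)L(\phi)$ and the zeroth order term $\phi_n^2$ are under control.

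For (i)--(ii) I would take $\mathfrak{a}$ essentially constant (the ``$T$-energy'' piece, producing the good future and horizon fluxes, as for the uncommuted $T$-energy of Section \ref{normalizedenergy}), and $\mathfrak{b}$ a bounded Morawetz weight behaving like the function $f$ of Section \ref{XestWeyltensor}, i.e.\ changing sign at $r=3M$, with $L(\mathfrak{b})\geq 0$ everywhere, $L(\mathfrak{b})$ of size $(r-3M)^2 r^{-1-\delta}$ near $r=3M$ and of size $r^{-1-\delta}$ near infinity. This makes the $|T\phi|^2$ coefficient $\left(1-p\right)L(\mathfrak{a})+\tfrac12 L\left(\mathfrak{b}(2-p)\right)-\mathfrak{c}(2-p)$ and the $|\slashed{\nabla}\phi|^2$ coefficient $-\tfrac12 L(\mathfrak{b}q)+\mathfrak{b}q/r+\mathfrak{c}q$ have the required sign and degeneration, while $\tfrac12 p L(\mathfrak{b})-\tfrac12\mathfrak{b}L(p)+p\mathfrak{c}$ controls $|L\phi|^2$ non-degenerately near $r=3M$; the function $\mathfrak{c}$ is then fixed (up to a bounded choice) so as to eliminate, or give a good sign to, the $T(\phi)L(\phi)$ cross term. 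All of this is stable under the ultimately Schwarzschildean perturbation because $p$, $q$ and the geometry are $C^1$-close to their Schwarzschild values and because the error $\mathcal{E}$ in the potential satisfies $|\mathcal{E}|\leq\epsilon$ with $6M+\mathcal{E}>0$. Near $\mathcal{H}^+$, where $X$ degenerates just as $T$ does for the wave equation, I would add a small multiple of a redshift multiplier of the type $\gamma(r)\,e_3\cdot\phi_n$ (compare the construction in Section \ref{redsection}), producing a non-negative horizon flux and a good-signed bulk term controlling all derivatives of $\phi_n$ in $r\leq r_Y$, at the cost of error terms supported in $r_Y\leq r\leq r_Y+\frac{r_Y-2M}{2}$ which are dominated there by the already-positive Morawetz bulk term.

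For (iii), the coefficient of $\phi_n^2$ in $K^X$ cannot in general be made non-negative — this is the scalar manifestation of the $l=0,1$ obstruction for Regge--Wheeler — so one is left with having to absorb a term of the form $B\int_{\tilde{\mathcal{M}}\left(\tau_1,\tau_2\right)} r^{-3-\delta}\phi_n^2$. Here one invokes Lemma \ref{borrowhigh}: for any $\lambda_3>0$ this term is bounded by $\lambda_3\, I_{deg}\left[\mathcal{D}\phi_n\right]\left(\tilde{\mathcal{M}}\left(\tau_1,\tau_2\right)\right)+\tfrac{B}{\lambda_3} I_{deg}\left[\mathcal{D}\phi_{n-1}\right]\left(\tilde{\mathcal{M}}\left(\tau_1,\tau_2\right)\right)+\epsilon\sup_\tau E\left[\mathcal{D}\phi_n\right]\left(\tilde{\Sigma}_\tau\right)$; choosing $\lambda_3$ small lets the first term be absorbed into the degenerate bulk term, the last into the energy on the left of (\ref{finXesteq}) after using that $\epsilon$ is small, and the middle one becomes precisely the order-$(n-1)$ contribution $B\cdot I_{deg}\left[\mathcal{D}\phi_{n-1}\right]+B\cdot E\left[\mathcal{D}\phi_{n-1}\right]$ on the right of (\ref{finXesteq}). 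The cubically decaying $Err\left[\mathcal{D}\phi\right]$ terms, together with the ``$\left(\textrm{decaying RRC},{}^{(T)}\pi\right)\cdot\mathcal{D}\phi$'' terms on the right of (\ref{phicommute}), are handled by Cauchy--Schwarz using the pointwise $(t^\star)^{-5/4}$ decay of the Ricci coefficients and deformation tensor in the interior and their $r^{-2}$ decay near infinity, so that they are $\epsilon$ times (spacetime $+$ boundary) norms and get absorbed as well. The inhomogeneity $\int dt^\star dr d\omega\, r^3\mathcal{F}^n\,(X\phi_n)$ is deliberately \emph{not} treated here: it is kept on the right-hand side of (\ref{finXesteq}) and estimated separately in Proposition \ref{errorestrho} and its $\sigma$-analogue \ref{errorestsigma}. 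The main obstacle I expect is step (ii)--(iii): exhibiting a \emph{single} triple $\left(\mathfrak{a},\mathfrak{b},\mathfrak{c}\right)$ for which all the quadratic-form entries of $K^X$ — the four diagonal terms plus the $T(\phi)L(\phi)$ cross term — simultaneously have the correct signs and the correct (quadratic, at $r=3M$) degeneration, while still producing coercive future and horizon fluxes and a $\phi_n^2$-coefficient whose bad part is only of size $r^{-3-\delta}$ so that Lemma \ref{borrowhigh} applies. This is a delicate but finite computation, entirely analogous to the construction of the Morawetz current for the Regge--Wheeler equation on exact Schwarzschild (cf.\ \cite{BlueSoffer,BlueSoffer2}); the genuinely new feature is only that one must carry along the perturbation and the coupling to one derivative order lower.
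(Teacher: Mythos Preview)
Your overall architecture coincides with the paper's: assemble the multiplier from a Morawetz-type bulk identity (Lemma~\ref{mulemma}), a large $T$-piece for coercive future and horizon fluxes, a redshift correction supported near $\mathcal{H}^+$, and absorb the residual zeroth-order obstruction via Lemma~\ref{borrowhigh}; the cubic errors and the decision to leave $\int r^3\mathcal F^n(X\phi_n)$ untreated are handled exactly as you say.

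The place where your concrete proposal does not go through is the choice you suggest for $(\mathfrak a,\mathfrak b,\mathfrak c)$ in the Morawetz piece. With $\mathfrak a$ constant and $L(\mathfrak b)\sim (r-3M)^2$ near $3M$, the $|L\phi|^2$-coefficient $\tfrac12 pL(\mathfrak b)-\tfrac12\mathfrak bL(p)+p\mathfrak c$ equals $p\,\mathfrak c(3M)$ at $r=3M$; making it positive there forces $\mathfrak c(3M)>0$, which then renders the $|T\phi|^2$-coefficient $-\mathfrak c(2-p)$ negative at $3M$. So one cannot get the claimed simultaneous signs by that route --- the obstacle you anticipate in (ii)--(iii) is real and your specific ansatz does not resolve it. The paper proceeds differently: the Morawetz piece is taken as
\[
X_3=(1-p)f\,T+pf\,L+\tfrac12\,pL(f)\;=\;f\,T^\perp+\tfrac12\,pL(f),
\]
with $f(3M)=0$ but $L(f)>0$ \emph{everywhere} (concretely $f=(1-\tfrac{3M}{r})(1+\tfrac{M}{r})$); in the notation of Lemma~\ref{mulemma} this means $\mathfrak a=(1-p)f$, not constant. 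One checks directly that the $(T\phi,L\phi)$ part of $K^{X_3}$ collapses to the single perfect square $L(f)\,(T^\perp\phi)^2$, with no cross term left, while the angular coefficient equals $-\tfrac12 f r^2 L(\tfrac{pq}{r^2})$ and hence carries the $(r-3M)$ degeneration from $f$ itself. This gives non-degenerate control of $T^\perp\phi$ and degenerate control of $\slashed\nabla\phi$, but not of $T\phi$ alone; degenerate control of $(T\phi)^2$ is then recovered by adding a small, purely Lagrangian correction $X_4$ with $\mathfrak c=b_4\,\tfrac{(r-3M)^2}{r^4}\,\tilde\xi(r)$ supported away from the horizon. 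The remaining ingredients --- a large constant $B_1$ in front of $T$ to make the fluxes coercive modulo an order-$(n-1)$ boundary term, a small redshift $-b_2\gamma L$ to close the estimate in $r\le r_Y$, and Lemma~\ref{borrowhigh} for the $\phi_n^2$ remainders --- are exactly as you describe.
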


\begin{proof}
Consider the following multiplier:
\begin{align} \label{Xmulti}
X = B_1 X_1 + b_2 X_2 + X_3 + b_4 X_4
\end{align}
for $B_1$ a large and $b_2, b_4$ small constants depending only on $M$ (to be chosen) and
\begin{align}
X_1 =T \textrm{ \  , \ } X_2 =  -\gamma\left(r\right) L \textrm{ \  ,  \ } X_3 =\left(1-p\right) f \left(r\right) T + f\left(r\right)p \ L + \frac{1}{2} p \ L \left(f\left(r\right)\right) \nonumber
\end{align}
where we recall the redshift multiplier $\gamma \left(r\right)$ defined in (\ref{gammdefo}) and 
\begin{align} \label{fnchoice}
f\left(r\right) = \left(1-\frac{3M}{r}\right)\left(1+\frac{M}{r}\right) \, .
\end{align}
Finally,
\begin{align}
X_4 = \frac{\left(r-3M\right)^2}{r^4} \tilde{\xi} \left(r\right)  \, .
\end{align}
where $\tilde{\xi}\left(r\right)$ is a cut-off function supported for $r \geq r_Y - \frac{r_Y-2M}{2} > 2M$ and equal to $1$ for $r \geq r_Y$. From Lemma \ref{mulemma} we compute, the integration being over $ \tilde{\mathcal{M}}\left(\tau_1,\tau_2\right)$),
\begin{align}
\int K^{X_3}\left[\mathcal{D}\phi\right]  = \int dt^\star dr d\omega \Bigg[ L\left(f\right) \left( \left(1-p\right)T\left(\phi\right) + p L \left(\phi\right) \right)^2 \nonumber \\
- \frac{1}{4} \left(\phi\right)^2 \cdot L \left(p L \left(p L \left(f\right) \right) \right) - \frac{1}{2} f r^2 L \left(\frac{ pq}{r^2}\right) |\slashed{\nabla} \phi|^2 + L \left[\frac{3M+ \mathcal{E}}{r^3}pq\right] f |\phi|^2  \Bigg] \, . \nonumber
\end{align}
Since $f$ is monotonically increasing, the first term is already non-negative. Moreover, as $f$ changes sign at $3M$, the angular term is seen to be non-negative, too (modulo a term of the form $Err\left[\mathcal{D} \phi\right]$; in Schwarzschild $L\left(\frac{pq}{r^2}\right) = -\frac{2}{r^3}\left(1-\frac{3M}{r}\right)$). Finally, the lower order terms in $\phi$ itself are treated as error-terms and will eventually be controlled with an application of Lemma \ref{borrowhigh} and the ultimately Schwarzschildean assumption. Hence up to now we have shown the estimate
\begin{align}
B \int_{ \tilde{\mathcal{M}}\left(\tau_1,\tau_2\right)} \frac{\phi^2}{r^4} dt^\star dr d\omega + \int_{ \tilde{\mathcal{M}}\left(\tau_1,\tau_2\right)} K^{X_3}\left[\mathcal{D}\phi\right] \nonumber \\
\geq b \int_{ \tilde{\mathcal{M}}\left(\tau_1,\tau_2\right)} dt dr^\star d\omega \Bigg[ \frac{1}{r^2}\left( \left(1-p\right)T\left(\phi\right) + p L \left(\phi\right) \right)^2 +  \frac{\left(r-3M\right)^2}{r^3} |\slashed{\nabla} \phi|^2 \Bigg] \, , \nonumber
\end{align}
which controls all derivatives except the $\partial_{t^\star}$-derivative. The latter is retrieved by adding a little bit of $X_4$. It's not hard to see that for our choice of $X_4$ (note that $X_4$ is supported away from the horizon) we have
\begin{align}
B \int_{ \tilde{\mathcal{M}}\left(\tau_1,\tau_2\right)} \frac{\phi^2}{r^4} dt^\star dr d\omega + \int_{ \tilde{\mathcal{M}}} K^{X_4} \left[\mathcal{D}\phi\right] + B \int_{ \tilde{\mathcal{M}}\left(\tau_1,\tau_2\right)} K^{X_3}\left[\mathcal{D}\phi\right] \nonumber \\ \geq b \int_{ \tilde{\mathcal{M}} \cap \{ r\geq r_Y \}} dt^\star dr d\omega \Bigg[  \frac{1}{r^2}\left( T^\perp \left(\phi\right)  \right)^2 +  \frac{\left(r-3M\right)^2}{r^3} \left(|\slashed{\nabla} \phi|^2 +\left(T\phi \right)^2 \right) \Bigg]   \, . \nonumber
\end{align}
Turning to the redshift vectorfield $X_2$, we observe
\begin{align}
 \int_{ \tilde{\mathcal{M}} }K^{X_2} \left[\mathcal{D}\phi\right] +  B \int_{ \tilde{\mathcal{M}} \cap \{ \frac{5}{2}M \geq r\geq r_Y \}} dt^\star dr d\omega \Bigg[ |\phi|^2 + \left(\mathcal{D}\phi\right)^2 \Bigg] \nonumber \\
 \geq b \int_{ \tilde{\mathcal{M}} \cap \{ r\leq r_Y \}} dt^\star dr d\omega \Bigg[ |\phi|^2   +|\slashed{\nabla} \phi|^2 +\left(T \phi \right)^2 +  \left(\partial_r \phi\right)^2 \Bigg]  \, . \nonumber
\end{align}
Consequently, we can find small constants $b_2, b_4$ such that (reinstalling the subindex $n$)
\begin{align} \label{gos2}
\int_{ \tilde{\mathcal{M}}\left(\tau_1,\tau_2\right)} K^{b_2 X_2+ X_3 b_4 X_4} \left[\mathcal{D}\phi_n\right]\geq b \cdot I_{deg}^{\delta=1}  \left[\mathcal{D}\phi_n\right] \left( \tilde{\mathcal{M}}\left(\tau_1,\tau_2\right)\right) \nonumber \\
-B \int_{ \tilde{\mathcal{M}}\left(\tau_1,\tau_2\right)} \frac{\phi_n^2}{r^4} dt^\star dr d\omega \, ,
\end{align}
where $I_{deg}^{\delta=1}  \left[\mathcal{D}\phi_n\right] \left( \tilde{\mathcal{M}}\left(\tau_1,\tau_2\right)\right)$ denotes the energy (\ref{strho}) with $\delta=1$. The last term is estimated by Lemma \ref{borrowhigh}.

Before we optimize the weight at infinity to the $\delta=\frac{1}{100}$ we specified, let us note that by choosing $B_1$ large enough, we have for the boundary term
\begin{align} \label{gos1}
\int_{\tilde{\Sigma}_\tau, \mathcal{H}^+} e^{X} \left[\phi_n\right] \geq b \cdot E\left[\mathcal{D}\phi_n\right]\left( \tilde{\Sigma}_{\tau}, \mathcal{H}^+\right) -B \cdot E\left[\mathcal{D}\phi_{n-1}\right]\left( \tilde{\Sigma}_{\tau}, \mathcal{H}^+\right) \, .
\end{align}
This follows because $X$ is timelike for sufficiently large $B_1$ (noting that both $X_3$ and $X_4$ degenerate at the horizon) and that moreover, we can estimate the zeroth order term by the lower order energy.
Finally,
\begin{align}
\Big| \int_{ \tilde{\mathcal{M}}\left(\tau_1,\tau_2\right)} Err \left[\mathcal{D}\phi_n\right] \Big| \leq \epsilon \sup_{\tau}  E\left[\mathcal{D}\phi_n\right]\left( \tilde{\Sigma}_{\tau}\right) \, .
\end{align}

In summary, we have established Proposition \ref{finXest}, except that the spacetime-energy does not have the correct $r$-weights at infinity yet. As this optimization is standard, it is omitted.\footnote{Here one can simply redo the argument with an $f$ in (\ref{fnchoice}) which is bounded, monotonically increasing, changing sign at the photon sphere and decaying like $f_{,r} \sim \frac{1}{r^{1+\delta}}$ near infinity.}
\end{proof}

\subsection{Error-terms: The structure of the inhomogeneity} \label{structureRHS}
It is apparent from Proposition \ref{finXest} that we finally need to understand the error-term arising from the inhomogeneity. There are two important regions: The interior region, where we need to have sufficient decay in $t$ and be careful about the degeneration of the estimates near $r=3M$, and the exterior where we do not encounter the problem of degeneration but, on the other hand, need to worry about the weights in $r$ for the error-terms that arise, especially in view of the fact that the renormalized quantities have introduced a large $r^3$-weight for these terms. 

\begin{proposition} \label{errorestrho}
Let $n\geq 1$ and $X$ be the multiplier used in (\ref{Xmulti}). For any $\lambda>0$ we have
\begin{align}
\sum_{i=0}^2 \int_{\tilde{\mathcal{M}}\left(\tau_1,\tau_2\right)} dt^\star dr d\omega_2 r^3 \left(\mathcal{F}^n_i \right) \left(X\phi_n \right) \nonumber \\
\leq \lambda \cdot I_{deg} \left[\mathcal{D}\phi_n\right] \left(\tilde{\mathcal{M}}\left(\tau_1,\tau_2\right)\right) + \lambda \cdot \sup_\tau \, E \left[\mathcal{D}\phi_n\right]\left( \tilde{\Sigma}_\tau\right) + \lambda \cdot E \left[\mathcal{D}\phi_n\right] \left(\mathcal{H}\right)\nonumber \\ 
+ \epsilon \Big[\overline{\mathbb{I}}^{n+1,deg}_{\tilde{P}_{\rho}} \left[W\right] \left(\tilde{\mathcal{M}}\left(\tau_1,\tau_2\right)\right) +  \sup_{\tau} \overline{\mathbb{E}}^{n+1} \left[W\right] \left(\tilde{\Sigma}_{\tau}\right) \nonumber \\
+ \sup_{\tau} \overline{\mathbb{E}}^{n+1} \left[\mathfrak{R}\right] \left(\tilde{\Sigma}_{\tau}\right) + \overline{\mathbb{I}}^{n} \left[\mathfrak{R}\right] \left(\tilde{\mathcal{M}}\left(\tau_1,\tau_2\right)\right)   \Big] \nonumber \\
+ B_\lambda \cdot I_{deg} \left[\mathcal{D}\phi_{n-1}\right] \left( \tilde{\mathcal{M}}\left(\tau_1,\tau_2\right)\right) + \lambda \cdot \mathbb{D}^{n+1} \left[\mathfrak{R}\right] \left(\tau_1,\tau_2\right) + B_{\lambda} \cdot \mathbb{D}^{n} \left[\mathfrak{R}\right] \left(\tau_1,\tau_2\right) \, . \nonumber
\end{align}
\end{proposition}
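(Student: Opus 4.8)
Looking at this proposition, I need to bound error-terms arising from the inhomogeneity in the Regge-Wheeler equation for the renormalized $\rho$-component, after multiplying by the $r^3$-weight and contracting with the multiplier $X\phi_n$.

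\textbf{Overall strategy.} The plan is to split the inhomogeneity $\mathcal{F}^n = \sum_{i=1}^2 \mathcal{F}_i^n + \underline{\mathcal{F}}_i^n$ according to the two structural types identified in Proposition \ref{tre}: the terms $\mathcal{F}_1^n$ (and its conjugate), which are manifestly quadratic in decaying quantities, and the terms $\mathcal{F}_2^n$, which contain pieces proportional to the non-decaying component $\rho_0 = \rho(W)$ coming from the commutation errors $J(T,W)$. The $\mathcal{F}_1^n$ contributions should be handled by brute-force Cauchy--Schwarz: each term is of the form (derivatives of Ricci-coefficients or deformation tensor) $\times$ (null-curvature component of $\widehat{\mathcal{L}}_T^n W$), and after multiplying by $r^3$ and $X\phi_n \sim r^3 \cdot (\text{derivative of } \rho(\widehat{\mathcal{L}}_T^n W))$ one applies the $L^\infty$--$L^2$--$L^2$ splitting, putting the Ricci coefficient or deformation tensor in $L^\infty$ (using the pointwise decay in Definition \ref{RRCapproach} and Definition \ref{ultimateKilling}) and the two curvature factors in $L^2$. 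The key bookkeeping is that the $r$-weights work out: the $r^3$ from renormalization plus the $r$-weights in $\phi_n$ must be absorbed into the boundary-admissible matrix $\tilde{P}_\rho$ weights for $\overline{\mathbb{I}}^{n+1,deg}_{\tilde{P}_\rho}[W]$. In the interior region $r \leq t^\star$ one instead uses the $(t^\star)^{-5/4}$-type pointwise decay to produce the $\mathbb{D}^{n+1}[\mathfrak{R}]$ and $\mathbb{D}^n[\mathfrak{R}]$ terms; near $r=3M$ one must be careful that the multiplier $X$ (through $X_3$, $X_4$) degenerates, but this is compatible with the degenerate energy $I_{deg}[\mathcal{D}\phi_n]$ appearing with a small constant $\lambda$ on the right.

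\textbf{The hard part: the $\rho_0$-proportional terms in $\mathcal{F}_2^n$.} For $n \geq 1$ the term $3\rho_0 \cdot \rho(\widehat{\mathcal{L}}_T^n W)$ in (\ref{four}) is expected to cancel against contributions hidden in $(\Lambda[\mathfrak{J}^n])_3$, $\underline{\Omega}\Lambda[\mathfrak{J}^n]$ and $\slashed{div}(I[\mathfrak{J}^n])$, precisely because of the structure revealed in Lemma \ref{ncLT}: the null-components of $J(T,W)$ proportional to $\rho_0$ have the special form $\rho_0\cdot[\slashed{D}(\text{deformation tensor component}) + \text{curvature} + \{\pi,\text{dec.RRC}\}]$. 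After this cancellation, the genuinely dangerous surviving terms are of the schematic form $\rho_0 \cdot \slashed{D}^2\,{}^{(T)}\pi \cdot \mathcal{D}\phi_n$ (two derivatives of the deformation tensor, via $\slashed{D}_3$ of $\Lambda$ and $\slashed{div}$ of $I$) multiplied by $r^3 \cdot X\phi_n$. These do not decay quadratically in an obvious way; the resolution must mirror section \ref{errorterms}, namely integration by parts in $T$ (using $\mathcal{L}_T\rho_0 \sim$ decaying, since $r^3\rho_0 \to -2M$), combined with using the Bianchi/structure equations on the $\slashed{D}^2\,{}^{(T)}\pi$ factor to trade one derivative of the deformation tensor for curvature components that then pair favorably. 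The outcome is terms controlled by $\overline{\mathbb{I}}^{n+1}[\mathfrak{R}]$, $\sup_\tau \overline{\mathbb{E}}^{n+1}[\mathfrak{R}]$, the curvature spacetime norms at one order higher with $\epsilon$ smallness (using $\|\rho_0^2 r^6\|_{L^\infty} \leq B$), and the $\mathbb{D}$-energies. I expect the delicate point to be verifying that these integrations by parts do not generate boundary terms worse than $\lambda \cdot \sup_\tau E[\mathcal{D}\phi_n](\tilde{\Sigma}_\tau)$ and $\lambda \cdot E[\mathcal{D}\phi_n](\mathcal{H})$, and that at infinity the $r^3$-renormalization weight together with the $r^3$ in $\phi_n$ does not overwhelm the $r^{-3}$ decay of $\rho_0$ — this is exactly why only an $\epsilon$ of the strongly $r$-weighted spacetime energy $\overline{\mathbb{I}}^{n+1,deg}_{\tilde{P}_\rho}[W]$ can be borrowed and why the matrix $P_\rho$ was introduced with its specific entries.

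\textbf{Assembling the estimate.} Once the two families of terms are bounded, I would collect: the quadratically-decaying $\mathcal{F}_1^n$-terms give the $\epsilon$-times-curvature-energies and the $\mathbb{D}$-energies; the $\mathcal{F}_2^n$-terms after cancellation and integration by parts give the same types plus the $B_\lambda \cdot I_{deg}[\mathcal{D}\phi_{n-1}]$ term (arising when an $L^\infty$ factor involves a quantity only controlled at one lower differentiated order, or when Lemma \ref{borrowhigh} must be invoked to control lowest-order $\phi_n^2$ terms produced in the process); the $\lambda$-smallness for the top-order $\phi_n$ and curvature terms comes from Cauchy's inequality with a free parameter, and for the top-order Ricci term from writing $\mathbb{D}^{n+1}[\mathfrak{R}]$ with coefficient $\lambda$ after using the integrated-decay part of that norm. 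Finally I would remark that the $\sigma$-analogue (Proposition \ref{errorestsigma}, referenced but stated later) follows identically from Proposition \ref{tse} and Corollary \ref{sigma0}, since the dual structure of Lemma \ref{ncLT} produces the same cancellation of the $\rho_0\sigma(\widehat{\mathcal{L}}_T^n W)$ term.
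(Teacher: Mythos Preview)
Your handling of $\mathcal{F}_1^n$ and the quadratically-decaying part $\mathcal{F}_{2a}^n$ of $\mathcal{F}_2^n$ is essentially correct and matches the paper. However, your proposed mechanism for the $\rho_0$-proportional terms $\mathcal{F}_{2b}^n$ is not what the paper does, and as stated it has a gap.

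You write that the surviving dangerous terms are of the form $\rho_0\cdot\slashed{D}^2\,{}^{(T)}\pi\cdot X\phi_n$ and propose to resolve them by integration by parts in $T$, as in section~\ref{errorterms}. But here the $T$-derivatives sit only on the $\pi$-factor (inside $\mathcal{L}_T^{n-1}$), so peeling one off and integrating by parts produces, besides the good cubic term with $T(\rho_0)$, a term $\rho_0\cdot[\ldots]_{n-2}\cdot T(X\phi_n)$. Since $T(X\phi_n)$ is a second-order expression in $\phi_n$ (morally $X\phi_{n+1}$), this is \emph{higher} order than anything on the right-hand side and cannot be absorbed. The section~\ref{errorterms} trick worked because the two curvature factors $\widehat{\mathcal{L}}_T^{n}W$ and $\widehat{\mathcal{L}}_T^{n+1}W$ could be symmetrised; that structure is absent here.

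The paper's route is different: Lemma~\ref{Tchoice} is a \emph{pointwise} computation of $\mathcal{F}_{2b}^1$ showing that, under the assumption $\mathrm{tr}\,{}^{(T)}\pi=0$, the second-order contributions $\slashed{div}\,\slashed{D}_3\mathcal{P}$, $\slashed{\nabla}^2(\mathrm{tr}\,\pi)$, $\slashed{D}_3\slashed{\nabla}^A\mathcal{P}_A$, etc., cancel against one another and against the explicit $3\rho_0\,\rho(\widehat{\mathcal{L}}_T^n W)$ term. What survives is only $\rho_0\cdot\mathcal{D}\mathcal{L}_T^{n-1}(\pi,\text{dec.~RRC})$, a full derivative lower than your naive count. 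From there the paper integrates by parts in the \emph{spatial} derivative $\mathcal{D}$ (not in $T$), moving it onto $X\phi_n$; after Cauchy--Schwarz with parameter $\lambda_2$ this yields a $\tfrac{1}{\lambda_2}\int\phi_n^2$ term (zeroth order in $\phi_n$), and it is precisely here that Lemma~\ref{borrowhigh} enters to trade this for $\lambda_3\, I_{deg}[\mathcal{D}\phi_n]+B_{\lambda_3}\, I_{deg}[\mathcal{D}\phi_{n-1}]$.

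A secondary point: the $r$-weight difficulty you identify at infinity is real, but it occurs for the \emph{quadratic} part $\mathcal{F}_{2a}^n$, not for $\mathcal{F}_{2b}^n$ (which is harmless at infinity since $\rho_0\sim r^{-3}$). The paper isolates this in Lemma~\ref{infstudy}, where another structural cancellation (between $E_3^4(\rho)+E_4^3(\rho)$ and $-\slashed{div}\,E_4(\underline{\beta})+\slashed{div}\,E_3(\beta)$, using Bianchi) shows the inhomogeneity decays like $r^{-9/2}$ rather than the naive $r^{-4}$, closing the $r^{7+\delta}$-weighted spacetime estimate.
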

\begin{remark}
Note that the terms in the first line will be absorbed by the left hand side of (\ref{finXesteq}) for sufficiently small $\lambda$.
\end{remark}
\begin{proof}
We begin with the term $\boxed{\mathcal{F}_1^n}$. It is obviously an expression which decays at least quadratically and hence improves upon taking derivatives. For $r\leq R$ we can exploit the pointwise decay of the Ricci coefficients (and first derivatives thereof):
\begin{align}
\int_{\mathcal{M}\left(\tau_1,\tau_2\right), r\leq R} dt^\star dr d\omega_2 r^3 \left(\mathcal{F}^n_1 \right) \left(X\phi \right) \nonumber \\
\leq \epsilon \cdot \sup_{\tau} \sqrt{E\left(\mathcal{D} \phi_n, \Sigma_{\tau, r\leq R}\right)} \cdot \sup_{\tau} \sqrt{\overline{\mathbb{E}}^{n+1}\left[W\right] \left(\Sigma_{\tau, r\leq R}\right)} \, .
\end{align}
For $r \geq R$ we can borrow a bit from the weighted energy for the Weyl field and estimate the $X\phi_n$-term in a spacetime integral, as we are far away from the trapped region:
\begin{align}
\int_{\tilde{\mathcal{M}}\left(\tau_1,\tau_2\right), r\geq R} dt^\star dr d\omega_2 r^3 \left(\mathcal{F}^n_1 \right) \left(X\phi_n \right) \nonumber \\
\leq \lambda_1 \cdot I_{deg} \left[\mathcal{D}\phi_n\right] \left( \tilde{\mathcal{M}}\left(\tau_1,\tau_2\right)\right) + \epsilon \cdot B_{\lambda_1} \cdot \overline{\mathbb{I}}^{n+1,deg}_{\tilde{P}_{\rho}} \left[W\right] \left(\tilde{\mathcal{M}}\left(\tau_1,\tau_2\right)\right)
\end{align}
with the $\epsilon$ arising again from the pointwise bounds on the Ricci-coefficients. The same estimate holds for the conjugate quantity $\underline{\mathcal{F}}^n_1$.

Turning to $\boxed{\mathcal{F}_2^n}$ we have to understand the structure of $\Lambda=J_{434}$ and $I = J_{3A4}$ and Lie-$T$-derivatives thereof, in particular the asymptotic decay in $r$ on the one hand and the terms not decaying quadratically on the other. Indeed, it suffices to understand 
$\Lambda$ and $I$ as arising from one commutation in $T$, since the null-decomposition almost commutes with the $T$-derivative:
\begin{align}
\Lambda\left[\mathfrak{J}\right] = \widehat{\mathcal{L}}^{n-1}_T \left[\Lambda \left(J\left(T,W\right) \right)\right] + \sum_{i=0}^{n-2} \widehat{\mathcal{L}}^i_T \Lambda \left(J \left(T, \widehat{\mathcal{L}}^{n-1-i}_T \widehat{\mathcal{L}}_T W\right)\right) + \mathfrak{J}^{error}_{434} \, ,
\end{align}
where the last two terms on the right hand side decay quadratically. We first split
\begin{align}
\mathcal{F}_2^n = \mathcal{F}_{2a}^n + \mathcal{F}_{2b}^n \, ,
\end{align}
where $ \mathcal{F}_{2a}^n$ collects the terms in which at least two factors decay and  $\mathcal{F}_{2b}^n$ are the potentially dangerous ``linear" terms, i.e.~those proportional to $\rho_0$.

For  $\boxed{\mathcal{F}_{2a}^n}$ we decompose
\begin{align} \label{split}
 r^3 \mathcal{F}^n_{2a} \cdot \left(X\phi \right) = r^3 \mathcal{F}^n_{2a} \cdot \left(X\phi \right) \cdot \chi\left(r\right) 
+ \mathcal{F}^n_{2a} \cdot \left(X\phi \right)  \cdot \left(1-\chi\left(r\right)\right) = \mathfrak{A} + \mathfrak{B}
\end{align}
where $\chi\left(r\right)$ is an interpolating function which is equal to $1$ for $r \leq R-M$ and $0$ for $r \geq R$. Note that $\mathfrak{A}$ is supported for $r\leq R$ only, while $\mathfrak{B}$ is supported in $r\geq R-M$. For $\mathfrak{B}$ we have
\begin{align}
\int_{\tilde{\mathcal{M}}\left(\tau_1,\tau_2\right)} dt^\star dr d\omega \ r^3 \mathcal{F}^n_{2a} \cdot \left(X\phi_n \right) \cdot \chi\left(r\right) \leq \lambda_1 \cdot I_{deg} \left[\mathcal{D}\phi\right] \left(\tilde{\mathcal{M}}\left(\tau_1,\tau_2\right) \right) \nonumber \\
+ \frac{1}{\lambda_1} \int_{\tilde{\mathcal{M}}\left(\tau_1,\tau_2\right) \cap \{r\geq R-M\}} r^{7+\delta} \chi^2 \left( \mathcal{F}_{2a}^n \right)^2 \, .
\end{align}
For the second term we show
\begin{lemma} \label{infstudy}
\begin{align}
\int_{\tilde{\mathcal{M}}\left(\tau_1,\tau_2\right) \cap \{r\geq R-M\}} r^{7+\delta} \left( \mathcal{F}_{2a}^n \right)^2 \leq \epsilon  \cdot \overline{\mathbb{I}}^{n+1}_{\tilde{P}_{\rho}}\left[W\right] \left(\mathcal{D}^{\tau_2}_{\tau_1}\right) + \epsilon \cdot \overline{\mathbb{I}}^{n} \left[\mathfrak{R}\right] \left(\tilde{\mathcal{M}}\left(\tau_1,\tau_2\right) \right)  \nonumber\, .
\end{align}
\end{lemma}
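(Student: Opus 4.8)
The strategy is to reduce $\mathcal{F}_{2a}^n$ to a finite sum of schematic products and then estimate each product by an $L^\infty$--$L^2$ split, being careful only about (i) how many derivatives sit on each factor and (ii) the power of $r$. First I would use Proposition \ref{GNDK} and Lemmas \ref{J1I}, \ref{J2l}, \ref{ncLT}, together with the fact that the null decomposition almost commutes with $\widehat{\mathcal{L}}_T$ (as already exploited in the proof of Proposition \ref{harderror}), to write
\begin{align*}
\mathcal{F}_{2a}^n \ \equiv\ \sum_{a+b\leq n+1}\Big[ \mathcal{D}^{a}\big(\mathfrak{R}-\mathfrak{R}_{SS}\ ,\ {}^{(T)}\pi\big)\cdot \mathcal{D}^{b}\widetilde{W}\ +\ \rho_0\cdot \mathcal{D}^{a}\big(\mathfrak{R}-\mathfrak{R}_{SS}\ ,\ {}^{(T)}\pi\big)\cdot \mathcal{D}^{b}\big(\mathfrak{R}-\mathfrak{R}_{SS}\ ,\ {}^{(T)}\pi\big)\Big],
\end{align*}
i.e.\ a sum of terms in which, apart from possibly one factor of $\rho_0$, at least two factors decay. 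Here $\widetilde W$ is understood to carry the renormalized component $\widehat\rho=\rho+\tfrac{2M}{r^3}$, the extra lower-order terms generated by the $\tfrac{2M}{r^3}$ being of the above schematic form as well. The terms coming from $\mathcal{F}_1^n$ and from the commutation remainder $\mathfrak{J}^{error}$ (which already carry a $T$-derivative on $W$) are of the first type; the $\rho_0$-cubic terms come from the explicit $\rho$-proportional part of $J^2$ after the quadratically-decaying bracket in Lemma \ref{ncLT} is inserted.

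Next I would record the $r$-weight budget. Taking a $\slashed{D}_4$ or an angular derivative improves the pointwise $r$-decay of both the Ricci coefficients and the curvature components by one power (this is exactly the content of the $r^{2k_2+2k_3}$ factors in the energies of section \ref{norms}), while $\slashed{D}_3$ does not change it. For terms of the first type with $a\le\lfloor\tfrac{n+1}{2}\rfloor$ I put the Ricci/deformation factor in $L^\infty$: by the pointwise bounds of Definition \ref{RRCapproach}, Definition \ref{ultimateKilling} and Definition \ref{ultS}(2) for the low derivatives, and by Sobolev embedding from $\mathbb{D}^n[\mathfrak{R}]$ for the rest, one gets $\|r^{2+(\#\{\slashed D_4,\slashed\nabla\}\ \text{derivatives})}\,\mathcal{D}^a(\mathfrak R-\mathfrak R_{SS},{}^{(T)}\pi)\|_{L^\infty}\le \epsilon$; the remaining $L^2$ curvature factor with $r^{7+\delta}$ times the square of this weight is then bounded, component by component, by $\overline{\mathbb{I}}^{n+1}_{\tilde P_\rho}[W]$ (whose bulk weights, read off from (\ref{Prho}) minus one with the $\delta$-adjustments, are large enough, up to $r^{5-\delta}$ for the top $\widehat\rho,\sigma$ slots together with the $r^{2k_2+2k_3}$ gain). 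For $a>\lfloor\tfrac{n+1}{2}\rfloor$ I reverse the roles: the curvature factor goes in $L^\infty$ (using the $\overline{\mathbb{E}}^2_{P_\rho}$-controlled pointwise decay of curvature and Sobolev in $\tau$) and the Ricci $L^2$ factor, weighted by $r^{7+\delta}$ times the square of the $L^\infty$ $r$-weight, lands in $\overline{\mathbb{I}}^{n}[\mathfrak{R}]$ of (\ref{IR}), whose $\widehat{\underline H}$-slot carries the reduced weight $\tfrac1r r^{2-\delta}=r^{1-\delta}$. For the $\rho_0$-cubic terms the extra $|\rho_0|^2\lesssim r^{-6}$ more than compensates the $r^{7+\delta}$ prefactor, so these are the easy ones. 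Throughout, commuting derivatives through $\slashed D_3,\slashed D_4,\slashed\nabla$ and $\slashed{\mathcal L}_{\Omega_i}$ only produces terms of lower order with additional decay (Lemma \ref{commutelemma}), so the scheme closes.

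The main obstacle is the $r$-weight bookkeeping for the few borderline terms, and in particular those containing the weakly $r$-decaying components $\widehat{\underline H}$ and $\underline{\alpha}$. A term such as $\widehat{\underline H}\cdot \slashed\nabla\beta$ (and its higher commutations) decays only like $r^{-1}$ in the $\widehat{\underline H}$-factor, so the naive $L^\infty$-$L^2$ split with $\widehat{\underline H}$ in $L^\infty$ is over the budget by a power of $r^\delta$; the fix is either to put $\widehat{\underline H}$ in $L^2$ (gaining from its weaker energy weight $r^{1-\delta}$) and the curvature derivative in $L^\infty$, or to invoke the cancellation of the worst-decaying pieces upon inserting the $Y=0$ structure equation $\slashed D_4\widehat H_{AB}+tr H\,\widehat H_{AB}=-2\Omega\widehat H_{AB}-\alpha_{AB}$, exactly as in the Lemma preceding Proposition \ref{errhigh}. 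For $\underline\alpha$ one must additionally use the signature considerations of \cite{ChristKlei} to rule out the genuinely divergent combination $\underline\alpha\cdot{}^{(T)}\mathbf{i}$ on the characteristic part of $\mathcal{D}(\tau_1,\tau_2)$. I expect verifying that these two mechanisms survive an arbitrary number of $T$- and $\Omega_i$-commutations (keeping track of the renormalization remainders) to be the most delicate point; everything else is the routine Cauchy-Schwarz-plus-Sobolev bookkeeping already used in Propositions \ref{erlosto} and \ref{errhigh}.
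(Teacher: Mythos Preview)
Your proposal and the paper's proof take genuinely different routes. You attack $\mathcal{F}_{2a}^n$ directly via its schematic form inherited from the null decomposition of $J(T,W)$ (Lemmas \ref{J1I}, \ref{J2l}, \ref{ncLT}) and then run an $L^\infty$--$L^2$ bookkeeping on each product. The paper instead discards the $\mathcal{F}_1/\mathcal{F}_2$ split entirely in the asymptotic region: it rewrites the \emph{full} quadratically decaying right-hand side of the $\rho$ wave equation at order zero as
\[
\rho_{34}+\rho_{43}-2\slashed{\nabla}^2\rho=-\slashed{div}E_4(\underline{\beta})+E_3^4(\rho)+\slashed{div}E_3(\beta)+E_4^3(\rho),
\]
observes that commuting by $\mathcal{L}_T$ only produces Lie-$T$ derivatives of this same expression plus harmless scalar commutators, and then shows once and for all (at $n=0$) that the four terms on the right combine, after inserting the Bianchi equation~(\ref{Bianchi4}), to something decaying like $r^{-5}$. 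Since $\mathcal{L}_T$ does not change $r$-weights, this propagates to all $n$ with no further work. This is what the paper means by ``it is absolutely crucial to exploit the cancellation of the worst error-terms'': the cancellation lives at the level of the Bianchi equations, not at the level of the structure equations.

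Your proposed fixes for the borderline terms are not the same mechanism and do not obviously suffice. The structure equation you invoke, $\slashed{D}_4\widehat{H}_{AB}+trH\,\widehat{H}_{AB}=-2\Omega\widehat{H}_{AB}-\alpha_{AB}$, concerns $\widehat{H}$ and is the device used in section~\ref{caworst} for the asymptotic Bianchi errors; the problematic factor here is $\widehat{\underline{H}}$, whose $\slashed{D}_4$ structure equation~(\ref{Hb4}) contains $\slashed{\mathcal{D}}_2^\star\underline{Z}$ rather than a curvature component and gives no analogous gain. Likewise, the signature argument you cite rules out $\underline{\alpha}\cdot{}^{(T)}\mathbf{i}$ in certain slots, but the actual $r^{-4}$ obstruction in the paper's computation is the term $\widehat{H}\cdot\underline{\alpha}_4$ inside $E_3^4(\rho)$, which is resolved not by signature but by replacing $\underline{\alpha}_4$ via Bianchi. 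Your approach may ultimately close (particularly since the paper notes that in the gauge $Y=0$ the surviving borderline terms all carry a factor of $Y$ and hence vanish), but you have not identified the relevant cancellation, and the fixes you list point at the wrong equations.
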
 
\begin{proof}
To estimate these quadratic error-terms in the far away region, we only have to be careful about the weights in $r$. To reveal the structure which is present for the error-terms (cancellation of worst terms, cf.~section \ref{caworst}), we are going to think about the wave equation for $\rho$ in yet another way. At the zeroth order we can write
\begin{align} \label{rholod}
\rho_{34} + \rho_{43} - 2 \slashed{\nabla}^2 \rho = - \slashed{div} E_4 \left(\underline{\beta}\right)  + E_3^4 \left(\rho\right) + \slashed{div} E_3 \left({\beta}\right)  + E_4^3 \left(\rho\right) \, .
\end{align}
We now imagine commuting this equation with the Lie-$T$-derivative directly to obtain the higher derivative version of the equation. Note that this commutation will not produce any additional difficult terms at infinity, but simply Lie-$T$-derivatives of the right hand side of (\ref{rholod}).\footnote{It may produce terms proportional to $\rho$ from commuting $T$ past the derivatives on the left hand side. However, these decay strongly in $r$.} However, the decay in $r$ of the right hand side is not altered by application of $T$. It follows that it suffices to understand the decay in $r$ of the right hand side of (\ref{rholod}), more precisely, to establish that it decays better than $r^{-4-\delta}$ pointwise. We note in passing that the gauge condition $Y=0$ is not needed here and that we will hence carry out the estimates including this term. 

Let $\equiv$ denote equality up to terms which decay like $r^{-\frac{9}{2}}$ or stronger. We observe that
\begin{align}
- \slashed{div} E_4 \left(\underline{\beta}\right) + \slashed{div} E_3 \left({\beta}\right) \equiv \slashed{div} \left(Y \underline{\alpha}\right) 
\end{align}
and, noting that in view of the null structure equation (\ref{Y3}) $\slashed{D}_3 Y$ actually decays like $\frac{1}{r^3}$ (instead of the naive $\frac{1}{r^2}$) also
\begin{align}
E_3^4 \left(\rho\right) + E_4^3 \left(\rho\right) \equiv \slashed{\mathcal{D}}_2^\star Y \cdot \underline{\alpha} + Y^A \slashed{D}_3 \underline{\beta} -2Y {D}_3 \underline{\beta} \, .
\end{align}
Adding up these terms one sees that the worst terms cancel by the Bianchi equation (\ref{Bianchi4}), producing terms decaying like $r^{-5}$.
\end{proof}
We emphasize that it is absolutely crucial to exploit the cancellation of the worst error-terms as otherwise a $r^{-\delta}$-divergence would arise.

For the term $\mathfrak{A}$ in (\ref{split}) we do not have to worry about $r$-weights at all, only about the order of differentiability in view of the degeneration at the photon sphere. However, all the terms in the expression for $\mathcal{F}_{2a}^n$ decay quadratically. This makes the error-term cubic and the estimates are straightforward, putting the highest terms in $L^2$ and the lowest in $L^\infty$. Using Sobolev embedding, we obtain
\begin{align}
\int_{\tilde{\mathcal{M}}\left(\tau_1,\tau_2\right), r \leq R}r^3 \mathcal{F}^n_{2a} \Big|_{\mathfrak{A}} \cdot \left(X\phi_n \right) \leq \left(\tau_2-\tau_1\right) \sup_{\tau} \sqrt{E\left[\mathcal{D} \phi_n\right]\left(\tilde{\Sigma}_{\tau}\right)} \Bigg[ \nonumber \\ \sum_{i=0}^{\lfloor \frac{n-1}{2} \rfloor} \sup_{\tau} \sqrt{\overline{\mathbb{E}}^{i+2} \left[\mathfrak{R}\right] \left(\tilde{\Sigma}_{\tau}\right)} \sup_{\tau} \sqrt{\overline{\mathbb{E}}^{n+1-i} \left[W\right] \left(\tilde{\Sigma}_{\tau}\right)}   \nonumber \\ +
\sum_{i=\lfloor \frac{n-1}{2} \rfloor+1}^{n+1} \sup_{\tau} \sqrt{\overline{\mathbb{E}}^{n+3-i} \left[W\right] \left(\tilde{\Sigma}_{\tau}\right)} \sup_{\tau} \sqrt{\overline{\mathbb{E}}^{i} \left[\mathfrak{R}\right] \left(\tilde{\Sigma}_{\tau}\right)} \Bigg] \leq \nonumber \\ 
\lambda \cdot E\left[\mathcal{D} \phi_n\right] \left( \tilde{\Sigma}_{\tau}\right) + B_\lambda \left(\tau_2-\tau_1\right)^2   \sup_\tau \overline{\mathbb{E}}^{\lfloor \frac{n-1}{2}\rfloor+2} \left[\mathfrak{R}\right] \left(\tilde{\Sigma}_{\tau}\right) \cdot \sup_{\tau} \overline{\mathbb{E}}^{n+1} \left[W\right] \left(\tilde{\Sigma}_{\tau}\right) \nonumber \\ + B \left(\tau_2-\tau_1\right)^2\sup_\tau \overline{\mathbb{E}}^{\lfloor \frac{n-1}{2}\rfloor+1} \left[W\right] \left(\tilde{\Sigma}_{\tau}\right) \cdot \sup_{\tau} \overline{\mathbb{E}}^{n+1} \left[\mathfrak{R}\right] \left(\tilde{\Sigma}_{\tau}\right) \, .\nonumber 
\end{align}
and in view of the decay assumptions on the lower oder energies (recall the ultimately Schwarzschildean assumption and $n\leq k-1$), the $\tau$-weights can be absorbed, producing the terms on the right hand side of the proposition. 

It remains to estimate $\boxed{\mathcal{F}^n_{2b}}$, i.e.~the terms proportional to $\rho_0$. These terms are particularly difficult near the photon sphere $r=3M$ (in fact, near infinity they are harmless, in view of the strong decay of $\rho_0$). We first note that it suffices to understand the case $k=1$, since higher $k$ are just $T$-Lie-derivatives of this expression (and commutation by $\mathcal{L}_T$ introduces additional decay factors). Hence we want to compute the expression
\begin{align}
 \mathcal{F}^1_{2b} = -2 \Bigg[ \slashed{\nabla}_3 \Lambda_0 + \frac{3}{2} tr \underline{\chi} \Lambda_0 - 2\underline{\Omega} \Lambda_0 + \slashed{div} I_0 \nonumber \\ + \slashed{\nabla}_4 \underline{\Lambda}_0 + \frac{3}{2} tr {\chi} \underline{\Lambda}_0 - 2{\Omega} \underline{\Lambda}_0 + \slashed{div} \underline{I}_0 - 3 \rho\left(W\right) \cdot \rho \left(\hat{\mathcal{L}}_T W\right) \Bigg] \nonumber \, ,
\end{align}
where 
\begin{align}
\Lambda_0 = \frac{3}{4} \rho_0 \Big[ -\frac{1}{2}\slashed{D}_4 \left(tr \pi\right) - \slashed{\nabla}^A \mathcal{P}_A +2 p \rho - \frac{1}{2} tr H \left(2\slashed{\nabla}_3 p - 4 \underline{\Omega} p\right) \nonumber \\ -tr \chi \pi_{34} - \frac{1}{2} tr \underline{H} \pi_{44} - \frac{1}{2} tr H \delta^{AB} \pi_{AB} \Big]  \nonumber \, ,
\end{align}
\begin{align}
I_0 = \frac{3}{4} \rho_0 \left[ 2q \beta_A - \psi_{34} Z_A - 2\underline{\Omega} \mathcal{P}_A - tr \underline{\chi} \hat{\pi}^{3}_{\phantom{3}A} - 2 tr \chi  \hat{\pi}^{4}_{\phantom{4}A} + \slashed{\nabla}_3 \mathcal{P}_A + \frac{1}{2} \slashed{\nabla}_A \left(tr \pi \right) \right] \nonumber 
\end{align}
(and similarly for the conjugate\footnote{The conjugate quantity of $\beta$ is $-\underline{\beta}$, while all other quantities just receive an underline and $p$ becomes $q$.} quantities $\underline{\Lambda}_0$, $\underline{I}_0$) were computed at the end of the proof of Lemma \ref{ncLT}.

In fact, we will only need to do evaluate $\mathcal{F}_{2b}^1$ modulo quadratic error-terms (as the latter immediately transfer to $\mathcal{F}_{2a}^n$). We make the a-priori convention that {\bf for the computations within the context of the errorterm $\mathcal{F}_{2b}^n$, ``$=$" means ``equality up to quadratically decaying terms".}

\begin{lemma} \label{Tchoice}
If the vectorfield $T$ is chosen such that its deformation tensor ${}^{(T)}\pi$ is traceless (which it is according to Definition \ref{ultS}), then ($n\geq 1$)
\begin{align} \label{esh}
\mathcal{F}^n_{2b}  = \rho_0 \cdot \mathcal{D} \mathcal{L}_T^{n-1} \left(\pi, \textrm{dec.~RRC}\right) + \textrm{l.o.t.} \, 
\end{align} 
\end{lemma}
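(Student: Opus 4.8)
The plan is to compute $\mathcal{F}^1_{2b}$ explicitly using the formulae for $\Lambda_0, I_0$ (and their conjugates) recorded at the end of the proof of Lemma \ref{ncLT}, and to show that after inserting these expressions into the definition of $\mathcal{F}^1_{2b}$, all terms in which $\rho_0$ multiplies an \emph{undifferentiated} deformation-tensor component or an undifferentiated decaying Ricci coefficient cancel; what survives is a sum of terms where $\rho_0$ multiplies precisely \emph{one derivative} of $\left(\pi, \textrm{dec.~RRC}\right)$, plus quadratically decaying terms. The higher order case $n\geq 2$ then follows immediately, since $\mathcal{F}^n_{2b}$ is, modulo quadratically decaying terms, the $(n-1)$-fold Lie-$T$ derivative of $\mathcal{F}^1_{2b}$ (Lie-$T$ commutation through the null decomposition introduces only lower order terms with additional decay, cf.~the remarks preceding Lemma \ref{ncLT} and in Proposition \ref{harderror}), so one gets $\rho_0 \cdot \mathcal{D} \mathcal{L}_T^{n-1}\left(\pi, \textrm{dec.~RRC}\right)$ up to lower order terms.

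The key computational step is the cancellation. First I would collect the $\rho_0$-proportional part of $\mathcal{F}^1_{2b}$ and substitute the expressions for $\Lambda_0, I_0, \underline{\Lambda}_0, \underline{I}_0$. The crucial inputs are: (i) the assumption $tr\,{}^{(T)}\pi = 0$, which kills the $-\tfrac12 \slashed{D}_4 (tr\,\pi)$ term in $\Lambda_0$ and the $\tfrac12 \slashed{\nabla}_A (tr\,\pi)$ term in $I_0$ outright (as well as the conjugates); (ii) the defining relations of the deformation tensor — in particular $\pi_{34}, \pi_{44}, \pi_{33}, \hat\pi_{AB}$ are themselves first derivatives of $T$'s coefficients $p,q$ against the frame, so the terms like $tr\,\chi\,\pi_{34}$, $tr\,\underline{H}\,\pi_{44}$, $tr\,H\,\delta^{AB}\pi_{AB}$ are already of the form $\rho_0 \cdot (\textrm{RRC}) \cdot \mathcal{D}(p,q)$, i.e.~of the advertised type; (iii) the fact that $2p\rho - \tfrac12 tr H(2\slashed{\nabla}_3 p - 4\underline\Omega p) = 2p(\rho + \tfrac{2M}{r^3})$ modulo decaying Ricci coefficients, as already used in the proof of Lemma \ref{ncLT} — this shows the undifferentiated $p\rho$ term combines with the undifferentiated $\slashed{\nabla}_3 p$ term to produce $2p\,\widehat\rho$, which when multiplied by $\rho_0$ and hit with the remaining $\slashed{D}_3$ in $\mathcal{F}^1_{2b}$ gives either $\rho_0 \cdot \mathcal{D}(\widehat\rho)$ (a quadratically decaying term since $\widehat\rho$ decays) or $\rho_0 \cdot p \cdot \mathcal{D}(\cdot)$; and (iv) the last term $+3\rho_0 \cdot \rho(\widehat{\mathcal{L}}_T W)$ in $\mathcal{F}^1_{2b}$, which (cf.~the discussion after Proposition \ref{tre}) is designed to cancel the $-\tfrac32 tr\,\underline{H} tr\,H \cdot \rho(\widehat{\mathcal{L}}_T W)$-type contribution so that no undifferentiated $\rho_0 \cdot \rho(\widehat{\mathcal{L}}_T W)$ term survives. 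I would organize the bookkeeping by grouping terms according to how many $e_3$- versus $e_4$-derivatives act, treating the $\Lambda_0$ / $\underline{\Lambda}_0$ pair symmetrically and likewise for $I_0$ / $\underline{I}_0$, and carefully tracking the factor $\slashed{D}_3 + \frac32 tr\,\underline{H} - 2\underline\Omega$ applied to $\Lambda_0$ against $\slashed{D}_4 + \frac32 tr\,H - 2\Omega$ applied to $\underline{\Lambda}_0$ and the divergences $\slashed{div}\,I_0$, $\slashed{div}\,\underline{I}_0$.

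The main obstacle will be handling the terms where a derivative in $\mathcal{F}^1_{2b}$ falls on $\rho_0$ itself rather than on the deformation tensor. Since $\rho_0 = \rho(W)$ does \emph{not} decay, a term of the form $(\mathcal{D}\rho_0)\cdot(\textrm{undiff. }\pi)$ would \emph{not} be of the advertised type $\rho_0 \cdot \mathcal{D}\mathcal{L}_T^{n-1}(\pi,\textrm{dec.~RRC})$. The resolution — as in the analogous situation in section \ref{errorterms}, e.g.~around equations (\ref{diffterms}) and the discussion there — is that a derivative of $\rho_0$ can be traded, via the Bianchi equations (\ref{Bianchi6})--(\ref{Bianchi9}) (or their renormalized versions (\ref{renormrho})), for $\slashed{div}\,\beta$, $\slashed{div}\,\underline\beta$, $\slashed{curl}\,\beta$, $\slashed{curl}\,\underline\beta$ up to lower order terms; since $\beta,\underline\beta$ decay, $\mathcal{D}\rho_0 \cdot (\textrm{undiff. }\pi)$ becomes a quadratically decaying (cubic, after multiplying the outer $X\phi_n$) term, hence absorbable into $\mathcal{F}^1_{2a}$ and controlled by Lemma \ref{infstudy} and the $r\leq R$ estimates already in place in the proof of Proposition \ref{errorestrho}. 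I would also need to check that the $\slashed{\nabla}_3 \mathcal{P}_A$ and $\slashed{D}_4 \mathcal{P}_A$ type terms inside $I_0, \underline{I}_0$, when differentiated once more by the $\slashed{div}$ in $\mathcal{F}^1_{2b}$, produce second derivatives of $\mathcal{P},\mathcal{Q}$ — but since $\mathcal{P}_A = \psi_{A4}$, $\mathcal{Q}_A = \psi_{A3}$ are themselves first derivatives of $T$, i.e.~ of the form $\mathcal{D}(\textrm{coefficients of }T)$, a term $\rho_0 \cdot \mathcal{D}^2(\textrm{coeff.~of }T)$ is still $\rho_0 \cdot \mathcal{D}(\pi)$ since $\mathcal{D}(\textrm{coeff.~of }T) \sim \pi$ modulo decaying RRC, which is exactly the right form for the $n=1$ statement (and $\mathcal{D}\mathcal{L}^{n-1}_T\pi$ for general $n$). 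Once these two points are dispatched, assembling (\ref{esh}) is a matter of careful but routine bookkeeping.
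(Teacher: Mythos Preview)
Your overall plan (reduce to $n=1$, compute directly, then lift by Lie-$T$ commutation) matches the paper. However, the mechanism you propose in (iii) and (iv) for the central cancellation is wrong, and without it you cannot obtain the gain of one derivative that is the whole point of the lemma.

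In (iii) you claim $\rho_0 \cdot \mathcal{D}(\widehat{\rho})$ is quadratically decaying. It is not: $\rho_0$ does not decay in $t$, so this term is linear in the sense of the $\mathcal{F}^n_{2a}/\mathcal{F}^n_{2b}$ split and stays in $\mathcal{F}^1_{2b}$. In (iv) you say the explicit $3\rho_0 \cdot \rho(\widehat{\mathcal{L}}_T W)$ cancels a ``$-\tfrac{3}{2} tr\,\underline{H}\,tr\,H \cdot \rho(\widehat{\mathcal{L}}_T W)$-type contribution''; no such term appears in $\mathcal{F}^1_{2b}$ --- that quantity lives on the left-hand side of the Regge--Wheeler equation, not in the inhomogeneity.

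The correct cancellation, and the one the paper uses, is this. When $\slashed{D}_3$ hits the $2p\rho$ term inside $\Lambda_0$ (and $\slashed{D}_4$ hits the $2q\rho$ term in $\underline{\Lambda}_0$), the leading contributions assemble into $2(p\slashed{D}_3\rho + q\slashed{D}_4\rho) = 2T(\rho)$. One then converts the $T$-derivative of the null component into the null component of the Lie-$T$-derived Weyl field via Lemma~\ref{LTTL}, obtaining
\[
2p\slashed{D}_3\rho = 2T(\rho) - 2q\slashed{D}_4\rho = 2\rho(\widehat{\mathcal{L}}_T W) + \tfrac{1}{4}(tr\,\pi)\rho + 3q\,tr\,H\,\rho - 2q\,\slashed{div}\,\beta,
\]
together with its conjugate. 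Now three cancellations occur simultaneously: the $tr\,\pi$ term vanishes by hypothesis; the $-2q\,\slashed{div}\,\beta$ from this identity and the $+2q\,\slashed{div}\,\beta$ coming from $\slashed{div}\,I_0$ (via the $2q\beta_A$ in $I_0$) cancel; and the $\rho(\widehat{\mathcal{L}}_T W)$ produced here (from both $\Lambda_0$ and $\underline{\Lambda}_0$) cancels the explicit $3\rho_0\cdot\rho(\widehat{\mathcal{L}}_T W)$ in $\mathcal{F}^1_{2b}$. What survives is precisely $\rho_0$ times single derivatives of $(\pi, \textrm{dec.~RRC})$, which is the statement. Your route via (iii) could in principle be salvaged by expanding $\rho(\widehat{\mathcal{L}}_T W)$ through Bianchi back into $q\,\slashed{div}\,\beta - p\,\slashed{div}\,\underline{\beta}$ to see the same cancellation against the $\mathcal{D}\widehat{\rho}$ terms, but you never make that step.
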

\begin{proof}
Direct computation, using that
\begin{align}
2p \slashed{\nabla}_3 \rho + 2q \slashed{\nabla}_4 \rho - 2q \slashed{\nabla}_4 \rho = 2 T \left(\rho\right)  - 2q \slashed{\nabla}_4 \rho \nonumber \\
= 2 \rho \left(\hat{\mathcal{L}}_T W\right) + \frac{1}{4} tr \pi \rho + 3q tr \chi \rho -2q \slashed{div} \beta \, 
\end{align}
and 
\begin{align}
\slashed{div} I_0 = \frac{3}{4} \rho_0 \Big[ 2q \slashed{div} \beta_A - \psi_{34} \slashed{div} Z_A - 2\underline{\Omega} \slashed{div} \mathcal{P}_A - tr \underline{\chi} \slashed{div} \hat{\pi}^{3}_{\phantom{3}A}  \nonumber \\ - 2 tr \chi  \slashed{div} \hat{\pi}^{4}_{\phantom{4}A} + \slashed{div} \slashed{\nabla}_3 \mathcal{P}_A + \frac{1}{2} \slashed{\nabla}^2 \left(tr \pi \right) \Big] \, . \nonumber
\end{align}
\end{proof}
\begin{remark}
Note that the Lemma once more exploits a cancellation of the worst terms: Naively, (\ref{esh}) only holds with $n$ instead of $n-1$.
\end{remark}
\begin{remark} \label{trpremark}
If the deformation tensor is not traceless, $\mathcal{F}^n_{2b}$ contains an additional higher order term of the form $\rho_0 \mathcal{L}_T^{n-1} \left(\slashed{D}_3 \slashed{D}_4 + \slashed{D}_4 \slashed{D}_3- 2\slashed{\nabla}^2 \right) tr \pi$. However, the operator in brackets is the principal part of the Regge-Wheeler-operator. Hence via several integration by parts, moving the operators to $\phi_n$ in the spacetime-error-term arising from $\mathcal{F}^n_{2b}$, this term can also be seen to be of lower order. It is here where the technical assumption $tr {}^{(T)}=0$ (discussed below Definition \ref{ultS}) helps us to reduce the number of error-terms.
\end{remark}
With the Lemma at hand, we pick a cut-off function $\chi_\star \left(r\right)$, which is equal to 1 for $r\leq 10M$ and equal to $0$ for $r \geq 11M$ and decompose
\begin{align}
 \mathcal{F}^n_{2b}  =  \mathcal{F}^n_{2b} \cdot \chi_\star\left(r\right) +  \mathcal{F}^n_{2b}\left[1-\chi_\star\left(r\right)\right] \, .
\end{align}
We can then estimate away from the photon sphere
\begin{align}
 \int_{\tilde{\mathcal{M}}} dt^\star dr d\omega_2 r^3 \mathcal{F}^n_{2b} \left[1-\chi_\star\left(r\right)\right] \cdot \left(X\phi_n \right)  \leq \lambda \cdot I_{deg} \left[\mathcal{D}\phi_n\right] \left(\tilde{\mathcal{M}}\left(\tau_1,\tau_2\right)\right) + \frac{B}{\lambda} \mathbb{D}^{n} \left[\mathfrak{R}\right] \, , \nonumber
\end{align}
while for  the other term we integrate by parts, writing a shorthand $\int$ to denote $\int_{\tilde{\mathcal{M}}\left(\tau_1,\tau_2\right) \cap \{r \leq 11M \}}dt^\star dr d\omega$ and noting that weights in $r$ are bounded in the region under consideration
\begin{align}
\int r^3 \mathcal{F}^n_{2b} \cdot \chi_\star\left(r\right) \cdot \left(X\phi_n \right)
=  \int  r^3 \rho_0 \left(\mathcal{L}_T^{n-1} \mathcal{D} \left(\pi, \textrm{dec.~RRC}\right)\right) \cdot \left(X\phi_n \right)
\nonumber \\ 
\leq 
+\lambda_2 \cdot \sup_{\tau} E \left[\mathcal{D}\phi_n\right]\left(\tilde{\Sigma}_\tau,\mathcal{H}\right) + B_{{\lambda}_2} \cdot \sup_\tau \overline{\mathbb{E}}^n\left[\mathfrak{R}\right] \left(\tilde{\Sigma}_{\tau}, \mathcal{H} \right) \nonumber \\
\lambda_2 \int  \left(\mathcal{D}^2\mathcal{L}_T^{n-1} \left(\pi, \textrm{dec.~RRC}\right)\right)^2 + \frac{1}{ \lambda_2} \int  \phi_n^2+ B \int \left(\mathcal{D}\mathcal{L}_T^{n-1} \left(\pi, \textrm{dec.~RRC}\right)\right)^2  \nonumber \\   \leq  \frac{1}{ \lambda_2} \int   \phi_n^2 +\mathbb{D}^{n+1}_{\lambda_2} \left[\mathfrak{R}\right] + \lambda_2 \cdot \sup_{\tau} E \left[\mathcal{D}\phi_n\right]\left(\tilde{\Sigma}_\tau,\mathcal{H}\right) \nonumber
\end{align}
for any (small) number $\lambda_2>0$. Applying Lemma \ref{borrowhigh}  with a $\lambda_3$ much smaller than $\lambda_2$ to the first term in the last line finally yields
\begin{align}
\int r^3 \mathcal{F}^n_{2b} \cdot \chi_\star\left(r\right) \cdot \left(X\phi_n \right) \leq \lambda \cdot I_{deg} \left[\mathcal{D}\phi_n\right] \left(\tilde{\mathcal{M}}\left(\tau_1,\tau_2\right)\right) \nonumber \\ + \left(\lambda + \epsilon\right) \cdot \sup_{\tau} E \left[\mathcal{D}\phi_n\right]\left(\tilde{\Sigma}_\tau,\mathcal{H}\right) + {B}_{\lambda} \cdot I_{deg} \left[\mathcal{D}\phi_{n-1}\right]\left( \tilde{\mathcal{M}}\left(\tau_1,\tau_2\right)\right)  +\mathbb{D}^{n+1}_{\lambda} \left[\mathfrak{R}\right] \nonumber \, ,
 \end{align}
for any small $\lambda$. Clearly, the first two terms on the right hand side will be absorbed by the terms available on the left for sufficiently small $\lambda$ (depending only on $M$).
Note the importance of Lemma \ref{borrowhigh} to estimate this error-term, as it allows one to turn the non-degenerate into a degerate energy. 

Collecting our estimates for $\mathcal{F}^n_{1,2a,2b}$, $\underline{\mathcal{F}}^n_{1,2a,2b}$, Proposition \ref{errorestrho} is proven.
\end{proof}

Finally, we note that for the wave equation satisfied by $\sigma$ (and Lie-T-derivatives thereof), 
the terms $\mathcal{G}_1^n$ and $\mathcal{G}_2^n$ can be estimated in exactly the same fashion as $\mathcal{F}_1^n$ and $\mathcal{F}_2^n$ were for the $\rho$-equation. In fact, the estimates are much easier, as the ``linear" error-terms on the right hand side all cancel. This fact is also easily inferred from commuting the scalar equation of Corollary \ref{sigma0} directly with $T$ and using that $\sigma$ decays. We conclude

\begin{proposition} \label{errorestsigma}
The estimate of Proposition \ref{errorestrho} also holds replacing $\mathcal{F}$ by $\mathcal{G}$ everywhere.
\end{proposition}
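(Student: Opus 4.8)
The statement to prove is Proposition \ref{errorestsigma}, which asserts that the error estimate of Proposition \ref{errorestrho} carries over verbatim with $\mathcal{F}$ replaced by $\mathcal{G}$ everywhere. The plan is to mirror the proof of Proposition \ref{errorestrho} term by term, exploiting the structural parallel between Propositions \ref{tre} and \ref{tse}. Concretely, I would organize the argument around the decomposition of the right-hand side $\mathcal{G}^n = \sum_{i=1}^2 \mathcal{G}^n_i + \underline{\mathcal{G}}^n_i$ just as was done for $\mathcal{F}^n$, and for each piece invoke the identical partition into interior ($r \leq R$), asymptotic ($r \geq R$), near-photon-sphere ($r \leq 11M$), and away-from-photon-sphere regions.

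First I would treat $\mathcal{G}^n_1$ (and its conjugate): comparing with $\mathcal{F}^n_1$ in Proposition \ref{tre}, one sees that $\mathcal{G}^n_1$ is built from the same ingredients — $\slashed{D}_3 E_4(\sigma(\widehat{\mathcal{L}}^n_T W))$, $tr\underline{H}\, E_4(\sigma)$, $\slashed{\nabla} tr \underline{H}$ contracted with ${}^\star\beta$, $\widehat{\underline{H}}\cdot\slashed{\nabla}{}^\star\beta$, $tr F_3({}^\star\beta)$, $\underline{\Omega}E_4(\sigma)$, $\slashed{\nabla}^\star\underline{\Omega}\,\beta$, $\slashed{curl}$ of $(2\widehat{H}\cdot\underline{\beta} + \underline{Y}\cdot\alpha)$, and $\slashed{curl}Z\,\rho$, $Z\,\slashed{\nabla}\sigma$, ${}^\star Z\,\slashed{\nabla}\rho$ terms. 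Every one of these is manifestly at least quadratically decaying — it is a product of a decaying Ricci coefficient (or its derivative) with a curvature component of $\widehat{\mathcal{L}}^n_T W$ (or a derivative thereof). Hence the interior estimate uses the pointwise $(t^\star)^{-5/4}$-decay of $\mathfrak{R}-\mathfrak{R}_{SS}$ and ${}^{(T)}\pi$ exactly as for $\mathfrak{A}$ in the $\rho$-case, and the $r\geq R$ estimate borrows $\lambda_1 \cdot I_{deg}[\mathcal{D}\psi_n]$ against $\epsilon B_{\lambda_1}\overline{\mathbb{I}}^{n+1,deg}_{\tilde{P}_\rho}[W]$; the cancellation-of-worst-terms observation (Lemma \ref{infstudy} style, using the structure equations $\slashed{curl}Z$, $\slashed{curl}\underline{Z}$, $\slashed{curl}Y$ in place of $\slashed{div}$) ensures the $r$-weights close without an $r^{-\delta}$ divergence.

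For $\mathcal{G}^n_2 = -2(K[\mathfrak{J}^n])_3 + 4\underline{\Omega}(K[\mathfrak{J}^n]) + 2\slashed{curl}(I[\mathfrak{J}^n]) + 3\rho_0\cdot\sigma(\widehat{\mathcal{L}}^n_T W)$ I would split as $\mathcal{G}^n_{2a} + \mathcal{G}^n_{2b}$ with $\mathcal{G}^n_{2b}$ the terms proportional to $\rho_0$. The key simplification — and the reason the $\sigma$-case is \emph{strictly easier} — is that the ``linear'' error-terms cancel entirely. This can be seen either by inspecting the null-decompositions $K(J^1), K(J^2)$ from Lemmas \ref{J1I}, \ref{J2l}, \ref{ncLT} and verifying the analogue of Lemma \ref{Tchoice} produces no surviving $\rho_0\cdot\mathcal{D}\mathcal{L}_T^{n-1}(\pi,\mathrm{dec.RRC})$ term, or — cleaner and what I would actually write — by commuting the scalar equation of Corollary \ref{sigma0} directly with $T$ and using that $\sigma$ itself decays (so the ``$\rho_0$'' in $\mathcal{G}^0_2 = \hat{H}\wedge\underline{\hat{H}} + 2Y\wedge\underline{Y}$ is already replaced, via the structure equations for $\slashed{curl}Z$ and its conjugate, by quadratically-decaying expressions). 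Thus $\mathcal{G}^n_{2b}$ is effectively absent and one only needs the $\mathcal{G}^n_{2a}$ analysis, which is the $\mathfrak{A},\mathfrak{B}$ split with Lemma \ref{infstudy}-type bounds; no appeal to Lemma \ref{borrowhigh} for the $\rho_0$-term is needed. The main obstacle — such as it is — is purely bookkeeping: one must check that all the structure-equation cancellations used in the $\rho$-proof (e.g. the cancellation in Lemma \ref{infstudy} via the Bianchi equation for $\underline{\beta}$, and the analogous one near the photon sphere) have $\sigma$-counterparts, which they do because $\slashed{curl}$ behaves under the structure equations just as $\slashed{div}$ does and the signatures of $\sigma$ and $\rho$ coincide. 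I would state this explicitly and then conclude that every bound in Proposition \ref{errorestrho} is reproduced with $\mathcal{F}\mapsto\mathcal{G}$, $\phi_n\mapsto\psi_n$, completing the proof.
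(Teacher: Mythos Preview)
Your proposal is correct and takes essentially the same approach as the paper. The paper's proof is in fact only the brief paragraph preceding the proposition: it notes that $\mathcal{G}_1^n$ and $\mathcal{G}_2^n$ are estimated exactly as $\mathcal{F}_1^n$ and $\mathcal{F}_2^n$, that the estimates are strictly easier because the ``linear'' error-terms on the right hand side all cancel, and that this cancellation is most cleanly seen by commuting the scalar equation of Corollary \ref{sigma0} directly with $T$ and using that $\sigma$ decays --- precisely the route you identify as the cleaner option.
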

\subsection{Proof of Proposition \ref{rhosigmaest}} \label{mnrs}
Combining Proposition \ref{finXest} with the error-estimate \ref{errorestrho} yields the estimate of Proposition \ref{rhosigmaest} for the case $n\geq 1$. For the case $n=0$ we run the argument of Proposition \ref{finXest} again, since $\phi_0$ satisfies the same equation by Corollary \ref{vecwrite}. The only difference is that Lemma \ref{borrowhigh} is now not available. Instead, we simply control the zeroth order curvature term from the energies on the Ricci-coefficients:
\begin{align}
\int_{\tilde{\mathcal{M}}\left(\tau_1,\tau_2\right)} \frac{\phi_0^2}{r^{3+\delta}} dt^\star dr d\omega \leq B \cdot \mathbb{D}^1 \left[\mathfrak{R}\right] \left(\tau_1,\tau_2\right) \, ,
\end{align}
as a consequence of the null-structure equations.
\section{Proofs of the main theorems}

\subsection{Uniform boundedness: Theorem \ref{theo1}}
Uniform boundedness for the top-order derivatives can be proven without proving a (degenerate) integrated decay estimate at this order. This observation was first made in the context of the wave equation on black hole backgrounds in \cite{DafRodKerr} (see also \cite{Mihalisnotes}) and is adapted here to the spin2-setting. We are going to use the slices $\Sigma$ (no ``ears") is this section.

\begin{proof}[Proof of Theorem \ref{theo1}]
To lighten the notation we write $\Sigma_i = \Sigma_{\tau_i}$, $\Sigma_i^{+} = \Sigma_{\tau_i} \cap \{r \geq r_Y-3\frac{r_Y-2M}{4}\}$ and $\Sigma_i^{-} = \Sigma_{\tau_i} \cap \{r \leq r_Y-3\frac{r_Y-2M}{4}\}$, $\mathcal{H}^2_1=\mathcal{H}\left(\tau_1,\tau_2\right)$, $\mathcal{M}^2_1=\mathcal{M}\left(\tau_1,\tau_2\right)$. The identity (\ref{mainid}) applied with three $T$-vectorfields yields, in view of Lemma \ref{Tboundaryterm} 
\begin{align}
E^0 \left[\tilde{W}\right]\left( \Sigma_{2}^{+} \right) + \sum_{j=1}^n E\left[\widehat{\mathcal{L}}_T^j W\right]\left(\Sigma_{2}^{+}  \right) \leq 
 B \cdot E^0 \left[\tilde{W}\right]\left( \Sigma_{1}^{} \right) + B \cdot \sum_{j=1}^n E\left[\widehat{\mathcal{L}}_T^j W\right]\left( \Sigma_{1}^{}  \right) \nonumber \\ 
 + \epsilon \cdot E^0 \left[\tilde{W}\right] \left(\Sigma_{2}^{-}, \Sigma_1^{-} \right) + \epsilon \sum_{j=1}^n E\left[\widehat{\mathcal{L}}_T^j W\right] \left(\Sigma_{2}^{-}, \Sigma_1^{-}\right)
 \nonumber \\ + B \sum_{j=1}^n \int_{\mathcal{M}_1^2} \left(K_1^{TTT} \left[\widehat{\mathcal{L}}_T^j W\right] + K_2^{TTT} \left[\widehat{\mathcal{L}}_T^j W\right] \right) \, . \nonumber
\end{align}
Using the elliptic estimate in the interior, Proposition \ref{ellipticint}, we obtain
\begin{align} \label{hole}
\overline{\mathbb{E}}_{r \geq r_Y-\frac{5}{8}\frac{r_Y-2M}{2}}^n\left[W\right] \left(\Sigma_2 \right) \leq B \cdot \overline{\mathbb{E}}^n\left[W\right] \left(\Sigma_1 \right) 
+ \epsilon \cdot \overline{\mathbb{E}}_{r \leq r_Y}^n\left[W\right] \left(\Sigma_2, \Sigma_1 \right) \nonumber \\
+ B \sum_{j=1}^n \int_{\mathcal{M}_1^2} \sum_{i=1,2} K_i^{TTT} \left[\widehat{\mathcal{L}}_T^j W\right] + B \sup_\tau \overline{E}^{n-1}\left[\mathfrak{R}\right] \left(\Sigma_{\tau}\right) \, .
\end{align}
Adding the redshift-estimate of Theorem \ref{redshiftcomplete} and the expression \newline $\int_{\mathcal{M}^2_1 \cap \{ r \geq r_Y \} } dt^\star \  \overline{\mathbb{E}}^n\left[W\right] \left(\Sigma_{t^\star} \right)$ to both sides yields
\begin{align}
\overline{\mathbb{E}}^n\left[W\right] \left(\Sigma_2 \right) + \overline{\mathbb{E}}^n\left[W\right] \left(\mathcal{H}^2_1 \right) + \int_{\mathcal{M}^2_1} \ dt^\star \overline{\mathbb{E}}^n\left[W\right] \left(\Sigma_{t^\star} \right) \leq B \cdot \overline{\mathbb{E}}^n\left[W\right] \left(\Sigma_1 \right) \nonumber \\
+ B \sum_{j=1}^n \int_{\mathcal{M}_1^2} \left(K_1^{TTT} \left[\widehat{\mathcal{L}}_T^j W\right] + K_2^{TTT} \left[\widehat{\mathcal{L}}_T^j W\right] \right) +B \cdot Err^n_{hoz} \left[\mathfrak{R}\right] \left(\tau_1,\tau_2\right) \nonumber \\
+ B \sup_\tau \overline{E}^{n-1}\left[\mathfrak{R}\right] \left(\Sigma_{\tau}\right) + B \int_{\mathcal{M}^2_1 \cap \{ r \geq r_Y \} } dt^\star \  \overline{\mathbb{E}}^n\left[W\right] \left(\Sigma_{t^\star} \right) \, .\nonumber
\end{align}
Note that the left hand side controls in particular the time-integrated $L^2$-based energy. 
For the errorterms $K^{TTT}_1$ and $K^{TTT}_2$ we use the following Corollary of Proposition \ref{K12summary} (for the untilded region $\mathcal{M}\left(\tau_1,\tau_2\right)$):
\begin{corollary} \label{K12explicit}
If $\mathcal{X}=\mathcal{Y}=\mathcal{Z}=T$ in Proposition \ref{K12summary} we have, for any (small) $\lambda>0$ and $0\leq n \leq k-1$ the estimate
\begin{align}
\Big| \int_{\mathcal{M}\left(\tau_1,\tau_2\right)}  K_{2}^{TTT}\left[\widehat{\mathcal{L}}^{n+1}_T W\right]\Big| \leq  \lambda \sup_{\tau \in \left(\tau_1, \tau_2\right)} \ \overline{\mathbb{E}}^{n+1} \left[W\right]\left(\Sigma_{\tau}\right)
+ B_\lambda \cdot \mathbb{C}^{n+1} \left[\mathfrak{R}\right] \left(\tau_1,\tau_2\right) \, . \nonumber
\end{align}
where the $\mathbb{C}^n\left[\mathfrak{R}\right]$-energy was defined in Theorem \ref{theo1}. Moreover, in view of Proposition \ref{K1est}, the same estimate holds for $K_{1}^{TTT}\left[\widehat{\mathcal{L}}^{n+1}_T W\right]$ (in fact, without the last term).
\end{corollary}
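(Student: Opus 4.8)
The plan is to obtain Corollary \ref{K12explicit} as the specialisation of Proposition \ref{K12summary} to the choice $\mathcal{X}=\mathcal{Y}=\mathcal{Z}=T$, carried out over the untilded region $\mathcal{M}\left(\tau_1,\tau_2\right)$ rather than $\tilde{\mathcal{M}}\left(\tau_1,\tau_2\right)$. The first observation is that with this choice all the functions $p_i$ occurring in Proposition \ref{K12summary} coincide with the $3$-component $p$ of $2T=pe_3+qe_4$, and $p$ vanishes identically on $\mathcal{H}^+$ by item $(4)$ of Definition \ref{ultS}. Hence the term $\lambda\,\sup_{\mathcal{H},i}|p_i|\int_{\mathcal{H}}\|\widehat{\mathcal{L}}^{n+1}_T W\|^2$ on the right-hand side of $(\ref{harderrorest})$ drops, and, as already recorded in Proposition \ref{K12summary}, no highest-order curvature norm enters on the horizon — which is the last sentence of the corollary.

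Next I would pass from $\tilde{\mathcal{M}}$ to $\mathcal{M}$. The only term in $(\ref{harderrorest})$ special to the tilded region is the penultimate one, $\epsilon\left(\tau_1\right)^{-1/4}\,\overline{\mathbb{I}}^{n+1,r\geq R}\left[W\right]\left(\tilde{\mathcal{M}}\left(\tau_1,\tau_2\right)\right)$; as the proofs of Propositions \ref{K1est}, \ref{easyerror} and \ref{harderror} make explicit, it arises solely because $\underline{\alpha}$ does not appear on the characteristic pieces $N_{out}\left(S^2_{\tau,R}\right)$ of the slices $\tilde{\Sigma}_\tau$. On the genuine slices $\Sigma_\tau$ of constant $t^\star$ every null component is present, the flux integrals over $\mathcal{M}\left(\tau_1,\tau_2\right)$ converge without borrowing an $r$-weighted bulk term, and — by the last sentence of each of those three Propositions — this term may simply be omitted, the $\mathbb{D}$-energies being simultaneously replaced by the $\mathbb{C}$-energies of Theorem \ref{theo1}. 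This already yields
\begin{align}
\Big| \int_{\mathcal{M}\left(\tau_1,\tau_2\right)} K_{2}^{TTT}\left[\widehat{\mathcal{L}}^{n+1}_T W\right]\Big| &\leq \left(\lambda+B_\lambda\,\epsilon\left(\tau_1\right)^{-\frac14}\right)\sup_{\tau\in(\tau_1,\tau_2)}\overline{\mathbb{E}}^{n+1}\left[W\right]\left(\Sigma_\tau\right) \nonumber \\
&\quad + B\cdot\mathbb{C}^{n+1}\left[\mathfrak{R}\right]\left(\tau_1,\tau_2\right) + B_\lambda\cdot\mathbb{C}^{n}\left[\mathfrak{R}\right]\left(\tau_1,\tau_2\right) \, .
\end{align}
Two bookkeeping steps finish the $K_2$ bound: applying Proposition \ref{K12summary} with parameter $\lambda/2$ and choosing $\epsilon$ small relative to the (fixed, $M$-dependent) $\lambda$ so that $B_{\lambda/2}\epsilon\leq\lambda/2$ turns the coefficient of $\sup\overline{\mathbb{E}}^{n+1}$ into $\lambda$ (here $\tau_1\geq\tau_0\geq1$ is used so that $(\tau_1)^{-1/4}\leq1$); and since $\mathbb{C}^{n}\left[\mathfrak{R}\right]\leq\mathbb{C}^{n+1}\left[\mathfrak{R}\right]$ the last two terms combine into $B_\lambda\cdot\mathbb{C}^{n+1}\left[\mathfrak{R}\right]\left(\tau_1,\tau_2\right)$.

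For the $K_1^{TTT}$ statement I would invoke Proposition \ref{K1est} directly: its last sentence gives, over $\mathcal{M}\left(\tau_1,\tau_2\right)$, the estimate $\int_{\mathcal{M}} K_1^{TTT}[\mathcal{W}]\leq\epsilon\left(\tau_1\right)^{-1/4}\sup_\tau E[\mathcal{W}]\left(\Sigma_\tau\right)$ with the asymptotic $\underline{\alpha}$-term removed. Applying this with $\mathcal{W}=\widehat{\mathcal{L}}^{n+1}_T W$, using Lemma \ref{LTTL} to dominate the $L^2$-energy $E\left[\widehat{\mathcal{L}}^{n+1}_T W\right]\left(\Sigma_\tau\right)$ by $\overline{\mathbb{E}}^{n+1}\left[W\right]\left(\Sigma_\tau\right)$ (legitimate since $n+1\leq k$), and shrinking $\epsilon$ relative to $\lambda$, produces $\lambda\,\sup_\tau\overline{\mathbb{E}}^{n+1}\left[W\right]\left(\Sigma_\tau\right)$ — indeed without any Ricci-coefficient contribution, as stated. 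There is no genuine analytic difficulty here; the one thing requiring care is keeping track of precisely why the region-specific terms (the $r\geq R$ bulk term and the horizon curvature term) disappear in the untilded, $\mathcal{X}=\mathcal{Y}=\mathcal{Z}=T$ case, and this is entirely supplied by Propositions \ref{K1est}, \ref{easyerror}, \ref{harderror} together with item $(4)$ of Definition \ref{ultS}.
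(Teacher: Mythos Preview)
Your proposal is correct and follows precisely the route the paper intends: the corollary is stated there as an immediate specialisation of Proposition \ref{K12summary} to $\mathcal{X}=\mathcal{Y}=\mathcal{Z}=T$ over the untilded region, with no separate proof given. You have supplied exactly the bookkeeping the paper leaves implicit --- the vanishing of the horizon term via $p|_{\mathcal{H}^+}=0$, the removal of the $r\geq R$ bulk term when working on $\Sigma_\tau$ rather than $\tilde{\Sigma}_\tau$, the replacement of $\mathbb{D}$ by $\mathbb{C}$, and the absorption of $\epsilon(\tau_1)^{-1/4}$ into $\lambda$ --- and your treatment of $K_1^{TTT}$ via Proposition \ref{K1est} is the intended one.
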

Applying also the estimate for $Err_{hoz}^n\left[\mathfrak{R}\right]\left(\tau_1,\tau_2\right)$ given in Proposition \ref{redshiftcomplete}, we obtain
\begin{align}
\overline{\mathbb{E}}^n\left[W\right] \left(\Sigma_2 \right) + \overline{\mathbb{E}}^n\left[W\right] \left(\mathcal{H}^2_1 \right) + \int_{\mathcal{M}^2_1} \ dt^\star \overline{\mathbb{E}}^n\left[W\right] \left(\Sigma_{t^\star} \right) \leq B \cdot \overline{\mathbb{E}}^n\left[W\right] \left(\Sigma_1 \right) \nonumber \\
+ B \cdot \mathbb{C}^n\left[\mathfrak{R}\right] \left(\tau_1,\tau_2\right) + B \int_{\mathcal{M}^2_1 \cap \{ r \geq r_Y \} } dt^\star \  \overline{\mathbb{E}}^n\left[W\right] \left(\Sigma_{t^\star} \right) \, .\nonumber
\end{align} 

We finally estimate the last term of the above estimate using the estimate (\ref{hole}) integrated in time.\footnote{It is very important that (\ref{hole}) does not have an error-term on the horizon for this argument to go through. This is the underlying reason for the assumption that $T$ is null on the horizon: It ensures both that the energy flux through the horizon has a sign and that highest order error-terms are absent on the horizon in Proposition \ref{K12summary} .}

Defining $f\left(\tau\right) = \overline{\mathbb{E}}^n\left[W\right] \left(\Sigma_\tau \right)$, the resulting estimate can be written in the following form:  For any $\lambda>0$ and $\tau_2\geq \tau_1\geq \tau_0$
\begin{align} \label{bndest}
f\left(\tau_2\right) + \int_{\tau_1}^{\tau_2} dt f\left(t\right) \leq B_{\lambda} \cdot \left(\mathbb{C}^{n}\left[\mathfrak{R}\right] \left(\tau_0\right) + f\left(\tau_0\right) \right) \cdot \max \left[1, \left(\tau_2-\tau_1\right)\right] \nonumber \\ + \lambda \max \left[1, \left(\tau_2-\tau_1\right)\right]  \sup_{\tau \in \left(\tau_1,\tau_2\right)} f\left(\tau\right) + B \cdot f\left(\tau_1\right) \, .
\end{align}
In particular, we can achieve that $\lambda < \frac{1}{16B^2}$. Boundedness of $f\left(\tau\right)$ now follows from Lemma \ref{bndboot}. 
\end{proof}

\begin{lemma} \label{bndboot}
Suppose $f\left(\tau_0\right) <\infty$ and that $f\left(\tau\right)$ satisfies the estimate (\ref{bndest}) with $\lambda < \frac{1}{16B^2}<\frac{1}{16}$ and $B_{\lambda} = \tilde{B}>1$ fixed. Then $\sup_{\left(\tau_0, \infty\right)} f\left(\tau\right) < C$ holds for a constant $C=2\cdot B \cdot f\left(\tau_0\right)  + 9 \cdot 8 \cdot B^2 \cdot \tilde{B}  \left( \mathbb{C}^{n}\left[\mathfrak{R}\right] \left(\tau_0\right) + f\left(\tau_0\right) \right)$.
\end{lemma}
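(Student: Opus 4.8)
The plan is to exploit the spacetime integral term on the left-hand side of (\ref{bndest}) via a pigeonhole argument, in the spirit of \cite{DafRodKerr, Mihalisnotes}. The essential structural point is that (\ref{bndest}) degrades linearly in $\tau_2-\tau_1$, so it can only be used on intervals of controlled length $L$; the hypothesis $\lambda<\frac{1}{16B^2}$ is precisely what lets the small term $\lambda\max[1,\tau_2-\tau_1]\sup f$ be absorbed on a window whose length is large enough ($L\sim B^2$) to run a pigeonhole. A naive iteration of (\ref{bndest}) over unit intervals only yields $\sup_{[\tau_1,\tau_1+1]}f\lesssim \mathrm{data}+2B f(\tau_1)$, whose growth factor $2B>1$ would produce exponential-in-$\tau$ growth; the integral term is what turns this into a genuinely uniform bound.

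First I would fix $L:=8B^2$ (we may assume $B\ge 1$, this being the convention of the paper), so that $\lambda L<\tfrac12$. Applying (\ref{bndest}) on $[\tau_1,\tau_1+L]$, using $\max[1,\tau_2-\tau_1]\le L$, dropping the nonnegative integral and taking the supremum over $\tau_2\in[\tau_1,\tau_1+L]$ gives $(1-\lambda L)\sup_{[\tau_1,\tau_1+L]}f\le B_\lambda L\,\big(\mathbb{C}^{n}[\mathfrak{R}](\tau_0)+f(\tau_0)\big)+Bf(\tau_1)$, hence $\sup_{[\tau_1,\tau_1+L]}f\le \tilde D+2Bf(\tau_1)$, where $\tilde D:=2B_\lambda L\big(\mathbb{C}^{n}[\mathfrak{R}](\tau_0)+f(\tau_0)\big)=16\,\tilde B B^2\big(\mathbb{C}^{n}[\mathfrak{R}](\tau_0)+f(\tau_0)\big)$. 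Feeding this back into (\ref{bndest}) with $\tau_2=\tau_1+L$ and using $\lambda L<\tfrac12$ yields the same bound for the integral, $\int_{\tau_1}^{\tau_1+L}f\le \tilde D+2Bf(\tau_1)$. At this point I would note that $f(\tau)=\overline{\mathbb{E}}^{n}[W](\Sigma_\tau)$ is a priori finite and locally bounded — it is a continuous function of $\tau$ by Cauchy stability of the evolution on compact time intervals — which legitimises these supremum manipulations; alternatively one runs a routine continuity argument on the set $\{T:\ \sup_{[\tau_0,T]}f<\infty\}$.

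Now comes the pigeonhole. Restricting the integral bound to the second half of the window, $[\tau_1+\tfrac L2,\tau_1+L]$ (length $4B^2$), there is a point $\tau^{\sharp}\in[\tau_1+\tfrac L2,\tau_1+L]$ with $f(\tau^{\sharp})\le \tfrac{2}{L}\big(\tilde D+2Bf(\tau_1)\big)=\tfrac{\tilde D}{4B^2}+\tfrac{1}{2B}f(\tau_1)$. Starting from $\tau^{(0)}:=\tau_0$ and iterating this construction produces a sequence $\tau^{(m)}$ with $\tau^{(m+1)}\in[\tau^{(m)}+\tfrac L2,\tau^{(m)}+L]$ and $f(\tau^{(m+1)})\le \tfrac{\tilde D}{4B^2}+\tfrac{1}{2B}f(\tau^{(m)})$. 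Since $\tfrac{1}{2B}\le\tfrac12$, summing the resulting geometric inequality gives $f(\tau^{(m)})\le f(\tau_0)+\tfrac{\tilde D}{2B^2}$ for all $m$; thus the $f$-value at every "good" point is bounded independently of $m$. Combining with the window bound of the previous paragraph, $\sup_{[\tau^{(m)},\tau^{(m)}+L]}f\le \tilde D+2Bf(\tau^{(m)})\le 2\tilde D+2Bf(\tau_0)$, and since $\tau^{(m+1)}\le\tau^{(m)}+L$ while $\tau^{(m)}\ge\tau_0+mL/2\to\infty$, the windows $[\tau^{(m)},\tau^{(m)}+L]$ cover $[\tau_0,\infty)$. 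Unwinding $\tilde D$ gives $\sup_{[\tau_0,\infty)}f\le 2Bf(\tau_0)+32\,\tilde B B^2\big(\mathbb{C}^{n}[\mathfrak{R}](\tau_0)+f(\tau_0)\big)$, which is bounded by the asserted constant $C=2Bf(\tau_0)+9\cdot 8\cdot B^2\tilde B\big(\mathbb{C}^{n}[\mathfrak{R}](\tau_0)+f(\tau_0)\big)$.

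The main obstacle is not any individual estimate but the bookkeeping needed to close the constants: one must check that the small-parameter threshold $\lambda<\frac{1}{16B^2}$ exactly matches the window length $L=8B^2$ required for the pigeonhole (so that both $\lambda L<\tfrac12$ and the contraction factor $2B/L\le \tfrac1{2B}\le\tfrac12$ hold), that the "good points" genuinely advance (hence restricting the pigeonhole to the second half of each window), and that $f$ is a priori finite so the supremum manipulations are valid. Everything else is elementary.
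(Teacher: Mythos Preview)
Your proof is correct and follows essentially the same pigeonhole mechanism as the paper: use the integral term in (\ref{bndest}) on a window of length $\sim B^2$ to locate a good slice, then propagate from good slices. The paper packages this as a bootstrap (assume $f<C$ on a region, use the good-slice bound $f(\tau_i^g)\le \tilde B D+\lambda C+\tfrac{B}{L}C$ with $L=4B^2$ to improve to $f<\tfrac{15}{16}C$, and conclude by continuity), whereas you run a direct iteration with $L=8B^2$ and a geometric contraction $f(\tau^{(m+1)})\le \tfrac{\tilde D}{4B^2}+\tfrac{1}{2B}f(\tau^{(m)})$; the difference is organizational rather than substantive, and your constants land safely inside the stated $C$.
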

\begin{proof}
We will write $D= \mathbb{C}^{n}\left[\mathfrak{R}\right] \left(\tau_0\right) + f\left(\tau_0\right)$.
The proof is a bootstrap argument. Consider the region $A\left(\tau\right)=\mathcal{M}\left(\tau_0, \tau\right)$ for all $\tau$ such that $f\left(\tau\right)<C$ holds in $A\left(\tau\right)$. This region is clearly non-empty and open as a subset of the black hole exterior $\mathcal{M}\left(\tau_0, \infty\right)$. We now show it is also closed using the estimate (\ref{bndest}). For this, let $L=4B^2$ and define a sequence of slices $\tau_i = \tau_0 + i \cdot L$. Applying the estimate in a region between two slices $\tau_i, \tau_{i+1}$ located in the region $A\left(\tau\right)$ yields a ``good" slice $\tau_i^g$ satisfying
\begin{align}
f\left(\tau_i^{g}\right) = \inf_{\left(\tau_i,\tau_{i+1}\right)} f\left(\tau\right) \leq \tilde{B} \cdot D + \lambda \cdot C + \frac{B}{L} \cdot C \, .
\end{align}
Having found a good slice in each region $\mathcal{M}\left(\tau_i, \tau_{i+1}\right)$, we can apply the estimate (\ref{bndest}) again, this time from the good slice to any $\tau$-slice in $\mathcal{M}\left(\tau^{g}_i, \tau_{i+2}\right)$  (and, for the last good slice, in $\mathcal{M}\left(\tau^{g}_i, \tau\right)$). Note that the distance between the two slices considered is at most $2L$. Hence the estimate yields
\begin{align}
\sup_{\left(\tau^{g}_i, \tau_{i+2}\right)} f\left(\tau\right) &\leq \tilde{B} \cdot D \cdot 2L + \lambda \cdot 2L \cdot C + B \cdot \left(\tilde{B} \cdot D +\lambda \cdot C + \frac{B}{L} \cdot C\right) \nonumber \\
&\leq 8 B^2 \cdot \tilde{B} \cdot D + \frac{1}{2} \cdot C + B^2 \ \tilde{B} \cdot D + \frac{1}{16}C  + \frac{1}{4} C < \frac{15}{16}C \nonumber \, .
\end{align}
This uniform bound has been established for all slices in $A\left(\tau\right)$ north of the first good slice, $\tau_0^{g}$. However, for the region up to $\tau^g_0$ our estimate tells us that
\begin{align}
\sup_{\left(\tau_0, \tau_0^g\right)} f\left(\tau\right) \leq B \cdot f\left(\tau_0\right) + 4\cdot B^2 \ \tilde{B} \cdot D + \frac{1}{4} C < \frac{3}{4}C \, .
\end{align}
It follows that $A\left(\tau\right)$ is closed and that the uniform bound we established is valid for the entire black hole exterior.
\end{proof}

\subsection{Integrated decay: Theorem \ref{maintheointdec1}}
\begin{proof}
[Proof of Theorem \ref{maintheointdec1}] Apply Proposition \ref{spinreduce} with $\mathcal{W}_i = \mathcal{L}^i_T W$ for $1 \leq i \leq k$ and add the estimate of Corollary \ref{coloo} at the lowest order. 
Apply Theorem \ref{rhosigmaest} to estimate the $\rho$ and $\sigma$-term on the right hand side.
This yields, for $1 \leq n \leq k$
 \begin{align} 
 \sup_{(\tau_1,\tau_2)} \overline{\mathbb{E}}^0\left[W\right] \left(\tilde{\Sigma}_{\tau}\right) + \sum_{i=1}^n \Big\{ \sup_{(\tau_1,\tau_2)} E_{spin1} \left[\mathcal{W}_i\right]\left(\tilde{\Sigma}_{\tau}\right) + \sup_{(\tau_1,\tau_2)} E \left[\mathcal{D}\mathbf{\Phi}_{i-1}\right]\left(\tilde{\Sigma}_{\tau} \right) \nonumber \\
+ E \left[\mathcal{W}_i\right]\left(\mathcal{H}\left(\tau_1,\tau_2\right)\right)  + I^{deg} \left[\mathcal{W}_i\right] \left(\tilde{\mathcal{M}}\left({\tau_1},{\tau_2}\right)\right) \Big\}  + \overline{\mathbb{I}}^0 \left[W\right]  \left(\tilde{\mathcal{M}}\left({\tau_1},{\tau_2}\right)\right) \nonumber \\
\leq B \sum_{i=1}^n \Big\{  E_{spin1} \left[\mathcal{W}_i\right]\left(\tilde{\Sigma}_{\tau_1}\right) + E \left[\mathcal{D}\mathbf{\Phi}_{i-1}\right]\left(\tilde{\Sigma}_{\tau_1} \right) \Big\} + B \cdot  \overline{\mathbb{E}}^0\left[W\right] \left(\tilde{\Sigma}_{\tau_1}\right)  \nonumber \\
+   \epsilon  \left[ \overline{\mathbb{I}}^{n,deg}_{\tilde{P}_{\rho}} \left[W\right] \left(\tilde{\mathcal{M}}\left(\tau_1,\tau_2\right)\right) + \sup_{\tau} \overline{\mathbb{E}}^{n} \left[W\right]  \left(\tilde{\Sigma}_{\tau}\right)\right] + B \cdot \mathbb{D}_\lambda^{n} \left[\mathfrak{R}\right] \left({\tau_1},{\tau_2}\right) \nonumber \\ + B  \sum_{i=1}^n Err_{Mora} \left[\mathcal{W}_i\right]  \nonumber
 \end{align}
 where we wrote $\mathbf{\Phi}_i$ to indicate that we are adding both the estimate for $\phi_i$ and $\psi_i$ of Theorem \ref{rhosigmaest}.
For the non-degenerate case we have
 \begin{align} \label{nds}
 \sup_{(\tau_1,\tau_2)} \overline{\mathbb{E}}^0\left[W\right] \left(\tilde{\Sigma}_{\tau}\right) + \sum_{i=1}^{n-1} \Big\{ \sup_{(\tau_1,\tau_2)} E_{spin1} \left[\mathcal{W}_i\right]\left(\tilde{\Sigma}_{\tau}\right) + \sup_{(\tau_1,\tau_2)} E \left[\mathcal{D}\mathbf{\Phi}_{i-1}\right]\left(\tilde{\Sigma}_{\tau} \right) \nonumber \\
+ E \left[\mathcal{W}_i\right]\left(\mathcal{H}\left(\tau_1,\tau_2\right)\right)  + I^{nondeg} \left[\mathcal{W}_i\right] \left(\tilde{\mathcal{M}}\left({\tau_1},{\tau_2}\right)\right) \Big\}  + \overline{\mathbb{I}}^0 \left[W\right]  \left(\tilde{\mathcal{M}}\left({\tau_1},{\tau_2}\right)\right) \nonumber \\
\leq B \sum_{i=1}^{n-1}  E_{spin1} \left[\mathcal{W}_i\right]\left(\tilde{\Sigma}_{\tau_1}\right) + \sum_{i=1}^n E \left[\mathcal{D}\mathbf{\Phi}_{i-1}\right]\left(\tilde{\Sigma}_{\tau_1} \right)+ B \cdot  \overline{\mathbb{E}}^0\left[W\right] \left(\tilde{\Sigma}_{\tau_1}\right)  \nonumber \\
+   \epsilon  \left[ \overline{\mathbb{I}}^{n,deg}_{\tilde{P}_{\rho}} \left[W\right] \left(\tilde{\mathcal{M}}\left(\tau_1,\tau_2\right)\right) + \sup_{\tau} \overline{\mathbb{E}}^{n} \left[W\right]  \left(\tilde{\Sigma}_{\tau}\right)\right] + B \cdot \mathbb{D}_\lambda^{n} \left[\mathfrak{R}\right] \left({\tau_1},{\tau_2}\right) \nonumber \\
+ B  \sum_{i=1}^{n-1} Err_{Mora} \left[\mathcal{W}_i\right]  \, .
 \end{align}
 It is important to note that we only need the $n$-derivative energy of the components $\rho$ and $\sigma$ (and an $\epsilon$ of the $n$-derivative energy of all components) to obtain a non-degenerate integrated decay estimate for $n-1$-derivatives of all components of the Weyl-tensor.  As a consequence, this estimate requires only $Err_{Mora}\left[\mathcal{W}_{n-1}\right]$. Note also that for $n=1$ in (\ref{nds}) there is no $Err_{Mora}$-term.

We next turn to Propositions \ref{K12summary} and \ref{K2Fsum} to estimate $Err_{Mora}\left[\mathcal{W}_i\right]$. We observe that for $\lambda$ sufficiently small, the error-term on the horizon (the last term in (\ref{harderrorest}),) can be absorbed by the horizon-term available on the left of (\ref{nds}). (In particular, the assumption that $T$ is null one the horizon is not necessary for this argument to work.) We conclude, therefore,
 \begin{align} 
 \sup_{(\tau_1,\tau_2)} \overline{\mathbb{E}}^0\left[W\right] \left(\tilde{\Sigma}_{\tau}\right) + \sum_{i=1}^n \Big\{ \sup_{(\tau_1,\tau_2)} E_{spin1} \left[\mathcal{W}_i\right]\left(\tilde{\Sigma}_{\tau}\right) + \sup_{(\tau_1,\tau_2)} E \left[\mathcal{D}\mathbf{\Phi}_{i-1}\right]\left(\tilde{\Sigma}_{\tau} \right) \nonumber \\
+ E \left[\mathcal{W}_i\right]\left(\mathcal{H}\left(\tau_1,\tau_2\right)\right)  + I^{deg} \left[\mathcal{W}_i\right] \left(\tilde{\mathcal{M}}\left({\tau_1},{\tau_2}\right)\right) \Big\}  + \overline{\mathbb{I}}^0 \left[W\right]  \left(\tilde{\mathcal{M}}\left({\tau_1},{\tau_2}\right)\right) \nonumber \\
\leq B \sum_{i=1}^n \Big\{  E_{spin1} \left[\mathcal{W}_i\right]\left(\tilde{\Sigma}_{\tau_1}\right) + E \left[\mathcal{D}\mathbf{\Phi}_{i-1}\right]\left(\tilde{\Sigma}_{\tau_1} \right) \Big\} + B \cdot  \overline{\mathbb{E}}^0\left[W\right] \left(\tilde{\Sigma}_{\tau_1}\right)  \nonumber \\
+ \left( \lambda + \epsilon \right) \left[ \overline{\mathbb{I}}^{n,deg}_{\tilde{P}_{\rho}} \left[W\right] \left(\tilde{\mathcal{M}}\left(\tau_1,\tau_2\right)\right) + \sup_{\tau} \overline{\mathbb{E}}^{n} \left[W\right]  \left(\tilde{\Sigma}_{\tau}\right)\right] + B \cdot \mathbb{D}^{n} \left[\mathfrak{R}\right]  \left(\tau_1,\tau_2\right) \, . \nonumber
 \end{align}
Note in particular that we need $\mathbb{D}^{n} \left[\mathfrak{R}\right]$ and not only $\mathbb{D}^{n}_\lambda \left[\mathfrak{R}\right]$. On the contrary, because the non-degenerate estimate needs only $Err_{Mora}\left[\mathcal{W}_{n-1}\right]$, we have
\begin{align}
 \sup_{(\tau_1,\tau_2)} \overline{\mathbb{E}}^0\left[W\right] \left(\tilde{\Sigma}_{\tau}\right) + \sum_{i=1}^{n-1} \Big\{ \sup_{(\tau_1,\tau_2)} E_{spin1} \left[\mathcal{W}_i\right]\left(\tilde{\Sigma}_{\tau}\right) + \sup_{(\tau_1,\tau_2)} E \left[\mathcal{D}\mathbf{\Phi}_{i-1}\right]\left(\tilde{\Sigma}_{\tau} \right) \nonumber \\
+ E \left[\mathcal{W}_i\right]\left(\mathcal{H}\left(\tau_1,\tau_2\right)\right)  + I^{nondeg} \left[\mathcal{W}_i\right] \left(\tilde{\mathcal{M}}\left({\tau_1},{\tau_2}\right)\right) \Big\}  + \overline{\mathbb{I}}^0 \left[W\right]  \left(\tilde{\mathcal{M}}\left({\tau_1},{\tau_2}\right)\right) \nonumber \\
\leq B \sum_{i=1}^{n-1}  E_{spin1} \left[\mathcal{W}_i\right]\left(\tilde{\Sigma}_{\tau_1}\right) + \sum_{i=1}^n E \left[\mathcal{D}\mathbf{\Phi}_{i-1}\right]\left(\tilde{\Sigma}_{\tau_1} \right)+ B \cdot  \overline{\mathbb{E}}^0\left[W\right] \left(\tilde{\Sigma}_{\tau_1}\right)  \nonumber \\
+   \epsilon  \left[ \overline{\mathbb{I}}^{n,deg}_{\tilde{P}_{\rho}} \left[W\right] \left(\tilde{\mathcal{M}}\left(\tau_1,\tau_2\right)\right) + \sup_{\tau} \overline{\mathbb{E}}^{n} \left[W\right]  \left(\tilde{\Sigma}_{\tau}\right)\right] + B \cdot \mathbb{D}_\lambda^{n} \left[\mathfrak{R}\right]\left({\tau_1},{\tau_2}\right) \, , \nonumber
 \end{align}
which requires only $\mathbb{D}^{n}_\lambda \left[\mathfrak{R}\right]$ on the right hand side.

These estimates provide control over the Lie-$T$ derivatives of all components. In view of the elliptic estimates of section \ref{elliptic} this is sufficient to control all derivatives away from the horizon and from infinity ($r \leq R+M$). In particular, defining\footnote{The reason that $\tilde{\Sigma}_{\tau,int}$ only reaches up to $R$ is that $\tilde{\Sigma}$ starts being null for $r \geq R$, and standard elliptic estimates are no longer available.}
\begin{equation}
\tilde{\Sigma}_{\tau,int} = \tilde{\Sigma}_\tau \cap \Big\{r_Y-\frac{r_Y-2M}{2} \leq r \leq R \Big\} \, ,
\end{equation}
\begin{equation}
\tilde{\mathcal{M}}_{int} \left(\tau_1,\tau_2\right)  = \tilde{\mathcal{M}} \left(\tau_1,\tau_2\right) \cap \Big\{r_Y-\frac{r_Y-2M}{2} \leq r \leq R+M \Big\} \, ,
\end{equation}
we have
\begin{align} \label{degty}
\sup_{(\tau_1,\tau_2)} \overline{\mathbb{E}}^n \left[W\right]
\left(\tilde{\Sigma}_{\tau, int}  \right) + \overline{\mathbb{I}}^{n,deg} \left[W\right] \left(\tilde{\mathcal{M}}_{int} \left(\tau_1,\tau_2\right) \right) \nonumber \\ \leq  B \sum_{i=1}^n \Big\{  E_{spin1} \left[\mathcal{W}_i\right]\left(\tilde{\Sigma}_{\tau_1}\right) + E \left[\mathcal{D}\mathbf{\Phi}_{i-1}\right]\left(\tilde{\Sigma}_{\tau_1} \right) \Big\} + B \cdot  \overline{\mathbb{E}}^0\left[W\right] \left(\tilde{\Sigma}_{\tau_1}\right)  \nonumber \\
+   \epsilon  \left[ \overline{\mathbb{I}}^{n,deg}_{\tilde{P}_{\rho}} \left[W\right] \left(\tilde{\mathcal{M}}\left(\tau_1,\tau_2\right)\right) + \sup_{\tau} \overline{\mathbb{E}}^{n} \left[W\right]  \left(\tilde{\Sigma}_{\tau}\right)\right] + B \cdot \mathbb{D}^{n} \left[\mathfrak{R}\right] \left({\tau_1},{\tau_2}\right)
\end{align}
and
\begin{align}
 \overline{\mathbb{I}}^{n-1,nondeg} \left[W\right] \left(\tilde{\mathcal{M}}_{int} \left(\tau_1,\tau_2\right) \right)
 \nonumber \\
\leq B \sum_{i=1}^{n-1}  E_{spin1} \left[\mathcal{W}_i\right]\left(\tilde{\Sigma}_{\tau_1}\right) + \sum_{i=1}^n E \left[\mathcal{D}\mathbf{\Phi}_{i-1}\right]\left(\tilde{\Sigma}_{\tau_1} \right)+ B \cdot  \overline{\mathbb{E}}^0\left[W\right] \left(\tilde{\Sigma}_{\tau_1}\right)  \nonumber \\
+   \epsilon  \left[ \overline{\mathbb{I}}^{n,deg}_{\tilde{P}_{\rho}} \left[W\right] \left(\tilde{\mathcal{M}}\left(\tau_1,\tau_2\right)\right) + \sup_{\tau} \overline{\mathbb{E}}^{n} \left[W\right]  \left(\tilde{\Sigma}_{\tau}\right)\right] + B \cdot \mathbb{D}_\lambda^{n} \left[\mathfrak{R}\right] \left({\tau_1},{\tau_2}\right) \, .\nonumber
 \end{align}
We now couple the degenerate estimate (\ref{degty}) with the redshift estimate of Proposition \ref{redshiftcomplete} and with the estimate near infinity, Theorem \ref{maintheoinf}, to obtain Theorem \ref{maintheointdec1}.
\end{proof}

\subsection{The decay iteration: Theorem \ref{maintheorem}}
\begin{proposition} \label{auxdecprop}
Recall the decay matrices $P_j$ defined in (\ref{P1})-(\ref{P3}). For $j=0, ... ,2$, we have for any $i = 0, ... , k-2$ the estimates
\begin{align} \label{uly1}
 \overline{\mathbb{E}}^{i+1}_{P_{j+1}} \left[W\right] \left(\tilde{\Sigma}_{\tau}\right) \leq
\frac{B}{\tau}  \Big[ \sup_{\tau^\prime \geq \tau} \, \overline{\mathbb{E}}^{i+2}_{P_j} \left[W\right] \left(\tilde{\Sigma}_{\tau^\prime}\right) +
B \cdot \mathbb{D}_\lambda^{i+2} \left[\mathfrak{R}\right] \left(\tau,\infty\right)   \Big]
\end{align}

\begin{align} \label{uly2}
\overline{\mathbb{I}}^{i+1,deg}_{\tilde{P}_{j+1}} \left[W\right] \left(\tilde{\mathcal{M}}\left(\tau, \infty \right) \right) \leq   \nonumber \\   B \cdot \mathbb{D}^{i+1} \left[\mathfrak{R}\right] \left(\tau\right) +\frac{B}{\tau}  \Big[ \sup_{\tau^\prime \geq \tau} \, \overline{\mathbb{E}}^{i+2}_{P_j} \left[W\right] \left(\tilde{\Sigma}_{\tau^\prime}\right) +
B \cdot \mathbb{D}_\lambda^{i+2} \left[\mathfrak{R}\right] \left(\tau,\infty\right)   \Big] 
\end{align}
and
\begin{align} \label{uly3}
 \overline{\mathbb{I}}^{i+1,nondeg}_{\tilde{P}_{j+1}} \left[W\right] \left(\tilde{\mathcal{M}}\left(\tau, \infty \right) \right)
   \leq \nonumber \\ +  B \cdot \mathbb{D}_{\lambda}^{i+1} \left[\mathfrak{R}\right] \left(\tau\right) +\frac{B}{\tau}  \Big[ \sup_{\tau^\prime \geq \tau} \, \overline{\mathbb{E}}^{i+2}_{P_j} \left[W\right] \left(\tilde{\Sigma}_{\tau^\prime}\right) +
B \cdot \mathbb{D}_\lambda^{i+2} \left[\mathfrak{R}\right] \left(\tau,\infty\right)   \Big] 
\end{align}
\end{proposition}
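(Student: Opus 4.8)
\textbf{Proof proposal for Proposition \ref{auxdecprop}.}

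The plan is to obtain the three estimates \eqref{uly1}--\eqref{uly3} by a pigeonhole argument applied to the hierarchy of weighted spacetime estimates furnished by Theorem \ref{maintheointdec1}, in the spirit of the mechanism used by Dafermos--Rodnianski for the wave equation. The key structural input is that the decay matrices $P_j$ are arranged so that passing from $P_j$ to $P_{j+1}$ lowers every $r$-weight of the ``top'' component by one (this is exactly how \eqref{P1}--\eqref{P3} are defined relative to each other), and that the bulk-admissible matrix $\tilde P_j$ associated with $P_j$ controls, in the asymptotic region, the same components with weight one less. Consequently, the degenerate spacetime energy $\overline{\mathbb{I}}^{i+2,deg}_{\tilde P_j}[W]$ appearing on the left of \eqref{mestI} (applied with decay matrix $P_j$, order $n=i+1$) controls a quantity \emph{strictly stronger in $r$-weights} than the boundary energy $\overline{\mathbb{E}}^{i+2}_{P_{j+1}}[W]$ on the slices $\tilde\Sigma_\tau$. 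This is the ``extra $r$-power'' that makes the pigeonhole step work.

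First I would set up a dyadic sequence $\tau_m$ between $\tau$ and $\infty$ and apply the degenerate estimate \eqref{mestI} from Theorem \ref{maintheointdec1} with decay matrix $P_j$ and $n+1 = i+2$ repeatedly between consecutive slices $\tau_m,\tau_{m+1}$. Summing in $m$ and using that the left-hand-side bulk terms are nonnegative and monotone in the weight, one obtains that the total bulk integral $\overline{\mathbb{I}}^{i+2,deg}_{\tilde P_j}[W](\tilde{\mathcal{M}}(\tau,\infty))$ is bounded by $\sup_{\tau'\ge\tau}\overline{\mathbb{E}}^{i+2}_{P_j}[W](\tilde\Sigma_{\tau'}) + B\,\mathbb{D}^{i+2}[\mathfrak{R}](\tau,\infty)$ (for the non-degenerate/$\lambda$-weighted versions one uses \eqref{mestI2} instead, picking up $\mathbb{D}^{i+2}_\lambda$). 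The pigeonhole principle then produces, in each dyadic block $(\tau_m,\tau_{m+1})$ of length $\sim\tau_m$, a ``good'' slice $\tau_m^g$ on which $\overline{\mathbb{E}}^{i+1}_{P_{j+1}}[W](\tilde\Sigma_{\tau_m^g})$ is bounded by $\tau_m^{-1}$ times the above total bulk quantity — this uses that the $P_{j+1}$-weighted boundary energy at order $i+1$ is dominated, component by component, by the $\tilde P_j$-weighted bulk integrand at order $i+2$ (modulo the Poincaré inequality relating $S^2$-averages to angular derivatives, Lemma \ref{pcc}, and Lemma \ref{allfrom4}/\ref{allfrom3} to trade $4$- and angular derivatives for the derivatives actually controlled). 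Finally, to upgrade the bound on the good slices $\tau_m^g$ to a bound on \emph{every} slice $\tau' \ge \tau$, I would reapply \eqref{mestI} (with decay matrix $P_{j+1}$, order $n=i$) from the nearest good slice to $\tau'$; since good slices are at most a dyadic distance apart this loses only a constant, giving \eqref{uly1}. Estimates \eqref{uly2} and \eqref{uly3} then follow by feeding \eqref{uly1} back into the bulk estimates \eqref{mestI} (degenerate) and \eqref{mestI2} (non-degenerate) applied once more with decay matrix $P_{j+1}$, absorbing the boundary term on the right via \eqref{uly1} and retaining the appropriate $\mathbb{D}^{i+1}[\mathfrak{R}]$ or $\mathbb{D}^{i+1}_\lambda[\mathfrak{R}]$ term, exactly as those two theorems are structured.

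The bookkeeping obstacle — and the step I expect to require the most care — is verifying that the weight arithmetic genuinely closes: one must check that for each component the $r$-exponent in $\tilde P_j$ (bulk, order $i+2$) exceeds the corresponding exponent in $P_{j+1}$ (boundary, order $i+1$) by at least the amount consumed by the pigeonhole integration $\int^{\tau_{m+1}}_{\tau_m}\!dt\, t^{-1}$ and by the loss incurred in applying Lemma \ref{pcc} (which converts a $\slashed\nabla$-controlled quantity with weight $p$ into an undifferentiated quantity with weight $p-2$). This is precisely the content encoded in the definitions \eqref{P1}--\eqref{P3} together with the rule $\tilde p_\ell = p_\ell - 1$ (with the $\delta$-modifications at the saturated entries), so the check is mechanical but must be done entry by entry; the $\delta$-loss cases ($p_3=8$, $p_4=6$, $p_5=4$, $p_6=2$) are where one has to be slightly more careful, and there the gain is only $1-\delta$ rather than $1$, which is still enough. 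A secondary technical point is that the $\mathbb{D}^{i+2}_\lambda[\mathfrak{R}]$ terms carried along in \eqref{mestI2} must not be inflated by the dyadic summation; this is fine because $\mathbb{D}^{n}_\lambda[\mathfrak{R}](\tau_m,\tau_{m+1}) \le \mathbb{D}^n_\lambda[\mathfrak{R}](\tau,\infty)$ and the number of blocks is absorbed by the $\tau^{-1}$ decay extracted from the spacetime integral. Everything else — the dyadic decomposition, the summation of the energy identities, and the transport from good slices to arbitrary slices — is routine and parallels the boundedness iteration already carried out in Lemma \ref{bndboot}.
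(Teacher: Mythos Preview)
Your overall plan—pigeonhole from a weighted spacetime bulk to obtain decay of the next boundary energy, using the ordering $P_{j+1}\le\tilde P_j$—is exactly the mechanism the paper uses. However, there is a concrete mix-up in the choice of estimate and in the orders involved.

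You propose to apply \eqref{mestI} with $n+1=i+2$ to obtain the \emph{degenerate} bulk $\overline{\mathbb{I}}^{i+2,\mathrm{deg}}_{\tilde P_j}$, and then pigeonhole to extract $\overline{\mathbb{E}}^{i+1}_{P_{j+1}}$ on a good slice. This step fails as written: the degenerate bulk carries the weight $w_{\mathrm{deg}}(r)=(r-3M)^2/r^2$, which vanishes at the photon sphere, so its integrand does \emph{not} dominate the (non-degenerate) boundary density $\overline{\mathbb{E}}^{i+1}_{P_{j+1}}$ on slices through $r=3M$. Dropping a derivative does not remove this degeneration. Moreover, \eqref{mestI} produces $\mathbb{D}^{i+2}[\mathfrak{R}]$ on the right, whereas \eqref{uly1} requires the weaker $\mathbb{D}^{i+2}_\lambda[\mathfrak{R}]$.

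The paper fixes both issues in one stroke by applying \eqref{mestI2} (not \eqref{mestI}) with $n=i+1$ and $P=P_j$. This yields the \emph{non-degenerate} bulk $\overline{\mathbb{I}}^{i+1,\mathrm{nondeg}}_{\tilde P_j}$ at the \emph{same} order $i+1$, at the cost of the data term at order $i+2$ and picking up precisely $\mathbb{D}^{i+2}_\lambda$. Since $P_{j+1}\le\tilde P_j$, the non-degenerate bulk directly controls $\int dt^\star\,\overline{\mathbb{E}}^{i+1}_{P_{j+1}}[W](\tilde\Sigma_{t^\star})$, and the pigeonhole step goes through cleanly (no dyadic summation is needed—one application over $(\tau,\infty)$ suffices). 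The remaining steps you describe, namely upgrading from good slices to all slices and then reinserting into \eqref{mestI} and \eqref{mestI2} with $P_{j+1}$ to obtain \eqref{uly2} and \eqref{uly3}, match the paper exactly.
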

Before we prove the above Proposition let us note how it implies Theorem \ref{maintheorem}:
\begin{proof}[Proof of Theorem \ref{maintheorem}]
Theorem \ref{maintheorem} follows from repeatedly (depending on the decay matrix) inserting the estimate (\ref{uly1}) into (\ref{uly2}) and (\ref{uly3}).
\end{proof}

\begin{proof} [Proof of Proposition \ref{auxdecprop}]
At order $i+1$, we apply the estimate (\ref{mestI2}) with decay matrix $P_0$, cf.~(\ref{P0}),
which has the associated bulk admissible matrix
 \begin{equation*} 
\tilde{\mathbf{P}}_0 = \left( \begin{array}{ccccccc}
0 & 6-\delta & 6-\delta & 5-\delta & 3-\delta & 1-\delta & 0\\ 
9-\delta & 9-\delta & 8-\delta & 7-\delta & 5-\delta & 3-\delta & 1-\delta \\
\vdots & \vdots & \vdots & \vdots & \vdots & \vdots & \vdots
\end{array} \right) \, .
 \end{equation*}
From the resulting estimate we derive
\begin{align} \label{hlu}
  \overline{\mathbb{E}}^{i+1}_{P_1} \left[W\right] \left(\tilde{\Sigma}_{\tau_{n+1}}\right) + \int_{\tau_n}^{\tau_{n+1}} dt^\star \, \overline{\mathbb{E}}^{i+1}_{P_1} \left[W\right] \left(\tilde{\Sigma}_{\tau}\right) \leq B \cdot \overline{\mathbb{E}}^{i+2}_{P_{0}} \left[W\right] \left(\tilde{\Sigma}_{\tau_n}\right) \nonumber \\
+ B \cdot \mathbb{D}_\lambda^{i+2} \left[\mathfrak{R}\right] \left(\tau_n, \tau_{n+1}\right)  \, ,
\end{align}
for the boundary admissible decay matrix $P_1$ defined in (\ref{P1}), where we only 
used that $P_1 \leq \tilde{P}_0 < P_0$ (recall the ordering relation defined in section \ref{decmatrix}). We conclude
\begin{align} \label{fiits}
 \overline{\mathbb{E}}^{i+1}_{P_{1}} \left[W\right] \left(\tilde{\Sigma}_{\tau}\right) \leq
\frac{B}{\tau}  \Big[ \sup_{\tau^\prime \geq \tau} \, \overline{\mathbb{E}}^{i+2}_{P_0} \left[W\right] \left(\tilde{\Sigma}_{\tau^\prime}\right) +
B \cdot \mathbb{D}_\lambda^{i+2} \left[\mathfrak{R}\right] \left(\tau,\infty\right)   \Big] \, , 
\end{align}
which is (\ref{uly1}) for $j=0$. Since the square bracket is bounded (by applying (\ref{mestI}) from data with $P_0$), we have established $\frac{1}{t}$ decay of a weighted energy. In particular, since $P_{\rho}<P_1$ we have shown this decay for the $P_\rho$-weighted energy at order $i+1$.
Applying (\ref{mestI}) with $P_1$ yields, after inserting (\ref{fiits}), the estimate (\ref{uly2}) for $j=0$. In the same way, applying (\ref{mestI2}) produces (\ref{uly3}) for $j=0$.

It is clear that this procedure can be iterated. We next apply our estimate (\ref{mestI}) with the boundary admissible decay matrix $P_1$ and its associated bulk admissible matrix
 \begin{equation*} 
\tilde{\mathbf{P}}_1 = \left( \begin{array}{ccccccc}
0 & 5-\delta & 5-\delta & 5-\delta & 3-\delta & 1-\delta & 0\\ 
8-\delta & 8-\delta & 8-\delta & 7-\delta & 5-\delta & 3-\delta & 1-\delta \\
\vdots & \vdots & \vdots & \vdots & \vdots & \vdots & \vdots
\end{array} \right) \, .
 \end{equation*}
 Noting that $P_2 < \tilde{P}_1 < P_1$ for $P_2$ as in (\ref{P2}), we obtain, after adding the non-degenerate integrated decay estimate, the analogue of (\ref{hlu})
 \begin{align} \label{hlu2}
  \overline{\mathbb{E}}^{i+1}_{P_2} \left[W\right] \left(\tilde{\Sigma}_{\tau_{n+1}}\right) + \int_{\tau_n}^{\tau_{n+1}} dt^\star \, \overline{\mathbb{E}}^{i+1}_{P_2} \left[W\right] \left(\tilde{\Sigma}_{\tau}\right) \nonumber \\ \leq  B \cdot \, \overline{\mathbb{E}}^{i+2}_{P_{1}} \left[W\right] \left(\tilde{\Sigma}_{\tau_n}\right) 
+ B \cdot \mathbb{D}_\lambda^{i+2} \left[\mathfrak{R}\right] \left(\tau_n,\tau_{n+1}\right) \, ,
 \end{align}
from which we conclude
\begin{align} 
 \overline{\mathbb{E}}^{i+1}_{P_2} \left[W\right] \left(\tilde{\Sigma}_{\tau}\right)  \leq
\frac{B}{\tau}  \Big[ \sup_{\tau^\prime \geq \tau} \, \overline{\mathbb{E}}^{i+2}_{P_1} \left[W\right] \left(\tilde{\Sigma}_{\tau^\prime}\right) +
B \cdot \mathbb{D}_\lambda^{i+2} \left[\mathfrak{R}\right] \left(\tau,\infty\right)   \Big] \, .
\end{align}
Another iteration, applying (\ref{mestI}) with $P_2$, will provide the estimate
\begin{align} 
 \overline{\mathbb{E}}^{i+1}_{P_3} \left[W\right] \left(\tilde{\Sigma}_{\tau}\right)  \leq
\frac{B}{\tau}  \Big[ \sup_{\tau^\prime \geq \tau} \, \overline{\mathbb{E}}^{i+2}_{P_2} \left[W\right] \left(\tilde{\Sigma}_{\tau^\prime}\right) +
B \cdot \mathbb{D}_\lambda^{i+2} \left[\mathfrak{R}\right] \left(\tau,\infty\right)   \Big] \, ,
\end{align}
and hence an $r$-weighted energy which decays stronger than $\frac{1}{\tau^2}$. This is enough to close the argument. We remark that, of course, there is no reason to stop the iteration at this point. To obtain the optimal result in terms of the decay in $t$, one can define additional admissible decay matrices and continue the iteration.
%
%
%
%
%
%
\end{proof}
\section{Acknowledgements}
I would like to thank Mihalis Dafermos for suggesting to study this problem and Igor Rodnianski for numerous insightful discussions and comments contributing to this paper.
\appendix
\section{Useful formulae} \label{UF}
Let $(a)$ denote tetrad (=frame) indices, while Greek letters denote spacetime indices. 
The Ricci coefficients are defined as
\begin{equation}
\gamma_{(c)(a)(b)} = e_{(a) \, \kappa ;  \nu}  e_{(b)}^\nu  e_{(c)}^\kappa
\end{equation}
with the null-components
\begin{align}
\gamma_{(B)(3)(A)} = \underline{H}_{AB}    \textrm{ \ \ \ \ , \ \ \ \ \ \ } \gamma_{(B)(4)(A)} = H_{AB} \, ,
\end{align}
\begin{align}
\gamma_{(A)(3)(3)} = 2\underline{Y}_A   \textrm{ \ \ \ \ , \ \ \ \ \ \ } \gamma_{(A)(4)(4)} = 2Y_A \, ,
\end{align}
\begin{align}
\gamma_{(A)(3)(4)} = 2 \underline{Z}_{A}   \textrm{ \ \ \ \  , \ \ \ \ \ \ } \gamma_{(A)(4)(3)} = 2 Z_A \, ,
\end{align}
\begin{align}
\gamma_{(4)(3)(3)} = 4\underline{\Omega} \textrm{ \ \ \ \  , \ \ \ \ \ \ } \gamma_{(3)(4)(4)} = 4 \Omega \, ,
\end{align}
\begin{align}
\gamma_{(3)(4)(A)} = 2V_A \, .
\end{align}
Due to the antisymmetry in the first two indices we can write $e_{(a)\phantom{\mu};\kappa}^{\phantom{(a)}\mu} = -\gamma_{(a) \phantom{\mu} \kappa}^{\phantom{(a)}\mu}$.

For the Weyl-tensor we note the relations
\begin{equation}
W_{(a)(b)(c)(d)} = e_{(a)}^\mu e_{(b)}^\nu e_{(c)}^\sigma e_{(d)}^\tau  W_{\mu \nu \sigma \tau} 
\end{equation} 
and
\begin{align}
D_{(a)} W_{(b)(c)(d)(f)} = D_{(a)} \left(W_{(b)(c)(d)(f)}\right) - \gamma_{(g)(b)(a)} W^{(g)}_{\phantom{(g)} (c)(d)(f)} \nonumber \\ - \gamma_{(g)(c)(a)} W^{\phantom{(b)}(g)}_{(b)\phantom{(g)}(d)(f)} - \gamma_{(g)(d)(a)} W^{\phantom{(b)(c)}(g)}_{(b)(c)\phantom{(g)}(f)} - \gamma_{(g)(f)(a)} W^{\phantom{(b)(c)(d)}(g)}_{(b)(c)(d)\phantom{(g)}} \, .
\end{align}
\printglossary
\bibliographystyle{utphys}
\bibliography{thesisrefs}
\end{document}